\newcommand{\blind}{1}
\def\maxwidth{ 
   \ifdim\Gin@nat@width>\linewidth \linewidth \else \Gin@nat@width \fi
 } \makeatother
\definecolor{fgcolor}{rgb}{0, 0, 0}
\definecolor{shadecolor}{rgb}{.97, .97, .97}
\definecolor{messagecolor}{rgb}{0, 0, 0}
\definecolor{warningcolor}{rgb}{1, 0, 1}
\definecolor{errorcolor}{rgb}{1, 0, 0}
\numberwithin{equation}{section}
\setlist{nolistsep} 
\titlespacing{\section}{0pt}{12pt plus 4pt minus 2pt}{0pt plus 2pt minus 2pt}
\titlespacing{\subsection}{0pt}{12pt plus 4pt minus 2pt}{0pt plus 2pt minus 2pt}
\titlespacing{\subsubsection}{0pt}{12pt plus 4pt minus 2pt}{0pt plus 2pt minus 2pt}
\titlespacing{\paragraph}{0pt}{\parskip}{-\parskip}
    \renewcommand*{\bm}[1]{#1}
\newcommand\appendix@section[1]{
  \refstepcounter{section}
  \orig@section*{Appendix \@Alph\c@section: #1}
  \addcontentsline{toc}{section}{Appendix \@Alph\c@section: #1}
}
\let\orig@section\section
\g@addto@macro\appendix{\let\section\appendix@section}
\newtheorem{theorem}{Theorem}[section]
  \theoremstyle{algorithm}
  \newtheorem{algorithm}{Algorithm}[section]
  \theoremstyle{assumption}
  \newtheorem{assumption}{Assumption}[section]
  \theoremstyle{definition}
  \newtheorem{definition}{Definition}[section]
  \theoremstyle{corollary}
  \theoremstyle{lemma}
  \newtheorem{lemma}{Lemma}[section]
  \theoremstyle{remark}
\newcommand{\myref}[2]{\mbox{\cref{#1}\eqref{#2}}\xspace
}
\newrobustcmd{\MacroNote}[1]{}
\newrobustcmd{\InternalNote}[2][Internal note:]{
  \rule{\linewidth}{.25pt} \\
  \textbf{#1} #2\\[-.25cm]
  \rule{\linewidth}{.25pt}
  \vspace*{\lineskip}
}
\newcommand{\E}[2][]{\ensuremath{
    \operatorname{E}_{#1}\! \left[#2\right]}}
\newcommand{\Prob}[1]{\ensuremath{
    \operatorname{P}\! \left(#1\right)}}
\newcommand{\Cov}[2]{\ensuremath{
    \operatorname{Cov}\!\left(#1,#2\right)}}
\newcommand{\Ind}[1]{\ensuremath{ \mathbbm{1}{\left\{#1\right\}}}}
\renewcommand{\d}[1]{\operatorname{d}\!#1}
\newcommand{\UVN}[2]{\ensuremath{
    \operatorname{N}\!\left(#1,#2\right)}}
\newcommand{\R}{\texttt{R}\xspace}
\newcommand{\Rfunction}{\mbox{\R-function}\xspace}
\newcommand{\Rpackage}{\mbox{\R-package}\xspace}
\newcommand{\Rref}[1]{\mbox{\texttt{#1}\xspace}}
\newcommand{\lgsdRpackage}{\Rref{localgaussSpec}}
\newcommand{\devtoolgithublgsdRpackage}{\Rref{remotes:\!:install\_github("LAJordanger/localgaussSpec")}}
\newcommand{\NN}{\ensuremath{\mathbb{N}}} \newcommand{\RR}{\ensuremath{\mathbb{R}}}  \newcommand{\ZZ}{\ensuremath{\mathbb{Z}}}
\newcommand{\defeq}{\coloneqq}
\newcommand{\ssi}[1]{\mbox{\footnotesize
    \IfStrEq{#1}{}
    {}{\ensuremath{{}_{#1}}}}}
\newcommand{\ceil}[1]{\ensuremath{\left\lceil #1 \right\rceil}}
\newcommand{\ssub}[3]{\mbox{\tiny
        \IfStrEq{#2}{}
                {}{\ensuremath{{#1#2#3}}}}} 
\newcommand{\ssup}[3]{\mbox{\tiny
        \IfStrEq{#2}{}
                {}{\ensuremath{{#1#2#3}}}}}
\newcommand{\SSI}[7]{\ensuremath{
        \tensor*{#1}{^{\ssup{#2}{#3}{#4}}_{\ssub{#5}{#6}{#7}}}}}
\newcommand{\pSSI}[9]{\ensuremath{
        \tensor*[^{\ssup{#8}{p}{#9}}]{#1}{^{\ssup{#2}{#3}{#4}}_{\ssub{#5}{#6}{#7}}}}}
\newcommand{\fs}[3][]{\SSI{f}{}{#1}{}{}{#2}{}
        \IfStrEq{#3}{}{}{\left(#3\right)}}
\newcommand{\Fs}[3][]{\SSI{F}{}{#1}{}{}{#2}{}
        \IfStrEq{#3}{}{}{\left(#3\right)}}
\newcommand{\fbs}[3][]{\SSI{\bm{f}}{}{#1}{}{}{#2}{}
        \IfStrEq{#3}{}{}{\left(#3\right)}}
\newcommand{\Fbs}[3][]{\SSI{\bm{F}}{}{#1}{}{}{#2}{}
        \IfStrEq{#3}{}{}{\!\! \left(#3\right)}}
\newcommand{\cs}[3][]{\SSI{c}{}{#1}{}{}{#2}{}
        \IfStrEq{#3}{}{}{\left(#3\right)}}
\newcommand{\Cs}[3][]{\SSI{C}{}{#1}{}{}{#2}{}
        \IfStrEq{#3}{}{}{\left(#3\right)}}
\newcommand{\cbs}[3][]{\SSI{\bm{c}}{}{#1}{}{}{#2}{}
        \IfStrEq{#3}{}{}{\left(#3\right)}}
\newcommand{\Cbs}[3][]{\SSI{\bm{C}}{}{#1}{}{}{#2}{}
        \IfStrEq{#3}{}{}{\left(#3\right)}}
\newcommand{\pFs}[3][]{\pSSI{F}{}{#1}{}{}{#2}{}{}{}
        \IfStrEq{#3}{}{}{\left(#3\right)}}
\newrobustcmd{\subp}[5]{ 
      \def\somethingThere{#2#3#4#5}
  \IfStrEq{\somethingThere}{}{#1}{
    \ensuremath{#1_{#2#3}^{#4#5}}}
}
\newrobustcmd{\partialz}[2][]{
  \ensuremath{\subp{\partial}{}{#2}{}{#1}}}
\newrobustcmd{\onez}[2][]{
  \ensuremath{\subp{1}{}{#2}{}{#1}}}
\newrobustcmd{\bmonez}[2][]{
  \ensuremath{\subp{\bm{1}}{}{#2}{}{#1}}}
\newrobustcmd{\zeroz}[2][]{
  \ensuremath{\subp{0}{}{#2}{}{#1}}}
\newrobustcmd{\bmzeroz}[2][]{
  \ensuremath{\subp{\bm{0}}{}{#2}{}{#1}}}
\newrobustcmd{\az}[2][]{\ensuremath{\subp{a}{}{#2}{}{#1}}}
\newrobustcmd{\hataz}[2][]{\ensuremath{\subp{\hat{a}}{}{#2}{}{#1}}}
\newrobustcmd{\widehataz}[2][]{\ensuremath{\subp{\widehat{a}}{}{#2}{}{#1}}}
\newrobustcmd{\checkaz}[2][]{\ensuremath{\subp{\check{a}}{}{#2}{}{#1}}}
\newrobustcmd{\tildeaz}[2][]{\ensuremath{\subp{\tilde{a}}{}{#2}{}{#1}}}
\newrobustcmd{\widetildeaz}[2][]{\ensuremath{\subp{\widetilde{a}}{}{#2}{}{#1}}}
\newrobustcmd{\acuteaz}[2][]{\ensuremath{\subp{\acute{a}}{}{#2}{}{#1}}}
\newrobustcmd{\graveaz}[2][]{\ensuremath{\subp{\grave{a}}{}{#2}{}{#1}}}
\newrobustcmd{\dotaz}[2][]{\ensuremath{\subp{\dot{a}}{}{#2}{}{#1}}}
\newrobustcmd{\ddotaz}[2][]{\ensuremath{\subp{\ddot{a}}{}{#2}{}{#1}}}
\newrobustcmd{\breveaz}[2][]{\ensuremath{\subp{\breve{a}}{}{#2}{}{#1}}}
\newrobustcmd{\baraz}[2][]{\ensuremath{\subp{\bar{a}}{}{#2}{}{#1}}}
\newrobustcmd{\vecaz}[2][]{\ensuremath{\subp{\vec{a}}{}{#2}{}{#1}}}
\newrobustcmd{\bmaz}[2][]{\ensuremath{\subp{\bm{a}}{}{#2}{}{#1}}}
\newrobustcmd{\hatbmaz}[2][]{\ensuremath{\subp{\hat{\bm{a}}}{}{#2}{}{#1}}}
\newrobustcmd{\widehatbmaz}[2][]{\ensuremath{\subp{\widehat{\bm{a}}}{}{#2}{}{#1}}}
\newrobustcmd{\checkbmaz}[2][]{\ensuremath{\subp{\check{\bm{a}}}{}{#2}{}{#1}}}
\newrobustcmd{\tildebmaz}[2][]{\ensuremath{\subp{\tilde{\bm{a}}}{}{#2}{}{#1}}}
\newrobustcmd{\widetildebmaz}[2][]{\ensuremath{\subp{\widetilde{\bm{a}}}{}{#2}{}{#1}}}
\newrobustcmd{\acutebmaz}[2][]{\ensuremath{\subp{\acute{\bm{a}}}{}{#2}{}{#1}}}
\newrobustcmd{\gravebmaz}[2][]{\ensuremath{\subp{\grave{\bm{a}}}{}{#2}{}{#1}}}
\newrobustcmd{\dotbmaz}[2][]{\ensuremath{\subp{\dot{\bm{a}}}{}{#2}{}{#1}}}
\newrobustcmd{\ddotbmaz}[2][]{\ensuremath{\subp{\ddot{\bm{a}}}{}{#2}{}{#1}}}
\newrobustcmd{\brevebmaz}[2][]{\ensuremath{\subp{\breve{\bm{a}}}{}{#2}{}{#1}}}
\newrobustcmd{\barbmaz}[2][]{\ensuremath{\subp{\bar{\bm{a}}}{}{#2}{}{#1}}}
\newrobustcmd{\vecbmaz}[2][]{\ensuremath{\subp{\vec{\bm{a}}}{}{#2}{}{#1}}}
\newrobustcmd{\bz}[2][]{\ensuremath{\subp{b}{}{#2}{}{#1}}}
\newrobustcmd{\hatbz}[2][]{\ensuremath{\subp{\hat{b}}{}{#2}{}{#1}}}
\newrobustcmd{\widehatbz}[2][]{\ensuremath{\subp{\widehat{b}}{}{#2}{}{#1}}}
\newrobustcmd{\checkbz}[2][]{\ensuremath{\subp{\check{b}}{}{#2}{}{#1}}}
\newrobustcmd{\tildebz}[2][]{\ensuremath{\subp{\tilde{b}}{}{#2}{}{#1}}}
\newrobustcmd{\widetildebz}[2][]{\ensuremath{\subp{\widetilde{b}}{}{#2}{}{#1}}}
\newrobustcmd{\acutebz}[2][]{\ensuremath{\subp{\acute{b}}{}{#2}{}{#1}}}
\newrobustcmd{\gravebz}[2][]{\ensuremath{\subp{\grave{b}}{}{#2}{}{#1}}}
\newrobustcmd{\dotbz}[2][]{\ensuremath{\subp{\dot{b}}{}{#2}{}{#1}}}
\newrobustcmd{\ddotbz}[2][]{\ensuremath{\subp{\ddot{b}}{}{#2}{}{#1}}}
\newrobustcmd{\brevebz}[2][]{\ensuremath{\subp{\breve{b}}{}{#2}{}{#1}}}
\newrobustcmd{\barbz}[2][]{\ensuremath{\subp{\bar{b}}{}{#2}{}{#1}}}
\newrobustcmd{\vecbz}[2][]{\ensuremath{\subp{\vec{b}}{}{#2}{}{#1}}}
\newrobustcmd{\bmbz}[2][]{\ensuremath{\subp{\bm{b}}{}{#2}{}{#1}}}
\newrobustcmd{\hatbmbz}[2][]{\ensuremath{\subp{\hat{\bm{b}}}{}{#2}{}{#1}}}
\newrobustcmd{\widehatbmbz}[2][]{\ensuremath{\subp{\widehat{\bm{b}}}{}{#2}{}{#1}}}
\newrobustcmd{\checkbmbz}[2][]{\ensuremath{\subp{\check{\bm{b}}}{}{#2}{}{#1}}}
\newrobustcmd{\tildebmbz}[2][]{\ensuremath{\subp{\tilde{\bm{b}}}{}{#2}{}{#1}}}
\newrobustcmd{\widetildebmbz}[2][]{\ensuremath{\subp{\widetilde{\bm{b}}}{}{#2}{}{#1}}}
\newrobustcmd{\acutebmbz}[2][]{\ensuremath{\subp{\acute{\bm{b}}}{}{#2}{}{#1}}}
\newrobustcmd{\gravebmbz}[2][]{\ensuremath{\subp{\grave{\bm{b}}}{}{#2}{}{#1}}}
\newrobustcmd{\dotbmbz}[2][]{\ensuremath{\subp{\dot{\bm{b}}}{}{#2}{}{#1}}}
\newrobustcmd{\ddotbmbz}[2][]{\ensuremath{\subp{\ddot{\bm{b}}}{}{#2}{}{#1}}}
\newrobustcmd{\brevebmbz}[2][]{\ensuremath{\subp{\breve{\bm{b}}}{}{#2}{}{#1}}}
\newrobustcmd{\barbmbz}[2][]{\ensuremath{\subp{\bar{\bm{b}}}{}{#2}{}{#1}}}
\newrobustcmd{\vecbmbz}[2][]{\ensuremath{\subp{\vec{\bm{b}}}{}{#2}{}{#1}}}
\newrobustcmd{\cz}[2][]{\ensuremath{\subp{c}{}{#2}{}{#1}}}
\newrobustcmd{\hatcz}[2][]{\ensuremath{\subp{\hat{c}}{}{#2}{}{#1}}}
\newrobustcmd{\widehatcz}[2][]{\ensuremath{\subp{\widehat{c}}{}{#2}{}{#1}}}
\newrobustcmd{\checkcz}[2][]{\ensuremath{\subp{\check{c}}{}{#2}{}{#1}}}
\newrobustcmd{\tildecz}[2][]{\ensuremath{\subp{\tilde{c}}{}{#2}{}{#1}}}
\newrobustcmd{\widetildecz}[2][]{\ensuremath{\subp{\widetilde{c}}{}{#2}{}{#1}}}
\newrobustcmd{\acutecz}[2][]{\ensuremath{\subp{\acute{c}}{}{#2}{}{#1}}}
\newrobustcmd{\gravecz}[2][]{\ensuremath{\subp{\grave{c}}{}{#2}{}{#1}}}
\newrobustcmd{\dotcz}[2][]{\ensuremath{\subp{\dot{c}}{}{#2}{}{#1}}}
\newrobustcmd{\ddotcz}[2][]{\ensuremath{\subp{\ddot{c}}{}{#2}{}{#1}}}
\newrobustcmd{\brevecz}[2][]{\ensuremath{\subp{\breve{c}}{}{#2}{}{#1}}}
\newrobustcmd{\barcz}[2][]{\ensuremath{\subp{\bar{c}}{}{#2}{}{#1}}}
\newrobustcmd{\veccz}[2][]{\ensuremath{\subp{\vec{c}}{}{#2}{}{#1}}}
\newrobustcmd{\bmcz}[2][]{\ensuremath{\subp{\bm{c}}{}{#2}{}{#1}}}
\newrobustcmd{\hatbmcz}[2][]{\ensuremath{\subp{\hat{\bm{c}}}{}{#2}{}{#1}}}
\newrobustcmd{\widehatbmcz}[2][]{\ensuremath{\subp{\widehat{\bm{c}}}{}{#2}{}{#1}}}
\newrobustcmd{\checkbmcz}[2][]{\ensuremath{\subp{\check{\bm{c}}}{}{#2}{}{#1}}}
\newrobustcmd{\tildebmcz}[2][]{\ensuremath{\subp{\tilde{\bm{c}}}{}{#2}{}{#1}}}
\newrobustcmd{\widetildebmcz}[2][]{\ensuremath{\subp{\widetilde{\bm{c}}}{}{#2}{}{#1}}}
\newrobustcmd{\acutebmcz}[2][]{\ensuremath{\subp{\acute{\bm{c}}}{}{#2}{}{#1}}}
\newrobustcmd{\gravebmcz}[2][]{\ensuremath{\subp{\grave{\bm{c}}}{}{#2}{}{#1}}}
\newrobustcmd{\dotbmcz}[2][]{\ensuremath{\subp{\dot{\bm{c}}}{}{#2}{}{#1}}}
\newrobustcmd{\ddotbmcz}[2][]{\ensuremath{\subp{\ddot{\bm{c}}}{}{#2}{}{#1}}}
\newrobustcmd{\brevebmcz}[2][]{\ensuremath{\subp{\breve{\bm{c}}}{}{#2}{}{#1}}}
\newrobustcmd{\barbmcz}[2][]{\ensuremath{\subp{\bar{\bm{c}}}{}{#2}{}{#1}}}
\newrobustcmd{\vecbmcz}[2][]{\ensuremath{\subp{\vec{\bm{c}}}{}{#2}{}{#1}}}
\newrobustcmd{\dz}[2][]{\ensuremath{\subp{d}{}{#2}{}{#1}}}
\newrobustcmd{\hatdz}[2][]{\ensuremath{\subp{\hat{d}}{}{#2}{}{#1}}}
\newrobustcmd{\widehatdz}[2][]{\ensuremath{\subp{\widehat{d}}{}{#2}{}{#1}}}
\newrobustcmd{\checkdz}[2][]{\ensuremath{\subp{\check{d}}{}{#2}{}{#1}}}
\newrobustcmd{\tildedz}[2][]{\ensuremath{\subp{\tilde{d}}{}{#2}{}{#1}}}
\newrobustcmd{\widetildedz}[2][]{\ensuremath{\subp{\widetilde{d}}{}{#2}{}{#1}}}
\newrobustcmd{\acutedz}[2][]{\ensuremath{\subp{\acute{d}}{}{#2}{}{#1}}}
\newrobustcmd{\gravedz}[2][]{\ensuremath{\subp{\grave{d}}{}{#2}{}{#1}}}
\newrobustcmd{\dotdz}[2][]{\ensuremath{\subp{\dot{d}}{}{#2}{}{#1}}}
\newrobustcmd{\ddotdz}[2][]{\ensuremath{\subp{\ddot{d}}{}{#2}{}{#1}}}
\newrobustcmd{\brevedz}[2][]{\ensuremath{\subp{\breve{d}}{}{#2}{}{#1}}}
\newrobustcmd{\bardz}[2][]{\ensuremath{\subp{\bar{d}}{}{#2}{}{#1}}}
\newrobustcmd{\vecdz}[2][]{\ensuremath{\subp{\vec{d}}{}{#2}{}{#1}}}
\newrobustcmd{\bmdz}[2][]{\ensuremath{\subp{\bm{d}}{}{#2}{}{#1}}}
\newrobustcmd{\hatbmdz}[2][]{\ensuremath{\subp{\hat{\bm{d}}}{}{#2}{}{#1}}}
\newrobustcmd{\widehatbmdz}[2][]{\ensuremath{\subp{\widehat{\bm{d}}}{}{#2}{}{#1}}}
\newrobustcmd{\checkbmdz}[2][]{\ensuremath{\subp{\check{\bm{d}}}{}{#2}{}{#1}}}
\newrobustcmd{\tildebmdz}[2][]{\ensuremath{\subp{\tilde{\bm{d}}}{}{#2}{}{#1}}}
\newrobustcmd{\widetildebmdz}[2][]{\ensuremath{\subp{\widetilde{\bm{d}}}{}{#2}{}{#1}}}
\newrobustcmd{\acutebmdz}[2][]{\ensuremath{\subp{\acute{\bm{d}}}{}{#2}{}{#1}}}
\newrobustcmd{\gravebmdz}[2][]{\ensuremath{\subp{\grave{\bm{d}}}{}{#2}{}{#1}}}
\newrobustcmd{\dotbmdz}[2][]{\ensuremath{\subp{\dot{\bm{d}}}{}{#2}{}{#1}}}
\newrobustcmd{\ddotbmdz}[2][]{\ensuremath{\subp{\ddot{\bm{d}}}{}{#2}{}{#1}}}
\newrobustcmd{\brevebmdz}[2][]{\ensuremath{\subp{\breve{\bm{d}}}{}{#2}{}{#1}}}
\newrobustcmd{\barbmdz}[2][]{\ensuremath{\subp{\bar{\bm{d}}}{}{#2}{}{#1}}}
\newrobustcmd{\vecbmdz}[2][]{\ensuremath{\subp{\vec{\bm{d}}}{}{#2}{}{#1}}}
\newrobustcmd{\ez}[2][]{\ensuremath{\subp{e}{}{#2}{}{#1}}}
\newrobustcmd{\hatez}[2][]{\ensuremath{\subp{\hat{e}}{}{#2}{}{#1}}}
\newrobustcmd{\widehatez}[2][]{\ensuremath{\subp{\widehat{e}}{}{#2}{}{#1}}}
\newrobustcmd{\checkez}[2][]{\ensuremath{\subp{\check{e}}{}{#2}{}{#1}}}
\newrobustcmd{\tildeez}[2][]{\ensuremath{\subp{\tilde{e}}{}{#2}{}{#1}}}
\newrobustcmd{\widetildeez}[2][]{\ensuremath{\subp{\widetilde{e}}{}{#2}{}{#1}}}
\newrobustcmd{\acuteez}[2][]{\ensuremath{\subp{\acute{e}}{}{#2}{}{#1}}}
\newrobustcmd{\graveez}[2][]{\ensuremath{\subp{\grave{e}}{}{#2}{}{#1}}}
\newrobustcmd{\dotez}[2][]{\ensuremath{\subp{\dot{e}}{}{#2}{}{#1}}}
\newrobustcmd{\ddotez}[2][]{\ensuremath{\subp{\ddot{e}}{}{#2}{}{#1}}}
\newrobustcmd{\breveez}[2][]{\ensuremath{\subp{\breve{e}}{}{#2}{}{#1}}}
\newrobustcmd{\barez}[2][]{\ensuremath{\subp{\bar{e}}{}{#2}{}{#1}}}
\newrobustcmd{\vecez}[2][]{\ensuremath{\subp{\vec{e}}{}{#2}{}{#1}}}
\newrobustcmd{\bmez}[2][]{\ensuremath{\subp{\bm{e}}{}{#2}{}{#1}}}
\newrobustcmd{\hatbmez}[2][]{\ensuremath{\subp{\hat{\bm{e}}}{}{#2}{}{#1}}}
\newrobustcmd{\widehatbmez}[2][]{\ensuremath{\subp{\widehat{\bm{e}}}{}{#2}{}{#1}}}
\newrobustcmd{\checkbmez}[2][]{\ensuremath{\subp{\check{\bm{e}}}{}{#2}{}{#1}}}
\newrobustcmd{\tildebmez}[2][]{\ensuremath{\subp{\tilde{\bm{e}}}{}{#2}{}{#1}}}
\newrobustcmd{\widetildebmez}[2][]{\ensuremath{\subp{\widetilde{\bm{e}}}{}{#2}{}{#1}}}
\newrobustcmd{\acutebmez}[2][]{\ensuremath{\subp{\acute{\bm{e}}}{}{#2}{}{#1}}}
\newrobustcmd{\gravebmez}[2][]{\ensuremath{\subp{\grave{\bm{e}}}{}{#2}{}{#1}}}
\newrobustcmd{\dotbmez}[2][]{\ensuremath{\subp{\dot{\bm{e}}}{}{#2}{}{#1}}}
\newrobustcmd{\ddotbmez}[2][]{\ensuremath{\subp{\ddot{\bm{e}}}{}{#2}{}{#1}}}
\newrobustcmd{\brevebmez}[2][]{\ensuremath{\subp{\breve{\bm{e}}}{}{#2}{}{#1}}}
\newrobustcmd{\barbmez}[2][]{\ensuremath{\subp{\bar{\bm{e}}}{}{#2}{}{#1}}}
\newrobustcmd{\vecbmez}[2][]{\ensuremath{\subp{\vec{\bm{e}}}{}{#2}{}{#1}}}
\newrobustcmd{\fz}[2][]{\ensuremath{\subp{f}{}{#2}{}{#1}}}
\newrobustcmd{\hatfz}[2][]{\ensuremath{\subp{\hat{f}}{}{#2}{}{#1}}}
\newrobustcmd{\widehatfz}[2][]{\ensuremath{\subp{\widehat{f}}{}{#2}{}{#1}}}
\newrobustcmd{\checkfz}[2][]{\ensuremath{\subp{\check{f}}{}{#2}{}{#1}}}
\newrobustcmd{\tildefz}[2][]{\ensuremath{\subp{\tilde{f}}{}{#2}{}{#1}}}
\newrobustcmd{\widetildefz}[2][]{\ensuremath{\subp{\widetilde{f}}{}{#2}{}{#1}}}
\newrobustcmd{\acutefz}[2][]{\ensuremath{\subp{\acute{f}}{}{#2}{}{#1}}}
\newrobustcmd{\gravefz}[2][]{\ensuremath{\subp{\grave{f}}{}{#2}{}{#1}}}
\newrobustcmd{\dotfz}[2][]{\ensuremath{\subp{\dot{f}}{}{#2}{}{#1}}}
\newrobustcmd{\ddotfz}[2][]{\ensuremath{\subp{\ddot{f}}{}{#2}{}{#1}}}
\newrobustcmd{\brevefz}[2][]{\ensuremath{\subp{\breve{f}}{}{#2}{}{#1}}}
\newrobustcmd{\barfz}[2][]{\ensuremath{\subp{\bar{f}}{}{#2}{}{#1}}}
\newrobustcmd{\vecfz}[2][]{\ensuremath{\subp{\vec{f}}{}{#2}{}{#1}}}
\newrobustcmd{\bmfz}[2][]{\ensuremath{\subp{\bm{f}}{}{#2}{}{#1}}}
\newrobustcmd{\hatbmfz}[2][]{\ensuremath{\subp{\hat{\bm{f}}}{}{#2}{}{#1}}}
\newrobustcmd{\widehatbmfz}[2][]{\ensuremath{\subp{\widehat{\bm{f}}}{}{#2}{}{#1}}}
\newrobustcmd{\checkbmfz}[2][]{\ensuremath{\subp{\check{\bm{f}}}{}{#2}{}{#1}}}
\newrobustcmd{\tildebmfz}[2][]{\ensuremath{\subp{\tilde{\bm{f}}}{}{#2}{}{#1}}}
\newrobustcmd{\widetildebmfz}[2][]{\ensuremath{\subp{\widetilde{\bm{f}}}{}{#2}{}{#1}}}
\newrobustcmd{\acutebmfz}[2][]{\ensuremath{\subp{\acute{\bm{f}}}{}{#2}{}{#1}}}
\newrobustcmd{\gravebmfz}[2][]{\ensuremath{\subp{\grave{\bm{f}}}{}{#2}{}{#1}}}
\newrobustcmd{\dotbmfz}[2][]{\ensuremath{\subp{\dot{\bm{f}}}{}{#2}{}{#1}}}
\newrobustcmd{\ddotbmfz}[2][]{\ensuremath{\subp{\ddot{\bm{f}}}{}{#2}{}{#1}}}
\newrobustcmd{\brevebmfz}[2][]{\ensuremath{\subp{\breve{\bm{f}}}{}{#2}{}{#1}}}
\newrobustcmd{\barbmfz}[2][]{\ensuremath{\subp{\bar{\bm{f}}}{}{#2}{}{#1}}}
\newrobustcmd{\vecbmfz}[2][]{\ensuremath{\subp{\vec{\bm{f}}}{}{#2}{}{#1}}}
\newrobustcmd{\gz}[2][]{\ensuremath{\subp{g}{}{#2}{}{#1}}}
\newrobustcmd{\hatgz}[2][]{\ensuremath{\subp{\hat{g}}{}{#2}{}{#1}}}
\newrobustcmd{\widehatgz}[2][]{\ensuremath{\subp{\widehat{g}}{}{#2}{}{#1}}}
\newrobustcmd{\checkgz}[2][]{\ensuremath{\subp{\check{g}}{}{#2}{}{#1}}}
\newrobustcmd{\tildegz}[2][]{\ensuremath{\subp{\tilde{g}}{}{#2}{}{#1}}}
\newrobustcmd{\widetildegz}[2][]{\ensuremath{\subp{\widetilde{g}}{}{#2}{}{#1}}}
\newrobustcmd{\acutegz}[2][]{\ensuremath{\subp{\acute{g}}{}{#2}{}{#1}}}
\newrobustcmd{\gravegz}[2][]{\ensuremath{\subp{\grave{g}}{}{#2}{}{#1}}}
\newrobustcmd{\dotgz}[2][]{\ensuremath{\subp{\dot{g}}{}{#2}{}{#1}}}
\newrobustcmd{\ddotgz}[2][]{\ensuremath{\subp{\ddot{g}}{}{#2}{}{#1}}}
\newrobustcmd{\brevegz}[2][]{\ensuremath{\subp{\breve{g}}{}{#2}{}{#1}}}
\newrobustcmd{\bargz}[2][]{\ensuremath{\subp{\bar{g}}{}{#2}{}{#1}}}
\newrobustcmd{\vecgz}[2][]{\ensuremath{\subp{\vec{g}}{}{#2}{}{#1}}}
\newrobustcmd{\bmgz}[2][]{\ensuremath{\subp{\bm{g}}{}{#2}{}{#1}}}
\newrobustcmd{\hatbmgz}[2][]{\ensuremath{\subp{\hat{\bm{g}}}{}{#2}{}{#1}}}
\newrobustcmd{\widehatbmgz}[2][]{\ensuremath{\subp{\widehat{\bm{g}}}{}{#2}{}{#1}}}
\newrobustcmd{\checkbmgz}[2][]{\ensuremath{\subp{\check{\bm{g}}}{}{#2}{}{#1}}}
\newrobustcmd{\tildebmgz}[2][]{\ensuremath{\subp{\tilde{\bm{g}}}{}{#2}{}{#1}}}
\newrobustcmd{\widetildebmgz}[2][]{\ensuremath{\subp{\widetilde{\bm{g}}}{}{#2}{}{#1}}}
\newrobustcmd{\acutebmgz}[2][]{\ensuremath{\subp{\acute{\bm{g}}}{}{#2}{}{#1}}}
\newrobustcmd{\gravebmgz}[2][]{\ensuremath{\subp{\grave{\bm{g}}}{}{#2}{}{#1}}}
\newrobustcmd{\dotbmgz}[2][]{\ensuremath{\subp{\dot{\bm{g}}}{}{#2}{}{#1}}}
\newrobustcmd{\ddotbmgz}[2][]{\ensuremath{\subp{\ddot{\bm{g}}}{}{#2}{}{#1}}}
\newrobustcmd{\brevebmgz}[2][]{\ensuremath{\subp{\breve{\bm{g}}}{}{#2}{}{#1}}}
\newrobustcmd{\barbmgz}[2][]{\ensuremath{\subp{\bar{\bm{g}}}{}{#2}{}{#1}}}
\newrobustcmd{\vecbmgz}[2][]{\ensuremath{\subp{\vec{\bm{g}}}{}{#2}{}{#1}}}
\newrobustcmd{\hz}[2][]{\ensuremath{\subp{h}{}{#2}{}{#1}}}
\newrobustcmd{\hathz}[2][]{\ensuremath{\subp{\hat{h}}{}{#2}{}{#1}}}
\newrobustcmd{\widehathz}[2][]{\ensuremath{\subp{\widehat{h}}{}{#2}{}{#1}}}
\newrobustcmd{\checkhz}[2][]{\ensuremath{\subp{\check{h}}{}{#2}{}{#1}}}
\newrobustcmd{\tildehz}[2][]{\ensuremath{\subp{\tilde{h}}{}{#2}{}{#1}}}
\newrobustcmd{\widetildehz}[2][]{\ensuremath{\subp{\widetilde{h}}{}{#2}{}{#1}}}
\newrobustcmd{\acutehz}[2][]{\ensuremath{\subp{\acute{h}}{}{#2}{}{#1}}}
\newrobustcmd{\gravehz}[2][]{\ensuremath{\subp{\grave{h}}{}{#2}{}{#1}}}
\newrobustcmd{\dothz}[2][]{\ensuremath{\subp{\dot{h}}{}{#2}{}{#1}}}
\newrobustcmd{\ddothz}[2][]{\ensuremath{\subp{\ddot{h}}{}{#2}{}{#1}}}
\newrobustcmd{\brevehz}[2][]{\ensuremath{\subp{\breve{h}}{}{#2}{}{#1}}}
\newrobustcmd{\barhz}[2][]{\ensuremath{\subp{\bar{h}}{}{#2}{}{#1}}}
\newrobustcmd{\vechz}[2][]{\ensuremath{\subp{\vec{h}}{}{#2}{}{#1}}}
\newrobustcmd{\bmhz}[2][]{\ensuremath{\subp{\bm{h}}{}{#2}{}{#1}}}
\newrobustcmd{\hatbmhz}[2][]{\ensuremath{\subp{\hat{\bm{h}}}{}{#2}{}{#1}}}
\newrobustcmd{\widehatbmhz}[2][]{\ensuremath{\subp{\widehat{\bm{h}}}{}{#2}{}{#1}}}
\newrobustcmd{\checkbmhz}[2][]{\ensuremath{\subp{\check{\bm{h}}}{}{#2}{}{#1}}}
\newrobustcmd{\tildebmhz}[2][]{\ensuremath{\subp{\tilde{\bm{h}}}{}{#2}{}{#1}}}
\newrobustcmd{\widetildebmhz}[2][]{\ensuremath{\subp{\widetilde{\bm{h}}}{}{#2}{}{#1}}}
\newrobustcmd{\acutebmhz}[2][]{\ensuremath{\subp{\acute{\bm{h}}}{}{#2}{}{#1}}}
\newrobustcmd{\gravebmhz}[2][]{\ensuremath{\subp{\grave{\bm{h}}}{}{#2}{}{#1}}}
\newrobustcmd{\dotbmhz}[2][]{\ensuremath{\subp{\dot{\bm{h}}}{}{#2}{}{#1}}}
\newrobustcmd{\ddotbmhz}[2][]{\ensuremath{\subp{\ddot{\bm{h}}}{}{#2}{}{#1}}}
\newrobustcmd{\brevebmhz}[2][]{\ensuremath{\subp{\breve{\bm{h}}}{}{#2}{}{#1}}}
\newrobustcmd{\barbmhz}[2][]{\ensuremath{\subp{\bar{\bm{h}}}{}{#2}{}{#1}}}
\newrobustcmd{\vecbmhz}[2][]{\ensuremath{\subp{\vec{\bm{h}}}{}{#2}{}{#1}}}
\newrobustcmd{\iz}[2][]{\ensuremath{\subp{i}{}{#2}{}{#1}}}
\newrobustcmd{\hatiz}[2][]{\ensuremath{\subp{\hat{\imath}}{}{#2}{}{#1}}}
\newrobustcmd{\widehatiz}[2][]{\ensuremath{\subp{\widehat{\imath}}{}{#2}{}{#1}}}
\newrobustcmd{\checkiz}[2][]{\ensuremath{\subp{\check{\imath}}{}{#2}{}{#1}}}
\newrobustcmd{\tildeiz}[2][]{\ensuremath{\subp{\tilde{\imath}}{}{#2}{}{#1}}}
\newrobustcmd{\widetildeiz}[2][]{\ensuremath{\subp{\widetilde{\imath}}{}{#2}{}{#1}}}
\newrobustcmd{\acuteiz}[2][]{\ensuremath{\subp{\acute{\imath}}{}{#2}{}{#1}}}
\newrobustcmd{\graveiz}[2][]{\ensuremath{\subp{\grave{\imath}}{}{#2}{}{#1}}}
\newrobustcmd{\dotiz}[2][]{\ensuremath{\subp{\dot{\imath}}{}{#2}{}{#1}}}
\newrobustcmd{\ddotiz}[2][]{\ensuremath{\subp{\ddot{\imath}}{}{#2}{}{#1}}}
\newrobustcmd{\breveiz}[2][]{\ensuremath{\subp{\breve{\imath}}{}{#2}{}{#1}}}
\newrobustcmd{\bariz}[2][]{\ensuremath{\subp{\bar{\imath}}{}{#2}{}{#1}}}
\newrobustcmd{\veciz}[2][]{\ensuremath{\subp{\vec{\imath}}{}{#2}{}{#1}}}
\newrobustcmd{\bmiz}[2][]{\ensuremath{\subp{\bm{i}}{}{#2}{}{#1}}}
\newrobustcmd{\hatbmiz}[2][]{\ensuremath{\subp{\hat{\bm{\imath}}}{}{#2}{}{#1}}}
\newrobustcmd{\widehatbmiz}[2][]{\ensuremath{\subp{\widehat{\bm{\imath}}}{}{#2}{}{#1}}}
\newrobustcmd{\checkbmiz}[2][]{\ensuremath{\subp{\check{\bm{\imath}}}{}{#2}{}{#1}}}
\newrobustcmd{\tildebmiz}[2][]{\ensuremath{\subp{\tilde{\bm{\imath}}}{}{#2}{}{#1}}}
\newrobustcmd{\widetildebmiz}[2][]{\ensuremath{\subp{\widetilde{\bm{\imath}}}{}{#2}{}{#1}}}
\newrobustcmd{\acutebmiz}[2][]{\ensuremath{\subp{\acute{\bm{\imath}}}{}{#2}{}{#1}}}
\newrobustcmd{\gravebmiz}[2][]{\ensuremath{\subp{\grave{\bm{\imath}}}{}{#2}{}{#1}}}
\newrobustcmd{\dotbmiz}[2][]{\ensuremath{\subp{\dot{\bm{\imath}}}{}{#2}{}{#1}}}
\newrobustcmd{\ddotbmiz}[2][]{\ensuremath{\subp{\ddot{\bm{\imath}}}{}{#2}{}{#1}}}
\newrobustcmd{\brevebmiz}[2][]{\ensuremath{\subp{\breve{\bm{\imath}}}{}{#2}{}{#1}}}
\newrobustcmd{\barbmiz}[2][]{\ensuremath{\subp{\bar{\bm{\imath}}}{}{#2}{}{#1}}}
\newrobustcmd{\vecbmiz}[2][]{\ensuremath{\subp{\vec{\bm{\imath}}}{}{#2}{}{#1}}}
\newrobustcmd{\jz}[2][]{\ensuremath{\subp{j}{}{#2}{}{#1}}}
\newrobustcmd{\hatjz}[2][]{\ensuremath{\subp{\hat{\jmath}}{}{#2}{}{#1}}}
\newrobustcmd{\widehatjz}[2][]{\ensuremath{\subp{\widehat{\jmath}}{}{#2}{}{#1}}}
\newrobustcmd{\checkjz}[2][]{\ensuremath{\subp{\check{\jmath}}{}{#2}{}{#1}}}
\newrobustcmd{\tildejz}[2][]{\ensuremath{\subp{\tilde{\jmath}}{}{#2}{}{#1}}}
\newrobustcmd{\widetildejz}[2][]{\ensuremath{\subp{\widetilde{\jmath}}{}{#2}{}{#1}}}
\newrobustcmd{\acutejz}[2][]{\ensuremath{\subp{\acute{\jmath}}{}{#2}{}{#1}}}
\newrobustcmd{\gravejz}[2][]{\ensuremath{\subp{\grave{\jmath}}{}{#2}{}{#1}}}
\newrobustcmd{\dotjz}[2][]{\ensuremath{\subp{\dot{\jmath}}{}{#2}{}{#1}}}
\newrobustcmd{\ddotjz}[2][]{\ensuremath{\subp{\ddot{\jmath}}{}{#2}{}{#1}}}
\newrobustcmd{\brevejz}[2][]{\ensuremath{\subp{\breve{\jmath}}{}{#2}{}{#1}}}
\newrobustcmd{\barjz}[2][]{\ensuremath{\subp{\bar{\jmath}}{}{#2}{}{#1}}}
\newrobustcmd{\vecjz}[2][]{\ensuremath{\subp{\vec{\jmath}}{}{#2}{}{#1}}}
\newrobustcmd{\bmjz}[2][]{\ensuremath{\subp{\bm{j}}{}{#2}{}{#1}}}
\newrobustcmd{\hatbmjz}[2][]{\ensuremath{\subp{\hat{\bm{\jmath}}}{}{#2}{}{#1}}}
\newrobustcmd{\widehatbmjz}[2][]{\ensuremath{\subp{\widehat{\bm{\jmath}}}{}{#2}{}{#1}}}
\newrobustcmd{\checkbmjz}[2][]{\ensuremath{\subp{\check{\bm{\jmath}}}{}{#2}{}{#1}}}
\newrobustcmd{\tildebmjz}[2][]{\ensuremath{\subp{\tilde{\bm{\jmath}}}{}{#2}{}{#1}}}
\newrobustcmd{\widetildebmjz}[2][]{\ensuremath{\subp{\widetilde{\bm{\jmath}}}{}{#2}{}{#1}}}
\newrobustcmd{\acutebmjz}[2][]{\ensuremath{\subp{\acute{\bm{\jmath}}}{}{#2}{}{#1}}}
\newrobustcmd{\gravebmjz}[2][]{\ensuremath{\subp{\grave{\bm{\jmath}}}{}{#2}{}{#1}}}
\newrobustcmd{\dotbmjz}[2][]{\ensuremath{\subp{\dot{\bm{\jmath}}}{}{#2}{}{#1}}}
\newrobustcmd{\ddotbmjz}[2][]{\ensuremath{\subp{\ddot{\bm{\jmath}}}{}{#2}{}{#1}}}
\newrobustcmd{\brevebmjz}[2][]{\ensuremath{\subp{\breve{\bm{\jmath}}}{}{#2}{}{#1}}}
\newrobustcmd{\barbmjz}[2][]{\ensuremath{\subp{\bar{\bm{\jmath}}}{}{#2}{}{#1}}}
\newrobustcmd{\vecbmjz}[2][]{\ensuremath{\subp{\vec{\bm{\jmath}}}{}{#2}{}{#1}}}
\newrobustcmd{\kz}[2][]{\ensuremath{\subp{k}{}{#2}{}{#1}}}
\newrobustcmd{\hatkz}[2][]{\ensuremath{\subp{\hat{k}}{}{#2}{}{#1}}}
\newrobustcmd{\widehatkz}[2][]{\ensuremath{\subp{\widehat{k}}{}{#2}{}{#1}}}
\newrobustcmd{\checkkz}[2][]{\ensuremath{\subp{\check{k}}{}{#2}{}{#1}}}
\newrobustcmd{\tildekz}[2][]{\ensuremath{\subp{\tilde{k}}{}{#2}{}{#1}}}
\newrobustcmd{\widetildekz}[2][]{\ensuremath{\subp{\widetilde{k}}{}{#2}{}{#1}}}
\newrobustcmd{\acutekz}[2][]{\ensuremath{\subp{\acute{k}}{}{#2}{}{#1}}}
\newrobustcmd{\gravekz}[2][]{\ensuremath{\subp{\grave{k}}{}{#2}{}{#1}}}
\newrobustcmd{\dotkz}[2][]{\ensuremath{\subp{\dot{k}}{}{#2}{}{#1}}}
\newrobustcmd{\ddotkz}[2][]{\ensuremath{\subp{\ddot{k}}{}{#2}{}{#1}}}
\newrobustcmd{\brevekz}[2][]{\ensuremath{\subp{\breve{k}}{}{#2}{}{#1}}}
\newrobustcmd{\barkz}[2][]{\ensuremath{\subp{\bar{k}}{}{#2}{}{#1}}}
\newrobustcmd{\veckz}[2][]{\ensuremath{\subp{\vec{k}}{}{#2}{}{#1}}}
\newrobustcmd{\bmkz}[2][]{\ensuremath{\subp{\bm{k}}{}{#2}{}{#1}}}
\newrobustcmd{\hatbmkz}[2][]{\ensuremath{\subp{\hat{\bm{k}}}{}{#2}{}{#1}}}
\newrobustcmd{\widehatbmkz}[2][]{\ensuremath{\subp{\widehat{\bm{k}}}{}{#2}{}{#1}}}
\newrobustcmd{\checkbmkz}[2][]{\ensuremath{\subp{\check{\bm{k}}}{}{#2}{}{#1}}}
\newrobustcmd{\tildebmkz}[2][]{\ensuremath{\subp{\tilde{\bm{k}}}{}{#2}{}{#1}}}
\newrobustcmd{\widetildebmkz}[2][]{\ensuremath{\subp{\widetilde{\bm{k}}}{}{#2}{}{#1}}}
\newrobustcmd{\acutebmkz}[2][]{\ensuremath{\subp{\acute{\bm{k}}}{}{#2}{}{#1}}}
\newrobustcmd{\gravebmkz}[2][]{\ensuremath{\subp{\grave{\bm{k}}}{}{#2}{}{#1}}}
\newrobustcmd{\dotbmkz}[2][]{\ensuremath{\subp{\dot{\bm{k}}}{}{#2}{}{#1}}}
\newrobustcmd{\ddotbmkz}[2][]{\ensuremath{\subp{\ddot{\bm{k}}}{}{#2}{}{#1}}}
\newrobustcmd{\brevebmkz}[2][]{\ensuremath{\subp{\breve{\bm{k}}}{}{#2}{}{#1}}}
\newrobustcmd{\barbmkz}[2][]{\ensuremath{\subp{\bar{\bm{k}}}{}{#2}{}{#1}}}
\newrobustcmd{\vecbmkz}[2][]{\ensuremath{\subp{\vec{\bm{k}}}{}{#2}{}{#1}}}
\newrobustcmd{\lz}[2][]{\ensuremath{\subp{l}{}{#2}{}{#1}}}
\newrobustcmd{\hatlz}[2][]{\ensuremath{\subp{\hat{l}}{}{#2}{}{#1}}}
\newrobustcmd{\widehatlz}[2][]{\ensuremath{\subp{\widehat{l}}{}{#2}{}{#1}}}
\newrobustcmd{\checklz}[2][]{\ensuremath{\subp{\check{l}}{}{#2}{}{#1}}}
\newrobustcmd{\tildelz}[2][]{\ensuremath{\subp{\tilde{l}}{}{#2}{}{#1}}}
\newrobustcmd{\widetildelz}[2][]{\ensuremath{\subp{\widetilde{l}}{}{#2}{}{#1}}}
\newrobustcmd{\acutelz}[2][]{\ensuremath{\subp{\acute{l}}{}{#2}{}{#1}}}
\newrobustcmd{\gravelz}[2][]{\ensuremath{\subp{\grave{l}}{}{#2}{}{#1}}}
\newrobustcmd{\dotlz}[2][]{\ensuremath{\subp{\dot{l}}{}{#2}{}{#1}}}
\newrobustcmd{\ddotlz}[2][]{\ensuremath{\subp{\ddot{l}}{}{#2}{}{#1}}}
\newrobustcmd{\brevelz}[2][]{\ensuremath{\subp{\breve{l}}{}{#2}{}{#1}}}
\newrobustcmd{\barlz}[2][]{\ensuremath{\subp{\bar{l}}{}{#2}{}{#1}}}
\newrobustcmd{\veclz}[2][]{\ensuremath{\subp{\vec{l}}{}{#2}{}{#1}}}
\newrobustcmd{\bmlz}[2][]{\ensuremath{\subp{\bm{l}}{}{#2}{}{#1}}}
\newrobustcmd{\hatbmlz}[2][]{\ensuremath{\subp{\hat{\bm{l}}}{}{#2}{}{#1}}}
\newrobustcmd{\widehatbmlz}[2][]{\ensuremath{\subp{\widehat{\bm{l}}}{}{#2}{}{#1}}}
\newrobustcmd{\checkbmlz}[2][]{\ensuremath{\subp{\check{\bm{l}}}{}{#2}{}{#1}}}
\newrobustcmd{\tildebmlz}[2][]{\ensuremath{\subp{\tilde{\bm{l}}}{}{#2}{}{#1}}}
\newrobustcmd{\widetildebmlz}[2][]{\ensuremath{\subp{\widetilde{\bm{l}}}{}{#2}{}{#1}}}
\newrobustcmd{\acutebmlz}[2][]{\ensuremath{\subp{\acute{\bm{l}}}{}{#2}{}{#1}}}
\newrobustcmd{\gravebmlz}[2][]{\ensuremath{\subp{\grave{\bm{l}}}{}{#2}{}{#1}}}
\newrobustcmd{\dotbmlz}[2][]{\ensuremath{\subp{\dot{\bm{l}}}{}{#2}{}{#1}}}
\newrobustcmd{\ddotbmlz}[2][]{\ensuremath{\subp{\ddot{\bm{l}}}{}{#2}{}{#1}}}
\newrobustcmd{\brevebmlz}[2][]{\ensuremath{\subp{\breve{\bm{l}}}{}{#2}{}{#1}}}
\newrobustcmd{\barbmlz}[2][]{\ensuremath{\subp{\bar{\bm{l}}}{}{#2}{}{#1}}}
\newrobustcmd{\vecbmlz}[2][]{\ensuremath{\subp{\vec{\bm{l}}}{}{#2}{}{#1}}}
\newrobustcmd{\mz}[2][]{\ensuremath{\subp{m}{}{#2}{}{#1}}}
\newrobustcmd{\hatmz}[2][]{\ensuremath{\subp{\hat{m}}{}{#2}{}{#1}}}
\newrobustcmd{\widehatmz}[2][]{\ensuremath{\subp{\widehat{m}}{}{#2}{}{#1}}}
\newrobustcmd{\checkmz}[2][]{\ensuremath{\subp{\check{m}}{}{#2}{}{#1}}}
\newrobustcmd{\tildemz}[2][]{\ensuremath{\subp{\tilde{m}}{}{#2}{}{#1}}}
\newrobustcmd{\widetildemz}[2][]{\ensuremath{\subp{\widetilde{m}}{}{#2}{}{#1}}}
\newrobustcmd{\acutemz}[2][]{\ensuremath{\subp{\acute{m}}{}{#2}{}{#1}}}
\newrobustcmd{\gravemz}[2][]{\ensuremath{\subp{\grave{m}}{}{#2}{}{#1}}}
\newrobustcmd{\dotmz}[2][]{\ensuremath{\subp{\dot{m}}{}{#2}{}{#1}}}
\newrobustcmd{\ddotmz}[2][]{\ensuremath{\subp{\ddot{m}}{}{#2}{}{#1}}}
\newrobustcmd{\brevemz}[2][]{\ensuremath{\subp{\breve{m}}{}{#2}{}{#1}}}
\newrobustcmd{\barmz}[2][]{\ensuremath{\subp{\bar{m}}{}{#2}{}{#1}}}
\newrobustcmd{\vecmz}[2][]{\ensuremath{\subp{\vec{m}}{}{#2}{}{#1}}}
\newrobustcmd{\bmmz}[2][]{\ensuremath{\subp{\bm{m}}{}{#2}{}{#1}}}
\newrobustcmd{\hatbmmz}[2][]{\ensuremath{\subp{\hat{\bm{m}}}{}{#2}{}{#1}}}
\newrobustcmd{\widehatbmmz}[2][]{\ensuremath{\subp{\widehat{\bm{m}}}{}{#2}{}{#1}}}
\newrobustcmd{\checkbmmz}[2][]{\ensuremath{\subp{\check{\bm{m}}}{}{#2}{}{#1}}}
\newrobustcmd{\tildebmmz}[2][]{\ensuremath{\subp{\tilde{\bm{m}}}{}{#2}{}{#1}}}
\newrobustcmd{\widetildebmmz}[2][]{\ensuremath{\subp{\widetilde{\bm{m}}}{}{#2}{}{#1}}}
\newrobustcmd{\acutebmmz}[2][]{\ensuremath{\subp{\acute{\bm{m}}}{}{#2}{}{#1}}}
\newrobustcmd{\gravebmmz}[2][]{\ensuremath{\subp{\grave{\bm{m}}}{}{#2}{}{#1}}}
\newrobustcmd{\dotbmmz}[2][]{\ensuremath{\subp{\dot{\bm{m}}}{}{#2}{}{#1}}}
\newrobustcmd{\ddotbmmz}[2][]{\ensuremath{\subp{\ddot{\bm{m}}}{}{#2}{}{#1}}}
\newrobustcmd{\brevebmmz}[2][]{\ensuremath{\subp{\breve{\bm{m}}}{}{#2}{}{#1}}}
\newrobustcmd{\barbmmz}[2][]{\ensuremath{\subp{\bar{\bm{m}}}{}{#2}{}{#1}}}
\newrobustcmd{\vecbmmz}[2][]{\ensuremath{\subp{\vec{\bm{m}}}{}{#2}{}{#1}}}
\newrobustcmd{\nz}[2][]{\ensuremath{\subp{n}{}{#2}{}{#1}}}
\newrobustcmd{\hatnz}[2][]{\ensuremath{\subp{\hat{n}}{}{#2}{}{#1}}}
\newrobustcmd{\widehatnz}[2][]{\ensuremath{\subp{\widehat{n}}{}{#2}{}{#1}}}
\newrobustcmd{\checknz}[2][]{\ensuremath{\subp{\check{n}}{}{#2}{}{#1}}}
\newrobustcmd{\tildenz}[2][]{\ensuremath{\subp{\tilde{n}}{}{#2}{}{#1}}}
\newrobustcmd{\widetildenz}[2][]{\ensuremath{\subp{\widetilde{n}}{}{#2}{}{#1}}}
\newrobustcmd{\acutenz}[2][]{\ensuremath{\subp{\acute{n}}{}{#2}{}{#1}}}
\newrobustcmd{\gravenz}[2][]{\ensuremath{\subp{\grave{n}}{}{#2}{}{#1}}}
\newrobustcmd{\dotnz}[2][]{\ensuremath{\subp{\dot{n}}{}{#2}{}{#1}}}
\newrobustcmd{\ddotnz}[2][]{\ensuremath{\subp{\ddot{n}}{}{#2}{}{#1}}}
\newrobustcmd{\brevenz}[2][]{\ensuremath{\subp{\breve{n}}{}{#2}{}{#1}}}
\newrobustcmd{\barnz}[2][]{\ensuremath{\subp{\bar{n}}{}{#2}{}{#1}}}
\newrobustcmd{\vecnz}[2][]{\ensuremath{\subp{\vec{n}}{}{#2}{}{#1}}}
\newrobustcmd{\bmnz}[2][]{\ensuremath{\subp{\bm{n}}{}{#2}{}{#1}}}
\newrobustcmd{\hatbmnz}[2][]{\ensuremath{\subp{\hat{\bm{n}}}{}{#2}{}{#1}}}
\newrobustcmd{\widehatbmnz}[2][]{\ensuremath{\subp{\widehat{\bm{n}}}{}{#2}{}{#1}}}
\newrobustcmd{\checkbmnz}[2][]{\ensuremath{\subp{\check{\bm{n}}}{}{#2}{}{#1}}}
\newrobustcmd{\tildebmnz}[2][]{\ensuremath{\subp{\tilde{\bm{n}}}{}{#2}{}{#1}}}
\newrobustcmd{\widetildebmnz}[2][]{\ensuremath{\subp{\widetilde{\bm{n}}}{}{#2}{}{#1}}}
\newrobustcmd{\acutebmnz}[2][]{\ensuremath{\subp{\acute{\bm{n}}}{}{#2}{}{#1}}}
\newrobustcmd{\gravebmnz}[2][]{\ensuremath{\subp{\grave{\bm{n}}}{}{#2}{}{#1}}}
\newrobustcmd{\dotbmnz}[2][]{\ensuremath{\subp{\dot{\bm{n}}}{}{#2}{}{#1}}}
\newrobustcmd{\ddotbmnz}[2][]{\ensuremath{\subp{\ddot{\bm{n}}}{}{#2}{}{#1}}}
\newrobustcmd{\brevebmnz}[2][]{\ensuremath{\subp{\breve{\bm{n}}}{}{#2}{}{#1}}}
\newrobustcmd{\barbmnz}[2][]{\ensuremath{\subp{\bar{\bm{n}}}{}{#2}{}{#1}}}
\newrobustcmd{\vecbmnz}[2][]{\ensuremath{\subp{\vec{\bm{n}}}{}{#2}{}{#1}}}
\newrobustcmd{\oz}[2][]{\ensuremath{\subp{o}{}{#2}{}{#1}}}
\newrobustcmd{\hatoz}[2][]{\ensuremath{\subp{\hat{o}}{}{#2}{}{#1}}}
\newrobustcmd{\widehatoz}[2][]{\ensuremath{\subp{\widehat{o}}{}{#2}{}{#1}}}
\newrobustcmd{\checkoz}[2][]{\ensuremath{\subp{\check{o}}{}{#2}{}{#1}}}
\newrobustcmd{\tildeoz}[2][]{\ensuremath{\subp{\tilde{o}}{}{#2}{}{#1}}}
\newrobustcmd{\widetildeoz}[2][]{\ensuremath{\subp{\widetilde{o}}{}{#2}{}{#1}}}
\newrobustcmd{\acuteoz}[2][]{\ensuremath{\subp{\acute{o}}{}{#2}{}{#1}}}
\newrobustcmd{\graveoz}[2][]{\ensuremath{\subp{\grave{o}}{}{#2}{}{#1}}}
\newrobustcmd{\dotoz}[2][]{\ensuremath{\subp{\dot{o}}{}{#2}{}{#1}}}
\newrobustcmd{\ddotoz}[2][]{\ensuremath{\subp{\ddot{o}}{}{#2}{}{#1}}}
\newrobustcmd{\breveoz}[2][]{\ensuremath{\subp{\breve{o}}{}{#2}{}{#1}}}
\newrobustcmd{\baroz}[2][]{\ensuremath{\subp{\bar{o}}{}{#2}{}{#1}}}
\newrobustcmd{\vecoz}[2][]{\ensuremath{\subp{\vec{o}}{}{#2}{}{#1}}}
\newrobustcmd{\bmoz}[2][]{\ensuremath{\subp{\bm{o}}{}{#2}{}{#1}}}
\newrobustcmd{\hatbmoz}[2][]{\ensuremath{\subp{\hat{\bm{o}}}{}{#2}{}{#1}}}
\newrobustcmd{\widehatbmoz}[2][]{\ensuremath{\subp{\widehat{\bm{o}}}{}{#2}{}{#1}}}
\newrobustcmd{\checkbmoz}[2][]{\ensuremath{\subp{\check{\bm{o}}}{}{#2}{}{#1}}}
\newrobustcmd{\tildebmoz}[2][]{\ensuremath{\subp{\tilde{\bm{o}}}{}{#2}{}{#1}}}
\newrobustcmd{\widetildebmoz}[2][]{\ensuremath{\subp{\widetilde{\bm{o}}}{}{#2}{}{#1}}}
\newrobustcmd{\acutebmoz}[2][]{\ensuremath{\subp{\acute{\bm{o}}}{}{#2}{}{#1}}}
\newrobustcmd{\gravebmoz}[2][]{\ensuremath{\subp{\grave{\bm{o}}}{}{#2}{}{#1}}}
\newrobustcmd{\dotbmoz}[2][]{\ensuremath{\subp{\dot{\bm{o}}}{}{#2}{}{#1}}}
\newrobustcmd{\ddotbmoz}[2][]{\ensuremath{\subp{\ddot{\bm{o}}}{}{#2}{}{#1}}}
\newrobustcmd{\brevebmoz}[2][]{\ensuremath{\subp{\breve{\bm{o}}}{}{#2}{}{#1}}}
\newrobustcmd{\barbmoz}[2][]{\ensuremath{\subp{\bar{\bm{o}}}{}{#2}{}{#1}}}
\newrobustcmd{\vecbmoz}[2][]{\ensuremath{\subp{\vec{\bm{o}}}{}{#2}{}{#1}}}
\newrobustcmd{\pz}[2][]{\ensuremath{\subp{p}{}{#2}{}{#1}}}
\newrobustcmd{\hatpz}[2][]{\ensuremath{\subp{\hat{p}}{}{#2}{}{#1}}}
\newrobustcmd{\widehatpz}[2][]{\ensuremath{\subp{\widehat{p}}{}{#2}{}{#1}}}
\newrobustcmd{\checkpz}[2][]{\ensuremath{\subp{\check{p}}{}{#2}{}{#1}}}
\newrobustcmd{\tildepz}[2][]{\ensuremath{\subp{\tilde{p}}{}{#2}{}{#1}}}
\newrobustcmd{\widetildepz}[2][]{\ensuremath{\subp{\widetilde{p}}{}{#2}{}{#1}}}
\newrobustcmd{\acutepz}[2][]{\ensuremath{\subp{\acute{p}}{}{#2}{}{#1}}}
\newrobustcmd{\gravepz}[2][]{\ensuremath{\subp{\grave{p}}{}{#2}{}{#1}}}
\newrobustcmd{\dotpz}[2][]{\ensuremath{\subp{\dot{p}}{}{#2}{}{#1}}}
\newrobustcmd{\ddotpz}[2][]{\ensuremath{\subp{\ddot{p}}{}{#2}{}{#1}}}
\newrobustcmd{\brevepz}[2][]{\ensuremath{\subp{\breve{p}}{}{#2}{}{#1}}}
\newrobustcmd{\barpz}[2][]{\ensuremath{\subp{\bar{p}}{}{#2}{}{#1}}}
\newrobustcmd{\vecpz}[2][]{\ensuremath{\subp{\vec{p}}{}{#2}{}{#1}}}
\newrobustcmd{\bmpz}[2][]{\ensuremath{\subp{\bm{p}}{}{#2}{}{#1}}}
\newrobustcmd{\hatbmpz}[2][]{\ensuremath{\subp{\hat{\bm{p}}}{}{#2}{}{#1}}}
\newrobustcmd{\widehatbmpz}[2][]{\ensuremath{\subp{\widehat{\bm{p}}}{}{#2}{}{#1}}}
\newrobustcmd{\checkbmpz}[2][]{\ensuremath{\subp{\check{\bm{p}}}{}{#2}{}{#1}}}
\newrobustcmd{\tildebmpz}[2][]{\ensuremath{\subp{\tilde{\bm{p}}}{}{#2}{}{#1}}}
\newrobustcmd{\widetildebmpz}[2][]{\ensuremath{\subp{\widetilde{\bm{p}}}{}{#2}{}{#1}}}
\newrobustcmd{\acutebmpz}[2][]{\ensuremath{\subp{\acute{\bm{p}}}{}{#2}{}{#1}}}
\newrobustcmd{\gravebmpz}[2][]{\ensuremath{\subp{\grave{\bm{p}}}{}{#2}{}{#1}}}
\newrobustcmd{\dotbmpz}[2][]{\ensuremath{\subp{\dot{\bm{p}}}{}{#2}{}{#1}}}
\newrobustcmd{\ddotbmpz}[2][]{\ensuremath{\subp{\ddot{\bm{p}}}{}{#2}{}{#1}}}
\newrobustcmd{\brevebmpz}[2][]{\ensuremath{\subp{\breve{\bm{p}}}{}{#2}{}{#1}}}
\newrobustcmd{\barbmpz}[2][]{\ensuremath{\subp{\bar{\bm{p}}}{}{#2}{}{#1}}}
\newrobustcmd{\vecbmpz}[2][]{\ensuremath{\subp{\vec{\bm{p}}}{}{#2}{}{#1}}}
\newrobustcmd{\qz}[2][]{\ensuremath{\subp{q}{}{#2}{}{#1}}}
\newrobustcmd{\hatqz}[2][]{\ensuremath{\subp{\hat{q}}{}{#2}{}{#1}}}
\newrobustcmd{\widehatqz}[2][]{\ensuremath{\subp{\widehat{q}}{}{#2}{}{#1}}}
\newrobustcmd{\checkqz}[2][]{\ensuremath{\subp{\check{q}}{}{#2}{}{#1}}}
\newrobustcmd{\tildeqz}[2][]{\ensuremath{\subp{\tilde{q}}{}{#2}{}{#1}}}
\newrobustcmd{\widetildeqz}[2][]{\ensuremath{\subp{\widetilde{q}}{}{#2}{}{#1}}}
\newrobustcmd{\acuteqz}[2][]{\ensuremath{\subp{\acute{q}}{}{#2}{}{#1}}}
\newrobustcmd{\graveqz}[2][]{\ensuremath{\subp{\grave{q}}{}{#2}{}{#1}}}
\newrobustcmd{\dotqz}[2][]{\ensuremath{\subp{\dot{q}}{}{#2}{}{#1}}}
\newrobustcmd{\ddotqz}[2][]{\ensuremath{\subp{\ddot{q}}{}{#2}{}{#1}}}
\newrobustcmd{\breveqz}[2][]{\ensuremath{\subp{\breve{q}}{}{#2}{}{#1}}}
\newrobustcmd{\barqz}[2][]{\ensuremath{\subp{\bar{q}}{}{#2}{}{#1}}}
\newrobustcmd{\vecqz}[2][]{\ensuremath{\subp{\vec{q}}{}{#2}{}{#1}}}
\newrobustcmd{\bmqz}[2][]{\ensuremath{\subp{\bm{q}}{}{#2}{}{#1}}}
\newrobustcmd{\hatbmqz}[2][]{\ensuremath{\subp{\hat{\bm{q}}}{}{#2}{}{#1}}}
\newrobustcmd{\widehatbmqz}[2][]{\ensuremath{\subp{\widehat{\bm{q}}}{}{#2}{}{#1}}}
\newrobustcmd{\checkbmqz}[2][]{\ensuremath{\subp{\check{\bm{q}}}{}{#2}{}{#1}}}
\newrobustcmd{\tildebmqz}[2][]{\ensuremath{\subp{\tilde{\bm{q}}}{}{#2}{}{#1}}}
\newrobustcmd{\widetildebmqz}[2][]{\ensuremath{\subp{\widetilde{\bm{q}}}{}{#2}{}{#1}}}
\newrobustcmd{\acutebmqz}[2][]{\ensuremath{\subp{\acute{\bm{q}}}{}{#2}{}{#1}}}
\newrobustcmd{\gravebmqz}[2][]{\ensuremath{\subp{\grave{\bm{q}}}{}{#2}{}{#1}}}
\newrobustcmd{\dotbmqz}[2][]{\ensuremath{\subp{\dot{\bm{q}}}{}{#2}{}{#1}}}
\newrobustcmd{\ddotbmqz}[2][]{\ensuremath{\subp{\ddot{\bm{q}}}{}{#2}{}{#1}}}
\newrobustcmd{\brevebmqz}[2][]{\ensuremath{\subp{\breve{\bm{q}}}{}{#2}{}{#1}}}
\newrobustcmd{\barbmqz}[2][]{\ensuremath{\subp{\bar{\bm{q}}}{}{#2}{}{#1}}}
\newrobustcmd{\vecbmqz}[2][]{\ensuremath{\subp{\vec{\bm{q}}}{}{#2}{}{#1}}}
\newrobustcmd{\rz}[2][]{\ensuremath{\subp{r}{}{#2}{}{#1}}}
\newrobustcmd{\hatrz}[2][]{\ensuremath{\subp{\hat{r}}{}{#2}{}{#1}}}
\newrobustcmd{\widehatrz}[2][]{\ensuremath{\subp{\widehat{r}}{}{#2}{}{#1}}}
\newrobustcmd{\checkrz}[2][]{\ensuremath{\subp{\check{r}}{}{#2}{}{#1}}}
\newrobustcmd{\tilderz}[2][]{\ensuremath{\subp{\tilde{r}}{}{#2}{}{#1}}}
\newrobustcmd{\widetilderz}[2][]{\ensuremath{\subp{\widetilde{r}}{}{#2}{}{#1}}}
\newrobustcmd{\acuterz}[2][]{\ensuremath{\subp{\acute{r}}{}{#2}{}{#1}}}
\newrobustcmd{\graverz}[2][]{\ensuremath{\subp{\grave{r}}{}{#2}{}{#1}}}
\newrobustcmd{\dotrz}[2][]{\ensuremath{\subp{\dot{r}}{}{#2}{}{#1}}}
\newrobustcmd{\ddotrz}[2][]{\ensuremath{\subp{\ddot{r}}{}{#2}{}{#1}}}
\newrobustcmd{\breverz}[2][]{\ensuremath{\subp{\breve{r}}{}{#2}{}{#1}}}
\newrobustcmd{\barrz}[2][]{\ensuremath{\subp{\bar{r}}{}{#2}{}{#1}}}
\newrobustcmd{\vecrz}[2][]{\ensuremath{\subp{\vec{r}}{}{#2}{}{#1}}}
\newrobustcmd{\bmrz}[2][]{\ensuremath{\subp{\bm{r}}{}{#2}{}{#1}}}
\newrobustcmd{\hatbmrz}[2][]{\ensuremath{\subp{\hat{\bm{r}}}{}{#2}{}{#1}}}
\newrobustcmd{\widehatbmrz}[2][]{\ensuremath{\subp{\widehat{\bm{r}}}{}{#2}{}{#1}}}
\newrobustcmd{\checkbmrz}[2][]{\ensuremath{\subp{\check{\bm{r}}}{}{#2}{}{#1}}}
\newrobustcmd{\tildebmrz}[2][]{\ensuremath{\subp{\tilde{\bm{r}}}{}{#2}{}{#1}}}
\newrobustcmd{\widetildebmrz}[2][]{\ensuremath{\subp{\widetilde{\bm{r}}}{}{#2}{}{#1}}}
\newrobustcmd{\acutebmrz}[2][]{\ensuremath{\subp{\acute{\bm{r}}}{}{#2}{}{#1}}}
\newrobustcmd{\gravebmrz}[2][]{\ensuremath{\subp{\grave{\bm{r}}}{}{#2}{}{#1}}}
\newrobustcmd{\dotbmrz}[2][]{\ensuremath{\subp{\dot{\bm{r}}}{}{#2}{}{#1}}}
\newrobustcmd{\ddotbmrz}[2][]{\ensuremath{\subp{\ddot{\bm{r}}}{}{#2}{}{#1}}}
\newrobustcmd{\brevebmrz}[2][]{\ensuremath{\subp{\breve{\bm{r}}}{}{#2}{}{#1}}}
\newrobustcmd{\barbmrz}[2][]{\ensuremath{\subp{\bar{\bm{r}}}{}{#2}{}{#1}}}
\newrobustcmd{\vecbmrz}[2][]{\ensuremath{\subp{\vec{\bm{r}}}{}{#2}{}{#1}}}
\newrobustcmd{\sz}[2][]{\ensuremath{\subp{s}{}{#2}{}{#1}}}
\newrobustcmd{\hatsz}[2][]{\ensuremath{\subp{\hat{s}}{}{#2}{}{#1}}}
\newrobustcmd{\widehatsz}[2][]{\ensuremath{\subp{\widehat{s}}{}{#2}{}{#1}}}
\newrobustcmd{\checksz}[2][]{\ensuremath{\subp{\check{s}}{}{#2}{}{#1}}}
\newrobustcmd{\tildesz}[2][]{\ensuremath{\subp{\tilde{s}}{}{#2}{}{#1}}}
\newrobustcmd{\widetildesz}[2][]{\ensuremath{\subp{\widetilde{s}}{}{#2}{}{#1}}}
\newrobustcmd{\acutesz}[2][]{\ensuremath{\subp{\acute{s}}{}{#2}{}{#1}}}
\newrobustcmd{\gravesz}[2][]{\ensuremath{\subp{\grave{s}}{}{#2}{}{#1}}}
\newrobustcmd{\dotsz}[2][]{\ensuremath{\subp{\dot{s}}{}{#2}{}{#1}}}
\newrobustcmd{\ddotsz}[2][]{\ensuremath{\subp{\ddot{s}}{}{#2}{}{#1}}}
\newrobustcmd{\brevesz}[2][]{\ensuremath{\subp{\breve{s}}{}{#2}{}{#1}}}
\newrobustcmd{\barsz}[2][]{\ensuremath{\subp{\bar{s}}{}{#2}{}{#1}}}
\newrobustcmd{\vecsz}[2][]{\ensuremath{\subp{\vec{s}}{}{#2}{}{#1}}}
\newrobustcmd{\bmsz}[2][]{\ensuremath{\subp{\bm{s}}{}{#2}{}{#1}}}
\newrobustcmd{\hatbmsz}[2][]{\ensuremath{\subp{\hat{\bm{s}}}{}{#2}{}{#1}}}
\newrobustcmd{\widehatbmsz}[2][]{\ensuremath{\subp{\widehat{\bm{s}}}{}{#2}{}{#1}}}
\newrobustcmd{\checkbmsz}[2][]{\ensuremath{\subp{\check{\bm{s}}}{}{#2}{}{#1}}}
\newrobustcmd{\tildebmsz}[2][]{\ensuremath{\subp{\tilde{\bm{s}}}{}{#2}{}{#1}}}
\newrobustcmd{\widetildebmsz}[2][]{\ensuremath{\subp{\widetilde{\bm{s}}}{}{#2}{}{#1}}}
\newrobustcmd{\acutebmsz}[2][]{\ensuremath{\subp{\acute{\bm{s}}}{}{#2}{}{#1}}}
\newrobustcmd{\gravebmsz}[2][]{\ensuremath{\subp{\grave{\bm{s}}}{}{#2}{}{#1}}}
\newrobustcmd{\dotbmsz}[2][]{\ensuremath{\subp{\dot{\bm{s}}}{}{#2}{}{#1}}}
\newrobustcmd{\ddotbmsz}[2][]{\ensuremath{\subp{\ddot{\bm{s}}}{}{#2}{}{#1}}}
\newrobustcmd{\brevebmsz}[2][]{\ensuremath{\subp{\breve{\bm{s}}}{}{#2}{}{#1}}}
\newrobustcmd{\barbmsz}[2][]{\ensuremath{\subp{\bar{\bm{s}}}{}{#2}{}{#1}}}
\newrobustcmd{\vecbmsz}[2][]{\ensuremath{\subp{\vec{\bm{s}}}{}{#2}{}{#1}}}
\newrobustcmd{\tz}[2][]{\ensuremath{\subp{t}{}{#2}{}{#1}}}
\newrobustcmd{\hattz}[2][]{\ensuremath{\subp{\hat{t}}{}{#2}{}{#1}}}
\newrobustcmd{\widehattz}[2][]{\ensuremath{\subp{\widehat{t}}{}{#2}{}{#1}}}
\newrobustcmd{\checktz}[2][]{\ensuremath{\subp{\check{t}}{}{#2}{}{#1}}}
\newrobustcmd{\tildetz}[2][]{\ensuremath{\subp{\tilde{t}}{}{#2}{}{#1}}}
\newrobustcmd{\widetildetz}[2][]{\ensuremath{\subp{\widetilde{t}}{}{#2}{}{#1}}}
\newrobustcmd{\acutetz}[2][]{\ensuremath{\subp{\acute{t}}{}{#2}{}{#1}}}
\newrobustcmd{\gravetz}[2][]{\ensuremath{\subp{\grave{t}}{}{#2}{}{#1}}}
\newrobustcmd{\dottz}[2][]{\ensuremath{\subp{\dot{t}}{}{#2}{}{#1}}}
\newrobustcmd{\ddottz}[2][]{\ensuremath{\subp{\ddot{t}}{}{#2}{}{#1}}}
\newrobustcmd{\brevetz}[2][]{\ensuremath{\subp{\breve{t}}{}{#2}{}{#1}}}
\newrobustcmd{\bartz}[2][]{\ensuremath{\subp{\bar{t}}{}{#2}{}{#1}}}
\newrobustcmd{\vectz}[2][]{\ensuremath{\subp{\vec{t}}{}{#2}{}{#1}}}
\newrobustcmd{\bmtz}[2][]{\ensuremath{\subp{\bm{t}}{}{#2}{}{#1}}}
\newrobustcmd{\hatbmtz}[2][]{\ensuremath{\subp{\hat{\bm{t}}}{}{#2}{}{#1}}}
\newrobustcmd{\widehatbmtz}[2][]{\ensuremath{\subp{\widehat{\bm{t}}}{}{#2}{}{#1}}}
\newrobustcmd{\checkbmtz}[2][]{\ensuremath{\subp{\check{\bm{t}}}{}{#2}{}{#1}}}
\newrobustcmd{\tildebmtz}[2][]{\ensuremath{\subp{\tilde{\bm{t}}}{}{#2}{}{#1}}}
\newrobustcmd{\widetildebmtz}[2][]{\ensuremath{\subp{\widetilde{\bm{t}}}{}{#2}{}{#1}}}
\newrobustcmd{\acutebmtz}[2][]{\ensuremath{\subp{\acute{\bm{t}}}{}{#2}{}{#1}}}
\newrobustcmd{\gravebmtz}[2][]{\ensuremath{\subp{\grave{\bm{t}}}{}{#2}{}{#1}}}
\newrobustcmd{\dotbmtz}[2][]{\ensuremath{\subp{\dot{\bm{t}}}{}{#2}{}{#1}}}
\newrobustcmd{\ddotbmtz}[2][]{\ensuremath{\subp{\ddot{\bm{t}}}{}{#2}{}{#1}}}
\newrobustcmd{\brevebmtz}[2][]{\ensuremath{\subp{\breve{\bm{t}}}{}{#2}{}{#1}}}
\newrobustcmd{\barbmtz}[2][]{\ensuremath{\subp{\bar{\bm{t}}}{}{#2}{}{#1}}}
\newrobustcmd{\vecbmtz}[2][]{\ensuremath{\subp{\vec{\bm{t}}}{}{#2}{}{#1}}}
\newrobustcmd{\uz}[2][]{\ensuremath{\subp{u}{}{#2}{}{#1}}}
\newrobustcmd{\hatuz}[2][]{\ensuremath{\subp{\hat{u}}{}{#2}{}{#1}}}
\newrobustcmd{\widehatuz}[2][]{\ensuremath{\subp{\widehat{u}}{}{#2}{}{#1}}}
\newrobustcmd{\checkuz}[2][]{\ensuremath{\subp{\check{u}}{}{#2}{}{#1}}}
\newrobustcmd{\tildeuz}[2][]{\ensuremath{\subp{\tilde{u}}{}{#2}{}{#1}}}
\newrobustcmd{\widetildeuz}[2][]{\ensuremath{\subp{\widetilde{u}}{}{#2}{}{#1}}}
\newrobustcmd{\acuteuz}[2][]{\ensuremath{\subp{\acute{u}}{}{#2}{}{#1}}}
\newrobustcmd{\graveuz}[2][]{\ensuremath{\subp{\grave{u}}{}{#2}{}{#1}}}
\newrobustcmd{\dotuz}[2][]{\ensuremath{\subp{\dot{u}}{}{#2}{}{#1}}}
\newrobustcmd{\ddotuz}[2][]{\ensuremath{\subp{\ddot{u}}{}{#2}{}{#1}}}
\newrobustcmd{\breveuz}[2][]{\ensuremath{\subp{\breve{u}}{}{#2}{}{#1}}}
\newrobustcmd{\baruz}[2][]{\ensuremath{\subp{\bar{u}}{}{#2}{}{#1}}}
\newrobustcmd{\vecuz}[2][]{\ensuremath{\subp{\vec{u}}{}{#2}{}{#1}}}
\newrobustcmd{\bmuz}[2][]{\ensuremath{\subp{\bm{u}}{}{#2}{}{#1}}}
\newrobustcmd{\hatbmuz}[2][]{\ensuremath{\subp{\hat{\bm{u}}}{}{#2}{}{#1}}}
\newrobustcmd{\widehatbmuz}[2][]{\ensuremath{\subp{\widehat{\bm{u}}}{}{#2}{}{#1}}}
\newrobustcmd{\checkbmuz}[2][]{\ensuremath{\subp{\check{\bm{u}}}{}{#2}{}{#1}}}
\newrobustcmd{\tildebmuz}[2][]{\ensuremath{\subp{\tilde{\bm{u}}}{}{#2}{}{#1}}}
\newrobustcmd{\widetildebmuz}[2][]{\ensuremath{\subp{\widetilde{\bm{u}}}{}{#2}{}{#1}}}
\newrobustcmd{\acutebmuz}[2][]{\ensuremath{\subp{\acute{\bm{u}}}{}{#2}{}{#1}}}
\newrobustcmd{\gravebmuz}[2][]{\ensuremath{\subp{\grave{\bm{u}}}{}{#2}{}{#1}}}
\newrobustcmd{\dotbmuz}[2][]{\ensuremath{\subp{\dot{\bm{u}}}{}{#2}{}{#1}}}
\newrobustcmd{\ddotbmuz}[2][]{\ensuremath{\subp{\ddot{\bm{u}}}{}{#2}{}{#1}}}
\newrobustcmd{\brevebmuz}[2][]{\ensuremath{\subp{\breve{\bm{u}}}{}{#2}{}{#1}}}
\newrobustcmd{\barbmuz}[2][]{\ensuremath{\subp{\bar{\bm{u}}}{}{#2}{}{#1}}}
\newrobustcmd{\vecbmuz}[2][]{\ensuremath{\subp{\vec{\bm{u}}}{}{#2}{}{#1}}}
\newrobustcmd{\vz}[2][]{\ensuremath{\subp{v}{}{#2}{}{#1}}}
\newrobustcmd{\hatvz}[2][]{\ensuremath{\subp{\hat{v}}{}{#2}{}{#1}}}
\newrobustcmd{\widehatvz}[2][]{\ensuremath{\subp{\widehat{v}}{}{#2}{}{#1}}}
\newrobustcmd{\checkvz}[2][]{\ensuremath{\subp{\check{v}}{}{#2}{}{#1}}}
\newrobustcmd{\tildevz}[2][]{\ensuremath{\subp{\tilde{v}}{}{#2}{}{#1}}}
\newrobustcmd{\widetildevz}[2][]{\ensuremath{\subp{\widetilde{v}}{}{#2}{}{#1}}}
\newrobustcmd{\acutevz}[2][]{\ensuremath{\subp{\acute{v}}{}{#2}{}{#1}}}
\newrobustcmd{\gravevz}[2][]{\ensuremath{\subp{\grave{v}}{}{#2}{}{#1}}}
\newrobustcmd{\dotvz}[2][]{\ensuremath{\subp{\dot{v}}{}{#2}{}{#1}}}
\newrobustcmd{\ddotvz}[2][]{\ensuremath{\subp{\ddot{v}}{}{#2}{}{#1}}}
\newrobustcmd{\brevevz}[2][]{\ensuremath{\subp{\breve{v}}{}{#2}{}{#1}}}
\newrobustcmd{\barvz}[2][]{\ensuremath{\subp{\bar{v}}{}{#2}{}{#1}}}
\newrobustcmd{\vecvz}[2][]{\ensuremath{\subp{\vec{v}}{}{#2}{}{#1}}}
\newrobustcmd{\bmvz}[2][]{\ensuremath{\subp{\bm{v}}{}{#2}{}{#1}}}
\newrobustcmd{\hatbmvz}[2][]{\ensuremath{\subp{\hat{\bm{v}}}{}{#2}{}{#1}}}
\newrobustcmd{\widehatbmvz}[2][]{\ensuremath{\subp{\widehat{\bm{v}}}{}{#2}{}{#1}}}
\newrobustcmd{\checkbmvz}[2][]{\ensuremath{\subp{\check{\bm{v}}}{}{#2}{}{#1}}}
\newrobustcmd{\tildebmvz}[2][]{\ensuremath{\subp{\tilde{\bm{v}}}{}{#2}{}{#1}}}
\newrobustcmd{\widetildebmvz}[2][]{\ensuremath{\subp{\widetilde{\bm{v}}}{}{#2}{}{#1}}}
\newrobustcmd{\acutebmvz}[2][]{\ensuremath{\subp{\acute{\bm{v}}}{}{#2}{}{#1}}}
\newrobustcmd{\gravebmvz}[2][]{\ensuremath{\subp{\grave{\bm{v}}}{}{#2}{}{#1}}}
\newrobustcmd{\dotbmvz}[2][]{\ensuremath{\subp{\dot{\bm{v}}}{}{#2}{}{#1}}}
\newrobustcmd{\ddotbmvz}[2][]{\ensuremath{\subp{\ddot{\bm{v}}}{}{#2}{}{#1}}}
\newrobustcmd{\brevebmvz}[2][]{\ensuremath{\subp{\breve{\bm{v}}}{}{#2}{}{#1}}}
\newrobustcmd{\barbmvz}[2][]{\ensuremath{\subp{\bar{\bm{v}}}{}{#2}{}{#1}}}
\newrobustcmd{\vecbmvz}[2][]{\ensuremath{\subp{\vec{\bm{v}}}{}{#2}{}{#1}}}
\newrobustcmd{\wz}[2][]{\ensuremath{\subp{w}{}{#2}{}{#1}}}
\newrobustcmd{\hatwz}[2][]{\ensuremath{\subp{\hat{w}}{}{#2}{}{#1}}}
\newrobustcmd{\widehatwz}[2][]{\ensuremath{\subp{\widehat{w}}{}{#2}{}{#1}}}
\newrobustcmd{\checkwz}[2][]{\ensuremath{\subp{\check{w}}{}{#2}{}{#1}}}
\newrobustcmd{\tildewz}[2][]{\ensuremath{\subp{\tilde{w}}{}{#2}{}{#1}}}
\newrobustcmd{\widetildewz}[2][]{\ensuremath{\subp{\widetilde{w}}{}{#2}{}{#1}}}
\newrobustcmd{\acutewz}[2][]{\ensuremath{\subp{\acute{w}}{}{#2}{}{#1}}}
\newrobustcmd{\gravewz}[2][]{\ensuremath{\subp{\grave{w}}{}{#2}{}{#1}}}
\newrobustcmd{\dotwz}[2][]{\ensuremath{\subp{\dot{w}}{}{#2}{}{#1}}}
\newrobustcmd{\ddotwz}[2][]{\ensuremath{\subp{\ddot{w}}{}{#2}{}{#1}}}
\newrobustcmd{\brevewz}[2][]{\ensuremath{\subp{\breve{w}}{}{#2}{}{#1}}}
\newrobustcmd{\barwz}[2][]{\ensuremath{\subp{\bar{w}}{}{#2}{}{#1}}}
\newrobustcmd{\vecwz}[2][]{\ensuremath{\subp{\vec{w}}{}{#2}{}{#1}}}
\newrobustcmd{\bmwz}[2][]{\ensuremath{\subp{\bm{w}}{}{#2}{}{#1}}}
\newrobustcmd{\hatbmwz}[2][]{\ensuremath{\subp{\hat{\bm{w}}}{}{#2}{}{#1}}}
\newrobustcmd{\widehatbmwz}[2][]{\ensuremath{\subp{\widehat{\bm{w}}}{}{#2}{}{#1}}}
\newrobustcmd{\checkbmwz}[2][]{\ensuremath{\subp{\check{\bm{w}}}{}{#2}{}{#1}}}
\newrobustcmd{\tildebmwz}[2][]{\ensuremath{\subp{\tilde{\bm{w}}}{}{#2}{}{#1}}}
\newrobustcmd{\widetildebmwz}[2][]{\ensuremath{\subp{\widetilde{\bm{w}}}{}{#2}{}{#1}}}
\newrobustcmd{\acutebmwz}[2][]{\ensuremath{\subp{\acute{\bm{w}}}{}{#2}{}{#1}}}
\newrobustcmd{\gravebmwz}[2][]{\ensuremath{\subp{\grave{\bm{w}}}{}{#2}{}{#1}}}
\newrobustcmd{\dotbmwz}[2][]{\ensuremath{\subp{\dot{\bm{w}}}{}{#2}{}{#1}}}
\newrobustcmd{\ddotbmwz}[2][]{\ensuremath{\subp{\ddot{\bm{w}}}{}{#2}{}{#1}}}
\newrobustcmd{\brevebmwz}[2][]{\ensuremath{\subp{\breve{\bm{w}}}{}{#2}{}{#1}}}
\newrobustcmd{\barbmwz}[2][]{\ensuremath{\subp{\bar{\bm{w}}}{}{#2}{}{#1}}}
\newrobustcmd{\vecbmwz}[2][]{\ensuremath{\subp{\vec{\bm{w}}}{}{#2}{}{#1}}}
\newrobustcmd{\xz}[2][]{\ensuremath{\subp{x}{}{#2}{}{#1}}}
\newrobustcmd{\hatxz}[2][]{\ensuremath{\subp{\hat{x}}{}{#2}{}{#1}}}
\newrobustcmd{\widehatxz}[2][]{\ensuremath{\subp{\widehat{x}}{}{#2}{}{#1}}}
\newrobustcmd{\checkxz}[2][]{\ensuremath{\subp{\check{x}}{}{#2}{}{#1}}}
\newrobustcmd{\tildexz}[2][]{\ensuremath{\subp{\tilde{x}}{}{#2}{}{#1}}}
\newrobustcmd{\widetildexz}[2][]{\ensuremath{\subp{\widetilde{x}}{}{#2}{}{#1}}}
\newrobustcmd{\acutexz}[2][]{\ensuremath{\subp{\acute{x}}{}{#2}{}{#1}}}
\newrobustcmd{\gravexz}[2][]{\ensuremath{\subp{\grave{x}}{}{#2}{}{#1}}}
\newrobustcmd{\dotxz}[2][]{\ensuremath{\subp{\dot{x}}{}{#2}{}{#1}}}
\newrobustcmd{\ddotxz}[2][]{\ensuremath{\subp{\ddot{x}}{}{#2}{}{#1}}}
\newrobustcmd{\brevexz}[2][]{\ensuremath{\subp{\breve{x}}{}{#2}{}{#1}}}
\newrobustcmd{\barxz}[2][]{\ensuremath{\subp{\bar{x}}{}{#2}{}{#1}}}
\newrobustcmd{\vecxz}[2][]{\ensuremath{\subp{\vec{x}}{}{#2}{}{#1}}}
\newrobustcmd{\bmxz}[2][]{\ensuremath{\subp{\bm{x}}{}{#2}{}{#1}}}
\newrobustcmd{\hatbmxz}[2][]{\ensuremath{\subp{\hat{\bm{x}}}{}{#2}{}{#1}}}
\newrobustcmd{\widehatbmxz}[2][]{\ensuremath{\subp{\widehat{\bm{x}}}{}{#2}{}{#1}}}
\newrobustcmd{\checkbmxz}[2][]{\ensuremath{\subp{\check{\bm{x}}}{}{#2}{}{#1}}}
\newrobustcmd{\tildebmxz}[2][]{\ensuremath{\subp{\tilde{\bm{x}}}{}{#2}{}{#1}}}
\newrobustcmd{\widetildebmxz}[2][]{\ensuremath{\subp{\widetilde{\bm{x}}}{}{#2}{}{#1}}}
\newrobustcmd{\acutebmxz}[2][]{\ensuremath{\subp{\acute{\bm{x}}}{}{#2}{}{#1}}}
\newrobustcmd{\gravebmxz}[2][]{\ensuremath{\subp{\grave{\bm{x}}}{}{#2}{}{#1}}}
\newrobustcmd{\dotbmxz}[2][]{\ensuremath{\subp{\dot{\bm{x}}}{}{#2}{}{#1}}}
\newrobustcmd{\ddotbmxz}[2][]{\ensuremath{\subp{\ddot{\bm{x}}}{}{#2}{}{#1}}}
\newrobustcmd{\brevebmxz}[2][]{\ensuremath{\subp{\breve{\bm{x}}}{}{#2}{}{#1}}}
\newrobustcmd{\barbmxz}[2][]{\ensuremath{\subp{\bar{\bm{x}}}{}{#2}{}{#1}}}
\newrobustcmd{\vecbmxz}[2][]{\ensuremath{\subp{\vec{\bm{x}}}{}{#2}{}{#1}}}
\newrobustcmd{\yz}[2][]{\ensuremath{\subp{y}{}{#2}{}{#1}}}
\newrobustcmd{\hatyz}[2][]{\ensuremath{\subp{\hat{y}}{}{#2}{}{#1}}}
\newrobustcmd{\widehatyz}[2][]{\ensuremath{\subp{\widehat{y}}{}{#2}{}{#1}}}
\newrobustcmd{\checkyz}[2][]{\ensuremath{\subp{\check{y}}{}{#2}{}{#1}}}
\newrobustcmd{\tildeyz}[2][]{\ensuremath{\subp{\tilde{y}}{}{#2}{}{#1}}}
\newrobustcmd{\widetildeyz}[2][]{\ensuremath{\subp{\widetilde{y}}{}{#2}{}{#1}}}
\newrobustcmd{\acuteyz}[2][]{\ensuremath{\subp{\acute{y}}{}{#2}{}{#1}}}
\newrobustcmd{\graveyz}[2][]{\ensuremath{\subp{\grave{y}}{}{#2}{}{#1}}}
\newrobustcmd{\dotyz}[2][]{\ensuremath{\subp{\dot{y}}{}{#2}{}{#1}}}
\newrobustcmd{\ddotyz}[2][]{\ensuremath{\subp{\ddot{y}}{}{#2}{}{#1}}}
\newrobustcmd{\breveyz}[2][]{\ensuremath{\subp{\breve{y}}{}{#2}{}{#1}}}
\newrobustcmd{\baryz}[2][]{\ensuremath{\subp{\bar{y}}{}{#2}{}{#1}}}
\newrobustcmd{\vecyz}[2][]{\ensuremath{\subp{\vec{y}}{}{#2}{}{#1}}}
\newrobustcmd{\bmyz}[2][]{\ensuremath{\subp{\bm{y}}{}{#2}{}{#1}}}
\newrobustcmd{\hatbmyz}[2][]{\ensuremath{\subp{\hat{\bm{y}}}{}{#2}{}{#1}}}
\newrobustcmd{\widehatbmyz}[2][]{\ensuremath{\subp{\widehat{\bm{y}}}{}{#2}{}{#1}}}
\newrobustcmd{\checkbmyz}[2][]{\ensuremath{\subp{\check{\bm{y}}}{}{#2}{}{#1}}}
\newrobustcmd{\tildebmyz}[2][]{\ensuremath{\subp{\tilde{\bm{y}}}{}{#2}{}{#1}}}
\newrobustcmd{\widetildebmyz}[2][]{\ensuremath{\subp{\widetilde{\bm{y}}}{}{#2}{}{#1}}}
\newrobustcmd{\acutebmyz}[2][]{\ensuremath{\subp{\acute{\bm{y}}}{}{#2}{}{#1}}}
\newrobustcmd{\gravebmyz}[2][]{\ensuremath{\subp{\grave{\bm{y}}}{}{#2}{}{#1}}}
\newrobustcmd{\dotbmyz}[2][]{\ensuremath{\subp{\dot{\bm{y}}}{}{#2}{}{#1}}}
\newrobustcmd{\ddotbmyz}[2][]{\ensuremath{\subp{\ddot{\bm{y}}}{}{#2}{}{#1}}}
\newrobustcmd{\brevebmyz}[2][]{\ensuremath{\subp{\breve{\bm{y}}}{}{#2}{}{#1}}}
\newrobustcmd{\barbmyz}[2][]{\ensuremath{\subp{\bar{\bm{y}}}{}{#2}{}{#1}}}
\newrobustcmd{\vecbmyz}[2][]{\ensuremath{\subp{\vec{\bm{y}}}{}{#2}{}{#1}}}
\newrobustcmd{\zz}[2][]{\ensuremath{\subp{z}{}{#2}{}{#1}}}
\newrobustcmd{\hatzz}[2][]{\ensuremath{\subp{\hat{z}}{}{#2}{}{#1}}}
\newrobustcmd{\widehatzz}[2][]{\ensuremath{\subp{\widehat{z}}{}{#2}{}{#1}}}
\newrobustcmd{\checkzz}[2][]{\ensuremath{\subp{\check{z}}{}{#2}{}{#1}}}
\newrobustcmd{\tildezz}[2][]{\ensuremath{\subp{\tilde{z}}{}{#2}{}{#1}}}
\newrobustcmd{\widetildezz}[2][]{\ensuremath{\subp{\widetilde{z}}{}{#2}{}{#1}}}
\newrobustcmd{\acutezz}[2][]{\ensuremath{\subp{\acute{z}}{}{#2}{}{#1}}}
\newrobustcmd{\gravezz}[2][]{\ensuremath{\subp{\grave{z}}{}{#2}{}{#1}}}
\newrobustcmd{\dotzz}[2][]{\ensuremath{\subp{\dot{z}}{}{#2}{}{#1}}}
\newrobustcmd{\ddotzz}[2][]{\ensuremath{\subp{\ddot{z}}{}{#2}{}{#1}}}
\newrobustcmd{\brevezz}[2][]{\ensuremath{\subp{\breve{z}}{}{#2}{}{#1}}}
\newrobustcmd{\barzz}[2][]{\ensuremath{\subp{\bar{z}}{}{#2}{}{#1}}}
\newrobustcmd{\veczz}[2][]{\ensuremath{\subp{\vec{z}}{}{#2}{}{#1}}}
\newrobustcmd{\bmzz}[2][]{\ensuremath{\subp{\bm{z}}{}{#2}{}{#1}}}
\newrobustcmd{\hatbmzz}[2][]{\ensuremath{\subp{\hat{\bm{z}}}{}{#2}{}{#1}}}
\newrobustcmd{\widehatbmzz}[2][]{\ensuremath{\subp{\widehat{\bm{z}}}{}{#2}{}{#1}}}
\newrobustcmd{\checkbmzz}[2][]{\ensuremath{\subp{\check{\bm{z}}}{}{#2}{}{#1}}}
\newrobustcmd{\tildebmzz}[2][]{\ensuremath{\subp{\tilde{\bm{z}}}{}{#2}{}{#1}}}
\newrobustcmd{\widetildebmzz}[2][]{\ensuremath{\subp{\widetilde{\bm{z}}}{}{#2}{}{#1}}}
\newrobustcmd{\acutebmzz}[2][]{\ensuremath{\subp{\acute{\bm{z}}}{}{#2}{}{#1}}}
\newrobustcmd{\gravebmzz}[2][]{\ensuremath{\subp{\grave{\bm{z}}}{}{#2}{}{#1}}}
\newrobustcmd{\dotbmzz}[2][]{\ensuremath{\subp{\dot{\bm{z}}}{}{#2}{}{#1}}}
\newrobustcmd{\ddotbmzz}[2][]{\ensuremath{\subp{\ddot{\bm{z}}}{}{#2}{}{#1}}}
\newrobustcmd{\brevebmzz}[2][]{\ensuremath{\subp{\breve{\bm{z}}}{}{#2}{}{#1}}}
\newrobustcmd{\barbmzz}[2][]{\ensuremath{\subp{\bar{\bm{z}}}{}{#2}{}{#1}}}
\newrobustcmd{\vecbmzz}[2][]{\ensuremath{\subp{\vec{\bm{z}}}{}{#2}{}{#1}}}
\newrobustcmd{\Az}[2][]{\ensuremath{\subp{A}{}{#2}{}{#1}}}
\newrobustcmd{\hatAz}[2][]{\ensuremath{\subp{\hat{A}}{}{#2}{}{#1}}}
\newrobustcmd{\widehatAz}[2][]{\ensuremath{\subp{\widehat{A}}{}{#2}{}{#1}}}
\newrobustcmd{\checkAz}[2][]{\ensuremath{\subp{\check{A}}{}{#2}{}{#1}}}
\newrobustcmd{\tildeAz}[2][]{\ensuremath{\subp{\tilde{A}}{}{#2}{}{#1}}}
\newrobustcmd{\widetildeAz}[2][]{\ensuremath{\subp{\widetilde{A}}{}{#2}{}{#1}}}
\newrobustcmd{\acuteAz}[2][]{\ensuremath{\subp{\acute{A}}{}{#2}{}{#1}}}
\newrobustcmd{\graveAz}[2][]{\ensuremath{\subp{\grave{A}}{}{#2}{}{#1}}}
\newrobustcmd{\dotAz}[2][]{\ensuremath{\subp{\dot{A}}{}{#2}{}{#1}}}
\newrobustcmd{\ddotAz}[2][]{\ensuremath{\subp{\ddot{A}}{}{#2}{}{#1}}}
\newrobustcmd{\breveAz}[2][]{\ensuremath{\subp{\breve{A}}{}{#2}{}{#1}}}
\newrobustcmd{\barAz}[2][]{\ensuremath{\subp{\bar{A}}{}{#2}{}{#1}}}
\newrobustcmd{\vecAz}[2][]{\ensuremath{\subp{\vec{A}}{}{#2}{}{#1}}}
\newrobustcmd{\bmAz}[2][]{\ensuremath{\subp{\bm{A}}{}{#2}{}{#1}}}
\newrobustcmd{\hatbmAz}[2][]{\ensuremath{\subp{\hat{\bm{A}}}{}{#2}{}{#1}}}
\newrobustcmd{\widehatbmAz}[2][]{\ensuremath{\subp{\widehat{\bm{A}}}{}{#2}{}{#1}}}
\newrobustcmd{\checkbmAz}[2][]{\ensuremath{\subp{\check{\bm{A}}}{}{#2}{}{#1}}}
\newrobustcmd{\tildebmAz}[2][]{\ensuremath{\subp{\tilde{\bm{A}}}{}{#2}{}{#1}}}
\newrobustcmd{\widetildebmAz}[2][]{\ensuremath{\subp{\widetilde{\bm{A}}}{}{#2}{}{#1}}}
\newrobustcmd{\acutebmAz}[2][]{\ensuremath{\subp{\acute{\bm{A}}}{}{#2}{}{#1}}}
\newrobustcmd{\gravebmAz}[2][]{\ensuremath{\subp{\grave{\bm{A}}}{}{#2}{}{#1}}}
\newrobustcmd{\dotbmAz}[2][]{\ensuremath{\subp{\dot{\bm{A}}}{}{#2}{}{#1}}}
\newrobustcmd{\ddotbmAz}[2][]{\ensuremath{\subp{\ddot{\bm{A}}}{}{#2}{}{#1}}}
\newrobustcmd{\brevebmAz}[2][]{\ensuremath{\subp{\breve{\bm{A}}}{}{#2}{}{#1}}}
\newrobustcmd{\barbmAz}[2][]{\ensuremath{\subp{\bar{\bm{A}}}{}{#2}{}{#1}}}
\newrobustcmd{\vecbmAz}[2][]{\ensuremath{\subp{\vec{\bm{A}}}{}{#2}{}{#1}}}
\newrobustcmd{\Bz}[2][]{\ensuremath{\subp{B}{}{#2}{}{#1}}}
\newrobustcmd{\hatBz}[2][]{\ensuremath{\subp{\hat{B}}{}{#2}{}{#1}}}
\newrobustcmd{\widehatBz}[2][]{\ensuremath{\subp{\widehat{B}}{}{#2}{}{#1}}}
\newrobustcmd{\checkBz}[2][]{\ensuremath{\subp{\check{B}}{}{#2}{}{#1}}}
\newrobustcmd{\tildeBz}[2][]{\ensuremath{\subp{\tilde{B}}{}{#2}{}{#1}}}
\newrobustcmd{\widetildeBz}[2][]{\ensuremath{\subp{\widetilde{B}}{}{#2}{}{#1}}}
\newrobustcmd{\acuteBz}[2][]{\ensuremath{\subp{\acute{B}}{}{#2}{}{#1}}}
\newrobustcmd{\graveBz}[2][]{\ensuremath{\subp{\grave{B}}{}{#2}{}{#1}}}
\newrobustcmd{\dotBz}[2][]{\ensuremath{\subp{\dot{B}}{}{#2}{}{#1}}}
\newrobustcmd{\ddotBz}[2][]{\ensuremath{\subp{\ddot{B}}{}{#2}{}{#1}}}
\newrobustcmd{\breveBz}[2][]{\ensuremath{\subp{\breve{B}}{}{#2}{}{#1}}}
\newrobustcmd{\barBz}[2][]{\ensuremath{\subp{\bar{B}}{}{#2}{}{#1}}}
\newrobustcmd{\vecBz}[2][]{\ensuremath{\subp{\vec{B}}{}{#2}{}{#1}}}
\newrobustcmd{\bmBz}[2][]{\ensuremath{\subp{\bm{B}}{}{#2}{}{#1}}}
\newrobustcmd{\hatbmBz}[2][]{\ensuremath{\subp{\hat{\bm{B}}}{}{#2}{}{#1}}}
\newrobustcmd{\widehatbmBz}[2][]{\ensuremath{\subp{\widehat{\bm{B}}}{}{#2}{}{#1}}}
\newrobustcmd{\checkbmBz}[2][]{\ensuremath{\subp{\check{\bm{B}}}{}{#2}{}{#1}}}
\newrobustcmd{\tildebmBz}[2][]{\ensuremath{\subp{\tilde{\bm{B}}}{}{#2}{}{#1}}}
\newrobustcmd{\widetildebmBz}[2][]{\ensuremath{\subp{\widetilde{\bm{B}}}{}{#2}{}{#1}}}
\newrobustcmd{\acutebmBz}[2][]{\ensuremath{\subp{\acute{\bm{B}}}{}{#2}{}{#1}}}
\newrobustcmd{\gravebmBz}[2][]{\ensuremath{\subp{\grave{\bm{B}}}{}{#2}{}{#1}}}
\newrobustcmd{\dotbmBz}[2][]{\ensuremath{\subp{\dot{\bm{B}}}{}{#2}{}{#1}}}
\newrobustcmd{\ddotbmBz}[2][]{\ensuremath{\subp{\ddot{\bm{B}}}{}{#2}{}{#1}}}
\newrobustcmd{\brevebmBz}[2][]{\ensuremath{\subp{\breve{\bm{B}}}{}{#2}{}{#1}}}
\newrobustcmd{\barbmBz}[2][]{\ensuremath{\subp{\bar{\bm{B}}}{}{#2}{}{#1}}}
\newrobustcmd{\vecbmBz}[2][]{\ensuremath{\subp{\vec{\bm{B}}}{}{#2}{}{#1}}}
\newrobustcmd{\Cz}[2][]{\ensuremath{\subp{C}{}{#2}{}{#1}}}
\newrobustcmd{\hatCz}[2][]{\ensuremath{\subp{\hat{C}}{}{#2}{}{#1}}}
\newrobustcmd{\widehatCz}[2][]{\ensuremath{\subp{\widehat{C}}{}{#2}{}{#1}}}
\newrobustcmd{\checkCz}[2][]{\ensuremath{\subp{\check{C}}{}{#2}{}{#1}}}
\newrobustcmd{\tildeCz}[2][]{\ensuremath{\subp{\tilde{C}}{}{#2}{}{#1}}}
\newrobustcmd{\widetildeCz}[2][]{\ensuremath{\subp{\widetilde{C}}{}{#2}{}{#1}}}
\newrobustcmd{\acuteCz}[2][]{\ensuremath{\subp{\acute{C}}{}{#2}{}{#1}}}
\newrobustcmd{\graveCz}[2][]{\ensuremath{\subp{\grave{C}}{}{#2}{}{#1}}}
\newrobustcmd{\dotCz}[2][]{\ensuremath{\subp{\dot{C}}{}{#2}{}{#1}}}
\newrobustcmd{\ddotCz}[2][]{\ensuremath{\subp{\ddot{C}}{}{#2}{}{#1}}}
\newrobustcmd{\breveCz}[2][]{\ensuremath{\subp{\breve{C}}{}{#2}{}{#1}}}
\newrobustcmd{\barCz}[2][]{\ensuremath{\subp{\bar{C}}{}{#2}{}{#1}}}
\newrobustcmd{\vecCz}[2][]{\ensuremath{\subp{\vec{C}}{}{#2}{}{#1}}}
\newrobustcmd{\bmCz}[2][]{\ensuremath{\subp{\bm{C}}{}{#2}{}{#1}}}
\newrobustcmd{\hatbmCz}[2][]{\ensuremath{\subp{\hat{\bm{C}}}{}{#2}{}{#1}}}
\newrobustcmd{\widehatbmCz}[2][]{\ensuremath{\subp{\widehat{\bm{C}}}{}{#2}{}{#1}}}
\newrobustcmd{\checkbmCz}[2][]{\ensuremath{\subp{\check{\bm{C}}}{}{#2}{}{#1}}}
\newrobustcmd{\tildebmCz}[2][]{\ensuremath{\subp{\tilde{\bm{C}}}{}{#2}{}{#1}}}
\newrobustcmd{\widetildebmCz}[2][]{\ensuremath{\subp{\widetilde{\bm{C}}}{}{#2}{}{#1}}}
\newrobustcmd{\acutebmCz}[2][]{\ensuremath{\subp{\acute{\bm{C}}}{}{#2}{}{#1}}}
\newrobustcmd{\gravebmCz}[2][]{\ensuremath{\subp{\grave{\bm{C}}}{}{#2}{}{#1}}}
\newrobustcmd{\dotbmCz}[2][]{\ensuremath{\subp{\dot{\bm{C}}}{}{#2}{}{#1}}}
\newrobustcmd{\ddotbmCz}[2][]{\ensuremath{\subp{\ddot{\bm{C}}}{}{#2}{}{#1}}}
\newrobustcmd{\brevebmCz}[2][]{\ensuremath{\subp{\breve{\bm{C}}}{}{#2}{}{#1}}}
\newrobustcmd{\barbmCz}[2][]{\ensuremath{\subp{\bar{\bm{C}}}{}{#2}{}{#1}}}
\newrobustcmd{\vecbmCz}[2][]{\ensuremath{\subp{\vec{\bm{C}}}{}{#2}{}{#1}}}
\newrobustcmd{\Dz}[2][]{\ensuremath{\subp{D}{}{#2}{}{#1}}}
\newrobustcmd{\hatDz}[2][]{\ensuremath{\subp{\hat{D}}{}{#2}{}{#1}}}
\newrobustcmd{\widehatDz}[2][]{\ensuremath{\subp{\widehat{D}}{}{#2}{}{#1}}}
\newrobustcmd{\checkDz}[2][]{\ensuremath{\subp{\check{D}}{}{#2}{}{#1}}}
\newrobustcmd{\tildeDz}[2][]{\ensuremath{\subp{\tilde{D}}{}{#2}{}{#1}}}
\newrobustcmd{\widetildeDz}[2][]{\ensuremath{\subp{\widetilde{D}}{}{#2}{}{#1}}}
\newrobustcmd{\acuteDz}[2][]{\ensuremath{\subp{\acute{D}}{}{#2}{}{#1}}}
\newrobustcmd{\graveDz}[2][]{\ensuremath{\subp{\grave{D}}{}{#2}{}{#1}}}
\newrobustcmd{\dotDz}[2][]{\ensuremath{\subp{\dot{D}}{}{#2}{}{#1}}}
\newrobustcmd{\ddotDz}[2][]{\ensuremath{\subp{\ddot{D}}{}{#2}{}{#1}}}
\newrobustcmd{\breveDz}[2][]{\ensuremath{\subp{\breve{D}}{}{#2}{}{#1}}}
\newrobustcmd{\barDz}[2][]{\ensuremath{\subp{\bar{D}}{}{#2}{}{#1}}}
\newrobustcmd{\vecDz}[2][]{\ensuremath{\subp{\vec{D}}{}{#2}{}{#1}}}
\newrobustcmd{\bmDz}[2][]{\ensuremath{\subp{\bm{D}}{}{#2}{}{#1}}}
\newrobustcmd{\hatbmDz}[2][]{\ensuremath{\subp{\hat{\bm{D}}}{}{#2}{}{#1}}}
\newrobustcmd{\widehatbmDz}[2][]{\ensuremath{\subp{\widehat{\bm{D}}}{}{#2}{}{#1}}}
\newrobustcmd{\checkbmDz}[2][]{\ensuremath{\subp{\check{\bm{D}}}{}{#2}{}{#1}}}
\newrobustcmd{\tildebmDz}[2][]{\ensuremath{\subp{\tilde{\bm{D}}}{}{#2}{}{#1}}}
\newrobustcmd{\widetildebmDz}[2][]{\ensuremath{\subp{\widetilde{\bm{D}}}{}{#2}{}{#1}}}
\newrobustcmd{\acutebmDz}[2][]{\ensuremath{\subp{\acute{\bm{D}}}{}{#2}{}{#1}}}
\newrobustcmd{\gravebmDz}[2][]{\ensuremath{\subp{\grave{\bm{D}}}{}{#2}{}{#1}}}
\newrobustcmd{\dotbmDz}[2][]{\ensuremath{\subp{\dot{\bm{D}}}{}{#2}{}{#1}}}
\newrobustcmd{\ddotbmDz}[2][]{\ensuremath{\subp{\ddot{\bm{D}}}{}{#2}{}{#1}}}
\newrobustcmd{\brevebmDz}[2][]{\ensuremath{\subp{\breve{\bm{D}}}{}{#2}{}{#1}}}
\newrobustcmd{\barbmDz}[2][]{\ensuremath{\subp{\bar{\bm{D}}}{}{#2}{}{#1}}}
\newrobustcmd{\vecbmDz}[2][]{\ensuremath{\subp{\vec{\bm{D}}}{}{#2}{}{#1}}}
\newrobustcmd{\Ez}[2][]{\ensuremath{\subp{E}{}{#2}{}{#1}}}
\newrobustcmd{\hatEz}[2][]{\ensuremath{\subp{\hat{E}}{}{#2}{}{#1}}}
\newrobustcmd{\widehatEz}[2][]{\ensuremath{\subp{\widehat{E}}{}{#2}{}{#1}}}
\newrobustcmd{\checkEz}[2][]{\ensuremath{\subp{\check{E}}{}{#2}{}{#1}}}
\newrobustcmd{\tildeEz}[2][]{\ensuremath{\subp{\tilde{E}}{}{#2}{}{#1}}}
\newrobustcmd{\widetildeEz}[2][]{\ensuremath{\subp{\widetilde{E}}{}{#2}{}{#1}}}
\newrobustcmd{\acuteEz}[2][]{\ensuremath{\subp{\acute{E}}{}{#2}{}{#1}}}
\newrobustcmd{\graveEz}[2][]{\ensuremath{\subp{\grave{E}}{}{#2}{}{#1}}}
\newrobustcmd{\dotEz}[2][]{\ensuremath{\subp{\dot{E}}{}{#2}{}{#1}}}
\newrobustcmd{\ddotEz}[2][]{\ensuremath{\subp{\ddot{E}}{}{#2}{}{#1}}}
\newrobustcmd{\breveEz}[2][]{\ensuremath{\subp{\breve{E}}{}{#2}{}{#1}}}
\newrobustcmd{\barEz}[2][]{\ensuremath{\subp{\bar{E}}{}{#2}{}{#1}}}
\newrobustcmd{\vecEz}[2][]{\ensuremath{\subp{\vec{E}}{}{#2}{}{#1}}}
\newrobustcmd{\bmEz}[2][]{\ensuremath{\subp{\bm{E}}{}{#2}{}{#1}}}
\newrobustcmd{\hatbmEz}[2][]{\ensuremath{\subp{\hat{\bm{E}}}{}{#2}{}{#1}}}
\newrobustcmd{\widehatbmEz}[2][]{\ensuremath{\subp{\widehat{\bm{E}}}{}{#2}{}{#1}}}
\newrobustcmd{\checkbmEz}[2][]{\ensuremath{\subp{\check{\bm{E}}}{}{#2}{}{#1}}}
\newrobustcmd{\tildebmEz}[2][]{\ensuremath{\subp{\tilde{\bm{E}}}{}{#2}{}{#1}}}
\newrobustcmd{\widetildebmEz}[2][]{\ensuremath{\subp{\widetilde{\bm{E}}}{}{#2}{}{#1}}}
\newrobustcmd{\acutebmEz}[2][]{\ensuremath{\subp{\acute{\bm{E}}}{}{#2}{}{#1}}}
\newrobustcmd{\gravebmEz}[2][]{\ensuremath{\subp{\grave{\bm{E}}}{}{#2}{}{#1}}}
\newrobustcmd{\dotbmEz}[2][]{\ensuremath{\subp{\dot{\bm{E}}}{}{#2}{}{#1}}}
\newrobustcmd{\ddotbmEz}[2][]{\ensuremath{\subp{\ddot{\bm{E}}}{}{#2}{}{#1}}}
\newrobustcmd{\brevebmEz}[2][]{\ensuremath{\subp{\breve{\bm{E}}}{}{#2}{}{#1}}}
\newrobustcmd{\barbmEz}[2][]{\ensuremath{\subp{\bar{\bm{E}}}{}{#2}{}{#1}}}
\newrobustcmd{\vecbmEz}[2][]{\ensuremath{\subp{\vec{\bm{E}}}{}{#2}{}{#1}}}
\newrobustcmd{\Fz}[2][]{\ensuremath{\subp{F}{}{#2}{}{#1}}}
\newrobustcmd{\hatFz}[2][]{\ensuremath{\subp{\hat{F}}{}{#2}{}{#1}}}
\newrobustcmd{\widehatFz}[2][]{\ensuremath{\subp{\widehat{F}}{}{#2}{}{#1}}}
\newrobustcmd{\checkFz}[2][]{\ensuremath{\subp{\check{F}}{}{#2}{}{#1}}}
\newrobustcmd{\tildeFz}[2][]{\ensuremath{\subp{\tilde{F}}{}{#2}{}{#1}}}
\newrobustcmd{\widetildeFz}[2][]{\ensuremath{\subp{\widetilde{F}}{}{#2}{}{#1}}}
\newrobustcmd{\acuteFz}[2][]{\ensuremath{\subp{\acute{F}}{}{#2}{}{#1}}}
\newrobustcmd{\graveFz}[2][]{\ensuremath{\subp{\grave{F}}{}{#2}{}{#1}}}
\newrobustcmd{\dotFz}[2][]{\ensuremath{\subp{\dot{F}}{}{#2}{}{#1}}}
\newrobustcmd{\ddotFz}[2][]{\ensuremath{\subp{\ddot{F}}{}{#2}{}{#1}}}
\newrobustcmd{\breveFz}[2][]{\ensuremath{\subp{\breve{F}}{}{#2}{}{#1}}}
\newrobustcmd{\barFz}[2][]{\ensuremath{\subp{\bar{F}}{}{#2}{}{#1}}}
\newrobustcmd{\vecFz}[2][]{\ensuremath{\subp{\vec{F}}{}{#2}{}{#1}}}
\newrobustcmd{\bmFz}[2][]{\ensuremath{\subp{\bm{F}}{}{#2}{}{#1}}}
\newrobustcmd{\hatbmFz}[2][]{\ensuremath{\subp{\hat{\bm{F}}}{}{#2}{}{#1}}}
\newrobustcmd{\widehatbmFz}[2][]{\ensuremath{\subp{\widehat{\bm{F}}}{}{#2}{}{#1}}}
\newrobustcmd{\checkbmFz}[2][]{\ensuremath{\subp{\check{\bm{F}}}{}{#2}{}{#1}}}
\newrobustcmd{\tildebmFz}[2][]{\ensuremath{\subp{\tilde{\bm{F}}}{}{#2}{}{#1}}}
\newrobustcmd{\widetildebmFz}[2][]{\ensuremath{\subp{\widetilde{\bm{F}}}{}{#2}{}{#1}}}
\newrobustcmd{\acutebmFz}[2][]{\ensuremath{\subp{\acute{\bm{F}}}{}{#2}{}{#1}}}
\newrobustcmd{\gravebmFz}[2][]{\ensuremath{\subp{\grave{\bm{F}}}{}{#2}{}{#1}}}
\newrobustcmd{\dotbmFz}[2][]{\ensuremath{\subp{\dot{\bm{F}}}{}{#2}{}{#1}}}
\newrobustcmd{\ddotbmFz}[2][]{\ensuremath{\subp{\ddot{\bm{F}}}{}{#2}{}{#1}}}
\newrobustcmd{\brevebmFz}[2][]{\ensuremath{\subp{\breve{\bm{F}}}{}{#2}{}{#1}}}
\newrobustcmd{\barbmFz}[2][]{\ensuremath{\subp{\bar{\bm{F}}}{}{#2}{}{#1}}}
\newrobustcmd{\vecbmFz}[2][]{\ensuremath{\subp{\vec{\bm{F}}}{}{#2}{}{#1}}}
\newrobustcmd{\Gz}[2][]{\ensuremath{\subp{G}{}{#2}{}{#1}}}
\newrobustcmd{\hatGz}[2][]{\ensuremath{\subp{\hat{G}}{}{#2}{}{#1}}}
\newrobustcmd{\widehatGz}[2][]{\ensuremath{\subp{\widehat{G}}{}{#2}{}{#1}}}
\newrobustcmd{\checkGz}[2][]{\ensuremath{\subp{\check{G}}{}{#2}{}{#1}}}
\newrobustcmd{\tildeGz}[2][]{\ensuremath{\subp{\tilde{G}}{}{#2}{}{#1}}}
\newrobustcmd{\widetildeGz}[2][]{\ensuremath{\subp{\widetilde{G}}{}{#2}{}{#1}}}
\newrobustcmd{\acuteGz}[2][]{\ensuremath{\subp{\acute{G}}{}{#2}{}{#1}}}
\newrobustcmd{\graveGz}[2][]{\ensuremath{\subp{\grave{G}}{}{#2}{}{#1}}}
\newrobustcmd{\dotGz}[2][]{\ensuremath{\subp{\dot{G}}{}{#2}{}{#1}}}
\newrobustcmd{\ddotGz}[2][]{\ensuremath{\subp{\ddot{G}}{}{#2}{}{#1}}}
\newrobustcmd{\breveGz}[2][]{\ensuremath{\subp{\breve{G}}{}{#2}{}{#1}}}
\newrobustcmd{\barGz}[2][]{\ensuremath{\subp{\bar{G}}{}{#2}{}{#1}}}
\newrobustcmd{\vecGz}[2][]{\ensuremath{\subp{\vec{G}}{}{#2}{}{#1}}}
\newrobustcmd{\bmGz}[2][]{\ensuremath{\subp{\bm{G}}{}{#2}{}{#1}}}
\newrobustcmd{\hatbmGz}[2][]{\ensuremath{\subp{\hat{\bm{G}}}{}{#2}{}{#1}}}
\newrobustcmd{\widehatbmGz}[2][]{\ensuremath{\subp{\widehat{\bm{G}}}{}{#2}{}{#1}}}
\newrobustcmd{\checkbmGz}[2][]{\ensuremath{\subp{\check{\bm{G}}}{}{#2}{}{#1}}}
\newrobustcmd{\tildebmGz}[2][]{\ensuremath{\subp{\tilde{\bm{G}}}{}{#2}{}{#1}}}
\newrobustcmd{\widetildebmGz}[2][]{\ensuremath{\subp{\widetilde{\bm{G}}}{}{#2}{}{#1}}}
\newrobustcmd{\acutebmGz}[2][]{\ensuremath{\subp{\acute{\bm{G}}}{}{#2}{}{#1}}}
\newrobustcmd{\gravebmGz}[2][]{\ensuremath{\subp{\grave{\bm{G}}}{}{#2}{}{#1}}}
\newrobustcmd{\dotbmGz}[2][]{\ensuremath{\subp{\dot{\bm{G}}}{}{#2}{}{#1}}}
\newrobustcmd{\ddotbmGz}[2][]{\ensuremath{\subp{\ddot{\bm{G}}}{}{#2}{}{#1}}}
\newrobustcmd{\brevebmGz}[2][]{\ensuremath{\subp{\breve{\bm{G}}}{}{#2}{}{#1}}}
\newrobustcmd{\barbmGz}[2][]{\ensuremath{\subp{\bar{\bm{G}}}{}{#2}{}{#1}}}
\newrobustcmd{\vecbmGz}[2][]{\ensuremath{\subp{\vec{\bm{G}}}{}{#2}{}{#1}}}
\newrobustcmd{\Hz}[2][]{\ensuremath{\subp{H}{}{#2}{}{#1}}}
\newrobustcmd{\hatHz}[2][]{\ensuremath{\subp{\hat{H}}{}{#2}{}{#1}}}
\newrobustcmd{\widehatHz}[2][]{\ensuremath{\subp{\widehat{H}}{}{#2}{}{#1}}}
\newrobustcmd{\checkHz}[2][]{\ensuremath{\subp{\check{H}}{}{#2}{}{#1}}}
\newrobustcmd{\tildeHz}[2][]{\ensuremath{\subp{\tilde{H}}{}{#2}{}{#1}}}
\newrobustcmd{\widetildeHz}[2][]{\ensuremath{\subp{\widetilde{H}}{}{#2}{}{#1}}}
\newrobustcmd{\acuteHz}[2][]{\ensuremath{\subp{\acute{H}}{}{#2}{}{#1}}}
\newrobustcmd{\graveHz}[2][]{\ensuremath{\subp{\grave{H}}{}{#2}{}{#1}}}
\newrobustcmd{\dotHz}[2][]{\ensuremath{\subp{\dot{H}}{}{#2}{}{#1}}}
\newrobustcmd{\ddotHz}[2][]{\ensuremath{\subp{\ddot{H}}{}{#2}{}{#1}}}
\newrobustcmd{\breveHz}[2][]{\ensuremath{\subp{\breve{H}}{}{#2}{}{#1}}}
\newrobustcmd{\barHz}[2][]{\ensuremath{\subp{\bar{H}}{}{#2}{}{#1}}}
\newrobustcmd{\vecHz}[2][]{\ensuremath{\subp{\vec{H}}{}{#2}{}{#1}}}
\newrobustcmd{\bmHz}[2][]{\ensuremath{\subp{\bm{H}}{}{#2}{}{#1}}}
\newrobustcmd{\hatbmHz}[2][]{\ensuremath{\subp{\hat{\bm{H}}}{}{#2}{}{#1}}}
\newrobustcmd{\widehatbmHz}[2][]{\ensuremath{\subp{\widehat{\bm{H}}}{}{#2}{}{#1}}}
\newrobustcmd{\checkbmHz}[2][]{\ensuremath{\subp{\check{\bm{H}}}{}{#2}{}{#1}}}
\newrobustcmd{\tildebmHz}[2][]{\ensuremath{\subp{\tilde{\bm{H}}}{}{#2}{}{#1}}}
\newrobustcmd{\widetildebmHz}[2][]{\ensuremath{\subp{\widetilde{\bm{H}}}{}{#2}{}{#1}}}
\newrobustcmd{\acutebmHz}[2][]{\ensuremath{\subp{\acute{\bm{H}}}{}{#2}{}{#1}}}
\newrobustcmd{\gravebmHz}[2][]{\ensuremath{\subp{\grave{\bm{H}}}{}{#2}{}{#1}}}
\newrobustcmd{\dotbmHz}[2][]{\ensuremath{\subp{\dot{\bm{H}}}{}{#2}{}{#1}}}
\newrobustcmd{\ddotbmHz}[2][]{\ensuremath{\subp{\ddot{\bm{H}}}{}{#2}{}{#1}}}
\newrobustcmd{\brevebmHz}[2][]{\ensuremath{\subp{\breve{\bm{H}}}{}{#2}{}{#1}}}
\newrobustcmd{\barbmHz}[2][]{\ensuremath{\subp{\bar{\bm{H}}}{}{#2}{}{#1}}}
\newrobustcmd{\vecbmHz}[2][]{\ensuremath{\subp{\vec{\bm{H}}}{}{#2}{}{#1}}}
\newrobustcmd{\Iz}[2][]{\ensuremath{\subp{I}{}{#2}{}{#1}}}
\newrobustcmd{\hatIz}[2][]{\ensuremath{\subp{\hat{I}}{}{#2}{}{#1}}}
\newrobustcmd{\widehatIz}[2][]{\ensuremath{\subp{\widehat{I}}{}{#2}{}{#1}}}
\newrobustcmd{\checkIz}[2][]{\ensuremath{\subp{\check{I}}{}{#2}{}{#1}}}
\newrobustcmd{\tildeIz}[2][]{\ensuremath{\subp{\tilde{I}}{}{#2}{}{#1}}}
\newrobustcmd{\widetildeIz}[2][]{\ensuremath{\subp{\widetilde{I}}{}{#2}{}{#1}}}
\newrobustcmd{\acuteIz}[2][]{\ensuremath{\subp{\acute{I}}{}{#2}{}{#1}}}
\newrobustcmd{\graveIz}[2][]{\ensuremath{\subp{\grave{I}}{}{#2}{}{#1}}}
\newrobustcmd{\dotIz}[2][]{\ensuremath{\subp{\dot{I}}{}{#2}{}{#1}}}
\newrobustcmd{\ddotIz}[2][]{\ensuremath{\subp{\ddot{I}}{}{#2}{}{#1}}}
\newrobustcmd{\breveIz}[2][]{\ensuremath{\subp{\breve{I}}{}{#2}{}{#1}}}
\newrobustcmd{\barIz}[2][]{\ensuremath{\subp{\bar{I}}{}{#2}{}{#1}}}
\newrobustcmd{\vecIz}[2][]{\ensuremath{\subp{\vec{I}}{}{#2}{}{#1}}}
\newrobustcmd{\bmIz}[2][]{\ensuremath{\subp{\bm{I}}{}{#2}{}{#1}}}
\newrobustcmd{\hatbmIz}[2][]{\ensuremath{\subp{\hat{\bm{I}}}{}{#2}{}{#1}}}
\newrobustcmd{\widehatbmIz}[2][]{\ensuremath{\subp{\widehat{\bm{I}}}{}{#2}{}{#1}}}
\newrobustcmd{\checkbmIz}[2][]{\ensuremath{\subp{\check{\bm{I}}}{}{#2}{}{#1}}}
\newrobustcmd{\tildebmIz}[2][]{\ensuremath{\subp{\tilde{\bm{I}}}{}{#2}{}{#1}}}
\newrobustcmd{\widetildebmIz}[2][]{\ensuremath{\subp{\widetilde{\bm{I}}}{}{#2}{}{#1}}}
\newrobustcmd{\acutebmIz}[2][]{\ensuremath{\subp{\acute{\bm{I}}}{}{#2}{}{#1}}}
\newrobustcmd{\gravebmIz}[2][]{\ensuremath{\subp{\grave{\bm{I}}}{}{#2}{}{#1}}}
\newrobustcmd{\dotbmIz}[2][]{\ensuremath{\subp{\dot{\bm{I}}}{}{#2}{}{#1}}}
\newrobustcmd{\ddotbmIz}[2][]{\ensuremath{\subp{\ddot{\bm{I}}}{}{#2}{}{#1}}}
\newrobustcmd{\brevebmIz}[2][]{\ensuremath{\subp{\breve{\bm{I}}}{}{#2}{}{#1}}}
\newrobustcmd{\barbmIz}[2][]{\ensuremath{\subp{\bar{\bm{I}}}{}{#2}{}{#1}}}
\newrobustcmd{\vecbmIz}[2][]{\ensuremath{\subp{\vec{\bm{I}}}{}{#2}{}{#1}}}
\newrobustcmd{\Jz}[2][]{\ensuremath{\subp{J}{}{#2}{}{#1}}}
\newrobustcmd{\hatJz}[2][]{\ensuremath{\subp{\hat{J}}{}{#2}{}{#1}}}
\newrobustcmd{\widehatJz}[2][]{\ensuremath{\subp{\widehat{J}}{}{#2}{}{#1}}}
\newrobustcmd{\checkJz}[2][]{\ensuremath{\subp{\check{J}}{}{#2}{}{#1}}}
\newrobustcmd{\tildeJz}[2][]{\ensuremath{\subp{\tilde{J}}{}{#2}{}{#1}}}
\newrobustcmd{\widetildeJz}[2][]{\ensuremath{\subp{\widetilde{J}}{}{#2}{}{#1}}}
\newrobustcmd{\acuteJz}[2][]{\ensuremath{\subp{\acute{J}}{}{#2}{}{#1}}}
\newrobustcmd{\graveJz}[2][]{\ensuremath{\subp{\grave{J}}{}{#2}{}{#1}}}
\newrobustcmd{\dotJz}[2][]{\ensuremath{\subp{\dot{J}}{}{#2}{}{#1}}}
\newrobustcmd{\ddotJz}[2][]{\ensuremath{\subp{\ddot{J}}{}{#2}{}{#1}}}
\newrobustcmd{\breveJz}[2][]{\ensuremath{\subp{\breve{J}}{}{#2}{}{#1}}}
\newrobustcmd{\barJz}[2][]{\ensuremath{\subp{\bar{J}}{}{#2}{}{#1}}}
\newrobustcmd{\vecJz}[2][]{\ensuremath{\subp{\vec{J}}{}{#2}{}{#1}}}
\newrobustcmd{\bmJz}[2][]{\ensuremath{\subp{\bm{J}}{}{#2}{}{#1}}}
\newrobustcmd{\hatbmJz}[2][]{\ensuremath{\subp{\hat{\bm{J}}}{}{#2}{}{#1}}}
\newrobustcmd{\widehatbmJz}[2][]{\ensuremath{\subp{\widehat{\bm{J}}}{}{#2}{}{#1}}}
\newrobustcmd{\checkbmJz}[2][]{\ensuremath{\subp{\check{\bm{J}}}{}{#2}{}{#1}}}
\newrobustcmd{\tildebmJz}[2][]{\ensuremath{\subp{\tilde{\bm{J}}}{}{#2}{}{#1}}}
\newrobustcmd{\widetildebmJz}[2][]{\ensuremath{\subp{\widetilde{\bm{J}}}{}{#2}{}{#1}}}
\newrobustcmd{\acutebmJz}[2][]{\ensuremath{\subp{\acute{\bm{J}}}{}{#2}{}{#1}}}
\newrobustcmd{\gravebmJz}[2][]{\ensuremath{\subp{\grave{\bm{J}}}{}{#2}{}{#1}}}
\newrobustcmd{\dotbmJz}[2][]{\ensuremath{\subp{\dot{\bm{J}}}{}{#2}{}{#1}}}
\newrobustcmd{\ddotbmJz}[2][]{\ensuremath{\subp{\ddot{\bm{J}}}{}{#2}{}{#1}}}
\newrobustcmd{\brevebmJz}[2][]{\ensuremath{\subp{\breve{\bm{J}}}{}{#2}{}{#1}}}
\newrobustcmd{\barbmJz}[2][]{\ensuremath{\subp{\bar{\bm{J}}}{}{#2}{}{#1}}}
\newrobustcmd{\vecbmJz}[2][]{\ensuremath{\subp{\vec{\bm{J}}}{}{#2}{}{#1}}}
\newrobustcmd{\Kz}[2][]{\ensuremath{\subp{K}{}{#2}{}{#1}}}
\newrobustcmd{\hatKz}[2][]{\ensuremath{\subp{\hat{K}}{}{#2}{}{#1}}}
\newrobustcmd{\widehatKz}[2][]{\ensuremath{\subp{\widehat{K}}{}{#2}{}{#1}}}
\newrobustcmd{\checkKz}[2][]{\ensuremath{\subp{\check{K}}{}{#2}{}{#1}}}
\newrobustcmd{\tildeKz}[2][]{\ensuremath{\subp{\tilde{K}}{}{#2}{}{#1}}}
\newrobustcmd{\widetildeKz}[2][]{\ensuremath{\subp{\widetilde{K}}{}{#2}{}{#1}}}
\newrobustcmd{\acuteKz}[2][]{\ensuremath{\subp{\acute{K}}{}{#2}{}{#1}}}
\newrobustcmd{\graveKz}[2][]{\ensuremath{\subp{\grave{K}}{}{#2}{}{#1}}}
\newrobustcmd{\dotKz}[2][]{\ensuremath{\subp{\dot{K}}{}{#2}{}{#1}}}
\newrobustcmd{\ddotKz}[2][]{\ensuremath{\subp{\ddot{K}}{}{#2}{}{#1}}}
\newrobustcmd{\breveKz}[2][]{\ensuremath{\subp{\breve{K}}{}{#2}{}{#1}}}
\newrobustcmd{\barKz}[2][]{\ensuremath{\subp{\bar{K}}{}{#2}{}{#1}}}
\newrobustcmd{\vecKz}[2][]{\ensuremath{\subp{\vec{K}}{}{#2}{}{#1}}}
\newrobustcmd{\bmKz}[2][]{\ensuremath{\subp{\bm{K}}{}{#2}{}{#1}}}
\newrobustcmd{\hatbmKz}[2][]{\ensuremath{\subp{\hat{\bm{K}}}{}{#2}{}{#1}}}
\newrobustcmd{\widehatbmKz}[2][]{\ensuremath{\subp{\widehat{\bm{K}}}{}{#2}{}{#1}}}
\newrobustcmd{\checkbmKz}[2][]{\ensuremath{\subp{\check{\bm{K}}}{}{#2}{}{#1}}}
\newrobustcmd{\tildebmKz}[2][]{\ensuremath{\subp{\tilde{\bm{K}}}{}{#2}{}{#1}}}
\newrobustcmd{\widetildebmKz}[2][]{\ensuremath{\subp{\widetilde{\bm{K}}}{}{#2}{}{#1}}}
\newrobustcmd{\acutebmKz}[2][]{\ensuremath{\subp{\acute{\bm{K}}}{}{#2}{}{#1}}}
\newrobustcmd{\gravebmKz}[2][]{\ensuremath{\subp{\grave{\bm{K}}}{}{#2}{}{#1}}}
\newrobustcmd{\dotbmKz}[2][]{\ensuremath{\subp{\dot{\bm{K}}}{}{#2}{}{#1}}}
\newrobustcmd{\ddotbmKz}[2][]{\ensuremath{\subp{\ddot{\bm{K}}}{}{#2}{}{#1}}}
\newrobustcmd{\brevebmKz}[2][]{\ensuremath{\subp{\breve{\bm{K}}}{}{#2}{}{#1}}}
\newrobustcmd{\barbmKz}[2][]{\ensuremath{\subp{\bar{\bm{K}}}{}{#2}{}{#1}}}
\newrobustcmd{\vecbmKz}[2][]{\ensuremath{\subp{\vec{\bm{K}}}{}{#2}{}{#1}}}
\newrobustcmd{\Lz}[2][]{\ensuremath{\subp{L}{}{#2}{}{#1}}}
\newrobustcmd{\hatLz}[2][]{\ensuremath{\subp{\hat{L}}{}{#2}{}{#1}}}
\newrobustcmd{\widehatLz}[2][]{\ensuremath{\subp{\widehat{L}}{}{#2}{}{#1}}}
\newrobustcmd{\checkLz}[2][]{\ensuremath{\subp{\check{L}}{}{#2}{}{#1}}}
\newrobustcmd{\tildeLz}[2][]{\ensuremath{\subp{\tilde{L}}{}{#2}{}{#1}}}
\newrobustcmd{\widetildeLz}[2][]{\ensuremath{\subp{\widetilde{L}}{}{#2}{}{#1}}}
\newrobustcmd{\acuteLz}[2][]{\ensuremath{\subp{\acute{L}}{}{#2}{}{#1}}}
\newrobustcmd{\graveLz}[2][]{\ensuremath{\subp{\grave{L}}{}{#2}{}{#1}}}
\newrobustcmd{\dotLz}[2][]{\ensuremath{\subp{\dot{L}}{}{#2}{}{#1}}}
\newrobustcmd{\ddotLz}[2][]{\ensuremath{\subp{\ddot{L}}{}{#2}{}{#1}}}
\newrobustcmd{\breveLz}[2][]{\ensuremath{\subp{\breve{L}}{}{#2}{}{#1}}}
\newrobustcmd{\barLz}[2][]{\ensuremath{\subp{\bar{L}}{}{#2}{}{#1}}}
\newrobustcmd{\vecLz}[2][]{\ensuremath{\subp{\vec{L}}{}{#2}{}{#1}}}
\newrobustcmd{\bmLz}[2][]{\ensuremath{\subp{\bm{L}}{}{#2}{}{#1}}}
\newrobustcmd{\hatbmLz}[2][]{\ensuremath{\subp{\hat{\bm{L}}}{}{#2}{}{#1}}}
\newrobustcmd{\widehatbmLz}[2][]{\ensuremath{\subp{\widehat{\bm{L}}}{}{#2}{}{#1}}}
\newrobustcmd{\checkbmLz}[2][]{\ensuremath{\subp{\check{\bm{L}}}{}{#2}{}{#1}}}
\newrobustcmd{\tildebmLz}[2][]{\ensuremath{\subp{\tilde{\bm{L}}}{}{#2}{}{#1}}}
\newrobustcmd{\widetildebmLz}[2][]{\ensuremath{\subp{\widetilde{\bm{L}}}{}{#2}{}{#1}}}
\newrobustcmd{\acutebmLz}[2][]{\ensuremath{\subp{\acute{\bm{L}}}{}{#2}{}{#1}}}
\newrobustcmd{\gravebmLz}[2][]{\ensuremath{\subp{\grave{\bm{L}}}{}{#2}{}{#1}}}
\newrobustcmd{\dotbmLz}[2][]{\ensuremath{\subp{\dot{\bm{L}}}{}{#2}{}{#1}}}
\newrobustcmd{\ddotbmLz}[2][]{\ensuremath{\subp{\ddot{\bm{L}}}{}{#2}{}{#1}}}
\newrobustcmd{\brevebmLz}[2][]{\ensuremath{\subp{\breve{\bm{L}}}{}{#2}{}{#1}}}
\newrobustcmd{\barbmLz}[2][]{\ensuremath{\subp{\bar{\bm{L}}}{}{#2}{}{#1}}}
\newrobustcmd{\vecbmLz}[2][]{\ensuremath{\subp{\vec{\bm{L}}}{}{#2}{}{#1}}}
\newrobustcmd{\Mz}[2][]{\ensuremath{\subp{M}{}{#2}{}{#1}}}
\newrobustcmd{\hatMz}[2][]{\ensuremath{\subp{\hat{M}}{}{#2}{}{#1}}}
\newrobustcmd{\widehatMz}[2][]{\ensuremath{\subp{\widehat{M}}{}{#2}{}{#1}}}
\newrobustcmd{\checkMz}[2][]{\ensuremath{\subp{\check{M}}{}{#2}{}{#1}}}
\newrobustcmd{\tildeMz}[2][]{\ensuremath{\subp{\tilde{M}}{}{#2}{}{#1}}}
\newrobustcmd{\widetildeMz}[2][]{\ensuremath{\subp{\widetilde{M}}{}{#2}{}{#1}}}
\newrobustcmd{\acuteMz}[2][]{\ensuremath{\subp{\acute{M}}{}{#2}{}{#1}}}
\newrobustcmd{\graveMz}[2][]{\ensuremath{\subp{\grave{M}}{}{#2}{}{#1}}}
\newrobustcmd{\dotMz}[2][]{\ensuremath{\subp{\dot{M}}{}{#2}{}{#1}}}
\newrobustcmd{\ddotMz}[2][]{\ensuremath{\subp{\ddot{M}}{}{#2}{}{#1}}}
\newrobustcmd{\breveMz}[2][]{\ensuremath{\subp{\breve{M}}{}{#2}{}{#1}}}
\newrobustcmd{\barMz}[2][]{\ensuremath{\subp{\bar{M}}{}{#2}{}{#1}}}
\newrobustcmd{\vecMz}[2][]{\ensuremath{\subp{\vec{M}}{}{#2}{}{#1}}}
\newrobustcmd{\bmMz}[2][]{\ensuremath{\subp{\bm{M}}{}{#2}{}{#1}}}
\newrobustcmd{\hatbmMz}[2][]{\ensuremath{\subp{\hat{\bm{M}}}{}{#2}{}{#1}}}
\newrobustcmd{\widehatbmMz}[2][]{\ensuremath{\subp{\widehat{\bm{M}}}{}{#2}{}{#1}}}
\newrobustcmd{\checkbmMz}[2][]{\ensuremath{\subp{\check{\bm{M}}}{}{#2}{}{#1}}}
\newrobustcmd{\tildebmMz}[2][]{\ensuremath{\subp{\tilde{\bm{M}}}{}{#2}{}{#1}}}
\newrobustcmd{\widetildebmMz}[2][]{\ensuremath{\subp{\widetilde{\bm{M}}}{}{#2}{}{#1}}}
\newrobustcmd{\acutebmMz}[2][]{\ensuremath{\subp{\acute{\bm{M}}}{}{#2}{}{#1}}}
\newrobustcmd{\gravebmMz}[2][]{\ensuremath{\subp{\grave{\bm{M}}}{}{#2}{}{#1}}}
\newrobustcmd{\dotbmMz}[2][]{\ensuremath{\subp{\dot{\bm{M}}}{}{#2}{}{#1}}}
\newrobustcmd{\ddotbmMz}[2][]{\ensuremath{\subp{\ddot{\bm{M}}}{}{#2}{}{#1}}}
\newrobustcmd{\brevebmMz}[2][]{\ensuremath{\subp{\breve{\bm{M}}}{}{#2}{}{#1}}}
\newrobustcmd{\barbmMz}[2][]{\ensuremath{\subp{\bar{\bm{M}}}{}{#2}{}{#1}}}
\newrobustcmd{\vecbmMz}[2][]{\ensuremath{\subp{\vec{\bm{M}}}{}{#2}{}{#1}}}
\newrobustcmd{\Nz}[2][]{\ensuremath{\subp{N}{}{#2}{}{#1}}}
\newrobustcmd{\hatNz}[2][]{\ensuremath{\subp{\hat{N}}{}{#2}{}{#1}}}
\newrobustcmd{\widehatNz}[2][]{\ensuremath{\subp{\widehat{N}}{}{#2}{}{#1}}}
\newrobustcmd{\checkNz}[2][]{\ensuremath{\subp{\check{N}}{}{#2}{}{#1}}}
\newrobustcmd{\tildeNz}[2][]{\ensuremath{\subp{\tilde{N}}{}{#2}{}{#1}}}
\newrobustcmd{\widetildeNz}[2][]{\ensuremath{\subp{\widetilde{N}}{}{#2}{}{#1}}}
\newrobustcmd{\acuteNz}[2][]{\ensuremath{\subp{\acute{N}}{}{#2}{}{#1}}}
\newrobustcmd{\graveNz}[2][]{\ensuremath{\subp{\grave{N}}{}{#2}{}{#1}}}
\newrobustcmd{\dotNz}[2][]{\ensuremath{\subp{\dot{N}}{}{#2}{}{#1}}}
\newrobustcmd{\ddotNz}[2][]{\ensuremath{\subp{\ddot{N}}{}{#2}{}{#1}}}
\newrobustcmd{\breveNz}[2][]{\ensuremath{\subp{\breve{N}}{}{#2}{}{#1}}}
\newrobustcmd{\barNz}[2][]{\ensuremath{\subp{\bar{N}}{}{#2}{}{#1}}}
\newrobustcmd{\vecNz}[2][]{\ensuremath{\subp{\vec{N}}{}{#2}{}{#1}}}
\newrobustcmd{\bmNz}[2][]{\ensuremath{\subp{\bm{N}}{}{#2}{}{#1}}}
\newrobustcmd{\hatbmNz}[2][]{\ensuremath{\subp{\hat{\bm{N}}}{}{#2}{}{#1}}}
\newrobustcmd{\widehatbmNz}[2][]{\ensuremath{\subp{\widehat{\bm{N}}}{}{#2}{}{#1}}}
\newrobustcmd{\checkbmNz}[2][]{\ensuremath{\subp{\check{\bm{N}}}{}{#2}{}{#1}}}
\newrobustcmd{\tildebmNz}[2][]{\ensuremath{\subp{\tilde{\bm{N}}}{}{#2}{}{#1}}}
\newrobustcmd{\widetildebmNz}[2][]{\ensuremath{\subp{\widetilde{\bm{N}}}{}{#2}{}{#1}}}
\newrobustcmd{\acutebmNz}[2][]{\ensuremath{\subp{\acute{\bm{N}}}{}{#2}{}{#1}}}
\newrobustcmd{\gravebmNz}[2][]{\ensuremath{\subp{\grave{\bm{N}}}{}{#2}{}{#1}}}
\newrobustcmd{\dotbmNz}[2][]{\ensuremath{\subp{\dot{\bm{N}}}{}{#2}{}{#1}}}
\newrobustcmd{\ddotbmNz}[2][]{\ensuremath{\subp{\ddot{\bm{N}}}{}{#2}{}{#1}}}
\newrobustcmd{\brevebmNz}[2][]{\ensuremath{\subp{\breve{\bm{N}}}{}{#2}{}{#1}}}
\newrobustcmd{\barbmNz}[2][]{\ensuremath{\subp{\bar{\bm{N}}}{}{#2}{}{#1}}}
\newrobustcmd{\vecbmNz}[2][]{\ensuremath{\subp{\vec{\bm{N}}}{}{#2}{}{#1}}}
\newrobustcmd{\Oz}[2][]{\ensuremath{\subp{O}{}{#2}{}{#1}}}
\newrobustcmd{\hatOz}[2][]{\ensuremath{\subp{\hat{O}}{}{#2}{}{#1}}}
\newrobustcmd{\widehatOz}[2][]{\ensuremath{\subp{\widehat{O}}{}{#2}{}{#1}}}
\newrobustcmd{\checkOz}[2][]{\ensuremath{\subp{\check{O}}{}{#2}{}{#1}}}
\newrobustcmd{\tildeOz}[2][]{\ensuremath{\subp{\tilde{O}}{}{#2}{}{#1}}}
\newrobustcmd{\widetildeOz}[2][]{\ensuremath{\subp{\widetilde{O}}{}{#2}{}{#1}}}
\newrobustcmd{\acuteOz}[2][]{\ensuremath{\subp{\acute{O}}{}{#2}{}{#1}}}
\newrobustcmd{\graveOz}[2][]{\ensuremath{\subp{\grave{O}}{}{#2}{}{#1}}}
\newrobustcmd{\dotOz}[2][]{\ensuremath{\subp{\dot{O}}{}{#2}{}{#1}}}
\newrobustcmd{\ddotOz}[2][]{\ensuremath{\subp{\ddot{O}}{}{#2}{}{#1}}}
\newrobustcmd{\breveOz}[2][]{\ensuremath{\subp{\breve{O}}{}{#2}{}{#1}}}
\newrobustcmd{\barOz}[2][]{\ensuremath{\subp{\bar{O}}{}{#2}{}{#1}}}
\newrobustcmd{\vecOz}[2][]{\ensuremath{\subp{\vec{O}}{}{#2}{}{#1}}}
\newrobustcmd{\bmOz}[2][]{\ensuremath{\subp{\bm{O}}{}{#2}{}{#1}}}
\newrobustcmd{\hatbmOz}[2][]{\ensuremath{\subp{\hat{\bm{O}}}{}{#2}{}{#1}}}
\newrobustcmd{\widehatbmOz}[2][]{\ensuremath{\subp{\widehat{\bm{O}}}{}{#2}{}{#1}}}
\newrobustcmd{\checkbmOz}[2][]{\ensuremath{\subp{\check{\bm{O}}}{}{#2}{}{#1}}}
\newrobustcmd{\tildebmOz}[2][]{\ensuremath{\subp{\tilde{\bm{O}}}{}{#2}{}{#1}}}
\newrobustcmd{\widetildebmOz}[2][]{\ensuremath{\subp{\widetilde{\bm{O}}}{}{#2}{}{#1}}}
\newrobustcmd{\acutebmOz}[2][]{\ensuremath{\subp{\acute{\bm{O}}}{}{#2}{}{#1}}}
\newrobustcmd{\gravebmOz}[2][]{\ensuremath{\subp{\grave{\bm{O}}}{}{#2}{}{#1}}}
\newrobustcmd{\dotbmOz}[2][]{\ensuremath{\subp{\dot{\bm{O}}}{}{#2}{}{#1}}}
\newrobustcmd{\ddotbmOz}[2][]{\ensuremath{\subp{\ddot{\bm{O}}}{}{#2}{}{#1}}}
\newrobustcmd{\brevebmOz}[2][]{\ensuremath{\subp{\breve{\bm{O}}}{}{#2}{}{#1}}}
\newrobustcmd{\barbmOz}[2][]{\ensuremath{\subp{\bar{\bm{O}}}{}{#2}{}{#1}}}
\newrobustcmd{\vecbmOz}[2][]{\ensuremath{\subp{\vec{\bm{O}}}{}{#2}{}{#1}}}
\newrobustcmd{\Pz}[2][]{\ensuremath{\subp{P}{}{#2}{}{#1}}}
\newrobustcmd{\hatPz}[2][]{\ensuremath{\subp{\hat{P}}{}{#2}{}{#1}}}
\newrobustcmd{\widehatPz}[2][]{\ensuremath{\subp{\widehat{P}}{}{#2}{}{#1}}}
\newrobustcmd{\checkPz}[2][]{\ensuremath{\subp{\check{P}}{}{#2}{}{#1}}}
\newrobustcmd{\tildePz}[2][]{\ensuremath{\subp{\tilde{P}}{}{#2}{}{#1}}}
\newrobustcmd{\widetildePz}[2][]{\ensuremath{\subp{\widetilde{P}}{}{#2}{}{#1}}}
\newrobustcmd{\acutePz}[2][]{\ensuremath{\subp{\acute{P}}{}{#2}{}{#1}}}
\newrobustcmd{\gravePz}[2][]{\ensuremath{\subp{\grave{P}}{}{#2}{}{#1}}}
\newrobustcmd{\dotPz}[2][]{\ensuremath{\subp{\dot{P}}{}{#2}{}{#1}}}
\newrobustcmd{\ddotPz}[2][]{\ensuremath{\subp{\ddot{P}}{}{#2}{}{#1}}}
\newrobustcmd{\brevePz}[2][]{\ensuremath{\subp{\breve{P}}{}{#2}{}{#1}}}
\newrobustcmd{\barPz}[2][]{\ensuremath{\subp{\bar{P}}{}{#2}{}{#1}}}
\newrobustcmd{\vecPz}[2][]{\ensuremath{\subp{\vec{P}}{}{#2}{}{#1}}}
\newrobustcmd{\bmPz}[2][]{\ensuremath{\subp{\bm{P}}{}{#2}{}{#1}}}
\newrobustcmd{\hatbmPz}[2][]{\ensuremath{\subp{\hat{\bm{P}}}{}{#2}{}{#1}}}
\newrobustcmd{\widehatbmPz}[2][]{\ensuremath{\subp{\widehat{\bm{P}}}{}{#2}{}{#1}}}
\newrobustcmd{\checkbmPz}[2][]{\ensuremath{\subp{\check{\bm{P}}}{}{#2}{}{#1}}}
\newrobustcmd{\tildebmPz}[2][]{\ensuremath{\subp{\tilde{\bm{P}}}{}{#2}{}{#1}}}
\newrobustcmd{\widetildebmPz}[2][]{\ensuremath{\subp{\widetilde{\bm{P}}}{}{#2}{}{#1}}}
\newrobustcmd{\acutebmPz}[2][]{\ensuremath{\subp{\acute{\bm{P}}}{}{#2}{}{#1}}}
\newrobustcmd{\gravebmPz}[2][]{\ensuremath{\subp{\grave{\bm{P}}}{}{#2}{}{#1}}}
\newrobustcmd{\dotbmPz}[2][]{\ensuremath{\subp{\dot{\bm{P}}}{}{#2}{}{#1}}}
\newrobustcmd{\ddotbmPz}[2][]{\ensuremath{\subp{\ddot{\bm{P}}}{}{#2}{}{#1}}}
\newrobustcmd{\brevebmPz}[2][]{\ensuremath{\subp{\breve{\bm{P}}}{}{#2}{}{#1}}}
\newrobustcmd{\barbmPz}[2][]{\ensuremath{\subp{\bar{\bm{P}}}{}{#2}{}{#1}}}
\newrobustcmd{\vecbmPz}[2][]{\ensuremath{\subp{\vec{\bm{P}}}{}{#2}{}{#1}}}
\newrobustcmd{\Qz}[2][]{\ensuremath{\subp{Q}{}{#2}{}{#1}}}
\newrobustcmd{\hatQz}[2][]{\ensuremath{\subp{\hat{Q}}{}{#2}{}{#1}}}
\newrobustcmd{\widehatQz}[2][]{\ensuremath{\subp{\widehat{Q}}{}{#2}{}{#1}}}
\newrobustcmd{\checkQz}[2][]{\ensuremath{\subp{\check{Q}}{}{#2}{}{#1}}}
\newrobustcmd{\tildeQz}[2][]{\ensuremath{\subp{\tilde{Q}}{}{#2}{}{#1}}}
\newrobustcmd{\widetildeQz}[2][]{\ensuremath{\subp{\widetilde{Q}}{}{#2}{}{#1}}}
\newrobustcmd{\acuteQz}[2][]{\ensuremath{\subp{\acute{Q}}{}{#2}{}{#1}}}
\newrobustcmd{\graveQz}[2][]{\ensuremath{\subp{\grave{Q}}{}{#2}{}{#1}}}
\newrobustcmd{\dotQz}[2][]{\ensuremath{\subp{\dot{Q}}{}{#2}{}{#1}}}
\newrobustcmd{\ddotQz}[2][]{\ensuremath{\subp{\ddot{Q}}{}{#2}{}{#1}}}
\newrobustcmd{\breveQz}[2][]{\ensuremath{\subp{\breve{Q}}{}{#2}{}{#1}}}
\newrobustcmd{\barQz}[2][]{\ensuremath{\subp{\bar{Q}}{}{#2}{}{#1}}}
\newrobustcmd{\vecQz}[2][]{\ensuremath{\subp{\vec{Q}}{}{#2}{}{#1}}}
\newrobustcmd{\bmQz}[2][]{\ensuremath{\subp{\bm{Q}}{}{#2}{}{#1}}}
\newrobustcmd{\hatbmQz}[2][]{\ensuremath{\subp{\hat{\bm{Q}}}{}{#2}{}{#1}}}
\newrobustcmd{\widehatbmQz}[2][]{\ensuremath{\subp{\widehat{\bm{Q}}}{}{#2}{}{#1}}}
\newrobustcmd{\checkbmQz}[2][]{\ensuremath{\subp{\check{\bm{Q}}}{}{#2}{}{#1}}}
\newrobustcmd{\tildebmQz}[2][]{\ensuremath{\subp{\tilde{\bm{Q}}}{}{#2}{}{#1}}}
\newrobustcmd{\widetildebmQz}[2][]{\ensuremath{\subp{\widetilde{\bm{Q}}}{}{#2}{}{#1}}}
\newrobustcmd{\acutebmQz}[2][]{\ensuremath{\subp{\acute{\bm{Q}}}{}{#2}{}{#1}}}
\newrobustcmd{\gravebmQz}[2][]{\ensuremath{\subp{\grave{\bm{Q}}}{}{#2}{}{#1}}}
\newrobustcmd{\dotbmQz}[2][]{\ensuremath{\subp{\dot{\bm{Q}}}{}{#2}{}{#1}}}
\newrobustcmd{\ddotbmQz}[2][]{\ensuremath{\subp{\ddot{\bm{Q}}}{}{#2}{}{#1}}}
\newrobustcmd{\brevebmQz}[2][]{\ensuremath{\subp{\breve{\bm{Q}}}{}{#2}{}{#1}}}
\newrobustcmd{\barbmQz}[2][]{\ensuremath{\subp{\bar{\bm{Q}}}{}{#2}{}{#1}}}
\newrobustcmd{\vecbmQz}[2][]{\ensuremath{\subp{\vec{\bm{Q}}}{}{#2}{}{#1}}}
\newrobustcmd{\Rz}[2][]{\ensuremath{\subp{R}{}{#2}{}{#1}}}
\newrobustcmd{\hatRz}[2][]{\ensuremath{\subp{\hat{R}}{}{#2}{}{#1}}}
\newrobustcmd{\widehatRz}[2][]{\ensuremath{\subp{\widehat{R}}{}{#2}{}{#1}}}
\newrobustcmd{\checkRz}[2][]{\ensuremath{\subp{\check{R}}{}{#2}{}{#1}}}
\newrobustcmd{\tildeRz}[2][]{\ensuremath{\subp{\tilde{R}}{}{#2}{}{#1}}}
\newrobustcmd{\widetildeRz}[2][]{\ensuremath{\subp{\widetilde{R}}{}{#2}{}{#1}}}
\newrobustcmd{\acuteRz}[2][]{\ensuremath{\subp{\acute{R}}{}{#2}{}{#1}}}
\newrobustcmd{\graveRz}[2][]{\ensuremath{\subp{\grave{R}}{}{#2}{}{#1}}}
\newrobustcmd{\dotRz}[2][]{\ensuremath{\subp{\dot{R}}{}{#2}{}{#1}}}
\newrobustcmd{\ddotRz}[2][]{\ensuremath{\subp{\ddot{R}}{}{#2}{}{#1}}}
\newrobustcmd{\breveRz}[2][]{\ensuremath{\subp{\breve{R}}{}{#2}{}{#1}}}
\newrobustcmd{\barRz}[2][]{\ensuremath{\subp{\bar{R}}{}{#2}{}{#1}}}
\newrobustcmd{\vecRz}[2][]{\ensuremath{\subp{\vec{R}}{}{#2}{}{#1}}}
\newrobustcmd{\bmRz}[2][]{\ensuremath{\subp{\bm{R}}{}{#2}{}{#1}}}
\newrobustcmd{\hatbmRz}[2][]{\ensuremath{\subp{\hat{\bm{R}}}{}{#2}{}{#1}}}
\newrobustcmd{\widehatbmRz}[2][]{\ensuremath{\subp{\widehat{\bm{R}}}{}{#2}{}{#1}}}
\newrobustcmd{\checkbmRz}[2][]{\ensuremath{\subp{\check{\bm{R}}}{}{#2}{}{#1}}}
\newrobustcmd{\tildebmRz}[2][]{\ensuremath{\subp{\tilde{\bm{R}}}{}{#2}{}{#1}}}
\newrobustcmd{\widetildebmRz}[2][]{\ensuremath{\subp{\widetilde{\bm{R}}}{}{#2}{}{#1}}}
\newrobustcmd{\acutebmRz}[2][]{\ensuremath{\subp{\acute{\bm{R}}}{}{#2}{}{#1}}}
\newrobustcmd{\gravebmRz}[2][]{\ensuremath{\subp{\grave{\bm{R}}}{}{#2}{}{#1}}}
\newrobustcmd{\dotbmRz}[2][]{\ensuremath{\subp{\dot{\bm{R}}}{}{#2}{}{#1}}}
\newrobustcmd{\ddotbmRz}[2][]{\ensuremath{\subp{\ddot{\bm{R}}}{}{#2}{}{#1}}}
\newrobustcmd{\brevebmRz}[2][]{\ensuremath{\subp{\breve{\bm{R}}}{}{#2}{}{#1}}}
\newrobustcmd{\barbmRz}[2][]{\ensuremath{\subp{\bar{\bm{R}}}{}{#2}{}{#1}}}
\newrobustcmd{\vecbmRz}[2][]{\ensuremath{\subp{\vec{\bm{R}}}{}{#2}{}{#1}}}
\newrobustcmd{\Sz}[2][]{\ensuremath{\subp{S}{}{#2}{}{#1}}}
\newrobustcmd{\hatSz}[2][]{\ensuremath{\subp{\hat{S}}{}{#2}{}{#1}}}
\newrobustcmd{\widehatSz}[2][]{\ensuremath{\subp{\widehat{S}}{}{#2}{}{#1}}}
\newrobustcmd{\checkSz}[2][]{\ensuremath{\subp{\check{S}}{}{#2}{}{#1}}}
\newrobustcmd{\tildeSz}[2][]{\ensuremath{\subp{\tilde{S}}{}{#2}{}{#1}}}
\newrobustcmd{\widetildeSz}[2][]{\ensuremath{\subp{\widetilde{S}}{}{#2}{}{#1}}}
\newrobustcmd{\acuteSz}[2][]{\ensuremath{\subp{\acute{S}}{}{#2}{}{#1}}}
\newrobustcmd{\graveSz}[2][]{\ensuremath{\subp{\grave{S}}{}{#2}{}{#1}}}
\newrobustcmd{\dotSz}[2][]{\ensuremath{\subp{\dot{S}}{}{#2}{}{#1}}}
\newrobustcmd{\ddotSz}[2][]{\ensuremath{\subp{\ddot{S}}{}{#2}{}{#1}}}
\newrobustcmd{\breveSz}[2][]{\ensuremath{\subp{\breve{S}}{}{#2}{}{#1}}}
\newrobustcmd{\barSz}[2][]{\ensuremath{\subp{\bar{S}}{}{#2}{}{#1}}}
\newrobustcmd{\vecSz}[2][]{\ensuremath{\subp{\vec{S}}{}{#2}{}{#1}}}
\newrobustcmd{\bmSz}[2][]{\ensuremath{\subp{\bm{S}}{}{#2}{}{#1}}}
\newrobustcmd{\hatbmSz}[2][]{\ensuremath{\subp{\hat{\bm{S}}}{}{#2}{}{#1}}}
\newrobustcmd{\widehatbmSz}[2][]{\ensuremath{\subp{\widehat{\bm{S}}}{}{#2}{}{#1}}}
\newrobustcmd{\checkbmSz}[2][]{\ensuremath{\subp{\check{\bm{S}}}{}{#2}{}{#1}}}
\newrobustcmd{\tildebmSz}[2][]{\ensuremath{\subp{\tilde{\bm{S}}}{}{#2}{}{#1}}}
\newrobustcmd{\widetildebmSz}[2][]{\ensuremath{\subp{\widetilde{\bm{S}}}{}{#2}{}{#1}}}
\newrobustcmd{\acutebmSz}[2][]{\ensuremath{\subp{\acute{\bm{S}}}{}{#2}{}{#1}}}
\newrobustcmd{\gravebmSz}[2][]{\ensuremath{\subp{\grave{\bm{S}}}{}{#2}{}{#1}}}
\newrobustcmd{\dotbmSz}[2][]{\ensuremath{\subp{\dot{\bm{S}}}{}{#2}{}{#1}}}
\newrobustcmd{\ddotbmSz}[2][]{\ensuremath{\subp{\ddot{\bm{S}}}{}{#2}{}{#1}}}
\newrobustcmd{\brevebmSz}[2][]{\ensuremath{\subp{\breve{\bm{S}}}{}{#2}{}{#1}}}
\newrobustcmd{\barbmSz}[2][]{\ensuremath{\subp{\bar{\bm{S}}}{}{#2}{}{#1}}}
\newrobustcmd{\vecbmSz}[2][]{\ensuremath{\subp{\vec{\bm{S}}}{}{#2}{}{#1}}}
\newrobustcmd{\Tz}[2][]{\ensuremath{\subp{T}{}{#2}{}{#1}}}
\newrobustcmd{\hatTz}[2][]{\ensuremath{\subp{\hat{T}}{}{#2}{}{#1}}}
\newrobustcmd{\widehatTz}[2][]{\ensuremath{\subp{\widehat{T}}{}{#2}{}{#1}}}
\newrobustcmd{\checkTz}[2][]{\ensuremath{\subp{\check{T}}{}{#2}{}{#1}}}
\newrobustcmd{\tildeTz}[2][]{\ensuremath{\subp{\tilde{T}}{}{#2}{}{#1}}}
\newrobustcmd{\widetildeTz}[2][]{\ensuremath{\subp{\widetilde{T}}{}{#2}{}{#1}}}
\newrobustcmd{\acuteTz}[2][]{\ensuremath{\subp{\acute{T}}{}{#2}{}{#1}}}
\newrobustcmd{\graveTz}[2][]{\ensuremath{\subp{\grave{T}}{}{#2}{}{#1}}}
\newrobustcmd{\dotTz}[2][]{\ensuremath{\subp{\dot{T}}{}{#2}{}{#1}}}
\newrobustcmd{\ddotTz}[2][]{\ensuremath{\subp{\ddot{T}}{}{#2}{}{#1}}}
\newrobustcmd{\breveTz}[2][]{\ensuremath{\subp{\breve{T}}{}{#2}{}{#1}}}
\newrobustcmd{\barTz}[2][]{\ensuremath{\subp{\bar{T}}{}{#2}{}{#1}}}
\newrobustcmd{\vecTz}[2][]{\ensuremath{\subp{\vec{T}}{}{#2}{}{#1}}}
\newrobustcmd{\bmTz}[2][]{\ensuremath{\subp{\bm{T}}{}{#2}{}{#1}}}
\newrobustcmd{\hatbmTz}[2][]{\ensuremath{\subp{\hat{\bm{T}}}{}{#2}{}{#1}}}
\newrobustcmd{\widehatbmTz}[2][]{\ensuremath{\subp{\widehat{\bm{T}}}{}{#2}{}{#1}}}
\newrobustcmd{\checkbmTz}[2][]{\ensuremath{\subp{\check{\bm{T}}}{}{#2}{}{#1}}}
\newrobustcmd{\tildebmTz}[2][]{\ensuremath{\subp{\tilde{\bm{T}}}{}{#2}{}{#1}}}
\newrobustcmd{\widetildebmTz}[2][]{\ensuremath{\subp{\widetilde{\bm{T}}}{}{#2}{}{#1}}}
\newrobustcmd{\acutebmTz}[2][]{\ensuremath{\subp{\acute{\bm{T}}}{}{#2}{}{#1}}}
\newrobustcmd{\gravebmTz}[2][]{\ensuremath{\subp{\grave{\bm{T}}}{}{#2}{}{#1}}}
\newrobustcmd{\dotbmTz}[2][]{\ensuremath{\subp{\dot{\bm{T}}}{}{#2}{}{#1}}}
\newrobustcmd{\ddotbmTz}[2][]{\ensuremath{\subp{\ddot{\bm{T}}}{}{#2}{}{#1}}}
\newrobustcmd{\brevebmTz}[2][]{\ensuremath{\subp{\breve{\bm{T}}}{}{#2}{}{#1}}}
\newrobustcmd{\barbmTz}[2][]{\ensuremath{\subp{\bar{\bm{T}}}{}{#2}{}{#1}}}
\newrobustcmd{\vecbmTz}[2][]{\ensuremath{\subp{\vec{\bm{T}}}{}{#2}{}{#1}}}
\newrobustcmd{\Uz}[2][]{\ensuremath{\subp{U}{}{#2}{}{#1}}}
\newrobustcmd{\hatUz}[2][]{\ensuremath{\subp{\hat{U}}{}{#2}{}{#1}}}
\newrobustcmd{\widehatUz}[2][]{\ensuremath{\subp{\widehat{U}}{}{#2}{}{#1}}}
\newrobustcmd{\checkUz}[2][]{\ensuremath{\subp{\check{U}}{}{#2}{}{#1}}}
\newrobustcmd{\tildeUz}[2][]{\ensuremath{\subp{\tilde{U}}{}{#2}{}{#1}}}
\newrobustcmd{\widetildeUz}[2][]{\ensuremath{\subp{\widetilde{U}}{}{#2}{}{#1}}}
\newrobustcmd{\acuteUz}[2][]{\ensuremath{\subp{\acute{U}}{}{#2}{}{#1}}}
\newrobustcmd{\graveUz}[2][]{\ensuremath{\subp{\grave{U}}{}{#2}{}{#1}}}
\newrobustcmd{\dotUz}[2][]{\ensuremath{\subp{\dot{U}}{}{#2}{}{#1}}}
\newrobustcmd{\ddotUz}[2][]{\ensuremath{\subp{\ddot{U}}{}{#2}{}{#1}}}
\newrobustcmd{\breveUz}[2][]{\ensuremath{\subp{\breve{U}}{}{#2}{}{#1}}}
\newrobustcmd{\barUz}[2][]{\ensuremath{\subp{\bar{U}}{}{#2}{}{#1}}}
\newrobustcmd{\vecUz}[2][]{\ensuremath{\subp{\vec{U}}{}{#2}{}{#1}}}
\newrobustcmd{\bmUz}[2][]{\ensuremath{\subp{\bm{U}}{}{#2}{}{#1}}}
\newrobustcmd{\hatbmUz}[2][]{\ensuremath{\subp{\hat{\bm{U}}}{}{#2}{}{#1}}}
\newrobustcmd{\widehatbmUz}[2][]{\ensuremath{\subp{\widehat{\bm{U}}}{}{#2}{}{#1}}}
\newrobustcmd{\checkbmUz}[2][]{\ensuremath{\subp{\check{\bm{U}}}{}{#2}{}{#1}}}
\newrobustcmd{\tildebmUz}[2][]{\ensuremath{\subp{\tilde{\bm{U}}}{}{#2}{}{#1}}}
\newrobustcmd{\widetildebmUz}[2][]{\ensuremath{\subp{\widetilde{\bm{U}}}{}{#2}{}{#1}}}
\newrobustcmd{\acutebmUz}[2][]{\ensuremath{\subp{\acute{\bm{U}}}{}{#2}{}{#1}}}
\newrobustcmd{\gravebmUz}[2][]{\ensuremath{\subp{\grave{\bm{U}}}{}{#2}{}{#1}}}
\newrobustcmd{\dotbmUz}[2][]{\ensuremath{\subp{\dot{\bm{U}}}{}{#2}{}{#1}}}
\newrobustcmd{\ddotbmUz}[2][]{\ensuremath{\subp{\ddot{\bm{U}}}{}{#2}{}{#1}}}
\newrobustcmd{\brevebmUz}[2][]{\ensuremath{\subp{\breve{\bm{U}}}{}{#2}{}{#1}}}
\newrobustcmd{\barbmUz}[2][]{\ensuremath{\subp{\bar{\bm{U}}}{}{#2}{}{#1}}}
\newrobustcmd{\vecbmUz}[2][]{\ensuremath{\subp{\vec{\bm{U}}}{}{#2}{}{#1}}}
\newrobustcmd{\Vz}[2][]{\ensuremath{\subp{V}{}{#2}{}{#1}}}
\newrobustcmd{\hatVz}[2][]{\ensuremath{\subp{\hat{V}}{}{#2}{}{#1}}}
\newrobustcmd{\widehatVz}[2][]{\ensuremath{\subp{\widehat{V}}{}{#2}{}{#1}}}
\newrobustcmd{\checkVz}[2][]{\ensuremath{\subp{\check{V}}{}{#2}{}{#1}}}
\newrobustcmd{\tildeVz}[2][]{\ensuremath{\subp{\tilde{V}}{}{#2}{}{#1}}}
\newrobustcmd{\widetildeVz}[2][]{\ensuremath{\subp{\widetilde{V}}{}{#2}{}{#1}}}
\newrobustcmd{\acuteVz}[2][]{\ensuremath{\subp{\acute{V}}{}{#2}{}{#1}}}
\newrobustcmd{\graveVz}[2][]{\ensuremath{\subp{\grave{V}}{}{#2}{}{#1}}}
\newrobustcmd{\dotVz}[2][]{\ensuremath{\subp{\dot{V}}{}{#2}{}{#1}}}
\newrobustcmd{\ddotVz}[2][]{\ensuremath{\subp{\ddot{V}}{}{#2}{}{#1}}}
\newrobustcmd{\breveVz}[2][]{\ensuremath{\subp{\breve{V}}{}{#2}{}{#1}}}
\newrobustcmd{\barVz}[2][]{\ensuremath{\subp{\bar{V}}{}{#2}{}{#1}}}
\newrobustcmd{\vecVz}[2][]{\ensuremath{\subp{\vec{V}}{}{#2}{}{#1}}}
\newrobustcmd{\bmVz}[2][]{\ensuremath{\subp{\bm{V}}{}{#2}{}{#1}}}
\newrobustcmd{\hatbmVz}[2][]{\ensuremath{\subp{\hat{\bm{V}}}{}{#2}{}{#1}}}
\newrobustcmd{\widehatbmVz}[2][]{\ensuremath{\subp{\widehat{\bm{V}}}{}{#2}{}{#1}}}
\newrobustcmd{\checkbmVz}[2][]{\ensuremath{\subp{\check{\bm{V}}}{}{#2}{}{#1}}}
\newrobustcmd{\tildebmVz}[2][]{\ensuremath{\subp{\tilde{\bm{V}}}{}{#2}{}{#1}}}
\newrobustcmd{\widetildebmVz}[2][]{\ensuremath{\subp{\widetilde{\bm{V}}}{}{#2}{}{#1}}}
\newrobustcmd{\acutebmVz}[2][]{\ensuremath{\subp{\acute{\bm{V}}}{}{#2}{}{#1}}}
\newrobustcmd{\gravebmVz}[2][]{\ensuremath{\subp{\grave{\bm{V}}}{}{#2}{}{#1}}}
\newrobustcmd{\dotbmVz}[2][]{\ensuremath{\subp{\dot{\bm{V}}}{}{#2}{}{#1}}}
\newrobustcmd{\ddotbmVz}[2][]{\ensuremath{\subp{\ddot{\bm{V}}}{}{#2}{}{#1}}}
\newrobustcmd{\brevebmVz}[2][]{\ensuremath{\subp{\breve{\bm{V}}}{}{#2}{}{#1}}}
\newrobustcmd{\barbmVz}[2][]{\ensuremath{\subp{\bar{\bm{V}}}{}{#2}{}{#1}}}
\newrobustcmd{\vecbmVz}[2][]{\ensuremath{\subp{\vec{\bm{V}}}{}{#2}{}{#1}}}
\newrobustcmd{\Wz}[2][]{\ensuremath{\subp{W}{}{#2}{}{#1}}}
\newrobustcmd{\hatWz}[2][]{\ensuremath{\subp{\hat{W}}{}{#2}{}{#1}}}
\newrobustcmd{\widehatWz}[2][]{\ensuremath{\subp{\widehat{W}}{}{#2}{}{#1}}}
\newrobustcmd{\checkWz}[2][]{\ensuremath{\subp{\check{W}}{}{#2}{}{#1}}}
\newrobustcmd{\tildeWz}[2][]{\ensuremath{\subp{\tilde{W}}{}{#2}{}{#1}}}
\newrobustcmd{\widetildeWz}[2][]{\ensuremath{\subp{\widetilde{W}}{}{#2}{}{#1}}}
\newrobustcmd{\acuteWz}[2][]{\ensuremath{\subp{\acute{W}}{}{#2}{}{#1}}}
\newrobustcmd{\graveWz}[2][]{\ensuremath{\subp{\grave{W}}{}{#2}{}{#1}}}
\newrobustcmd{\dotWz}[2][]{\ensuremath{\subp{\dot{W}}{}{#2}{}{#1}}}
\newrobustcmd{\ddotWz}[2][]{\ensuremath{\subp{\ddot{W}}{}{#2}{}{#1}}}
\newrobustcmd{\breveWz}[2][]{\ensuremath{\subp{\breve{W}}{}{#2}{}{#1}}}
\newrobustcmd{\barWz}[2][]{\ensuremath{\subp{\bar{W}}{}{#2}{}{#1}}}
\newrobustcmd{\vecWz}[2][]{\ensuremath{\subp{\vec{W}}{}{#2}{}{#1}}}
\newrobustcmd{\bmWz}[2][]{\ensuremath{\subp{\bm{W}}{}{#2}{}{#1}}}
\newrobustcmd{\hatbmWz}[2][]{\ensuremath{\subp{\hat{\bm{W}}}{}{#2}{}{#1}}}
\newrobustcmd{\widehatbmWz}[2][]{\ensuremath{\subp{\widehat{\bm{W}}}{}{#2}{}{#1}}}
\newrobustcmd{\checkbmWz}[2][]{\ensuremath{\subp{\check{\bm{W}}}{}{#2}{}{#1}}}
\newrobustcmd{\tildebmWz}[2][]{\ensuremath{\subp{\tilde{\bm{W}}}{}{#2}{}{#1}}}
\newrobustcmd{\widetildebmWz}[2][]{\ensuremath{\subp{\widetilde{\bm{W}}}{}{#2}{}{#1}}}
\newrobustcmd{\acutebmWz}[2][]{\ensuremath{\subp{\acute{\bm{W}}}{}{#2}{}{#1}}}
\newrobustcmd{\gravebmWz}[2][]{\ensuremath{\subp{\grave{\bm{W}}}{}{#2}{}{#1}}}
\newrobustcmd{\dotbmWz}[2][]{\ensuremath{\subp{\dot{\bm{W}}}{}{#2}{}{#1}}}
\newrobustcmd{\ddotbmWz}[2][]{\ensuremath{\subp{\ddot{\bm{W}}}{}{#2}{}{#1}}}
\newrobustcmd{\brevebmWz}[2][]{\ensuremath{\subp{\breve{\bm{W}}}{}{#2}{}{#1}}}
\newrobustcmd{\barbmWz}[2][]{\ensuremath{\subp{\bar{\bm{W}}}{}{#2}{}{#1}}}
\newrobustcmd{\vecbmWz}[2][]{\ensuremath{\subp{\vec{\bm{W}}}{}{#2}{}{#1}}}
\newrobustcmd{\Xz}[2][]{\ensuremath{\subp{X}{}{#2}{}{#1}}}
\newrobustcmd{\hatXz}[2][]{\ensuremath{\subp{\hat{X}}{}{#2}{}{#1}}}
\newrobustcmd{\widehatXz}[2][]{\ensuremath{\subp{\widehat{X}}{}{#2}{}{#1}}}
\newrobustcmd{\checkXz}[2][]{\ensuremath{\subp{\check{X}}{}{#2}{}{#1}}}
\newrobustcmd{\tildeXz}[2][]{\ensuremath{\subp{\tilde{X}}{}{#2}{}{#1}}}
\newrobustcmd{\widetildeXz}[2][]{\ensuremath{\subp{\widetilde{X}}{}{#2}{}{#1}}}
\newrobustcmd{\acuteXz}[2][]{\ensuremath{\subp{\acute{X}}{}{#2}{}{#1}}}
\newrobustcmd{\graveXz}[2][]{\ensuremath{\subp{\grave{X}}{}{#2}{}{#1}}}
\newrobustcmd{\dotXz}[2][]{\ensuremath{\subp{\dot{X}}{}{#2}{}{#1}}}
\newrobustcmd{\ddotXz}[2][]{\ensuremath{\subp{\ddot{X}}{}{#2}{}{#1}}}
\newrobustcmd{\breveXz}[2][]{\ensuremath{\subp{\breve{X}}{}{#2}{}{#1}}}
\newrobustcmd{\barXz}[2][]{\ensuremath{\subp{\bar{X}}{}{#2}{}{#1}}}
\newrobustcmd{\vecXz}[2][]{\ensuremath{\subp{\vec{X}}{}{#2}{}{#1}}}
\newrobustcmd{\bmXz}[2][]{\ensuremath{\subp{\bm{X}}{}{#2}{}{#1}}}
\newrobustcmd{\hatbmXz}[2][]{\ensuremath{\subp{\hat{\bm{X}}}{}{#2}{}{#1}}}
\newrobustcmd{\widehatbmXz}[2][]{\ensuremath{\subp{\widehat{\bm{X}}}{}{#2}{}{#1}}}
\newrobustcmd{\checkbmXz}[2][]{\ensuremath{\subp{\check{\bm{X}}}{}{#2}{}{#1}}}
\newrobustcmd{\tildebmXz}[2][]{\ensuremath{\subp{\tilde{\bm{X}}}{}{#2}{}{#1}}}
\newrobustcmd{\widetildebmXz}[2][]{\ensuremath{\subp{\widetilde{\bm{X}}}{}{#2}{}{#1}}}
\newrobustcmd{\acutebmXz}[2][]{\ensuremath{\subp{\acute{\bm{X}}}{}{#2}{}{#1}}}
\newrobustcmd{\gravebmXz}[2][]{\ensuremath{\subp{\grave{\bm{X}}}{}{#2}{}{#1}}}
\newrobustcmd{\dotbmXz}[2][]{\ensuremath{\subp{\dot{\bm{X}}}{}{#2}{}{#1}}}
\newrobustcmd{\ddotbmXz}[2][]{\ensuremath{\subp{\ddot{\bm{X}}}{}{#2}{}{#1}}}
\newrobustcmd{\brevebmXz}[2][]{\ensuremath{\subp{\breve{\bm{X}}}{}{#2}{}{#1}}}
\newrobustcmd{\barbmXz}[2][]{\ensuremath{\subp{\bar{\bm{X}}}{}{#2}{}{#1}}}
\newrobustcmd{\vecbmXz}[2][]{\ensuremath{\subp{\vec{\bm{X}}}{}{#2}{}{#1}}}
\newrobustcmd{\Yz}[2][]{\ensuremath{\subp{Y}{}{#2}{}{#1}}}
\newrobustcmd{\hatYz}[2][]{\ensuremath{\subp{\hat{Y}}{}{#2}{}{#1}}}
\newrobustcmd{\widehatYz}[2][]{\ensuremath{\subp{\widehat{Y}}{}{#2}{}{#1}}}
\newrobustcmd{\checkYz}[2][]{\ensuremath{\subp{\check{Y}}{}{#2}{}{#1}}}
\newrobustcmd{\tildeYz}[2][]{\ensuremath{\subp{\tilde{Y}}{}{#2}{}{#1}}}
\newrobustcmd{\widetildeYz}[2][]{\ensuremath{\subp{\widetilde{Y}}{}{#2}{}{#1}}}
\newrobustcmd{\acuteYz}[2][]{\ensuremath{\subp{\acute{Y}}{}{#2}{}{#1}}}
\newrobustcmd{\graveYz}[2][]{\ensuremath{\subp{\grave{Y}}{}{#2}{}{#1}}}
\newrobustcmd{\dotYz}[2][]{\ensuremath{\subp{\dot{Y}}{}{#2}{}{#1}}}
\newrobustcmd{\ddotYz}[2][]{\ensuremath{\subp{\ddot{Y}}{}{#2}{}{#1}}}
\newrobustcmd{\breveYz}[2][]{\ensuremath{\subp{\breve{Y}}{}{#2}{}{#1}}}
\newrobustcmd{\barYz}[2][]{\ensuremath{\subp{\bar{Y}}{}{#2}{}{#1}}}
\newrobustcmd{\vecYz}[2][]{\ensuremath{\subp{\vec{Y}}{}{#2}{}{#1}}}
\newrobustcmd{\bmYz}[2][]{\ensuremath{\subp{\bm{Y}}{}{#2}{}{#1}}}
\newrobustcmd{\hatbmYz}[2][]{\ensuremath{\subp{\hat{\bm{Y}}}{}{#2}{}{#1}}}
\newrobustcmd{\widehatbmYz}[2][]{\ensuremath{\subp{\widehat{\bm{Y}}}{}{#2}{}{#1}}}
\newrobustcmd{\checkbmYz}[2][]{\ensuremath{\subp{\check{\bm{Y}}}{}{#2}{}{#1}}}
\newrobustcmd{\tildebmYz}[2][]{\ensuremath{\subp{\tilde{\bm{Y}}}{}{#2}{}{#1}}}
\newrobustcmd{\widetildebmYz}[2][]{\ensuremath{\subp{\widetilde{\bm{Y}}}{}{#2}{}{#1}}}
\newrobustcmd{\acutebmYz}[2][]{\ensuremath{\subp{\acute{\bm{Y}}}{}{#2}{}{#1}}}
\newrobustcmd{\gravebmYz}[2][]{\ensuremath{\subp{\grave{\bm{Y}}}{}{#2}{}{#1}}}
\newrobustcmd{\dotbmYz}[2][]{\ensuremath{\subp{\dot{\bm{Y}}}{}{#2}{}{#1}}}
\newrobustcmd{\ddotbmYz}[2][]{\ensuremath{\subp{\ddot{\bm{Y}}}{}{#2}{}{#1}}}
\newrobustcmd{\brevebmYz}[2][]{\ensuremath{\subp{\breve{\bm{Y}}}{}{#2}{}{#1}}}
\newrobustcmd{\barbmYz}[2][]{\ensuremath{\subp{\bar{\bm{Y}}}{}{#2}{}{#1}}}
\newrobustcmd{\vecbmYz}[2][]{\ensuremath{\subp{\vec{\bm{Y}}}{}{#2}{}{#1}}}
\newrobustcmd{\Zz}[2][]{\ensuremath{\subp{Z}{}{#2}{}{#1}}}
\newrobustcmd{\hatZz}[2][]{\ensuremath{\subp{\hat{Z}}{}{#2}{}{#1}}}
\newrobustcmd{\widehatZz}[2][]{\ensuremath{\subp{\widehat{Z}}{}{#2}{}{#1}}}
\newrobustcmd{\checkZz}[2][]{\ensuremath{\subp{\check{Z}}{}{#2}{}{#1}}}
\newrobustcmd{\tildeZz}[2][]{\ensuremath{\subp{\tilde{Z}}{}{#2}{}{#1}}}
\newrobustcmd{\widetildeZz}[2][]{\ensuremath{\subp{\widetilde{Z}}{}{#2}{}{#1}}}
\newrobustcmd{\acuteZz}[2][]{\ensuremath{\subp{\acute{Z}}{}{#2}{}{#1}}}
\newrobustcmd{\graveZz}[2][]{\ensuremath{\subp{\grave{Z}}{}{#2}{}{#1}}}
\newrobustcmd{\dotZz}[2][]{\ensuremath{\subp{\dot{Z}}{}{#2}{}{#1}}}
\newrobustcmd{\ddotZz}[2][]{\ensuremath{\subp{\ddot{Z}}{}{#2}{}{#1}}}
\newrobustcmd{\breveZz}[2][]{\ensuremath{\subp{\breve{Z}}{}{#2}{}{#1}}}
\newrobustcmd{\barZz}[2][]{\ensuremath{\subp{\bar{Z}}{}{#2}{}{#1}}}
\newrobustcmd{\vecZz}[2][]{\ensuremath{\subp{\vec{Z}}{}{#2}{}{#1}}}
\newrobustcmd{\bmZz}[2][]{\ensuremath{\subp{\bm{Z}}{}{#2}{}{#1}}}
\newrobustcmd{\hatbmZz}[2][]{\ensuremath{\subp{\hat{\bm{Z}}}{}{#2}{}{#1}}}
\newrobustcmd{\widehatbmZz}[2][]{\ensuremath{\subp{\widehat{\bm{Z}}}{}{#2}{}{#1}}}
\newrobustcmd{\checkbmZz}[2][]{\ensuremath{\subp{\check{\bm{Z}}}{}{#2}{}{#1}}}
\newrobustcmd{\tildebmZz}[2][]{\ensuremath{\subp{\tilde{\bm{Z}}}{}{#2}{}{#1}}}
\newrobustcmd{\widetildebmZz}[2][]{\ensuremath{\subp{\widetilde{\bm{Z}}}{}{#2}{}{#1}}}
\newrobustcmd{\acutebmZz}[2][]{\ensuremath{\subp{\acute{\bm{Z}}}{}{#2}{}{#1}}}
\newrobustcmd{\gravebmZz}[2][]{\ensuremath{\subp{\grave{\bm{Z}}}{}{#2}{}{#1}}}
\newrobustcmd{\dotbmZz}[2][]{\ensuremath{\subp{\dot{\bm{Z}}}{}{#2}{}{#1}}}
\newrobustcmd{\ddotbmZz}[2][]{\ensuremath{\subp{\ddot{\bm{Z}}}{}{#2}{}{#1}}}
\newrobustcmd{\brevebmZz}[2][]{\ensuremath{\subp{\breve{\bm{Z}}}{}{#2}{}{#1}}}
\newrobustcmd{\barbmZz}[2][]{\ensuremath{\subp{\bar{\bm{Z}}}{}{#2}{}{#1}}}
\newrobustcmd{\vecbmZz}[2][]{\ensuremath{\subp{\vec{\bm{Z}}}{}{#2}{}{#1}}}
\newrobustcmd{\alphaz}[2][]{\ensuremath{\subp{\alpha}{}{#2}{}{#1}}}
\newrobustcmd{\hatalphaz}[2][]{\ensuremath{\subp{\hat{\alpha}}{}{#2}{}{#1}}}
\newrobustcmd{\widehatalphaz}[2][]{\ensuremath{\subp{\widehat{\alpha}}{}{#2}{}{#1}}}
\newrobustcmd{\checkalphaz}[2][]{\ensuremath{\subp{\check{\alpha}}{}{#2}{}{#1}}}
\newrobustcmd{\tildealphaz}[2][]{\ensuremath{\subp{\tilde{\alpha}}{}{#2}{}{#1}}}
\newrobustcmd{\widetildealphaz}[2][]{\ensuremath{\subp{\widetilde{\alpha}}{}{#2}{}{#1}}}
\newrobustcmd{\acutealphaz}[2][]{\ensuremath{\subp{\acute{\alpha}}{}{#2}{}{#1}}}
\newrobustcmd{\gravealphaz}[2][]{\ensuremath{\subp{\grave{\alpha}}{}{#2}{}{#1}}}
\newrobustcmd{\dotalphaz}[2][]{\ensuremath{\subp{\dot{\alpha}}{}{#2}{}{#1}}}
\newrobustcmd{\ddotalphaz}[2][]{\ensuremath{\subp{\ddot{\alpha}}{}{#2}{}{#1}}}
\newrobustcmd{\brevealphaz}[2][]{\ensuremath{\subp{\breve{\alpha}}{}{#2}{}{#1}}}
\newrobustcmd{\baralphaz}[2][]{\ensuremath{\subp{\bar{\alpha}}{}{#2}{}{#1}}}
\newrobustcmd{\vecalphaz}[2][]{\ensuremath{\subp{\vec{\alpha}}{}{#2}{}{#1}}}
\newrobustcmd{\bmalphaz}[2][]{\ensuremath{\subp{\bm{\alpha}}{}{#2}{}{#1}}}
\newrobustcmd{\hatbmalphaz}[2][]{\ensuremath{\subp{\hat{\bm{\alpha}}}{}{#2}{}{#1}}}
\newrobustcmd{\widehatbmalphaz}[2][]{\ensuremath{\subp{\widehat{\bm{\alpha}}}{}{#2}{}{#1}}}
\newrobustcmd{\checkbmalphaz}[2][]{\ensuremath{\subp{\check{\bm{\alpha}}}{}{#2}{}{#1}}}
\newrobustcmd{\tildebmalphaz}[2][]{\ensuremath{\subp{\tilde{\bm{\alpha}}}{}{#2}{}{#1}}}
\newrobustcmd{\widetildebmalphaz}[2][]{\ensuremath{\subp{\widetilde{\bm{\alpha}}}{}{#2}{}{#1}}}
\newrobustcmd{\acutebmalphaz}[2][]{\ensuremath{\subp{\acute{\bm{\alpha}}}{}{#2}{}{#1}}}
\newrobustcmd{\gravebmalphaz}[2][]{\ensuremath{\subp{\grave{\bm{\alpha}}}{}{#2}{}{#1}}}
\newrobustcmd{\dotbmalphaz}[2][]{\ensuremath{\subp{\dot{\bm{\alpha}}}{}{#2}{}{#1}}}
\newrobustcmd{\ddotbmalphaz}[2][]{\ensuremath{\subp{\ddot{\bm{\alpha}}}{}{#2}{}{#1}}}
\newrobustcmd{\brevebmalphaz}[2][]{\ensuremath{\subp{\breve{\bm{\alpha}}}{}{#2}{}{#1}}}
\newrobustcmd{\barbmalphaz}[2][]{\ensuremath{\subp{\bar{\bm{\alpha}}}{}{#2}{}{#1}}}
\newrobustcmd{\vecbmalphaz}[2][]{\ensuremath{\subp{\vec{\bm{\alpha}}}{}{#2}{}{#1}}}
\newrobustcmd{\betaz}[2][]{\ensuremath{\subp{\beta}{}{#2}{}{#1}}}
\newrobustcmd{\hatbetaz}[2][]{\ensuremath{\subp{\hat{\beta}}{}{#2}{}{#1}}}
\newrobustcmd{\widehatbetaz}[2][]{\ensuremath{\subp{\widehat{\beta}}{}{#2}{}{#1}}}
\newrobustcmd{\checkbetaz}[2][]{\ensuremath{\subp{\check{\beta}}{}{#2}{}{#1}}}
\newrobustcmd{\tildebetaz}[2][]{\ensuremath{\subp{\tilde{\beta}}{}{#2}{}{#1}}}
\newrobustcmd{\widetildebetaz}[2][]{\ensuremath{\subp{\widetilde{\beta}}{}{#2}{}{#1}}}
\newrobustcmd{\acutebetaz}[2][]{\ensuremath{\subp{\acute{\beta}}{}{#2}{}{#1}}}
\newrobustcmd{\gravebetaz}[2][]{\ensuremath{\subp{\grave{\beta}}{}{#2}{}{#1}}}
\newrobustcmd{\dotbetaz}[2][]{\ensuremath{\subp{\dot{\beta}}{}{#2}{}{#1}}}
\newrobustcmd{\ddotbetaz}[2][]{\ensuremath{\subp{\ddot{\beta}}{}{#2}{}{#1}}}
\newrobustcmd{\brevebetaz}[2][]{\ensuremath{\subp{\breve{\beta}}{}{#2}{}{#1}}}
\newrobustcmd{\barbetaz}[2][]{\ensuremath{\subp{\bar{\beta}}{}{#2}{}{#1}}}
\newrobustcmd{\vecbetaz}[2][]{\ensuremath{\subp{\vec{\beta}}{}{#2}{}{#1}}}
\newrobustcmd{\bmbetaz}[2][]{\ensuremath{\subp{\bm{\beta}}{}{#2}{}{#1}}}
\newrobustcmd{\hatbmbetaz}[2][]{\ensuremath{\subp{\hat{\bm{\beta}}}{}{#2}{}{#1}}}
\newrobustcmd{\widehatbmbetaz}[2][]{\ensuremath{\subp{\widehat{\bm{\beta}}}{}{#2}{}{#1}}}
\newrobustcmd{\checkbmbetaz}[2][]{\ensuremath{\subp{\check{\bm{\beta}}}{}{#2}{}{#1}}}
\newrobustcmd{\tildebmbetaz}[2][]{\ensuremath{\subp{\tilde{\bm{\beta}}}{}{#2}{}{#1}}}
\newrobustcmd{\widetildebmbetaz}[2][]{\ensuremath{\subp{\widetilde{\bm{\beta}}}{}{#2}{}{#1}}}
\newrobustcmd{\acutebmbetaz}[2][]{\ensuremath{\subp{\acute{\bm{\beta}}}{}{#2}{}{#1}}}
\newrobustcmd{\gravebmbetaz}[2][]{\ensuremath{\subp{\grave{\bm{\beta}}}{}{#2}{}{#1}}}
\newrobustcmd{\dotbmbetaz}[2][]{\ensuremath{\subp{\dot{\bm{\beta}}}{}{#2}{}{#1}}}
\newrobustcmd{\ddotbmbetaz}[2][]{\ensuremath{\subp{\ddot{\bm{\beta}}}{}{#2}{}{#1}}}
\newrobustcmd{\brevebmbetaz}[2][]{\ensuremath{\subp{\breve{\bm{\beta}}}{}{#2}{}{#1}}}
\newrobustcmd{\barbmbetaz}[2][]{\ensuremath{\subp{\bar{\bm{\beta}}}{}{#2}{}{#1}}}
\newrobustcmd{\vecbmbetaz}[2][]{\ensuremath{\subp{\vec{\bm{\beta}}}{}{#2}{}{#1}}}
\newrobustcmd{\gammaz}[2][]{\ensuremath{\subp{\gamma}{}{#2}{}{#1}}}
\newrobustcmd{\hatgammaz}[2][]{\ensuremath{\subp{\hat{\gamma}}{}{#2}{}{#1}}}
\newrobustcmd{\widehatgammaz}[2][]{\ensuremath{\subp{\widehat{\gamma}}{}{#2}{}{#1}}}
\newrobustcmd{\checkgammaz}[2][]{\ensuremath{\subp{\check{\gamma}}{}{#2}{}{#1}}}
\newrobustcmd{\tildegammaz}[2][]{\ensuremath{\subp{\tilde{\gamma}}{}{#2}{}{#1}}}
\newrobustcmd{\widetildegammaz}[2][]{\ensuremath{\subp{\widetilde{\gamma}}{}{#2}{}{#1}}}
\newrobustcmd{\acutegammaz}[2][]{\ensuremath{\subp{\acute{\gamma}}{}{#2}{}{#1}}}
\newrobustcmd{\gravegammaz}[2][]{\ensuremath{\subp{\grave{\gamma}}{}{#2}{}{#1}}}
\newrobustcmd{\dotgammaz}[2][]{\ensuremath{\subp{\dot{\gamma}}{}{#2}{}{#1}}}
\newrobustcmd{\ddotgammaz}[2][]{\ensuremath{\subp{\ddot{\gamma}}{}{#2}{}{#1}}}
\newrobustcmd{\brevegammaz}[2][]{\ensuremath{\subp{\breve{\gamma}}{}{#2}{}{#1}}}
\newrobustcmd{\bargammaz}[2][]{\ensuremath{\subp{\bar{\gamma}}{}{#2}{}{#1}}}
\newrobustcmd{\vecgammaz}[2][]{\ensuremath{\subp{\vec{\gamma}}{}{#2}{}{#1}}}
\newrobustcmd{\bmgammaz}[2][]{\ensuremath{\subp{\bm{\gamma}}{}{#2}{}{#1}}}
\newrobustcmd{\hatbmgammaz}[2][]{\ensuremath{\subp{\hat{\bm{\gamma}}}{}{#2}{}{#1}}}
\newrobustcmd{\widehatbmgammaz}[2][]{\ensuremath{\subp{\widehat{\bm{\gamma}}}{}{#2}{}{#1}}}
\newrobustcmd{\checkbmgammaz}[2][]{\ensuremath{\subp{\check{\bm{\gamma}}}{}{#2}{}{#1}}}
\newrobustcmd{\tildebmgammaz}[2][]{\ensuremath{\subp{\tilde{\bm{\gamma}}}{}{#2}{}{#1}}}
\newrobustcmd{\widetildebmgammaz}[2][]{\ensuremath{\subp{\widetilde{\bm{\gamma}}}{}{#2}{}{#1}}}
\newrobustcmd{\acutebmgammaz}[2][]{\ensuremath{\subp{\acute{\bm{\gamma}}}{}{#2}{}{#1}}}
\newrobustcmd{\gravebmgammaz}[2][]{\ensuremath{\subp{\grave{\bm{\gamma}}}{}{#2}{}{#1}}}
\newrobustcmd{\dotbmgammaz}[2][]{\ensuremath{\subp{\dot{\bm{\gamma}}}{}{#2}{}{#1}}}
\newrobustcmd{\ddotbmgammaz}[2][]{\ensuremath{\subp{\ddot{\bm{\gamma}}}{}{#2}{}{#1}}}
\newrobustcmd{\brevebmgammaz}[2][]{\ensuremath{\subp{\breve{\bm{\gamma}}}{}{#2}{}{#1}}}
\newrobustcmd{\barbmgammaz}[2][]{\ensuremath{\subp{\bar{\bm{\gamma}}}{}{#2}{}{#1}}}
\newrobustcmd{\vecbmgammaz}[2][]{\ensuremath{\subp{\vec{\bm{\gamma}}}{}{#2}{}{#1}}}
\newrobustcmd{\deltaz}[2][]{\ensuremath{\subp{\delta}{}{#2}{}{#1}}}
\newrobustcmd{\hatdeltaz}[2][]{\ensuremath{\subp{\hat{\delta}}{}{#2}{}{#1}}}
\newrobustcmd{\widehatdeltaz}[2][]{\ensuremath{\subp{\widehat{\delta}}{}{#2}{}{#1}}}
\newrobustcmd{\checkdeltaz}[2][]{\ensuremath{\subp{\check{\delta}}{}{#2}{}{#1}}}
\newrobustcmd{\tildedeltaz}[2][]{\ensuremath{\subp{\tilde{\delta}}{}{#2}{}{#1}}}
\newrobustcmd{\widetildedeltaz}[2][]{\ensuremath{\subp{\widetilde{\delta}}{}{#2}{}{#1}}}
\newrobustcmd{\acutedeltaz}[2][]{\ensuremath{\subp{\acute{\delta}}{}{#2}{}{#1}}}
\newrobustcmd{\gravedeltaz}[2][]{\ensuremath{\subp{\grave{\delta}}{}{#2}{}{#1}}}
\newrobustcmd{\dotdeltaz}[2][]{\ensuremath{\subp{\dot{\delta}}{}{#2}{}{#1}}}
\newrobustcmd{\ddotdeltaz}[2][]{\ensuremath{\subp{\ddot{\delta}}{}{#2}{}{#1}}}
\newrobustcmd{\brevedeltaz}[2][]{\ensuremath{\subp{\breve{\delta}}{}{#2}{}{#1}}}
\newrobustcmd{\bardeltaz}[2][]{\ensuremath{\subp{\bar{\delta}}{}{#2}{}{#1}}}
\newrobustcmd{\vecdeltaz}[2][]{\ensuremath{\subp{\vec{\delta}}{}{#2}{}{#1}}}
\newrobustcmd{\bmdeltaz}[2][]{\ensuremath{\subp{\bm{\delta}}{}{#2}{}{#1}}}
\newrobustcmd{\hatbmdeltaz}[2][]{\ensuremath{\subp{\hat{\bm{\delta}}}{}{#2}{}{#1}}}
\newrobustcmd{\widehatbmdeltaz}[2][]{\ensuremath{\subp{\widehat{\bm{\delta}}}{}{#2}{}{#1}}}
\newrobustcmd{\checkbmdeltaz}[2][]{\ensuremath{\subp{\check{\bm{\delta}}}{}{#2}{}{#1}}}
\newrobustcmd{\tildebmdeltaz}[2][]{\ensuremath{\subp{\tilde{\bm{\delta}}}{}{#2}{}{#1}}}
\newrobustcmd{\widetildebmdeltaz}[2][]{\ensuremath{\subp{\widetilde{\bm{\delta}}}{}{#2}{}{#1}}}
\newrobustcmd{\acutebmdeltaz}[2][]{\ensuremath{\subp{\acute{\bm{\delta}}}{}{#2}{}{#1}}}
\newrobustcmd{\gravebmdeltaz}[2][]{\ensuremath{\subp{\grave{\bm{\delta}}}{}{#2}{}{#1}}}
\newrobustcmd{\dotbmdeltaz}[2][]{\ensuremath{\subp{\dot{\bm{\delta}}}{}{#2}{}{#1}}}
\newrobustcmd{\ddotbmdeltaz}[2][]{\ensuremath{\subp{\ddot{\bm{\delta}}}{}{#2}{}{#1}}}
\newrobustcmd{\brevebmdeltaz}[2][]{\ensuremath{\subp{\breve{\bm{\delta}}}{}{#2}{}{#1}}}
\newrobustcmd{\barbmdeltaz}[2][]{\ensuremath{\subp{\bar{\bm{\delta}}}{}{#2}{}{#1}}}
\newrobustcmd{\vecbmdeltaz}[2][]{\ensuremath{\subp{\vec{\bm{\delta}}}{}{#2}{}{#1}}}
\newrobustcmd{\epsilonz}[2][]{\ensuremath{\subp{\epsilon}{}{#2}{}{#1}}}
\newrobustcmd{\hatepsilonz}[2][]{\ensuremath{\subp{\hat{\epsilon}}{}{#2}{}{#1}}}
\newrobustcmd{\widehatepsilonz}[2][]{\ensuremath{\subp{\widehat{\epsilon}}{}{#2}{}{#1}}}
\newrobustcmd{\checkepsilonz}[2][]{\ensuremath{\subp{\check{\epsilon}}{}{#2}{}{#1}}}
\newrobustcmd{\tildeepsilonz}[2][]{\ensuremath{\subp{\tilde{\epsilon}}{}{#2}{}{#1}}}
\newrobustcmd{\widetildeepsilonz}[2][]{\ensuremath{\subp{\widetilde{\epsilon}}{}{#2}{}{#1}}}
\newrobustcmd{\acuteepsilonz}[2][]{\ensuremath{\subp{\acute{\epsilon}}{}{#2}{}{#1}}}
\newrobustcmd{\graveepsilonz}[2][]{\ensuremath{\subp{\grave{\epsilon}}{}{#2}{}{#1}}}
\newrobustcmd{\dotepsilonz}[2][]{\ensuremath{\subp{\dot{\epsilon}}{}{#2}{}{#1}}}
\newrobustcmd{\ddotepsilonz}[2][]{\ensuremath{\subp{\ddot{\epsilon}}{}{#2}{}{#1}}}
\newrobustcmd{\breveepsilonz}[2][]{\ensuremath{\subp{\breve{\epsilon}}{}{#2}{}{#1}}}
\newrobustcmd{\barepsilonz}[2][]{\ensuremath{\subp{\bar{\epsilon}}{}{#2}{}{#1}}}
\newrobustcmd{\vecepsilonz}[2][]{\ensuremath{\subp{\vec{\epsilon}}{}{#2}{}{#1}}}
\newrobustcmd{\bmepsilonz}[2][]{\ensuremath{\subp{\bm{\epsilon}}{}{#2}{}{#1}}}
\newrobustcmd{\hatbmepsilonz}[2][]{\ensuremath{\subp{\hat{\bm{\epsilon}}}{}{#2}{}{#1}}}
\newrobustcmd{\widehatbmepsilonz}[2][]{\ensuremath{\subp{\widehat{\bm{\epsilon}}}{}{#2}{}{#1}}}
\newrobustcmd{\checkbmepsilonz}[2][]{\ensuremath{\subp{\check{\bm{\epsilon}}}{}{#2}{}{#1}}}
\newrobustcmd{\tildebmepsilonz}[2][]{\ensuremath{\subp{\tilde{\bm{\epsilon}}}{}{#2}{}{#1}}}
\newrobustcmd{\widetildebmepsilonz}[2][]{\ensuremath{\subp{\widetilde{\bm{\epsilon}}}{}{#2}{}{#1}}}
\newrobustcmd{\acutebmepsilonz}[2][]{\ensuremath{\subp{\acute{\bm{\epsilon}}}{}{#2}{}{#1}}}
\newrobustcmd{\gravebmepsilonz}[2][]{\ensuremath{\subp{\grave{\bm{\epsilon}}}{}{#2}{}{#1}}}
\newrobustcmd{\dotbmepsilonz}[2][]{\ensuremath{\subp{\dot{\bm{\epsilon}}}{}{#2}{}{#1}}}
\newrobustcmd{\ddotbmepsilonz}[2][]{\ensuremath{\subp{\ddot{\bm{\epsilon}}}{}{#2}{}{#1}}}
\newrobustcmd{\brevebmepsilonz}[2][]{\ensuremath{\subp{\breve{\bm{\epsilon}}}{}{#2}{}{#1}}}
\newrobustcmd{\barbmepsilonz}[2][]{\ensuremath{\subp{\bar{\bm{\epsilon}}}{}{#2}{}{#1}}}
\newrobustcmd{\vecbmepsilonz}[2][]{\ensuremath{\subp{\vec{\bm{\epsilon}}}{}{#2}{}{#1}}}
\newrobustcmd{\varepsilonz}[2][]{\ensuremath{\subp{\varepsilon}{}{#2}{}{#1}}}
\newrobustcmd{\hatvarepsilonz}[2][]{\ensuremath{\subp{\hat{\varepsilon}}{}{#2}{}{#1}}}
\newrobustcmd{\widehatvarepsilonz}[2][]{\ensuremath{\subp{\widehat{\varepsilon}}{}{#2}{}{#1}}}
\newrobustcmd{\checkvarepsilonz}[2][]{\ensuremath{\subp{\check{\varepsilon}}{}{#2}{}{#1}}}
\newrobustcmd{\tildevarepsilonz}[2][]{\ensuremath{\subp{\tilde{\varepsilon}}{}{#2}{}{#1}}}
\newrobustcmd{\widetildevarepsilonz}[2][]{\ensuremath{\subp{\widetilde{\varepsilon}}{}{#2}{}{#1}}}
\newrobustcmd{\acutevarepsilonz}[2][]{\ensuremath{\subp{\acute{\varepsilon}}{}{#2}{}{#1}}}
\newrobustcmd{\gravevarepsilonz}[2][]{\ensuremath{\subp{\grave{\varepsilon}}{}{#2}{}{#1}}}
\newrobustcmd{\dotvarepsilonz}[2][]{\ensuremath{\subp{\dot{\varepsilon}}{}{#2}{}{#1}}}
\newrobustcmd{\ddotvarepsilonz}[2][]{\ensuremath{\subp{\ddot{\varepsilon}}{}{#2}{}{#1}}}
\newrobustcmd{\brevevarepsilonz}[2][]{\ensuremath{\subp{\breve{\varepsilon}}{}{#2}{}{#1}}}
\newrobustcmd{\barvarepsilonz}[2][]{\ensuremath{\subp{\bar{\varepsilon}}{}{#2}{}{#1}}}
\newrobustcmd{\vecvarepsilonz}[2][]{\ensuremath{\subp{\vec{\varepsilon}}{}{#2}{}{#1}}}
\newrobustcmd{\bmvarepsilonz}[2][]{\ensuremath{\subp{\bm{\varepsilon}}{}{#2}{}{#1}}}
\newrobustcmd{\hatbmvarepsilonz}[2][]{\ensuremath{\subp{\hat{\bm{\varepsilon}}}{}{#2}{}{#1}}}
\newrobustcmd{\widehatbmvarepsilonz}[2][]{\ensuremath{\subp{\widehat{\bm{\varepsilon}}}{}{#2}{}{#1}}}
\newrobustcmd{\checkbmvarepsilonz}[2][]{\ensuremath{\subp{\check{\bm{\varepsilon}}}{}{#2}{}{#1}}}
\newrobustcmd{\tildebmvarepsilonz}[2][]{\ensuremath{\subp{\tilde{\bm{\varepsilon}}}{}{#2}{}{#1}}}
\newrobustcmd{\widetildebmvarepsilonz}[2][]{\ensuremath{\subp{\widetilde{\bm{\varepsilon}}}{}{#2}{}{#1}}}
\newrobustcmd{\acutebmvarepsilonz}[2][]{\ensuremath{\subp{\acute{\bm{\varepsilon}}}{}{#2}{}{#1}}}
\newrobustcmd{\gravebmvarepsilonz}[2][]{\ensuremath{\subp{\grave{\bm{\varepsilon}}}{}{#2}{}{#1}}}
\newrobustcmd{\dotbmvarepsilonz}[2][]{\ensuremath{\subp{\dot{\bm{\varepsilon}}}{}{#2}{}{#1}}}
\newrobustcmd{\ddotbmvarepsilonz}[2][]{\ensuremath{\subp{\ddot{\bm{\varepsilon}}}{}{#2}{}{#1}}}
\newrobustcmd{\brevebmvarepsilonz}[2][]{\ensuremath{\subp{\breve{\bm{\varepsilon}}}{}{#2}{}{#1}}}
\newrobustcmd{\barbmvarepsilonz}[2][]{\ensuremath{\subp{\bar{\bm{\varepsilon}}}{}{#2}{}{#1}}}
\newrobustcmd{\vecbmvarepsilonz}[2][]{\ensuremath{\subp{\vec{\bm{\varepsilon}}}{}{#2}{}{#1}}}
\newrobustcmd{\zetaz}[2][]{\ensuremath{\subp{\zeta}{}{#2}{}{#1}}}
\newrobustcmd{\hatzetaz}[2][]{\ensuremath{\subp{\hat{\zeta}}{}{#2}{}{#1}}}
\newrobustcmd{\widehatzetaz}[2][]{\ensuremath{\subp{\widehat{\zeta}}{}{#2}{}{#1}}}
\newrobustcmd{\checkzetaz}[2][]{\ensuremath{\subp{\check{\zeta}}{}{#2}{}{#1}}}
\newrobustcmd{\tildezetaz}[2][]{\ensuremath{\subp{\tilde{\zeta}}{}{#2}{}{#1}}}
\newrobustcmd{\widetildezetaz}[2][]{\ensuremath{\subp{\widetilde{\zeta}}{}{#2}{}{#1}}}
\newrobustcmd{\acutezetaz}[2][]{\ensuremath{\subp{\acute{\zeta}}{}{#2}{}{#1}}}
\newrobustcmd{\gravezetaz}[2][]{\ensuremath{\subp{\grave{\zeta}}{}{#2}{}{#1}}}
\newrobustcmd{\dotzetaz}[2][]{\ensuremath{\subp{\dot{\zeta}}{}{#2}{}{#1}}}
\newrobustcmd{\ddotzetaz}[2][]{\ensuremath{\subp{\ddot{\zeta}}{}{#2}{}{#1}}}
\newrobustcmd{\brevezetaz}[2][]{\ensuremath{\subp{\breve{\zeta}}{}{#2}{}{#1}}}
\newrobustcmd{\barzetaz}[2][]{\ensuremath{\subp{\bar{\zeta}}{}{#2}{}{#1}}}
\newrobustcmd{\veczetaz}[2][]{\ensuremath{\subp{\vec{\zeta}}{}{#2}{}{#1}}}
\newrobustcmd{\bmzetaz}[2][]{\ensuremath{\subp{\bm{\zeta}}{}{#2}{}{#1}}}
\newrobustcmd{\hatbmzetaz}[2][]{\ensuremath{\subp{\hat{\bm{\zeta}}}{}{#2}{}{#1}}}
\newrobustcmd{\widehatbmzetaz}[2][]{\ensuremath{\subp{\widehat{\bm{\zeta}}}{}{#2}{}{#1}}}
\newrobustcmd{\checkbmzetaz}[2][]{\ensuremath{\subp{\check{\bm{\zeta}}}{}{#2}{}{#1}}}
\newrobustcmd{\tildebmzetaz}[2][]{\ensuremath{\subp{\tilde{\bm{\zeta}}}{}{#2}{}{#1}}}
\newrobustcmd{\widetildebmzetaz}[2][]{\ensuremath{\subp{\widetilde{\bm{\zeta}}}{}{#2}{}{#1}}}
\newrobustcmd{\acutebmzetaz}[2][]{\ensuremath{\subp{\acute{\bm{\zeta}}}{}{#2}{}{#1}}}
\newrobustcmd{\gravebmzetaz}[2][]{\ensuremath{\subp{\grave{\bm{\zeta}}}{}{#2}{}{#1}}}
\newrobustcmd{\dotbmzetaz}[2][]{\ensuremath{\subp{\dot{\bm{\zeta}}}{}{#2}{}{#1}}}
\newrobustcmd{\ddotbmzetaz}[2][]{\ensuremath{\subp{\ddot{\bm{\zeta}}}{}{#2}{}{#1}}}
\newrobustcmd{\brevebmzetaz}[2][]{\ensuremath{\subp{\breve{\bm{\zeta}}}{}{#2}{}{#1}}}
\newrobustcmd{\barbmzetaz}[2][]{\ensuremath{\subp{\bar{\bm{\zeta}}}{}{#2}{}{#1}}}
\newrobustcmd{\vecbmzetaz}[2][]{\ensuremath{\subp{\vec{\bm{\zeta}}}{}{#2}{}{#1}}}
\newrobustcmd{\etaz}[2][]{\ensuremath{\subp{\eta}{}{#2}{}{#1}}}
\newrobustcmd{\hatetaz}[2][]{\ensuremath{\subp{\hat{\eta}}{}{#2}{}{#1}}}
\newrobustcmd{\widehatetaz}[2][]{\ensuremath{\subp{\widehat{\eta}}{}{#2}{}{#1}}}
\newrobustcmd{\checketaz}[2][]{\ensuremath{\subp{\check{\eta}}{}{#2}{}{#1}}}
\newrobustcmd{\tildeetaz}[2][]{\ensuremath{\subp{\tilde{\eta}}{}{#2}{}{#1}}}
\newrobustcmd{\widetildeetaz}[2][]{\ensuremath{\subp{\widetilde{\eta}}{}{#2}{}{#1}}}
\newrobustcmd{\acuteetaz}[2][]{\ensuremath{\subp{\acute{\eta}}{}{#2}{}{#1}}}
\newrobustcmd{\graveetaz}[2][]{\ensuremath{\subp{\grave{\eta}}{}{#2}{}{#1}}}
\newrobustcmd{\dotetaz}[2][]{\ensuremath{\subp{\dot{\eta}}{}{#2}{}{#1}}}
\newrobustcmd{\ddotetaz}[2][]{\ensuremath{\subp{\ddot{\eta}}{}{#2}{}{#1}}}
\newrobustcmd{\breveetaz}[2][]{\ensuremath{\subp{\breve{\eta}}{}{#2}{}{#1}}}
\newrobustcmd{\baretaz}[2][]{\ensuremath{\subp{\bar{\eta}}{}{#2}{}{#1}}}
\newrobustcmd{\vecetaz}[2][]{\ensuremath{\subp{\vec{\eta}}{}{#2}{}{#1}}}
\newrobustcmd{\bmetaz}[2][]{\ensuremath{\subp{\bm{\eta}}{}{#2}{}{#1}}}
\newrobustcmd{\hatbmetaz}[2][]{\ensuremath{\subp{\hat{\bm{\eta}}}{}{#2}{}{#1}}}
\newrobustcmd{\widehatbmetaz}[2][]{\ensuremath{\subp{\widehat{\bm{\eta}}}{}{#2}{}{#1}}}
\newrobustcmd{\checkbmetaz}[2][]{\ensuremath{\subp{\check{\bm{\eta}}}{}{#2}{}{#1}}}
\newrobustcmd{\tildebmetaz}[2][]{\ensuremath{\subp{\tilde{\bm{\eta}}}{}{#2}{}{#1}}}
\newrobustcmd{\widetildebmetaz}[2][]{\ensuremath{\subp{\widetilde{\bm{\eta}}}{}{#2}{}{#1}}}
\newrobustcmd{\acutebmetaz}[2][]{\ensuremath{\subp{\acute{\bm{\eta}}}{}{#2}{}{#1}}}
\newrobustcmd{\gravebmetaz}[2][]{\ensuremath{\subp{\grave{\bm{\eta}}}{}{#2}{}{#1}}}
\newrobustcmd{\dotbmetaz}[2][]{\ensuremath{\subp{\dot{\bm{\eta}}}{}{#2}{}{#1}}}
\newrobustcmd{\ddotbmetaz}[2][]{\ensuremath{\subp{\ddot{\bm{\eta}}}{}{#2}{}{#1}}}
\newrobustcmd{\brevebmetaz}[2][]{\ensuremath{\subp{\breve{\bm{\eta}}}{}{#2}{}{#1}}}
\newrobustcmd{\barbmetaz}[2][]{\ensuremath{\subp{\bar{\bm{\eta}}}{}{#2}{}{#1}}}
\newrobustcmd{\vecbmetaz}[2][]{\ensuremath{\subp{\vec{\bm{\eta}}}{}{#2}{}{#1}}}
\newrobustcmd{\thetaz}[2][]{\ensuremath{\subp{\theta}{}{#2}{}{#1}}}
\newrobustcmd{\hatthetaz}[2][]{\ensuremath{\subp{\hat{\theta}}{}{#2}{}{#1}}}
\newrobustcmd{\widehatthetaz}[2][]{\ensuremath{\subp{\widehat{\theta}}{}{#2}{}{#1}}}
\newrobustcmd{\checkthetaz}[2][]{\ensuremath{\subp{\check{\theta}}{}{#2}{}{#1}}}
\newrobustcmd{\tildethetaz}[2][]{\ensuremath{\subp{\tilde{\theta}}{}{#2}{}{#1}}}
\newrobustcmd{\widetildethetaz}[2][]{\ensuremath{\subp{\widetilde{\theta}}{}{#2}{}{#1}}}
\newrobustcmd{\acutethetaz}[2][]{\ensuremath{\subp{\acute{\theta}}{}{#2}{}{#1}}}
\newrobustcmd{\gravethetaz}[2][]{\ensuremath{\subp{\grave{\theta}}{}{#2}{}{#1}}}
\newrobustcmd{\dotthetaz}[2][]{\ensuremath{\subp{\dot{\theta}}{}{#2}{}{#1}}}
\newrobustcmd{\ddotthetaz}[2][]{\ensuremath{\subp{\ddot{\theta}}{}{#2}{}{#1}}}
\newrobustcmd{\brevethetaz}[2][]{\ensuremath{\subp{\breve{\theta}}{}{#2}{}{#1}}}
\newrobustcmd{\barthetaz}[2][]{\ensuremath{\subp{\bar{\theta}}{}{#2}{}{#1}}}
\newrobustcmd{\vecthetaz}[2][]{\ensuremath{\subp{\vec{\theta}}{}{#2}{}{#1}}}
\newrobustcmd{\bmthetaz}[2][]{\ensuremath{\subp{\bm{\theta}}{}{#2}{}{#1}}}
\newrobustcmd{\hatbmthetaz}[2][]{\ensuremath{\subp{\hat{\bm{\theta}}}{}{#2}{}{#1}}}
\newrobustcmd{\widehatbmthetaz}[2][]{\ensuremath{\subp{\widehat{\bm{\theta}}}{}{#2}{}{#1}}}
\newrobustcmd{\checkbmthetaz}[2][]{\ensuremath{\subp{\check{\bm{\theta}}}{}{#2}{}{#1}}}
\newrobustcmd{\tildebmthetaz}[2][]{\ensuremath{\subp{\tilde{\bm{\theta}}}{}{#2}{}{#1}}}
\newrobustcmd{\widetildebmthetaz}[2][]{\ensuremath{\subp{\widetilde{\bm{\theta}}}{}{#2}{}{#1}}}
\newrobustcmd{\acutebmthetaz}[2][]{\ensuremath{\subp{\acute{\bm{\theta}}}{}{#2}{}{#1}}}
\newrobustcmd{\gravebmthetaz}[2][]{\ensuremath{\subp{\grave{\bm{\theta}}}{}{#2}{}{#1}}}
\newrobustcmd{\dotbmthetaz}[2][]{\ensuremath{\subp{\dot{\bm{\theta}}}{}{#2}{}{#1}}}
\newrobustcmd{\ddotbmthetaz}[2][]{\ensuremath{\subp{\ddot{\bm{\theta}}}{}{#2}{}{#1}}}
\newrobustcmd{\brevebmthetaz}[2][]{\ensuremath{\subp{\breve{\bm{\theta}}}{}{#2}{}{#1}}}
\newrobustcmd{\barbmthetaz}[2][]{\ensuremath{\subp{\bar{\bm{\theta}}}{}{#2}{}{#1}}}
\newrobustcmd{\vecbmthetaz}[2][]{\ensuremath{\subp{\vec{\bm{\theta}}}{}{#2}{}{#1}}}
\newrobustcmd{\varthetaz}[2][]{\ensuremath{\subp{\vartheta}{}{#2}{}{#1}}}
\newrobustcmd{\hatvarthetaz}[2][]{\ensuremath{\subp{\hat{\vartheta}}{}{#2}{}{#1}}}
\newrobustcmd{\widehatvarthetaz}[2][]{\ensuremath{\subp{\widehat{\vartheta}}{}{#2}{}{#1}}}
\newrobustcmd{\checkvarthetaz}[2][]{\ensuremath{\subp{\check{\vartheta}}{}{#2}{}{#1}}}
\newrobustcmd{\tildevarthetaz}[2][]{\ensuremath{\subp{\tilde{\vartheta}}{}{#2}{}{#1}}}
\newrobustcmd{\widetildevarthetaz}[2][]{\ensuremath{\subp{\widetilde{\vartheta}}{}{#2}{}{#1}}}
\newrobustcmd{\acutevarthetaz}[2][]{\ensuremath{\subp{\acute{\vartheta}}{}{#2}{}{#1}}}
\newrobustcmd{\gravevarthetaz}[2][]{\ensuremath{\subp{\grave{\vartheta}}{}{#2}{}{#1}}}
\newrobustcmd{\dotvarthetaz}[2][]{\ensuremath{\subp{\dot{\vartheta}}{}{#2}{}{#1}}}
\newrobustcmd{\ddotvarthetaz}[2][]{\ensuremath{\subp{\ddot{\vartheta}}{}{#2}{}{#1}}}
\newrobustcmd{\brevevarthetaz}[2][]{\ensuremath{\subp{\breve{\vartheta}}{}{#2}{}{#1}}}
\newrobustcmd{\barvarthetaz}[2][]{\ensuremath{\subp{\bar{\vartheta}}{}{#2}{}{#1}}}
\newrobustcmd{\vecvarthetaz}[2][]{\ensuremath{\subp{\vec{\vartheta}}{}{#2}{}{#1}}}
\newrobustcmd{\bmvarthetaz}[2][]{\ensuremath{\subp{\bm{\vartheta}}{}{#2}{}{#1}}}
\newrobustcmd{\hatbmvarthetaz}[2][]{\ensuremath{\subp{\hat{\bm{\vartheta}}}{}{#2}{}{#1}}}
\newrobustcmd{\widehatbmvarthetaz}[2][]{\ensuremath{\subp{\widehat{\bm{\vartheta}}}{}{#2}{}{#1}}}
\newrobustcmd{\checkbmvarthetaz}[2][]{\ensuremath{\subp{\check{\bm{\vartheta}}}{}{#2}{}{#1}}}
\newrobustcmd{\tildebmvarthetaz}[2][]{\ensuremath{\subp{\tilde{\bm{\vartheta}}}{}{#2}{}{#1}}}
\newrobustcmd{\widetildebmvarthetaz}[2][]{\ensuremath{\subp{\widetilde{\bm{\vartheta}}}{}{#2}{}{#1}}}
\newrobustcmd{\acutebmvarthetaz}[2][]{\ensuremath{\subp{\acute{\bm{\vartheta}}}{}{#2}{}{#1}}}
\newrobustcmd{\gravebmvarthetaz}[2][]{\ensuremath{\subp{\grave{\bm{\vartheta}}}{}{#2}{}{#1}}}
\newrobustcmd{\dotbmvarthetaz}[2][]{\ensuremath{\subp{\dot{\bm{\vartheta}}}{}{#2}{}{#1}}}
\newrobustcmd{\ddotbmvarthetaz}[2][]{\ensuremath{\subp{\ddot{\bm{\vartheta}}}{}{#2}{}{#1}}}
\newrobustcmd{\brevebmvarthetaz}[2][]{\ensuremath{\subp{\breve{\bm{\vartheta}}}{}{#2}{}{#1}}}
\newrobustcmd{\barbmvarthetaz}[2][]{\ensuremath{\subp{\bar{\bm{\vartheta}}}{}{#2}{}{#1}}}
\newrobustcmd{\vecbmvarthetaz}[2][]{\ensuremath{\subp{\vec{\bm{\vartheta}}}{}{#2}{}{#1}}}
\newrobustcmd{\iotaz}[2][]{\ensuremath{\subp{\iota}{}{#2}{}{#1}}}
\newrobustcmd{\hatiotaz}[2][]{\ensuremath{\subp{\hat{\iota}}{}{#2}{}{#1}}}
\newrobustcmd{\widehatiotaz}[2][]{\ensuremath{\subp{\widehat{\iota}}{}{#2}{}{#1}}}
\newrobustcmd{\checkiotaz}[2][]{\ensuremath{\subp{\check{\iota}}{}{#2}{}{#1}}}
\newrobustcmd{\tildeiotaz}[2][]{\ensuremath{\subp{\tilde{\iota}}{}{#2}{}{#1}}}
\newrobustcmd{\widetildeiotaz}[2][]{\ensuremath{\subp{\widetilde{\iota}}{}{#2}{}{#1}}}
\newrobustcmd{\acuteiotaz}[2][]{\ensuremath{\subp{\acute{\iota}}{}{#2}{}{#1}}}
\newrobustcmd{\graveiotaz}[2][]{\ensuremath{\subp{\grave{\iota}}{}{#2}{}{#1}}}
\newrobustcmd{\dotiotaz}[2][]{\ensuremath{\subp{\dot{\iota}}{}{#2}{}{#1}}}
\newrobustcmd{\ddotiotaz}[2][]{\ensuremath{\subp{\ddot{\iota}}{}{#2}{}{#1}}}
\newrobustcmd{\breveiotaz}[2][]{\ensuremath{\subp{\breve{\iota}}{}{#2}{}{#1}}}
\newrobustcmd{\bariotaz}[2][]{\ensuremath{\subp{\bar{\iota}}{}{#2}{}{#1}}}
\newrobustcmd{\veciotaz}[2][]{\ensuremath{\subp{\vec{\iota}}{}{#2}{}{#1}}}
\newrobustcmd{\bmiotaz}[2][]{\ensuremath{\subp{\bm{\iota}}{}{#2}{}{#1}}}
\newrobustcmd{\hatbmiotaz}[2][]{\ensuremath{\subp{\hat{\bm{\iota}}}{}{#2}{}{#1}}}
\newrobustcmd{\widehatbmiotaz}[2][]{\ensuremath{\subp{\widehat{\bm{\iota}}}{}{#2}{}{#1}}}
\newrobustcmd{\checkbmiotaz}[2][]{\ensuremath{\subp{\check{\bm{\iota}}}{}{#2}{}{#1}}}
\newrobustcmd{\tildebmiotaz}[2][]{\ensuremath{\subp{\tilde{\bm{\iota}}}{}{#2}{}{#1}}}
\newrobustcmd{\widetildebmiotaz}[2][]{\ensuremath{\subp{\widetilde{\bm{\iota}}}{}{#2}{}{#1}}}
\newrobustcmd{\acutebmiotaz}[2][]{\ensuremath{\subp{\acute{\bm{\iota}}}{}{#2}{}{#1}}}
\newrobustcmd{\gravebmiotaz}[2][]{\ensuremath{\subp{\grave{\bm{\iota}}}{}{#2}{}{#1}}}
\newrobustcmd{\dotbmiotaz}[2][]{\ensuremath{\subp{\dot{\bm{\iota}}}{}{#2}{}{#1}}}
\newrobustcmd{\ddotbmiotaz}[2][]{\ensuremath{\subp{\ddot{\bm{\iota}}}{}{#2}{}{#1}}}
\newrobustcmd{\brevebmiotaz}[2][]{\ensuremath{\subp{\breve{\bm{\iota}}}{}{#2}{}{#1}}}
\newrobustcmd{\barbmiotaz}[2][]{\ensuremath{\subp{\bar{\bm{\iota}}}{}{#2}{}{#1}}}
\newrobustcmd{\vecbmiotaz}[2][]{\ensuremath{\subp{\vec{\bm{\iota}}}{}{#2}{}{#1}}}
\newrobustcmd{\kappaz}[2][]{\ensuremath{\subp{\kappa}{}{#2}{}{#1}}}
\newrobustcmd{\hatkappaz}[2][]{\ensuremath{\subp{\hat{\kappa}}{}{#2}{}{#1}}}
\newrobustcmd{\widehatkappaz}[2][]{\ensuremath{\subp{\widehat{\kappa}}{}{#2}{}{#1}}}
\newrobustcmd{\checkkappaz}[2][]{\ensuremath{\subp{\check{\kappa}}{}{#2}{}{#1}}}
\newrobustcmd{\tildekappaz}[2][]{\ensuremath{\subp{\tilde{\kappa}}{}{#2}{}{#1}}}
\newrobustcmd{\widetildekappaz}[2][]{\ensuremath{\subp{\widetilde{\kappa}}{}{#2}{}{#1}}}
\newrobustcmd{\acutekappaz}[2][]{\ensuremath{\subp{\acute{\kappa}}{}{#2}{}{#1}}}
\newrobustcmd{\gravekappaz}[2][]{\ensuremath{\subp{\grave{\kappa}}{}{#2}{}{#1}}}
\newrobustcmd{\dotkappaz}[2][]{\ensuremath{\subp{\dot{\kappa}}{}{#2}{}{#1}}}
\newrobustcmd{\ddotkappaz}[2][]{\ensuremath{\subp{\ddot{\kappa}}{}{#2}{}{#1}}}
\newrobustcmd{\brevekappaz}[2][]{\ensuremath{\subp{\breve{\kappa}}{}{#2}{}{#1}}}
\newrobustcmd{\barkappaz}[2][]{\ensuremath{\subp{\bar{\kappa}}{}{#2}{}{#1}}}
\newrobustcmd{\veckappaz}[2][]{\ensuremath{\subp{\vec{\kappa}}{}{#2}{}{#1}}}
\newrobustcmd{\bmkappaz}[2][]{\ensuremath{\subp{\bm{\kappa}}{}{#2}{}{#1}}}
\newrobustcmd{\hatbmkappaz}[2][]{\ensuremath{\subp{\hat{\bm{\kappa}}}{}{#2}{}{#1}}}
\newrobustcmd{\widehatbmkappaz}[2][]{\ensuremath{\subp{\widehat{\bm{\kappa}}}{}{#2}{}{#1}}}
\newrobustcmd{\checkbmkappaz}[2][]{\ensuremath{\subp{\check{\bm{\kappa}}}{}{#2}{}{#1}}}
\newrobustcmd{\tildebmkappaz}[2][]{\ensuremath{\subp{\tilde{\bm{\kappa}}}{}{#2}{}{#1}}}
\newrobustcmd{\widetildebmkappaz}[2][]{\ensuremath{\subp{\widetilde{\bm{\kappa}}}{}{#2}{}{#1}}}
\newrobustcmd{\acutebmkappaz}[2][]{\ensuremath{\subp{\acute{\bm{\kappa}}}{}{#2}{}{#1}}}
\newrobustcmd{\gravebmkappaz}[2][]{\ensuremath{\subp{\grave{\bm{\kappa}}}{}{#2}{}{#1}}}
\newrobustcmd{\dotbmkappaz}[2][]{\ensuremath{\subp{\dot{\bm{\kappa}}}{}{#2}{}{#1}}}
\newrobustcmd{\ddotbmkappaz}[2][]{\ensuremath{\subp{\ddot{\bm{\kappa}}}{}{#2}{}{#1}}}
\newrobustcmd{\brevebmkappaz}[2][]{\ensuremath{\subp{\breve{\bm{\kappa}}}{}{#2}{}{#1}}}
\newrobustcmd{\barbmkappaz}[2][]{\ensuremath{\subp{\bar{\bm{\kappa}}}{}{#2}{}{#1}}}
\newrobustcmd{\vecbmkappaz}[2][]{\ensuremath{\subp{\vec{\bm{\kappa}}}{}{#2}{}{#1}}}
\newrobustcmd{\varkappaz}[2][]{\ensuremath{\subp{\varkappa}{}{#2}{}{#1}}}
\newrobustcmd{\hatvarkappaz}[2][]{\ensuremath{\subp{\hat{\varkappa}}{}{#2}{}{#1}}}
\newrobustcmd{\widehatvarkappaz}[2][]{\ensuremath{\subp{\widehat{\varkappa}}{}{#2}{}{#1}}}
\newrobustcmd{\checkvarkappaz}[2][]{\ensuremath{\subp{\check{\varkappa}}{}{#2}{}{#1}}}
\newrobustcmd{\tildevarkappaz}[2][]{\ensuremath{\subp{\tilde{\varkappa}}{}{#2}{}{#1}}}
\newrobustcmd{\widetildevarkappaz}[2][]{\ensuremath{\subp{\widetilde{\varkappa}}{}{#2}{}{#1}}}
\newrobustcmd{\acutevarkappaz}[2][]{\ensuremath{\subp{\acute{\varkappa}}{}{#2}{}{#1}}}
\newrobustcmd{\gravevarkappaz}[2][]{\ensuremath{\subp{\grave{\varkappa}}{}{#2}{}{#1}}}
\newrobustcmd{\dotvarkappaz}[2][]{\ensuremath{\subp{\dot{\varkappa}}{}{#2}{}{#1}}}
\newrobustcmd{\ddotvarkappaz}[2][]{\ensuremath{\subp{\ddot{\varkappa}}{}{#2}{}{#1}}}
\newrobustcmd{\brevevarkappaz}[2][]{\ensuremath{\subp{\breve{\varkappa}}{}{#2}{}{#1}}}
\newrobustcmd{\barvarkappaz}[2][]{\ensuremath{\subp{\bar{\varkappa}}{}{#2}{}{#1}}}
\newrobustcmd{\vecvarkappaz}[2][]{\ensuremath{\subp{\vec{\varkappa}}{}{#2}{}{#1}}}
\newrobustcmd{\bmvarkappaz}[2][]{\ensuremath{\subp{\bm{\varkappa}}{}{#2}{}{#1}}}
\newrobustcmd{\hatbmvarkappaz}[2][]{\ensuremath{\subp{\hat{\bm{\varkappa}}}{}{#2}{}{#1}}}
\newrobustcmd{\widehatbmvarkappaz}[2][]{\ensuremath{\subp{\widehat{\bm{\varkappa}}}{}{#2}{}{#1}}}
\newrobustcmd{\checkbmvarkappaz}[2][]{\ensuremath{\subp{\check{\bm{\varkappa}}}{}{#2}{}{#1}}}
\newrobustcmd{\tildebmvarkappaz}[2][]{\ensuremath{\subp{\tilde{\bm{\varkappa}}}{}{#2}{}{#1}}}
\newrobustcmd{\widetildebmvarkappaz}[2][]{\ensuremath{\subp{\widetilde{\bm{\varkappa}}}{}{#2}{}{#1}}}
\newrobustcmd{\acutebmvarkappaz}[2][]{\ensuremath{\subp{\acute{\bm{\varkappa}}}{}{#2}{}{#1}}}
\newrobustcmd{\gravebmvarkappaz}[2][]{\ensuremath{\subp{\grave{\bm{\varkappa}}}{}{#2}{}{#1}}}
\newrobustcmd{\dotbmvarkappaz}[2][]{\ensuremath{\subp{\dot{\bm{\varkappa}}}{}{#2}{}{#1}}}
\newrobustcmd{\ddotbmvarkappaz}[2][]{\ensuremath{\subp{\ddot{\bm{\varkappa}}}{}{#2}{}{#1}}}
\newrobustcmd{\brevebmvarkappaz}[2][]{\ensuremath{\subp{\breve{\bm{\varkappa}}}{}{#2}{}{#1}}}
\newrobustcmd{\barbmvarkappaz}[2][]{\ensuremath{\subp{\bar{\bm{\varkappa}}}{}{#2}{}{#1}}}
\newrobustcmd{\vecbmvarkappaz}[2][]{\ensuremath{\subp{\vec{\bm{\varkappa}}}{}{#2}{}{#1}}}
\newrobustcmd{\lambdaz}[2][]{\ensuremath{\subp{\lambda}{}{#2}{}{#1}}}
\newrobustcmd{\hatlambdaz}[2][]{\ensuremath{\subp{\hat{\lambda}}{}{#2}{}{#1}}}
\newrobustcmd{\widehatlambdaz}[2][]{\ensuremath{\subp{\widehat{\lambda}}{}{#2}{}{#1}}}
\newrobustcmd{\checklambdaz}[2][]{\ensuremath{\subp{\check{\lambda}}{}{#2}{}{#1}}}
\newrobustcmd{\tildelambdaz}[2][]{\ensuremath{\subp{\tilde{\lambda}}{}{#2}{}{#1}}}
\newrobustcmd{\widetildelambdaz}[2][]{\ensuremath{\subp{\widetilde{\lambda}}{}{#2}{}{#1}}}
\newrobustcmd{\acutelambdaz}[2][]{\ensuremath{\subp{\acute{\lambda}}{}{#2}{}{#1}}}
\newrobustcmd{\gravelambdaz}[2][]{\ensuremath{\subp{\grave{\lambda}}{}{#2}{}{#1}}}
\newrobustcmd{\dotlambdaz}[2][]{\ensuremath{\subp{\dot{\lambda}}{}{#2}{}{#1}}}
\newrobustcmd{\ddotlambdaz}[2][]{\ensuremath{\subp{\ddot{\lambda}}{}{#2}{}{#1}}}
\newrobustcmd{\brevelambdaz}[2][]{\ensuremath{\subp{\breve{\lambda}}{}{#2}{}{#1}}}
\newrobustcmd{\barlambdaz}[2][]{\ensuremath{\subp{\bar{\lambda}}{}{#2}{}{#1}}}
\newrobustcmd{\veclambdaz}[2][]{\ensuremath{\subp{\vec{\lambda}}{}{#2}{}{#1}}}
\newrobustcmd{\bmlambdaz}[2][]{\ensuremath{\subp{\bm{\lambda}}{}{#2}{}{#1}}}
\newrobustcmd{\hatbmlambdaz}[2][]{\ensuremath{\subp{\hat{\bm{\lambda}}}{}{#2}{}{#1}}}
\newrobustcmd{\widehatbmlambdaz}[2][]{\ensuremath{\subp{\widehat{\bm{\lambda}}}{}{#2}{}{#1}}}
\newrobustcmd{\checkbmlambdaz}[2][]{\ensuremath{\subp{\check{\bm{\lambda}}}{}{#2}{}{#1}}}
\newrobustcmd{\tildebmlambdaz}[2][]{\ensuremath{\subp{\tilde{\bm{\lambda}}}{}{#2}{}{#1}}}
\newrobustcmd{\widetildebmlambdaz}[2][]{\ensuremath{\subp{\widetilde{\bm{\lambda}}}{}{#2}{}{#1}}}
\newrobustcmd{\acutebmlambdaz}[2][]{\ensuremath{\subp{\acute{\bm{\lambda}}}{}{#2}{}{#1}}}
\newrobustcmd{\gravebmlambdaz}[2][]{\ensuremath{\subp{\grave{\bm{\lambda}}}{}{#2}{}{#1}}}
\newrobustcmd{\dotbmlambdaz}[2][]{\ensuremath{\subp{\dot{\bm{\lambda}}}{}{#2}{}{#1}}}
\newrobustcmd{\ddotbmlambdaz}[2][]{\ensuremath{\subp{\ddot{\bm{\lambda}}}{}{#2}{}{#1}}}
\newrobustcmd{\brevebmlambdaz}[2][]{\ensuremath{\subp{\breve{\bm{\lambda}}}{}{#2}{}{#1}}}
\newrobustcmd{\barbmlambdaz}[2][]{\ensuremath{\subp{\bar{\bm{\lambda}}}{}{#2}{}{#1}}}
\newrobustcmd{\vecbmlambdaz}[2][]{\ensuremath{\subp{\vec{\bm{\lambda}}}{}{#2}{}{#1}}}
\newrobustcmd{\muz}[2][]{\ensuremath{\subp{\mu}{}{#2}{}{#1}}}
\newrobustcmd{\hatmuz}[2][]{\ensuremath{\subp{\hat{\mu}}{}{#2}{}{#1}}}
\newrobustcmd{\widehatmuz}[2][]{\ensuremath{\subp{\widehat{\mu}}{}{#2}{}{#1}}}
\newrobustcmd{\checkmuz}[2][]{\ensuremath{\subp{\check{\mu}}{}{#2}{}{#1}}}
\newrobustcmd{\tildemuz}[2][]{\ensuremath{\subp{\tilde{\mu}}{}{#2}{}{#1}}}
\newrobustcmd{\widetildemuz}[2][]{\ensuremath{\subp{\widetilde{\mu}}{}{#2}{}{#1}}}
\newrobustcmd{\acutemuz}[2][]{\ensuremath{\subp{\acute{\mu}}{}{#2}{}{#1}}}
\newrobustcmd{\gravemuz}[2][]{\ensuremath{\subp{\grave{\mu}}{}{#2}{}{#1}}}
\newrobustcmd{\dotmuz}[2][]{\ensuremath{\subp{\dot{\mu}}{}{#2}{}{#1}}}
\newrobustcmd{\ddotmuz}[2][]{\ensuremath{\subp{\ddot{\mu}}{}{#2}{}{#1}}}
\newrobustcmd{\brevemuz}[2][]{\ensuremath{\subp{\breve{\mu}}{}{#2}{}{#1}}}
\newrobustcmd{\barmuz}[2][]{\ensuremath{\subp{\bar{\mu}}{}{#2}{}{#1}}}
\newrobustcmd{\vecmuz}[2][]{\ensuremath{\subp{\vec{\mu}}{}{#2}{}{#1}}}
\newrobustcmd{\bmmuz}[2][]{\ensuremath{\subp{\bm{\mu}}{}{#2}{}{#1}}}
\newrobustcmd{\hatbmmuz}[2][]{\ensuremath{\subp{\hat{\bm{\mu}}}{}{#2}{}{#1}}}
\newrobustcmd{\widehatbmmuz}[2][]{\ensuremath{\subp{\widehat{\bm{\mu}}}{}{#2}{}{#1}}}
\newrobustcmd{\checkbmmuz}[2][]{\ensuremath{\subp{\check{\bm{\mu}}}{}{#2}{}{#1}}}
\newrobustcmd{\tildebmmuz}[2][]{\ensuremath{\subp{\tilde{\bm{\mu}}}{}{#2}{}{#1}}}
\newrobustcmd{\widetildebmmuz}[2][]{\ensuremath{\subp{\widetilde{\bm{\mu}}}{}{#2}{}{#1}}}
\newrobustcmd{\acutebmmuz}[2][]{\ensuremath{\subp{\acute{\bm{\mu}}}{}{#2}{}{#1}}}
\newrobustcmd{\gravebmmuz}[2][]{\ensuremath{\subp{\grave{\bm{\mu}}}{}{#2}{}{#1}}}
\newrobustcmd{\dotbmmuz}[2][]{\ensuremath{\subp{\dot{\bm{\mu}}}{}{#2}{}{#1}}}
\newrobustcmd{\ddotbmmuz}[2][]{\ensuremath{\subp{\ddot{\bm{\mu}}}{}{#2}{}{#1}}}
\newrobustcmd{\brevebmmuz}[2][]{\ensuremath{\subp{\breve{\bm{\mu}}}{}{#2}{}{#1}}}
\newrobustcmd{\barbmmuz}[2][]{\ensuremath{\subp{\bar{\bm{\mu}}}{}{#2}{}{#1}}}
\newrobustcmd{\vecbmmuz}[2][]{\ensuremath{\subp{\vec{\bm{\mu}}}{}{#2}{}{#1}}}
\newrobustcmd{\nuz}[2][]{\ensuremath{\subp{\nu}{}{#2}{}{#1}}}
\newrobustcmd{\hatnuz}[2][]{\ensuremath{\subp{\hat{\nu}}{}{#2}{}{#1}}}
\newrobustcmd{\widehatnuz}[2][]{\ensuremath{\subp{\widehat{\nu}}{}{#2}{}{#1}}}
\newrobustcmd{\checknuz}[2][]{\ensuremath{\subp{\check{\nu}}{}{#2}{}{#1}}}
\newrobustcmd{\tildenuz}[2][]{\ensuremath{\subp{\tilde{\nu}}{}{#2}{}{#1}}}
\newrobustcmd{\widetildenuz}[2][]{\ensuremath{\subp{\widetilde{\nu}}{}{#2}{}{#1}}}
\newrobustcmd{\acutenuz}[2][]{\ensuremath{\subp{\acute{\nu}}{}{#2}{}{#1}}}
\newrobustcmd{\gravenuz}[2][]{\ensuremath{\subp{\grave{\nu}}{}{#2}{}{#1}}}
\newrobustcmd{\dotnuz}[2][]{\ensuremath{\subp{\dot{\nu}}{}{#2}{}{#1}}}
\newrobustcmd{\ddotnuz}[2][]{\ensuremath{\subp{\ddot{\nu}}{}{#2}{}{#1}}}
\newrobustcmd{\brevenuz}[2][]{\ensuremath{\subp{\breve{\nu}}{}{#2}{}{#1}}}
\newrobustcmd{\barnuz}[2][]{\ensuremath{\subp{\bar{\nu}}{}{#2}{}{#1}}}
\newrobustcmd{\vecnuz}[2][]{\ensuremath{\subp{\vec{\nu}}{}{#2}{}{#1}}}
\newrobustcmd{\bmnuz}[2][]{\ensuremath{\subp{\bm{\nu}}{}{#2}{}{#1}}}
\newrobustcmd{\hatbmnuz}[2][]{\ensuremath{\subp{\hat{\bm{\nu}}}{}{#2}{}{#1}}}
\newrobustcmd{\widehatbmnuz}[2][]{\ensuremath{\subp{\widehat{\bm{\nu}}}{}{#2}{}{#1}}}
\newrobustcmd{\checkbmnuz}[2][]{\ensuremath{\subp{\check{\bm{\nu}}}{}{#2}{}{#1}}}
\newrobustcmd{\tildebmnuz}[2][]{\ensuremath{\subp{\tilde{\bm{\nu}}}{}{#2}{}{#1}}}
\newrobustcmd{\widetildebmnuz}[2][]{\ensuremath{\subp{\widetilde{\bm{\nu}}}{}{#2}{}{#1}}}
\newrobustcmd{\acutebmnuz}[2][]{\ensuremath{\subp{\acute{\bm{\nu}}}{}{#2}{}{#1}}}
\newrobustcmd{\gravebmnuz}[2][]{\ensuremath{\subp{\grave{\bm{\nu}}}{}{#2}{}{#1}}}
\newrobustcmd{\dotbmnuz}[2][]{\ensuremath{\subp{\dot{\bm{\nu}}}{}{#2}{}{#1}}}
\newrobustcmd{\ddotbmnuz}[2][]{\ensuremath{\subp{\ddot{\bm{\nu}}}{}{#2}{}{#1}}}
\newrobustcmd{\brevebmnuz}[2][]{\ensuremath{\subp{\breve{\bm{\nu}}}{}{#2}{}{#1}}}
\newrobustcmd{\barbmnuz}[2][]{\ensuremath{\subp{\bar{\bm{\nu}}}{}{#2}{}{#1}}}
\newrobustcmd{\vecbmnuz}[2][]{\ensuremath{\subp{\vec{\bm{\nu}}}{}{#2}{}{#1}}}
\newrobustcmd{\xiz}[2][]{\ensuremath{\subp{\xi}{}{#2}{}{#1}}}
\newrobustcmd{\hatxiz}[2][]{\ensuremath{\subp{\hat{\xi}}{}{#2}{}{#1}}}
\newrobustcmd{\widehatxiz}[2][]{\ensuremath{\subp{\widehat{\xi}}{}{#2}{}{#1}}}
\newrobustcmd{\checkxiz}[2][]{\ensuremath{\subp{\check{\xi}}{}{#2}{}{#1}}}
\newrobustcmd{\tildexiz}[2][]{\ensuremath{\subp{\tilde{\xi}}{}{#2}{}{#1}}}
\newrobustcmd{\widetildexiz}[2][]{\ensuremath{\subp{\widetilde{\xi}}{}{#2}{}{#1}}}
\newrobustcmd{\acutexiz}[2][]{\ensuremath{\subp{\acute{\xi}}{}{#2}{}{#1}}}
\newrobustcmd{\gravexiz}[2][]{\ensuremath{\subp{\grave{\xi}}{}{#2}{}{#1}}}
\newrobustcmd{\dotxiz}[2][]{\ensuremath{\subp{\dot{\xi}}{}{#2}{}{#1}}}
\newrobustcmd{\ddotxiz}[2][]{\ensuremath{\subp{\ddot{\xi}}{}{#2}{}{#1}}}
\newrobustcmd{\brevexiz}[2][]{\ensuremath{\subp{\breve{\xi}}{}{#2}{}{#1}}}
\newrobustcmd{\barxiz}[2][]{\ensuremath{\subp{\bar{\xi}}{}{#2}{}{#1}}}
\newrobustcmd{\vecxiz}[2][]{\ensuremath{\subp{\vec{\xi}}{}{#2}{}{#1}}}
\newrobustcmd{\bmxiz}[2][]{\ensuremath{\subp{\bm{\xi}}{}{#2}{}{#1}}}
\newrobustcmd{\hatbmxiz}[2][]{\ensuremath{\subp{\hat{\bm{\xi}}}{}{#2}{}{#1}}}
\newrobustcmd{\widehatbmxiz}[2][]{\ensuremath{\subp{\widehat{\bm{\xi}}}{}{#2}{}{#1}}}
\newrobustcmd{\checkbmxiz}[2][]{\ensuremath{\subp{\check{\bm{\xi}}}{}{#2}{}{#1}}}
\newrobustcmd{\tildebmxiz}[2][]{\ensuremath{\subp{\tilde{\bm{\xi}}}{}{#2}{}{#1}}}
\newrobustcmd{\widetildebmxiz}[2][]{\ensuremath{\subp{\widetilde{\bm{\xi}}}{}{#2}{}{#1}}}
\newrobustcmd{\acutebmxiz}[2][]{\ensuremath{\subp{\acute{\bm{\xi}}}{}{#2}{}{#1}}}
\newrobustcmd{\gravebmxiz}[2][]{\ensuremath{\subp{\grave{\bm{\xi}}}{}{#2}{}{#1}}}
\newrobustcmd{\dotbmxiz}[2][]{\ensuremath{\subp{\dot{\bm{\xi}}}{}{#2}{}{#1}}}
\newrobustcmd{\ddotbmxiz}[2][]{\ensuremath{\subp{\ddot{\bm{\xi}}}{}{#2}{}{#1}}}
\newrobustcmd{\brevebmxiz}[2][]{\ensuremath{\subp{\breve{\bm{\xi}}}{}{#2}{}{#1}}}
\newrobustcmd{\barbmxiz}[2][]{\ensuremath{\subp{\bar{\bm{\xi}}}{}{#2}{}{#1}}}
\newrobustcmd{\vecbmxiz}[2][]{\ensuremath{\subp{\vec{\bm{\xi}}}{}{#2}{}{#1}}}
\newrobustcmd{\piz}[2][]{\ensuremath{\subp{\pi}{}{#2}{}{#1}}}
\newrobustcmd{\hatpiz}[2][]{\ensuremath{\subp{\hat{\pi}}{}{#2}{}{#1}}}
\newrobustcmd{\widehatpiz}[2][]{\ensuremath{\subp{\widehat{\pi}}{}{#2}{}{#1}}}
\newrobustcmd{\checkpiz}[2][]{\ensuremath{\subp{\check{\pi}}{}{#2}{}{#1}}}
\newrobustcmd{\tildepiz}[2][]{\ensuremath{\subp{\tilde{\pi}}{}{#2}{}{#1}}}
\newrobustcmd{\widetildepiz}[2][]{\ensuremath{\subp{\widetilde{\pi}}{}{#2}{}{#1}}}
\newrobustcmd{\acutepiz}[2][]{\ensuremath{\subp{\acute{\pi}}{}{#2}{}{#1}}}
\newrobustcmd{\gravepiz}[2][]{\ensuremath{\subp{\grave{\pi}}{}{#2}{}{#1}}}
\newrobustcmd{\dotpiz}[2][]{\ensuremath{\subp{\dot{\pi}}{}{#2}{}{#1}}}
\newrobustcmd{\ddotpiz}[2][]{\ensuremath{\subp{\ddot{\pi}}{}{#2}{}{#1}}}
\newrobustcmd{\brevepiz}[2][]{\ensuremath{\subp{\breve{\pi}}{}{#2}{}{#1}}}
\newrobustcmd{\barpiz}[2][]{\ensuremath{\subp{\bar{\pi}}{}{#2}{}{#1}}}
\newrobustcmd{\vecpiz}[2][]{\ensuremath{\subp{\vec{\pi}}{}{#2}{}{#1}}}
\newrobustcmd{\bmpiz}[2][]{\ensuremath{\subp{\bm{\pi}}{}{#2}{}{#1}}}
\newrobustcmd{\hatbmpiz}[2][]{\ensuremath{\subp{\hat{\bm{\pi}}}{}{#2}{}{#1}}}
\newrobustcmd{\widehatbmpiz}[2][]{\ensuremath{\subp{\widehat{\bm{\pi}}}{}{#2}{}{#1}}}
\newrobustcmd{\checkbmpiz}[2][]{\ensuremath{\subp{\check{\bm{\pi}}}{}{#2}{}{#1}}}
\newrobustcmd{\tildebmpiz}[2][]{\ensuremath{\subp{\tilde{\bm{\pi}}}{}{#2}{}{#1}}}
\newrobustcmd{\widetildebmpiz}[2][]{\ensuremath{\subp{\widetilde{\bm{\pi}}}{}{#2}{}{#1}}}
\newrobustcmd{\acutebmpiz}[2][]{\ensuremath{\subp{\acute{\bm{\pi}}}{}{#2}{}{#1}}}
\newrobustcmd{\gravebmpiz}[2][]{\ensuremath{\subp{\grave{\bm{\pi}}}{}{#2}{}{#1}}}
\newrobustcmd{\dotbmpiz}[2][]{\ensuremath{\subp{\dot{\bm{\pi}}}{}{#2}{}{#1}}}
\newrobustcmd{\ddotbmpiz}[2][]{\ensuremath{\subp{\ddot{\bm{\pi}}}{}{#2}{}{#1}}}
\newrobustcmd{\brevebmpiz}[2][]{\ensuremath{\subp{\breve{\bm{\pi}}}{}{#2}{}{#1}}}
\newrobustcmd{\barbmpiz}[2][]{\ensuremath{\subp{\bar{\bm{\pi}}}{}{#2}{}{#1}}}
\newrobustcmd{\vecbmpiz}[2][]{\ensuremath{\subp{\vec{\bm{\pi}}}{}{#2}{}{#1}}}
\newrobustcmd{\varpiz}[2][]{\ensuremath{\subp{\varpi}{}{#2}{}{#1}}}
\newrobustcmd{\hatvarpiz}[2][]{\ensuremath{\subp{\hat{\varpi}}{}{#2}{}{#1}}}
\newrobustcmd{\widehatvarpiz}[2][]{\ensuremath{\subp{\widehat{\varpi}}{}{#2}{}{#1}}}
\newrobustcmd{\checkvarpiz}[2][]{\ensuremath{\subp{\check{\varpi}}{}{#2}{}{#1}}}
\newrobustcmd{\tildevarpiz}[2][]{\ensuremath{\subp{\tilde{\varpi}}{}{#2}{}{#1}}}
\newrobustcmd{\widetildevarpiz}[2][]{\ensuremath{\subp{\widetilde{\varpi}}{}{#2}{}{#1}}}
\newrobustcmd{\acutevarpiz}[2][]{\ensuremath{\subp{\acute{\varpi}}{}{#2}{}{#1}}}
\newrobustcmd{\gravevarpiz}[2][]{\ensuremath{\subp{\grave{\varpi}}{}{#2}{}{#1}}}
\newrobustcmd{\dotvarpiz}[2][]{\ensuremath{\subp{\dot{\varpi}}{}{#2}{}{#1}}}
\newrobustcmd{\ddotvarpiz}[2][]{\ensuremath{\subp{\ddot{\varpi}}{}{#2}{}{#1}}}
\newrobustcmd{\brevevarpiz}[2][]{\ensuremath{\subp{\breve{\varpi}}{}{#2}{}{#1}}}
\newrobustcmd{\barvarpiz}[2][]{\ensuremath{\subp{\bar{\varpi}}{}{#2}{}{#1}}}
\newrobustcmd{\vecvarpiz}[2][]{\ensuremath{\subp{\vec{\varpi}}{}{#2}{}{#1}}}
\newrobustcmd{\bmvarpiz}[2][]{\ensuremath{\subp{\bm{\varpi}}{}{#2}{}{#1}}}
\newrobustcmd{\hatbmvarpiz}[2][]{\ensuremath{\subp{\hat{\bm{\varpi}}}{}{#2}{}{#1}}}
\newrobustcmd{\widehatbmvarpiz}[2][]{\ensuremath{\subp{\widehat{\bm{\varpi}}}{}{#2}{}{#1}}}
\newrobustcmd{\checkbmvarpiz}[2][]{\ensuremath{\subp{\check{\bm{\varpi}}}{}{#2}{}{#1}}}
\newrobustcmd{\tildebmvarpiz}[2][]{\ensuremath{\subp{\tilde{\bm{\varpi}}}{}{#2}{}{#1}}}
\newrobustcmd{\widetildebmvarpiz}[2][]{\ensuremath{\subp{\widetilde{\bm{\varpi}}}{}{#2}{}{#1}}}
\newrobustcmd{\acutebmvarpiz}[2][]{\ensuremath{\subp{\acute{\bm{\varpi}}}{}{#2}{}{#1}}}
\newrobustcmd{\gravebmvarpiz}[2][]{\ensuremath{\subp{\grave{\bm{\varpi}}}{}{#2}{}{#1}}}
\newrobustcmd{\dotbmvarpiz}[2][]{\ensuremath{\subp{\dot{\bm{\varpi}}}{}{#2}{}{#1}}}
\newrobustcmd{\ddotbmvarpiz}[2][]{\ensuremath{\subp{\ddot{\bm{\varpi}}}{}{#2}{}{#1}}}
\newrobustcmd{\brevebmvarpiz}[2][]{\ensuremath{\subp{\breve{\bm{\varpi}}}{}{#2}{}{#1}}}
\newrobustcmd{\barbmvarpiz}[2][]{\ensuremath{\subp{\bar{\bm{\varpi}}}{}{#2}{}{#1}}}
\newrobustcmd{\vecbmvarpiz}[2][]{\ensuremath{\subp{\vec{\bm{\varpi}}}{}{#2}{}{#1}}}
\newrobustcmd{\rhoz}[2][]{\ensuremath{\subp{\rho}{}{#2}{}{#1}}}
\newrobustcmd{\hatrhoz}[2][]{\ensuremath{\subp{\hat{\rho}}{}{#2}{}{#1}}}
\newrobustcmd{\widehatrhoz}[2][]{\ensuremath{\subp{\widehat{\rho}}{}{#2}{}{#1}}}
\newrobustcmd{\checkrhoz}[2][]{\ensuremath{\subp{\check{\rho}}{}{#2}{}{#1}}}
\newrobustcmd{\tilderhoz}[2][]{\ensuremath{\subp{\tilde{\rho}}{}{#2}{}{#1}}}
\newrobustcmd{\widetilderhoz}[2][]{\ensuremath{\subp{\widetilde{\rho}}{}{#2}{}{#1}}}
\newrobustcmd{\acuterhoz}[2][]{\ensuremath{\subp{\acute{\rho}}{}{#2}{}{#1}}}
\newrobustcmd{\graverhoz}[2][]{\ensuremath{\subp{\grave{\rho}}{}{#2}{}{#1}}}
\newrobustcmd{\dotrhoz}[2][]{\ensuremath{\subp{\dot{\rho}}{}{#2}{}{#1}}}
\newrobustcmd{\ddotrhoz}[2][]{\ensuremath{\subp{\ddot{\rho}}{}{#2}{}{#1}}}
\newrobustcmd{\breverhoz}[2][]{\ensuremath{\subp{\breve{\rho}}{}{#2}{}{#1}}}
\newrobustcmd{\barrhoz}[2][]{\ensuremath{\subp{\bar{\rho}}{}{#2}{}{#1}}}
\newrobustcmd{\vecrhoz}[2][]{\ensuremath{\subp{\vec{\rho}}{}{#2}{}{#1}}}
\newrobustcmd{\bmrhoz}[2][]{\ensuremath{\subp{\bm{\rho}}{}{#2}{}{#1}}}
\newrobustcmd{\hatbmrhoz}[2][]{\ensuremath{\subp{\hat{\bm{\rho}}}{}{#2}{}{#1}}}
\newrobustcmd{\widehatbmrhoz}[2][]{\ensuremath{\subp{\widehat{\bm{\rho}}}{}{#2}{}{#1}}}
\newrobustcmd{\checkbmrhoz}[2][]{\ensuremath{\subp{\check{\bm{\rho}}}{}{#2}{}{#1}}}
\newrobustcmd{\tildebmrhoz}[2][]{\ensuremath{\subp{\tilde{\bm{\rho}}}{}{#2}{}{#1}}}
\newrobustcmd{\widetildebmrhoz}[2][]{\ensuremath{\subp{\widetilde{\bm{\rho}}}{}{#2}{}{#1}}}
\newrobustcmd{\acutebmrhoz}[2][]{\ensuremath{\subp{\acute{\bm{\rho}}}{}{#2}{}{#1}}}
\newrobustcmd{\gravebmrhoz}[2][]{\ensuremath{\subp{\grave{\bm{\rho}}}{}{#2}{}{#1}}}
\newrobustcmd{\dotbmrhoz}[2][]{\ensuremath{\subp{\dot{\bm{\rho}}}{}{#2}{}{#1}}}
\newrobustcmd{\ddotbmrhoz}[2][]{\ensuremath{\subp{\ddot{\bm{\rho}}}{}{#2}{}{#1}}}
\newrobustcmd{\brevebmrhoz}[2][]{\ensuremath{\subp{\breve{\bm{\rho}}}{}{#2}{}{#1}}}
\newrobustcmd{\barbmrhoz}[2][]{\ensuremath{\subp{\bar{\bm{\rho}}}{}{#2}{}{#1}}}
\newrobustcmd{\vecbmrhoz}[2][]{\ensuremath{\subp{\vec{\bm{\rho}}}{}{#2}{}{#1}}}
\newrobustcmd{\varrhoz}[2][]{\ensuremath{\subp{\varrho}{}{#2}{}{#1}}}
\newrobustcmd{\hatvarrhoz}[2][]{\ensuremath{\subp{\hat{\varrho}}{}{#2}{}{#1}}}
\newrobustcmd{\widehatvarrhoz}[2][]{\ensuremath{\subp{\widehat{\varrho}}{}{#2}{}{#1}}}
\newrobustcmd{\checkvarrhoz}[2][]{\ensuremath{\subp{\check{\varrho}}{}{#2}{}{#1}}}
\newrobustcmd{\tildevarrhoz}[2][]{\ensuremath{\subp{\tilde{\varrho}}{}{#2}{}{#1}}}
\newrobustcmd{\widetildevarrhoz}[2][]{\ensuremath{\subp{\widetilde{\varrho}}{}{#2}{}{#1}}}
\newrobustcmd{\acutevarrhoz}[2][]{\ensuremath{\subp{\acute{\varrho}}{}{#2}{}{#1}}}
\newrobustcmd{\gravevarrhoz}[2][]{\ensuremath{\subp{\grave{\varrho}}{}{#2}{}{#1}}}
\newrobustcmd{\dotvarrhoz}[2][]{\ensuremath{\subp{\dot{\varrho}}{}{#2}{}{#1}}}
\newrobustcmd{\ddotvarrhoz}[2][]{\ensuremath{\subp{\ddot{\varrho}}{}{#2}{}{#1}}}
\newrobustcmd{\brevevarrhoz}[2][]{\ensuremath{\subp{\breve{\varrho}}{}{#2}{}{#1}}}
\newrobustcmd{\barvarrhoz}[2][]{\ensuremath{\subp{\bar{\varrho}}{}{#2}{}{#1}}}
\newrobustcmd{\vecvarrhoz}[2][]{\ensuremath{\subp{\vec{\varrho}}{}{#2}{}{#1}}}
\newrobustcmd{\bmvarrhoz}[2][]{\ensuremath{\subp{\bm{\varrho}}{}{#2}{}{#1}}}
\newrobustcmd{\hatbmvarrhoz}[2][]{\ensuremath{\subp{\hat{\bm{\varrho}}}{}{#2}{}{#1}}}
\newrobustcmd{\widehatbmvarrhoz}[2][]{\ensuremath{\subp{\widehat{\bm{\varrho}}}{}{#2}{}{#1}}}
\newrobustcmd{\checkbmvarrhoz}[2][]{\ensuremath{\subp{\check{\bm{\varrho}}}{}{#2}{}{#1}}}
\newrobustcmd{\tildebmvarrhoz}[2][]{\ensuremath{\subp{\tilde{\bm{\varrho}}}{}{#2}{}{#1}}}
\newrobustcmd{\widetildebmvarrhoz}[2][]{\ensuremath{\subp{\widetilde{\bm{\varrho}}}{}{#2}{}{#1}}}
\newrobustcmd{\acutebmvarrhoz}[2][]{\ensuremath{\subp{\acute{\bm{\varrho}}}{}{#2}{}{#1}}}
\newrobustcmd{\gravebmvarrhoz}[2][]{\ensuremath{\subp{\grave{\bm{\varrho}}}{}{#2}{}{#1}}}
\newrobustcmd{\dotbmvarrhoz}[2][]{\ensuremath{\subp{\dot{\bm{\varrho}}}{}{#2}{}{#1}}}
\newrobustcmd{\ddotbmvarrhoz}[2][]{\ensuremath{\subp{\ddot{\bm{\varrho}}}{}{#2}{}{#1}}}
\newrobustcmd{\brevebmvarrhoz}[2][]{\ensuremath{\subp{\breve{\bm{\varrho}}}{}{#2}{}{#1}}}
\newrobustcmd{\barbmvarrhoz}[2][]{\ensuremath{\subp{\bar{\bm{\varrho}}}{}{#2}{}{#1}}}
\newrobustcmd{\vecbmvarrhoz}[2][]{\ensuremath{\subp{\vec{\bm{\varrho}}}{}{#2}{}{#1}}}
\newrobustcmd{\sigmaz}[2][]{\ensuremath{\subp{\sigma}{}{#2}{}{#1}}}
\newrobustcmd{\hatsigmaz}[2][]{\ensuremath{\subp{\hat{\sigma}}{}{#2}{}{#1}}}
\newrobustcmd{\widehatsigmaz}[2][]{\ensuremath{\subp{\widehat{\sigma}}{}{#2}{}{#1}}}
\newrobustcmd{\checksigmaz}[2][]{\ensuremath{\subp{\check{\sigma}}{}{#2}{}{#1}}}
\newrobustcmd{\tildesigmaz}[2][]{\ensuremath{\subp{\tilde{\sigma}}{}{#2}{}{#1}}}
\newrobustcmd{\widetildesigmaz}[2][]{\ensuremath{\subp{\widetilde{\sigma}}{}{#2}{}{#1}}}
\newrobustcmd{\acutesigmaz}[2][]{\ensuremath{\subp{\acute{\sigma}}{}{#2}{}{#1}}}
\newrobustcmd{\gravesigmaz}[2][]{\ensuremath{\subp{\grave{\sigma}}{}{#2}{}{#1}}}
\newrobustcmd{\dotsigmaz}[2][]{\ensuremath{\subp{\dot{\sigma}}{}{#2}{}{#1}}}
\newrobustcmd{\ddotsigmaz}[2][]{\ensuremath{\subp{\ddot{\sigma}}{}{#2}{}{#1}}}
\newrobustcmd{\brevesigmaz}[2][]{\ensuremath{\subp{\breve{\sigma}}{}{#2}{}{#1}}}
\newrobustcmd{\barsigmaz}[2][]{\ensuremath{\subp{\bar{\sigma}}{}{#2}{}{#1}}}
\newrobustcmd{\vecsigmaz}[2][]{\ensuremath{\subp{\vec{\sigma}}{}{#2}{}{#1}}}
\newrobustcmd{\bmsigmaz}[2][]{\ensuremath{\subp{\bm{\sigma}}{}{#2}{}{#1}}}
\newrobustcmd{\hatbmsigmaz}[2][]{\ensuremath{\subp{\hat{\bm{\sigma}}}{}{#2}{}{#1}}}
\newrobustcmd{\widehatbmsigmaz}[2][]{\ensuremath{\subp{\widehat{\bm{\sigma}}}{}{#2}{}{#1}}}
\newrobustcmd{\checkbmsigmaz}[2][]{\ensuremath{\subp{\check{\bm{\sigma}}}{}{#2}{}{#1}}}
\newrobustcmd{\tildebmsigmaz}[2][]{\ensuremath{\subp{\tilde{\bm{\sigma}}}{}{#2}{}{#1}}}
\newrobustcmd{\widetildebmsigmaz}[2][]{\ensuremath{\subp{\widetilde{\bm{\sigma}}}{}{#2}{}{#1}}}
\newrobustcmd{\acutebmsigmaz}[2][]{\ensuremath{\subp{\acute{\bm{\sigma}}}{}{#2}{}{#1}}}
\newrobustcmd{\gravebmsigmaz}[2][]{\ensuremath{\subp{\grave{\bm{\sigma}}}{}{#2}{}{#1}}}
\newrobustcmd{\dotbmsigmaz}[2][]{\ensuremath{\subp{\dot{\bm{\sigma}}}{}{#2}{}{#1}}}
\newrobustcmd{\ddotbmsigmaz}[2][]{\ensuremath{\subp{\ddot{\bm{\sigma}}}{}{#2}{}{#1}}}
\newrobustcmd{\brevebmsigmaz}[2][]{\ensuremath{\subp{\breve{\bm{\sigma}}}{}{#2}{}{#1}}}
\newrobustcmd{\barbmsigmaz}[2][]{\ensuremath{\subp{\bar{\bm{\sigma}}}{}{#2}{}{#1}}}
\newrobustcmd{\vecbmsigmaz}[2][]{\ensuremath{\subp{\vec{\bm{\sigma}}}{}{#2}{}{#1}}}
\newrobustcmd{\varsigmaz}[2][]{\ensuremath{\subp{\varsigma}{}{#2}{}{#1}}}
\newrobustcmd{\hatvarsigmaz}[2][]{\ensuremath{\subp{\hat{\varsigma}}{}{#2}{}{#1}}}
\newrobustcmd{\widehatvarsigmaz}[2][]{\ensuremath{\subp{\widehat{\varsigma}}{}{#2}{}{#1}}}
\newrobustcmd{\checkvarsigmaz}[2][]{\ensuremath{\subp{\check{\varsigma}}{}{#2}{}{#1}}}
\newrobustcmd{\tildevarsigmaz}[2][]{\ensuremath{\subp{\tilde{\varsigma}}{}{#2}{}{#1}}}
\newrobustcmd{\widetildevarsigmaz}[2][]{\ensuremath{\subp{\widetilde{\varsigma}}{}{#2}{}{#1}}}
\newrobustcmd{\acutevarsigmaz}[2][]{\ensuremath{\subp{\acute{\varsigma}}{}{#2}{}{#1}}}
\newrobustcmd{\gravevarsigmaz}[2][]{\ensuremath{\subp{\grave{\varsigma}}{}{#2}{}{#1}}}
\newrobustcmd{\dotvarsigmaz}[2][]{\ensuremath{\subp{\dot{\varsigma}}{}{#2}{}{#1}}}
\newrobustcmd{\ddotvarsigmaz}[2][]{\ensuremath{\subp{\ddot{\varsigma}}{}{#2}{}{#1}}}
\newrobustcmd{\brevevarsigmaz}[2][]{\ensuremath{\subp{\breve{\varsigma}}{}{#2}{}{#1}}}
\newrobustcmd{\barvarsigmaz}[2][]{\ensuremath{\subp{\bar{\varsigma}}{}{#2}{}{#1}}}
\newrobustcmd{\vecvarsigmaz}[2][]{\ensuremath{\subp{\vec{\varsigma}}{}{#2}{}{#1}}}
\newrobustcmd{\bmvarsigmaz}[2][]{\ensuremath{\subp{\bm{\varsigma}}{}{#2}{}{#1}}}
\newrobustcmd{\hatbmvarsigmaz}[2][]{\ensuremath{\subp{\hat{\bm{\varsigma}}}{}{#2}{}{#1}}}
\newrobustcmd{\widehatbmvarsigmaz}[2][]{\ensuremath{\subp{\widehat{\bm{\varsigma}}}{}{#2}{}{#1}}}
\newrobustcmd{\checkbmvarsigmaz}[2][]{\ensuremath{\subp{\check{\bm{\varsigma}}}{}{#2}{}{#1}}}
\newrobustcmd{\tildebmvarsigmaz}[2][]{\ensuremath{\subp{\tilde{\bm{\varsigma}}}{}{#2}{}{#1}}}
\newrobustcmd{\widetildebmvarsigmaz}[2][]{\ensuremath{\subp{\widetilde{\bm{\varsigma}}}{}{#2}{}{#1}}}
\newrobustcmd{\acutebmvarsigmaz}[2][]{\ensuremath{\subp{\acute{\bm{\varsigma}}}{}{#2}{}{#1}}}
\newrobustcmd{\gravebmvarsigmaz}[2][]{\ensuremath{\subp{\grave{\bm{\varsigma}}}{}{#2}{}{#1}}}
\newrobustcmd{\dotbmvarsigmaz}[2][]{\ensuremath{\subp{\dot{\bm{\varsigma}}}{}{#2}{}{#1}}}
\newrobustcmd{\ddotbmvarsigmaz}[2][]{\ensuremath{\subp{\ddot{\bm{\varsigma}}}{}{#2}{}{#1}}}
\newrobustcmd{\brevebmvarsigmaz}[2][]{\ensuremath{\subp{\breve{\bm{\varsigma}}}{}{#2}{}{#1}}}
\newrobustcmd{\barbmvarsigmaz}[2][]{\ensuremath{\subp{\bar{\bm{\varsigma}}}{}{#2}{}{#1}}}
\newrobustcmd{\vecbmvarsigmaz}[2][]{\ensuremath{\subp{\vec{\bm{\varsigma}}}{}{#2}{}{#1}}}
\newrobustcmd{\tauz}[2][]{\ensuremath{\subp{\tau}{}{#2}{}{#1}}}
\newrobustcmd{\hattauz}[2][]{\ensuremath{\subp{\hat{\tau}}{}{#2}{}{#1}}}
\newrobustcmd{\widehattauz}[2][]{\ensuremath{\subp{\widehat{\tau}}{}{#2}{}{#1}}}
\newrobustcmd{\checktauz}[2][]{\ensuremath{\subp{\check{\tau}}{}{#2}{}{#1}}}
\newrobustcmd{\tildetauz}[2][]{\ensuremath{\subp{\tilde{\tau}}{}{#2}{}{#1}}}
\newrobustcmd{\widetildetauz}[2][]{\ensuremath{\subp{\widetilde{\tau}}{}{#2}{}{#1}}}
\newrobustcmd{\acutetauz}[2][]{\ensuremath{\subp{\acute{\tau}}{}{#2}{}{#1}}}
\newrobustcmd{\gravetauz}[2][]{\ensuremath{\subp{\grave{\tau}}{}{#2}{}{#1}}}
\newrobustcmd{\dottauz}[2][]{\ensuremath{\subp{\dot{\tau}}{}{#2}{}{#1}}}
\newrobustcmd{\ddottauz}[2][]{\ensuremath{\subp{\ddot{\tau}}{}{#2}{}{#1}}}
\newrobustcmd{\brevetauz}[2][]{\ensuremath{\subp{\breve{\tau}}{}{#2}{}{#1}}}
\newrobustcmd{\bartauz}[2][]{\ensuremath{\subp{\bar{\tau}}{}{#2}{}{#1}}}
\newrobustcmd{\vectauz}[2][]{\ensuremath{\subp{\vec{\tau}}{}{#2}{}{#1}}}
\newrobustcmd{\bmtauz}[2][]{\ensuremath{\subp{\bm{\tau}}{}{#2}{}{#1}}}
\newrobustcmd{\hatbmtauz}[2][]{\ensuremath{\subp{\hat{\bm{\tau}}}{}{#2}{}{#1}}}
\newrobustcmd{\widehatbmtauz}[2][]{\ensuremath{\subp{\widehat{\bm{\tau}}}{}{#2}{}{#1}}}
\newrobustcmd{\checkbmtauz}[2][]{\ensuremath{\subp{\check{\bm{\tau}}}{}{#2}{}{#1}}}
\newrobustcmd{\tildebmtauz}[2][]{\ensuremath{\subp{\tilde{\bm{\tau}}}{}{#2}{}{#1}}}
\newrobustcmd{\widetildebmtauz}[2][]{\ensuremath{\subp{\widetilde{\bm{\tau}}}{}{#2}{}{#1}}}
\newrobustcmd{\acutebmtauz}[2][]{\ensuremath{\subp{\acute{\bm{\tau}}}{}{#2}{}{#1}}}
\newrobustcmd{\gravebmtauz}[2][]{\ensuremath{\subp{\grave{\bm{\tau}}}{}{#2}{}{#1}}}
\newrobustcmd{\dotbmtauz}[2][]{\ensuremath{\subp{\dot{\bm{\tau}}}{}{#2}{}{#1}}}
\newrobustcmd{\ddotbmtauz}[2][]{\ensuremath{\subp{\ddot{\bm{\tau}}}{}{#2}{}{#1}}}
\newrobustcmd{\brevebmtauz}[2][]{\ensuremath{\subp{\breve{\bm{\tau}}}{}{#2}{}{#1}}}
\newrobustcmd{\barbmtauz}[2][]{\ensuremath{\subp{\bar{\bm{\tau}}}{}{#2}{}{#1}}}
\newrobustcmd{\vecbmtauz}[2][]{\ensuremath{\subp{\vec{\bm{\tau}}}{}{#2}{}{#1}}}
\newrobustcmd{\upsilonz}[2][]{\ensuremath{\subp{\upsilon}{}{#2}{}{#1}}}
\newrobustcmd{\hatupsilonz}[2][]{\ensuremath{\subp{\hat{\upsilon}}{}{#2}{}{#1}}}
\newrobustcmd{\widehatupsilonz}[2][]{\ensuremath{\subp{\widehat{\upsilon}}{}{#2}{}{#1}}}
\newrobustcmd{\checkupsilonz}[2][]{\ensuremath{\subp{\check{\upsilon}}{}{#2}{}{#1}}}
\newrobustcmd{\tildeupsilonz}[2][]{\ensuremath{\subp{\tilde{\upsilon}}{}{#2}{}{#1}}}
\newrobustcmd{\widetildeupsilonz}[2][]{\ensuremath{\subp{\widetilde{\upsilon}}{}{#2}{}{#1}}}
\newrobustcmd{\acuteupsilonz}[2][]{\ensuremath{\subp{\acute{\upsilon}}{}{#2}{}{#1}}}
\newrobustcmd{\graveupsilonz}[2][]{\ensuremath{\subp{\grave{\upsilon}}{}{#2}{}{#1}}}
\newrobustcmd{\dotupsilonz}[2][]{\ensuremath{\subp{\dot{\upsilon}}{}{#2}{}{#1}}}
\newrobustcmd{\ddotupsilonz}[2][]{\ensuremath{\subp{\ddot{\upsilon}}{}{#2}{}{#1}}}
\newrobustcmd{\breveupsilonz}[2][]{\ensuremath{\subp{\breve{\upsilon}}{}{#2}{}{#1}}}
\newrobustcmd{\barupsilonz}[2][]{\ensuremath{\subp{\bar{\upsilon}}{}{#2}{}{#1}}}
\newrobustcmd{\vecupsilonz}[2][]{\ensuremath{\subp{\vec{\upsilon}}{}{#2}{}{#1}}}
\newrobustcmd{\bmupsilonz}[2][]{\ensuremath{\subp{\bm{\upsilon}}{}{#2}{}{#1}}}
\newrobustcmd{\hatbmupsilonz}[2][]{\ensuremath{\subp{\hat{\bm{\upsilon}}}{}{#2}{}{#1}}}
\newrobustcmd{\widehatbmupsilonz}[2][]{\ensuremath{\subp{\widehat{\bm{\upsilon}}}{}{#2}{}{#1}}}
\newrobustcmd{\checkbmupsilonz}[2][]{\ensuremath{\subp{\check{\bm{\upsilon}}}{}{#2}{}{#1}}}
\newrobustcmd{\tildebmupsilonz}[2][]{\ensuremath{\subp{\tilde{\bm{\upsilon}}}{}{#2}{}{#1}}}
\newrobustcmd{\widetildebmupsilonz}[2][]{\ensuremath{\subp{\widetilde{\bm{\upsilon}}}{}{#2}{}{#1}}}
\newrobustcmd{\acutebmupsilonz}[2][]{\ensuremath{\subp{\acute{\bm{\upsilon}}}{}{#2}{}{#1}}}
\newrobustcmd{\gravebmupsilonz}[2][]{\ensuremath{\subp{\grave{\bm{\upsilon}}}{}{#2}{}{#1}}}
\newrobustcmd{\dotbmupsilonz}[2][]{\ensuremath{\subp{\dot{\bm{\upsilon}}}{}{#2}{}{#1}}}
\newrobustcmd{\ddotbmupsilonz}[2][]{\ensuremath{\subp{\ddot{\bm{\upsilon}}}{}{#2}{}{#1}}}
\newrobustcmd{\brevebmupsilonz}[2][]{\ensuremath{\subp{\breve{\bm{\upsilon}}}{}{#2}{}{#1}}}
\newrobustcmd{\barbmupsilonz}[2][]{\ensuremath{\subp{\bar{\bm{\upsilon}}}{}{#2}{}{#1}}}
\newrobustcmd{\vecbmupsilonz}[2][]{\ensuremath{\subp{\vec{\bm{\upsilon}}}{}{#2}{}{#1}}}
\newrobustcmd{\phiz}[2][]{\ensuremath{\subp{\phi}{}{#2}{}{#1}}}
\newrobustcmd{\hatphiz}[2][]{\ensuremath{\subp{\hat{\phi}}{}{#2}{}{#1}}}
\newrobustcmd{\widehatphiz}[2][]{\ensuremath{\subp{\widehat{\phi}}{}{#2}{}{#1}}}
\newrobustcmd{\checkphiz}[2][]{\ensuremath{\subp{\check{\phi}}{}{#2}{}{#1}}}
\newrobustcmd{\tildephiz}[2][]{\ensuremath{\subp{\tilde{\phi}}{}{#2}{}{#1}}}
\newrobustcmd{\widetildephiz}[2][]{\ensuremath{\subp{\widetilde{\phi}}{}{#2}{}{#1}}}
\newrobustcmd{\acutephiz}[2][]{\ensuremath{\subp{\acute{\phi}}{}{#2}{}{#1}}}
\newrobustcmd{\gravephiz}[2][]{\ensuremath{\subp{\grave{\phi}}{}{#2}{}{#1}}}
\newrobustcmd{\dotphiz}[2][]{\ensuremath{\subp{\dot{\phi}}{}{#2}{}{#1}}}
\newrobustcmd{\ddotphiz}[2][]{\ensuremath{\subp{\ddot{\phi}}{}{#2}{}{#1}}}
\newrobustcmd{\brevephiz}[2][]{\ensuremath{\subp{\breve{\phi}}{}{#2}{}{#1}}}
\newrobustcmd{\barphiz}[2][]{\ensuremath{\subp{\bar{\phi}}{}{#2}{}{#1}}}
\newrobustcmd{\vecphiz}[2][]{\ensuremath{\subp{\vec{\phi}}{}{#2}{}{#1}}}
\newrobustcmd{\bmphiz}[2][]{\ensuremath{\subp{\bm{\phi}}{}{#2}{}{#1}}}
\newrobustcmd{\hatbmphiz}[2][]{\ensuremath{\subp{\hat{\bm{\phi}}}{}{#2}{}{#1}}}
\newrobustcmd{\widehatbmphiz}[2][]{\ensuremath{\subp{\widehat{\bm{\phi}}}{}{#2}{}{#1}}}
\newrobustcmd{\checkbmphiz}[2][]{\ensuremath{\subp{\check{\bm{\phi}}}{}{#2}{}{#1}}}
\newrobustcmd{\tildebmphiz}[2][]{\ensuremath{\subp{\tilde{\bm{\phi}}}{}{#2}{}{#1}}}
\newrobustcmd{\widetildebmphiz}[2][]{\ensuremath{\subp{\widetilde{\bm{\phi}}}{}{#2}{}{#1}}}
\newrobustcmd{\acutebmphiz}[2][]{\ensuremath{\subp{\acute{\bm{\phi}}}{}{#2}{}{#1}}}
\newrobustcmd{\gravebmphiz}[2][]{\ensuremath{\subp{\grave{\bm{\phi}}}{}{#2}{}{#1}}}
\newrobustcmd{\dotbmphiz}[2][]{\ensuremath{\subp{\dot{\bm{\phi}}}{}{#2}{}{#1}}}
\newrobustcmd{\ddotbmphiz}[2][]{\ensuremath{\subp{\ddot{\bm{\phi}}}{}{#2}{}{#1}}}
\newrobustcmd{\brevebmphiz}[2][]{\ensuremath{\subp{\breve{\bm{\phi}}}{}{#2}{}{#1}}}
\newrobustcmd{\barbmphiz}[2][]{\ensuremath{\subp{\bar{\bm{\phi}}}{}{#2}{}{#1}}}
\newrobustcmd{\vecbmphiz}[2][]{\ensuremath{\subp{\vec{\bm{\phi}}}{}{#2}{}{#1}}}
\newrobustcmd{\varphiz}[2][]{\ensuremath{\subp{\varphi}{}{#2}{}{#1}}}
\newrobustcmd{\hatvarphiz}[2][]{\ensuremath{\subp{\hat{\varphi}}{}{#2}{}{#1}}}
\newrobustcmd{\widehatvarphiz}[2][]{\ensuremath{\subp{\widehat{\varphi}}{}{#2}{}{#1}}}
\newrobustcmd{\checkvarphiz}[2][]{\ensuremath{\subp{\check{\varphi}}{}{#2}{}{#1}}}
\newrobustcmd{\tildevarphiz}[2][]{\ensuremath{\subp{\tilde{\varphi}}{}{#2}{}{#1}}}
\newrobustcmd{\widetildevarphiz}[2][]{\ensuremath{\subp{\widetilde{\varphi}}{}{#2}{}{#1}}}
\newrobustcmd{\acutevarphiz}[2][]{\ensuremath{\subp{\acute{\varphi}}{}{#2}{}{#1}}}
\newrobustcmd{\gravevarphiz}[2][]{\ensuremath{\subp{\grave{\varphi}}{}{#2}{}{#1}}}
\newrobustcmd{\dotvarphiz}[2][]{\ensuremath{\subp{\dot{\varphi}}{}{#2}{}{#1}}}
\newrobustcmd{\ddotvarphiz}[2][]{\ensuremath{\subp{\ddot{\varphi}}{}{#2}{}{#1}}}
\newrobustcmd{\brevevarphiz}[2][]{\ensuremath{\subp{\breve{\varphi}}{}{#2}{}{#1}}}
\newrobustcmd{\barvarphiz}[2][]{\ensuremath{\subp{\bar{\varphi}}{}{#2}{}{#1}}}
\newrobustcmd{\vecvarphiz}[2][]{\ensuremath{\subp{\vec{\varphi}}{}{#2}{}{#1}}}
\newrobustcmd{\bmvarphiz}[2][]{\ensuremath{\subp{\bm{\varphi}}{}{#2}{}{#1}}}
\newrobustcmd{\hatbmvarphiz}[2][]{\ensuremath{\subp{\hat{\bm{\varphi}}}{}{#2}{}{#1}}}
\newrobustcmd{\widehatbmvarphiz}[2][]{\ensuremath{\subp{\widehat{\bm{\varphi}}}{}{#2}{}{#1}}}
\newrobustcmd{\checkbmvarphiz}[2][]{\ensuremath{\subp{\check{\bm{\varphi}}}{}{#2}{}{#1}}}
\newrobustcmd{\tildebmvarphiz}[2][]{\ensuremath{\subp{\tilde{\bm{\varphi}}}{}{#2}{}{#1}}}
\newrobustcmd{\widetildebmvarphiz}[2][]{\ensuremath{\subp{\widetilde{\bm{\varphi}}}{}{#2}{}{#1}}}
\newrobustcmd{\acutebmvarphiz}[2][]{\ensuremath{\subp{\acute{\bm{\varphi}}}{}{#2}{}{#1}}}
\newrobustcmd{\gravebmvarphiz}[2][]{\ensuremath{\subp{\grave{\bm{\varphi}}}{}{#2}{}{#1}}}
\newrobustcmd{\dotbmvarphiz}[2][]{\ensuremath{\subp{\dot{\bm{\varphi}}}{}{#2}{}{#1}}}
\newrobustcmd{\ddotbmvarphiz}[2][]{\ensuremath{\subp{\ddot{\bm{\varphi}}}{}{#2}{}{#1}}}
\newrobustcmd{\brevebmvarphiz}[2][]{\ensuremath{\subp{\breve{\bm{\varphi}}}{}{#2}{}{#1}}}
\newrobustcmd{\barbmvarphiz}[2][]{\ensuremath{\subp{\bar{\bm{\varphi}}}{}{#2}{}{#1}}}
\newrobustcmd{\vecbmvarphiz}[2][]{\ensuremath{\subp{\vec{\bm{\varphi}}}{}{#2}{}{#1}}}
\newrobustcmd{\chiz}[2][]{\ensuremath{\subp{\chi}{}{#2}{}{#1}}}
\newrobustcmd{\hatchiz}[2][]{\ensuremath{\subp{\hat{\chi}}{}{#2}{}{#1}}}
\newrobustcmd{\widehatchiz}[2][]{\ensuremath{\subp{\widehat{\chi}}{}{#2}{}{#1}}}
\newrobustcmd{\checkchiz}[2][]{\ensuremath{\subp{\check{\chi}}{}{#2}{}{#1}}}
\newrobustcmd{\tildechiz}[2][]{\ensuremath{\subp{\tilde{\chi}}{}{#2}{}{#1}}}
\newrobustcmd{\widetildechiz}[2][]{\ensuremath{\subp{\widetilde{\chi}}{}{#2}{}{#1}}}
\newrobustcmd{\acutechiz}[2][]{\ensuremath{\subp{\acute{\chi}}{}{#2}{}{#1}}}
\newrobustcmd{\gravechiz}[2][]{\ensuremath{\subp{\grave{\chi}}{}{#2}{}{#1}}}
\newrobustcmd{\dotchiz}[2][]{\ensuremath{\subp{\dot{\chi}}{}{#2}{}{#1}}}
\newrobustcmd{\ddotchiz}[2][]{\ensuremath{\subp{\ddot{\chi}}{}{#2}{}{#1}}}
\newrobustcmd{\brevechiz}[2][]{\ensuremath{\subp{\breve{\chi}}{}{#2}{}{#1}}}
\newrobustcmd{\barchiz}[2][]{\ensuremath{\subp{\bar{\chi}}{}{#2}{}{#1}}}
\newrobustcmd{\vecchiz}[2][]{\ensuremath{\subp{\vec{\chi}}{}{#2}{}{#1}}}
\newrobustcmd{\bmchiz}[2][]{\ensuremath{\subp{\bm{\chi}}{}{#2}{}{#1}}}
\newrobustcmd{\hatbmchiz}[2][]{\ensuremath{\subp{\hat{\bm{\chi}}}{}{#2}{}{#1}}}
\newrobustcmd{\widehatbmchiz}[2][]{\ensuremath{\subp{\widehat{\bm{\chi}}}{}{#2}{}{#1}}}
\newrobustcmd{\checkbmchiz}[2][]{\ensuremath{\subp{\check{\bm{\chi}}}{}{#2}{}{#1}}}
\newrobustcmd{\tildebmchiz}[2][]{\ensuremath{\subp{\tilde{\bm{\chi}}}{}{#2}{}{#1}}}
\newrobustcmd{\widetildebmchiz}[2][]{\ensuremath{\subp{\widetilde{\bm{\chi}}}{}{#2}{}{#1}}}
\newrobustcmd{\acutebmchiz}[2][]{\ensuremath{\subp{\acute{\bm{\chi}}}{}{#2}{}{#1}}}
\newrobustcmd{\gravebmchiz}[2][]{\ensuremath{\subp{\grave{\bm{\chi}}}{}{#2}{}{#1}}}
\newrobustcmd{\dotbmchiz}[2][]{\ensuremath{\subp{\dot{\bm{\chi}}}{}{#2}{}{#1}}}
\newrobustcmd{\ddotbmchiz}[2][]{\ensuremath{\subp{\ddot{\bm{\chi}}}{}{#2}{}{#1}}}
\newrobustcmd{\brevebmchiz}[2][]{\ensuremath{\subp{\breve{\bm{\chi}}}{}{#2}{}{#1}}}
\newrobustcmd{\barbmchiz}[2][]{\ensuremath{\subp{\bar{\bm{\chi}}}{}{#2}{}{#1}}}
\newrobustcmd{\vecbmchiz}[2][]{\ensuremath{\subp{\vec{\bm{\chi}}}{}{#2}{}{#1}}}
\newrobustcmd{\psiz}[2][]{\ensuremath{\subp{\psi}{}{#2}{}{#1}}}
\newrobustcmd{\hatpsiz}[2][]{\ensuremath{\subp{\hat{\psi}}{}{#2}{}{#1}}}
\newrobustcmd{\widehatpsiz}[2][]{\ensuremath{\subp{\widehat{\psi}}{}{#2}{}{#1}}}
\newrobustcmd{\checkpsiz}[2][]{\ensuremath{\subp{\check{\psi}}{}{#2}{}{#1}}}
\newrobustcmd{\tildepsiz}[2][]{\ensuremath{\subp{\tilde{\psi}}{}{#2}{}{#1}}}
\newrobustcmd{\widetildepsiz}[2][]{\ensuremath{\subp{\widetilde{\psi}}{}{#2}{}{#1}}}
\newrobustcmd{\acutepsiz}[2][]{\ensuremath{\subp{\acute{\psi}}{}{#2}{}{#1}}}
\newrobustcmd{\gravepsiz}[2][]{\ensuremath{\subp{\grave{\psi}}{}{#2}{}{#1}}}
\newrobustcmd{\dotpsiz}[2][]{\ensuremath{\subp{\dot{\psi}}{}{#2}{}{#1}}}
\newrobustcmd{\ddotpsiz}[2][]{\ensuremath{\subp{\ddot{\psi}}{}{#2}{}{#1}}}
\newrobustcmd{\brevepsiz}[2][]{\ensuremath{\subp{\breve{\psi}}{}{#2}{}{#1}}}
\newrobustcmd{\barpsiz}[2][]{\ensuremath{\subp{\bar{\psi}}{}{#2}{}{#1}}}
\newrobustcmd{\vecpsiz}[2][]{\ensuremath{\subp{\vec{\psi}}{}{#2}{}{#1}}}
\newrobustcmd{\bmpsiz}[2][]{\ensuremath{\subp{\bm{\psi}}{}{#2}{}{#1}}}
\newrobustcmd{\hatbmpsiz}[2][]{\ensuremath{\subp{\hat{\bm{\psi}}}{}{#2}{}{#1}}}
\newrobustcmd{\widehatbmpsiz}[2][]{\ensuremath{\subp{\widehat{\bm{\psi}}}{}{#2}{}{#1}}}
\newrobustcmd{\checkbmpsiz}[2][]{\ensuremath{\subp{\check{\bm{\psi}}}{}{#2}{}{#1}}}
\newrobustcmd{\tildebmpsiz}[2][]{\ensuremath{\subp{\tilde{\bm{\psi}}}{}{#2}{}{#1}}}
\newrobustcmd{\widetildebmpsiz}[2][]{\ensuremath{\subp{\widetilde{\bm{\psi}}}{}{#2}{}{#1}}}
\newrobustcmd{\acutebmpsiz}[2][]{\ensuremath{\subp{\acute{\bm{\psi}}}{}{#2}{}{#1}}}
\newrobustcmd{\gravebmpsiz}[2][]{\ensuremath{\subp{\grave{\bm{\psi}}}{}{#2}{}{#1}}}
\newrobustcmd{\dotbmpsiz}[2][]{\ensuremath{\subp{\dot{\bm{\psi}}}{}{#2}{}{#1}}}
\newrobustcmd{\ddotbmpsiz}[2][]{\ensuremath{\subp{\ddot{\bm{\psi}}}{}{#2}{}{#1}}}
\newrobustcmd{\brevebmpsiz}[2][]{\ensuremath{\subp{\breve{\bm{\psi}}}{}{#2}{}{#1}}}
\newrobustcmd{\barbmpsiz}[2][]{\ensuremath{\subp{\bar{\bm{\psi}}}{}{#2}{}{#1}}}
\newrobustcmd{\vecbmpsiz}[2][]{\ensuremath{\subp{\vec{\bm{\psi}}}{}{#2}{}{#1}}}
\newrobustcmd{\omegaz}[2][]{\ensuremath{\subp{\omega}{}{#2}{}{#1}}}
\newrobustcmd{\hatomegaz}[2][]{\ensuremath{\subp{\hat{\omega}}{}{#2}{}{#1}}}
\newrobustcmd{\widehatomegaz}[2][]{\ensuremath{\subp{\widehat{\omega}}{}{#2}{}{#1}}}
\newrobustcmd{\checkomegaz}[2][]{\ensuremath{\subp{\check{\omega}}{}{#2}{}{#1}}}
\newrobustcmd{\tildeomegaz}[2][]{\ensuremath{\subp{\tilde{\omega}}{}{#2}{}{#1}}}
\newrobustcmd{\widetildeomegaz}[2][]{\ensuremath{\subp{\widetilde{\omega}}{}{#2}{}{#1}}}
\newrobustcmd{\acuteomegaz}[2][]{\ensuremath{\subp{\acute{\omega}}{}{#2}{}{#1}}}
\newrobustcmd{\graveomegaz}[2][]{\ensuremath{\subp{\grave{\omega}}{}{#2}{}{#1}}}
\newrobustcmd{\dotomegaz}[2][]{\ensuremath{\subp{\dot{\omega}}{}{#2}{}{#1}}}
\newrobustcmd{\ddotomegaz}[2][]{\ensuremath{\subp{\ddot{\omega}}{}{#2}{}{#1}}}
\newrobustcmd{\breveomegaz}[2][]{\ensuremath{\subp{\breve{\omega}}{}{#2}{}{#1}}}
\newrobustcmd{\baromegaz}[2][]{\ensuremath{\subp{\bar{\omega}}{}{#2}{}{#1}}}
\newrobustcmd{\vecomegaz}[2][]{\ensuremath{\subp{\vec{\omega}}{}{#2}{}{#1}}}
\newrobustcmd{\bmomegaz}[2][]{\ensuremath{\subp{\bm{\omega}}{}{#2}{}{#1}}}
\newrobustcmd{\hatbmomegaz}[2][]{\ensuremath{\subp{\hat{\bm{\omega}}}{}{#2}{}{#1}}}
\newrobustcmd{\widehatbmomegaz}[2][]{\ensuremath{\subp{\widehat{\bm{\omega}}}{}{#2}{}{#1}}}
\newrobustcmd{\checkbmomegaz}[2][]{\ensuremath{\subp{\check{\bm{\omega}}}{}{#2}{}{#1}}}
\newrobustcmd{\tildebmomegaz}[2][]{\ensuremath{\subp{\tilde{\bm{\omega}}}{}{#2}{}{#1}}}
\newrobustcmd{\widetildebmomegaz}[2][]{\ensuremath{\subp{\widetilde{\bm{\omega}}}{}{#2}{}{#1}}}
\newrobustcmd{\acutebmomegaz}[2][]{\ensuremath{\subp{\acute{\bm{\omega}}}{}{#2}{}{#1}}}
\newrobustcmd{\gravebmomegaz}[2][]{\ensuremath{\subp{\grave{\bm{\omega}}}{}{#2}{}{#1}}}
\newrobustcmd{\dotbmomegaz}[2][]{\ensuremath{\subp{\dot{\bm{\omega}}}{}{#2}{}{#1}}}
\newrobustcmd{\ddotbmomegaz}[2][]{\ensuremath{\subp{\ddot{\bm{\omega}}}{}{#2}{}{#1}}}
\newrobustcmd{\brevebmomegaz}[2][]{\ensuremath{\subp{\breve{\bm{\omega}}}{}{#2}{}{#1}}}
\newrobustcmd{\barbmomegaz}[2][]{\ensuremath{\subp{\bar{\bm{\omega}}}{}{#2}{}{#1}}}
\newrobustcmd{\vecbmomegaz}[2][]{\ensuremath{\subp{\vec{\bm{\omega}}}{}{#2}{}{#1}}}
\newrobustcmd{\Alphaz}[2][]{\ensuremath{\subp{A}{}{#2}{}{#1}}}
\newrobustcmd{\hatAlphaz}[2][]{\ensuremath{\subp{\hat{A}}{}{#2}{}{#1}}}
\newrobustcmd{\widehatAlphaz}[2][]{\ensuremath{\subp{\widehat{A}}{}{#2}{}{#1}}}
\newrobustcmd{\checkAlphaz}[2][]{\ensuremath{\subp{\check{A}}{}{#2}{}{#1}}}
\newrobustcmd{\tildeAlphaz}[2][]{\ensuremath{\subp{\tilde{A}}{}{#2}{}{#1}}}
\newrobustcmd{\widetildeAlphaz}[2][]{\ensuremath{\subp{\widetilde{A}}{}{#2}{}{#1}}}
\newrobustcmd{\acuteAlphaz}[2][]{\ensuremath{\subp{\acute{A}}{}{#2}{}{#1}}}
\newrobustcmd{\graveAlphaz}[2][]{\ensuremath{\subp{\grave{A}}{}{#2}{}{#1}}}
\newrobustcmd{\dotAlphaz}[2][]{\ensuremath{\subp{\dot{A}}{}{#2}{}{#1}}}
\newrobustcmd{\ddotAlphaz}[2][]{\ensuremath{\subp{\ddot{A}}{}{#2}{}{#1}}}
\newrobustcmd{\breveAlphaz}[2][]{\ensuremath{\subp{\breve{A}}{}{#2}{}{#1}}}
\newrobustcmd{\barAlphaz}[2][]{\ensuremath{\subp{\bar{A}}{}{#2}{}{#1}}}
\newrobustcmd{\vecAlphaz}[2][]{\ensuremath{\subp{\vec{A}}{}{#2}{}{#1}}}
\newrobustcmd{\bmAlphaz}[2][]{\ensuremath{\subp{\bm{A}}{}{#2}{}{#1}}}
\newrobustcmd{\hatbmAlphaz}[2][]{\ensuremath{\subp{\hat{\bm{A}}}{}{#2}{}{#1}}}
\newrobustcmd{\widehatbmAlphaz}[2][]{\ensuremath{\subp{\widehat{\bm{A}}}{}{#2}{}{#1}}}
\newrobustcmd{\checkbmAlphaz}[2][]{\ensuremath{\subp{\check{\bm{A}}}{}{#2}{}{#1}}}
\newrobustcmd{\tildebmAlphaz}[2][]{\ensuremath{\subp{\tilde{\bm{A}}}{}{#2}{}{#1}}}
\newrobustcmd{\widetildebmAlphaz}[2][]{\ensuremath{\subp{\widetilde{\bm{A}}}{}{#2}{}{#1}}}
\newrobustcmd{\acutebmAlphaz}[2][]{\ensuremath{\subp{\acute{\bm{A}}}{}{#2}{}{#1}}}
\newrobustcmd{\gravebmAlphaz}[2][]{\ensuremath{\subp{\grave{\bm{A}}}{}{#2}{}{#1}}}
\newrobustcmd{\dotbmAlphaz}[2][]{\ensuremath{\subp{\dot{\bm{A}}}{}{#2}{}{#1}}}
\newrobustcmd{\ddotbmAlphaz}[2][]{\ensuremath{\subp{\ddot{\bm{A}}}{}{#2}{}{#1}}}
\newrobustcmd{\brevebmAlphaz}[2][]{\ensuremath{\subp{\breve{\bm{A}}}{}{#2}{}{#1}}}
\newrobustcmd{\barbmAlphaz}[2][]{\ensuremath{\subp{\bar{\bm{A}}}{}{#2}{}{#1}}}
\newrobustcmd{\vecbmAlphaz}[2][]{\ensuremath{\subp{\vec{\bm{A}}}{}{#2}{}{#1}}}
\newrobustcmd{\Betaz}[2][]{\ensuremath{\subp{B}{}{#2}{}{#1}}}
\newrobustcmd{\hatBetaz}[2][]{\ensuremath{\subp{\hat{B}}{}{#2}{}{#1}}}
\newrobustcmd{\widehatBetaz}[2][]{\ensuremath{\subp{\widehat{B}}{}{#2}{}{#1}}}
\newrobustcmd{\checkBetaz}[2][]{\ensuremath{\subp{\check{B}}{}{#2}{}{#1}}}
\newrobustcmd{\tildeBetaz}[2][]{\ensuremath{\subp{\tilde{B}}{}{#2}{}{#1}}}
\newrobustcmd{\widetildeBetaz}[2][]{\ensuremath{\subp{\widetilde{B}}{}{#2}{}{#1}}}
\newrobustcmd{\acuteBetaz}[2][]{\ensuremath{\subp{\acute{B}}{}{#2}{}{#1}}}
\newrobustcmd{\graveBetaz}[2][]{\ensuremath{\subp{\grave{B}}{}{#2}{}{#1}}}
\newrobustcmd{\dotBetaz}[2][]{\ensuremath{\subp{\dot{B}}{}{#2}{}{#1}}}
\newrobustcmd{\ddotBetaz}[2][]{\ensuremath{\subp{\ddot{B}}{}{#2}{}{#1}}}
\newrobustcmd{\breveBetaz}[2][]{\ensuremath{\subp{\breve{B}}{}{#2}{}{#1}}}
\newrobustcmd{\barBetaz}[2][]{\ensuremath{\subp{\bar{B}}{}{#2}{}{#1}}}
\newrobustcmd{\vecBetaz}[2][]{\ensuremath{\subp{\vec{B}}{}{#2}{}{#1}}}
\newrobustcmd{\bmBetaz}[2][]{\ensuremath{\subp{\bm{B}}{}{#2}{}{#1}}}
\newrobustcmd{\hatbmBetaz}[2][]{\ensuremath{\subp{\hat{\bm{B}}}{}{#2}{}{#1}}}
\newrobustcmd{\widehatbmBetaz}[2][]{\ensuremath{\subp{\widehat{\bm{B}}}{}{#2}{}{#1}}}
\newrobustcmd{\checkbmBetaz}[2][]{\ensuremath{\subp{\check{\bm{B}}}{}{#2}{}{#1}}}
\newrobustcmd{\tildebmBetaz}[2][]{\ensuremath{\subp{\tilde{\bm{B}}}{}{#2}{}{#1}}}
\newrobustcmd{\widetildebmBetaz}[2][]{\ensuremath{\subp{\widetilde{\bm{B}}}{}{#2}{}{#1}}}
\newrobustcmd{\acutebmBetaz}[2][]{\ensuremath{\subp{\acute{\bm{B}}}{}{#2}{}{#1}}}
\newrobustcmd{\gravebmBetaz}[2][]{\ensuremath{\subp{\grave{\bm{B}}}{}{#2}{}{#1}}}
\newrobustcmd{\dotbmBetaz}[2][]{\ensuremath{\subp{\dot{\bm{B}}}{}{#2}{}{#1}}}
\newrobustcmd{\ddotbmBetaz}[2][]{\ensuremath{\subp{\ddot{\bm{B}}}{}{#2}{}{#1}}}
\newrobustcmd{\brevebmBetaz}[2][]{\ensuremath{\subp{\breve{\bm{B}}}{}{#2}{}{#1}}}
\newrobustcmd{\barbmBetaz}[2][]{\ensuremath{\subp{\bar{\bm{B}}}{}{#2}{}{#1}}}
\newrobustcmd{\vecbmBetaz}[2][]{\ensuremath{\subp{\vec{\bm{B}}}{}{#2}{}{#1}}}
\newrobustcmd{\Gammaz}[2][]{\ensuremath{\subp{\Gamma}{}{#2}{}{#1}}}
\newrobustcmd{\hatGammaz}[2][]{\ensuremath{\subp{\hat{\Gamma}}{}{#2}{}{#1}}}
\newrobustcmd{\widehatGammaz}[2][]{\ensuremath{\subp{\widehat{\Gamma}}{}{#2}{}{#1}}}
\newrobustcmd{\checkGammaz}[2][]{\ensuremath{\subp{\check{\Gamma}}{}{#2}{}{#1}}}
\newrobustcmd{\tildeGammaz}[2][]{\ensuremath{\subp{\tilde{\Gamma}}{}{#2}{}{#1}}}
\newrobustcmd{\widetildeGammaz}[2][]{\ensuremath{\subp{\widetilde{\Gamma}}{}{#2}{}{#1}}}
\newrobustcmd{\acuteGammaz}[2][]{\ensuremath{\subp{\acute{\Gamma}}{}{#2}{}{#1}}}
\newrobustcmd{\graveGammaz}[2][]{\ensuremath{\subp{\grave{\Gamma}}{}{#2}{}{#1}}}
\newrobustcmd{\dotGammaz}[2][]{\ensuremath{\subp{\dot{\Gamma}}{}{#2}{}{#1}}}
\newrobustcmd{\ddotGammaz}[2][]{\ensuremath{\subp{\ddot{\Gamma}}{}{#2}{}{#1}}}
\newrobustcmd{\breveGammaz}[2][]{\ensuremath{\subp{\breve{\Gamma}}{}{#2}{}{#1}}}
\newrobustcmd{\barGammaz}[2][]{\ensuremath{\subp{\bar{\Gamma}}{}{#2}{}{#1}}}
\newrobustcmd{\vecGammaz}[2][]{\ensuremath{\subp{\vec{\Gamma}}{}{#2}{}{#1}}}
\newrobustcmd{\bmGammaz}[2][]{\ensuremath{\subp{\bm{\Gamma}}{}{#2}{}{#1}}}
\newrobustcmd{\hatbmGammaz}[2][]{\ensuremath{\subp{\hat{\bm{\Gamma}}}{}{#2}{}{#1}}}
\newrobustcmd{\widehatbmGammaz}[2][]{\ensuremath{\subp{\widehat{\bm{\Gamma}}}{}{#2}{}{#1}}}
\newrobustcmd{\checkbmGammaz}[2][]{\ensuremath{\subp{\check{\bm{\Gamma}}}{}{#2}{}{#1}}}
\newrobustcmd{\tildebmGammaz}[2][]{\ensuremath{\subp{\tilde{\bm{\Gamma}}}{}{#2}{}{#1}}}
\newrobustcmd{\widetildebmGammaz}[2][]{\ensuremath{\subp{\widetilde{\bm{\Gamma}}}{}{#2}{}{#1}}}
\newrobustcmd{\acutebmGammaz}[2][]{\ensuremath{\subp{\acute{\bm{\Gamma}}}{}{#2}{}{#1}}}
\newrobustcmd{\gravebmGammaz}[2][]{\ensuremath{\subp{\grave{\bm{\Gamma}}}{}{#2}{}{#1}}}
\newrobustcmd{\dotbmGammaz}[2][]{\ensuremath{\subp{\dot{\bm{\Gamma}}}{}{#2}{}{#1}}}
\newrobustcmd{\ddotbmGammaz}[2][]{\ensuremath{\subp{\ddot{\bm{\Gamma}}}{}{#2}{}{#1}}}
\newrobustcmd{\brevebmGammaz}[2][]{\ensuremath{\subp{\breve{\bm{\Gamma}}}{}{#2}{}{#1}}}
\newrobustcmd{\barbmGammaz}[2][]{\ensuremath{\subp{\bar{\bm{\Gamma}}}{}{#2}{}{#1}}}
\newrobustcmd{\vecbmGammaz}[2][]{\ensuremath{\subp{\vec{\bm{\Gamma}}}{}{#2}{}{#1}}}
\newrobustcmd{\Deltaz}[2][]{\ensuremath{\subp{\Delta}{}{#2}{}{#1}}}
\newrobustcmd{\hatDeltaz}[2][]{\ensuremath{\subp{\hat{\Delta}}{}{#2}{}{#1}}}
\newrobustcmd{\widehatDeltaz}[2][]{\ensuremath{\subp{\widehat{\Delta}}{}{#2}{}{#1}}}
\newrobustcmd{\checkDeltaz}[2][]{\ensuremath{\subp{\check{\Delta}}{}{#2}{}{#1}}}
\newrobustcmd{\tildeDeltaz}[2][]{\ensuremath{\subp{\tilde{\Delta}}{}{#2}{}{#1}}}
\newrobustcmd{\widetildeDeltaz}[2][]{\ensuremath{\subp{\widetilde{\Delta}}{}{#2}{}{#1}}}
\newrobustcmd{\acuteDeltaz}[2][]{\ensuremath{\subp{\acute{\Delta}}{}{#2}{}{#1}}}
\newrobustcmd{\graveDeltaz}[2][]{\ensuremath{\subp{\grave{\Delta}}{}{#2}{}{#1}}}
\newrobustcmd{\dotDeltaz}[2][]{\ensuremath{\subp{\dot{\Delta}}{}{#2}{}{#1}}}
\newrobustcmd{\ddotDeltaz}[2][]{\ensuremath{\subp{\ddot{\Delta}}{}{#2}{}{#1}}}
\newrobustcmd{\breveDeltaz}[2][]{\ensuremath{\subp{\breve{\Delta}}{}{#2}{}{#1}}}
\newrobustcmd{\barDeltaz}[2][]{\ensuremath{\subp{\bar{\Delta}}{}{#2}{}{#1}}}
\newrobustcmd{\vecDeltaz}[2][]{\ensuremath{\subp{\vec{\Delta}}{}{#2}{}{#1}}}
\newrobustcmd{\bmDeltaz}[2][]{\ensuremath{\subp{\bm{\Delta}}{}{#2}{}{#1}}}
\newrobustcmd{\hatbmDeltaz}[2][]{\ensuremath{\subp{\hat{\bm{\Delta}}}{}{#2}{}{#1}}}
\newrobustcmd{\widehatbmDeltaz}[2][]{\ensuremath{\subp{\widehat{\bm{\Delta}}}{}{#2}{}{#1}}}
\newrobustcmd{\checkbmDeltaz}[2][]{\ensuremath{\subp{\check{\bm{\Delta}}}{}{#2}{}{#1}}}
\newrobustcmd{\tildebmDeltaz}[2][]{\ensuremath{\subp{\tilde{\bm{\Delta}}}{}{#2}{}{#1}}}
\newrobustcmd{\widetildebmDeltaz}[2][]{\ensuremath{\subp{\widetilde{\bm{\Delta}}}{}{#2}{}{#1}}}
\newrobustcmd{\acutebmDeltaz}[2][]{\ensuremath{\subp{\acute{\bm{\Delta}}}{}{#2}{}{#1}}}
\newrobustcmd{\gravebmDeltaz}[2][]{\ensuremath{\subp{\grave{\bm{\Delta}}}{}{#2}{}{#1}}}
\newrobustcmd{\dotbmDeltaz}[2][]{\ensuremath{\subp{\dot{\bm{\Delta}}}{}{#2}{}{#1}}}
\newrobustcmd{\ddotbmDeltaz}[2][]{\ensuremath{\subp{\ddot{\bm{\Delta}}}{}{#2}{}{#1}}}
\newrobustcmd{\brevebmDeltaz}[2][]{\ensuremath{\subp{\breve{\bm{\Delta}}}{}{#2}{}{#1}}}
\newrobustcmd{\barbmDeltaz}[2][]{\ensuremath{\subp{\bar{\bm{\Delta}}}{}{#2}{}{#1}}}
\newrobustcmd{\vecbmDeltaz}[2][]{\ensuremath{\subp{\vec{\bm{\Delta}}}{}{#2}{}{#1}}}
\newrobustcmd{\Epsilonz}[2][]{\ensuremath{\subp{\Epsilon}{}{#2}{}{#1}}}
\newrobustcmd{\hatEpsilonz}[2][]{\ensuremath{\subp{\hat{\Epsilon}}{}{#2}{}{#1}}}
\newrobustcmd{\widehatEpsilonz}[2][]{\ensuremath{\subp{\widehat{\Epsilon}}{}{#2}{}{#1}}}
\newrobustcmd{\checkEpsilonz}[2][]{\ensuremath{\subp{\check{\Epsilon}}{}{#2}{}{#1}}}
\newrobustcmd{\tildeEpsilonz}[2][]{\ensuremath{\subp{\tilde{\Epsilon}}{}{#2}{}{#1}}}
\newrobustcmd{\widetildeEpsilonz}[2][]{\ensuremath{\subp{\widetilde{\Epsilon}}{}{#2}{}{#1}}}
\newrobustcmd{\acuteEpsilonz}[2][]{\ensuremath{\subp{\acute{\Epsilon}}{}{#2}{}{#1}}}
\newrobustcmd{\graveEpsilonz}[2][]{\ensuremath{\subp{\grave{\Epsilon}}{}{#2}{}{#1}}}
\newrobustcmd{\dotEpsilonz}[2][]{\ensuremath{\subp{\dot{\Epsilon}}{}{#2}{}{#1}}}
\newrobustcmd{\ddotEpsilonz}[2][]{\ensuremath{\subp{\ddot{\Epsilon}}{}{#2}{}{#1}}}
\newrobustcmd{\breveEpsilonz}[2][]{\ensuremath{\subp{\breve{\Epsilon}}{}{#2}{}{#1}}}
\newrobustcmd{\barEpsilonz}[2][]{\ensuremath{\subp{\bar{\Epsilon}}{}{#2}{}{#1}}}
\newrobustcmd{\vecEpsilonz}[2][]{\ensuremath{\subp{\vec{\Epsilon}}{}{#2}{}{#1}}}
\newrobustcmd{\bmEpsilonz}[2][]{\ensuremath{\subp{\bm{\Epsilon}}{}{#2}{}{#1}}}
\newrobustcmd{\hatbmEpsilonz}[2][]{\ensuremath{\subp{\hat{\bm{\Epsilon}}}{}{#2}{}{#1}}}
\newrobustcmd{\widehatbmEpsilonz}[2][]{\ensuremath{\subp{\widehat{\bm{\Epsilon}}}{}{#2}{}{#1}}}
\newrobustcmd{\checkbmEpsilonz}[2][]{\ensuremath{\subp{\check{\bm{\Epsilon}}}{}{#2}{}{#1}}}
\newrobustcmd{\tildebmEpsilonz}[2][]{\ensuremath{\subp{\tilde{\bm{\Epsilon}}}{}{#2}{}{#1}}}
\newrobustcmd{\widetildebmEpsilonz}[2][]{\ensuremath{\subp{\widetilde{\bm{\Epsilon}}}{}{#2}{}{#1}}}
\newrobustcmd{\acutebmEpsilonz}[2][]{\ensuremath{\subp{\acute{\bm{\Epsilon}}}{}{#2}{}{#1}}}
\newrobustcmd{\gravebmEpsilonz}[2][]{\ensuremath{\subp{\grave{\bm{\Epsilon}}}{}{#2}{}{#1}}}
\newrobustcmd{\dotbmEpsilonz}[2][]{\ensuremath{\subp{\dot{\bm{\Epsilon}}}{}{#2}{}{#1}}}
\newrobustcmd{\ddotbmEpsilonz}[2][]{\ensuremath{\subp{\ddot{\bm{\Epsilon}}}{}{#2}{}{#1}}}
\newrobustcmd{\brevebmEpsilonz}[2][]{\ensuremath{\subp{\breve{\bm{\Epsilon}}}{}{#2}{}{#1}}}
\newrobustcmd{\barbmEpsilonz}[2][]{\ensuremath{\subp{\bar{\bm{\Epsilon}}}{}{#2}{}{#1}}}
\newrobustcmd{\vecbmEpsilonz}[2][]{\ensuremath{\subp{\vec{\bm{\Epsilon}}}{}{#2}{}{#1}}}
\newrobustcmd{\Zetaz}[2][]{\ensuremath{\subp{Z}{}{#2}{}{#1}}}
\newrobustcmd{\hatZetaz}[2][]{\ensuremath{\subp{\hat{Z}}{}{#2}{}{#1}}}
\newrobustcmd{\widehatZetaz}[2][]{\ensuremath{\subp{\widehat{Z}}{}{#2}{}{#1}}}
\newrobustcmd{\checkZetaz}[2][]{\ensuremath{\subp{\check{Z}}{}{#2}{}{#1}}}
\newrobustcmd{\tildeZetaz}[2][]{\ensuremath{\subp{\tilde{Z}}{}{#2}{}{#1}}}
\newrobustcmd{\widetildeZetaz}[2][]{\ensuremath{\subp{\widetilde{Z}}{}{#2}{}{#1}}}
\newrobustcmd{\acuteZetaz}[2][]{\ensuremath{\subp{\acute{Z}}{}{#2}{}{#1}}}
\newrobustcmd{\graveZetaz}[2][]{\ensuremath{\subp{\grave{Z}}{}{#2}{}{#1}}}
\newrobustcmd{\dotZetaz}[2][]{\ensuremath{\subp{\dot{Z}}{}{#2}{}{#1}}}
\newrobustcmd{\ddotZetaz}[2][]{\ensuremath{\subp{\ddot{Z}}{}{#2}{}{#1}}}
\newrobustcmd{\breveZetaz}[2][]{\ensuremath{\subp{\breve{Z}}{}{#2}{}{#1}}}
\newrobustcmd{\barZetaz}[2][]{\ensuremath{\subp{\bar{Z}}{}{#2}{}{#1}}}
\newrobustcmd{\vecZetaz}[2][]{\ensuremath{\subp{\vec{Z}}{}{#2}{}{#1}}}
\newrobustcmd{\bmZetaz}[2][]{\ensuremath{\subp{\bm{Z}}{}{#2}{}{#1}}}
\newrobustcmd{\hatbmZetaz}[2][]{\ensuremath{\subp{\hat{\bm{Z}}}{}{#2}{}{#1}}}
\newrobustcmd{\widehatbmZetaz}[2][]{\ensuremath{\subp{\widehat{\bm{Z}}}{}{#2}{}{#1}}}
\newrobustcmd{\checkbmZetaz}[2][]{\ensuremath{\subp{\check{\bm{Z}}}{}{#2}{}{#1}}}
\newrobustcmd{\tildebmZetaz}[2][]{\ensuremath{\subp{\tilde{\bm{Z}}}{}{#2}{}{#1}}}
\newrobustcmd{\widetildebmZetaz}[2][]{\ensuremath{\subp{\widetilde{\bm{Z}}}{}{#2}{}{#1}}}
\newrobustcmd{\acutebmZetaz}[2][]{\ensuremath{\subp{\acute{\bm{Z}}}{}{#2}{}{#1}}}
\newrobustcmd{\gravebmZetaz}[2][]{\ensuremath{\subp{\grave{\bm{Z}}}{}{#2}{}{#1}}}
\newrobustcmd{\dotbmZetaz}[2][]{\ensuremath{\subp{\dot{\bm{Z}}}{}{#2}{}{#1}}}
\newrobustcmd{\ddotbmZetaz}[2][]{\ensuremath{\subp{\ddot{\bm{Z}}}{}{#2}{}{#1}}}
\newrobustcmd{\brevebmZetaz}[2][]{\ensuremath{\subp{\breve{\bm{Z}}}{}{#2}{}{#1}}}
\newrobustcmd{\barbmZetaz}[2][]{\ensuremath{\subp{\bar{\bm{Z}}}{}{#2}{}{#1}}}
\newrobustcmd{\vecbmZetaz}[2][]{\ensuremath{\subp{\vec{\bm{Z}}}{}{#2}{}{#1}}}
\newrobustcmd{\Etaz}[2][]{\ensuremath{\subp{H}{}{#2}{}{#1}}}
\newrobustcmd{\hatEtaz}[2][]{\ensuremath{\subp{\hat{H}}{}{#2}{}{#1}}}
\newrobustcmd{\widehatEtaz}[2][]{\ensuremath{\subp{\widehat{H}}{}{#2}{}{#1}}}
\newrobustcmd{\checkEtaz}[2][]{\ensuremath{\subp{\check{H}}{}{#2}{}{#1}}}
\newrobustcmd{\tildeEtaz}[2][]{\ensuremath{\subp{\tilde{H}}{}{#2}{}{#1}}}
\newrobustcmd{\widetildeEtaz}[2][]{\ensuremath{\subp{\widetilde{H}}{}{#2}{}{#1}}}
\newrobustcmd{\acuteEtaz}[2][]{\ensuremath{\subp{\acute{H}}{}{#2}{}{#1}}}
\newrobustcmd{\graveEtaz}[2][]{\ensuremath{\subp{\grave{H}}{}{#2}{}{#1}}}
\newrobustcmd{\dotEtaz}[2][]{\ensuremath{\subp{\dot{H}}{}{#2}{}{#1}}}
\newrobustcmd{\ddotEtaz}[2][]{\ensuremath{\subp{\ddot{H}}{}{#2}{}{#1}}}
\newrobustcmd{\breveEtaz}[2][]{\ensuremath{\subp{\breve{H}}{}{#2}{}{#1}}}
\newrobustcmd{\barEtaz}[2][]{\ensuremath{\subp{\bar{H}}{}{#2}{}{#1}}}
\newrobustcmd{\vecEtaz}[2][]{\ensuremath{\subp{\vec{H}}{}{#2}{}{#1}}}
\newrobustcmd{\bmEtaz}[2][]{\ensuremath{\subp{\bm{H}}{}{#2}{}{#1}}}
\newrobustcmd{\hatbmEtaz}[2][]{\ensuremath{\subp{\hat{\bm{H}}}{}{#2}{}{#1}}}
\newrobustcmd{\widehatbmEtaz}[2][]{\ensuremath{\subp{\widehat{\bm{H}}}{}{#2}{}{#1}}}
\newrobustcmd{\checkbmEtaz}[2][]{\ensuremath{\subp{\check{\bm{H}}}{}{#2}{}{#1}}}
\newrobustcmd{\tildebmEtaz}[2][]{\ensuremath{\subp{\tilde{\bm{H}}}{}{#2}{}{#1}}}
\newrobustcmd{\widetildebmEtaz}[2][]{\ensuremath{\subp{\widetilde{\bm{H}}}{}{#2}{}{#1}}}
\newrobustcmd{\acutebmEtaz}[2][]{\ensuremath{\subp{\acute{\bm{H}}}{}{#2}{}{#1}}}
\newrobustcmd{\gravebmEtaz}[2][]{\ensuremath{\subp{\grave{\bm{H}}}{}{#2}{}{#1}}}
\newrobustcmd{\dotbmEtaz}[2][]{\ensuremath{\subp{\dot{\bm{H}}}{}{#2}{}{#1}}}
\newrobustcmd{\ddotbmEtaz}[2][]{\ensuremath{\subp{\ddot{\bm{H}}}{}{#2}{}{#1}}}
\newrobustcmd{\brevebmEtaz}[2][]{\ensuremath{\subp{\breve{\bm{H}}}{}{#2}{}{#1}}}
\newrobustcmd{\barbmEtaz}[2][]{\ensuremath{\subp{\bar{\bm{H}}}{}{#2}{}{#1}}}
\newrobustcmd{\vecbmEtaz}[2][]{\ensuremath{\subp{\vec{\bm{H}}}{}{#2}{}{#1}}}
\newrobustcmd{\Thetaz}[2][]{\ensuremath{\subp{\Theta}{}{#2}{}{#1}}}
\newrobustcmd{\hatThetaz}[2][]{\ensuremath{\subp{\hat{\Theta}}{}{#2}{}{#1}}}
\newrobustcmd{\widehatThetaz}[2][]{\ensuremath{\subp{\widehat{\Theta}}{}{#2}{}{#1}}}
\newrobustcmd{\checkThetaz}[2][]{\ensuremath{\subp{\check{\Theta}}{}{#2}{}{#1}}}
\newrobustcmd{\tildeThetaz}[2][]{\ensuremath{\subp{\tilde{\Theta}}{}{#2}{}{#1}}}
\newrobustcmd{\widetildeThetaz}[2][]{\ensuremath{\subp{\widetilde{\Theta}}{}{#2}{}{#1}}}
\newrobustcmd{\acuteThetaz}[2][]{\ensuremath{\subp{\acute{\Theta}}{}{#2}{}{#1}}}
\newrobustcmd{\graveThetaz}[2][]{\ensuremath{\subp{\grave{\Theta}}{}{#2}{}{#1}}}
\newrobustcmd{\dotThetaz}[2][]{\ensuremath{\subp{\dot{\Theta}}{}{#2}{}{#1}}}
\newrobustcmd{\ddotThetaz}[2][]{\ensuremath{\subp{\ddot{\Theta}}{}{#2}{}{#1}}}
\newrobustcmd{\breveThetaz}[2][]{\ensuremath{\subp{\breve{\Theta}}{}{#2}{}{#1}}}
\newrobustcmd{\barThetaz}[2][]{\ensuremath{\subp{\bar{\Theta}}{}{#2}{}{#1}}}
\newrobustcmd{\vecThetaz}[2][]{\ensuremath{\subp{\vec{\Theta}}{}{#2}{}{#1}}}
\newrobustcmd{\bmThetaz}[2][]{\ensuremath{\subp{\bm{\Theta}}{}{#2}{}{#1}}}
\newrobustcmd{\hatbmThetaz}[2][]{\ensuremath{\subp{\hat{\bm{\Theta}}}{}{#2}{}{#1}}}
\newrobustcmd{\widehatbmThetaz}[2][]{\ensuremath{\subp{\widehat{\bm{\Theta}}}{}{#2}{}{#1}}}
\newrobustcmd{\checkbmThetaz}[2][]{\ensuremath{\subp{\check{\bm{\Theta}}}{}{#2}{}{#1}}}
\newrobustcmd{\tildebmThetaz}[2][]{\ensuremath{\subp{\tilde{\bm{\Theta}}}{}{#2}{}{#1}}}
\newrobustcmd{\widetildebmThetaz}[2][]{\ensuremath{\subp{\widetilde{\bm{\Theta}}}{}{#2}{}{#1}}}
\newrobustcmd{\acutebmThetaz}[2][]{\ensuremath{\subp{\acute{\bm{\Theta}}}{}{#2}{}{#1}}}
\newrobustcmd{\gravebmThetaz}[2][]{\ensuremath{\subp{\grave{\bm{\Theta}}}{}{#2}{}{#1}}}
\newrobustcmd{\dotbmThetaz}[2][]{\ensuremath{\subp{\dot{\bm{\Theta}}}{}{#2}{}{#1}}}
\newrobustcmd{\ddotbmThetaz}[2][]{\ensuremath{\subp{\ddot{\bm{\Theta}}}{}{#2}{}{#1}}}
\newrobustcmd{\brevebmThetaz}[2][]{\ensuremath{\subp{\breve{\bm{\Theta}}}{}{#2}{}{#1}}}
\newrobustcmd{\barbmThetaz}[2][]{\ensuremath{\subp{\bar{\bm{\Theta}}}{}{#2}{}{#1}}}
\newrobustcmd{\vecbmThetaz}[2][]{\ensuremath{\subp{\vec{\bm{\Theta}}}{}{#2}{}{#1}}}
\newrobustcmd{\Iotaz}[2][]{\ensuremath{\subp{I}{}{#2}{}{#1}}}
\newrobustcmd{\hatIotaz}[2][]{\ensuremath{\subp{\hat{I}}{}{#2}{}{#1}}}
\newrobustcmd{\widehatIotaz}[2][]{\ensuremath{\subp{\widehat{I}}{}{#2}{}{#1}}}
\newrobustcmd{\checkIotaz}[2][]{\ensuremath{\subp{\check{I}}{}{#2}{}{#1}}}
\newrobustcmd{\tildeIotaz}[2][]{\ensuremath{\subp{\tilde{I}}{}{#2}{}{#1}}}
\newrobustcmd{\widetildeIotaz}[2][]{\ensuremath{\subp{\widetilde{I}}{}{#2}{}{#1}}}
\newrobustcmd{\acuteIotaz}[2][]{\ensuremath{\subp{\acute{I}}{}{#2}{}{#1}}}
\newrobustcmd{\graveIotaz}[2][]{\ensuremath{\subp{\grave{I}}{}{#2}{}{#1}}}
\newrobustcmd{\dotIotaz}[2][]{\ensuremath{\subp{\dot{I}}{}{#2}{}{#1}}}
\newrobustcmd{\ddotIotaz}[2][]{\ensuremath{\subp{\ddot{I}}{}{#2}{}{#1}}}
\newrobustcmd{\breveIotaz}[2][]{\ensuremath{\subp{\breve{I}}{}{#2}{}{#1}}}
\newrobustcmd{\barIotaz}[2][]{\ensuremath{\subp{\bar{I}}{}{#2}{}{#1}}}
\newrobustcmd{\vecIotaz}[2][]{\ensuremath{\subp{\vec{I}}{}{#2}{}{#1}}}
\newrobustcmd{\bmIotaz}[2][]{\ensuremath{\subp{\bm{I}}{}{#2}{}{#1}}}
\newrobustcmd{\hatbmIotaz}[2][]{\ensuremath{\subp{\hat{\bm{I}}}{}{#2}{}{#1}}}
\newrobustcmd{\widehatbmIotaz}[2][]{\ensuremath{\subp{\widehat{\bm{I}}}{}{#2}{}{#1}}}
\newrobustcmd{\checkbmIotaz}[2][]{\ensuremath{\subp{\check{\bm{I}}}{}{#2}{}{#1}}}
\newrobustcmd{\tildebmIotaz}[2][]{\ensuremath{\subp{\tilde{\bm{I}}}{}{#2}{}{#1}}}
\newrobustcmd{\widetildebmIotaz}[2][]{\ensuremath{\subp{\widetilde{\bm{I}}}{}{#2}{}{#1}}}
\newrobustcmd{\acutebmIotaz}[2][]{\ensuremath{\subp{\acute{\bm{I}}}{}{#2}{}{#1}}}
\newrobustcmd{\gravebmIotaz}[2][]{\ensuremath{\subp{\grave{\bm{I}}}{}{#2}{}{#1}}}
\newrobustcmd{\dotbmIotaz}[2][]{\ensuremath{\subp{\dot{\bm{I}}}{}{#2}{}{#1}}}
\newrobustcmd{\ddotbmIotaz}[2][]{\ensuremath{\subp{\ddot{\bm{I}}}{}{#2}{}{#1}}}
\newrobustcmd{\brevebmIotaz}[2][]{\ensuremath{\subp{\breve{\bm{I}}}{}{#2}{}{#1}}}
\newrobustcmd{\barbmIotaz}[2][]{\ensuremath{\subp{\bar{\bm{I}}}{}{#2}{}{#1}}}
\newrobustcmd{\vecbmIotaz}[2][]{\ensuremath{\subp{\vec{\bm{I}}}{}{#2}{}{#1}}}
\newrobustcmd{\Kappaz}[2][]{\ensuremath{\subp{K}{}{#2}{}{#1}}}
\newrobustcmd{\hatKappaz}[2][]{\ensuremath{\subp{\hat{K}}{}{#2}{}{#1}}}
\newrobustcmd{\widehatKappaz}[2][]{\ensuremath{\subp{\widehat{K}}{}{#2}{}{#1}}}
\newrobustcmd{\checkKappaz}[2][]{\ensuremath{\subp{\check{K}}{}{#2}{}{#1}}}
\newrobustcmd{\tildeKappaz}[2][]{\ensuremath{\subp{\tilde{K}}{}{#2}{}{#1}}}
\newrobustcmd{\widetildeKappaz}[2][]{\ensuremath{\subp{\widetilde{K}}{}{#2}{}{#1}}}
\newrobustcmd{\acuteKappaz}[2][]{\ensuremath{\subp{\acute{K}}{}{#2}{}{#1}}}
\newrobustcmd{\graveKappaz}[2][]{\ensuremath{\subp{\grave{K}}{}{#2}{}{#1}}}
\newrobustcmd{\dotKappaz}[2][]{\ensuremath{\subp{\dot{K}}{}{#2}{}{#1}}}
\newrobustcmd{\ddotKappaz}[2][]{\ensuremath{\subp{\ddot{K}}{}{#2}{}{#1}}}
\newrobustcmd{\breveKappaz}[2][]{\ensuremath{\subp{\breve{K}}{}{#2}{}{#1}}}
\newrobustcmd{\barKappaz}[2][]{\ensuremath{\subp{\bar{K}}{}{#2}{}{#1}}}
\newrobustcmd{\vecKappaz}[2][]{\ensuremath{\subp{\vec{K}}{}{#2}{}{#1}}}
\newrobustcmd{\bmKappaz}[2][]{\ensuremath{\subp{\bm{K}}{}{#2}{}{#1}}}
\newrobustcmd{\hatbmKappaz}[2][]{\ensuremath{\subp{\hat{\bm{K}}}{}{#2}{}{#1}}}
\newrobustcmd{\widehatbmKappaz}[2][]{\ensuremath{\subp{\widehat{\bm{K}}}{}{#2}{}{#1}}}
\newrobustcmd{\checkbmKappaz}[2][]{\ensuremath{\subp{\check{\bm{K}}}{}{#2}{}{#1}}}
\newrobustcmd{\tildebmKappaz}[2][]{\ensuremath{\subp{\tilde{\bm{K}}}{}{#2}{}{#1}}}
\newrobustcmd{\widetildebmKappaz}[2][]{\ensuremath{\subp{\widetilde{\bm{K}}}{}{#2}{}{#1}}}
\newrobustcmd{\acutebmKappaz}[2][]{\ensuremath{\subp{\acute{\bm{K}}}{}{#2}{}{#1}}}
\newrobustcmd{\gravebmKappaz}[2][]{\ensuremath{\subp{\grave{\bm{K}}}{}{#2}{}{#1}}}
\newrobustcmd{\dotbmKappaz}[2][]{\ensuremath{\subp{\dot{\bm{K}}}{}{#2}{}{#1}}}
\newrobustcmd{\ddotbmKappaz}[2][]{\ensuremath{\subp{\ddot{\bm{K}}}{}{#2}{}{#1}}}
\newrobustcmd{\brevebmKappaz}[2][]{\ensuremath{\subp{\breve{\bm{K}}}{}{#2}{}{#1}}}
\newrobustcmd{\barbmKappaz}[2][]{\ensuremath{\subp{\bar{\bm{K}}}{}{#2}{}{#1}}}
\newrobustcmd{\vecbmKappaz}[2][]{\ensuremath{\subp{\vec{\bm{K}}}{}{#2}{}{#1}}}
\newrobustcmd{\Lambdaz}[2][]{\ensuremath{\subp{\Lambda}{}{#2}{}{#1}}}
\newrobustcmd{\hatLambdaz}[2][]{\ensuremath{\subp{\hat{\Lambda}}{}{#2}{}{#1}}}
\newrobustcmd{\widehatLambdaz}[2][]{\ensuremath{\subp{\widehat{\Lambda}}{}{#2}{}{#1}}}
\newrobustcmd{\checkLambdaz}[2][]{\ensuremath{\subp{\check{\Lambda}}{}{#2}{}{#1}}}
\newrobustcmd{\tildeLambdaz}[2][]{\ensuremath{\subp{\tilde{\Lambda}}{}{#2}{}{#1}}}
\newrobustcmd{\widetildeLambdaz}[2][]{\ensuremath{\subp{\widetilde{\Lambda}}{}{#2}{}{#1}}}
\newrobustcmd{\acuteLambdaz}[2][]{\ensuremath{\subp{\acute{\Lambda}}{}{#2}{}{#1}}}
\newrobustcmd{\graveLambdaz}[2][]{\ensuremath{\subp{\grave{\Lambda}}{}{#2}{}{#1}}}
\newrobustcmd{\dotLambdaz}[2][]{\ensuremath{\subp{\dot{\Lambda}}{}{#2}{}{#1}}}
\newrobustcmd{\ddotLambdaz}[2][]{\ensuremath{\subp{\ddot{\Lambda}}{}{#2}{}{#1}}}
\newrobustcmd{\breveLambdaz}[2][]{\ensuremath{\subp{\breve{\Lambda}}{}{#2}{}{#1}}}
\newrobustcmd{\barLambdaz}[2][]{\ensuremath{\subp{\bar{\Lambda}}{}{#2}{}{#1}}}
\newrobustcmd{\vecLambdaz}[2][]{\ensuremath{\subp{\vec{\Lambda}}{}{#2}{}{#1}}}
\newrobustcmd{\bmLambdaz}[2][]{\ensuremath{\subp{\bm{\Lambda}}{}{#2}{}{#1}}}
\newrobustcmd{\hatbmLambdaz}[2][]{\ensuremath{\subp{\hat{\bm{\Lambda}}}{}{#2}{}{#1}}}
\newrobustcmd{\widehatbmLambdaz}[2][]{\ensuremath{\subp{\widehat{\bm{\Lambda}}}{}{#2}{}{#1}}}
\newrobustcmd{\checkbmLambdaz}[2][]{\ensuremath{\subp{\check{\bm{\Lambda}}}{}{#2}{}{#1}}}
\newrobustcmd{\tildebmLambdaz}[2][]{\ensuremath{\subp{\tilde{\bm{\Lambda}}}{}{#2}{}{#1}}}
\newrobustcmd{\widetildebmLambdaz}[2][]{\ensuremath{\subp{\widetilde{\bm{\Lambda}}}{}{#2}{}{#1}}}
\newrobustcmd{\acutebmLambdaz}[2][]{\ensuremath{\subp{\acute{\bm{\Lambda}}}{}{#2}{}{#1}}}
\newrobustcmd{\gravebmLambdaz}[2][]{\ensuremath{\subp{\grave{\bm{\Lambda}}}{}{#2}{}{#1}}}
\newrobustcmd{\dotbmLambdaz}[2][]{\ensuremath{\subp{\dot{\bm{\Lambda}}}{}{#2}{}{#1}}}
\newrobustcmd{\ddotbmLambdaz}[2][]{\ensuremath{\subp{\ddot{\bm{\Lambda}}}{}{#2}{}{#1}}}
\newrobustcmd{\brevebmLambdaz}[2][]{\ensuremath{\subp{\breve{\bm{\Lambda}}}{}{#2}{}{#1}}}
\newrobustcmd{\barbmLambdaz}[2][]{\ensuremath{\subp{\bar{\bm{\Lambda}}}{}{#2}{}{#1}}}
\newrobustcmd{\vecbmLambdaz}[2][]{\ensuremath{\subp{\vec{\bm{\Lambda}}}{}{#2}{}{#1}}}
\newrobustcmd{\Muz}[2][]{\ensuremath{\subp{M}{}{#2}{}{#1}}}
\newrobustcmd{\hatMuz}[2][]{\ensuremath{\subp{\hat{M}}{}{#2}{}{#1}}}
\newrobustcmd{\widehatMuz}[2][]{\ensuremath{\subp{\widehat{M}}{}{#2}{}{#1}}}
\newrobustcmd{\checkMuz}[2][]{\ensuremath{\subp{\check{M}}{}{#2}{}{#1}}}
\newrobustcmd{\tildeMuz}[2][]{\ensuremath{\subp{\tilde{M}}{}{#2}{}{#1}}}
\newrobustcmd{\widetildeMuz}[2][]{\ensuremath{\subp{\widetilde{M}}{}{#2}{}{#1}}}
\newrobustcmd{\acuteMuz}[2][]{\ensuremath{\subp{\acute{M}}{}{#2}{}{#1}}}
\newrobustcmd{\graveMuz}[2][]{\ensuremath{\subp{\grave{M}}{}{#2}{}{#1}}}
\newrobustcmd{\dotMuz}[2][]{\ensuremath{\subp{\dot{M}}{}{#2}{}{#1}}}
\newrobustcmd{\ddotMuz}[2][]{\ensuremath{\subp{\ddot{M}}{}{#2}{}{#1}}}
\newrobustcmd{\breveMuz}[2][]{\ensuremath{\subp{\breve{M}}{}{#2}{}{#1}}}
\newrobustcmd{\barMuz}[2][]{\ensuremath{\subp{\bar{M}}{}{#2}{}{#1}}}
\newrobustcmd{\vecMuz}[2][]{\ensuremath{\subp{\vec{M}}{}{#2}{}{#1}}}
\newrobustcmd{\bmMuz}[2][]{\ensuremath{\subp{\bm{M}}{}{#2}{}{#1}}}
\newrobustcmd{\hatbmMuz}[2][]{\ensuremath{\subp{\hat{\bm{M}}}{}{#2}{}{#1}}}
\newrobustcmd{\widehatbmMuz}[2][]{\ensuremath{\subp{\widehat{\bm{M}}}{}{#2}{}{#1}}}
\newrobustcmd{\checkbmMuz}[2][]{\ensuremath{\subp{\check{\bm{M}}}{}{#2}{}{#1}}}
\newrobustcmd{\tildebmMuz}[2][]{\ensuremath{\subp{\tilde{\bm{M}}}{}{#2}{}{#1}}}
\newrobustcmd{\widetildebmMuz}[2][]{\ensuremath{\subp{\widetilde{\bm{M}}}{}{#2}{}{#1}}}
\newrobustcmd{\acutebmMuz}[2][]{\ensuremath{\subp{\acute{\bm{M}}}{}{#2}{}{#1}}}
\newrobustcmd{\gravebmMuz}[2][]{\ensuremath{\subp{\grave{\bm{M}}}{}{#2}{}{#1}}}
\newrobustcmd{\dotbmMuz}[2][]{\ensuremath{\subp{\dot{\bm{M}}}{}{#2}{}{#1}}}
\newrobustcmd{\ddotbmMuz}[2][]{\ensuremath{\subp{\ddot{\bm{M}}}{}{#2}{}{#1}}}
\newrobustcmd{\brevebmMuz}[2][]{\ensuremath{\subp{\breve{\bm{M}}}{}{#2}{}{#1}}}
\newrobustcmd{\barbmMuz}[2][]{\ensuremath{\subp{\bar{\bm{M}}}{}{#2}{}{#1}}}
\newrobustcmd{\vecbmMuz}[2][]{\ensuremath{\subp{\vec{\bm{M}}}{}{#2}{}{#1}}}
\newrobustcmd{\Nuz}[2][]{\ensuremath{\subp{N}{}{#2}{}{#1}}}
\newrobustcmd{\hatNuz}[2][]{\ensuremath{\subp{\hat{N}}{}{#2}{}{#1}}}
\newrobustcmd{\widehatNuz}[2][]{\ensuremath{\subp{\widehat{N}}{}{#2}{}{#1}}}
\newrobustcmd{\checkNuz}[2][]{\ensuremath{\subp{\check{N}}{}{#2}{}{#1}}}
\newrobustcmd{\tildeNuz}[2][]{\ensuremath{\subp{\tilde{N}}{}{#2}{}{#1}}}
\newrobustcmd{\widetildeNuz}[2][]{\ensuremath{\subp{\widetilde{N}}{}{#2}{}{#1}}}
\newrobustcmd{\acuteNuz}[2][]{\ensuremath{\subp{\acute{N}}{}{#2}{}{#1}}}
\newrobustcmd{\graveNuz}[2][]{\ensuremath{\subp{\grave{N}}{}{#2}{}{#1}}}
\newrobustcmd{\dotNuz}[2][]{\ensuremath{\subp{\dot{N}}{}{#2}{}{#1}}}
\newrobustcmd{\ddotNuz}[2][]{\ensuremath{\subp{\ddot{N}}{}{#2}{}{#1}}}
\newrobustcmd{\breveNuz}[2][]{\ensuremath{\subp{\breve{N}}{}{#2}{}{#1}}}
\newrobustcmd{\barNuz}[2][]{\ensuremath{\subp{\bar{N}}{}{#2}{}{#1}}}
\newrobustcmd{\vecNuz}[2][]{\ensuremath{\subp{\vec{N}}{}{#2}{}{#1}}}
\newrobustcmd{\bmNuz}[2][]{\ensuremath{\subp{\bm{N}}{}{#2}{}{#1}}}
\newrobustcmd{\hatbmNuz}[2][]{\ensuremath{\subp{\hat{\bm{N}}}{}{#2}{}{#1}}}
\newrobustcmd{\widehatbmNuz}[2][]{\ensuremath{\subp{\widehat{\bm{N}}}{}{#2}{}{#1}}}
\newrobustcmd{\checkbmNuz}[2][]{\ensuremath{\subp{\check{\bm{N}}}{}{#2}{}{#1}}}
\newrobustcmd{\tildebmNuz}[2][]{\ensuremath{\subp{\tilde{\bm{N}}}{}{#2}{}{#1}}}
\newrobustcmd{\widetildebmNuz}[2][]{\ensuremath{\subp{\widetilde{\bm{N}}}{}{#2}{}{#1}}}
\newrobustcmd{\acutebmNuz}[2][]{\ensuremath{\subp{\acute{\bm{N}}}{}{#2}{}{#1}}}
\newrobustcmd{\gravebmNuz}[2][]{\ensuremath{\subp{\grave{\bm{N}}}{}{#2}{}{#1}}}
\newrobustcmd{\dotbmNuz}[2][]{\ensuremath{\subp{\dot{\bm{N}}}{}{#2}{}{#1}}}
\newrobustcmd{\ddotbmNuz}[2][]{\ensuremath{\subp{\ddot{\bm{N}}}{}{#2}{}{#1}}}
\newrobustcmd{\brevebmNuz}[2][]{\ensuremath{\subp{\breve{\bm{N}}}{}{#2}{}{#1}}}
\newrobustcmd{\barbmNuz}[2][]{\ensuremath{\subp{\bar{\bm{N}}}{}{#2}{}{#1}}}
\newrobustcmd{\vecbmNuz}[2][]{\ensuremath{\subp{\vec{\bm{N}}}{}{#2}{}{#1}}}
\newrobustcmd{\Xiz}[2][]{\ensuremath{\subp{\Xi}{}{#2}{}{#1}}}
\newrobustcmd{\hatXiz}[2][]{\ensuremath{\subp{\hat{\Xi}}{}{#2}{}{#1}}}
\newrobustcmd{\widehatXiz}[2][]{\ensuremath{\subp{\widehat{\Xi}}{}{#2}{}{#1}}}
\newrobustcmd{\checkXiz}[2][]{\ensuremath{\subp{\check{\Xi}}{}{#2}{}{#1}}}
\newrobustcmd{\tildeXiz}[2][]{\ensuremath{\subp{\tilde{\Xi}}{}{#2}{}{#1}}}
\newrobustcmd{\widetildeXiz}[2][]{\ensuremath{\subp{\widetilde{\Xi}}{}{#2}{}{#1}}}
\newrobustcmd{\acuteXiz}[2][]{\ensuremath{\subp{\acute{\Xi}}{}{#2}{}{#1}}}
\newrobustcmd{\graveXiz}[2][]{\ensuremath{\subp{\grave{\Xi}}{}{#2}{}{#1}}}
\newrobustcmd{\dotXiz}[2][]{\ensuremath{\subp{\dot{\Xi}}{}{#2}{}{#1}}}
\newrobustcmd{\ddotXiz}[2][]{\ensuremath{\subp{\ddot{\Xi}}{}{#2}{}{#1}}}
\newrobustcmd{\breveXiz}[2][]{\ensuremath{\subp{\breve{\Xi}}{}{#2}{}{#1}}}
\newrobustcmd{\barXiz}[2][]{\ensuremath{\subp{\bar{\Xi}}{}{#2}{}{#1}}}
\newrobustcmd{\vecXiz}[2][]{\ensuremath{\subp{\vec{\Xi}}{}{#2}{}{#1}}}
\newrobustcmd{\bmXiz}[2][]{\ensuremath{\subp{\bm{\Xi}}{}{#2}{}{#1}}}
\newrobustcmd{\hatbmXiz}[2][]{\ensuremath{\subp{\hat{\bm{\Xi}}}{}{#2}{}{#1}}}
\newrobustcmd{\widehatbmXiz}[2][]{\ensuremath{\subp{\widehat{\bm{\Xi}}}{}{#2}{}{#1}}}
\newrobustcmd{\checkbmXiz}[2][]{\ensuremath{\subp{\check{\bm{\Xi}}}{}{#2}{}{#1}}}
\newrobustcmd{\tildebmXiz}[2][]{\ensuremath{\subp{\tilde{\bm{\Xi}}}{}{#2}{}{#1}}}
\newrobustcmd{\widetildebmXiz}[2][]{\ensuremath{\subp{\widetilde{\bm{\Xi}}}{}{#2}{}{#1}}}
\newrobustcmd{\acutebmXiz}[2][]{\ensuremath{\subp{\acute{\bm{\Xi}}}{}{#2}{}{#1}}}
\newrobustcmd{\gravebmXiz}[2][]{\ensuremath{\subp{\grave{\bm{\Xi}}}{}{#2}{}{#1}}}
\newrobustcmd{\dotbmXiz}[2][]{\ensuremath{\subp{\dot{\bm{\Xi}}}{}{#2}{}{#1}}}
\newrobustcmd{\ddotbmXiz}[2][]{\ensuremath{\subp{\ddot{\bm{\Xi}}}{}{#2}{}{#1}}}
\newrobustcmd{\brevebmXiz}[2][]{\ensuremath{\subp{\breve{\bm{\Xi}}}{}{#2}{}{#1}}}
\newrobustcmd{\barbmXiz}[2][]{\ensuremath{\subp{\bar{\bm{\Xi}}}{}{#2}{}{#1}}}
\newrobustcmd{\vecbmXiz}[2][]{\ensuremath{\subp{\vec{\bm{\Xi}}}{}{#2}{}{#1}}}
\newrobustcmd{\Piz}[2][]{\ensuremath{\subp{\Pi}{}{#2}{}{#1}}}
\newrobustcmd{\hatPiz}[2][]{\ensuremath{\subp{\hat{\Pi}}{}{#2}{}{#1}}}
\newrobustcmd{\widehatPiz}[2][]{\ensuremath{\subp{\widehat{\Pi}}{}{#2}{}{#1}}}
\newrobustcmd{\checkPiz}[2][]{\ensuremath{\subp{\check{\Pi}}{}{#2}{}{#1}}}
\newrobustcmd{\tildePiz}[2][]{\ensuremath{\subp{\tilde{\Pi}}{}{#2}{}{#1}}}
\newrobustcmd{\widetildePiz}[2][]{\ensuremath{\subp{\widetilde{\Pi}}{}{#2}{}{#1}}}
\newrobustcmd{\acutePiz}[2][]{\ensuremath{\subp{\acute{\Pi}}{}{#2}{}{#1}}}
\newrobustcmd{\gravePiz}[2][]{\ensuremath{\subp{\grave{\Pi}}{}{#2}{}{#1}}}
\newrobustcmd{\dotPiz}[2][]{\ensuremath{\subp{\dot{\Pi}}{}{#2}{}{#1}}}
\newrobustcmd{\ddotPiz}[2][]{\ensuremath{\subp{\ddot{\Pi}}{}{#2}{}{#1}}}
\newrobustcmd{\brevePiz}[2][]{\ensuremath{\subp{\breve{\Pi}}{}{#2}{}{#1}}}
\newrobustcmd{\barPiz}[2][]{\ensuremath{\subp{\bar{\Pi}}{}{#2}{}{#1}}}
\newrobustcmd{\vecPiz}[2][]{\ensuremath{\subp{\vec{\Pi}}{}{#2}{}{#1}}}
\newrobustcmd{\bmPiz}[2][]{\ensuremath{\subp{\bm{\Pi}}{}{#2}{}{#1}}}
\newrobustcmd{\hatbmPiz}[2][]{\ensuremath{\subp{\hat{\bm{\Pi}}}{}{#2}{}{#1}}}
\newrobustcmd{\widehatbmPiz}[2][]{\ensuremath{\subp{\widehat{\bm{\Pi}}}{}{#2}{}{#1}}}
\newrobustcmd{\checkbmPiz}[2][]{\ensuremath{\subp{\check{\bm{\Pi}}}{}{#2}{}{#1}}}
\newrobustcmd{\tildebmPiz}[2][]{\ensuremath{\subp{\tilde{\bm{\Pi}}}{}{#2}{}{#1}}}
\newrobustcmd{\widetildebmPiz}[2][]{\ensuremath{\subp{\widetilde{\bm{\Pi}}}{}{#2}{}{#1}}}
\newrobustcmd{\acutebmPiz}[2][]{\ensuremath{\subp{\acute{\bm{\Pi}}}{}{#2}{}{#1}}}
\newrobustcmd{\gravebmPiz}[2][]{\ensuremath{\subp{\grave{\bm{\Pi}}}{}{#2}{}{#1}}}
\newrobustcmd{\dotbmPiz}[2][]{\ensuremath{\subp{\dot{\bm{\Pi}}}{}{#2}{}{#1}}}
\newrobustcmd{\ddotbmPiz}[2][]{\ensuremath{\subp{\ddot{\bm{\Pi}}}{}{#2}{}{#1}}}
\newrobustcmd{\brevebmPiz}[2][]{\ensuremath{\subp{\breve{\bm{\Pi}}}{}{#2}{}{#1}}}
\newrobustcmd{\barbmPiz}[2][]{\ensuremath{\subp{\bar{\bm{\Pi}}}{}{#2}{}{#1}}}
\newrobustcmd{\vecbmPiz}[2][]{\ensuremath{\subp{\vec{\bm{\Pi}}}{}{#2}{}{#1}}}
\newrobustcmd{\Rhoz}[2][]{\ensuremath{\subp{R}{}{#2}{}{#1}}}
\newrobustcmd{\hatRhoz}[2][]{\ensuremath{\subp{\hat{R}}{}{#2}{}{#1}}}
\newrobustcmd{\widehatRhoz}[2][]{\ensuremath{\subp{\widehat{R}}{}{#2}{}{#1}}}
\newrobustcmd{\checkRhoz}[2][]{\ensuremath{\subp{\check{R}}{}{#2}{}{#1}}}
\newrobustcmd{\tildeRhoz}[2][]{\ensuremath{\subp{\tilde{R}}{}{#2}{}{#1}}}
\newrobustcmd{\widetildeRhoz}[2][]{\ensuremath{\subp{\widetilde{R}}{}{#2}{}{#1}}}
\newrobustcmd{\acuteRhoz}[2][]{\ensuremath{\subp{\acute{R}}{}{#2}{}{#1}}}
\newrobustcmd{\graveRhoz}[2][]{\ensuremath{\subp{\grave{R}}{}{#2}{}{#1}}}
\newrobustcmd{\dotRhoz}[2][]{\ensuremath{\subp{\dot{R}}{}{#2}{}{#1}}}
\newrobustcmd{\ddotRhoz}[2][]{\ensuremath{\subp{\ddot{R}}{}{#2}{}{#1}}}
\newrobustcmd{\breveRhoz}[2][]{\ensuremath{\subp{\breve{R}}{}{#2}{}{#1}}}
\newrobustcmd{\barRhoz}[2][]{\ensuremath{\subp{\bar{R}}{}{#2}{}{#1}}}
\newrobustcmd{\vecRhoz}[2][]{\ensuremath{\subp{\vec{R}}{}{#2}{}{#1}}}
\newrobustcmd{\bmRhoz}[2][]{\ensuremath{\subp{\bm{R}}{}{#2}{}{#1}}}
\newrobustcmd{\hatbmRhoz}[2][]{\ensuremath{\subp{\hat{\bm{R}}}{}{#2}{}{#1}}}
\newrobustcmd{\widehatbmRhoz}[2][]{\ensuremath{\subp{\widehat{\bm{R}}}{}{#2}{}{#1}}}
\newrobustcmd{\checkbmRhoz}[2][]{\ensuremath{\subp{\check{\bm{R}}}{}{#2}{}{#1}}}
\newrobustcmd{\tildebmRhoz}[2][]{\ensuremath{\subp{\tilde{\bm{R}}}{}{#2}{}{#1}}}
\newrobustcmd{\widetildebmRhoz}[2][]{\ensuremath{\subp{\widetilde{\bm{R}}}{}{#2}{}{#1}}}
\newrobustcmd{\acutebmRhoz}[2][]{\ensuremath{\subp{\acute{\bm{R}}}{}{#2}{}{#1}}}
\newrobustcmd{\gravebmRhoz}[2][]{\ensuremath{\subp{\grave{\bm{R}}}{}{#2}{}{#1}}}
\newrobustcmd{\dotbmRhoz}[2][]{\ensuremath{\subp{\dot{\bm{R}}}{}{#2}{}{#1}}}
\newrobustcmd{\ddotbmRhoz}[2][]{\ensuremath{\subp{\ddot{\bm{R}}}{}{#2}{}{#1}}}
\newrobustcmd{\brevebmRhoz}[2][]{\ensuremath{\subp{\breve{\bm{R}}}{}{#2}{}{#1}}}
\newrobustcmd{\barbmRhoz}[2][]{\ensuremath{\subp{\bar{\bm{R}}}{}{#2}{}{#1}}}
\newrobustcmd{\vecbmRhoz}[2][]{\ensuremath{\subp{\vec{\bm{R}}}{}{#2}{}{#1}}}
\newrobustcmd{\Sigmaz}[2][]{\ensuremath{\subp{\Sigma}{}{#2}{}{#1}}}
\newrobustcmd{\hatSigmaz}[2][]{\ensuremath{\subp{\hat{\Sigma}}{}{#2}{}{#1}}}
\newrobustcmd{\widehatSigmaz}[2][]{\ensuremath{\subp{\widehat{\Sigma}}{}{#2}{}{#1}}}
\newrobustcmd{\checkSigmaz}[2][]{\ensuremath{\subp{\check{\Sigma}}{}{#2}{}{#1}}}
\newrobustcmd{\tildeSigmaz}[2][]{\ensuremath{\subp{\tilde{\Sigma}}{}{#2}{}{#1}}}
\newrobustcmd{\widetildeSigmaz}[2][]{\ensuremath{\subp{\widetilde{\Sigma}}{}{#2}{}{#1}}}
\newrobustcmd{\acuteSigmaz}[2][]{\ensuremath{\subp{\acute{\Sigma}}{}{#2}{}{#1}}}
\newrobustcmd{\graveSigmaz}[2][]{\ensuremath{\subp{\grave{\Sigma}}{}{#2}{}{#1}}}
\newrobustcmd{\dotSigmaz}[2][]{\ensuremath{\subp{\dot{\Sigma}}{}{#2}{}{#1}}}
\newrobustcmd{\ddotSigmaz}[2][]{\ensuremath{\subp{\ddot{\Sigma}}{}{#2}{}{#1}}}
\newrobustcmd{\breveSigmaz}[2][]{\ensuremath{\subp{\breve{\Sigma}}{}{#2}{}{#1}}}
\newrobustcmd{\barSigmaz}[2][]{\ensuremath{\subp{\bar{\Sigma}}{}{#2}{}{#1}}}
\newrobustcmd{\vecSigmaz}[2][]{\ensuremath{\subp{\vec{\Sigma}}{}{#2}{}{#1}}}
\newrobustcmd{\bmSigmaz}[2][]{\ensuremath{\subp{\bm{\Sigma}}{}{#2}{}{#1}}}
\newrobustcmd{\hatbmSigmaz}[2][]{\ensuremath{\subp{\hat{\bm{\Sigma}}}{}{#2}{}{#1}}}
\newrobustcmd{\widehatbmSigmaz}[2][]{\ensuremath{\subp{\widehat{\bm{\Sigma}}}{}{#2}{}{#1}}}
\newrobustcmd{\checkbmSigmaz}[2][]{\ensuremath{\subp{\check{\bm{\Sigma}}}{}{#2}{}{#1}}}
\newrobustcmd{\tildebmSigmaz}[2][]{\ensuremath{\subp{\tilde{\bm{\Sigma}}}{}{#2}{}{#1}}}
\newrobustcmd{\widetildebmSigmaz}[2][]{\ensuremath{\subp{\widetilde{\bm{\Sigma}}}{}{#2}{}{#1}}}
\newrobustcmd{\acutebmSigmaz}[2][]{\ensuremath{\subp{\acute{\bm{\Sigma}}}{}{#2}{}{#1}}}
\newrobustcmd{\gravebmSigmaz}[2][]{\ensuremath{\subp{\grave{\bm{\Sigma}}}{}{#2}{}{#1}}}
\newrobustcmd{\dotbmSigmaz}[2][]{\ensuremath{\subp{\dot{\bm{\Sigma}}}{}{#2}{}{#1}}}
\newrobustcmd{\ddotbmSigmaz}[2][]{\ensuremath{\subp{\ddot{\bm{\Sigma}}}{}{#2}{}{#1}}}
\newrobustcmd{\brevebmSigmaz}[2][]{\ensuremath{\subp{\breve{\bm{\Sigma}}}{}{#2}{}{#1}}}
\newrobustcmd{\barbmSigmaz}[2][]{\ensuremath{\subp{\bar{\bm{\Sigma}}}{}{#2}{}{#1}}}
\newrobustcmd{\vecbmSigmaz}[2][]{\ensuremath{\subp{\vec{\bm{\Sigma}}}{}{#2}{}{#1}}}
\newrobustcmd{\Tauz}[2][]{\ensuremath{\subp{T}{}{#2}{}{#1}}}
\newrobustcmd{\hatTauz}[2][]{\ensuremath{\subp{\hat{T}}{}{#2}{}{#1}}}
\newrobustcmd{\widehatTauz}[2][]{\ensuremath{\subp{\widehat{T}}{}{#2}{}{#1}}}
\newrobustcmd{\checkTauz}[2][]{\ensuremath{\subp{\check{T}}{}{#2}{}{#1}}}
\newrobustcmd{\tildeTauz}[2][]{\ensuremath{\subp{\tilde{T}}{}{#2}{}{#1}}}
\newrobustcmd{\widetildeTauz}[2][]{\ensuremath{\subp{\widetilde{T}}{}{#2}{}{#1}}}
\newrobustcmd{\acuteTauz}[2][]{\ensuremath{\subp{\acute{T}}{}{#2}{}{#1}}}
\newrobustcmd{\graveTauz}[2][]{\ensuremath{\subp{\grave{T}}{}{#2}{}{#1}}}
\newrobustcmd{\dotTauz}[2][]{\ensuremath{\subp{\dot{T}}{}{#2}{}{#1}}}
\newrobustcmd{\ddotTauz}[2][]{\ensuremath{\subp{\ddot{T}}{}{#2}{}{#1}}}
\newrobustcmd{\breveTauz}[2][]{\ensuremath{\subp{\breve{T}}{}{#2}{}{#1}}}
\newrobustcmd{\barTauz}[2][]{\ensuremath{\subp{\bar{T}}{}{#2}{}{#1}}}
\newrobustcmd{\vecTauz}[2][]{\ensuremath{\subp{\vec{T}}{}{#2}{}{#1}}}
\newrobustcmd{\bmTauz}[2][]{\ensuremath{\subp{\bm{T}}{}{#2}{}{#1}}}
\newrobustcmd{\hatbmTauz}[2][]{\ensuremath{\subp{\hat{\bm{T}}}{}{#2}{}{#1}}}
\newrobustcmd{\widehatbmTauz}[2][]{\ensuremath{\subp{\widehat{\bm{T}}}{}{#2}{}{#1}}}
\newrobustcmd{\checkbmTauz}[2][]{\ensuremath{\subp{\check{\bm{T}}}{}{#2}{}{#1}}}
\newrobustcmd{\tildebmTauz}[2][]{\ensuremath{\subp{\tilde{\bm{T}}}{}{#2}{}{#1}}}
\newrobustcmd{\widetildebmTauz}[2][]{\ensuremath{\subp{\widetilde{\bm{T}}}{}{#2}{}{#1}}}
\newrobustcmd{\acutebmTauz}[2][]{\ensuremath{\subp{\acute{\bm{T}}}{}{#2}{}{#1}}}
\newrobustcmd{\gravebmTauz}[2][]{\ensuremath{\subp{\grave{\bm{T}}}{}{#2}{}{#1}}}
\newrobustcmd{\dotbmTauz}[2][]{\ensuremath{\subp{\dot{\bm{T}}}{}{#2}{}{#1}}}
\newrobustcmd{\ddotbmTauz}[2][]{\ensuremath{\subp{\ddot{\bm{T}}}{}{#2}{}{#1}}}
\newrobustcmd{\brevebmTauz}[2][]{\ensuremath{\subp{\breve{\bm{T}}}{}{#2}{}{#1}}}
\newrobustcmd{\barbmTauz}[2][]{\ensuremath{\subp{\bar{\bm{T}}}{}{#2}{}{#1}}}
\newrobustcmd{\vecbmTauz}[2][]{\ensuremath{\subp{\vec{\bm{T}}}{}{#2}{}{#1}}}
\newrobustcmd{\Upsilonz}[2][]{\ensuremath{\subp{\Upsilon}{}{#2}{}{#1}}}
\newrobustcmd{\hatUpsilonz}[2][]{\ensuremath{\subp{\hat{\Upsilon}}{}{#2}{}{#1}}}
\newrobustcmd{\widehatUpsilonz}[2][]{\ensuremath{\subp{\widehat{\Upsilon}}{}{#2}{}{#1}}}
\newrobustcmd{\checkUpsilonz}[2][]{\ensuremath{\subp{\check{\Upsilon}}{}{#2}{}{#1}}}
\newrobustcmd{\tildeUpsilonz}[2][]{\ensuremath{\subp{\tilde{\Upsilon}}{}{#2}{}{#1}}}
\newrobustcmd{\widetildeUpsilonz}[2][]{\ensuremath{\subp{\widetilde{\Upsilon}}{}{#2}{}{#1}}}
\newrobustcmd{\acuteUpsilonz}[2][]{\ensuremath{\subp{\acute{\Upsilon}}{}{#2}{}{#1}}}
\newrobustcmd{\graveUpsilonz}[2][]{\ensuremath{\subp{\grave{\Upsilon}}{}{#2}{}{#1}}}
\newrobustcmd{\dotUpsilonz}[2][]{\ensuremath{\subp{\dot{\Upsilon}}{}{#2}{}{#1}}}
\newrobustcmd{\ddotUpsilonz}[2][]{\ensuremath{\subp{\ddot{\Upsilon}}{}{#2}{}{#1}}}
\newrobustcmd{\breveUpsilonz}[2][]{\ensuremath{\subp{\breve{\Upsilon}}{}{#2}{}{#1}}}
\newrobustcmd{\barUpsilonz}[2][]{\ensuremath{\subp{\bar{\Upsilon}}{}{#2}{}{#1}}}
\newrobustcmd{\vecUpsilonz}[2][]{\ensuremath{\subp{\vec{\Upsilon}}{}{#2}{}{#1}}}
\newrobustcmd{\bmUpsilonz}[2][]{\ensuremath{\subp{\bm{\Upsilon}}{}{#2}{}{#1}}}
\newrobustcmd{\hatbmUpsilonz}[2][]{\ensuremath{\subp{\hat{\bm{\Upsilon}}}{}{#2}{}{#1}}}
\newrobustcmd{\widehatbmUpsilonz}[2][]{\ensuremath{\subp{\widehat{\bm{\Upsilon}}}{}{#2}{}{#1}}}
\newrobustcmd{\checkbmUpsilonz}[2][]{\ensuremath{\subp{\check{\bm{\Upsilon}}}{}{#2}{}{#1}}}
\newrobustcmd{\tildebmUpsilonz}[2][]{\ensuremath{\subp{\tilde{\bm{\Upsilon}}}{}{#2}{}{#1}}}
\newrobustcmd{\widetildebmUpsilonz}[2][]{\ensuremath{\subp{\widetilde{\bm{\Upsilon}}}{}{#2}{}{#1}}}
\newrobustcmd{\acutebmUpsilonz}[2][]{\ensuremath{\subp{\acute{\bm{\Upsilon}}}{}{#2}{}{#1}}}
\newrobustcmd{\gravebmUpsilonz}[2][]{\ensuremath{\subp{\grave{\bm{\Upsilon}}}{}{#2}{}{#1}}}
\newrobustcmd{\dotbmUpsilonz}[2][]{\ensuremath{\subp{\dot{\bm{\Upsilon}}}{}{#2}{}{#1}}}
\newrobustcmd{\ddotbmUpsilonz}[2][]{\ensuremath{\subp{\ddot{\bm{\Upsilon}}}{}{#2}{}{#1}}}
\newrobustcmd{\brevebmUpsilonz}[2][]{\ensuremath{\subp{\breve{\bm{\Upsilon}}}{}{#2}{}{#1}}}
\newrobustcmd{\barbmUpsilonz}[2][]{\ensuremath{\subp{\bar{\bm{\Upsilon}}}{}{#2}{}{#1}}}
\newrobustcmd{\vecbmUpsilonz}[2][]{\ensuremath{\subp{\vec{\bm{\Upsilon}}}{}{#2}{}{#1}}}
\newrobustcmd{\Phiz}[2][]{\ensuremath{\subp{\Phi}{}{#2}{}{#1}}}
\newrobustcmd{\hatPhiz}[2][]{\ensuremath{\subp{\hat{\Phi}}{}{#2}{}{#1}}}
\newrobustcmd{\widehatPhiz}[2][]{\ensuremath{\subp{\widehat{\Phi}}{}{#2}{}{#1}}}
\newrobustcmd{\checkPhiz}[2][]{\ensuremath{\subp{\check{\Phi}}{}{#2}{}{#1}}}
\newrobustcmd{\tildePhiz}[2][]{\ensuremath{\subp{\tilde{\Phi}}{}{#2}{}{#1}}}
\newrobustcmd{\widetildePhiz}[2][]{\ensuremath{\subp{\widetilde{\Phi}}{}{#2}{}{#1}}}
\newrobustcmd{\acutePhiz}[2][]{\ensuremath{\subp{\acute{\Phi}}{}{#2}{}{#1}}}
\newrobustcmd{\gravePhiz}[2][]{\ensuremath{\subp{\grave{\Phi}}{}{#2}{}{#1}}}
\newrobustcmd{\dotPhiz}[2][]{\ensuremath{\subp{\dot{\Phi}}{}{#2}{}{#1}}}
\newrobustcmd{\ddotPhiz}[2][]{\ensuremath{\subp{\ddot{\Phi}}{}{#2}{}{#1}}}
\newrobustcmd{\brevePhiz}[2][]{\ensuremath{\subp{\breve{\Phi}}{}{#2}{}{#1}}}
\newrobustcmd{\barPhiz}[2][]{\ensuremath{\subp{\bar{\Phi}}{}{#2}{}{#1}}}
\newrobustcmd{\vecPhiz}[2][]{\ensuremath{\subp{\vec{\Phi}}{}{#2}{}{#1}}}
\newrobustcmd{\bmPhiz}[2][]{\ensuremath{\subp{\bm{\Phi}}{}{#2}{}{#1}}}
\newrobustcmd{\hatbmPhiz}[2][]{\ensuremath{\subp{\hat{\bm{\Phi}}}{}{#2}{}{#1}}}
\newrobustcmd{\widehatbmPhiz}[2][]{\ensuremath{\subp{\widehat{\bm{\Phi}}}{}{#2}{}{#1}}}
\newrobustcmd{\checkbmPhiz}[2][]{\ensuremath{\subp{\check{\bm{\Phi}}}{}{#2}{}{#1}}}
\newrobustcmd{\tildebmPhiz}[2][]{\ensuremath{\subp{\tilde{\bm{\Phi}}}{}{#2}{}{#1}}}
\newrobustcmd{\widetildebmPhiz}[2][]{\ensuremath{\subp{\widetilde{\bm{\Phi}}}{}{#2}{}{#1}}}
\newrobustcmd{\acutebmPhiz}[2][]{\ensuremath{\subp{\acute{\bm{\Phi}}}{}{#2}{}{#1}}}
\newrobustcmd{\gravebmPhiz}[2][]{\ensuremath{\subp{\grave{\bm{\Phi}}}{}{#2}{}{#1}}}
\newrobustcmd{\dotbmPhiz}[2][]{\ensuremath{\subp{\dot{\bm{\Phi}}}{}{#2}{}{#1}}}
\newrobustcmd{\ddotbmPhiz}[2][]{\ensuremath{\subp{\ddot{\bm{\Phi}}}{}{#2}{}{#1}}}
\newrobustcmd{\brevebmPhiz}[2][]{\ensuremath{\subp{\breve{\bm{\Phi}}}{}{#2}{}{#1}}}
\newrobustcmd{\barbmPhiz}[2][]{\ensuremath{\subp{\bar{\bm{\Phi}}}{}{#2}{}{#1}}}
\newrobustcmd{\vecbmPhiz}[2][]{\ensuremath{\subp{\vec{\bm{\Phi}}}{}{#2}{}{#1}}}
\newrobustcmd{\Chiz}[2][]{\ensuremath{\subp{\Chi}{}{#2}{}{#1}}}
\newrobustcmd{\hatChiz}[2][]{\ensuremath{\subp{\hat{\Chi}}{}{#2}{}{#1}}}
\newrobustcmd{\widehatChiz}[2][]{\ensuremath{\subp{\widehat{\Chi}}{}{#2}{}{#1}}}
\newrobustcmd{\checkChiz}[2][]{\ensuremath{\subp{\check{\Chi}}{}{#2}{}{#1}}}
\newrobustcmd{\tildeChiz}[2][]{\ensuremath{\subp{\tilde{\Chi}}{}{#2}{}{#1}}}
\newrobustcmd{\widetildeChiz}[2][]{\ensuremath{\subp{\widetilde{\Chi}}{}{#2}{}{#1}}}
\newrobustcmd{\acuteChiz}[2][]{\ensuremath{\subp{\acute{\Chi}}{}{#2}{}{#1}}}
\newrobustcmd{\graveChiz}[2][]{\ensuremath{\subp{\grave{\Chi}}{}{#2}{}{#1}}}
\newrobustcmd{\dotChiz}[2][]{\ensuremath{\subp{\dot{\Chi}}{}{#2}{}{#1}}}
\newrobustcmd{\ddotChiz}[2][]{\ensuremath{\subp{\ddot{\Chi}}{}{#2}{}{#1}}}
\newrobustcmd{\breveChiz}[2][]{\ensuremath{\subp{\breve{\Chi}}{}{#2}{}{#1}}}
\newrobustcmd{\barChiz}[2][]{\ensuremath{\subp{\bar{\Chi}}{}{#2}{}{#1}}}
\newrobustcmd{\vecChiz}[2][]{\ensuremath{\subp{\vec{\Chi}}{}{#2}{}{#1}}}
\newrobustcmd{\bmChiz}[2][]{\ensuremath{\subp{\bm{\Chi}}{}{#2}{}{#1}}}
\newrobustcmd{\hatbmChiz}[2][]{\ensuremath{\subp{\hat{\bm{\Chi}}}{}{#2}{}{#1}}}
\newrobustcmd{\widehatbmChiz}[2][]{\ensuremath{\subp{\widehat{\bm{\Chi}}}{}{#2}{}{#1}}}
\newrobustcmd{\checkbmChiz}[2][]{\ensuremath{\subp{\check{\bm{\Chi}}}{}{#2}{}{#1}}}
\newrobustcmd{\tildebmChiz}[2][]{\ensuremath{\subp{\tilde{\bm{\Chi}}}{}{#2}{}{#1}}}
\newrobustcmd{\widetildebmChiz}[2][]{\ensuremath{\subp{\widetilde{\bm{\Chi}}}{}{#2}{}{#1}}}
\newrobustcmd{\acutebmChiz}[2][]{\ensuremath{\subp{\acute{\bm{\Chi}}}{}{#2}{}{#1}}}
\newrobustcmd{\gravebmChiz}[2][]{\ensuremath{\subp{\grave{\bm{\Chi}}}{}{#2}{}{#1}}}
\newrobustcmd{\dotbmChiz}[2][]{\ensuremath{\subp{\dot{\bm{\Chi}}}{}{#2}{}{#1}}}
\newrobustcmd{\ddotbmChiz}[2][]{\ensuremath{\subp{\ddot{\bm{\Chi}}}{}{#2}{}{#1}}}
\newrobustcmd{\brevebmChiz}[2][]{\ensuremath{\subp{\breve{\bm{\Chi}}}{}{#2}{}{#1}}}
\newrobustcmd{\barbmChiz}[2][]{\ensuremath{\subp{\bar{\bm{\Chi}}}{}{#2}{}{#1}}}
\newrobustcmd{\vecbmChiz}[2][]{\ensuremath{\subp{\vec{\bm{\Chi}}}{}{#2}{}{#1}}}
\newrobustcmd{\Psiz}[2][]{\ensuremath{\subp{\Psi}{}{#2}{}{#1}}}
\newrobustcmd{\hatPsiz}[2][]{\ensuremath{\subp{\hat{\Psi}}{}{#2}{}{#1}}}
\newrobustcmd{\widehatPsiz}[2][]{\ensuremath{\subp{\widehat{\Psi}}{}{#2}{}{#1}}}
\newrobustcmd{\checkPsiz}[2][]{\ensuremath{\subp{\check{\Psi}}{}{#2}{}{#1}}}
\newrobustcmd{\tildePsiz}[2][]{\ensuremath{\subp{\tilde{\Psi}}{}{#2}{}{#1}}}
\newrobustcmd{\widetildePsiz}[2][]{\ensuremath{\subp{\widetilde{\Psi}}{}{#2}{}{#1}}}
\newrobustcmd{\acutePsiz}[2][]{\ensuremath{\subp{\acute{\Psi}}{}{#2}{}{#1}}}
\newrobustcmd{\gravePsiz}[2][]{\ensuremath{\subp{\grave{\Psi}}{}{#2}{}{#1}}}
\newrobustcmd{\dotPsiz}[2][]{\ensuremath{\subp{\dot{\Psi}}{}{#2}{}{#1}}}
\newrobustcmd{\ddotPsiz}[2][]{\ensuremath{\subp{\ddot{\Psi}}{}{#2}{}{#1}}}
\newrobustcmd{\brevePsiz}[2][]{\ensuremath{\subp{\breve{\Psi}}{}{#2}{}{#1}}}
\newrobustcmd{\barPsiz}[2][]{\ensuremath{\subp{\bar{\Psi}}{}{#2}{}{#1}}}
\newrobustcmd{\vecPsiz}[2][]{\ensuremath{\subp{\vec{\Psi}}{}{#2}{}{#1}}}
\newrobustcmd{\bmPsiz}[2][]{\ensuremath{\subp{\bm{\Psi}}{}{#2}{}{#1}}}
\newrobustcmd{\hatbmPsiz}[2][]{\ensuremath{\subp{\hat{\bm{\Psi}}}{}{#2}{}{#1}}}
\newrobustcmd{\widehatbmPsiz}[2][]{\ensuremath{\subp{\widehat{\bm{\Psi}}}{}{#2}{}{#1}}}
\newrobustcmd{\checkbmPsiz}[2][]{\ensuremath{\subp{\check{\bm{\Psi}}}{}{#2}{}{#1}}}
\newrobustcmd{\tildebmPsiz}[2][]{\ensuremath{\subp{\tilde{\bm{\Psi}}}{}{#2}{}{#1}}}
\newrobustcmd{\widetildebmPsiz}[2][]{\ensuremath{\subp{\widetilde{\bm{\Psi}}}{}{#2}{}{#1}}}
\newrobustcmd{\acutebmPsiz}[2][]{\ensuremath{\subp{\acute{\bm{\Psi}}}{}{#2}{}{#1}}}
\newrobustcmd{\gravebmPsiz}[2][]{\ensuremath{\subp{\grave{\bm{\Psi}}}{}{#2}{}{#1}}}
\newrobustcmd{\dotbmPsiz}[2][]{\ensuremath{\subp{\dot{\bm{\Psi}}}{}{#2}{}{#1}}}
\newrobustcmd{\ddotbmPsiz}[2][]{\ensuremath{\subp{\ddot{\bm{\Psi}}}{}{#2}{}{#1}}}
\newrobustcmd{\brevebmPsiz}[2][]{\ensuremath{\subp{\breve{\bm{\Psi}}}{}{#2}{}{#1}}}
\newrobustcmd{\barbmPsiz}[2][]{\ensuremath{\subp{\bar{\bm{\Psi}}}{}{#2}{}{#1}}}
\newrobustcmd{\vecbmPsiz}[2][]{\ensuremath{\subp{\vec{\bm{\Psi}}}{}{#2}{}{#1}}}
\newrobustcmd{\Omegaz}[2][]{\ensuremath{\subp{\Omega}{}{#2}{}{#1}}}
\newrobustcmd{\hatOmegaz}[2][]{\ensuremath{\subp{\hat{\Omega}}{}{#2}{}{#1}}}
\newrobustcmd{\widehatOmegaz}[2][]{\ensuremath{\subp{\widehat{\Omega}}{}{#2}{}{#1}}}
\newrobustcmd{\checkOmegaz}[2][]{\ensuremath{\subp{\check{\Omega}}{}{#2}{}{#1}}}
\newrobustcmd{\tildeOmegaz}[2][]{\ensuremath{\subp{\tilde{\Omega}}{}{#2}{}{#1}}}
\newrobustcmd{\widetildeOmegaz}[2][]{\ensuremath{\subp{\widetilde{\Omega}}{}{#2}{}{#1}}}
\newrobustcmd{\acuteOmegaz}[2][]{\ensuremath{\subp{\acute{\Omega}}{}{#2}{}{#1}}}
\newrobustcmd{\graveOmegaz}[2][]{\ensuremath{\subp{\grave{\Omega}}{}{#2}{}{#1}}}
\newrobustcmd{\dotOmegaz}[2][]{\ensuremath{\subp{\dot{\Omega}}{}{#2}{}{#1}}}
\newrobustcmd{\ddotOmegaz}[2][]{\ensuremath{\subp{\ddot{\Omega}}{}{#2}{}{#1}}}
\newrobustcmd{\breveOmegaz}[2][]{\ensuremath{\subp{\breve{\Omega}}{}{#2}{}{#1}}}
\newrobustcmd{\barOmegaz}[2][]{\ensuremath{\subp{\bar{\Omega}}{}{#2}{}{#1}}}
\newrobustcmd{\vecOmegaz}[2][]{\ensuremath{\subp{\vec{\Omega}}{}{#2}{}{#1}}}
\newrobustcmd{\bmOmegaz}[2][]{\ensuremath{\subp{\bm{\Omega}}{}{#2}{}{#1}}}
\newrobustcmd{\hatbmOmegaz}[2][]{\ensuremath{\subp{\hat{\bm{\Omega}}}{}{#2}{}{#1}}}
\newrobustcmd{\widehatbmOmegaz}[2][]{\ensuremath{\subp{\widehat{\bm{\Omega}}}{}{#2}{}{#1}}}
\newrobustcmd{\checkbmOmegaz}[2][]{\ensuremath{\subp{\check{\bm{\Omega}}}{}{#2}{}{#1}}}
\newrobustcmd{\tildebmOmegaz}[2][]{\ensuremath{\subp{\tilde{\bm{\Omega}}}{}{#2}{}{#1}}}
\newrobustcmd{\widetildebmOmegaz}[2][]{\ensuremath{\subp{\widetilde{\bm{\Omega}}}{}{#2}{}{#1}}}
\newrobustcmd{\acutebmOmegaz}[2][]{\ensuremath{\subp{\acute{\bm{\Omega}}}{}{#2}{}{#1}}}
\newrobustcmd{\gravebmOmegaz}[2][]{\ensuremath{\subp{\grave{\bm{\Omega}}}{}{#2}{}{#1}}}
\newrobustcmd{\dotbmOmegaz}[2][]{\ensuremath{\subp{\dot{\bm{\Omega}}}{}{#2}{}{#1}}}
\newrobustcmd{\ddotbmOmegaz}[2][]{\ensuremath{\subp{\ddot{\bm{\Omega}}}{}{#2}{}{#1}}}
\newrobustcmd{\brevebmOmegaz}[2][]{\ensuremath{\subp{\breve{\bm{\Omega}}}{}{#2}{}{#1}}}
\newrobustcmd{\barbmOmegaz}[2][]{\ensuremath{\subp{\bar{\bm{\Omega}}}{}{#2}{}{#1}}}
\newrobustcmd{\vecbmOmegaz}[2][]{\ensuremath{\subp{\vec{\bm{\Omega}}}{}{#2}{}{#1}}}
\newrobustcmd{\mathfrakaz}[2][]{\ensuremath{\subp{\mathfrak{a}}{}{#2}{}{#1}}}
\newrobustcmd{\hatmathfrakaz}[2][]{\ensuremath{\subp{\hat{\mathfrak{a}}}{}{#2}{}{#1}}}
\newrobustcmd{\widehatmathfrakaz}[2][]{\ensuremath{\subp{\widehat{\mathfrak{a}}}{}{#2}{}{#1}}}
\newrobustcmd{\checkmathfrakaz}[2][]{\ensuremath{\subp{\check{\mathfrak{a}}}{}{#2}{}{#1}}}
\newrobustcmd{\tildemathfrakaz}[2][]{\ensuremath{\subp{\tilde{\mathfrak{a}}}{}{#2}{}{#1}}}
\newrobustcmd{\widetildemathfrakaz}[2][]{\ensuremath{\subp{\widetilde{\mathfrak{a}}}{}{#2}{}{#1}}}
\newrobustcmd{\acutemathfrakaz}[2][]{\ensuremath{\subp{\acute{\mathfrak{a}}}{}{#2}{}{#1}}}
\newrobustcmd{\gravemathfrakaz}[2][]{\ensuremath{\subp{\grave{\mathfrak{a}}}{}{#2}{}{#1}}}
\newrobustcmd{\dotmathfrakaz}[2][]{\ensuremath{\subp{\dot{\mathfrak{a}}}{}{#2}{}{#1}}}
\newrobustcmd{\ddotmathfrakaz}[2][]{\ensuremath{\subp{\ddot{\mathfrak{a}}}{}{#2}{}{#1}}}
\newrobustcmd{\brevemathfrakaz}[2][]{\ensuremath{\subp{\breve{\mathfrak{a}}}{}{#2}{}{#1}}}
\newrobustcmd{\barmathfrakaz}[2][]{\ensuremath{\subp{\bar{\mathfrak{a}}}{}{#2}{}{#1}}}
\newrobustcmd{\vecmathfrakaz}[2][]{\ensuremath{\subp{\vec{\mathfrak{a}}}{}{#2}{}{#1}}}
\newrobustcmd{\bmmathfrakaz}[2][]{\ensuremath{\subp{\bm{\mathfrak{a}}}{}{#2}{}{#1}}}
\newrobustcmd{\hatbmmathfrakaz}[2][]{\ensuremath{\subp{\hat{\bm{\mathfrak{a}}}}{}{#2}{}{#1}}}
\newrobustcmd{\widehatbmmathfrakaz}[2][]{\ensuremath{\subp{\widehat{\bm{\mathfrak{a}}}}{}{#2}{}{#1}}}
\newrobustcmd{\checkbmmathfrakaz}[2][]{\ensuremath{\subp{\check{\bm{\mathfrak{a}}}}{}{#2}{}{#1}}}
\newrobustcmd{\tildebmmathfrakaz}[2][]{\ensuremath{\subp{\tilde{\bm{\mathfrak{a}}}}{}{#2}{}{#1}}}
\newrobustcmd{\widetildebmmathfrakaz}[2][]{\ensuremath{\subp{\widetilde{\bm{\mathfrak{a}}}}{}{#2}{}{#1}}}
\newrobustcmd{\acutebmmathfrakaz}[2][]{\ensuremath{\subp{\acute{\bm{\mathfrak{a}}}}{}{#2}{}{#1}}}
\newrobustcmd{\gravebmmathfrakaz}[2][]{\ensuremath{\subp{\grave{\bm{\mathfrak{a}}}}{}{#2}{}{#1}}}
\newrobustcmd{\dotbmmathfrakaz}[2][]{\ensuremath{\subp{\dot{\bm{\mathfrak{a}}}}{}{#2}{}{#1}}}
\newrobustcmd{\ddotbmmathfrakaz}[2][]{\ensuremath{\subp{\ddot{\bm{\mathfrak{a}}}}{}{#2}{}{#1}}}
\newrobustcmd{\brevebmmathfrakaz}[2][]{\ensuremath{\subp{\breve{\bm{\mathfrak{a}}}}{}{#2}{}{#1}}}
\newrobustcmd{\barbmmathfrakaz}[2][]{\ensuremath{\subp{\bar{\bm{\mathfrak{a}}}}{}{#2}{}{#1}}}
\newrobustcmd{\vecbmmathfrakaz}[2][]{\ensuremath{\subp{\vec{\bm{\mathfrak{a}}}}{}{#2}{}{#1}}}
\newrobustcmd{\mathfrakbz}[2][]{\ensuremath{\subp{\mathfrak{b}}{}{#2}{}{#1}}}
\newrobustcmd{\hatmathfrakbz}[2][]{\ensuremath{\subp{\hat{\mathfrak{b}}}{}{#2}{}{#1}}}
\newrobustcmd{\widehatmathfrakbz}[2][]{\ensuremath{\subp{\widehat{\mathfrak{b}}}{}{#2}{}{#1}}}
\newrobustcmd{\checkmathfrakbz}[2][]{\ensuremath{\subp{\check{\mathfrak{b}}}{}{#2}{}{#1}}}
\newrobustcmd{\tildemathfrakbz}[2][]{\ensuremath{\subp{\tilde{\mathfrak{b}}}{}{#2}{}{#1}}}
\newrobustcmd{\widetildemathfrakbz}[2][]{\ensuremath{\subp{\widetilde{\mathfrak{b}}}{}{#2}{}{#1}}}
\newrobustcmd{\acutemathfrakbz}[2][]{\ensuremath{\subp{\acute{\mathfrak{b}}}{}{#2}{}{#1}}}
\newrobustcmd{\gravemathfrakbz}[2][]{\ensuremath{\subp{\grave{\mathfrak{b}}}{}{#2}{}{#1}}}
\newrobustcmd{\dotmathfrakbz}[2][]{\ensuremath{\subp{\dot{\mathfrak{b}}}{}{#2}{}{#1}}}
\newrobustcmd{\ddotmathfrakbz}[2][]{\ensuremath{\subp{\ddot{\mathfrak{b}}}{}{#2}{}{#1}}}
\newrobustcmd{\brevemathfrakbz}[2][]{\ensuremath{\subp{\breve{\mathfrak{b}}}{}{#2}{}{#1}}}
\newrobustcmd{\barmathfrakbz}[2][]{\ensuremath{\subp{\bar{\mathfrak{b}}}{}{#2}{}{#1}}}
\newrobustcmd{\vecmathfrakbz}[2][]{\ensuremath{\subp{\vec{\mathfrak{b}}}{}{#2}{}{#1}}}
\newrobustcmd{\bmmathfrakbz}[2][]{\ensuremath{\subp{\bm{\mathfrak{b}}}{}{#2}{}{#1}}}
\newrobustcmd{\hatbmmathfrakbz}[2][]{\ensuremath{\subp{\hat{\bm{\mathfrak{b}}}}{}{#2}{}{#1}}}
\newrobustcmd{\widehatbmmathfrakbz}[2][]{\ensuremath{\subp{\widehat{\bm{\mathfrak{b}}}}{}{#2}{}{#1}}}
\newrobustcmd{\checkbmmathfrakbz}[2][]{\ensuremath{\subp{\check{\bm{\mathfrak{b}}}}{}{#2}{}{#1}}}
\newrobustcmd{\tildebmmathfrakbz}[2][]{\ensuremath{\subp{\tilde{\bm{\mathfrak{b}}}}{}{#2}{}{#1}}}
\newrobustcmd{\widetildebmmathfrakbz}[2][]{\ensuremath{\subp{\widetilde{\bm{\mathfrak{b}}}}{}{#2}{}{#1}}}
\newrobustcmd{\acutebmmathfrakbz}[2][]{\ensuremath{\subp{\acute{\bm{\mathfrak{b}}}}{}{#2}{}{#1}}}
\newrobustcmd{\gravebmmathfrakbz}[2][]{\ensuremath{\subp{\grave{\bm{\mathfrak{b}}}}{}{#2}{}{#1}}}
\newrobustcmd{\dotbmmathfrakbz}[2][]{\ensuremath{\subp{\dot{\bm{\mathfrak{b}}}}{}{#2}{}{#1}}}
\newrobustcmd{\ddotbmmathfrakbz}[2][]{\ensuremath{\subp{\ddot{\bm{\mathfrak{b}}}}{}{#2}{}{#1}}}
\newrobustcmd{\brevebmmathfrakbz}[2][]{\ensuremath{\subp{\breve{\bm{\mathfrak{b}}}}{}{#2}{}{#1}}}
\newrobustcmd{\barbmmathfrakbz}[2][]{\ensuremath{\subp{\bar{\bm{\mathfrak{b}}}}{}{#2}{}{#1}}}
\newrobustcmd{\vecbmmathfrakbz}[2][]{\ensuremath{\subp{\vec{\bm{\mathfrak{b}}}}{}{#2}{}{#1}}}
\newrobustcmd{\mathfrakcz}[2][]{\ensuremath{\subp{\mathfrak{c}}{}{#2}{}{#1}}}
\newrobustcmd{\hatmathfrakcz}[2][]{\ensuremath{\subp{\hat{\mathfrak{c}}}{}{#2}{}{#1}}}
\newrobustcmd{\widehatmathfrakcz}[2][]{\ensuremath{\subp{\widehat{\mathfrak{c}}}{}{#2}{}{#1}}}
\newrobustcmd{\checkmathfrakcz}[2][]{\ensuremath{\subp{\check{\mathfrak{c}}}{}{#2}{}{#1}}}
\newrobustcmd{\tildemathfrakcz}[2][]{\ensuremath{\subp{\tilde{\mathfrak{c}}}{}{#2}{}{#1}}}
\newrobustcmd{\widetildemathfrakcz}[2][]{\ensuremath{\subp{\widetilde{\mathfrak{c}}}{}{#2}{}{#1}}}
\newrobustcmd{\acutemathfrakcz}[2][]{\ensuremath{\subp{\acute{\mathfrak{c}}}{}{#2}{}{#1}}}
\newrobustcmd{\gravemathfrakcz}[2][]{\ensuremath{\subp{\grave{\mathfrak{c}}}{}{#2}{}{#1}}}
\newrobustcmd{\dotmathfrakcz}[2][]{\ensuremath{\subp{\dot{\mathfrak{c}}}{}{#2}{}{#1}}}
\newrobustcmd{\ddotmathfrakcz}[2][]{\ensuremath{\subp{\ddot{\mathfrak{c}}}{}{#2}{}{#1}}}
\newrobustcmd{\brevemathfrakcz}[2][]{\ensuremath{\subp{\breve{\mathfrak{c}}}{}{#2}{}{#1}}}
\newrobustcmd{\barmathfrakcz}[2][]{\ensuremath{\subp{\bar{\mathfrak{c}}}{}{#2}{}{#1}}}
\newrobustcmd{\vecmathfrakcz}[2][]{\ensuremath{\subp{\vec{\mathfrak{c}}}{}{#2}{}{#1}}}
\newrobustcmd{\bmmathfrakcz}[2][]{\ensuremath{\subp{\bm{\mathfrak{c}}}{}{#2}{}{#1}}}
\newrobustcmd{\hatbmmathfrakcz}[2][]{\ensuremath{\subp{\hat{\bm{\mathfrak{c}}}}{}{#2}{}{#1}}}
\newrobustcmd{\widehatbmmathfrakcz}[2][]{\ensuremath{\subp{\widehat{\bm{\mathfrak{c}}}}{}{#2}{}{#1}}}
\newrobustcmd{\checkbmmathfrakcz}[2][]{\ensuremath{\subp{\check{\bm{\mathfrak{c}}}}{}{#2}{}{#1}}}
\newrobustcmd{\tildebmmathfrakcz}[2][]{\ensuremath{\subp{\tilde{\bm{\mathfrak{c}}}}{}{#2}{}{#1}}}
\newrobustcmd{\widetildebmmathfrakcz}[2][]{\ensuremath{\subp{\widetilde{\bm{\mathfrak{c}}}}{}{#2}{}{#1}}}
\newrobustcmd{\acutebmmathfrakcz}[2][]{\ensuremath{\subp{\acute{\bm{\mathfrak{c}}}}{}{#2}{}{#1}}}
\newrobustcmd{\gravebmmathfrakcz}[2][]{\ensuremath{\subp{\grave{\bm{\mathfrak{c}}}}{}{#2}{}{#1}}}
\newrobustcmd{\dotbmmathfrakcz}[2][]{\ensuremath{\subp{\dot{\bm{\mathfrak{c}}}}{}{#2}{}{#1}}}
\newrobustcmd{\ddotbmmathfrakcz}[2][]{\ensuremath{\subp{\ddot{\bm{\mathfrak{c}}}}{}{#2}{}{#1}}}
\newrobustcmd{\brevebmmathfrakcz}[2][]{\ensuremath{\subp{\breve{\bm{\mathfrak{c}}}}{}{#2}{}{#1}}}
\newrobustcmd{\barbmmathfrakcz}[2][]{\ensuremath{\subp{\bar{\bm{\mathfrak{c}}}}{}{#2}{}{#1}}}
\newrobustcmd{\vecbmmathfrakcz}[2][]{\ensuremath{\subp{\vec{\bm{\mathfrak{c}}}}{}{#2}{}{#1}}}
\newrobustcmd{\mathfrakdz}[2][]{\ensuremath{\subp{\mathfrak{d}}{}{#2}{}{#1}}}
\newrobustcmd{\hatmathfrakdz}[2][]{\ensuremath{\subp{\hat{\mathfrak{d}}}{}{#2}{}{#1}}}
\newrobustcmd{\widehatmathfrakdz}[2][]{\ensuremath{\subp{\widehat{\mathfrak{d}}}{}{#2}{}{#1}}}
\newrobustcmd{\checkmathfrakdz}[2][]{\ensuremath{\subp{\check{\mathfrak{d}}}{}{#2}{}{#1}}}
\newrobustcmd{\tildemathfrakdz}[2][]{\ensuremath{\subp{\tilde{\mathfrak{d}}}{}{#2}{}{#1}}}
\newrobustcmd{\widetildemathfrakdz}[2][]{\ensuremath{\subp{\widetilde{\mathfrak{d}}}{}{#2}{}{#1}}}
\newrobustcmd{\acutemathfrakdz}[2][]{\ensuremath{\subp{\acute{\mathfrak{d}}}{}{#2}{}{#1}}}
\newrobustcmd{\gravemathfrakdz}[2][]{\ensuremath{\subp{\grave{\mathfrak{d}}}{}{#2}{}{#1}}}
\newrobustcmd{\dotmathfrakdz}[2][]{\ensuremath{\subp{\dot{\mathfrak{d}}}{}{#2}{}{#1}}}
\newrobustcmd{\ddotmathfrakdz}[2][]{\ensuremath{\subp{\ddot{\mathfrak{d}}}{}{#2}{}{#1}}}
\newrobustcmd{\brevemathfrakdz}[2][]{\ensuremath{\subp{\breve{\mathfrak{d}}}{}{#2}{}{#1}}}
\newrobustcmd{\barmathfrakdz}[2][]{\ensuremath{\subp{\bar{\mathfrak{d}}}{}{#2}{}{#1}}}
\newrobustcmd{\vecmathfrakdz}[2][]{\ensuremath{\subp{\vec{\mathfrak{d}}}{}{#2}{}{#1}}}
\newrobustcmd{\bmmathfrakdz}[2][]{\ensuremath{\subp{\bm{\mathfrak{d}}}{}{#2}{}{#1}}}
\newrobustcmd{\hatbmmathfrakdz}[2][]{\ensuremath{\subp{\hat{\bm{\mathfrak{d}}}}{}{#2}{}{#1}}}
\newrobustcmd{\widehatbmmathfrakdz}[2][]{\ensuremath{\subp{\widehat{\bm{\mathfrak{d}}}}{}{#2}{}{#1}}}
\newrobustcmd{\checkbmmathfrakdz}[2][]{\ensuremath{\subp{\check{\bm{\mathfrak{d}}}}{}{#2}{}{#1}}}
\newrobustcmd{\tildebmmathfrakdz}[2][]{\ensuremath{\subp{\tilde{\bm{\mathfrak{d}}}}{}{#2}{}{#1}}}
\newrobustcmd{\widetildebmmathfrakdz}[2][]{\ensuremath{\subp{\widetilde{\bm{\mathfrak{d}}}}{}{#2}{}{#1}}}
\newrobustcmd{\acutebmmathfrakdz}[2][]{\ensuremath{\subp{\acute{\bm{\mathfrak{d}}}}{}{#2}{}{#1}}}
\newrobustcmd{\gravebmmathfrakdz}[2][]{\ensuremath{\subp{\grave{\bm{\mathfrak{d}}}}{}{#2}{}{#1}}}
\newrobustcmd{\dotbmmathfrakdz}[2][]{\ensuremath{\subp{\dot{\bm{\mathfrak{d}}}}{}{#2}{}{#1}}}
\newrobustcmd{\ddotbmmathfrakdz}[2][]{\ensuremath{\subp{\ddot{\bm{\mathfrak{d}}}}{}{#2}{}{#1}}}
\newrobustcmd{\brevebmmathfrakdz}[2][]{\ensuremath{\subp{\breve{\bm{\mathfrak{d}}}}{}{#2}{}{#1}}}
\newrobustcmd{\barbmmathfrakdz}[2][]{\ensuremath{\subp{\bar{\bm{\mathfrak{d}}}}{}{#2}{}{#1}}}
\newrobustcmd{\vecbmmathfrakdz}[2][]{\ensuremath{\subp{\vec{\bm{\mathfrak{d}}}}{}{#2}{}{#1}}}
\newrobustcmd{\mathfrakez}[2][]{\ensuremath{\subp{\mathfrak{e}}{}{#2}{}{#1}}}
\newrobustcmd{\hatmathfrakez}[2][]{\ensuremath{\subp{\hat{\mathfrak{e}}}{}{#2}{}{#1}}}
\newrobustcmd{\widehatmathfrakez}[2][]{\ensuremath{\subp{\widehat{\mathfrak{e}}}{}{#2}{}{#1}}}
\newrobustcmd{\checkmathfrakez}[2][]{\ensuremath{\subp{\check{\mathfrak{e}}}{}{#2}{}{#1}}}
\newrobustcmd{\tildemathfrakez}[2][]{\ensuremath{\subp{\tilde{\mathfrak{e}}}{}{#2}{}{#1}}}
\newrobustcmd{\widetildemathfrakez}[2][]{\ensuremath{\subp{\widetilde{\mathfrak{e}}}{}{#2}{}{#1}}}
\newrobustcmd{\acutemathfrakez}[2][]{\ensuremath{\subp{\acute{\mathfrak{e}}}{}{#2}{}{#1}}}
\newrobustcmd{\gravemathfrakez}[2][]{\ensuremath{\subp{\grave{\mathfrak{e}}}{}{#2}{}{#1}}}
\newrobustcmd{\dotmathfrakez}[2][]{\ensuremath{\subp{\dot{\mathfrak{e}}}{}{#2}{}{#1}}}
\newrobustcmd{\ddotmathfrakez}[2][]{\ensuremath{\subp{\ddot{\mathfrak{e}}}{}{#2}{}{#1}}}
\newrobustcmd{\brevemathfrakez}[2][]{\ensuremath{\subp{\breve{\mathfrak{e}}}{}{#2}{}{#1}}}
\newrobustcmd{\barmathfrakez}[2][]{\ensuremath{\subp{\bar{\mathfrak{e}}}{}{#2}{}{#1}}}
\newrobustcmd{\vecmathfrakez}[2][]{\ensuremath{\subp{\vec{\mathfrak{e}}}{}{#2}{}{#1}}}
\newrobustcmd{\bmmathfrakez}[2][]{\ensuremath{\subp{\bm{\mathfrak{e}}}{}{#2}{}{#1}}}
\newrobustcmd{\hatbmmathfrakez}[2][]{\ensuremath{\subp{\hat{\bm{\mathfrak{e}}}}{}{#2}{}{#1}}}
\newrobustcmd{\widehatbmmathfrakez}[2][]{\ensuremath{\subp{\widehat{\bm{\mathfrak{e}}}}{}{#2}{}{#1}}}
\newrobustcmd{\checkbmmathfrakez}[2][]{\ensuremath{\subp{\check{\bm{\mathfrak{e}}}}{}{#2}{}{#1}}}
\newrobustcmd{\tildebmmathfrakez}[2][]{\ensuremath{\subp{\tilde{\bm{\mathfrak{e}}}}{}{#2}{}{#1}}}
\newrobustcmd{\widetildebmmathfrakez}[2][]{\ensuremath{\subp{\widetilde{\bm{\mathfrak{e}}}}{}{#2}{}{#1}}}
\newrobustcmd{\acutebmmathfrakez}[2][]{\ensuremath{\subp{\acute{\bm{\mathfrak{e}}}}{}{#2}{}{#1}}}
\newrobustcmd{\gravebmmathfrakez}[2][]{\ensuremath{\subp{\grave{\bm{\mathfrak{e}}}}{}{#2}{}{#1}}}
\newrobustcmd{\dotbmmathfrakez}[2][]{\ensuremath{\subp{\dot{\bm{\mathfrak{e}}}}{}{#2}{}{#1}}}
\newrobustcmd{\ddotbmmathfrakez}[2][]{\ensuremath{\subp{\ddot{\bm{\mathfrak{e}}}}{}{#2}{}{#1}}}
\newrobustcmd{\brevebmmathfrakez}[2][]{\ensuremath{\subp{\breve{\bm{\mathfrak{e}}}}{}{#2}{}{#1}}}
\newrobustcmd{\barbmmathfrakez}[2][]{\ensuremath{\subp{\bar{\bm{\mathfrak{e}}}}{}{#2}{}{#1}}}
\newrobustcmd{\vecbmmathfrakez}[2][]{\ensuremath{\subp{\vec{\bm{\mathfrak{e}}}}{}{#2}{}{#1}}}
\newrobustcmd{\mathfrakfz}[2][]{\ensuremath{\subp{\mathfrak{f}}{}{#2}{}{#1}}}
\newrobustcmd{\hatmathfrakfz}[2][]{\ensuremath{\subp{\hat{\mathfrak{f}}}{}{#2}{}{#1}}}
\newrobustcmd{\widehatmathfrakfz}[2][]{\ensuremath{\subp{\widehat{\mathfrak{f}}}{}{#2}{}{#1}}}
\newrobustcmd{\checkmathfrakfz}[2][]{\ensuremath{\subp{\check{\mathfrak{f}}}{}{#2}{}{#1}}}
\newrobustcmd{\tildemathfrakfz}[2][]{\ensuremath{\subp{\tilde{\mathfrak{f}}}{}{#2}{}{#1}}}
\newrobustcmd{\widetildemathfrakfz}[2][]{\ensuremath{\subp{\widetilde{\mathfrak{f}}}{}{#2}{}{#1}}}
\newrobustcmd{\acutemathfrakfz}[2][]{\ensuremath{\subp{\acute{\mathfrak{f}}}{}{#2}{}{#1}}}
\newrobustcmd{\gravemathfrakfz}[2][]{\ensuremath{\subp{\grave{\mathfrak{f}}}{}{#2}{}{#1}}}
\newrobustcmd{\dotmathfrakfz}[2][]{\ensuremath{\subp{\dot{\mathfrak{f}}}{}{#2}{}{#1}}}
\newrobustcmd{\ddotmathfrakfz}[2][]{\ensuremath{\subp{\ddot{\mathfrak{f}}}{}{#2}{}{#1}}}
\newrobustcmd{\brevemathfrakfz}[2][]{\ensuremath{\subp{\breve{\mathfrak{f}}}{}{#2}{}{#1}}}
\newrobustcmd{\barmathfrakfz}[2][]{\ensuremath{\subp{\bar{\mathfrak{f}}}{}{#2}{}{#1}}}
\newrobustcmd{\vecmathfrakfz}[2][]{\ensuremath{\subp{\vec{\mathfrak{f}}}{}{#2}{}{#1}}}
\newrobustcmd{\bmmathfrakfz}[2][]{\ensuremath{\subp{\bm{\mathfrak{f}}}{}{#2}{}{#1}}}
\newrobustcmd{\hatbmmathfrakfz}[2][]{\ensuremath{\subp{\hat{\bm{\mathfrak{f}}}}{}{#2}{}{#1}}}
\newrobustcmd{\widehatbmmathfrakfz}[2][]{\ensuremath{\subp{\widehat{\bm{\mathfrak{f}}}}{}{#2}{}{#1}}}
\newrobustcmd{\checkbmmathfrakfz}[2][]{\ensuremath{\subp{\check{\bm{\mathfrak{f}}}}{}{#2}{}{#1}}}
\newrobustcmd{\tildebmmathfrakfz}[2][]{\ensuremath{\subp{\tilde{\bm{\mathfrak{f}}}}{}{#2}{}{#1}}}
\newrobustcmd{\widetildebmmathfrakfz}[2][]{\ensuremath{\subp{\widetilde{\bm{\mathfrak{f}}}}{}{#2}{}{#1}}}
\newrobustcmd{\acutebmmathfrakfz}[2][]{\ensuremath{\subp{\acute{\bm{\mathfrak{f}}}}{}{#2}{}{#1}}}
\newrobustcmd{\gravebmmathfrakfz}[2][]{\ensuremath{\subp{\grave{\bm{\mathfrak{f}}}}{}{#2}{}{#1}}}
\newrobustcmd{\dotbmmathfrakfz}[2][]{\ensuremath{\subp{\dot{\bm{\mathfrak{f}}}}{}{#2}{}{#1}}}
\newrobustcmd{\ddotbmmathfrakfz}[2][]{\ensuremath{\subp{\ddot{\bm{\mathfrak{f}}}}{}{#2}{}{#1}}}
\newrobustcmd{\brevebmmathfrakfz}[2][]{\ensuremath{\subp{\breve{\bm{\mathfrak{f}}}}{}{#2}{}{#1}}}
\newrobustcmd{\barbmmathfrakfz}[2][]{\ensuremath{\subp{\bar{\bm{\mathfrak{f}}}}{}{#2}{}{#1}}}
\newrobustcmd{\vecbmmathfrakfz}[2][]{\ensuremath{\subp{\vec{\bm{\mathfrak{f}}}}{}{#2}{}{#1}}}
\newrobustcmd{\mathfrakgz}[2][]{\ensuremath{\subp{\mathfrak{g}}{}{#2}{}{#1}}}
\newrobustcmd{\hatmathfrakgz}[2][]{\ensuremath{\subp{\hat{\mathfrak{g}}}{}{#2}{}{#1}}}
\newrobustcmd{\widehatmathfrakgz}[2][]{\ensuremath{\subp{\widehat{\mathfrak{g}}}{}{#2}{}{#1}}}
\newrobustcmd{\checkmathfrakgz}[2][]{\ensuremath{\subp{\check{\mathfrak{g}}}{}{#2}{}{#1}}}
\newrobustcmd{\tildemathfrakgz}[2][]{\ensuremath{\subp{\tilde{\mathfrak{g}}}{}{#2}{}{#1}}}
\newrobustcmd{\widetildemathfrakgz}[2][]{\ensuremath{\subp{\widetilde{\mathfrak{g}}}{}{#2}{}{#1}}}
\newrobustcmd{\acutemathfrakgz}[2][]{\ensuremath{\subp{\acute{\mathfrak{g}}}{}{#2}{}{#1}}}
\newrobustcmd{\gravemathfrakgz}[2][]{\ensuremath{\subp{\grave{\mathfrak{g}}}{}{#2}{}{#1}}}
\newrobustcmd{\dotmathfrakgz}[2][]{\ensuremath{\subp{\dot{\mathfrak{g}}}{}{#2}{}{#1}}}
\newrobustcmd{\ddotmathfrakgz}[2][]{\ensuremath{\subp{\ddot{\mathfrak{g}}}{}{#2}{}{#1}}}
\newrobustcmd{\brevemathfrakgz}[2][]{\ensuremath{\subp{\breve{\mathfrak{g}}}{}{#2}{}{#1}}}
\newrobustcmd{\barmathfrakgz}[2][]{\ensuremath{\subp{\bar{\mathfrak{g}}}{}{#2}{}{#1}}}
\newrobustcmd{\vecmathfrakgz}[2][]{\ensuremath{\subp{\vec{\mathfrak{g}}}{}{#2}{}{#1}}}
\newrobustcmd{\bmmathfrakgz}[2][]{\ensuremath{\subp{\bm{\mathfrak{g}}}{}{#2}{}{#1}}}
\newrobustcmd{\hatbmmathfrakgz}[2][]{\ensuremath{\subp{\hat{\bm{\mathfrak{g}}}}{}{#2}{}{#1}}}
\newrobustcmd{\widehatbmmathfrakgz}[2][]{\ensuremath{\subp{\widehat{\bm{\mathfrak{g}}}}{}{#2}{}{#1}}}
\newrobustcmd{\checkbmmathfrakgz}[2][]{\ensuremath{\subp{\check{\bm{\mathfrak{g}}}}{}{#2}{}{#1}}}
\newrobustcmd{\tildebmmathfrakgz}[2][]{\ensuremath{\subp{\tilde{\bm{\mathfrak{g}}}}{}{#2}{}{#1}}}
\newrobustcmd{\widetildebmmathfrakgz}[2][]{\ensuremath{\subp{\widetilde{\bm{\mathfrak{g}}}}{}{#2}{}{#1}}}
\newrobustcmd{\acutebmmathfrakgz}[2][]{\ensuremath{\subp{\acute{\bm{\mathfrak{g}}}}{}{#2}{}{#1}}}
\newrobustcmd{\gravebmmathfrakgz}[2][]{\ensuremath{\subp{\grave{\bm{\mathfrak{g}}}}{}{#2}{}{#1}}}
\newrobustcmd{\dotbmmathfrakgz}[2][]{\ensuremath{\subp{\dot{\bm{\mathfrak{g}}}}{}{#2}{}{#1}}}
\newrobustcmd{\ddotbmmathfrakgz}[2][]{\ensuremath{\subp{\ddot{\bm{\mathfrak{g}}}}{}{#2}{}{#1}}}
\newrobustcmd{\brevebmmathfrakgz}[2][]{\ensuremath{\subp{\breve{\bm{\mathfrak{g}}}}{}{#2}{}{#1}}}
\newrobustcmd{\barbmmathfrakgz}[2][]{\ensuremath{\subp{\bar{\bm{\mathfrak{g}}}}{}{#2}{}{#1}}}
\newrobustcmd{\vecbmmathfrakgz}[2][]{\ensuremath{\subp{\vec{\bm{\mathfrak{g}}}}{}{#2}{}{#1}}}
\newrobustcmd{\mathfrakhz}[2][]{\ensuremath{\subp{\mathfrak{h}}{}{#2}{}{#1}}}
\newrobustcmd{\hatmathfrakhz}[2][]{\ensuremath{\subp{\hat{\mathfrak{h}}}{}{#2}{}{#1}}}
\newrobustcmd{\widehatmathfrakhz}[2][]{\ensuremath{\subp{\widehat{\mathfrak{h}}}{}{#2}{}{#1}}}
\newrobustcmd{\checkmathfrakhz}[2][]{\ensuremath{\subp{\check{\mathfrak{h}}}{}{#2}{}{#1}}}
\newrobustcmd{\tildemathfrakhz}[2][]{\ensuremath{\subp{\tilde{\mathfrak{h}}}{}{#2}{}{#1}}}
\newrobustcmd{\widetildemathfrakhz}[2][]{\ensuremath{\subp{\widetilde{\mathfrak{h}}}{}{#2}{}{#1}}}
\newrobustcmd{\acutemathfrakhz}[2][]{\ensuremath{\subp{\acute{\mathfrak{h}}}{}{#2}{}{#1}}}
\newrobustcmd{\gravemathfrakhz}[2][]{\ensuremath{\subp{\grave{\mathfrak{h}}}{}{#2}{}{#1}}}
\newrobustcmd{\dotmathfrakhz}[2][]{\ensuremath{\subp{\dot{\mathfrak{h}}}{}{#2}{}{#1}}}
\newrobustcmd{\ddotmathfrakhz}[2][]{\ensuremath{\subp{\ddot{\mathfrak{h}}}{}{#2}{}{#1}}}
\newrobustcmd{\brevemathfrakhz}[2][]{\ensuremath{\subp{\breve{\mathfrak{h}}}{}{#2}{}{#1}}}
\newrobustcmd{\barmathfrakhz}[2][]{\ensuremath{\subp{\bar{\mathfrak{h}}}{}{#2}{}{#1}}}
\newrobustcmd{\vecmathfrakhz}[2][]{\ensuremath{\subp{\vec{\mathfrak{h}}}{}{#2}{}{#1}}}
\newrobustcmd{\bmmathfrakhz}[2][]{\ensuremath{\subp{\bm{\mathfrak{h}}}{}{#2}{}{#1}}}
\newrobustcmd{\hatbmmathfrakhz}[2][]{\ensuremath{\subp{\hat{\bm{\mathfrak{h}}}}{}{#2}{}{#1}}}
\newrobustcmd{\widehatbmmathfrakhz}[2][]{\ensuremath{\subp{\widehat{\bm{\mathfrak{h}}}}{}{#2}{}{#1}}}
\newrobustcmd{\checkbmmathfrakhz}[2][]{\ensuremath{\subp{\check{\bm{\mathfrak{h}}}}{}{#2}{}{#1}}}
\newrobustcmd{\tildebmmathfrakhz}[2][]{\ensuremath{\subp{\tilde{\bm{\mathfrak{h}}}}{}{#2}{}{#1}}}
\newrobustcmd{\widetildebmmathfrakhz}[2][]{\ensuremath{\subp{\widetilde{\bm{\mathfrak{h}}}}{}{#2}{}{#1}}}
\newrobustcmd{\acutebmmathfrakhz}[2][]{\ensuremath{\subp{\acute{\bm{\mathfrak{h}}}}{}{#2}{}{#1}}}
\newrobustcmd{\gravebmmathfrakhz}[2][]{\ensuremath{\subp{\grave{\bm{\mathfrak{h}}}}{}{#2}{}{#1}}}
\newrobustcmd{\dotbmmathfrakhz}[2][]{\ensuremath{\subp{\dot{\bm{\mathfrak{h}}}}{}{#2}{}{#1}}}
\newrobustcmd{\ddotbmmathfrakhz}[2][]{\ensuremath{\subp{\ddot{\bm{\mathfrak{h}}}}{}{#2}{}{#1}}}
\newrobustcmd{\brevebmmathfrakhz}[2][]{\ensuremath{\subp{\breve{\bm{\mathfrak{h}}}}{}{#2}{}{#1}}}
\newrobustcmd{\barbmmathfrakhz}[2][]{\ensuremath{\subp{\bar{\bm{\mathfrak{h}}}}{}{#2}{}{#1}}}
\newrobustcmd{\vecbmmathfrakhz}[2][]{\ensuremath{\subp{\vec{\bm{\mathfrak{h}}}}{}{#2}{}{#1}}}
\newrobustcmd{\mathfrakiz}[2][]{\ensuremath{\subp{\mathfrak{i}}{}{#2}{}{#1}}}
\newrobustcmd{\hatmathfrakiz}[2][]{\ensuremath{\subp{\hat{\mathfrak{i}}}{}{#2}{}{#1}}}
\newrobustcmd{\widehatmathfrakiz}[2][]{\ensuremath{\subp{\widehat{\mathfrak{i}}}{}{#2}{}{#1}}}
\newrobustcmd{\checkmathfrakiz}[2][]{\ensuremath{\subp{\check{\mathfrak{i}}}{}{#2}{}{#1}}}
\newrobustcmd{\tildemathfrakiz}[2][]{\ensuremath{\subp{\tilde{\mathfrak{i}}}{}{#2}{}{#1}}}
\newrobustcmd{\widetildemathfrakiz}[2][]{\ensuremath{\subp{\widetilde{\mathfrak{i}}}{}{#2}{}{#1}}}
\newrobustcmd{\acutemathfrakiz}[2][]{\ensuremath{\subp{\acute{\mathfrak{i}}}{}{#2}{}{#1}}}
\newrobustcmd{\gravemathfrakiz}[2][]{\ensuremath{\subp{\grave{\mathfrak{i}}}{}{#2}{}{#1}}}
\newrobustcmd{\dotmathfrakiz}[2][]{\ensuremath{\subp{\dot{\mathfrak{i}}}{}{#2}{}{#1}}}
\newrobustcmd{\ddotmathfrakiz}[2][]{\ensuremath{\subp{\ddot{\mathfrak{i}}}{}{#2}{}{#1}}}
\newrobustcmd{\brevemathfrakiz}[2][]{\ensuremath{\subp{\breve{\mathfrak{i}}}{}{#2}{}{#1}}}
\newrobustcmd{\barmathfrakiz}[2][]{\ensuremath{\subp{\bar{\mathfrak{i}}}{}{#2}{}{#1}}}
\newrobustcmd{\vecmathfrakiz}[2][]{\ensuremath{\subp{\vec{\mathfrak{i}}}{}{#2}{}{#1}}}
\newrobustcmd{\bmmathfrakiz}[2][]{\ensuremath{\subp{\bm{\mathfrak{i}}}{}{#2}{}{#1}}}
\newrobustcmd{\hatbmmathfrakiz}[2][]{\ensuremath{\subp{\hat{\bm{\mathfrak{i}}}}{}{#2}{}{#1}}}
\newrobustcmd{\widehatbmmathfrakiz}[2][]{\ensuremath{\subp{\widehat{\bm{\mathfrak{i}}}}{}{#2}{}{#1}}}
\newrobustcmd{\checkbmmathfrakiz}[2][]{\ensuremath{\subp{\check{\bm{\mathfrak{i}}}}{}{#2}{}{#1}}}
\newrobustcmd{\tildebmmathfrakiz}[2][]{\ensuremath{\subp{\tilde{\bm{\mathfrak{i}}}}{}{#2}{}{#1}}}
\newrobustcmd{\widetildebmmathfrakiz}[2][]{\ensuremath{\subp{\widetilde{\bm{\mathfrak{i}}}}{}{#2}{}{#1}}}
\newrobustcmd{\acutebmmathfrakiz}[2][]{\ensuremath{\subp{\acute{\bm{\mathfrak{i}}}}{}{#2}{}{#1}}}
\newrobustcmd{\gravebmmathfrakiz}[2][]{\ensuremath{\subp{\grave{\bm{\mathfrak{i}}}}{}{#2}{}{#1}}}
\newrobustcmd{\dotbmmathfrakiz}[2][]{\ensuremath{\subp{\dot{\bm{\mathfrak{i}}}}{}{#2}{}{#1}}}
\newrobustcmd{\ddotbmmathfrakiz}[2][]{\ensuremath{\subp{\ddot{\bm{\mathfrak{i}}}}{}{#2}{}{#1}}}
\newrobustcmd{\brevebmmathfrakiz}[2][]{\ensuremath{\subp{\breve{\bm{\mathfrak{i}}}}{}{#2}{}{#1}}}
\newrobustcmd{\barbmmathfrakiz}[2][]{\ensuremath{\subp{\bar{\bm{\mathfrak{i}}}}{}{#2}{}{#1}}}
\newrobustcmd{\vecbmmathfrakiz}[2][]{\ensuremath{\subp{\vec{\bm{\mathfrak{i}}}}{}{#2}{}{#1}}}
\newrobustcmd{\mathfrakjz}[2][]{\ensuremath{\subp{\mathfrak{j}}{}{#2}{}{#1}}}
\newrobustcmd{\hatmathfrakjz}[2][]{\ensuremath{\subp{\hat{\mathfrak{j}}}{}{#2}{}{#1}}}
\newrobustcmd{\widehatmathfrakjz}[2][]{\ensuremath{\subp{\widehat{\mathfrak{j}}}{}{#2}{}{#1}}}
\newrobustcmd{\checkmathfrakjz}[2][]{\ensuremath{\subp{\check{\mathfrak{j}}}{}{#2}{}{#1}}}
\newrobustcmd{\tildemathfrakjz}[2][]{\ensuremath{\subp{\tilde{\mathfrak{j}}}{}{#2}{}{#1}}}
\newrobustcmd{\widetildemathfrakjz}[2][]{\ensuremath{\subp{\widetilde{\mathfrak{j}}}{}{#2}{}{#1}}}
\newrobustcmd{\acutemathfrakjz}[2][]{\ensuremath{\subp{\acute{\mathfrak{j}}}{}{#2}{}{#1}}}
\newrobustcmd{\gravemathfrakjz}[2][]{\ensuremath{\subp{\grave{\mathfrak{j}}}{}{#2}{}{#1}}}
\newrobustcmd{\dotmathfrakjz}[2][]{\ensuremath{\subp{\dot{\mathfrak{j}}}{}{#2}{}{#1}}}
\newrobustcmd{\ddotmathfrakjz}[2][]{\ensuremath{\subp{\ddot{\mathfrak{j}}}{}{#2}{}{#1}}}
\newrobustcmd{\brevemathfrakjz}[2][]{\ensuremath{\subp{\breve{\mathfrak{j}}}{}{#2}{}{#1}}}
\newrobustcmd{\barmathfrakjz}[2][]{\ensuremath{\subp{\bar{\mathfrak{j}}}{}{#2}{}{#1}}}
\newrobustcmd{\vecmathfrakjz}[2][]{\ensuremath{\subp{\vec{\mathfrak{j}}}{}{#2}{}{#1}}}
\newrobustcmd{\bmmathfrakjz}[2][]{\ensuremath{\subp{\bm{\mathfrak{j}}}{}{#2}{}{#1}}}
\newrobustcmd{\hatbmmathfrakjz}[2][]{\ensuremath{\subp{\hat{\bm{\mathfrak{j}}}}{}{#2}{}{#1}}}
\newrobustcmd{\widehatbmmathfrakjz}[2][]{\ensuremath{\subp{\widehat{\bm{\mathfrak{j}}}}{}{#2}{}{#1}}}
\newrobustcmd{\checkbmmathfrakjz}[2][]{\ensuremath{\subp{\check{\bm{\mathfrak{j}}}}{}{#2}{}{#1}}}
\newrobustcmd{\tildebmmathfrakjz}[2][]{\ensuremath{\subp{\tilde{\bm{\mathfrak{j}}}}{}{#2}{}{#1}}}
\newrobustcmd{\widetildebmmathfrakjz}[2][]{\ensuremath{\subp{\widetilde{\bm{\mathfrak{j}}}}{}{#2}{}{#1}}}
\newrobustcmd{\acutebmmathfrakjz}[2][]{\ensuremath{\subp{\acute{\bm{\mathfrak{j}}}}{}{#2}{}{#1}}}
\newrobustcmd{\gravebmmathfrakjz}[2][]{\ensuremath{\subp{\grave{\bm{\mathfrak{j}}}}{}{#2}{}{#1}}}
\newrobustcmd{\dotbmmathfrakjz}[2][]{\ensuremath{\subp{\dot{\bm{\mathfrak{j}}}}{}{#2}{}{#1}}}
\newrobustcmd{\ddotbmmathfrakjz}[2][]{\ensuremath{\subp{\ddot{\bm{\mathfrak{j}}}}{}{#2}{}{#1}}}
\newrobustcmd{\brevebmmathfrakjz}[2][]{\ensuremath{\subp{\breve{\bm{\mathfrak{j}}}}{}{#2}{}{#1}}}
\newrobustcmd{\barbmmathfrakjz}[2][]{\ensuremath{\subp{\bar{\bm{\mathfrak{j}}}}{}{#2}{}{#1}}}
\newrobustcmd{\vecbmmathfrakjz}[2][]{\ensuremath{\subp{\vec{\bm{\mathfrak{j}}}}{}{#2}{}{#1}}}
\newrobustcmd{\mathfrakkz}[2][]{\ensuremath{\subp{\mathfrak{k}}{}{#2}{}{#1}}}
\newrobustcmd{\hatmathfrakkz}[2][]{\ensuremath{\subp{\hat{\mathfrak{k}}}{}{#2}{}{#1}}}
\newrobustcmd{\widehatmathfrakkz}[2][]{\ensuremath{\subp{\widehat{\mathfrak{k}}}{}{#2}{}{#1}}}
\newrobustcmd{\checkmathfrakkz}[2][]{\ensuremath{\subp{\check{\mathfrak{k}}}{}{#2}{}{#1}}}
\newrobustcmd{\tildemathfrakkz}[2][]{\ensuremath{\subp{\tilde{\mathfrak{k}}}{}{#2}{}{#1}}}
\newrobustcmd{\widetildemathfrakkz}[2][]{\ensuremath{\subp{\widetilde{\mathfrak{k}}}{}{#2}{}{#1}}}
\newrobustcmd{\acutemathfrakkz}[2][]{\ensuremath{\subp{\acute{\mathfrak{k}}}{}{#2}{}{#1}}}
\newrobustcmd{\gravemathfrakkz}[2][]{\ensuremath{\subp{\grave{\mathfrak{k}}}{}{#2}{}{#1}}}
\newrobustcmd{\dotmathfrakkz}[2][]{\ensuremath{\subp{\dot{\mathfrak{k}}}{}{#2}{}{#1}}}
\newrobustcmd{\ddotmathfrakkz}[2][]{\ensuremath{\subp{\ddot{\mathfrak{k}}}{}{#2}{}{#1}}}
\newrobustcmd{\brevemathfrakkz}[2][]{\ensuremath{\subp{\breve{\mathfrak{k}}}{}{#2}{}{#1}}}
\newrobustcmd{\barmathfrakkz}[2][]{\ensuremath{\subp{\bar{\mathfrak{k}}}{}{#2}{}{#1}}}
\newrobustcmd{\vecmathfrakkz}[2][]{\ensuremath{\subp{\vec{\mathfrak{k}}}{}{#2}{}{#1}}}
\newrobustcmd{\bmmathfrakkz}[2][]{\ensuremath{\subp{\bm{\mathfrak{k}}}{}{#2}{}{#1}}}
\newrobustcmd{\hatbmmathfrakkz}[2][]{\ensuremath{\subp{\hat{\bm{\mathfrak{k}}}}{}{#2}{}{#1}}}
\newrobustcmd{\widehatbmmathfrakkz}[2][]{\ensuremath{\subp{\widehat{\bm{\mathfrak{k}}}}{}{#2}{}{#1}}}
\newrobustcmd{\checkbmmathfrakkz}[2][]{\ensuremath{\subp{\check{\bm{\mathfrak{k}}}}{}{#2}{}{#1}}}
\newrobustcmd{\tildebmmathfrakkz}[2][]{\ensuremath{\subp{\tilde{\bm{\mathfrak{k}}}}{}{#2}{}{#1}}}
\newrobustcmd{\widetildebmmathfrakkz}[2][]{\ensuremath{\subp{\widetilde{\bm{\mathfrak{k}}}}{}{#2}{}{#1}}}
\newrobustcmd{\acutebmmathfrakkz}[2][]{\ensuremath{\subp{\acute{\bm{\mathfrak{k}}}}{}{#2}{}{#1}}}
\newrobustcmd{\gravebmmathfrakkz}[2][]{\ensuremath{\subp{\grave{\bm{\mathfrak{k}}}}{}{#2}{}{#1}}}
\newrobustcmd{\dotbmmathfrakkz}[2][]{\ensuremath{\subp{\dot{\bm{\mathfrak{k}}}}{}{#2}{}{#1}}}
\newrobustcmd{\ddotbmmathfrakkz}[2][]{\ensuremath{\subp{\ddot{\bm{\mathfrak{k}}}}{}{#2}{}{#1}}}
\newrobustcmd{\brevebmmathfrakkz}[2][]{\ensuremath{\subp{\breve{\bm{\mathfrak{k}}}}{}{#2}{}{#1}}}
\newrobustcmd{\barbmmathfrakkz}[2][]{\ensuremath{\subp{\bar{\bm{\mathfrak{k}}}}{}{#2}{}{#1}}}
\newrobustcmd{\vecbmmathfrakkz}[2][]{\ensuremath{\subp{\vec{\bm{\mathfrak{k}}}}{}{#2}{}{#1}}}
\newrobustcmd{\mathfraklz}[2][]{\ensuremath{\subp{\mathfrak{l}}{}{#2}{}{#1}}}
\newrobustcmd{\hatmathfraklz}[2][]{\ensuremath{\subp{\hat{\mathfrak{l}}}{}{#2}{}{#1}}}
\newrobustcmd{\widehatmathfraklz}[2][]{\ensuremath{\subp{\widehat{\mathfrak{l}}}{}{#2}{}{#1}}}
\newrobustcmd{\checkmathfraklz}[2][]{\ensuremath{\subp{\check{\mathfrak{l}}}{}{#2}{}{#1}}}
\newrobustcmd{\tildemathfraklz}[2][]{\ensuremath{\subp{\tilde{\mathfrak{l}}}{}{#2}{}{#1}}}
\newrobustcmd{\widetildemathfraklz}[2][]{\ensuremath{\subp{\widetilde{\mathfrak{l}}}{}{#2}{}{#1}}}
\newrobustcmd{\acutemathfraklz}[2][]{\ensuremath{\subp{\acute{\mathfrak{l}}}{}{#2}{}{#1}}}
\newrobustcmd{\gravemathfraklz}[2][]{\ensuremath{\subp{\grave{\mathfrak{l}}}{}{#2}{}{#1}}}
\newrobustcmd{\dotmathfraklz}[2][]{\ensuremath{\subp{\dot{\mathfrak{l}}}{}{#2}{}{#1}}}
\newrobustcmd{\ddotmathfraklz}[2][]{\ensuremath{\subp{\ddot{\mathfrak{l}}}{}{#2}{}{#1}}}
\newrobustcmd{\brevemathfraklz}[2][]{\ensuremath{\subp{\breve{\mathfrak{l}}}{}{#2}{}{#1}}}
\newrobustcmd{\barmathfraklz}[2][]{\ensuremath{\subp{\bar{\mathfrak{l}}}{}{#2}{}{#1}}}
\newrobustcmd{\vecmathfraklz}[2][]{\ensuremath{\subp{\vec{\mathfrak{l}}}{}{#2}{}{#1}}}
\newrobustcmd{\bmmathfraklz}[2][]{\ensuremath{\subp{\bm{\mathfrak{l}}}{}{#2}{}{#1}}}
\newrobustcmd{\hatbmmathfraklz}[2][]{\ensuremath{\subp{\hat{\bm{\mathfrak{l}}}}{}{#2}{}{#1}}}
\newrobustcmd{\widehatbmmathfraklz}[2][]{\ensuremath{\subp{\widehat{\bm{\mathfrak{l}}}}{}{#2}{}{#1}}}
\newrobustcmd{\checkbmmathfraklz}[2][]{\ensuremath{\subp{\check{\bm{\mathfrak{l}}}}{}{#2}{}{#1}}}
\newrobustcmd{\tildebmmathfraklz}[2][]{\ensuremath{\subp{\tilde{\bm{\mathfrak{l}}}}{}{#2}{}{#1}}}
\newrobustcmd{\widetildebmmathfraklz}[2][]{\ensuremath{\subp{\widetilde{\bm{\mathfrak{l}}}}{}{#2}{}{#1}}}
\newrobustcmd{\acutebmmathfraklz}[2][]{\ensuremath{\subp{\acute{\bm{\mathfrak{l}}}}{}{#2}{}{#1}}}
\newrobustcmd{\gravebmmathfraklz}[2][]{\ensuremath{\subp{\grave{\bm{\mathfrak{l}}}}{}{#2}{}{#1}}}
\newrobustcmd{\dotbmmathfraklz}[2][]{\ensuremath{\subp{\dot{\bm{\mathfrak{l}}}}{}{#2}{}{#1}}}
\newrobustcmd{\ddotbmmathfraklz}[2][]{\ensuremath{\subp{\ddot{\bm{\mathfrak{l}}}}{}{#2}{}{#1}}}
\newrobustcmd{\brevebmmathfraklz}[2][]{\ensuremath{\subp{\breve{\bm{\mathfrak{l}}}}{}{#2}{}{#1}}}
\newrobustcmd{\barbmmathfraklz}[2][]{\ensuremath{\subp{\bar{\bm{\mathfrak{l}}}}{}{#2}{}{#1}}}
\newrobustcmd{\vecbmmathfraklz}[2][]{\ensuremath{\subp{\vec{\bm{\mathfrak{l}}}}{}{#2}{}{#1}}}
\newrobustcmd{\mathfrakmz}[2][]{\ensuremath{\subp{\mathfrak{m}}{}{#2}{}{#1}}}
\newrobustcmd{\hatmathfrakmz}[2][]{\ensuremath{\subp{\hat{\mathfrak{m}}}{}{#2}{}{#1}}}
\newrobustcmd{\widehatmathfrakmz}[2][]{\ensuremath{\subp{\widehat{\mathfrak{m}}}{}{#2}{}{#1}}}
\newrobustcmd{\checkmathfrakmz}[2][]{\ensuremath{\subp{\check{\mathfrak{m}}}{}{#2}{}{#1}}}
\newrobustcmd{\tildemathfrakmz}[2][]{\ensuremath{\subp{\tilde{\mathfrak{m}}}{}{#2}{}{#1}}}
\newrobustcmd{\widetildemathfrakmz}[2][]{\ensuremath{\subp{\widetilde{\mathfrak{m}}}{}{#2}{}{#1}}}
\newrobustcmd{\acutemathfrakmz}[2][]{\ensuremath{\subp{\acute{\mathfrak{m}}}{}{#2}{}{#1}}}
\newrobustcmd{\gravemathfrakmz}[2][]{\ensuremath{\subp{\grave{\mathfrak{m}}}{}{#2}{}{#1}}}
\newrobustcmd{\dotmathfrakmz}[2][]{\ensuremath{\subp{\dot{\mathfrak{m}}}{}{#2}{}{#1}}}
\newrobustcmd{\ddotmathfrakmz}[2][]{\ensuremath{\subp{\ddot{\mathfrak{m}}}{}{#2}{}{#1}}}
\newrobustcmd{\brevemathfrakmz}[2][]{\ensuremath{\subp{\breve{\mathfrak{m}}}{}{#2}{}{#1}}}
\newrobustcmd{\barmathfrakmz}[2][]{\ensuremath{\subp{\bar{\mathfrak{m}}}{}{#2}{}{#1}}}
\newrobustcmd{\vecmathfrakmz}[2][]{\ensuremath{\subp{\vec{\mathfrak{m}}}{}{#2}{}{#1}}}
\newrobustcmd{\bmmathfrakmz}[2][]{\ensuremath{\subp{\bm{\mathfrak{m}}}{}{#2}{}{#1}}}
\newrobustcmd{\hatbmmathfrakmz}[2][]{\ensuremath{\subp{\hat{\bm{\mathfrak{m}}}}{}{#2}{}{#1}}}
\newrobustcmd{\widehatbmmathfrakmz}[2][]{\ensuremath{\subp{\widehat{\bm{\mathfrak{m}}}}{}{#2}{}{#1}}}
\newrobustcmd{\checkbmmathfrakmz}[2][]{\ensuremath{\subp{\check{\bm{\mathfrak{m}}}}{}{#2}{}{#1}}}
\newrobustcmd{\tildebmmathfrakmz}[2][]{\ensuremath{\subp{\tilde{\bm{\mathfrak{m}}}}{}{#2}{}{#1}}}
\newrobustcmd{\widetildebmmathfrakmz}[2][]{\ensuremath{\subp{\widetilde{\bm{\mathfrak{m}}}}{}{#2}{}{#1}}}
\newrobustcmd{\acutebmmathfrakmz}[2][]{\ensuremath{\subp{\acute{\bm{\mathfrak{m}}}}{}{#2}{}{#1}}}
\newrobustcmd{\gravebmmathfrakmz}[2][]{\ensuremath{\subp{\grave{\bm{\mathfrak{m}}}}{}{#2}{}{#1}}}
\newrobustcmd{\dotbmmathfrakmz}[2][]{\ensuremath{\subp{\dot{\bm{\mathfrak{m}}}}{}{#2}{}{#1}}}
\newrobustcmd{\ddotbmmathfrakmz}[2][]{\ensuremath{\subp{\ddot{\bm{\mathfrak{m}}}}{}{#2}{}{#1}}}
\newrobustcmd{\brevebmmathfrakmz}[2][]{\ensuremath{\subp{\breve{\bm{\mathfrak{m}}}}{}{#2}{}{#1}}}
\newrobustcmd{\barbmmathfrakmz}[2][]{\ensuremath{\subp{\bar{\bm{\mathfrak{m}}}}{}{#2}{}{#1}}}
\newrobustcmd{\vecbmmathfrakmz}[2][]{\ensuremath{\subp{\vec{\bm{\mathfrak{m}}}}{}{#2}{}{#1}}}
\newrobustcmd{\mathfraknz}[2][]{\ensuremath{\subp{\mathfrak{n}}{}{#2}{}{#1}}}
\newrobustcmd{\hatmathfraknz}[2][]{\ensuremath{\subp{\hat{\mathfrak{n}}}{}{#2}{}{#1}}}
\newrobustcmd{\widehatmathfraknz}[2][]{\ensuremath{\subp{\widehat{\mathfrak{n}}}{}{#2}{}{#1}}}
\newrobustcmd{\checkmathfraknz}[2][]{\ensuremath{\subp{\check{\mathfrak{n}}}{}{#2}{}{#1}}}
\newrobustcmd{\tildemathfraknz}[2][]{\ensuremath{\subp{\tilde{\mathfrak{n}}}{}{#2}{}{#1}}}
\newrobustcmd{\widetildemathfraknz}[2][]{\ensuremath{\subp{\widetilde{\mathfrak{n}}}{}{#2}{}{#1}}}
\newrobustcmd{\acutemathfraknz}[2][]{\ensuremath{\subp{\acute{\mathfrak{n}}}{}{#2}{}{#1}}}
\newrobustcmd{\gravemathfraknz}[2][]{\ensuremath{\subp{\grave{\mathfrak{n}}}{}{#2}{}{#1}}}
\newrobustcmd{\dotmathfraknz}[2][]{\ensuremath{\subp{\dot{\mathfrak{n}}}{}{#2}{}{#1}}}
\newrobustcmd{\ddotmathfraknz}[2][]{\ensuremath{\subp{\ddot{\mathfrak{n}}}{}{#2}{}{#1}}}
\newrobustcmd{\brevemathfraknz}[2][]{\ensuremath{\subp{\breve{\mathfrak{n}}}{}{#2}{}{#1}}}
\newrobustcmd{\barmathfraknz}[2][]{\ensuremath{\subp{\bar{\mathfrak{n}}}{}{#2}{}{#1}}}
\newrobustcmd{\vecmathfraknz}[2][]{\ensuremath{\subp{\vec{\mathfrak{n}}}{}{#2}{}{#1}}}
\newrobustcmd{\bmmathfraknz}[2][]{\ensuremath{\subp{\bm{\mathfrak{n}}}{}{#2}{}{#1}}}
\newrobustcmd{\hatbmmathfraknz}[2][]{\ensuremath{\subp{\hat{\bm{\mathfrak{n}}}}{}{#2}{}{#1}}}
\newrobustcmd{\widehatbmmathfraknz}[2][]{\ensuremath{\subp{\widehat{\bm{\mathfrak{n}}}}{}{#2}{}{#1}}}
\newrobustcmd{\checkbmmathfraknz}[2][]{\ensuremath{\subp{\check{\bm{\mathfrak{n}}}}{}{#2}{}{#1}}}
\newrobustcmd{\tildebmmathfraknz}[2][]{\ensuremath{\subp{\tilde{\bm{\mathfrak{n}}}}{}{#2}{}{#1}}}
\newrobustcmd{\widetildebmmathfraknz}[2][]{\ensuremath{\subp{\widetilde{\bm{\mathfrak{n}}}}{}{#2}{}{#1}}}
\newrobustcmd{\acutebmmathfraknz}[2][]{\ensuremath{\subp{\acute{\bm{\mathfrak{n}}}}{}{#2}{}{#1}}}
\newrobustcmd{\gravebmmathfraknz}[2][]{\ensuremath{\subp{\grave{\bm{\mathfrak{n}}}}{}{#2}{}{#1}}}
\newrobustcmd{\dotbmmathfraknz}[2][]{\ensuremath{\subp{\dot{\bm{\mathfrak{n}}}}{}{#2}{}{#1}}}
\newrobustcmd{\ddotbmmathfraknz}[2][]{\ensuremath{\subp{\ddot{\bm{\mathfrak{n}}}}{}{#2}{}{#1}}}
\newrobustcmd{\brevebmmathfraknz}[2][]{\ensuremath{\subp{\breve{\bm{\mathfrak{n}}}}{}{#2}{}{#1}}}
\newrobustcmd{\barbmmathfraknz}[2][]{\ensuremath{\subp{\bar{\bm{\mathfrak{n}}}}{}{#2}{}{#1}}}
\newrobustcmd{\vecbmmathfraknz}[2][]{\ensuremath{\subp{\vec{\bm{\mathfrak{n}}}}{}{#2}{}{#1}}}
\newrobustcmd{\mathfrakoz}[2][]{\ensuremath{\subp{\mathfrak{o}}{}{#2}{}{#1}}}
\newrobustcmd{\hatmathfrakoz}[2][]{\ensuremath{\subp{\hat{\mathfrak{o}}}{}{#2}{}{#1}}}
\newrobustcmd{\widehatmathfrakoz}[2][]{\ensuremath{\subp{\widehat{\mathfrak{o}}}{}{#2}{}{#1}}}
\newrobustcmd{\checkmathfrakoz}[2][]{\ensuremath{\subp{\check{\mathfrak{o}}}{}{#2}{}{#1}}}
\newrobustcmd{\tildemathfrakoz}[2][]{\ensuremath{\subp{\tilde{\mathfrak{o}}}{}{#2}{}{#1}}}
\newrobustcmd{\widetildemathfrakoz}[2][]{\ensuremath{\subp{\widetilde{\mathfrak{o}}}{}{#2}{}{#1}}}
\newrobustcmd{\acutemathfrakoz}[2][]{\ensuremath{\subp{\acute{\mathfrak{o}}}{}{#2}{}{#1}}}
\newrobustcmd{\gravemathfrakoz}[2][]{\ensuremath{\subp{\grave{\mathfrak{o}}}{}{#2}{}{#1}}}
\newrobustcmd{\dotmathfrakoz}[2][]{\ensuremath{\subp{\dot{\mathfrak{o}}}{}{#2}{}{#1}}}
\newrobustcmd{\ddotmathfrakoz}[2][]{\ensuremath{\subp{\ddot{\mathfrak{o}}}{}{#2}{}{#1}}}
\newrobustcmd{\brevemathfrakoz}[2][]{\ensuremath{\subp{\breve{\mathfrak{o}}}{}{#2}{}{#1}}}
\newrobustcmd{\barmathfrakoz}[2][]{\ensuremath{\subp{\bar{\mathfrak{o}}}{}{#2}{}{#1}}}
\newrobustcmd{\vecmathfrakoz}[2][]{\ensuremath{\subp{\vec{\mathfrak{o}}}{}{#2}{}{#1}}}
\newrobustcmd{\bmmathfrakoz}[2][]{\ensuremath{\subp{\bm{\mathfrak{o}}}{}{#2}{}{#1}}}
\newrobustcmd{\hatbmmathfrakoz}[2][]{\ensuremath{\subp{\hat{\bm{\mathfrak{o}}}}{}{#2}{}{#1}}}
\newrobustcmd{\widehatbmmathfrakoz}[2][]{\ensuremath{\subp{\widehat{\bm{\mathfrak{o}}}}{}{#2}{}{#1}}}
\newrobustcmd{\checkbmmathfrakoz}[2][]{\ensuremath{\subp{\check{\bm{\mathfrak{o}}}}{}{#2}{}{#1}}}
\newrobustcmd{\tildebmmathfrakoz}[2][]{\ensuremath{\subp{\tilde{\bm{\mathfrak{o}}}}{}{#2}{}{#1}}}
\newrobustcmd{\widetildebmmathfrakoz}[2][]{\ensuremath{\subp{\widetilde{\bm{\mathfrak{o}}}}{}{#2}{}{#1}}}
\newrobustcmd{\acutebmmathfrakoz}[2][]{\ensuremath{\subp{\acute{\bm{\mathfrak{o}}}}{}{#2}{}{#1}}}
\newrobustcmd{\gravebmmathfrakoz}[2][]{\ensuremath{\subp{\grave{\bm{\mathfrak{o}}}}{}{#2}{}{#1}}}
\newrobustcmd{\dotbmmathfrakoz}[2][]{\ensuremath{\subp{\dot{\bm{\mathfrak{o}}}}{}{#2}{}{#1}}}
\newrobustcmd{\ddotbmmathfrakoz}[2][]{\ensuremath{\subp{\ddot{\bm{\mathfrak{o}}}}{}{#2}{}{#1}}}
\newrobustcmd{\brevebmmathfrakoz}[2][]{\ensuremath{\subp{\breve{\bm{\mathfrak{o}}}}{}{#2}{}{#1}}}
\newrobustcmd{\barbmmathfrakoz}[2][]{\ensuremath{\subp{\bar{\bm{\mathfrak{o}}}}{}{#2}{}{#1}}}
\newrobustcmd{\vecbmmathfrakoz}[2][]{\ensuremath{\subp{\vec{\bm{\mathfrak{o}}}}{}{#2}{}{#1}}}
\newrobustcmd{\mathfrakpz}[2][]{\ensuremath{\subp{\mathfrak{p}}{}{#2}{}{#1}}}
\newrobustcmd{\hatmathfrakpz}[2][]{\ensuremath{\subp{\hat{\mathfrak{p}}}{}{#2}{}{#1}}}
\newrobustcmd{\widehatmathfrakpz}[2][]{\ensuremath{\subp{\widehat{\mathfrak{p}}}{}{#2}{}{#1}}}
\newrobustcmd{\checkmathfrakpz}[2][]{\ensuremath{\subp{\check{\mathfrak{p}}}{}{#2}{}{#1}}}
\newrobustcmd{\tildemathfrakpz}[2][]{\ensuremath{\subp{\tilde{\mathfrak{p}}}{}{#2}{}{#1}}}
\newrobustcmd{\widetildemathfrakpz}[2][]{\ensuremath{\subp{\widetilde{\mathfrak{p}}}{}{#2}{}{#1}}}
\newrobustcmd{\acutemathfrakpz}[2][]{\ensuremath{\subp{\acute{\mathfrak{p}}}{}{#2}{}{#1}}}
\newrobustcmd{\gravemathfrakpz}[2][]{\ensuremath{\subp{\grave{\mathfrak{p}}}{}{#2}{}{#1}}}
\newrobustcmd{\dotmathfrakpz}[2][]{\ensuremath{\subp{\dot{\mathfrak{p}}}{}{#2}{}{#1}}}
\newrobustcmd{\ddotmathfrakpz}[2][]{\ensuremath{\subp{\ddot{\mathfrak{p}}}{}{#2}{}{#1}}}
\newrobustcmd{\brevemathfrakpz}[2][]{\ensuremath{\subp{\breve{\mathfrak{p}}}{}{#2}{}{#1}}}
\newrobustcmd{\barmathfrakpz}[2][]{\ensuremath{\subp{\bar{\mathfrak{p}}}{}{#2}{}{#1}}}
\newrobustcmd{\vecmathfrakpz}[2][]{\ensuremath{\subp{\vec{\mathfrak{p}}}{}{#2}{}{#1}}}
\newrobustcmd{\bmmathfrakpz}[2][]{\ensuremath{\subp{\bm{\mathfrak{p}}}{}{#2}{}{#1}}}
\newrobustcmd{\hatbmmathfrakpz}[2][]{\ensuremath{\subp{\hat{\bm{\mathfrak{p}}}}{}{#2}{}{#1}}}
\newrobustcmd{\widehatbmmathfrakpz}[2][]{\ensuremath{\subp{\widehat{\bm{\mathfrak{p}}}}{}{#2}{}{#1}}}
\newrobustcmd{\checkbmmathfrakpz}[2][]{\ensuremath{\subp{\check{\bm{\mathfrak{p}}}}{}{#2}{}{#1}}}
\newrobustcmd{\tildebmmathfrakpz}[2][]{\ensuremath{\subp{\tilde{\bm{\mathfrak{p}}}}{}{#2}{}{#1}}}
\newrobustcmd{\widetildebmmathfrakpz}[2][]{\ensuremath{\subp{\widetilde{\bm{\mathfrak{p}}}}{}{#2}{}{#1}}}
\newrobustcmd{\acutebmmathfrakpz}[2][]{\ensuremath{\subp{\acute{\bm{\mathfrak{p}}}}{}{#2}{}{#1}}}
\newrobustcmd{\gravebmmathfrakpz}[2][]{\ensuremath{\subp{\grave{\bm{\mathfrak{p}}}}{}{#2}{}{#1}}}
\newrobustcmd{\dotbmmathfrakpz}[2][]{\ensuremath{\subp{\dot{\bm{\mathfrak{p}}}}{}{#2}{}{#1}}}
\newrobustcmd{\ddotbmmathfrakpz}[2][]{\ensuremath{\subp{\ddot{\bm{\mathfrak{p}}}}{}{#2}{}{#1}}}
\newrobustcmd{\brevebmmathfrakpz}[2][]{\ensuremath{\subp{\breve{\bm{\mathfrak{p}}}}{}{#2}{}{#1}}}
\newrobustcmd{\barbmmathfrakpz}[2][]{\ensuremath{\subp{\bar{\bm{\mathfrak{p}}}}{}{#2}{}{#1}}}
\newrobustcmd{\vecbmmathfrakpz}[2][]{\ensuremath{\subp{\vec{\bm{\mathfrak{p}}}}{}{#2}{}{#1}}}
\newrobustcmd{\mathfrakqz}[2][]{\ensuremath{\subp{\mathfrak{q}}{}{#2}{}{#1}}}
\newrobustcmd{\hatmathfrakqz}[2][]{\ensuremath{\subp{\hat{\mathfrak{q}}}{}{#2}{}{#1}}}
\newrobustcmd{\widehatmathfrakqz}[2][]{\ensuremath{\subp{\widehat{\mathfrak{q}}}{}{#2}{}{#1}}}
\newrobustcmd{\checkmathfrakqz}[2][]{\ensuremath{\subp{\check{\mathfrak{q}}}{}{#2}{}{#1}}}
\newrobustcmd{\tildemathfrakqz}[2][]{\ensuremath{\subp{\tilde{\mathfrak{q}}}{}{#2}{}{#1}}}
\newrobustcmd{\widetildemathfrakqz}[2][]{\ensuremath{\subp{\widetilde{\mathfrak{q}}}{}{#2}{}{#1}}}
\newrobustcmd{\acutemathfrakqz}[2][]{\ensuremath{\subp{\acute{\mathfrak{q}}}{}{#2}{}{#1}}}
\newrobustcmd{\gravemathfrakqz}[2][]{\ensuremath{\subp{\grave{\mathfrak{q}}}{}{#2}{}{#1}}}
\newrobustcmd{\dotmathfrakqz}[2][]{\ensuremath{\subp{\dot{\mathfrak{q}}}{}{#2}{}{#1}}}
\newrobustcmd{\ddotmathfrakqz}[2][]{\ensuremath{\subp{\ddot{\mathfrak{q}}}{}{#2}{}{#1}}}
\newrobustcmd{\brevemathfrakqz}[2][]{\ensuremath{\subp{\breve{\mathfrak{q}}}{}{#2}{}{#1}}}
\newrobustcmd{\barmathfrakqz}[2][]{\ensuremath{\subp{\bar{\mathfrak{q}}}{}{#2}{}{#1}}}
\newrobustcmd{\vecmathfrakqz}[2][]{\ensuremath{\subp{\vec{\mathfrak{q}}}{}{#2}{}{#1}}}
\newrobustcmd{\bmmathfrakqz}[2][]{\ensuremath{\subp{\bm{\mathfrak{q}}}{}{#2}{}{#1}}}
\newrobustcmd{\hatbmmathfrakqz}[2][]{\ensuremath{\subp{\hat{\bm{\mathfrak{q}}}}{}{#2}{}{#1}}}
\newrobustcmd{\widehatbmmathfrakqz}[2][]{\ensuremath{\subp{\widehat{\bm{\mathfrak{q}}}}{}{#2}{}{#1}}}
\newrobustcmd{\checkbmmathfrakqz}[2][]{\ensuremath{\subp{\check{\bm{\mathfrak{q}}}}{}{#2}{}{#1}}}
\newrobustcmd{\tildebmmathfrakqz}[2][]{\ensuremath{\subp{\tilde{\bm{\mathfrak{q}}}}{}{#2}{}{#1}}}
\newrobustcmd{\widetildebmmathfrakqz}[2][]{\ensuremath{\subp{\widetilde{\bm{\mathfrak{q}}}}{}{#2}{}{#1}}}
\newrobustcmd{\acutebmmathfrakqz}[2][]{\ensuremath{\subp{\acute{\bm{\mathfrak{q}}}}{}{#2}{}{#1}}}
\newrobustcmd{\gravebmmathfrakqz}[2][]{\ensuremath{\subp{\grave{\bm{\mathfrak{q}}}}{}{#2}{}{#1}}}
\newrobustcmd{\dotbmmathfrakqz}[2][]{\ensuremath{\subp{\dot{\bm{\mathfrak{q}}}}{}{#2}{}{#1}}}
\newrobustcmd{\ddotbmmathfrakqz}[2][]{\ensuremath{\subp{\ddot{\bm{\mathfrak{q}}}}{}{#2}{}{#1}}}
\newrobustcmd{\brevebmmathfrakqz}[2][]{\ensuremath{\subp{\breve{\bm{\mathfrak{q}}}}{}{#2}{}{#1}}}
\newrobustcmd{\barbmmathfrakqz}[2][]{\ensuremath{\subp{\bar{\bm{\mathfrak{q}}}}{}{#2}{}{#1}}}
\newrobustcmd{\vecbmmathfrakqz}[2][]{\ensuremath{\subp{\vec{\bm{\mathfrak{q}}}}{}{#2}{}{#1}}}
\newrobustcmd{\mathfrakrz}[2][]{\ensuremath{\subp{\mathfrak{r}}{}{#2}{}{#1}}}
\newrobustcmd{\hatmathfrakrz}[2][]{\ensuremath{\subp{\hat{\mathfrak{r}}}{}{#2}{}{#1}}}
\newrobustcmd{\widehatmathfrakrz}[2][]{\ensuremath{\subp{\widehat{\mathfrak{r}}}{}{#2}{}{#1}}}
\newrobustcmd{\checkmathfrakrz}[2][]{\ensuremath{\subp{\check{\mathfrak{r}}}{}{#2}{}{#1}}}
\newrobustcmd{\tildemathfrakrz}[2][]{\ensuremath{\subp{\tilde{\mathfrak{r}}}{}{#2}{}{#1}}}
\newrobustcmd{\widetildemathfrakrz}[2][]{\ensuremath{\subp{\widetilde{\mathfrak{r}}}{}{#2}{}{#1}}}
\newrobustcmd{\acutemathfrakrz}[2][]{\ensuremath{\subp{\acute{\mathfrak{r}}}{}{#2}{}{#1}}}
\newrobustcmd{\gravemathfrakrz}[2][]{\ensuremath{\subp{\grave{\mathfrak{r}}}{}{#2}{}{#1}}}
\newrobustcmd{\dotmathfrakrz}[2][]{\ensuremath{\subp{\dot{\mathfrak{r}}}{}{#2}{}{#1}}}
\newrobustcmd{\ddotmathfrakrz}[2][]{\ensuremath{\subp{\ddot{\mathfrak{r}}}{}{#2}{}{#1}}}
\newrobustcmd{\brevemathfrakrz}[2][]{\ensuremath{\subp{\breve{\mathfrak{r}}}{}{#2}{}{#1}}}
\newrobustcmd{\barmathfrakrz}[2][]{\ensuremath{\subp{\bar{\mathfrak{r}}}{}{#2}{}{#1}}}
\newrobustcmd{\vecmathfrakrz}[2][]{\ensuremath{\subp{\vec{\mathfrak{r}}}{}{#2}{}{#1}}}
\newrobustcmd{\bmmathfrakrz}[2][]{\ensuremath{\subp{\bm{\mathfrak{r}}}{}{#2}{}{#1}}}
\newrobustcmd{\hatbmmathfrakrz}[2][]{\ensuremath{\subp{\hat{\bm{\mathfrak{r}}}}{}{#2}{}{#1}}}
\newrobustcmd{\widehatbmmathfrakrz}[2][]{\ensuremath{\subp{\widehat{\bm{\mathfrak{r}}}}{}{#2}{}{#1}}}
\newrobustcmd{\checkbmmathfrakrz}[2][]{\ensuremath{\subp{\check{\bm{\mathfrak{r}}}}{}{#2}{}{#1}}}
\newrobustcmd{\tildebmmathfrakrz}[2][]{\ensuremath{\subp{\tilde{\bm{\mathfrak{r}}}}{}{#2}{}{#1}}}
\newrobustcmd{\widetildebmmathfrakrz}[2][]{\ensuremath{\subp{\widetilde{\bm{\mathfrak{r}}}}{}{#2}{}{#1}}}
\newrobustcmd{\acutebmmathfrakrz}[2][]{\ensuremath{\subp{\acute{\bm{\mathfrak{r}}}}{}{#2}{}{#1}}}
\newrobustcmd{\gravebmmathfrakrz}[2][]{\ensuremath{\subp{\grave{\bm{\mathfrak{r}}}}{}{#2}{}{#1}}}
\newrobustcmd{\dotbmmathfrakrz}[2][]{\ensuremath{\subp{\dot{\bm{\mathfrak{r}}}}{}{#2}{}{#1}}}
\newrobustcmd{\ddotbmmathfrakrz}[2][]{\ensuremath{\subp{\ddot{\bm{\mathfrak{r}}}}{}{#2}{}{#1}}}
\newrobustcmd{\brevebmmathfrakrz}[2][]{\ensuremath{\subp{\breve{\bm{\mathfrak{r}}}}{}{#2}{}{#1}}}
\newrobustcmd{\barbmmathfrakrz}[2][]{\ensuremath{\subp{\bar{\bm{\mathfrak{r}}}}{}{#2}{}{#1}}}
\newrobustcmd{\vecbmmathfrakrz}[2][]{\ensuremath{\subp{\vec{\bm{\mathfrak{r}}}}{}{#2}{}{#1}}}
\newrobustcmd{\mathfraksz}[2][]{\ensuremath{\subp{\mathfrak{s}}{}{#2}{}{#1}}}
\newrobustcmd{\hatmathfraksz}[2][]{\ensuremath{\subp{\hat{\mathfrak{s}}}{}{#2}{}{#1}}}
\newrobustcmd{\widehatmathfraksz}[2][]{\ensuremath{\subp{\widehat{\mathfrak{s}}}{}{#2}{}{#1}}}
\newrobustcmd{\checkmathfraksz}[2][]{\ensuremath{\subp{\check{\mathfrak{s}}}{}{#2}{}{#1}}}
\newrobustcmd{\tildemathfraksz}[2][]{\ensuremath{\subp{\tilde{\mathfrak{s}}}{}{#2}{}{#1}}}
\newrobustcmd{\widetildemathfraksz}[2][]{\ensuremath{\subp{\widetilde{\mathfrak{s}}}{}{#2}{}{#1}}}
\newrobustcmd{\acutemathfraksz}[2][]{\ensuremath{\subp{\acute{\mathfrak{s}}}{}{#2}{}{#1}}}
\newrobustcmd{\gravemathfraksz}[2][]{\ensuremath{\subp{\grave{\mathfrak{s}}}{}{#2}{}{#1}}}
\newrobustcmd{\dotmathfraksz}[2][]{\ensuremath{\subp{\dot{\mathfrak{s}}}{}{#2}{}{#1}}}
\newrobustcmd{\ddotmathfraksz}[2][]{\ensuremath{\subp{\ddot{\mathfrak{s}}}{}{#2}{}{#1}}}
\newrobustcmd{\brevemathfraksz}[2][]{\ensuremath{\subp{\breve{\mathfrak{s}}}{}{#2}{}{#1}}}
\newrobustcmd{\barmathfraksz}[2][]{\ensuremath{\subp{\bar{\mathfrak{s}}}{}{#2}{}{#1}}}
\newrobustcmd{\vecmathfraksz}[2][]{\ensuremath{\subp{\vec{\mathfrak{s}}}{}{#2}{}{#1}}}
\newrobustcmd{\bmmathfraksz}[2][]{\ensuremath{\subp{\bm{\mathfrak{s}}}{}{#2}{}{#1}}}
\newrobustcmd{\hatbmmathfraksz}[2][]{\ensuremath{\subp{\hat{\bm{\mathfrak{s}}}}{}{#2}{}{#1}}}
\newrobustcmd{\widehatbmmathfraksz}[2][]{\ensuremath{\subp{\widehat{\bm{\mathfrak{s}}}}{}{#2}{}{#1}}}
\newrobustcmd{\checkbmmathfraksz}[2][]{\ensuremath{\subp{\check{\bm{\mathfrak{s}}}}{}{#2}{}{#1}}}
\newrobustcmd{\tildebmmathfraksz}[2][]{\ensuremath{\subp{\tilde{\bm{\mathfrak{s}}}}{}{#2}{}{#1}}}
\newrobustcmd{\widetildebmmathfraksz}[2][]{\ensuremath{\subp{\widetilde{\bm{\mathfrak{s}}}}{}{#2}{}{#1}}}
\newrobustcmd{\acutebmmathfraksz}[2][]{\ensuremath{\subp{\acute{\bm{\mathfrak{s}}}}{}{#2}{}{#1}}}
\newrobustcmd{\gravebmmathfraksz}[2][]{\ensuremath{\subp{\grave{\bm{\mathfrak{s}}}}{}{#2}{}{#1}}}
\newrobustcmd{\dotbmmathfraksz}[2][]{\ensuremath{\subp{\dot{\bm{\mathfrak{s}}}}{}{#2}{}{#1}}}
\newrobustcmd{\ddotbmmathfraksz}[2][]{\ensuremath{\subp{\ddot{\bm{\mathfrak{s}}}}{}{#2}{}{#1}}}
\newrobustcmd{\brevebmmathfraksz}[2][]{\ensuremath{\subp{\breve{\bm{\mathfrak{s}}}}{}{#2}{}{#1}}}
\newrobustcmd{\barbmmathfraksz}[2][]{\ensuremath{\subp{\bar{\bm{\mathfrak{s}}}}{}{#2}{}{#1}}}
\newrobustcmd{\vecbmmathfraksz}[2][]{\ensuremath{\subp{\vec{\bm{\mathfrak{s}}}}{}{#2}{}{#1}}}
\newrobustcmd{\mathfraktz}[2][]{\ensuremath{\subp{\mathfrak{t}}{}{#2}{}{#1}}}
\newrobustcmd{\hatmathfraktz}[2][]{\ensuremath{\subp{\hat{\mathfrak{t}}}{}{#2}{}{#1}}}
\newrobustcmd{\widehatmathfraktz}[2][]{\ensuremath{\subp{\widehat{\mathfrak{t}}}{}{#2}{}{#1}}}
\newrobustcmd{\checkmathfraktz}[2][]{\ensuremath{\subp{\check{\mathfrak{t}}}{}{#2}{}{#1}}}
\newrobustcmd{\tildemathfraktz}[2][]{\ensuremath{\subp{\tilde{\mathfrak{t}}}{}{#2}{}{#1}}}
\newrobustcmd{\widetildemathfraktz}[2][]{\ensuremath{\subp{\widetilde{\mathfrak{t}}}{}{#2}{}{#1}}}
\newrobustcmd{\acutemathfraktz}[2][]{\ensuremath{\subp{\acute{\mathfrak{t}}}{}{#2}{}{#1}}}
\newrobustcmd{\gravemathfraktz}[2][]{\ensuremath{\subp{\grave{\mathfrak{t}}}{}{#2}{}{#1}}}
\newrobustcmd{\dotmathfraktz}[2][]{\ensuremath{\subp{\dot{\mathfrak{t}}}{}{#2}{}{#1}}}
\newrobustcmd{\ddotmathfraktz}[2][]{\ensuremath{\subp{\ddot{\mathfrak{t}}}{}{#2}{}{#1}}}
\newrobustcmd{\brevemathfraktz}[2][]{\ensuremath{\subp{\breve{\mathfrak{t}}}{}{#2}{}{#1}}}
\newrobustcmd{\barmathfraktz}[2][]{\ensuremath{\subp{\bar{\mathfrak{t}}}{}{#2}{}{#1}}}
\newrobustcmd{\vecmathfraktz}[2][]{\ensuremath{\subp{\vec{\mathfrak{t}}}{}{#2}{}{#1}}}
\newrobustcmd{\bmmathfraktz}[2][]{\ensuremath{\subp{\bm{\mathfrak{t}}}{}{#2}{}{#1}}}
\newrobustcmd{\hatbmmathfraktz}[2][]{\ensuremath{\subp{\hat{\bm{\mathfrak{t}}}}{}{#2}{}{#1}}}
\newrobustcmd{\widehatbmmathfraktz}[2][]{\ensuremath{\subp{\widehat{\bm{\mathfrak{t}}}}{}{#2}{}{#1}}}
\newrobustcmd{\checkbmmathfraktz}[2][]{\ensuremath{\subp{\check{\bm{\mathfrak{t}}}}{}{#2}{}{#1}}}
\newrobustcmd{\tildebmmathfraktz}[2][]{\ensuremath{\subp{\tilde{\bm{\mathfrak{t}}}}{}{#2}{}{#1}}}
\newrobustcmd{\widetildebmmathfraktz}[2][]{\ensuremath{\subp{\widetilde{\bm{\mathfrak{t}}}}{}{#2}{}{#1}}}
\newrobustcmd{\acutebmmathfraktz}[2][]{\ensuremath{\subp{\acute{\bm{\mathfrak{t}}}}{}{#2}{}{#1}}}
\newrobustcmd{\gravebmmathfraktz}[2][]{\ensuremath{\subp{\grave{\bm{\mathfrak{t}}}}{}{#2}{}{#1}}}
\newrobustcmd{\dotbmmathfraktz}[2][]{\ensuremath{\subp{\dot{\bm{\mathfrak{t}}}}{}{#2}{}{#1}}}
\newrobustcmd{\ddotbmmathfraktz}[2][]{\ensuremath{\subp{\ddot{\bm{\mathfrak{t}}}}{}{#2}{}{#1}}}
\newrobustcmd{\brevebmmathfraktz}[2][]{\ensuremath{\subp{\breve{\bm{\mathfrak{t}}}}{}{#2}{}{#1}}}
\newrobustcmd{\barbmmathfraktz}[2][]{\ensuremath{\subp{\bar{\bm{\mathfrak{t}}}}{}{#2}{}{#1}}}
\newrobustcmd{\vecbmmathfraktz}[2][]{\ensuremath{\subp{\vec{\bm{\mathfrak{t}}}}{}{#2}{}{#1}}}
\newrobustcmd{\mathfrakuz}[2][]{\ensuremath{\subp{\mathfrak{u}}{}{#2}{}{#1}}}
\newrobustcmd{\hatmathfrakuz}[2][]{\ensuremath{\subp{\hat{\mathfrak{u}}}{}{#2}{}{#1}}}
\newrobustcmd{\widehatmathfrakuz}[2][]{\ensuremath{\subp{\widehat{\mathfrak{u}}}{}{#2}{}{#1}}}
\newrobustcmd{\checkmathfrakuz}[2][]{\ensuremath{\subp{\check{\mathfrak{u}}}{}{#2}{}{#1}}}
\newrobustcmd{\tildemathfrakuz}[2][]{\ensuremath{\subp{\tilde{\mathfrak{u}}}{}{#2}{}{#1}}}
\newrobustcmd{\widetildemathfrakuz}[2][]{\ensuremath{\subp{\widetilde{\mathfrak{u}}}{}{#2}{}{#1}}}
\newrobustcmd{\acutemathfrakuz}[2][]{\ensuremath{\subp{\acute{\mathfrak{u}}}{}{#2}{}{#1}}}
\newrobustcmd{\gravemathfrakuz}[2][]{\ensuremath{\subp{\grave{\mathfrak{u}}}{}{#2}{}{#1}}}
\newrobustcmd{\dotmathfrakuz}[2][]{\ensuremath{\subp{\dot{\mathfrak{u}}}{}{#2}{}{#1}}}
\newrobustcmd{\ddotmathfrakuz}[2][]{\ensuremath{\subp{\ddot{\mathfrak{u}}}{}{#2}{}{#1}}}
\newrobustcmd{\brevemathfrakuz}[2][]{\ensuremath{\subp{\breve{\mathfrak{u}}}{}{#2}{}{#1}}}
\newrobustcmd{\barmathfrakuz}[2][]{\ensuremath{\subp{\bar{\mathfrak{u}}}{}{#2}{}{#1}}}
\newrobustcmd{\vecmathfrakuz}[2][]{\ensuremath{\subp{\vec{\mathfrak{u}}}{}{#2}{}{#1}}}
\newrobustcmd{\bmmathfrakuz}[2][]{\ensuremath{\subp{\bm{\mathfrak{u}}}{}{#2}{}{#1}}}
\newrobustcmd{\hatbmmathfrakuz}[2][]{\ensuremath{\subp{\hat{\bm{\mathfrak{u}}}}{}{#2}{}{#1}}}
\newrobustcmd{\widehatbmmathfrakuz}[2][]{\ensuremath{\subp{\widehat{\bm{\mathfrak{u}}}}{}{#2}{}{#1}}}
\newrobustcmd{\checkbmmathfrakuz}[2][]{\ensuremath{\subp{\check{\bm{\mathfrak{u}}}}{}{#2}{}{#1}}}
\newrobustcmd{\tildebmmathfrakuz}[2][]{\ensuremath{\subp{\tilde{\bm{\mathfrak{u}}}}{}{#2}{}{#1}}}
\newrobustcmd{\widetildebmmathfrakuz}[2][]{\ensuremath{\subp{\widetilde{\bm{\mathfrak{u}}}}{}{#2}{}{#1}}}
\newrobustcmd{\acutebmmathfrakuz}[2][]{\ensuremath{\subp{\acute{\bm{\mathfrak{u}}}}{}{#2}{}{#1}}}
\newrobustcmd{\gravebmmathfrakuz}[2][]{\ensuremath{\subp{\grave{\bm{\mathfrak{u}}}}{}{#2}{}{#1}}}
\newrobustcmd{\dotbmmathfrakuz}[2][]{\ensuremath{\subp{\dot{\bm{\mathfrak{u}}}}{}{#2}{}{#1}}}
\newrobustcmd{\ddotbmmathfrakuz}[2][]{\ensuremath{\subp{\ddot{\bm{\mathfrak{u}}}}{}{#2}{}{#1}}}
\newrobustcmd{\brevebmmathfrakuz}[2][]{\ensuremath{\subp{\breve{\bm{\mathfrak{u}}}}{}{#2}{}{#1}}}
\newrobustcmd{\barbmmathfrakuz}[2][]{\ensuremath{\subp{\bar{\bm{\mathfrak{u}}}}{}{#2}{}{#1}}}
\newrobustcmd{\vecbmmathfrakuz}[2][]{\ensuremath{\subp{\vec{\bm{\mathfrak{u}}}}{}{#2}{}{#1}}}
\newrobustcmd{\mathfrakvz}[2][]{\ensuremath{\subp{\mathfrak{v}}{}{#2}{}{#1}}}
\newrobustcmd{\hatmathfrakvz}[2][]{\ensuremath{\subp{\hat{\mathfrak{v}}}{}{#2}{}{#1}}}
\newrobustcmd{\widehatmathfrakvz}[2][]{\ensuremath{\subp{\widehat{\mathfrak{v}}}{}{#2}{}{#1}}}
\newrobustcmd{\checkmathfrakvz}[2][]{\ensuremath{\subp{\check{\mathfrak{v}}}{}{#2}{}{#1}}}
\newrobustcmd{\tildemathfrakvz}[2][]{\ensuremath{\subp{\tilde{\mathfrak{v}}}{}{#2}{}{#1}}}
\newrobustcmd{\widetildemathfrakvz}[2][]{\ensuremath{\subp{\widetilde{\mathfrak{v}}}{}{#2}{}{#1}}}
\newrobustcmd{\acutemathfrakvz}[2][]{\ensuremath{\subp{\acute{\mathfrak{v}}}{}{#2}{}{#1}}}
\newrobustcmd{\gravemathfrakvz}[2][]{\ensuremath{\subp{\grave{\mathfrak{v}}}{}{#2}{}{#1}}}
\newrobustcmd{\dotmathfrakvz}[2][]{\ensuremath{\subp{\dot{\mathfrak{v}}}{}{#2}{}{#1}}}
\newrobustcmd{\ddotmathfrakvz}[2][]{\ensuremath{\subp{\ddot{\mathfrak{v}}}{}{#2}{}{#1}}}
\newrobustcmd{\brevemathfrakvz}[2][]{\ensuremath{\subp{\breve{\mathfrak{v}}}{}{#2}{}{#1}}}
\newrobustcmd{\barmathfrakvz}[2][]{\ensuremath{\subp{\bar{\mathfrak{v}}}{}{#2}{}{#1}}}
\newrobustcmd{\vecmathfrakvz}[2][]{\ensuremath{\subp{\vec{\mathfrak{v}}}{}{#2}{}{#1}}}
\newrobustcmd{\bmmathfrakvz}[2][]{\ensuremath{\subp{\bm{\mathfrak{v}}}{}{#2}{}{#1}}}
\newrobustcmd{\hatbmmathfrakvz}[2][]{\ensuremath{\subp{\hat{\bm{\mathfrak{v}}}}{}{#2}{}{#1}}}
\newrobustcmd{\widehatbmmathfrakvz}[2][]{\ensuremath{\subp{\widehat{\bm{\mathfrak{v}}}}{}{#2}{}{#1}}}
\newrobustcmd{\checkbmmathfrakvz}[2][]{\ensuremath{\subp{\check{\bm{\mathfrak{v}}}}{}{#2}{}{#1}}}
\newrobustcmd{\tildebmmathfrakvz}[2][]{\ensuremath{\subp{\tilde{\bm{\mathfrak{v}}}}{}{#2}{}{#1}}}
\newrobustcmd{\widetildebmmathfrakvz}[2][]{\ensuremath{\subp{\widetilde{\bm{\mathfrak{v}}}}{}{#2}{}{#1}}}
\newrobustcmd{\acutebmmathfrakvz}[2][]{\ensuremath{\subp{\acute{\bm{\mathfrak{v}}}}{}{#2}{}{#1}}}
\newrobustcmd{\gravebmmathfrakvz}[2][]{\ensuremath{\subp{\grave{\bm{\mathfrak{v}}}}{}{#2}{}{#1}}}
\newrobustcmd{\dotbmmathfrakvz}[2][]{\ensuremath{\subp{\dot{\bm{\mathfrak{v}}}}{}{#2}{}{#1}}}
\newrobustcmd{\ddotbmmathfrakvz}[2][]{\ensuremath{\subp{\ddot{\bm{\mathfrak{v}}}}{}{#2}{}{#1}}}
\newrobustcmd{\brevebmmathfrakvz}[2][]{\ensuremath{\subp{\breve{\bm{\mathfrak{v}}}}{}{#2}{}{#1}}}
\newrobustcmd{\barbmmathfrakvz}[2][]{\ensuremath{\subp{\bar{\bm{\mathfrak{v}}}}{}{#2}{}{#1}}}
\newrobustcmd{\vecbmmathfrakvz}[2][]{\ensuremath{\subp{\vec{\bm{\mathfrak{v}}}}{}{#2}{}{#1}}}
\newrobustcmd{\mathfrakwz}[2][]{\ensuremath{\subp{\mathfrak{w}}{}{#2}{}{#1}}}
\newrobustcmd{\hatmathfrakwz}[2][]{\ensuremath{\subp{\hat{\mathfrak{w}}}{}{#2}{}{#1}}}
\newrobustcmd{\widehatmathfrakwz}[2][]{\ensuremath{\subp{\widehat{\mathfrak{w}}}{}{#2}{}{#1}}}
\newrobustcmd{\checkmathfrakwz}[2][]{\ensuremath{\subp{\check{\mathfrak{w}}}{}{#2}{}{#1}}}
\newrobustcmd{\tildemathfrakwz}[2][]{\ensuremath{\subp{\tilde{\mathfrak{w}}}{}{#2}{}{#1}}}
\newrobustcmd{\widetildemathfrakwz}[2][]{\ensuremath{\subp{\widetilde{\mathfrak{w}}}{}{#2}{}{#1}}}
\newrobustcmd{\acutemathfrakwz}[2][]{\ensuremath{\subp{\acute{\mathfrak{w}}}{}{#2}{}{#1}}}
\newrobustcmd{\gravemathfrakwz}[2][]{\ensuremath{\subp{\grave{\mathfrak{w}}}{}{#2}{}{#1}}}
\newrobustcmd{\dotmathfrakwz}[2][]{\ensuremath{\subp{\dot{\mathfrak{w}}}{}{#2}{}{#1}}}
\newrobustcmd{\ddotmathfrakwz}[2][]{\ensuremath{\subp{\ddot{\mathfrak{w}}}{}{#2}{}{#1}}}
\newrobustcmd{\brevemathfrakwz}[2][]{\ensuremath{\subp{\breve{\mathfrak{w}}}{}{#2}{}{#1}}}
\newrobustcmd{\barmathfrakwz}[2][]{\ensuremath{\subp{\bar{\mathfrak{w}}}{}{#2}{}{#1}}}
\newrobustcmd{\vecmathfrakwz}[2][]{\ensuremath{\subp{\vec{\mathfrak{w}}}{}{#2}{}{#1}}}
\newrobustcmd{\bmmathfrakwz}[2][]{\ensuremath{\subp{\bm{\mathfrak{w}}}{}{#2}{}{#1}}}
\newrobustcmd{\hatbmmathfrakwz}[2][]{\ensuremath{\subp{\hat{\bm{\mathfrak{w}}}}{}{#2}{}{#1}}}
\newrobustcmd{\widehatbmmathfrakwz}[2][]{\ensuremath{\subp{\widehat{\bm{\mathfrak{w}}}}{}{#2}{}{#1}}}
\newrobustcmd{\checkbmmathfrakwz}[2][]{\ensuremath{\subp{\check{\bm{\mathfrak{w}}}}{}{#2}{}{#1}}}
\newrobustcmd{\tildebmmathfrakwz}[2][]{\ensuremath{\subp{\tilde{\bm{\mathfrak{w}}}}{}{#2}{}{#1}}}
\newrobustcmd{\widetildebmmathfrakwz}[2][]{\ensuremath{\subp{\widetilde{\bm{\mathfrak{w}}}}{}{#2}{}{#1}}}
\newrobustcmd{\acutebmmathfrakwz}[2][]{\ensuremath{\subp{\acute{\bm{\mathfrak{w}}}}{}{#2}{}{#1}}}
\newrobustcmd{\gravebmmathfrakwz}[2][]{\ensuremath{\subp{\grave{\bm{\mathfrak{w}}}}{}{#2}{}{#1}}}
\newrobustcmd{\dotbmmathfrakwz}[2][]{\ensuremath{\subp{\dot{\bm{\mathfrak{w}}}}{}{#2}{}{#1}}}
\newrobustcmd{\ddotbmmathfrakwz}[2][]{\ensuremath{\subp{\ddot{\bm{\mathfrak{w}}}}{}{#2}{}{#1}}}
\newrobustcmd{\brevebmmathfrakwz}[2][]{\ensuremath{\subp{\breve{\bm{\mathfrak{w}}}}{}{#2}{}{#1}}}
\newrobustcmd{\barbmmathfrakwz}[2][]{\ensuremath{\subp{\bar{\bm{\mathfrak{w}}}}{}{#2}{}{#1}}}
\newrobustcmd{\vecbmmathfrakwz}[2][]{\ensuremath{\subp{\vec{\bm{\mathfrak{w}}}}{}{#2}{}{#1}}}
\newrobustcmd{\mathfrakxz}[2][]{\ensuremath{\subp{\mathfrak{x}}{}{#2}{}{#1}}}
\newrobustcmd{\hatmathfrakxz}[2][]{\ensuremath{\subp{\hat{\mathfrak{x}}}{}{#2}{}{#1}}}
\newrobustcmd{\widehatmathfrakxz}[2][]{\ensuremath{\subp{\widehat{\mathfrak{x}}}{}{#2}{}{#1}}}
\newrobustcmd{\checkmathfrakxz}[2][]{\ensuremath{\subp{\check{\mathfrak{x}}}{}{#2}{}{#1}}}
\newrobustcmd{\tildemathfrakxz}[2][]{\ensuremath{\subp{\tilde{\mathfrak{x}}}{}{#2}{}{#1}}}
\newrobustcmd{\widetildemathfrakxz}[2][]{\ensuremath{\subp{\widetilde{\mathfrak{x}}}{}{#2}{}{#1}}}
\newrobustcmd{\acutemathfrakxz}[2][]{\ensuremath{\subp{\acute{\mathfrak{x}}}{}{#2}{}{#1}}}
\newrobustcmd{\gravemathfrakxz}[2][]{\ensuremath{\subp{\grave{\mathfrak{x}}}{}{#2}{}{#1}}}
\newrobustcmd{\dotmathfrakxz}[2][]{\ensuremath{\subp{\dot{\mathfrak{x}}}{}{#2}{}{#1}}}
\newrobustcmd{\ddotmathfrakxz}[2][]{\ensuremath{\subp{\ddot{\mathfrak{x}}}{}{#2}{}{#1}}}
\newrobustcmd{\brevemathfrakxz}[2][]{\ensuremath{\subp{\breve{\mathfrak{x}}}{}{#2}{}{#1}}}
\newrobustcmd{\barmathfrakxz}[2][]{\ensuremath{\subp{\bar{\mathfrak{x}}}{}{#2}{}{#1}}}
\newrobustcmd{\vecmathfrakxz}[2][]{\ensuremath{\subp{\vec{\mathfrak{x}}}{}{#2}{}{#1}}}
\newrobustcmd{\bmmathfrakxz}[2][]{\ensuremath{\subp{\bm{\mathfrak{x}}}{}{#2}{}{#1}}}
\newrobustcmd{\hatbmmathfrakxz}[2][]{\ensuremath{\subp{\hat{\bm{\mathfrak{x}}}}{}{#2}{}{#1}}}
\newrobustcmd{\widehatbmmathfrakxz}[2][]{\ensuremath{\subp{\widehat{\bm{\mathfrak{x}}}}{}{#2}{}{#1}}}
\newrobustcmd{\checkbmmathfrakxz}[2][]{\ensuremath{\subp{\check{\bm{\mathfrak{x}}}}{}{#2}{}{#1}}}
\newrobustcmd{\tildebmmathfrakxz}[2][]{\ensuremath{\subp{\tilde{\bm{\mathfrak{x}}}}{}{#2}{}{#1}}}
\newrobustcmd{\widetildebmmathfrakxz}[2][]{\ensuremath{\subp{\widetilde{\bm{\mathfrak{x}}}}{}{#2}{}{#1}}}
\newrobustcmd{\acutebmmathfrakxz}[2][]{\ensuremath{\subp{\acute{\bm{\mathfrak{x}}}}{}{#2}{}{#1}}}
\newrobustcmd{\gravebmmathfrakxz}[2][]{\ensuremath{\subp{\grave{\bm{\mathfrak{x}}}}{}{#2}{}{#1}}}
\newrobustcmd{\dotbmmathfrakxz}[2][]{\ensuremath{\subp{\dot{\bm{\mathfrak{x}}}}{}{#2}{}{#1}}}
\newrobustcmd{\ddotbmmathfrakxz}[2][]{\ensuremath{\subp{\ddot{\bm{\mathfrak{x}}}}{}{#2}{}{#1}}}
\newrobustcmd{\brevebmmathfrakxz}[2][]{\ensuremath{\subp{\breve{\bm{\mathfrak{x}}}}{}{#2}{}{#1}}}
\newrobustcmd{\barbmmathfrakxz}[2][]{\ensuremath{\subp{\bar{\bm{\mathfrak{x}}}}{}{#2}{}{#1}}}
\newrobustcmd{\vecbmmathfrakxz}[2][]{\ensuremath{\subp{\vec{\bm{\mathfrak{x}}}}{}{#2}{}{#1}}}
\newrobustcmd{\mathfrakyz}[2][]{\ensuremath{\subp{\mathfrak{y}}{}{#2}{}{#1}}}
\newrobustcmd{\hatmathfrakyz}[2][]{\ensuremath{\subp{\hat{\mathfrak{y}}}{}{#2}{}{#1}}}
\newrobustcmd{\widehatmathfrakyz}[2][]{\ensuremath{\subp{\widehat{\mathfrak{y}}}{}{#2}{}{#1}}}
\newrobustcmd{\checkmathfrakyz}[2][]{\ensuremath{\subp{\check{\mathfrak{y}}}{}{#2}{}{#1}}}
\newrobustcmd{\tildemathfrakyz}[2][]{\ensuremath{\subp{\tilde{\mathfrak{y}}}{}{#2}{}{#1}}}
\newrobustcmd{\widetildemathfrakyz}[2][]{\ensuremath{\subp{\widetilde{\mathfrak{y}}}{}{#2}{}{#1}}}
\newrobustcmd{\acutemathfrakyz}[2][]{\ensuremath{\subp{\acute{\mathfrak{y}}}{}{#2}{}{#1}}}
\newrobustcmd{\gravemathfrakyz}[2][]{\ensuremath{\subp{\grave{\mathfrak{y}}}{}{#2}{}{#1}}}
\newrobustcmd{\dotmathfrakyz}[2][]{\ensuremath{\subp{\dot{\mathfrak{y}}}{}{#2}{}{#1}}}
\newrobustcmd{\ddotmathfrakyz}[2][]{\ensuremath{\subp{\ddot{\mathfrak{y}}}{}{#2}{}{#1}}}
\newrobustcmd{\brevemathfrakyz}[2][]{\ensuremath{\subp{\breve{\mathfrak{y}}}{}{#2}{}{#1}}}
\newrobustcmd{\barmathfrakyz}[2][]{\ensuremath{\subp{\bar{\mathfrak{y}}}{}{#2}{}{#1}}}
\newrobustcmd{\vecmathfrakyz}[2][]{\ensuremath{\subp{\vec{\mathfrak{y}}}{}{#2}{}{#1}}}
\newrobustcmd{\bmmathfrakyz}[2][]{\ensuremath{\subp{\bm{\mathfrak{y}}}{}{#2}{}{#1}}}
\newrobustcmd{\hatbmmathfrakyz}[2][]{\ensuremath{\subp{\hat{\bm{\mathfrak{y}}}}{}{#2}{}{#1}}}
\newrobustcmd{\widehatbmmathfrakyz}[2][]{\ensuremath{\subp{\widehat{\bm{\mathfrak{y}}}}{}{#2}{}{#1}}}
\newrobustcmd{\checkbmmathfrakyz}[2][]{\ensuremath{\subp{\check{\bm{\mathfrak{y}}}}{}{#2}{}{#1}}}
\newrobustcmd{\tildebmmathfrakyz}[2][]{\ensuremath{\subp{\tilde{\bm{\mathfrak{y}}}}{}{#2}{}{#1}}}
\newrobustcmd{\widetildebmmathfrakyz}[2][]{\ensuremath{\subp{\widetilde{\bm{\mathfrak{y}}}}{}{#2}{}{#1}}}
\newrobustcmd{\acutebmmathfrakyz}[2][]{\ensuremath{\subp{\acute{\bm{\mathfrak{y}}}}{}{#2}{}{#1}}}
\newrobustcmd{\gravebmmathfrakyz}[2][]{\ensuremath{\subp{\grave{\bm{\mathfrak{y}}}}{}{#2}{}{#1}}}
\newrobustcmd{\dotbmmathfrakyz}[2][]{\ensuremath{\subp{\dot{\bm{\mathfrak{y}}}}{}{#2}{}{#1}}}
\newrobustcmd{\ddotbmmathfrakyz}[2][]{\ensuremath{\subp{\ddot{\bm{\mathfrak{y}}}}{}{#2}{}{#1}}}
\newrobustcmd{\brevebmmathfrakyz}[2][]{\ensuremath{\subp{\breve{\bm{\mathfrak{y}}}}{}{#2}{}{#1}}}
\newrobustcmd{\barbmmathfrakyz}[2][]{\ensuremath{\subp{\bar{\bm{\mathfrak{y}}}}{}{#2}{}{#1}}}
\newrobustcmd{\vecbmmathfrakyz}[2][]{\ensuremath{\subp{\vec{\bm{\mathfrak{y}}}}{}{#2}{}{#1}}}
\newrobustcmd{\mathfrakzz}[2][]{\ensuremath{\subp{\mathfrak{z}}{}{#2}{}{#1}}}
\newrobustcmd{\hatmathfrakzz}[2][]{\ensuremath{\subp{\hat{\mathfrak{z}}}{}{#2}{}{#1}}}
\newrobustcmd{\widehatmathfrakzz}[2][]{\ensuremath{\subp{\widehat{\mathfrak{z}}}{}{#2}{}{#1}}}
\newrobustcmd{\checkmathfrakzz}[2][]{\ensuremath{\subp{\check{\mathfrak{z}}}{}{#2}{}{#1}}}
\newrobustcmd{\tildemathfrakzz}[2][]{\ensuremath{\subp{\tilde{\mathfrak{z}}}{}{#2}{}{#1}}}
\newrobustcmd{\widetildemathfrakzz}[2][]{\ensuremath{\subp{\widetilde{\mathfrak{z}}}{}{#2}{}{#1}}}
\newrobustcmd{\acutemathfrakzz}[2][]{\ensuremath{\subp{\acute{\mathfrak{z}}}{}{#2}{}{#1}}}
\newrobustcmd{\gravemathfrakzz}[2][]{\ensuremath{\subp{\grave{\mathfrak{z}}}{}{#2}{}{#1}}}
\newrobustcmd{\dotmathfrakzz}[2][]{\ensuremath{\subp{\dot{\mathfrak{z}}}{}{#2}{}{#1}}}
\newrobustcmd{\ddotmathfrakzz}[2][]{\ensuremath{\subp{\ddot{\mathfrak{z}}}{}{#2}{}{#1}}}
\newrobustcmd{\brevemathfrakzz}[2][]{\ensuremath{\subp{\breve{\mathfrak{z}}}{}{#2}{}{#1}}}
\newrobustcmd{\barmathfrakzz}[2][]{\ensuremath{\subp{\bar{\mathfrak{z}}}{}{#2}{}{#1}}}
\newrobustcmd{\vecmathfrakzz}[2][]{\ensuremath{\subp{\vec{\mathfrak{z}}}{}{#2}{}{#1}}}
\newrobustcmd{\bmmathfrakzz}[2][]{\ensuremath{\subp{\bm{\mathfrak{z}}}{}{#2}{}{#1}}}
\newrobustcmd{\hatbmmathfrakzz}[2][]{\ensuremath{\subp{\hat{\bm{\mathfrak{z}}}}{}{#2}{}{#1}}}
\newrobustcmd{\widehatbmmathfrakzz}[2][]{\ensuremath{\subp{\widehat{\bm{\mathfrak{z}}}}{}{#2}{}{#1}}}
\newrobustcmd{\checkbmmathfrakzz}[2][]{\ensuremath{\subp{\check{\bm{\mathfrak{z}}}}{}{#2}{}{#1}}}
\newrobustcmd{\tildebmmathfrakzz}[2][]{\ensuremath{\subp{\tilde{\bm{\mathfrak{z}}}}{}{#2}{}{#1}}}
\newrobustcmd{\widetildebmmathfrakzz}[2][]{\ensuremath{\subp{\widetilde{\bm{\mathfrak{z}}}}{}{#2}{}{#1}}}
\newrobustcmd{\acutebmmathfrakzz}[2][]{\ensuremath{\subp{\acute{\bm{\mathfrak{z}}}}{}{#2}{}{#1}}}
\newrobustcmd{\gravebmmathfrakzz}[2][]{\ensuremath{\subp{\grave{\bm{\mathfrak{z}}}}{}{#2}{}{#1}}}
\newrobustcmd{\dotbmmathfrakzz}[2][]{\ensuremath{\subp{\dot{\bm{\mathfrak{z}}}}{}{#2}{}{#1}}}
\newrobustcmd{\ddotbmmathfrakzz}[2][]{\ensuremath{\subp{\ddot{\bm{\mathfrak{z}}}}{}{#2}{}{#1}}}
\newrobustcmd{\brevebmmathfrakzz}[2][]{\ensuremath{\subp{\breve{\bm{\mathfrak{z}}}}{}{#2}{}{#1}}}
\newrobustcmd{\barbmmathfrakzz}[2][]{\ensuremath{\subp{\bar{\bm{\mathfrak{z}}}}{}{#2}{}{#1}}}
\newrobustcmd{\vecbmmathfrakzz}[2][]{\ensuremath{\subp{\vec{\bm{\mathfrak{z}}}}{}{#2}{}{#1}}}
\newrobustcmd{\mathfrakAz}[2][]{\ensuremath{\subp{\mathfrak{A}}{}{#2}{}{#1}}}
\newrobustcmd{\hatmathfrakAz}[2][]{\ensuremath{\subp{\hat{\mathfrak{A}}}{}{#2}{}{#1}}}
\newrobustcmd{\widehatmathfrakAz}[2][]{\ensuremath{\subp{\widehat{\mathfrak{A}}}{}{#2}{}{#1}}}
\newrobustcmd{\checkmathfrakAz}[2][]{\ensuremath{\subp{\check{\mathfrak{A}}}{}{#2}{}{#1}}}
\newrobustcmd{\tildemathfrakAz}[2][]{\ensuremath{\subp{\tilde{\mathfrak{A}}}{}{#2}{}{#1}}}
\newrobustcmd{\widetildemathfrakAz}[2][]{\ensuremath{\subp{\widetilde{\mathfrak{A}}}{}{#2}{}{#1}}}
\newrobustcmd{\acutemathfrakAz}[2][]{\ensuremath{\subp{\acute{\mathfrak{A}}}{}{#2}{}{#1}}}
\newrobustcmd{\gravemathfrakAz}[2][]{\ensuremath{\subp{\grave{\mathfrak{A}}}{}{#2}{}{#1}}}
\newrobustcmd{\dotmathfrakAz}[2][]{\ensuremath{\subp{\dot{\mathfrak{A}}}{}{#2}{}{#1}}}
\newrobustcmd{\ddotmathfrakAz}[2][]{\ensuremath{\subp{\ddot{\mathfrak{A}}}{}{#2}{}{#1}}}
\newrobustcmd{\brevemathfrakAz}[2][]{\ensuremath{\subp{\breve{\mathfrak{A}}}{}{#2}{}{#1}}}
\newrobustcmd{\barmathfrakAz}[2][]{\ensuremath{\subp{\bar{\mathfrak{A}}}{}{#2}{}{#1}}}
\newrobustcmd{\vecmathfrakAz}[2][]{\ensuremath{\subp{\vec{\mathfrak{A}}}{}{#2}{}{#1}}}
\newrobustcmd{\bmmathfrakAz}[2][]{\ensuremath{\subp{\bm{\mathfrak{A}}}{}{#2}{}{#1}}}
\newrobustcmd{\hatbmmathfrakAz}[2][]{\ensuremath{\subp{\hat{\bm{\mathfrak{A}}}}{}{#2}{}{#1}}}
\newrobustcmd{\widehatbmmathfrakAz}[2][]{\ensuremath{\subp{\widehat{\bm{\mathfrak{A}}}}{}{#2}{}{#1}}}
\newrobustcmd{\checkbmmathfrakAz}[2][]{\ensuremath{\subp{\check{\bm{\mathfrak{A}}}}{}{#2}{}{#1}}}
\newrobustcmd{\tildebmmathfrakAz}[2][]{\ensuremath{\subp{\tilde{\bm{\mathfrak{A}}}}{}{#2}{}{#1}}}
\newrobustcmd{\widetildebmmathfrakAz}[2][]{\ensuremath{\subp{\widetilde{\bm{\mathfrak{A}}}}{}{#2}{}{#1}}}
\newrobustcmd{\acutebmmathfrakAz}[2][]{\ensuremath{\subp{\acute{\bm{\mathfrak{A}}}}{}{#2}{}{#1}}}
\newrobustcmd{\gravebmmathfrakAz}[2][]{\ensuremath{\subp{\grave{\bm{\mathfrak{A}}}}{}{#2}{}{#1}}}
\newrobustcmd{\dotbmmathfrakAz}[2][]{\ensuremath{\subp{\dot{\bm{\mathfrak{A}}}}{}{#2}{}{#1}}}
\newrobustcmd{\ddotbmmathfrakAz}[2][]{\ensuremath{\subp{\ddot{\bm{\mathfrak{A}}}}{}{#2}{}{#1}}}
\newrobustcmd{\brevebmmathfrakAz}[2][]{\ensuremath{\subp{\breve{\bm{\mathfrak{A}}}}{}{#2}{}{#1}}}
\newrobustcmd{\barbmmathfrakAz}[2][]{\ensuremath{\subp{\bar{\bm{\mathfrak{A}}}}{}{#2}{}{#1}}}
\newrobustcmd{\vecbmmathfrakAz}[2][]{\ensuremath{\subp{\vec{\bm{\mathfrak{A}}}}{}{#2}{}{#1}}}
\newrobustcmd{\mathfrakBz}[2][]{\ensuremath{\subp{\mathfrak{B}}{}{#2}{}{#1}}}
\newrobustcmd{\hatmathfrakBz}[2][]{\ensuremath{\subp{\hat{\mathfrak{B}}}{}{#2}{}{#1}}}
\newrobustcmd{\widehatmathfrakBz}[2][]{\ensuremath{\subp{\widehat{\mathfrak{B}}}{}{#2}{}{#1}}}
\newrobustcmd{\checkmathfrakBz}[2][]{\ensuremath{\subp{\check{\mathfrak{B}}}{}{#2}{}{#1}}}
\newrobustcmd{\tildemathfrakBz}[2][]{\ensuremath{\subp{\tilde{\mathfrak{B}}}{}{#2}{}{#1}}}
\newrobustcmd{\widetildemathfrakBz}[2][]{\ensuremath{\subp{\widetilde{\mathfrak{B}}}{}{#2}{}{#1}}}
\newrobustcmd{\acutemathfrakBz}[2][]{\ensuremath{\subp{\acute{\mathfrak{B}}}{}{#2}{}{#1}}}
\newrobustcmd{\gravemathfrakBz}[2][]{\ensuremath{\subp{\grave{\mathfrak{B}}}{}{#2}{}{#1}}}
\newrobustcmd{\dotmathfrakBz}[2][]{\ensuremath{\subp{\dot{\mathfrak{B}}}{}{#2}{}{#1}}}
\newrobustcmd{\ddotmathfrakBz}[2][]{\ensuremath{\subp{\ddot{\mathfrak{B}}}{}{#2}{}{#1}}}
\newrobustcmd{\brevemathfrakBz}[2][]{\ensuremath{\subp{\breve{\mathfrak{B}}}{}{#2}{}{#1}}}
\newrobustcmd{\barmathfrakBz}[2][]{\ensuremath{\subp{\bar{\mathfrak{B}}}{}{#2}{}{#1}}}
\newrobustcmd{\vecmathfrakBz}[2][]{\ensuremath{\subp{\vec{\mathfrak{B}}}{}{#2}{}{#1}}}
\newrobustcmd{\bmmathfrakBz}[2][]{\ensuremath{\subp{\bm{\mathfrak{B}}}{}{#2}{}{#1}}}
\newrobustcmd{\hatbmmathfrakBz}[2][]{\ensuremath{\subp{\hat{\bm{\mathfrak{B}}}}{}{#2}{}{#1}}}
\newrobustcmd{\widehatbmmathfrakBz}[2][]{\ensuremath{\subp{\widehat{\bm{\mathfrak{B}}}}{}{#2}{}{#1}}}
\newrobustcmd{\checkbmmathfrakBz}[2][]{\ensuremath{\subp{\check{\bm{\mathfrak{B}}}}{}{#2}{}{#1}}}
\newrobustcmd{\tildebmmathfrakBz}[2][]{\ensuremath{\subp{\tilde{\bm{\mathfrak{B}}}}{}{#2}{}{#1}}}
\newrobustcmd{\widetildebmmathfrakBz}[2][]{\ensuremath{\subp{\widetilde{\bm{\mathfrak{B}}}}{}{#2}{}{#1}}}
\newrobustcmd{\acutebmmathfrakBz}[2][]{\ensuremath{\subp{\acute{\bm{\mathfrak{B}}}}{}{#2}{}{#1}}}
\newrobustcmd{\gravebmmathfrakBz}[2][]{\ensuremath{\subp{\grave{\bm{\mathfrak{B}}}}{}{#2}{}{#1}}}
\newrobustcmd{\dotbmmathfrakBz}[2][]{\ensuremath{\subp{\dot{\bm{\mathfrak{B}}}}{}{#2}{}{#1}}}
\newrobustcmd{\ddotbmmathfrakBz}[2][]{\ensuremath{\subp{\ddot{\bm{\mathfrak{B}}}}{}{#2}{}{#1}}}
\newrobustcmd{\brevebmmathfrakBz}[2][]{\ensuremath{\subp{\breve{\bm{\mathfrak{B}}}}{}{#2}{}{#1}}}
\newrobustcmd{\barbmmathfrakBz}[2][]{\ensuremath{\subp{\bar{\bm{\mathfrak{B}}}}{}{#2}{}{#1}}}
\newrobustcmd{\vecbmmathfrakBz}[2][]{\ensuremath{\subp{\vec{\bm{\mathfrak{B}}}}{}{#2}{}{#1}}}
\newrobustcmd{\mathfrakCz}[2][]{\ensuremath{\subp{\mathfrak{C}}{}{#2}{}{#1}}}
\newrobustcmd{\hatmathfrakCz}[2][]{\ensuremath{\subp{\hat{\mathfrak{C}}}{}{#2}{}{#1}}}
\newrobustcmd{\widehatmathfrakCz}[2][]{\ensuremath{\subp{\widehat{\mathfrak{C}}}{}{#2}{}{#1}}}
\newrobustcmd{\checkmathfrakCz}[2][]{\ensuremath{\subp{\check{\mathfrak{C}}}{}{#2}{}{#1}}}
\newrobustcmd{\tildemathfrakCz}[2][]{\ensuremath{\subp{\tilde{\mathfrak{C}}}{}{#2}{}{#1}}}
\newrobustcmd{\widetildemathfrakCz}[2][]{\ensuremath{\subp{\widetilde{\mathfrak{C}}}{}{#2}{}{#1}}}
\newrobustcmd{\acutemathfrakCz}[2][]{\ensuremath{\subp{\acute{\mathfrak{C}}}{}{#2}{}{#1}}}
\newrobustcmd{\gravemathfrakCz}[2][]{\ensuremath{\subp{\grave{\mathfrak{C}}}{}{#2}{}{#1}}}
\newrobustcmd{\dotmathfrakCz}[2][]{\ensuremath{\subp{\dot{\mathfrak{C}}}{}{#2}{}{#1}}}
\newrobustcmd{\ddotmathfrakCz}[2][]{\ensuremath{\subp{\ddot{\mathfrak{C}}}{}{#2}{}{#1}}}
\newrobustcmd{\brevemathfrakCz}[2][]{\ensuremath{\subp{\breve{\mathfrak{C}}}{}{#2}{}{#1}}}
\newrobustcmd{\barmathfrakCz}[2][]{\ensuremath{\subp{\bar{\mathfrak{C}}}{}{#2}{}{#1}}}
\newrobustcmd{\vecmathfrakCz}[2][]{\ensuremath{\subp{\vec{\mathfrak{C}}}{}{#2}{}{#1}}}
\newrobustcmd{\bmmathfrakCz}[2][]{\ensuremath{\subp{\bm{\mathfrak{C}}}{}{#2}{}{#1}}}
\newrobustcmd{\hatbmmathfrakCz}[2][]{\ensuremath{\subp{\hat{\bm{\mathfrak{C}}}}{}{#2}{}{#1}}}
\newrobustcmd{\widehatbmmathfrakCz}[2][]{\ensuremath{\subp{\widehat{\bm{\mathfrak{C}}}}{}{#2}{}{#1}}}
\newrobustcmd{\checkbmmathfrakCz}[2][]{\ensuremath{\subp{\check{\bm{\mathfrak{C}}}}{}{#2}{}{#1}}}
\newrobustcmd{\tildebmmathfrakCz}[2][]{\ensuremath{\subp{\tilde{\bm{\mathfrak{C}}}}{}{#2}{}{#1}}}
\newrobustcmd{\widetildebmmathfrakCz}[2][]{\ensuremath{\subp{\widetilde{\bm{\mathfrak{C}}}}{}{#2}{}{#1}}}
\newrobustcmd{\acutebmmathfrakCz}[2][]{\ensuremath{\subp{\acute{\bm{\mathfrak{C}}}}{}{#2}{}{#1}}}
\newrobustcmd{\gravebmmathfrakCz}[2][]{\ensuremath{\subp{\grave{\bm{\mathfrak{C}}}}{}{#2}{}{#1}}}
\newrobustcmd{\dotbmmathfrakCz}[2][]{\ensuremath{\subp{\dot{\bm{\mathfrak{C}}}}{}{#2}{}{#1}}}
\newrobustcmd{\ddotbmmathfrakCz}[2][]{\ensuremath{\subp{\ddot{\bm{\mathfrak{C}}}}{}{#2}{}{#1}}}
\newrobustcmd{\brevebmmathfrakCz}[2][]{\ensuremath{\subp{\breve{\bm{\mathfrak{C}}}}{}{#2}{}{#1}}}
\newrobustcmd{\barbmmathfrakCz}[2][]{\ensuremath{\subp{\bar{\bm{\mathfrak{C}}}}{}{#2}{}{#1}}}
\newrobustcmd{\vecbmmathfrakCz}[2][]{\ensuremath{\subp{\vec{\bm{\mathfrak{C}}}}{}{#2}{}{#1}}}
\newrobustcmd{\mathfrakDz}[2][]{\ensuremath{\subp{\mathfrak{D}}{}{#2}{}{#1}}}
\newrobustcmd{\hatmathfrakDz}[2][]{\ensuremath{\subp{\hat{\mathfrak{D}}}{}{#2}{}{#1}}}
\newrobustcmd{\widehatmathfrakDz}[2][]{\ensuremath{\subp{\widehat{\mathfrak{D}}}{}{#2}{}{#1}}}
\newrobustcmd{\checkmathfrakDz}[2][]{\ensuremath{\subp{\check{\mathfrak{D}}}{}{#2}{}{#1}}}
\newrobustcmd{\tildemathfrakDz}[2][]{\ensuremath{\subp{\tilde{\mathfrak{D}}}{}{#2}{}{#1}}}
\newrobustcmd{\widetildemathfrakDz}[2][]{\ensuremath{\subp{\widetilde{\mathfrak{D}}}{}{#2}{}{#1}}}
\newrobustcmd{\acutemathfrakDz}[2][]{\ensuremath{\subp{\acute{\mathfrak{D}}}{}{#2}{}{#1}}}
\newrobustcmd{\gravemathfrakDz}[2][]{\ensuremath{\subp{\grave{\mathfrak{D}}}{}{#2}{}{#1}}}
\newrobustcmd{\dotmathfrakDz}[2][]{\ensuremath{\subp{\dot{\mathfrak{D}}}{}{#2}{}{#1}}}
\newrobustcmd{\ddotmathfrakDz}[2][]{\ensuremath{\subp{\ddot{\mathfrak{D}}}{}{#2}{}{#1}}}
\newrobustcmd{\brevemathfrakDz}[2][]{\ensuremath{\subp{\breve{\mathfrak{D}}}{}{#2}{}{#1}}}
\newrobustcmd{\barmathfrakDz}[2][]{\ensuremath{\subp{\bar{\mathfrak{D}}}{}{#2}{}{#1}}}
\newrobustcmd{\vecmathfrakDz}[2][]{\ensuremath{\subp{\vec{\mathfrak{D}}}{}{#2}{}{#1}}}
\newrobustcmd{\bmmathfrakDz}[2][]{\ensuremath{\subp{\bm{\mathfrak{D}}}{}{#2}{}{#1}}}
\newrobustcmd{\hatbmmathfrakDz}[2][]{\ensuremath{\subp{\hat{\bm{\mathfrak{D}}}}{}{#2}{}{#1}}}
\newrobustcmd{\widehatbmmathfrakDz}[2][]{\ensuremath{\subp{\widehat{\bm{\mathfrak{D}}}}{}{#2}{}{#1}}}
\newrobustcmd{\checkbmmathfrakDz}[2][]{\ensuremath{\subp{\check{\bm{\mathfrak{D}}}}{}{#2}{}{#1}}}
\newrobustcmd{\tildebmmathfrakDz}[2][]{\ensuremath{\subp{\tilde{\bm{\mathfrak{D}}}}{}{#2}{}{#1}}}
\newrobustcmd{\widetildebmmathfrakDz}[2][]{\ensuremath{\subp{\widetilde{\bm{\mathfrak{D}}}}{}{#2}{}{#1}}}
\newrobustcmd{\acutebmmathfrakDz}[2][]{\ensuremath{\subp{\acute{\bm{\mathfrak{D}}}}{}{#2}{}{#1}}}
\newrobustcmd{\gravebmmathfrakDz}[2][]{\ensuremath{\subp{\grave{\bm{\mathfrak{D}}}}{}{#2}{}{#1}}}
\newrobustcmd{\dotbmmathfrakDz}[2][]{\ensuremath{\subp{\dot{\bm{\mathfrak{D}}}}{}{#2}{}{#1}}}
\newrobustcmd{\ddotbmmathfrakDz}[2][]{\ensuremath{\subp{\ddot{\bm{\mathfrak{D}}}}{}{#2}{}{#1}}}
\newrobustcmd{\brevebmmathfrakDz}[2][]{\ensuremath{\subp{\breve{\bm{\mathfrak{D}}}}{}{#2}{}{#1}}}
\newrobustcmd{\barbmmathfrakDz}[2][]{\ensuremath{\subp{\bar{\bm{\mathfrak{D}}}}{}{#2}{}{#1}}}
\newrobustcmd{\vecbmmathfrakDz}[2][]{\ensuremath{\subp{\vec{\bm{\mathfrak{D}}}}{}{#2}{}{#1}}}
\newrobustcmd{\mathfrakEz}[2][]{\ensuremath{\subp{\mathfrak{E}}{}{#2}{}{#1}}}
\newrobustcmd{\hatmathfrakEz}[2][]{\ensuremath{\subp{\hat{\mathfrak{E}}}{}{#2}{}{#1}}}
\newrobustcmd{\widehatmathfrakEz}[2][]{\ensuremath{\subp{\widehat{\mathfrak{E}}}{}{#2}{}{#1}}}
\newrobustcmd{\checkmathfrakEz}[2][]{\ensuremath{\subp{\check{\mathfrak{E}}}{}{#2}{}{#1}}}
\newrobustcmd{\tildemathfrakEz}[2][]{\ensuremath{\subp{\tilde{\mathfrak{E}}}{}{#2}{}{#1}}}
\newrobustcmd{\widetildemathfrakEz}[2][]{\ensuremath{\subp{\widetilde{\mathfrak{E}}}{}{#2}{}{#1}}}
\newrobustcmd{\acutemathfrakEz}[2][]{\ensuremath{\subp{\acute{\mathfrak{E}}}{}{#2}{}{#1}}}
\newrobustcmd{\gravemathfrakEz}[2][]{\ensuremath{\subp{\grave{\mathfrak{E}}}{}{#2}{}{#1}}}
\newrobustcmd{\dotmathfrakEz}[2][]{\ensuremath{\subp{\dot{\mathfrak{E}}}{}{#2}{}{#1}}}
\newrobustcmd{\ddotmathfrakEz}[2][]{\ensuremath{\subp{\ddot{\mathfrak{E}}}{}{#2}{}{#1}}}
\newrobustcmd{\brevemathfrakEz}[2][]{\ensuremath{\subp{\breve{\mathfrak{E}}}{}{#2}{}{#1}}}
\newrobustcmd{\barmathfrakEz}[2][]{\ensuremath{\subp{\bar{\mathfrak{E}}}{}{#2}{}{#1}}}
\newrobustcmd{\vecmathfrakEz}[2][]{\ensuremath{\subp{\vec{\mathfrak{E}}}{}{#2}{}{#1}}}
\newrobustcmd{\bmmathfrakEz}[2][]{\ensuremath{\subp{\bm{\mathfrak{E}}}{}{#2}{}{#1}}}
\newrobustcmd{\hatbmmathfrakEz}[2][]{\ensuremath{\subp{\hat{\bm{\mathfrak{E}}}}{}{#2}{}{#1}}}
\newrobustcmd{\widehatbmmathfrakEz}[2][]{\ensuremath{\subp{\widehat{\bm{\mathfrak{E}}}}{}{#2}{}{#1}}}
\newrobustcmd{\checkbmmathfrakEz}[2][]{\ensuremath{\subp{\check{\bm{\mathfrak{E}}}}{}{#2}{}{#1}}}
\newrobustcmd{\tildebmmathfrakEz}[2][]{\ensuremath{\subp{\tilde{\bm{\mathfrak{E}}}}{}{#2}{}{#1}}}
\newrobustcmd{\widetildebmmathfrakEz}[2][]{\ensuremath{\subp{\widetilde{\bm{\mathfrak{E}}}}{}{#2}{}{#1}}}
\newrobustcmd{\acutebmmathfrakEz}[2][]{\ensuremath{\subp{\acute{\bm{\mathfrak{E}}}}{}{#2}{}{#1}}}
\newrobustcmd{\gravebmmathfrakEz}[2][]{\ensuremath{\subp{\grave{\bm{\mathfrak{E}}}}{}{#2}{}{#1}}}
\newrobustcmd{\dotbmmathfrakEz}[2][]{\ensuremath{\subp{\dot{\bm{\mathfrak{E}}}}{}{#2}{}{#1}}}
\newrobustcmd{\ddotbmmathfrakEz}[2][]{\ensuremath{\subp{\ddot{\bm{\mathfrak{E}}}}{}{#2}{}{#1}}}
\newrobustcmd{\brevebmmathfrakEz}[2][]{\ensuremath{\subp{\breve{\bm{\mathfrak{E}}}}{}{#2}{}{#1}}}
\newrobustcmd{\barbmmathfrakEz}[2][]{\ensuremath{\subp{\bar{\bm{\mathfrak{E}}}}{}{#2}{}{#1}}}
\newrobustcmd{\vecbmmathfrakEz}[2][]{\ensuremath{\subp{\vec{\bm{\mathfrak{E}}}}{}{#2}{}{#1}}}
\newrobustcmd{\mathfrakFz}[2][]{\ensuremath{\subp{\mathfrak{F}}{}{#2}{}{#1}}}
\newrobustcmd{\hatmathfrakFz}[2][]{\ensuremath{\subp{\hat{\mathfrak{F}}}{}{#2}{}{#1}}}
\newrobustcmd{\widehatmathfrakFz}[2][]{\ensuremath{\subp{\widehat{\mathfrak{F}}}{}{#2}{}{#1}}}
\newrobustcmd{\checkmathfrakFz}[2][]{\ensuremath{\subp{\check{\mathfrak{F}}}{}{#2}{}{#1}}}
\newrobustcmd{\tildemathfrakFz}[2][]{\ensuremath{\subp{\tilde{\mathfrak{F}}}{}{#2}{}{#1}}}
\newrobustcmd{\widetildemathfrakFz}[2][]{\ensuremath{\subp{\widetilde{\mathfrak{F}}}{}{#2}{}{#1}}}
\newrobustcmd{\acutemathfrakFz}[2][]{\ensuremath{\subp{\acute{\mathfrak{F}}}{}{#2}{}{#1}}}
\newrobustcmd{\gravemathfrakFz}[2][]{\ensuremath{\subp{\grave{\mathfrak{F}}}{}{#2}{}{#1}}}
\newrobustcmd{\dotmathfrakFz}[2][]{\ensuremath{\subp{\dot{\mathfrak{F}}}{}{#2}{}{#1}}}
\newrobustcmd{\ddotmathfrakFz}[2][]{\ensuremath{\subp{\ddot{\mathfrak{F}}}{}{#2}{}{#1}}}
\newrobustcmd{\brevemathfrakFz}[2][]{\ensuremath{\subp{\breve{\mathfrak{F}}}{}{#2}{}{#1}}}
\newrobustcmd{\barmathfrakFz}[2][]{\ensuremath{\subp{\bar{\mathfrak{F}}}{}{#2}{}{#1}}}
\newrobustcmd{\vecmathfrakFz}[2][]{\ensuremath{\subp{\vec{\mathfrak{F}}}{}{#2}{}{#1}}}
\newrobustcmd{\bmmathfrakFz}[2][]{\ensuremath{\subp{\bm{\mathfrak{F}}}{}{#2}{}{#1}}}
\newrobustcmd{\hatbmmathfrakFz}[2][]{\ensuremath{\subp{\hat{\bm{\mathfrak{F}}}}{}{#2}{}{#1}}}
\newrobustcmd{\widehatbmmathfrakFz}[2][]{\ensuremath{\subp{\widehat{\bm{\mathfrak{F}}}}{}{#2}{}{#1}}}
\newrobustcmd{\checkbmmathfrakFz}[2][]{\ensuremath{\subp{\check{\bm{\mathfrak{F}}}}{}{#2}{}{#1}}}
\newrobustcmd{\tildebmmathfrakFz}[2][]{\ensuremath{\subp{\tilde{\bm{\mathfrak{F}}}}{}{#2}{}{#1}}}
\newrobustcmd{\widetildebmmathfrakFz}[2][]{\ensuremath{\subp{\widetilde{\bm{\mathfrak{F}}}}{}{#2}{}{#1}}}
\newrobustcmd{\acutebmmathfrakFz}[2][]{\ensuremath{\subp{\acute{\bm{\mathfrak{F}}}}{}{#2}{}{#1}}}
\newrobustcmd{\gravebmmathfrakFz}[2][]{\ensuremath{\subp{\grave{\bm{\mathfrak{F}}}}{}{#2}{}{#1}}}
\newrobustcmd{\dotbmmathfrakFz}[2][]{\ensuremath{\subp{\dot{\bm{\mathfrak{F}}}}{}{#2}{}{#1}}}
\newrobustcmd{\ddotbmmathfrakFz}[2][]{\ensuremath{\subp{\ddot{\bm{\mathfrak{F}}}}{}{#2}{}{#1}}}
\newrobustcmd{\brevebmmathfrakFz}[2][]{\ensuremath{\subp{\breve{\bm{\mathfrak{F}}}}{}{#2}{}{#1}}}
\newrobustcmd{\barbmmathfrakFz}[2][]{\ensuremath{\subp{\bar{\bm{\mathfrak{F}}}}{}{#2}{}{#1}}}
\newrobustcmd{\vecbmmathfrakFz}[2][]{\ensuremath{\subp{\vec{\bm{\mathfrak{F}}}}{}{#2}{}{#1}}}
\newrobustcmd{\mathfrakGz}[2][]{\ensuremath{\subp{\mathfrak{G}}{}{#2}{}{#1}}}
\newrobustcmd{\hatmathfrakGz}[2][]{\ensuremath{\subp{\hat{\mathfrak{G}}}{}{#2}{}{#1}}}
\newrobustcmd{\widehatmathfrakGz}[2][]{\ensuremath{\subp{\widehat{\mathfrak{G}}}{}{#2}{}{#1}}}
\newrobustcmd{\checkmathfrakGz}[2][]{\ensuremath{\subp{\check{\mathfrak{G}}}{}{#2}{}{#1}}}
\newrobustcmd{\tildemathfrakGz}[2][]{\ensuremath{\subp{\tilde{\mathfrak{G}}}{}{#2}{}{#1}}}
\newrobustcmd{\widetildemathfrakGz}[2][]{\ensuremath{\subp{\widetilde{\mathfrak{G}}}{}{#2}{}{#1}}}
\newrobustcmd{\acutemathfrakGz}[2][]{\ensuremath{\subp{\acute{\mathfrak{G}}}{}{#2}{}{#1}}}
\newrobustcmd{\gravemathfrakGz}[2][]{\ensuremath{\subp{\grave{\mathfrak{G}}}{}{#2}{}{#1}}}
\newrobustcmd{\dotmathfrakGz}[2][]{\ensuremath{\subp{\dot{\mathfrak{G}}}{}{#2}{}{#1}}}
\newrobustcmd{\ddotmathfrakGz}[2][]{\ensuremath{\subp{\ddot{\mathfrak{G}}}{}{#2}{}{#1}}}
\newrobustcmd{\brevemathfrakGz}[2][]{\ensuremath{\subp{\breve{\mathfrak{G}}}{}{#2}{}{#1}}}
\newrobustcmd{\barmathfrakGz}[2][]{\ensuremath{\subp{\bar{\mathfrak{G}}}{}{#2}{}{#1}}}
\newrobustcmd{\vecmathfrakGz}[2][]{\ensuremath{\subp{\vec{\mathfrak{G}}}{}{#2}{}{#1}}}
\newrobustcmd{\bmmathfrakGz}[2][]{\ensuremath{\subp{\bm{\mathfrak{G}}}{}{#2}{}{#1}}}
\newrobustcmd{\hatbmmathfrakGz}[2][]{\ensuremath{\subp{\hat{\bm{\mathfrak{G}}}}{}{#2}{}{#1}}}
\newrobustcmd{\widehatbmmathfrakGz}[2][]{\ensuremath{\subp{\widehat{\bm{\mathfrak{G}}}}{}{#2}{}{#1}}}
\newrobustcmd{\checkbmmathfrakGz}[2][]{\ensuremath{\subp{\check{\bm{\mathfrak{G}}}}{}{#2}{}{#1}}}
\newrobustcmd{\tildebmmathfrakGz}[2][]{\ensuremath{\subp{\tilde{\bm{\mathfrak{G}}}}{}{#2}{}{#1}}}
\newrobustcmd{\widetildebmmathfrakGz}[2][]{\ensuremath{\subp{\widetilde{\bm{\mathfrak{G}}}}{}{#2}{}{#1}}}
\newrobustcmd{\acutebmmathfrakGz}[2][]{\ensuremath{\subp{\acute{\bm{\mathfrak{G}}}}{}{#2}{}{#1}}}
\newrobustcmd{\gravebmmathfrakGz}[2][]{\ensuremath{\subp{\grave{\bm{\mathfrak{G}}}}{}{#2}{}{#1}}}
\newrobustcmd{\dotbmmathfrakGz}[2][]{\ensuremath{\subp{\dot{\bm{\mathfrak{G}}}}{}{#2}{}{#1}}}
\newrobustcmd{\ddotbmmathfrakGz}[2][]{\ensuremath{\subp{\ddot{\bm{\mathfrak{G}}}}{}{#2}{}{#1}}}
\newrobustcmd{\brevebmmathfrakGz}[2][]{\ensuremath{\subp{\breve{\bm{\mathfrak{G}}}}{}{#2}{}{#1}}}
\newrobustcmd{\barbmmathfrakGz}[2][]{\ensuremath{\subp{\bar{\bm{\mathfrak{G}}}}{}{#2}{}{#1}}}
\newrobustcmd{\vecbmmathfrakGz}[2][]{\ensuremath{\subp{\vec{\bm{\mathfrak{G}}}}{}{#2}{}{#1}}}
\newrobustcmd{\mathfrakHz}[2][]{\ensuremath{\subp{\mathfrak{H}}{}{#2}{}{#1}}}
\newrobustcmd{\hatmathfrakHz}[2][]{\ensuremath{\subp{\hat{\mathfrak{H}}}{}{#2}{}{#1}}}
\newrobustcmd{\widehatmathfrakHz}[2][]{\ensuremath{\subp{\widehat{\mathfrak{H}}}{}{#2}{}{#1}}}
\newrobustcmd{\checkmathfrakHz}[2][]{\ensuremath{\subp{\check{\mathfrak{H}}}{}{#2}{}{#1}}}
\newrobustcmd{\tildemathfrakHz}[2][]{\ensuremath{\subp{\tilde{\mathfrak{H}}}{}{#2}{}{#1}}}
\newrobustcmd{\widetildemathfrakHz}[2][]{\ensuremath{\subp{\widetilde{\mathfrak{H}}}{}{#2}{}{#1}}}
\newrobustcmd{\acutemathfrakHz}[2][]{\ensuremath{\subp{\acute{\mathfrak{H}}}{}{#2}{}{#1}}}
\newrobustcmd{\gravemathfrakHz}[2][]{\ensuremath{\subp{\grave{\mathfrak{H}}}{}{#2}{}{#1}}}
\newrobustcmd{\dotmathfrakHz}[2][]{\ensuremath{\subp{\dot{\mathfrak{H}}}{}{#2}{}{#1}}}
\newrobustcmd{\ddotmathfrakHz}[2][]{\ensuremath{\subp{\ddot{\mathfrak{H}}}{}{#2}{}{#1}}}
\newrobustcmd{\brevemathfrakHz}[2][]{\ensuremath{\subp{\breve{\mathfrak{H}}}{}{#2}{}{#1}}}
\newrobustcmd{\barmathfrakHz}[2][]{\ensuremath{\subp{\bar{\mathfrak{H}}}{}{#2}{}{#1}}}
\newrobustcmd{\vecmathfrakHz}[2][]{\ensuremath{\subp{\vec{\mathfrak{H}}}{}{#2}{}{#1}}}
\newrobustcmd{\bmmathfrakHz}[2][]{\ensuremath{\subp{\bm{\mathfrak{H}}}{}{#2}{}{#1}}}
\newrobustcmd{\hatbmmathfrakHz}[2][]{\ensuremath{\subp{\hat{\bm{\mathfrak{H}}}}{}{#2}{}{#1}}}
\newrobustcmd{\widehatbmmathfrakHz}[2][]{\ensuremath{\subp{\widehat{\bm{\mathfrak{H}}}}{}{#2}{}{#1}}}
\newrobustcmd{\checkbmmathfrakHz}[2][]{\ensuremath{\subp{\check{\bm{\mathfrak{H}}}}{}{#2}{}{#1}}}
\newrobustcmd{\tildebmmathfrakHz}[2][]{\ensuremath{\subp{\tilde{\bm{\mathfrak{H}}}}{}{#2}{}{#1}}}
\newrobustcmd{\widetildebmmathfrakHz}[2][]{\ensuremath{\subp{\widetilde{\bm{\mathfrak{H}}}}{}{#2}{}{#1}}}
\newrobustcmd{\acutebmmathfrakHz}[2][]{\ensuremath{\subp{\acute{\bm{\mathfrak{H}}}}{}{#2}{}{#1}}}
\newrobustcmd{\gravebmmathfrakHz}[2][]{\ensuremath{\subp{\grave{\bm{\mathfrak{H}}}}{}{#2}{}{#1}}}
\newrobustcmd{\dotbmmathfrakHz}[2][]{\ensuremath{\subp{\dot{\bm{\mathfrak{H}}}}{}{#2}{}{#1}}}
\newrobustcmd{\ddotbmmathfrakHz}[2][]{\ensuremath{\subp{\ddot{\bm{\mathfrak{H}}}}{}{#2}{}{#1}}}
\newrobustcmd{\brevebmmathfrakHz}[2][]{\ensuremath{\subp{\breve{\bm{\mathfrak{H}}}}{}{#2}{}{#1}}}
\newrobustcmd{\barbmmathfrakHz}[2][]{\ensuremath{\subp{\bar{\bm{\mathfrak{H}}}}{}{#2}{}{#1}}}
\newrobustcmd{\vecbmmathfrakHz}[2][]{\ensuremath{\subp{\vec{\bm{\mathfrak{H}}}}{}{#2}{}{#1}}}
\newrobustcmd{\mathfrakIz}[2][]{\ensuremath{\subp{\mathfrak{I}}{}{#2}{}{#1}}}
\newrobustcmd{\hatmathfrakIz}[2][]{\ensuremath{\subp{\hat{\mathfrak{I}}}{}{#2}{}{#1}}}
\newrobustcmd{\widehatmathfrakIz}[2][]{\ensuremath{\subp{\widehat{\mathfrak{I}}}{}{#2}{}{#1}}}
\newrobustcmd{\checkmathfrakIz}[2][]{\ensuremath{\subp{\check{\mathfrak{I}}}{}{#2}{}{#1}}}
\newrobustcmd{\tildemathfrakIz}[2][]{\ensuremath{\subp{\tilde{\mathfrak{I}}}{}{#2}{}{#1}}}
\newrobustcmd{\widetildemathfrakIz}[2][]{\ensuremath{\subp{\widetilde{\mathfrak{I}}}{}{#2}{}{#1}}}
\newrobustcmd{\acutemathfrakIz}[2][]{\ensuremath{\subp{\acute{\mathfrak{I}}}{}{#2}{}{#1}}}
\newrobustcmd{\gravemathfrakIz}[2][]{\ensuremath{\subp{\grave{\mathfrak{I}}}{}{#2}{}{#1}}}
\newrobustcmd{\dotmathfrakIz}[2][]{\ensuremath{\subp{\dot{\mathfrak{I}}}{}{#2}{}{#1}}}
\newrobustcmd{\ddotmathfrakIz}[2][]{\ensuremath{\subp{\ddot{\mathfrak{I}}}{}{#2}{}{#1}}}
\newrobustcmd{\brevemathfrakIz}[2][]{\ensuremath{\subp{\breve{\mathfrak{I}}}{}{#2}{}{#1}}}
\newrobustcmd{\barmathfrakIz}[2][]{\ensuremath{\subp{\bar{\mathfrak{I}}}{}{#2}{}{#1}}}
\newrobustcmd{\vecmathfrakIz}[2][]{\ensuremath{\subp{\vec{\mathfrak{I}}}{}{#2}{}{#1}}}
\newrobustcmd{\bmmathfrakIz}[2][]{\ensuremath{\subp{\bm{\mathfrak{I}}}{}{#2}{}{#1}}}
\newrobustcmd{\hatbmmathfrakIz}[2][]{\ensuremath{\subp{\hat{\bm{\mathfrak{I}}}}{}{#2}{}{#1}}}
\newrobustcmd{\widehatbmmathfrakIz}[2][]{\ensuremath{\subp{\widehat{\bm{\mathfrak{I}}}}{}{#2}{}{#1}}}
\newrobustcmd{\checkbmmathfrakIz}[2][]{\ensuremath{\subp{\check{\bm{\mathfrak{I}}}}{}{#2}{}{#1}}}
\newrobustcmd{\tildebmmathfrakIz}[2][]{\ensuremath{\subp{\tilde{\bm{\mathfrak{I}}}}{}{#2}{}{#1}}}
\newrobustcmd{\widetildebmmathfrakIz}[2][]{\ensuremath{\subp{\widetilde{\bm{\mathfrak{I}}}}{}{#2}{}{#1}}}
\newrobustcmd{\acutebmmathfrakIz}[2][]{\ensuremath{\subp{\acute{\bm{\mathfrak{I}}}}{}{#2}{}{#1}}}
\newrobustcmd{\gravebmmathfrakIz}[2][]{\ensuremath{\subp{\grave{\bm{\mathfrak{I}}}}{}{#2}{}{#1}}}
\newrobustcmd{\dotbmmathfrakIz}[2][]{\ensuremath{\subp{\dot{\bm{\mathfrak{I}}}}{}{#2}{}{#1}}}
\newrobustcmd{\ddotbmmathfrakIz}[2][]{\ensuremath{\subp{\ddot{\bm{\mathfrak{I}}}}{}{#2}{}{#1}}}
\newrobustcmd{\brevebmmathfrakIz}[2][]{\ensuremath{\subp{\breve{\bm{\mathfrak{I}}}}{}{#2}{}{#1}}}
\newrobustcmd{\barbmmathfrakIz}[2][]{\ensuremath{\subp{\bar{\bm{\mathfrak{I}}}}{}{#2}{}{#1}}}
\newrobustcmd{\vecbmmathfrakIz}[2][]{\ensuremath{\subp{\vec{\bm{\mathfrak{I}}}}{}{#2}{}{#1}}}
\newrobustcmd{\mathfrakJz}[2][]{\ensuremath{\subp{\mathfrak{J}}{}{#2}{}{#1}}}
\newrobustcmd{\hatmathfrakJz}[2][]{\ensuremath{\subp{\hat{\mathfrak{J}}}{}{#2}{}{#1}}}
\newrobustcmd{\widehatmathfrakJz}[2][]{\ensuremath{\subp{\widehat{\mathfrak{J}}}{}{#2}{}{#1}}}
\newrobustcmd{\checkmathfrakJz}[2][]{\ensuremath{\subp{\check{\mathfrak{J}}}{}{#2}{}{#1}}}
\newrobustcmd{\tildemathfrakJz}[2][]{\ensuremath{\subp{\tilde{\mathfrak{J}}}{}{#2}{}{#1}}}
\newrobustcmd{\widetildemathfrakJz}[2][]{\ensuremath{\subp{\widetilde{\mathfrak{J}}}{}{#2}{}{#1}}}
\newrobustcmd{\acutemathfrakJz}[2][]{\ensuremath{\subp{\acute{\mathfrak{J}}}{}{#2}{}{#1}}}
\newrobustcmd{\gravemathfrakJz}[2][]{\ensuremath{\subp{\grave{\mathfrak{J}}}{}{#2}{}{#1}}}
\newrobustcmd{\dotmathfrakJz}[2][]{\ensuremath{\subp{\dot{\mathfrak{J}}}{}{#2}{}{#1}}}
\newrobustcmd{\ddotmathfrakJz}[2][]{\ensuremath{\subp{\ddot{\mathfrak{J}}}{}{#2}{}{#1}}}
\newrobustcmd{\brevemathfrakJz}[2][]{\ensuremath{\subp{\breve{\mathfrak{J}}}{}{#2}{}{#1}}}
\newrobustcmd{\barmathfrakJz}[2][]{\ensuremath{\subp{\bar{\mathfrak{J}}}{}{#2}{}{#1}}}
\newrobustcmd{\vecmathfrakJz}[2][]{\ensuremath{\subp{\vec{\mathfrak{J}}}{}{#2}{}{#1}}}
\newrobustcmd{\bmmathfrakJz}[2][]{\ensuremath{\subp{\bm{\mathfrak{J}}}{}{#2}{}{#1}}}
\newrobustcmd{\hatbmmathfrakJz}[2][]{\ensuremath{\subp{\hat{\bm{\mathfrak{J}}}}{}{#2}{}{#1}}}
\newrobustcmd{\widehatbmmathfrakJz}[2][]{\ensuremath{\subp{\widehat{\bm{\mathfrak{J}}}}{}{#2}{}{#1}}}
\newrobustcmd{\checkbmmathfrakJz}[2][]{\ensuremath{\subp{\check{\bm{\mathfrak{J}}}}{}{#2}{}{#1}}}
\newrobustcmd{\tildebmmathfrakJz}[2][]{\ensuremath{\subp{\tilde{\bm{\mathfrak{J}}}}{}{#2}{}{#1}}}
\newrobustcmd{\widetildebmmathfrakJz}[2][]{\ensuremath{\subp{\widetilde{\bm{\mathfrak{J}}}}{}{#2}{}{#1}}}
\newrobustcmd{\acutebmmathfrakJz}[2][]{\ensuremath{\subp{\acute{\bm{\mathfrak{J}}}}{}{#2}{}{#1}}}
\newrobustcmd{\gravebmmathfrakJz}[2][]{\ensuremath{\subp{\grave{\bm{\mathfrak{J}}}}{}{#2}{}{#1}}}
\newrobustcmd{\dotbmmathfrakJz}[2][]{\ensuremath{\subp{\dot{\bm{\mathfrak{J}}}}{}{#2}{}{#1}}}
\newrobustcmd{\ddotbmmathfrakJz}[2][]{\ensuremath{\subp{\ddot{\bm{\mathfrak{J}}}}{}{#2}{}{#1}}}
\newrobustcmd{\brevebmmathfrakJz}[2][]{\ensuremath{\subp{\breve{\bm{\mathfrak{J}}}}{}{#2}{}{#1}}}
\newrobustcmd{\barbmmathfrakJz}[2][]{\ensuremath{\subp{\bar{\bm{\mathfrak{J}}}}{}{#2}{}{#1}}}
\newrobustcmd{\vecbmmathfrakJz}[2][]{\ensuremath{\subp{\vec{\bm{\mathfrak{J}}}}{}{#2}{}{#1}}}
\newrobustcmd{\mathfrakKz}[2][]{\ensuremath{\subp{\mathfrak{K}}{}{#2}{}{#1}}}
\newrobustcmd{\hatmathfrakKz}[2][]{\ensuremath{\subp{\hat{\mathfrak{K}}}{}{#2}{}{#1}}}
\newrobustcmd{\widehatmathfrakKz}[2][]{\ensuremath{\subp{\widehat{\mathfrak{K}}}{}{#2}{}{#1}}}
\newrobustcmd{\checkmathfrakKz}[2][]{\ensuremath{\subp{\check{\mathfrak{K}}}{}{#2}{}{#1}}}
\newrobustcmd{\tildemathfrakKz}[2][]{\ensuremath{\subp{\tilde{\mathfrak{K}}}{}{#2}{}{#1}}}
\newrobustcmd{\widetildemathfrakKz}[2][]{\ensuremath{\subp{\widetilde{\mathfrak{K}}}{}{#2}{}{#1}}}
\newrobustcmd{\acutemathfrakKz}[2][]{\ensuremath{\subp{\acute{\mathfrak{K}}}{}{#2}{}{#1}}}
\newrobustcmd{\gravemathfrakKz}[2][]{\ensuremath{\subp{\grave{\mathfrak{K}}}{}{#2}{}{#1}}}
\newrobustcmd{\dotmathfrakKz}[2][]{\ensuremath{\subp{\dot{\mathfrak{K}}}{}{#2}{}{#1}}}
\newrobustcmd{\ddotmathfrakKz}[2][]{\ensuremath{\subp{\ddot{\mathfrak{K}}}{}{#2}{}{#1}}}
\newrobustcmd{\brevemathfrakKz}[2][]{\ensuremath{\subp{\breve{\mathfrak{K}}}{}{#2}{}{#1}}}
\newrobustcmd{\barmathfrakKz}[2][]{\ensuremath{\subp{\bar{\mathfrak{K}}}{}{#2}{}{#1}}}
\newrobustcmd{\vecmathfrakKz}[2][]{\ensuremath{\subp{\vec{\mathfrak{K}}}{}{#2}{}{#1}}}
\newrobustcmd{\bmmathfrakKz}[2][]{\ensuremath{\subp{\bm{\mathfrak{K}}}{}{#2}{}{#1}}}
\newrobustcmd{\hatbmmathfrakKz}[2][]{\ensuremath{\subp{\hat{\bm{\mathfrak{K}}}}{}{#2}{}{#1}}}
\newrobustcmd{\widehatbmmathfrakKz}[2][]{\ensuremath{\subp{\widehat{\bm{\mathfrak{K}}}}{}{#2}{}{#1}}}
\newrobustcmd{\checkbmmathfrakKz}[2][]{\ensuremath{\subp{\check{\bm{\mathfrak{K}}}}{}{#2}{}{#1}}}
\newrobustcmd{\tildebmmathfrakKz}[2][]{\ensuremath{\subp{\tilde{\bm{\mathfrak{K}}}}{}{#2}{}{#1}}}
\newrobustcmd{\widetildebmmathfrakKz}[2][]{\ensuremath{\subp{\widetilde{\bm{\mathfrak{K}}}}{}{#2}{}{#1}}}
\newrobustcmd{\acutebmmathfrakKz}[2][]{\ensuremath{\subp{\acute{\bm{\mathfrak{K}}}}{}{#2}{}{#1}}}
\newrobustcmd{\gravebmmathfrakKz}[2][]{\ensuremath{\subp{\grave{\bm{\mathfrak{K}}}}{}{#2}{}{#1}}}
\newrobustcmd{\dotbmmathfrakKz}[2][]{\ensuremath{\subp{\dot{\bm{\mathfrak{K}}}}{}{#2}{}{#1}}}
\newrobustcmd{\ddotbmmathfrakKz}[2][]{\ensuremath{\subp{\ddot{\bm{\mathfrak{K}}}}{}{#2}{}{#1}}}
\newrobustcmd{\brevebmmathfrakKz}[2][]{\ensuremath{\subp{\breve{\bm{\mathfrak{K}}}}{}{#2}{}{#1}}}
\newrobustcmd{\barbmmathfrakKz}[2][]{\ensuremath{\subp{\bar{\bm{\mathfrak{K}}}}{}{#2}{}{#1}}}
\newrobustcmd{\vecbmmathfrakKz}[2][]{\ensuremath{\subp{\vec{\bm{\mathfrak{K}}}}{}{#2}{}{#1}}}
\newrobustcmd{\mathfrakLz}[2][]{\ensuremath{\subp{\mathfrak{L}}{}{#2}{}{#1}}}
\newrobustcmd{\hatmathfrakLz}[2][]{\ensuremath{\subp{\hat{\mathfrak{L}}}{}{#2}{}{#1}}}
\newrobustcmd{\widehatmathfrakLz}[2][]{\ensuremath{\subp{\widehat{\mathfrak{L}}}{}{#2}{}{#1}}}
\newrobustcmd{\checkmathfrakLz}[2][]{\ensuremath{\subp{\check{\mathfrak{L}}}{}{#2}{}{#1}}}
\newrobustcmd{\tildemathfrakLz}[2][]{\ensuremath{\subp{\tilde{\mathfrak{L}}}{}{#2}{}{#1}}}
\newrobustcmd{\widetildemathfrakLz}[2][]{\ensuremath{\subp{\widetilde{\mathfrak{L}}}{}{#2}{}{#1}}}
\newrobustcmd{\acutemathfrakLz}[2][]{\ensuremath{\subp{\acute{\mathfrak{L}}}{}{#2}{}{#1}}}
\newrobustcmd{\gravemathfrakLz}[2][]{\ensuremath{\subp{\grave{\mathfrak{L}}}{}{#2}{}{#1}}}
\newrobustcmd{\dotmathfrakLz}[2][]{\ensuremath{\subp{\dot{\mathfrak{L}}}{}{#2}{}{#1}}}
\newrobustcmd{\ddotmathfrakLz}[2][]{\ensuremath{\subp{\ddot{\mathfrak{L}}}{}{#2}{}{#1}}}
\newrobustcmd{\brevemathfrakLz}[2][]{\ensuremath{\subp{\breve{\mathfrak{L}}}{}{#2}{}{#1}}}
\newrobustcmd{\barmathfrakLz}[2][]{\ensuremath{\subp{\bar{\mathfrak{L}}}{}{#2}{}{#1}}}
\newrobustcmd{\vecmathfrakLz}[2][]{\ensuremath{\subp{\vec{\mathfrak{L}}}{}{#2}{}{#1}}}
\newrobustcmd{\bmmathfrakLz}[2][]{\ensuremath{\subp{\bm{\mathfrak{L}}}{}{#2}{}{#1}}}
\newrobustcmd{\hatbmmathfrakLz}[2][]{\ensuremath{\subp{\hat{\bm{\mathfrak{L}}}}{}{#2}{}{#1}}}
\newrobustcmd{\widehatbmmathfrakLz}[2][]{\ensuremath{\subp{\widehat{\bm{\mathfrak{L}}}}{}{#2}{}{#1}}}
\newrobustcmd{\checkbmmathfrakLz}[2][]{\ensuremath{\subp{\check{\bm{\mathfrak{L}}}}{}{#2}{}{#1}}}
\newrobustcmd{\tildebmmathfrakLz}[2][]{\ensuremath{\subp{\tilde{\bm{\mathfrak{L}}}}{}{#2}{}{#1}}}
\newrobustcmd{\widetildebmmathfrakLz}[2][]{\ensuremath{\subp{\widetilde{\bm{\mathfrak{L}}}}{}{#2}{}{#1}}}
\newrobustcmd{\acutebmmathfrakLz}[2][]{\ensuremath{\subp{\acute{\bm{\mathfrak{L}}}}{}{#2}{}{#1}}}
\newrobustcmd{\gravebmmathfrakLz}[2][]{\ensuremath{\subp{\grave{\bm{\mathfrak{L}}}}{}{#2}{}{#1}}}
\newrobustcmd{\dotbmmathfrakLz}[2][]{\ensuremath{\subp{\dot{\bm{\mathfrak{L}}}}{}{#2}{}{#1}}}
\newrobustcmd{\ddotbmmathfrakLz}[2][]{\ensuremath{\subp{\ddot{\bm{\mathfrak{L}}}}{}{#2}{}{#1}}}
\newrobustcmd{\brevebmmathfrakLz}[2][]{\ensuremath{\subp{\breve{\bm{\mathfrak{L}}}}{}{#2}{}{#1}}}
\newrobustcmd{\barbmmathfrakLz}[2][]{\ensuremath{\subp{\bar{\bm{\mathfrak{L}}}}{}{#2}{}{#1}}}
\newrobustcmd{\vecbmmathfrakLz}[2][]{\ensuremath{\subp{\vec{\bm{\mathfrak{L}}}}{}{#2}{}{#1}}}
\newrobustcmd{\mathfrakMz}[2][]{\ensuremath{\subp{\mathfrak{M}}{}{#2}{}{#1}}}
\newrobustcmd{\hatmathfrakMz}[2][]{\ensuremath{\subp{\hat{\mathfrak{M}}}{}{#2}{}{#1}}}
\newrobustcmd{\widehatmathfrakMz}[2][]{\ensuremath{\subp{\widehat{\mathfrak{M}}}{}{#2}{}{#1}}}
\newrobustcmd{\checkmathfrakMz}[2][]{\ensuremath{\subp{\check{\mathfrak{M}}}{}{#2}{}{#1}}}
\newrobustcmd{\tildemathfrakMz}[2][]{\ensuremath{\subp{\tilde{\mathfrak{M}}}{}{#2}{}{#1}}}
\newrobustcmd{\widetildemathfrakMz}[2][]{\ensuremath{\subp{\widetilde{\mathfrak{M}}}{}{#2}{}{#1}}}
\newrobustcmd{\acutemathfrakMz}[2][]{\ensuremath{\subp{\acute{\mathfrak{M}}}{}{#2}{}{#1}}}
\newrobustcmd{\gravemathfrakMz}[2][]{\ensuremath{\subp{\grave{\mathfrak{M}}}{}{#2}{}{#1}}}
\newrobustcmd{\dotmathfrakMz}[2][]{\ensuremath{\subp{\dot{\mathfrak{M}}}{}{#2}{}{#1}}}
\newrobustcmd{\ddotmathfrakMz}[2][]{\ensuremath{\subp{\ddot{\mathfrak{M}}}{}{#2}{}{#1}}}
\newrobustcmd{\brevemathfrakMz}[2][]{\ensuremath{\subp{\breve{\mathfrak{M}}}{}{#2}{}{#1}}}
\newrobustcmd{\barmathfrakMz}[2][]{\ensuremath{\subp{\bar{\mathfrak{M}}}{}{#2}{}{#1}}}
\newrobustcmd{\vecmathfrakMz}[2][]{\ensuremath{\subp{\vec{\mathfrak{M}}}{}{#2}{}{#1}}}
\newrobustcmd{\bmmathfrakMz}[2][]{\ensuremath{\subp{\bm{\mathfrak{M}}}{}{#2}{}{#1}}}
\newrobustcmd{\hatbmmathfrakMz}[2][]{\ensuremath{\subp{\hat{\bm{\mathfrak{M}}}}{}{#2}{}{#1}}}
\newrobustcmd{\widehatbmmathfrakMz}[2][]{\ensuremath{\subp{\widehat{\bm{\mathfrak{M}}}}{}{#2}{}{#1}}}
\newrobustcmd{\checkbmmathfrakMz}[2][]{\ensuremath{\subp{\check{\bm{\mathfrak{M}}}}{}{#2}{}{#1}}}
\newrobustcmd{\tildebmmathfrakMz}[2][]{\ensuremath{\subp{\tilde{\bm{\mathfrak{M}}}}{}{#2}{}{#1}}}
\newrobustcmd{\widetildebmmathfrakMz}[2][]{\ensuremath{\subp{\widetilde{\bm{\mathfrak{M}}}}{}{#2}{}{#1}}}
\newrobustcmd{\acutebmmathfrakMz}[2][]{\ensuremath{\subp{\acute{\bm{\mathfrak{M}}}}{}{#2}{}{#1}}}
\newrobustcmd{\gravebmmathfrakMz}[2][]{\ensuremath{\subp{\grave{\bm{\mathfrak{M}}}}{}{#2}{}{#1}}}
\newrobustcmd{\dotbmmathfrakMz}[2][]{\ensuremath{\subp{\dot{\bm{\mathfrak{M}}}}{}{#2}{}{#1}}}
\newrobustcmd{\ddotbmmathfrakMz}[2][]{\ensuremath{\subp{\ddot{\bm{\mathfrak{M}}}}{}{#2}{}{#1}}}
\newrobustcmd{\brevebmmathfrakMz}[2][]{\ensuremath{\subp{\breve{\bm{\mathfrak{M}}}}{}{#2}{}{#1}}}
\newrobustcmd{\barbmmathfrakMz}[2][]{\ensuremath{\subp{\bar{\bm{\mathfrak{M}}}}{}{#2}{}{#1}}}
\newrobustcmd{\vecbmmathfrakMz}[2][]{\ensuremath{\subp{\vec{\bm{\mathfrak{M}}}}{}{#2}{}{#1}}}
\newrobustcmd{\mathfrakNz}[2][]{\ensuremath{\subp{\mathfrak{N}}{}{#2}{}{#1}}}
\newrobustcmd{\hatmathfrakNz}[2][]{\ensuremath{\subp{\hat{\mathfrak{N}}}{}{#2}{}{#1}}}
\newrobustcmd{\widehatmathfrakNz}[2][]{\ensuremath{\subp{\widehat{\mathfrak{N}}}{}{#2}{}{#1}}}
\newrobustcmd{\checkmathfrakNz}[2][]{\ensuremath{\subp{\check{\mathfrak{N}}}{}{#2}{}{#1}}}
\newrobustcmd{\tildemathfrakNz}[2][]{\ensuremath{\subp{\tilde{\mathfrak{N}}}{}{#2}{}{#1}}}
\newrobustcmd{\widetildemathfrakNz}[2][]{\ensuremath{\subp{\widetilde{\mathfrak{N}}}{}{#2}{}{#1}}}
\newrobustcmd{\acutemathfrakNz}[2][]{\ensuremath{\subp{\acute{\mathfrak{N}}}{}{#2}{}{#1}}}
\newrobustcmd{\gravemathfrakNz}[2][]{\ensuremath{\subp{\grave{\mathfrak{N}}}{}{#2}{}{#1}}}
\newrobustcmd{\dotmathfrakNz}[2][]{\ensuremath{\subp{\dot{\mathfrak{N}}}{}{#2}{}{#1}}}
\newrobustcmd{\ddotmathfrakNz}[2][]{\ensuremath{\subp{\ddot{\mathfrak{N}}}{}{#2}{}{#1}}}
\newrobustcmd{\brevemathfrakNz}[2][]{\ensuremath{\subp{\breve{\mathfrak{N}}}{}{#2}{}{#1}}}
\newrobustcmd{\barmathfrakNz}[2][]{\ensuremath{\subp{\bar{\mathfrak{N}}}{}{#2}{}{#1}}}
\newrobustcmd{\vecmathfrakNz}[2][]{\ensuremath{\subp{\vec{\mathfrak{N}}}{}{#2}{}{#1}}}
\newrobustcmd{\bmmathfrakNz}[2][]{\ensuremath{\subp{\bm{\mathfrak{N}}}{}{#2}{}{#1}}}
\newrobustcmd{\hatbmmathfrakNz}[2][]{\ensuremath{\subp{\hat{\bm{\mathfrak{N}}}}{}{#2}{}{#1}}}
\newrobustcmd{\widehatbmmathfrakNz}[2][]{\ensuremath{\subp{\widehat{\bm{\mathfrak{N}}}}{}{#2}{}{#1}}}
\newrobustcmd{\checkbmmathfrakNz}[2][]{\ensuremath{\subp{\check{\bm{\mathfrak{N}}}}{}{#2}{}{#1}}}
\newrobustcmd{\tildebmmathfrakNz}[2][]{\ensuremath{\subp{\tilde{\bm{\mathfrak{N}}}}{}{#2}{}{#1}}}
\newrobustcmd{\widetildebmmathfrakNz}[2][]{\ensuremath{\subp{\widetilde{\bm{\mathfrak{N}}}}{}{#2}{}{#1}}}
\newrobustcmd{\acutebmmathfrakNz}[2][]{\ensuremath{\subp{\acute{\bm{\mathfrak{N}}}}{}{#2}{}{#1}}}
\newrobustcmd{\gravebmmathfrakNz}[2][]{\ensuremath{\subp{\grave{\bm{\mathfrak{N}}}}{}{#2}{}{#1}}}
\newrobustcmd{\dotbmmathfrakNz}[2][]{\ensuremath{\subp{\dot{\bm{\mathfrak{N}}}}{}{#2}{}{#1}}}
\newrobustcmd{\ddotbmmathfrakNz}[2][]{\ensuremath{\subp{\ddot{\bm{\mathfrak{N}}}}{}{#2}{}{#1}}}
\newrobustcmd{\brevebmmathfrakNz}[2][]{\ensuremath{\subp{\breve{\bm{\mathfrak{N}}}}{}{#2}{}{#1}}}
\newrobustcmd{\barbmmathfrakNz}[2][]{\ensuremath{\subp{\bar{\bm{\mathfrak{N}}}}{}{#2}{}{#1}}}
\newrobustcmd{\vecbmmathfrakNz}[2][]{\ensuremath{\subp{\vec{\bm{\mathfrak{N}}}}{}{#2}{}{#1}}}
\newrobustcmd{\mathfrakOz}[2][]{\ensuremath{\subp{\mathfrak{O}}{}{#2}{}{#1}}}
\newrobustcmd{\hatmathfrakOz}[2][]{\ensuremath{\subp{\hat{\mathfrak{O}}}{}{#2}{}{#1}}}
\newrobustcmd{\widehatmathfrakOz}[2][]{\ensuremath{\subp{\widehat{\mathfrak{O}}}{}{#2}{}{#1}}}
\newrobustcmd{\checkmathfrakOz}[2][]{\ensuremath{\subp{\check{\mathfrak{O}}}{}{#2}{}{#1}}}
\newrobustcmd{\tildemathfrakOz}[2][]{\ensuremath{\subp{\tilde{\mathfrak{O}}}{}{#2}{}{#1}}}
\newrobustcmd{\widetildemathfrakOz}[2][]{\ensuremath{\subp{\widetilde{\mathfrak{O}}}{}{#2}{}{#1}}}
\newrobustcmd{\acutemathfrakOz}[2][]{\ensuremath{\subp{\acute{\mathfrak{O}}}{}{#2}{}{#1}}}
\newrobustcmd{\gravemathfrakOz}[2][]{\ensuremath{\subp{\grave{\mathfrak{O}}}{}{#2}{}{#1}}}
\newrobustcmd{\dotmathfrakOz}[2][]{\ensuremath{\subp{\dot{\mathfrak{O}}}{}{#2}{}{#1}}}
\newrobustcmd{\ddotmathfrakOz}[2][]{\ensuremath{\subp{\ddot{\mathfrak{O}}}{}{#2}{}{#1}}}
\newrobustcmd{\brevemathfrakOz}[2][]{\ensuremath{\subp{\breve{\mathfrak{O}}}{}{#2}{}{#1}}}
\newrobustcmd{\barmathfrakOz}[2][]{\ensuremath{\subp{\bar{\mathfrak{O}}}{}{#2}{}{#1}}}
\newrobustcmd{\vecmathfrakOz}[2][]{\ensuremath{\subp{\vec{\mathfrak{O}}}{}{#2}{}{#1}}}
\newrobustcmd{\bmmathfrakOz}[2][]{\ensuremath{\subp{\bm{\mathfrak{O}}}{}{#2}{}{#1}}}
\newrobustcmd{\hatbmmathfrakOz}[2][]{\ensuremath{\subp{\hat{\bm{\mathfrak{O}}}}{}{#2}{}{#1}}}
\newrobustcmd{\widehatbmmathfrakOz}[2][]{\ensuremath{\subp{\widehat{\bm{\mathfrak{O}}}}{}{#2}{}{#1}}}
\newrobustcmd{\checkbmmathfrakOz}[2][]{\ensuremath{\subp{\check{\bm{\mathfrak{O}}}}{}{#2}{}{#1}}}
\newrobustcmd{\tildebmmathfrakOz}[2][]{\ensuremath{\subp{\tilde{\bm{\mathfrak{O}}}}{}{#2}{}{#1}}}
\newrobustcmd{\widetildebmmathfrakOz}[2][]{\ensuremath{\subp{\widetilde{\bm{\mathfrak{O}}}}{}{#2}{}{#1}}}
\newrobustcmd{\acutebmmathfrakOz}[2][]{\ensuremath{\subp{\acute{\bm{\mathfrak{O}}}}{}{#2}{}{#1}}}
\newrobustcmd{\gravebmmathfrakOz}[2][]{\ensuremath{\subp{\grave{\bm{\mathfrak{O}}}}{}{#2}{}{#1}}}
\newrobustcmd{\dotbmmathfrakOz}[2][]{\ensuremath{\subp{\dot{\bm{\mathfrak{O}}}}{}{#2}{}{#1}}}
\newrobustcmd{\ddotbmmathfrakOz}[2][]{\ensuremath{\subp{\ddot{\bm{\mathfrak{O}}}}{}{#2}{}{#1}}}
\newrobustcmd{\brevebmmathfrakOz}[2][]{\ensuremath{\subp{\breve{\bm{\mathfrak{O}}}}{}{#2}{}{#1}}}
\newrobustcmd{\barbmmathfrakOz}[2][]{\ensuremath{\subp{\bar{\bm{\mathfrak{O}}}}{}{#2}{}{#1}}}
\newrobustcmd{\vecbmmathfrakOz}[2][]{\ensuremath{\subp{\vec{\bm{\mathfrak{O}}}}{}{#2}{}{#1}}}
\newrobustcmd{\mathfrakPz}[2][]{\ensuremath{\subp{\mathfrak{P}}{}{#2}{}{#1}}}
\newrobustcmd{\hatmathfrakPz}[2][]{\ensuremath{\subp{\hat{\mathfrak{P}}}{}{#2}{}{#1}}}
\newrobustcmd{\widehatmathfrakPz}[2][]{\ensuremath{\subp{\widehat{\mathfrak{P}}}{}{#2}{}{#1}}}
\newrobustcmd{\checkmathfrakPz}[2][]{\ensuremath{\subp{\check{\mathfrak{P}}}{}{#2}{}{#1}}}
\newrobustcmd{\tildemathfrakPz}[2][]{\ensuremath{\subp{\tilde{\mathfrak{P}}}{}{#2}{}{#1}}}
\newrobustcmd{\widetildemathfrakPz}[2][]{\ensuremath{\subp{\widetilde{\mathfrak{P}}}{}{#2}{}{#1}}}
\newrobustcmd{\acutemathfrakPz}[2][]{\ensuremath{\subp{\acute{\mathfrak{P}}}{}{#2}{}{#1}}}
\newrobustcmd{\gravemathfrakPz}[2][]{\ensuremath{\subp{\grave{\mathfrak{P}}}{}{#2}{}{#1}}}
\newrobustcmd{\dotmathfrakPz}[2][]{\ensuremath{\subp{\dot{\mathfrak{P}}}{}{#2}{}{#1}}}
\newrobustcmd{\ddotmathfrakPz}[2][]{\ensuremath{\subp{\ddot{\mathfrak{P}}}{}{#2}{}{#1}}}
\newrobustcmd{\brevemathfrakPz}[2][]{\ensuremath{\subp{\breve{\mathfrak{P}}}{}{#2}{}{#1}}}
\newrobustcmd{\barmathfrakPz}[2][]{\ensuremath{\subp{\bar{\mathfrak{P}}}{}{#2}{}{#1}}}
\newrobustcmd{\vecmathfrakPz}[2][]{\ensuremath{\subp{\vec{\mathfrak{P}}}{}{#2}{}{#1}}}
\newrobustcmd{\bmmathfrakPz}[2][]{\ensuremath{\subp{\bm{\mathfrak{P}}}{}{#2}{}{#1}}}
\newrobustcmd{\hatbmmathfrakPz}[2][]{\ensuremath{\subp{\hat{\bm{\mathfrak{P}}}}{}{#2}{}{#1}}}
\newrobustcmd{\widehatbmmathfrakPz}[2][]{\ensuremath{\subp{\widehat{\bm{\mathfrak{P}}}}{}{#2}{}{#1}}}
\newrobustcmd{\checkbmmathfrakPz}[2][]{\ensuremath{\subp{\check{\bm{\mathfrak{P}}}}{}{#2}{}{#1}}}
\newrobustcmd{\tildebmmathfrakPz}[2][]{\ensuremath{\subp{\tilde{\bm{\mathfrak{P}}}}{}{#2}{}{#1}}}
\newrobustcmd{\widetildebmmathfrakPz}[2][]{\ensuremath{\subp{\widetilde{\bm{\mathfrak{P}}}}{}{#2}{}{#1}}}
\newrobustcmd{\acutebmmathfrakPz}[2][]{\ensuremath{\subp{\acute{\bm{\mathfrak{P}}}}{}{#2}{}{#1}}}
\newrobustcmd{\gravebmmathfrakPz}[2][]{\ensuremath{\subp{\grave{\bm{\mathfrak{P}}}}{}{#2}{}{#1}}}
\newrobustcmd{\dotbmmathfrakPz}[2][]{\ensuremath{\subp{\dot{\bm{\mathfrak{P}}}}{}{#2}{}{#1}}}
\newrobustcmd{\ddotbmmathfrakPz}[2][]{\ensuremath{\subp{\ddot{\bm{\mathfrak{P}}}}{}{#2}{}{#1}}}
\newrobustcmd{\brevebmmathfrakPz}[2][]{\ensuremath{\subp{\breve{\bm{\mathfrak{P}}}}{}{#2}{}{#1}}}
\newrobustcmd{\barbmmathfrakPz}[2][]{\ensuremath{\subp{\bar{\bm{\mathfrak{P}}}}{}{#2}{}{#1}}}
\newrobustcmd{\vecbmmathfrakPz}[2][]{\ensuremath{\subp{\vec{\bm{\mathfrak{P}}}}{}{#2}{}{#1}}}
\newrobustcmd{\mathfrakQz}[2][]{\ensuremath{\subp{\mathfrak{Q}}{}{#2}{}{#1}}}
\newrobustcmd{\hatmathfrakQz}[2][]{\ensuremath{\subp{\hat{\mathfrak{Q}}}{}{#2}{}{#1}}}
\newrobustcmd{\widehatmathfrakQz}[2][]{\ensuremath{\subp{\widehat{\mathfrak{Q}}}{}{#2}{}{#1}}}
\newrobustcmd{\checkmathfrakQz}[2][]{\ensuremath{\subp{\check{\mathfrak{Q}}}{}{#2}{}{#1}}}
\newrobustcmd{\tildemathfrakQz}[2][]{\ensuremath{\subp{\tilde{\mathfrak{Q}}}{}{#2}{}{#1}}}
\newrobustcmd{\widetildemathfrakQz}[2][]{\ensuremath{\subp{\widetilde{\mathfrak{Q}}}{}{#2}{}{#1}}}
\newrobustcmd{\acutemathfrakQz}[2][]{\ensuremath{\subp{\acute{\mathfrak{Q}}}{}{#2}{}{#1}}}
\newrobustcmd{\gravemathfrakQz}[2][]{\ensuremath{\subp{\grave{\mathfrak{Q}}}{}{#2}{}{#1}}}
\newrobustcmd{\dotmathfrakQz}[2][]{\ensuremath{\subp{\dot{\mathfrak{Q}}}{}{#2}{}{#1}}}
\newrobustcmd{\ddotmathfrakQz}[2][]{\ensuremath{\subp{\ddot{\mathfrak{Q}}}{}{#2}{}{#1}}}
\newrobustcmd{\brevemathfrakQz}[2][]{\ensuremath{\subp{\breve{\mathfrak{Q}}}{}{#2}{}{#1}}}
\newrobustcmd{\barmathfrakQz}[2][]{\ensuremath{\subp{\bar{\mathfrak{Q}}}{}{#2}{}{#1}}}
\newrobustcmd{\vecmathfrakQz}[2][]{\ensuremath{\subp{\vec{\mathfrak{Q}}}{}{#2}{}{#1}}}
\newrobustcmd{\bmmathfrakQz}[2][]{\ensuremath{\subp{\bm{\mathfrak{Q}}}{}{#2}{}{#1}}}
\newrobustcmd{\hatbmmathfrakQz}[2][]{\ensuremath{\subp{\hat{\bm{\mathfrak{Q}}}}{}{#2}{}{#1}}}
\newrobustcmd{\widehatbmmathfrakQz}[2][]{\ensuremath{\subp{\widehat{\bm{\mathfrak{Q}}}}{}{#2}{}{#1}}}
\newrobustcmd{\checkbmmathfrakQz}[2][]{\ensuremath{\subp{\check{\bm{\mathfrak{Q}}}}{}{#2}{}{#1}}}
\newrobustcmd{\tildebmmathfrakQz}[2][]{\ensuremath{\subp{\tilde{\bm{\mathfrak{Q}}}}{}{#2}{}{#1}}}
\newrobustcmd{\widetildebmmathfrakQz}[2][]{\ensuremath{\subp{\widetilde{\bm{\mathfrak{Q}}}}{}{#2}{}{#1}}}
\newrobustcmd{\acutebmmathfrakQz}[2][]{\ensuremath{\subp{\acute{\bm{\mathfrak{Q}}}}{}{#2}{}{#1}}}
\newrobustcmd{\gravebmmathfrakQz}[2][]{\ensuremath{\subp{\grave{\bm{\mathfrak{Q}}}}{}{#2}{}{#1}}}
\newrobustcmd{\dotbmmathfrakQz}[2][]{\ensuremath{\subp{\dot{\bm{\mathfrak{Q}}}}{}{#2}{}{#1}}}
\newrobustcmd{\ddotbmmathfrakQz}[2][]{\ensuremath{\subp{\ddot{\bm{\mathfrak{Q}}}}{}{#2}{}{#1}}}
\newrobustcmd{\brevebmmathfrakQz}[2][]{\ensuremath{\subp{\breve{\bm{\mathfrak{Q}}}}{}{#2}{}{#1}}}
\newrobustcmd{\barbmmathfrakQz}[2][]{\ensuremath{\subp{\bar{\bm{\mathfrak{Q}}}}{}{#2}{}{#1}}}
\newrobustcmd{\vecbmmathfrakQz}[2][]{\ensuremath{\subp{\vec{\bm{\mathfrak{Q}}}}{}{#2}{}{#1}}}
\newrobustcmd{\mathfrakRz}[2][]{\ensuremath{\subp{\mathfrak{R}}{}{#2}{}{#1}}}
\newrobustcmd{\hatmathfrakRz}[2][]{\ensuremath{\subp{\hat{\mathfrak{R}}}{}{#2}{}{#1}}}
\newrobustcmd{\widehatmathfrakRz}[2][]{\ensuremath{\subp{\widehat{\mathfrak{R}}}{}{#2}{}{#1}}}
\newrobustcmd{\checkmathfrakRz}[2][]{\ensuremath{\subp{\check{\mathfrak{R}}}{}{#2}{}{#1}}}
\newrobustcmd{\tildemathfrakRz}[2][]{\ensuremath{\subp{\tilde{\mathfrak{R}}}{}{#2}{}{#1}}}
\newrobustcmd{\widetildemathfrakRz}[2][]{\ensuremath{\subp{\widetilde{\mathfrak{R}}}{}{#2}{}{#1}}}
\newrobustcmd{\acutemathfrakRz}[2][]{\ensuremath{\subp{\acute{\mathfrak{R}}}{}{#2}{}{#1}}}
\newrobustcmd{\gravemathfrakRz}[2][]{\ensuremath{\subp{\grave{\mathfrak{R}}}{}{#2}{}{#1}}}
\newrobustcmd{\dotmathfrakRz}[2][]{\ensuremath{\subp{\dot{\mathfrak{R}}}{}{#2}{}{#1}}}
\newrobustcmd{\ddotmathfrakRz}[2][]{\ensuremath{\subp{\ddot{\mathfrak{R}}}{}{#2}{}{#1}}}
\newrobustcmd{\brevemathfrakRz}[2][]{\ensuremath{\subp{\breve{\mathfrak{R}}}{}{#2}{}{#1}}}
\newrobustcmd{\barmathfrakRz}[2][]{\ensuremath{\subp{\bar{\mathfrak{R}}}{}{#2}{}{#1}}}
\newrobustcmd{\vecmathfrakRz}[2][]{\ensuremath{\subp{\vec{\mathfrak{R}}}{}{#2}{}{#1}}}
\newrobustcmd{\bmmathfrakRz}[2][]{\ensuremath{\subp{\bm{\mathfrak{R}}}{}{#2}{}{#1}}}
\newrobustcmd{\hatbmmathfrakRz}[2][]{\ensuremath{\subp{\hat{\bm{\mathfrak{R}}}}{}{#2}{}{#1}}}
\newrobustcmd{\widehatbmmathfrakRz}[2][]{\ensuremath{\subp{\widehat{\bm{\mathfrak{R}}}}{}{#2}{}{#1}}}
\newrobustcmd{\checkbmmathfrakRz}[2][]{\ensuremath{\subp{\check{\bm{\mathfrak{R}}}}{}{#2}{}{#1}}}
\newrobustcmd{\tildebmmathfrakRz}[2][]{\ensuremath{\subp{\tilde{\bm{\mathfrak{R}}}}{}{#2}{}{#1}}}
\newrobustcmd{\widetildebmmathfrakRz}[2][]{\ensuremath{\subp{\widetilde{\bm{\mathfrak{R}}}}{}{#2}{}{#1}}}
\newrobustcmd{\acutebmmathfrakRz}[2][]{\ensuremath{\subp{\acute{\bm{\mathfrak{R}}}}{}{#2}{}{#1}}}
\newrobustcmd{\gravebmmathfrakRz}[2][]{\ensuremath{\subp{\grave{\bm{\mathfrak{R}}}}{}{#2}{}{#1}}}
\newrobustcmd{\dotbmmathfrakRz}[2][]{\ensuremath{\subp{\dot{\bm{\mathfrak{R}}}}{}{#2}{}{#1}}}
\newrobustcmd{\ddotbmmathfrakRz}[2][]{\ensuremath{\subp{\ddot{\bm{\mathfrak{R}}}}{}{#2}{}{#1}}}
\newrobustcmd{\brevebmmathfrakRz}[2][]{\ensuremath{\subp{\breve{\bm{\mathfrak{R}}}}{}{#2}{}{#1}}}
\newrobustcmd{\barbmmathfrakRz}[2][]{\ensuremath{\subp{\bar{\bm{\mathfrak{R}}}}{}{#2}{}{#1}}}
\newrobustcmd{\vecbmmathfrakRz}[2][]{\ensuremath{\subp{\vec{\bm{\mathfrak{R}}}}{}{#2}{}{#1}}}
\newrobustcmd{\mathfrakSz}[2][]{\ensuremath{\subp{\mathfrak{S}}{}{#2}{}{#1}}}
\newrobustcmd{\hatmathfrakSz}[2][]{\ensuremath{\subp{\hat{\mathfrak{S}}}{}{#2}{}{#1}}}
\newrobustcmd{\widehatmathfrakSz}[2][]{\ensuremath{\subp{\widehat{\mathfrak{S}}}{}{#2}{}{#1}}}
\newrobustcmd{\checkmathfrakSz}[2][]{\ensuremath{\subp{\check{\mathfrak{S}}}{}{#2}{}{#1}}}
\newrobustcmd{\tildemathfrakSz}[2][]{\ensuremath{\subp{\tilde{\mathfrak{S}}}{}{#2}{}{#1}}}
\newrobustcmd{\widetildemathfrakSz}[2][]{\ensuremath{\subp{\widetilde{\mathfrak{S}}}{}{#2}{}{#1}}}
\newrobustcmd{\acutemathfrakSz}[2][]{\ensuremath{\subp{\acute{\mathfrak{S}}}{}{#2}{}{#1}}}
\newrobustcmd{\gravemathfrakSz}[2][]{\ensuremath{\subp{\grave{\mathfrak{S}}}{}{#2}{}{#1}}}
\newrobustcmd{\dotmathfrakSz}[2][]{\ensuremath{\subp{\dot{\mathfrak{S}}}{}{#2}{}{#1}}}
\newrobustcmd{\ddotmathfrakSz}[2][]{\ensuremath{\subp{\ddot{\mathfrak{S}}}{}{#2}{}{#1}}}
\newrobustcmd{\brevemathfrakSz}[2][]{\ensuremath{\subp{\breve{\mathfrak{S}}}{}{#2}{}{#1}}}
\newrobustcmd{\barmathfrakSz}[2][]{\ensuremath{\subp{\bar{\mathfrak{S}}}{}{#2}{}{#1}}}
\newrobustcmd{\vecmathfrakSz}[2][]{\ensuremath{\subp{\vec{\mathfrak{S}}}{}{#2}{}{#1}}}
\newrobustcmd{\bmmathfrakSz}[2][]{\ensuremath{\subp{\bm{\mathfrak{S}}}{}{#2}{}{#1}}}
\newrobustcmd{\hatbmmathfrakSz}[2][]{\ensuremath{\subp{\hat{\bm{\mathfrak{S}}}}{}{#2}{}{#1}}}
\newrobustcmd{\widehatbmmathfrakSz}[2][]{\ensuremath{\subp{\widehat{\bm{\mathfrak{S}}}}{}{#2}{}{#1}}}
\newrobustcmd{\checkbmmathfrakSz}[2][]{\ensuremath{\subp{\check{\bm{\mathfrak{S}}}}{}{#2}{}{#1}}}
\newrobustcmd{\tildebmmathfrakSz}[2][]{\ensuremath{\subp{\tilde{\bm{\mathfrak{S}}}}{}{#2}{}{#1}}}
\newrobustcmd{\widetildebmmathfrakSz}[2][]{\ensuremath{\subp{\widetilde{\bm{\mathfrak{S}}}}{}{#2}{}{#1}}}
\newrobustcmd{\acutebmmathfrakSz}[2][]{\ensuremath{\subp{\acute{\bm{\mathfrak{S}}}}{}{#2}{}{#1}}}
\newrobustcmd{\gravebmmathfrakSz}[2][]{\ensuremath{\subp{\grave{\bm{\mathfrak{S}}}}{}{#2}{}{#1}}}
\newrobustcmd{\dotbmmathfrakSz}[2][]{\ensuremath{\subp{\dot{\bm{\mathfrak{S}}}}{}{#2}{}{#1}}}
\newrobustcmd{\ddotbmmathfrakSz}[2][]{\ensuremath{\subp{\ddot{\bm{\mathfrak{S}}}}{}{#2}{}{#1}}}
\newrobustcmd{\brevebmmathfrakSz}[2][]{\ensuremath{\subp{\breve{\bm{\mathfrak{S}}}}{}{#2}{}{#1}}}
\newrobustcmd{\barbmmathfrakSz}[2][]{\ensuremath{\subp{\bar{\bm{\mathfrak{S}}}}{}{#2}{}{#1}}}
\newrobustcmd{\vecbmmathfrakSz}[2][]{\ensuremath{\subp{\vec{\bm{\mathfrak{S}}}}{}{#2}{}{#1}}}
\newrobustcmd{\mathfrakTz}[2][]{\ensuremath{\subp{\mathfrak{T}}{}{#2}{}{#1}}}
\newrobustcmd{\hatmathfrakTz}[2][]{\ensuremath{\subp{\hat{\mathfrak{T}}}{}{#2}{}{#1}}}
\newrobustcmd{\widehatmathfrakTz}[2][]{\ensuremath{\subp{\widehat{\mathfrak{T}}}{}{#2}{}{#1}}}
\newrobustcmd{\checkmathfrakTz}[2][]{\ensuremath{\subp{\check{\mathfrak{T}}}{}{#2}{}{#1}}}
\newrobustcmd{\tildemathfrakTz}[2][]{\ensuremath{\subp{\tilde{\mathfrak{T}}}{}{#2}{}{#1}}}
\newrobustcmd{\widetildemathfrakTz}[2][]{\ensuremath{\subp{\widetilde{\mathfrak{T}}}{}{#2}{}{#1}}}
\newrobustcmd{\acutemathfrakTz}[2][]{\ensuremath{\subp{\acute{\mathfrak{T}}}{}{#2}{}{#1}}}
\newrobustcmd{\gravemathfrakTz}[2][]{\ensuremath{\subp{\grave{\mathfrak{T}}}{}{#2}{}{#1}}}
\newrobustcmd{\dotmathfrakTz}[2][]{\ensuremath{\subp{\dot{\mathfrak{T}}}{}{#2}{}{#1}}}
\newrobustcmd{\ddotmathfrakTz}[2][]{\ensuremath{\subp{\ddot{\mathfrak{T}}}{}{#2}{}{#1}}}
\newrobustcmd{\brevemathfrakTz}[2][]{\ensuremath{\subp{\breve{\mathfrak{T}}}{}{#2}{}{#1}}}
\newrobustcmd{\barmathfrakTz}[2][]{\ensuremath{\subp{\bar{\mathfrak{T}}}{}{#2}{}{#1}}}
\newrobustcmd{\vecmathfrakTz}[2][]{\ensuremath{\subp{\vec{\mathfrak{T}}}{}{#2}{}{#1}}}
\newrobustcmd{\bmmathfrakTz}[2][]{\ensuremath{\subp{\bm{\mathfrak{T}}}{}{#2}{}{#1}}}
\newrobustcmd{\hatbmmathfrakTz}[2][]{\ensuremath{\subp{\hat{\bm{\mathfrak{T}}}}{}{#2}{}{#1}}}
\newrobustcmd{\widehatbmmathfrakTz}[2][]{\ensuremath{\subp{\widehat{\bm{\mathfrak{T}}}}{}{#2}{}{#1}}}
\newrobustcmd{\checkbmmathfrakTz}[2][]{\ensuremath{\subp{\check{\bm{\mathfrak{T}}}}{}{#2}{}{#1}}}
\newrobustcmd{\tildebmmathfrakTz}[2][]{\ensuremath{\subp{\tilde{\bm{\mathfrak{T}}}}{}{#2}{}{#1}}}
\newrobustcmd{\widetildebmmathfrakTz}[2][]{\ensuremath{\subp{\widetilde{\bm{\mathfrak{T}}}}{}{#2}{}{#1}}}
\newrobustcmd{\acutebmmathfrakTz}[2][]{\ensuremath{\subp{\acute{\bm{\mathfrak{T}}}}{}{#2}{}{#1}}}
\newrobustcmd{\gravebmmathfrakTz}[2][]{\ensuremath{\subp{\grave{\bm{\mathfrak{T}}}}{}{#2}{}{#1}}}
\newrobustcmd{\dotbmmathfrakTz}[2][]{\ensuremath{\subp{\dot{\bm{\mathfrak{T}}}}{}{#2}{}{#1}}}
\newrobustcmd{\ddotbmmathfrakTz}[2][]{\ensuremath{\subp{\ddot{\bm{\mathfrak{T}}}}{}{#2}{}{#1}}}
\newrobustcmd{\brevebmmathfrakTz}[2][]{\ensuremath{\subp{\breve{\bm{\mathfrak{T}}}}{}{#2}{}{#1}}}
\newrobustcmd{\barbmmathfrakTz}[2][]{\ensuremath{\subp{\bar{\bm{\mathfrak{T}}}}{}{#2}{}{#1}}}
\newrobustcmd{\vecbmmathfrakTz}[2][]{\ensuremath{\subp{\vec{\bm{\mathfrak{T}}}}{}{#2}{}{#1}}}
\newrobustcmd{\mathfrakUz}[2][]{\ensuremath{\subp{\mathfrak{U}}{}{#2}{}{#1}}}
\newrobustcmd{\hatmathfrakUz}[2][]{\ensuremath{\subp{\hat{\mathfrak{U}}}{}{#2}{}{#1}}}
\newrobustcmd{\widehatmathfrakUz}[2][]{\ensuremath{\subp{\widehat{\mathfrak{U}}}{}{#2}{}{#1}}}
\newrobustcmd{\checkmathfrakUz}[2][]{\ensuremath{\subp{\check{\mathfrak{U}}}{}{#2}{}{#1}}}
\newrobustcmd{\tildemathfrakUz}[2][]{\ensuremath{\subp{\tilde{\mathfrak{U}}}{}{#2}{}{#1}}}
\newrobustcmd{\widetildemathfrakUz}[2][]{\ensuremath{\subp{\widetilde{\mathfrak{U}}}{}{#2}{}{#1}}}
\newrobustcmd{\acutemathfrakUz}[2][]{\ensuremath{\subp{\acute{\mathfrak{U}}}{}{#2}{}{#1}}}
\newrobustcmd{\gravemathfrakUz}[2][]{\ensuremath{\subp{\grave{\mathfrak{U}}}{}{#2}{}{#1}}}
\newrobustcmd{\dotmathfrakUz}[2][]{\ensuremath{\subp{\dot{\mathfrak{U}}}{}{#2}{}{#1}}}
\newrobustcmd{\ddotmathfrakUz}[2][]{\ensuremath{\subp{\ddot{\mathfrak{U}}}{}{#2}{}{#1}}}
\newrobustcmd{\brevemathfrakUz}[2][]{\ensuremath{\subp{\breve{\mathfrak{U}}}{}{#2}{}{#1}}}
\newrobustcmd{\barmathfrakUz}[2][]{\ensuremath{\subp{\bar{\mathfrak{U}}}{}{#2}{}{#1}}}
\newrobustcmd{\vecmathfrakUz}[2][]{\ensuremath{\subp{\vec{\mathfrak{U}}}{}{#2}{}{#1}}}
\newrobustcmd{\bmmathfrakUz}[2][]{\ensuremath{\subp{\bm{\mathfrak{U}}}{}{#2}{}{#1}}}
\newrobustcmd{\hatbmmathfrakUz}[2][]{\ensuremath{\subp{\hat{\bm{\mathfrak{U}}}}{}{#2}{}{#1}}}
\newrobustcmd{\widehatbmmathfrakUz}[2][]{\ensuremath{\subp{\widehat{\bm{\mathfrak{U}}}}{}{#2}{}{#1}}}
\newrobustcmd{\checkbmmathfrakUz}[2][]{\ensuremath{\subp{\check{\bm{\mathfrak{U}}}}{}{#2}{}{#1}}}
\newrobustcmd{\tildebmmathfrakUz}[2][]{\ensuremath{\subp{\tilde{\bm{\mathfrak{U}}}}{}{#2}{}{#1}}}
\newrobustcmd{\widetildebmmathfrakUz}[2][]{\ensuremath{\subp{\widetilde{\bm{\mathfrak{U}}}}{}{#2}{}{#1}}}
\newrobustcmd{\acutebmmathfrakUz}[2][]{\ensuremath{\subp{\acute{\bm{\mathfrak{U}}}}{}{#2}{}{#1}}}
\newrobustcmd{\gravebmmathfrakUz}[2][]{\ensuremath{\subp{\grave{\bm{\mathfrak{U}}}}{}{#2}{}{#1}}}
\newrobustcmd{\dotbmmathfrakUz}[2][]{\ensuremath{\subp{\dot{\bm{\mathfrak{U}}}}{}{#2}{}{#1}}}
\newrobustcmd{\ddotbmmathfrakUz}[2][]{\ensuremath{\subp{\ddot{\bm{\mathfrak{U}}}}{}{#2}{}{#1}}}
\newrobustcmd{\brevebmmathfrakUz}[2][]{\ensuremath{\subp{\breve{\bm{\mathfrak{U}}}}{}{#2}{}{#1}}}
\newrobustcmd{\barbmmathfrakUz}[2][]{\ensuremath{\subp{\bar{\bm{\mathfrak{U}}}}{}{#2}{}{#1}}}
\newrobustcmd{\vecbmmathfrakUz}[2][]{\ensuremath{\subp{\vec{\bm{\mathfrak{U}}}}{}{#2}{}{#1}}}
\newrobustcmd{\mathfrakVz}[2][]{\ensuremath{\subp{\mathfrak{V}}{}{#2}{}{#1}}}
\newrobustcmd{\hatmathfrakVz}[2][]{\ensuremath{\subp{\hat{\mathfrak{V}}}{}{#2}{}{#1}}}
\newrobustcmd{\widehatmathfrakVz}[2][]{\ensuremath{\subp{\widehat{\mathfrak{V}}}{}{#2}{}{#1}}}
\newrobustcmd{\checkmathfrakVz}[2][]{\ensuremath{\subp{\check{\mathfrak{V}}}{}{#2}{}{#1}}}
\newrobustcmd{\tildemathfrakVz}[2][]{\ensuremath{\subp{\tilde{\mathfrak{V}}}{}{#2}{}{#1}}}
\newrobustcmd{\widetildemathfrakVz}[2][]{\ensuremath{\subp{\widetilde{\mathfrak{V}}}{}{#2}{}{#1}}}
\newrobustcmd{\acutemathfrakVz}[2][]{\ensuremath{\subp{\acute{\mathfrak{V}}}{}{#2}{}{#1}}}
\newrobustcmd{\gravemathfrakVz}[2][]{\ensuremath{\subp{\grave{\mathfrak{V}}}{}{#2}{}{#1}}}
\newrobustcmd{\dotmathfrakVz}[2][]{\ensuremath{\subp{\dot{\mathfrak{V}}}{}{#2}{}{#1}}}
\newrobustcmd{\ddotmathfrakVz}[2][]{\ensuremath{\subp{\ddot{\mathfrak{V}}}{}{#2}{}{#1}}}
\newrobustcmd{\brevemathfrakVz}[2][]{\ensuremath{\subp{\breve{\mathfrak{V}}}{}{#2}{}{#1}}}
\newrobustcmd{\barmathfrakVz}[2][]{\ensuremath{\subp{\bar{\mathfrak{V}}}{}{#2}{}{#1}}}
\newrobustcmd{\vecmathfrakVz}[2][]{\ensuremath{\subp{\vec{\mathfrak{V}}}{}{#2}{}{#1}}}
\newrobustcmd{\bmmathfrakVz}[2][]{\ensuremath{\subp{\bm{\mathfrak{V}}}{}{#2}{}{#1}}}
\newrobustcmd{\hatbmmathfrakVz}[2][]{\ensuremath{\subp{\hat{\bm{\mathfrak{V}}}}{}{#2}{}{#1}}}
\newrobustcmd{\widehatbmmathfrakVz}[2][]{\ensuremath{\subp{\widehat{\bm{\mathfrak{V}}}}{}{#2}{}{#1}}}
\newrobustcmd{\checkbmmathfrakVz}[2][]{\ensuremath{\subp{\check{\bm{\mathfrak{V}}}}{}{#2}{}{#1}}}
\newrobustcmd{\tildebmmathfrakVz}[2][]{\ensuremath{\subp{\tilde{\bm{\mathfrak{V}}}}{}{#2}{}{#1}}}
\newrobustcmd{\widetildebmmathfrakVz}[2][]{\ensuremath{\subp{\widetilde{\bm{\mathfrak{V}}}}{}{#2}{}{#1}}}
\newrobustcmd{\acutebmmathfrakVz}[2][]{\ensuremath{\subp{\acute{\bm{\mathfrak{V}}}}{}{#2}{}{#1}}}
\newrobustcmd{\gravebmmathfrakVz}[2][]{\ensuremath{\subp{\grave{\bm{\mathfrak{V}}}}{}{#2}{}{#1}}}
\newrobustcmd{\dotbmmathfrakVz}[2][]{\ensuremath{\subp{\dot{\bm{\mathfrak{V}}}}{}{#2}{}{#1}}}
\newrobustcmd{\ddotbmmathfrakVz}[2][]{\ensuremath{\subp{\ddot{\bm{\mathfrak{V}}}}{}{#2}{}{#1}}}
\newrobustcmd{\brevebmmathfrakVz}[2][]{\ensuremath{\subp{\breve{\bm{\mathfrak{V}}}}{}{#2}{}{#1}}}
\newrobustcmd{\barbmmathfrakVz}[2][]{\ensuremath{\subp{\bar{\bm{\mathfrak{V}}}}{}{#2}{}{#1}}}
\newrobustcmd{\vecbmmathfrakVz}[2][]{\ensuremath{\subp{\vec{\bm{\mathfrak{V}}}}{}{#2}{}{#1}}}
\newrobustcmd{\mathfrakWz}[2][]{\ensuremath{\subp{\mathfrak{W}}{}{#2}{}{#1}}}
\newrobustcmd{\hatmathfrakWz}[2][]{\ensuremath{\subp{\hat{\mathfrak{W}}}{}{#2}{}{#1}}}
\newrobustcmd{\widehatmathfrakWz}[2][]{\ensuremath{\subp{\widehat{\mathfrak{W}}}{}{#2}{}{#1}}}
\newrobustcmd{\checkmathfrakWz}[2][]{\ensuremath{\subp{\check{\mathfrak{W}}}{}{#2}{}{#1}}}
\newrobustcmd{\tildemathfrakWz}[2][]{\ensuremath{\subp{\tilde{\mathfrak{W}}}{}{#2}{}{#1}}}
\newrobustcmd{\widetildemathfrakWz}[2][]{\ensuremath{\subp{\widetilde{\mathfrak{W}}}{}{#2}{}{#1}}}
\newrobustcmd{\acutemathfrakWz}[2][]{\ensuremath{\subp{\acute{\mathfrak{W}}}{}{#2}{}{#1}}}
\newrobustcmd{\gravemathfrakWz}[2][]{\ensuremath{\subp{\grave{\mathfrak{W}}}{}{#2}{}{#1}}}
\newrobustcmd{\dotmathfrakWz}[2][]{\ensuremath{\subp{\dot{\mathfrak{W}}}{}{#2}{}{#1}}}
\newrobustcmd{\ddotmathfrakWz}[2][]{\ensuremath{\subp{\ddot{\mathfrak{W}}}{}{#2}{}{#1}}}
\newrobustcmd{\brevemathfrakWz}[2][]{\ensuremath{\subp{\breve{\mathfrak{W}}}{}{#2}{}{#1}}}
\newrobustcmd{\barmathfrakWz}[2][]{\ensuremath{\subp{\bar{\mathfrak{W}}}{}{#2}{}{#1}}}
\newrobustcmd{\vecmathfrakWz}[2][]{\ensuremath{\subp{\vec{\mathfrak{W}}}{}{#2}{}{#1}}}
\newrobustcmd{\bmmathfrakWz}[2][]{\ensuremath{\subp{\bm{\mathfrak{W}}}{}{#2}{}{#1}}}
\newrobustcmd{\hatbmmathfrakWz}[2][]{\ensuremath{\subp{\hat{\bm{\mathfrak{W}}}}{}{#2}{}{#1}}}
\newrobustcmd{\widehatbmmathfrakWz}[2][]{\ensuremath{\subp{\widehat{\bm{\mathfrak{W}}}}{}{#2}{}{#1}}}
\newrobustcmd{\checkbmmathfrakWz}[2][]{\ensuremath{\subp{\check{\bm{\mathfrak{W}}}}{}{#2}{}{#1}}}
\newrobustcmd{\tildebmmathfrakWz}[2][]{\ensuremath{\subp{\tilde{\bm{\mathfrak{W}}}}{}{#2}{}{#1}}}
\newrobustcmd{\widetildebmmathfrakWz}[2][]{\ensuremath{\subp{\widetilde{\bm{\mathfrak{W}}}}{}{#2}{}{#1}}}
\newrobustcmd{\acutebmmathfrakWz}[2][]{\ensuremath{\subp{\acute{\bm{\mathfrak{W}}}}{}{#2}{}{#1}}}
\newrobustcmd{\gravebmmathfrakWz}[2][]{\ensuremath{\subp{\grave{\bm{\mathfrak{W}}}}{}{#2}{}{#1}}}
\newrobustcmd{\dotbmmathfrakWz}[2][]{\ensuremath{\subp{\dot{\bm{\mathfrak{W}}}}{}{#2}{}{#1}}}
\newrobustcmd{\ddotbmmathfrakWz}[2][]{\ensuremath{\subp{\ddot{\bm{\mathfrak{W}}}}{}{#2}{}{#1}}}
\newrobustcmd{\brevebmmathfrakWz}[2][]{\ensuremath{\subp{\breve{\bm{\mathfrak{W}}}}{}{#2}{}{#1}}}
\newrobustcmd{\barbmmathfrakWz}[2][]{\ensuremath{\subp{\bar{\bm{\mathfrak{W}}}}{}{#2}{}{#1}}}
\newrobustcmd{\vecbmmathfrakWz}[2][]{\ensuremath{\subp{\vec{\bm{\mathfrak{W}}}}{}{#2}{}{#1}}}
\newrobustcmd{\mathfrakXz}[2][]{\ensuremath{\subp{\mathfrak{X}}{}{#2}{}{#1}}}
\newrobustcmd{\hatmathfrakXz}[2][]{\ensuremath{\subp{\hat{\mathfrak{X}}}{}{#2}{}{#1}}}
\newrobustcmd{\widehatmathfrakXz}[2][]{\ensuremath{\subp{\widehat{\mathfrak{X}}}{}{#2}{}{#1}}}
\newrobustcmd{\checkmathfrakXz}[2][]{\ensuremath{\subp{\check{\mathfrak{X}}}{}{#2}{}{#1}}}
\newrobustcmd{\tildemathfrakXz}[2][]{\ensuremath{\subp{\tilde{\mathfrak{X}}}{}{#2}{}{#1}}}
\newrobustcmd{\widetildemathfrakXz}[2][]{\ensuremath{\subp{\widetilde{\mathfrak{X}}}{}{#2}{}{#1}}}
\newrobustcmd{\acutemathfrakXz}[2][]{\ensuremath{\subp{\acute{\mathfrak{X}}}{}{#2}{}{#1}}}
\newrobustcmd{\gravemathfrakXz}[2][]{\ensuremath{\subp{\grave{\mathfrak{X}}}{}{#2}{}{#1}}}
\newrobustcmd{\dotmathfrakXz}[2][]{\ensuremath{\subp{\dot{\mathfrak{X}}}{}{#2}{}{#1}}}
\newrobustcmd{\ddotmathfrakXz}[2][]{\ensuremath{\subp{\ddot{\mathfrak{X}}}{}{#2}{}{#1}}}
\newrobustcmd{\brevemathfrakXz}[2][]{\ensuremath{\subp{\breve{\mathfrak{X}}}{}{#2}{}{#1}}}
\newrobustcmd{\barmathfrakXz}[2][]{\ensuremath{\subp{\bar{\mathfrak{X}}}{}{#2}{}{#1}}}
\newrobustcmd{\vecmathfrakXz}[2][]{\ensuremath{\subp{\vec{\mathfrak{X}}}{}{#2}{}{#1}}}
\newrobustcmd{\bmmathfrakXz}[2][]{\ensuremath{\subp{\bm{\mathfrak{X}}}{}{#2}{}{#1}}}
\newrobustcmd{\hatbmmathfrakXz}[2][]{\ensuremath{\subp{\hat{\bm{\mathfrak{X}}}}{}{#2}{}{#1}}}
\newrobustcmd{\widehatbmmathfrakXz}[2][]{\ensuremath{\subp{\widehat{\bm{\mathfrak{X}}}}{}{#2}{}{#1}}}
\newrobustcmd{\checkbmmathfrakXz}[2][]{\ensuremath{\subp{\check{\bm{\mathfrak{X}}}}{}{#2}{}{#1}}}
\newrobustcmd{\tildebmmathfrakXz}[2][]{\ensuremath{\subp{\tilde{\bm{\mathfrak{X}}}}{}{#2}{}{#1}}}
\newrobustcmd{\widetildebmmathfrakXz}[2][]{\ensuremath{\subp{\widetilde{\bm{\mathfrak{X}}}}{}{#2}{}{#1}}}
\newrobustcmd{\acutebmmathfrakXz}[2][]{\ensuremath{\subp{\acute{\bm{\mathfrak{X}}}}{}{#2}{}{#1}}}
\newrobustcmd{\gravebmmathfrakXz}[2][]{\ensuremath{\subp{\grave{\bm{\mathfrak{X}}}}{}{#2}{}{#1}}}
\newrobustcmd{\dotbmmathfrakXz}[2][]{\ensuremath{\subp{\dot{\bm{\mathfrak{X}}}}{}{#2}{}{#1}}}
\newrobustcmd{\ddotbmmathfrakXz}[2][]{\ensuremath{\subp{\ddot{\bm{\mathfrak{X}}}}{}{#2}{}{#1}}}
\newrobustcmd{\brevebmmathfrakXz}[2][]{\ensuremath{\subp{\breve{\bm{\mathfrak{X}}}}{}{#2}{}{#1}}}
\newrobustcmd{\barbmmathfrakXz}[2][]{\ensuremath{\subp{\bar{\bm{\mathfrak{X}}}}{}{#2}{}{#1}}}
\newrobustcmd{\vecbmmathfrakXz}[2][]{\ensuremath{\subp{\vec{\bm{\mathfrak{X}}}}{}{#2}{}{#1}}}
\newrobustcmd{\mathfrakYz}[2][]{\ensuremath{\subp{\mathfrak{Y}}{}{#2}{}{#1}}}
\newrobustcmd{\hatmathfrakYz}[2][]{\ensuremath{\subp{\hat{\mathfrak{Y}}}{}{#2}{}{#1}}}
\newrobustcmd{\widehatmathfrakYz}[2][]{\ensuremath{\subp{\widehat{\mathfrak{Y}}}{}{#2}{}{#1}}}
\newrobustcmd{\checkmathfrakYz}[2][]{\ensuremath{\subp{\check{\mathfrak{Y}}}{}{#2}{}{#1}}}
\newrobustcmd{\tildemathfrakYz}[2][]{\ensuremath{\subp{\tilde{\mathfrak{Y}}}{}{#2}{}{#1}}}
\newrobustcmd{\widetildemathfrakYz}[2][]{\ensuremath{\subp{\widetilde{\mathfrak{Y}}}{}{#2}{}{#1}}}
\newrobustcmd{\acutemathfrakYz}[2][]{\ensuremath{\subp{\acute{\mathfrak{Y}}}{}{#2}{}{#1}}}
\newrobustcmd{\gravemathfrakYz}[2][]{\ensuremath{\subp{\grave{\mathfrak{Y}}}{}{#2}{}{#1}}}
\newrobustcmd{\dotmathfrakYz}[2][]{\ensuremath{\subp{\dot{\mathfrak{Y}}}{}{#2}{}{#1}}}
\newrobustcmd{\ddotmathfrakYz}[2][]{\ensuremath{\subp{\ddot{\mathfrak{Y}}}{}{#2}{}{#1}}}
\newrobustcmd{\brevemathfrakYz}[2][]{\ensuremath{\subp{\breve{\mathfrak{Y}}}{}{#2}{}{#1}}}
\newrobustcmd{\barmathfrakYz}[2][]{\ensuremath{\subp{\bar{\mathfrak{Y}}}{}{#2}{}{#1}}}
\newrobustcmd{\vecmathfrakYz}[2][]{\ensuremath{\subp{\vec{\mathfrak{Y}}}{}{#2}{}{#1}}}
\newrobustcmd{\bmmathfrakYz}[2][]{\ensuremath{\subp{\bm{\mathfrak{Y}}}{}{#2}{}{#1}}}
\newrobustcmd{\hatbmmathfrakYz}[2][]{\ensuremath{\subp{\hat{\bm{\mathfrak{Y}}}}{}{#2}{}{#1}}}
\newrobustcmd{\widehatbmmathfrakYz}[2][]{\ensuremath{\subp{\widehat{\bm{\mathfrak{Y}}}}{}{#2}{}{#1}}}
\newrobustcmd{\checkbmmathfrakYz}[2][]{\ensuremath{\subp{\check{\bm{\mathfrak{Y}}}}{}{#2}{}{#1}}}
\newrobustcmd{\tildebmmathfrakYz}[2][]{\ensuremath{\subp{\tilde{\bm{\mathfrak{Y}}}}{}{#2}{}{#1}}}
\newrobustcmd{\widetildebmmathfrakYz}[2][]{\ensuremath{\subp{\widetilde{\bm{\mathfrak{Y}}}}{}{#2}{}{#1}}}
\newrobustcmd{\acutebmmathfrakYz}[2][]{\ensuremath{\subp{\acute{\bm{\mathfrak{Y}}}}{}{#2}{}{#1}}}
\newrobustcmd{\gravebmmathfrakYz}[2][]{\ensuremath{\subp{\grave{\bm{\mathfrak{Y}}}}{}{#2}{}{#1}}}
\newrobustcmd{\dotbmmathfrakYz}[2][]{\ensuremath{\subp{\dot{\bm{\mathfrak{Y}}}}{}{#2}{}{#1}}}
\newrobustcmd{\ddotbmmathfrakYz}[2][]{\ensuremath{\subp{\ddot{\bm{\mathfrak{Y}}}}{}{#2}{}{#1}}}
\newrobustcmd{\brevebmmathfrakYz}[2][]{\ensuremath{\subp{\breve{\bm{\mathfrak{Y}}}}{}{#2}{}{#1}}}
\newrobustcmd{\barbmmathfrakYz}[2][]{\ensuremath{\subp{\bar{\bm{\mathfrak{Y}}}}{}{#2}{}{#1}}}
\newrobustcmd{\vecbmmathfrakYz}[2][]{\ensuremath{\subp{\vec{\bm{\mathfrak{Y}}}}{}{#2}{}{#1}}}
\newrobustcmd{\mathfrakZz}[2][]{\ensuremath{\subp{\mathfrak{Z}}{}{#2}{}{#1}}}
\newrobustcmd{\hatmathfrakZz}[2][]{\ensuremath{\subp{\hat{\mathfrak{Z}}}{}{#2}{}{#1}}}
\newrobustcmd{\widehatmathfrakZz}[2][]{\ensuremath{\subp{\widehat{\mathfrak{Z}}}{}{#2}{}{#1}}}
\newrobustcmd{\checkmathfrakZz}[2][]{\ensuremath{\subp{\check{\mathfrak{Z}}}{}{#2}{}{#1}}}
\newrobustcmd{\tildemathfrakZz}[2][]{\ensuremath{\subp{\tilde{\mathfrak{Z}}}{}{#2}{}{#1}}}
\newrobustcmd{\widetildemathfrakZz}[2][]{\ensuremath{\subp{\widetilde{\mathfrak{Z}}}{}{#2}{}{#1}}}
\newrobustcmd{\acutemathfrakZz}[2][]{\ensuremath{\subp{\acute{\mathfrak{Z}}}{}{#2}{}{#1}}}
\newrobustcmd{\gravemathfrakZz}[2][]{\ensuremath{\subp{\grave{\mathfrak{Z}}}{}{#2}{}{#1}}}
\newrobustcmd{\dotmathfrakZz}[2][]{\ensuremath{\subp{\dot{\mathfrak{Z}}}{}{#2}{}{#1}}}
\newrobustcmd{\ddotmathfrakZz}[2][]{\ensuremath{\subp{\ddot{\mathfrak{Z}}}{}{#2}{}{#1}}}
\newrobustcmd{\brevemathfrakZz}[2][]{\ensuremath{\subp{\breve{\mathfrak{Z}}}{}{#2}{}{#1}}}
\newrobustcmd{\barmathfrakZz}[2][]{\ensuremath{\subp{\bar{\mathfrak{Z}}}{}{#2}{}{#1}}}
\newrobustcmd{\vecmathfrakZz}[2][]{\ensuremath{\subp{\vec{\mathfrak{Z}}}{}{#2}{}{#1}}}
\newrobustcmd{\bmmathfrakZz}[2][]{\ensuremath{\subp{\bm{\mathfrak{Z}}}{}{#2}{}{#1}}}
\newrobustcmd{\hatbmmathfrakZz}[2][]{\ensuremath{\subp{\hat{\bm{\mathfrak{Z}}}}{}{#2}{}{#1}}}
\newrobustcmd{\widehatbmmathfrakZz}[2][]{\ensuremath{\subp{\widehat{\bm{\mathfrak{Z}}}}{}{#2}{}{#1}}}
\newrobustcmd{\checkbmmathfrakZz}[2][]{\ensuremath{\subp{\check{\bm{\mathfrak{Z}}}}{}{#2}{}{#1}}}
\newrobustcmd{\tildebmmathfrakZz}[2][]{\ensuremath{\subp{\tilde{\bm{\mathfrak{Z}}}}{}{#2}{}{#1}}}
\newrobustcmd{\widetildebmmathfrakZz}[2][]{\ensuremath{\subp{\widetilde{\bm{\mathfrak{Z}}}}{}{#2}{}{#1}}}
\newrobustcmd{\acutebmmathfrakZz}[2][]{\ensuremath{\subp{\acute{\bm{\mathfrak{Z}}}}{}{#2}{}{#1}}}
\newrobustcmd{\gravebmmathfrakZz}[2][]{\ensuremath{\subp{\grave{\bm{\mathfrak{Z}}}}{}{#2}{}{#1}}}
\newrobustcmd{\dotbmmathfrakZz}[2][]{\ensuremath{\subp{\dot{\bm{\mathfrak{Z}}}}{}{#2}{}{#1}}}
\newrobustcmd{\ddotbmmathfrakZz}[2][]{\ensuremath{\subp{\ddot{\bm{\mathfrak{Z}}}}{}{#2}{}{#1}}}
\newrobustcmd{\brevebmmathfrakZz}[2][]{\ensuremath{\subp{\breve{\bm{\mathfrak{Z}}}}{}{#2}{}{#1}}}
\newrobustcmd{\barbmmathfrakZz}[2][]{\ensuremath{\subp{\bar{\bm{\mathfrak{Z}}}}{}{#2}{}{#1}}}
\newrobustcmd{\vecbmmathfrakZz}[2][]{\ensuremath{\subp{\vec{\bm{\mathfrak{Z}}}}{}{#2}{}{#1}}}
\newrobustcmd{\mathcalAz}[2][]{\ensuremath{\subp{\mathcal{A}}{}{#2}{}{#1}}}
\newrobustcmd{\hatmathcalAz}[2][]{\ensuremath{\subp{\hat{\mathcal{A}}}{}{#2}{}{#1}}}
\newrobustcmd{\widehatmathcalAz}[2][]{\ensuremath{\subp{\widehat{\mathcal{A}}}{}{#2}{}{#1}}}
\newrobustcmd{\checkmathcalAz}[2][]{\ensuremath{\subp{\check{\mathcal{A}}}{}{#2}{}{#1}}}
\newrobustcmd{\tildemathcalAz}[2][]{\ensuremath{\subp{\tilde{\mathcal{A}}}{}{#2}{}{#1}}}
\newrobustcmd{\widetildemathcalAz}[2][]{\ensuremath{\subp{\widetilde{\mathcal{A}}}{}{#2}{}{#1}}}
\newrobustcmd{\acutemathcalAz}[2][]{\ensuremath{\subp{\acute{\mathcal{A}}}{}{#2}{}{#1}}}
\newrobustcmd{\gravemathcalAz}[2][]{\ensuremath{\subp{\grave{\mathcal{A}}}{}{#2}{}{#1}}}
\newrobustcmd{\dotmathcalAz}[2][]{\ensuremath{\subp{\dot{\mathcal{A}}}{}{#2}{}{#1}}}
\newrobustcmd{\ddotmathcalAz}[2][]{\ensuremath{\subp{\ddot{\mathcal{A}}}{}{#2}{}{#1}}}
\newrobustcmd{\brevemathcalAz}[2][]{\ensuremath{\subp{\breve{\mathcal{A}}}{}{#2}{}{#1}}}
\newrobustcmd{\barmathcalAz}[2][]{\ensuremath{\subp{\bar{\mathcal{A}}}{}{#2}{}{#1}}}
\newrobustcmd{\vecmathcalAz}[2][]{\ensuremath{\subp{\vec{\mathcal{A}}}{}{#2}{}{#1}}}
\newrobustcmd{\bmmathcalAz}[2][]{\ensuremath{\subp{\bm{\mathcal{A}}}{}{#2}{}{#1}}}
\newrobustcmd{\hatbmmathcalAz}[2][]{\ensuremath{\subp{\hat{\bm{\mathcal{A}}}}{}{#2}{}{#1}}}
\newrobustcmd{\widehatbmmathcalAz}[2][]{\ensuremath{\subp{\widehat{\bm{\mathcal{A}}}}{}{#2}{}{#1}}}
\newrobustcmd{\checkbmmathcalAz}[2][]{\ensuremath{\subp{\check{\bm{\mathcal{A}}}}{}{#2}{}{#1}}}
\newrobustcmd{\tildebmmathcalAz}[2][]{\ensuremath{\subp{\tilde{\bm{\mathcal{A}}}}{}{#2}{}{#1}}}
\newrobustcmd{\widetildebmmathcalAz}[2][]{\ensuremath{\subp{\widetilde{\bm{\mathcal{A}}}}{}{#2}{}{#1}}}
\newrobustcmd{\acutebmmathcalAz}[2][]{\ensuremath{\subp{\acute{\bm{\mathcal{A}}}}{}{#2}{}{#1}}}
\newrobustcmd{\gravebmmathcalAz}[2][]{\ensuremath{\subp{\grave{\bm{\mathcal{A}}}}{}{#2}{}{#1}}}
\newrobustcmd{\dotbmmathcalAz}[2][]{\ensuremath{\subp{\dot{\bm{\mathcal{A}}}}{}{#2}{}{#1}}}
\newrobustcmd{\ddotbmmathcalAz}[2][]{\ensuremath{\subp{\ddot{\bm{\mathcal{A}}}}{}{#2}{}{#1}}}
\newrobustcmd{\brevebmmathcalAz}[2][]{\ensuremath{\subp{\breve{\bm{\mathcal{A}}}}{}{#2}{}{#1}}}
\newrobustcmd{\barbmmathcalAz}[2][]{\ensuremath{\subp{\bar{\bm{\mathcal{A}}}}{}{#2}{}{#1}}}
\newrobustcmd{\vecbmmathcalAz}[2][]{\ensuremath{\subp{\vec{\bm{\mathcal{A}}}}{}{#2}{}{#1}}}
\newrobustcmd{\mathcalBz}[2][]{\ensuremath{\subp{\mathcal{B}}{}{#2}{}{#1}}}
\newrobustcmd{\hatmathcalBz}[2][]{\ensuremath{\subp{\hat{\mathcal{B}}}{}{#2}{}{#1}}}
\newrobustcmd{\widehatmathcalBz}[2][]{\ensuremath{\subp{\widehat{\mathcal{B}}}{}{#2}{}{#1}}}
\newrobustcmd{\checkmathcalBz}[2][]{\ensuremath{\subp{\check{\mathcal{B}}}{}{#2}{}{#1}}}
\newrobustcmd{\tildemathcalBz}[2][]{\ensuremath{\subp{\tilde{\mathcal{B}}}{}{#2}{}{#1}}}
\newrobustcmd{\widetildemathcalBz}[2][]{\ensuremath{\subp{\widetilde{\mathcal{B}}}{}{#2}{}{#1}}}
\newrobustcmd{\acutemathcalBz}[2][]{\ensuremath{\subp{\acute{\mathcal{B}}}{}{#2}{}{#1}}}
\newrobustcmd{\gravemathcalBz}[2][]{\ensuremath{\subp{\grave{\mathcal{B}}}{}{#2}{}{#1}}}
\newrobustcmd{\dotmathcalBz}[2][]{\ensuremath{\subp{\dot{\mathcal{B}}}{}{#2}{}{#1}}}
\newrobustcmd{\ddotmathcalBz}[2][]{\ensuremath{\subp{\ddot{\mathcal{B}}}{}{#2}{}{#1}}}
\newrobustcmd{\brevemathcalBz}[2][]{\ensuremath{\subp{\breve{\mathcal{B}}}{}{#2}{}{#1}}}
\newrobustcmd{\barmathcalBz}[2][]{\ensuremath{\subp{\bar{\mathcal{B}}}{}{#2}{}{#1}}}
\newrobustcmd{\vecmathcalBz}[2][]{\ensuremath{\subp{\vec{\mathcal{B}}}{}{#2}{}{#1}}}
\newrobustcmd{\bmmathcalBz}[2][]{\ensuremath{\subp{\bm{\mathcal{B}}}{}{#2}{}{#1}}}
\newrobustcmd{\hatbmmathcalBz}[2][]{\ensuremath{\subp{\hat{\bm{\mathcal{B}}}}{}{#2}{}{#1}}}
\newrobustcmd{\widehatbmmathcalBz}[2][]{\ensuremath{\subp{\widehat{\bm{\mathcal{B}}}}{}{#2}{}{#1}}}
\newrobustcmd{\checkbmmathcalBz}[2][]{\ensuremath{\subp{\check{\bm{\mathcal{B}}}}{}{#2}{}{#1}}}
\newrobustcmd{\tildebmmathcalBz}[2][]{\ensuremath{\subp{\tilde{\bm{\mathcal{B}}}}{}{#2}{}{#1}}}
\newrobustcmd{\widetildebmmathcalBz}[2][]{\ensuremath{\subp{\widetilde{\bm{\mathcal{B}}}}{}{#2}{}{#1}}}
\newrobustcmd{\acutebmmathcalBz}[2][]{\ensuremath{\subp{\acute{\bm{\mathcal{B}}}}{}{#2}{}{#1}}}
\newrobustcmd{\gravebmmathcalBz}[2][]{\ensuremath{\subp{\grave{\bm{\mathcal{B}}}}{}{#2}{}{#1}}}
\newrobustcmd{\dotbmmathcalBz}[2][]{\ensuremath{\subp{\dot{\bm{\mathcal{B}}}}{}{#2}{}{#1}}}
\newrobustcmd{\ddotbmmathcalBz}[2][]{\ensuremath{\subp{\ddot{\bm{\mathcal{B}}}}{}{#2}{}{#1}}}
\newrobustcmd{\brevebmmathcalBz}[2][]{\ensuremath{\subp{\breve{\bm{\mathcal{B}}}}{}{#2}{}{#1}}}
\newrobustcmd{\barbmmathcalBz}[2][]{\ensuremath{\subp{\bar{\bm{\mathcal{B}}}}{}{#2}{}{#1}}}
\newrobustcmd{\vecbmmathcalBz}[2][]{\ensuremath{\subp{\vec{\bm{\mathcal{B}}}}{}{#2}{}{#1}}}
\newrobustcmd{\mathcalCz}[2][]{\ensuremath{\subp{\mathcal{C}}{}{#2}{}{#1}}}
\newrobustcmd{\hatmathcalCz}[2][]{\ensuremath{\subp{\hat{\mathcal{C}}}{}{#2}{}{#1}}}
\newrobustcmd{\widehatmathcalCz}[2][]{\ensuremath{\subp{\widehat{\mathcal{C}}}{}{#2}{}{#1}}}
\newrobustcmd{\checkmathcalCz}[2][]{\ensuremath{\subp{\check{\mathcal{C}}}{}{#2}{}{#1}}}
\newrobustcmd{\tildemathcalCz}[2][]{\ensuremath{\subp{\tilde{\mathcal{C}}}{}{#2}{}{#1}}}
\newrobustcmd{\widetildemathcalCz}[2][]{\ensuremath{\subp{\widetilde{\mathcal{C}}}{}{#2}{}{#1}}}
\newrobustcmd{\acutemathcalCz}[2][]{\ensuremath{\subp{\acute{\mathcal{C}}}{}{#2}{}{#1}}}
\newrobustcmd{\gravemathcalCz}[2][]{\ensuremath{\subp{\grave{\mathcal{C}}}{}{#2}{}{#1}}}
\newrobustcmd{\dotmathcalCz}[2][]{\ensuremath{\subp{\dot{\mathcal{C}}}{}{#2}{}{#1}}}
\newrobustcmd{\ddotmathcalCz}[2][]{\ensuremath{\subp{\ddot{\mathcal{C}}}{}{#2}{}{#1}}}
\newrobustcmd{\brevemathcalCz}[2][]{\ensuremath{\subp{\breve{\mathcal{C}}}{}{#2}{}{#1}}}
\newrobustcmd{\barmathcalCz}[2][]{\ensuremath{\subp{\bar{\mathcal{C}}}{}{#2}{}{#1}}}
\newrobustcmd{\vecmathcalCz}[2][]{\ensuremath{\subp{\vec{\mathcal{C}}}{}{#2}{}{#1}}}
\newrobustcmd{\bmmathcalCz}[2][]{\ensuremath{\subp{\bm{\mathcal{C}}}{}{#2}{}{#1}}}
\newrobustcmd{\hatbmmathcalCz}[2][]{\ensuremath{\subp{\hat{\bm{\mathcal{C}}}}{}{#2}{}{#1}}}
\newrobustcmd{\widehatbmmathcalCz}[2][]{\ensuremath{\subp{\widehat{\bm{\mathcal{C}}}}{}{#2}{}{#1}}}
\newrobustcmd{\checkbmmathcalCz}[2][]{\ensuremath{\subp{\check{\bm{\mathcal{C}}}}{}{#2}{}{#1}}}
\newrobustcmd{\tildebmmathcalCz}[2][]{\ensuremath{\subp{\tilde{\bm{\mathcal{C}}}}{}{#2}{}{#1}}}
\newrobustcmd{\widetildebmmathcalCz}[2][]{\ensuremath{\subp{\widetilde{\bm{\mathcal{C}}}}{}{#2}{}{#1}}}
\newrobustcmd{\acutebmmathcalCz}[2][]{\ensuremath{\subp{\acute{\bm{\mathcal{C}}}}{}{#2}{}{#1}}}
\newrobustcmd{\gravebmmathcalCz}[2][]{\ensuremath{\subp{\grave{\bm{\mathcal{C}}}}{}{#2}{}{#1}}}
\newrobustcmd{\dotbmmathcalCz}[2][]{\ensuremath{\subp{\dot{\bm{\mathcal{C}}}}{}{#2}{}{#1}}}
\newrobustcmd{\ddotbmmathcalCz}[2][]{\ensuremath{\subp{\ddot{\bm{\mathcal{C}}}}{}{#2}{}{#1}}}
\newrobustcmd{\brevebmmathcalCz}[2][]{\ensuremath{\subp{\breve{\bm{\mathcal{C}}}}{}{#2}{}{#1}}}
\newrobustcmd{\barbmmathcalCz}[2][]{\ensuremath{\subp{\bar{\bm{\mathcal{C}}}}{}{#2}{}{#1}}}
\newrobustcmd{\vecbmmathcalCz}[2][]{\ensuremath{\subp{\vec{\bm{\mathcal{C}}}}{}{#2}{}{#1}}}
\newrobustcmd{\mathcalDz}[2][]{\ensuremath{\subp{\mathcal{D}}{}{#2}{}{#1}}}
\newrobustcmd{\hatmathcalDz}[2][]{\ensuremath{\subp{\hat{\mathcal{D}}}{}{#2}{}{#1}}}
\newrobustcmd{\widehatmathcalDz}[2][]{\ensuremath{\subp{\widehat{\mathcal{D}}}{}{#2}{}{#1}}}
\newrobustcmd{\checkmathcalDz}[2][]{\ensuremath{\subp{\check{\mathcal{D}}}{}{#2}{}{#1}}}
\newrobustcmd{\tildemathcalDz}[2][]{\ensuremath{\subp{\tilde{\mathcal{D}}}{}{#2}{}{#1}}}
\newrobustcmd{\widetildemathcalDz}[2][]{\ensuremath{\subp{\widetilde{\mathcal{D}}}{}{#2}{}{#1}}}
\newrobustcmd{\acutemathcalDz}[2][]{\ensuremath{\subp{\acute{\mathcal{D}}}{}{#2}{}{#1}}}
\newrobustcmd{\gravemathcalDz}[2][]{\ensuremath{\subp{\grave{\mathcal{D}}}{}{#2}{}{#1}}}
\newrobustcmd{\dotmathcalDz}[2][]{\ensuremath{\subp{\dot{\mathcal{D}}}{}{#2}{}{#1}}}
\newrobustcmd{\ddotmathcalDz}[2][]{\ensuremath{\subp{\ddot{\mathcal{D}}}{}{#2}{}{#1}}}
\newrobustcmd{\brevemathcalDz}[2][]{\ensuremath{\subp{\breve{\mathcal{D}}}{}{#2}{}{#1}}}
\newrobustcmd{\barmathcalDz}[2][]{\ensuremath{\subp{\bar{\mathcal{D}}}{}{#2}{}{#1}}}
\newrobustcmd{\vecmathcalDz}[2][]{\ensuremath{\subp{\vec{\mathcal{D}}}{}{#2}{}{#1}}}
\newrobustcmd{\bmmathcalDz}[2][]{\ensuremath{\subp{\bm{\mathcal{D}}}{}{#2}{}{#1}}}
\newrobustcmd{\hatbmmathcalDz}[2][]{\ensuremath{\subp{\hat{\bm{\mathcal{D}}}}{}{#2}{}{#1}}}
\newrobustcmd{\widehatbmmathcalDz}[2][]{\ensuremath{\subp{\widehat{\bm{\mathcal{D}}}}{}{#2}{}{#1}}}
\newrobustcmd{\checkbmmathcalDz}[2][]{\ensuremath{\subp{\check{\bm{\mathcal{D}}}}{}{#2}{}{#1}}}
\newrobustcmd{\tildebmmathcalDz}[2][]{\ensuremath{\subp{\tilde{\bm{\mathcal{D}}}}{}{#2}{}{#1}}}
\newrobustcmd{\widetildebmmathcalDz}[2][]{\ensuremath{\subp{\widetilde{\bm{\mathcal{D}}}}{}{#2}{}{#1}}}
\newrobustcmd{\acutebmmathcalDz}[2][]{\ensuremath{\subp{\acute{\bm{\mathcal{D}}}}{}{#2}{}{#1}}}
\newrobustcmd{\gravebmmathcalDz}[2][]{\ensuremath{\subp{\grave{\bm{\mathcal{D}}}}{}{#2}{}{#1}}}
\newrobustcmd{\dotbmmathcalDz}[2][]{\ensuremath{\subp{\dot{\bm{\mathcal{D}}}}{}{#2}{}{#1}}}
\newrobustcmd{\ddotbmmathcalDz}[2][]{\ensuremath{\subp{\ddot{\bm{\mathcal{D}}}}{}{#2}{}{#1}}}
\newrobustcmd{\brevebmmathcalDz}[2][]{\ensuremath{\subp{\breve{\bm{\mathcal{D}}}}{}{#2}{}{#1}}}
\newrobustcmd{\barbmmathcalDz}[2][]{\ensuremath{\subp{\bar{\bm{\mathcal{D}}}}{}{#2}{}{#1}}}
\newrobustcmd{\vecbmmathcalDz}[2][]{\ensuremath{\subp{\vec{\bm{\mathcal{D}}}}{}{#2}{}{#1}}}
\newrobustcmd{\mathcalEz}[2][]{\ensuremath{\subp{\mathcal{E}}{}{#2}{}{#1}}}
\newrobustcmd{\hatmathcalEz}[2][]{\ensuremath{\subp{\hat{\mathcal{E}}}{}{#2}{}{#1}}}
\newrobustcmd{\widehatmathcalEz}[2][]{\ensuremath{\subp{\widehat{\mathcal{E}}}{}{#2}{}{#1}}}
\newrobustcmd{\checkmathcalEz}[2][]{\ensuremath{\subp{\check{\mathcal{E}}}{}{#2}{}{#1}}}
\newrobustcmd{\tildemathcalEz}[2][]{\ensuremath{\subp{\tilde{\mathcal{E}}}{}{#2}{}{#1}}}
\newrobustcmd{\widetildemathcalEz}[2][]{\ensuremath{\subp{\widetilde{\mathcal{E}}}{}{#2}{}{#1}}}
\newrobustcmd{\acutemathcalEz}[2][]{\ensuremath{\subp{\acute{\mathcal{E}}}{}{#2}{}{#1}}}
\newrobustcmd{\gravemathcalEz}[2][]{\ensuremath{\subp{\grave{\mathcal{E}}}{}{#2}{}{#1}}}
\newrobustcmd{\dotmathcalEz}[2][]{\ensuremath{\subp{\dot{\mathcal{E}}}{}{#2}{}{#1}}}
\newrobustcmd{\ddotmathcalEz}[2][]{\ensuremath{\subp{\ddot{\mathcal{E}}}{}{#2}{}{#1}}}
\newrobustcmd{\brevemathcalEz}[2][]{\ensuremath{\subp{\breve{\mathcal{E}}}{}{#2}{}{#1}}}
\newrobustcmd{\barmathcalEz}[2][]{\ensuremath{\subp{\bar{\mathcal{E}}}{}{#2}{}{#1}}}
\newrobustcmd{\vecmathcalEz}[2][]{\ensuremath{\subp{\vec{\mathcal{E}}}{}{#2}{}{#1}}}
\newrobustcmd{\bmmathcalEz}[2][]{\ensuremath{\subp{\bm{\mathcal{E}}}{}{#2}{}{#1}}}
\newrobustcmd{\hatbmmathcalEz}[2][]{\ensuremath{\subp{\hat{\bm{\mathcal{E}}}}{}{#2}{}{#1}}}
\newrobustcmd{\widehatbmmathcalEz}[2][]{\ensuremath{\subp{\widehat{\bm{\mathcal{E}}}}{}{#2}{}{#1}}}
\newrobustcmd{\checkbmmathcalEz}[2][]{\ensuremath{\subp{\check{\bm{\mathcal{E}}}}{}{#2}{}{#1}}}
\newrobustcmd{\tildebmmathcalEz}[2][]{\ensuremath{\subp{\tilde{\bm{\mathcal{E}}}}{}{#2}{}{#1}}}
\newrobustcmd{\widetildebmmathcalEz}[2][]{\ensuremath{\subp{\widetilde{\bm{\mathcal{E}}}}{}{#2}{}{#1}}}
\newrobustcmd{\acutebmmathcalEz}[2][]{\ensuremath{\subp{\acute{\bm{\mathcal{E}}}}{}{#2}{}{#1}}}
\newrobustcmd{\gravebmmathcalEz}[2][]{\ensuremath{\subp{\grave{\bm{\mathcal{E}}}}{}{#2}{}{#1}}}
\newrobustcmd{\dotbmmathcalEz}[2][]{\ensuremath{\subp{\dot{\bm{\mathcal{E}}}}{}{#2}{}{#1}}}
\newrobustcmd{\ddotbmmathcalEz}[2][]{\ensuremath{\subp{\ddot{\bm{\mathcal{E}}}}{}{#2}{}{#1}}}
\newrobustcmd{\brevebmmathcalEz}[2][]{\ensuremath{\subp{\breve{\bm{\mathcal{E}}}}{}{#2}{}{#1}}}
\newrobustcmd{\barbmmathcalEz}[2][]{\ensuremath{\subp{\bar{\bm{\mathcal{E}}}}{}{#2}{}{#1}}}
\newrobustcmd{\vecbmmathcalEz}[2][]{\ensuremath{\subp{\vec{\bm{\mathcal{E}}}}{}{#2}{}{#1}}}
\newrobustcmd{\mathcalFz}[2][]{\ensuremath{\subp{\mathcal{F}}{}{#2}{}{#1}}}
\newrobustcmd{\hatmathcalFz}[2][]{\ensuremath{\subp{\hat{\mathcal{F}}}{}{#2}{}{#1}}}
\newrobustcmd{\widehatmathcalFz}[2][]{\ensuremath{\subp{\widehat{\mathcal{F}}}{}{#2}{}{#1}}}
\newrobustcmd{\checkmathcalFz}[2][]{\ensuremath{\subp{\check{\mathcal{F}}}{}{#2}{}{#1}}}
\newrobustcmd{\tildemathcalFz}[2][]{\ensuremath{\subp{\tilde{\mathcal{F}}}{}{#2}{}{#1}}}
\newrobustcmd{\widetildemathcalFz}[2][]{\ensuremath{\subp{\widetilde{\mathcal{F}}}{}{#2}{}{#1}}}
\newrobustcmd{\acutemathcalFz}[2][]{\ensuremath{\subp{\acute{\mathcal{F}}}{}{#2}{}{#1}}}
\newrobustcmd{\gravemathcalFz}[2][]{\ensuremath{\subp{\grave{\mathcal{F}}}{}{#2}{}{#1}}}
\newrobustcmd{\dotmathcalFz}[2][]{\ensuremath{\subp{\dot{\mathcal{F}}}{}{#2}{}{#1}}}
\newrobustcmd{\ddotmathcalFz}[2][]{\ensuremath{\subp{\ddot{\mathcal{F}}}{}{#2}{}{#1}}}
\newrobustcmd{\brevemathcalFz}[2][]{\ensuremath{\subp{\breve{\mathcal{F}}}{}{#2}{}{#1}}}
\newrobustcmd{\barmathcalFz}[2][]{\ensuremath{\subp{\bar{\mathcal{F}}}{}{#2}{}{#1}}}
\newrobustcmd{\vecmathcalFz}[2][]{\ensuremath{\subp{\vec{\mathcal{F}}}{}{#2}{}{#1}}}
\newrobustcmd{\bmmathcalFz}[2][]{\ensuremath{\subp{\bm{\mathcal{F}}}{}{#2}{}{#1}}}
\newrobustcmd{\hatbmmathcalFz}[2][]{\ensuremath{\subp{\hat{\bm{\mathcal{F}}}}{}{#2}{}{#1}}}
\newrobustcmd{\widehatbmmathcalFz}[2][]{\ensuremath{\subp{\widehat{\bm{\mathcal{F}}}}{}{#2}{}{#1}}}
\newrobustcmd{\checkbmmathcalFz}[2][]{\ensuremath{\subp{\check{\bm{\mathcal{F}}}}{}{#2}{}{#1}}}
\newrobustcmd{\tildebmmathcalFz}[2][]{\ensuremath{\subp{\tilde{\bm{\mathcal{F}}}}{}{#2}{}{#1}}}
\newrobustcmd{\widetildebmmathcalFz}[2][]{\ensuremath{\subp{\widetilde{\bm{\mathcal{F}}}}{}{#2}{}{#1}}}
\newrobustcmd{\acutebmmathcalFz}[2][]{\ensuremath{\subp{\acute{\bm{\mathcal{F}}}}{}{#2}{}{#1}}}
\newrobustcmd{\gravebmmathcalFz}[2][]{\ensuremath{\subp{\grave{\bm{\mathcal{F}}}}{}{#2}{}{#1}}}
\newrobustcmd{\dotbmmathcalFz}[2][]{\ensuremath{\subp{\dot{\bm{\mathcal{F}}}}{}{#2}{}{#1}}}
\newrobustcmd{\ddotbmmathcalFz}[2][]{\ensuremath{\subp{\ddot{\bm{\mathcal{F}}}}{}{#2}{}{#1}}}
\newrobustcmd{\brevebmmathcalFz}[2][]{\ensuremath{\subp{\breve{\bm{\mathcal{F}}}}{}{#2}{}{#1}}}
\newrobustcmd{\barbmmathcalFz}[2][]{\ensuremath{\subp{\bar{\bm{\mathcal{F}}}}{}{#2}{}{#1}}}
\newrobustcmd{\vecbmmathcalFz}[2][]{\ensuremath{\subp{\vec{\bm{\mathcal{F}}}}{}{#2}{}{#1}}}
\newrobustcmd{\mathcalGz}[2][]{\ensuremath{\subp{\mathcal{G}}{}{#2}{}{#1}}}
\newrobustcmd{\hatmathcalGz}[2][]{\ensuremath{\subp{\hat{\mathcal{G}}}{}{#2}{}{#1}}}
\newrobustcmd{\widehatmathcalGz}[2][]{\ensuremath{\subp{\widehat{\mathcal{G}}}{}{#2}{}{#1}}}
\newrobustcmd{\checkmathcalGz}[2][]{\ensuremath{\subp{\check{\mathcal{G}}}{}{#2}{}{#1}}}
\newrobustcmd{\tildemathcalGz}[2][]{\ensuremath{\subp{\tilde{\mathcal{G}}}{}{#2}{}{#1}}}
\newrobustcmd{\widetildemathcalGz}[2][]{\ensuremath{\subp{\widetilde{\mathcal{G}}}{}{#2}{}{#1}}}
\newrobustcmd{\acutemathcalGz}[2][]{\ensuremath{\subp{\acute{\mathcal{G}}}{}{#2}{}{#1}}}
\newrobustcmd{\gravemathcalGz}[2][]{\ensuremath{\subp{\grave{\mathcal{G}}}{}{#2}{}{#1}}}
\newrobustcmd{\dotmathcalGz}[2][]{\ensuremath{\subp{\dot{\mathcal{G}}}{}{#2}{}{#1}}}
\newrobustcmd{\ddotmathcalGz}[2][]{\ensuremath{\subp{\ddot{\mathcal{G}}}{}{#2}{}{#1}}}
\newrobustcmd{\brevemathcalGz}[2][]{\ensuremath{\subp{\breve{\mathcal{G}}}{}{#2}{}{#1}}}
\newrobustcmd{\barmathcalGz}[2][]{\ensuremath{\subp{\bar{\mathcal{G}}}{}{#2}{}{#1}}}
\newrobustcmd{\vecmathcalGz}[2][]{\ensuremath{\subp{\vec{\mathcal{G}}}{}{#2}{}{#1}}}
\newrobustcmd{\bmmathcalGz}[2][]{\ensuremath{\subp{\bm{\mathcal{G}}}{}{#2}{}{#1}}}
\newrobustcmd{\hatbmmathcalGz}[2][]{\ensuremath{\subp{\hat{\bm{\mathcal{G}}}}{}{#2}{}{#1}}}
\newrobustcmd{\widehatbmmathcalGz}[2][]{\ensuremath{\subp{\widehat{\bm{\mathcal{G}}}}{}{#2}{}{#1}}}
\newrobustcmd{\checkbmmathcalGz}[2][]{\ensuremath{\subp{\check{\bm{\mathcal{G}}}}{}{#2}{}{#1}}}
\newrobustcmd{\tildebmmathcalGz}[2][]{\ensuremath{\subp{\tilde{\bm{\mathcal{G}}}}{}{#2}{}{#1}}}
\newrobustcmd{\widetildebmmathcalGz}[2][]{\ensuremath{\subp{\widetilde{\bm{\mathcal{G}}}}{}{#2}{}{#1}}}
\newrobustcmd{\acutebmmathcalGz}[2][]{\ensuremath{\subp{\acute{\bm{\mathcal{G}}}}{}{#2}{}{#1}}}
\newrobustcmd{\gravebmmathcalGz}[2][]{\ensuremath{\subp{\grave{\bm{\mathcal{G}}}}{}{#2}{}{#1}}}
\newrobustcmd{\dotbmmathcalGz}[2][]{\ensuremath{\subp{\dot{\bm{\mathcal{G}}}}{}{#2}{}{#1}}}
\newrobustcmd{\ddotbmmathcalGz}[2][]{\ensuremath{\subp{\ddot{\bm{\mathcal{G}}}}{}{#2}{}{#1}}}
\newrobustcmd{\brevebmmathcalGz}[2][]{\ensuremath{\subp{\breve{\bm{\mathcal{G}}}}{}{#2}{}{#1}}}
\newrobustcmd{\barbmmathcalGz}[2][]{\ensuremath{\subp{\bar{\bm{\mathcal{G}}}}{}{#2}{}{#1}}}
\newrobustcmd{\vecbmmathcalGz}[2][]{\ensuremath{\subp{\vec{\bm{\mathcal{G}}}}{}{#2}{}{#1}}}
\newrobustcmd{\mathcalHz}[2][]{\ensuremath{\subp{\mathcal{H}}{}{#2}{}{#1}}}
\newrobustcmd{\hatmathcalHz}[2][]{\ensuremath{\subp{\hat{\mathcal{H}}}{}{#2}{}{#1}}}
\newrobustcmd{\widehatmathcalHz}[2][]{\ensuremath{\subp{\widehat{\mathcal{H}}}{}{#2}{}{#1}}}
\newrobustcmd{\checkmathcalHz}[2][]{\ensuremath{\subp{\check{\mathcal{H}}}{}{#2}{}{#1}}}
\newrobustcmd{\tildemathcalHz}[2][]{\ensuremath{\subp{\tilde{\mathcal{H}}}{}{#2}{}{#1}}}
\newrobustcmd{\widetildemathcalHz}[2][]{\ensuremath{\subp{\widetilde{\mathcal{H}}}{}{#2}{}{#1}}}
\newrobustcmd{\acutemathcalHz}[2][]{\ensuremath{\subp{\acute{\mathcal{H}}}{}{#2}{}{#1}}}
\newrobustcmd{\gravemathcalHz}[2][]{\ensuremath{\subp{\grave{\mathcal{H}}}{}{#2}{}{#1}}}
\newrobustcmd{\dotmathcalHz}[2][]{\ensuremath{\subp{\dot{\mathcal{H}}}{}{#2}{}{#1}}}
\newrobustcmd{\ddotmathcalHz}[2][]{\ensuremath{\subp{\ddot{\mathcal{H}}}{}{#2}{}{#1}}}
\newrobustcmd{\brevemathcalHz}[2][]{\ensuremath{\subp{\breve{\mathcal{H}}}{}{#2}{}{#1}}}
\newrobustcmd{\barmathcalHz}[2][]{\ensuremath{\subp{\bar{\mathcal{H}}}{}{#2}{}{#1}}}
\newrobustcmd{\vecmathcalHz}[2][]{\ensuremath{\subp{\vec{\mathcal{H}}}{}{#2}{}{#1}}}
\newrobustcmd{\bmmathcalHz}[2][]{\ensuremath{\subp{\bm{\mathcal{H}}}{}{#2}{}{#1}}}
\newrobustcmd{\hatbmmathcalHz}[2][]{\ensuremath{\subp{\hat{\bm{\mathcal{H}}}}{}{#2}{}{#1}}}
\newrobustcmd{\widehatbmmathcalHz}[2][]{\ensuremath{\subp{\widehat{\bm{\mathcal{H}}}}{}{#2}{}{#1}}}
\newrobustcmd{\checkbmmathcalHz}[2][]{\ensuremath{\subp{\check{\bm{\mathcal{H}}}}{}{#2}{}{#1}}}
\newrobustcmd{\tildebmmathcalHz}[2][]{\ensuremath{\subp{\tilde{\bm{\mathcal{H}}}}{}{#2}{}{#1}}}
\newrobustcmd{\widetildebmmathcalHz}[2][]{\ensuremath{\subp{\widetilde{\bm{\mathcal{H}}}}{}{#2}{}{#1}}}
\newrobustcmd{\acutebmmathcalHz}[2][]{\ensuremath{\subp{\acute{\bm{\mathcal{H}}}}{}{#2}{}{#1}}}
\newrobustcmd{\gravebmmathcalHz}[2][]{\ensuremath{\subp{\grave{\bm{\mathcal{H}}}}{}{#2}{}{#1}}}
\newrobustcmd{\dotbmmathcalHz}[2][]{\ensuremath{\subp{\dot{\bm{\mathcal{H}}}}{}{#2}{}{#1}}}
\newrobustcmd{\ddotbmmathcalHz}[2][]{\ensuremath{\subp{\ddot{\bm{\mathcal{H}}}}{}{#2}{}{#1}}}
\newrobustcmd{\brevebmmathcalHz}[2][]{\ensuremath{\subp{\breve{\bm{\mathcal{H}}}}{}{#2}{}{#1}}}
\newrobustcmd{\barbmmathcalHz}[2][]{\ensuremath{\subp{\bar{\bm{\mathcal{H}}}}{}{#2}{}{#1}}}
\newrobustcmd{\vecbmmathcalHz}[2][]{\ensuremath{\subp{\vec{\bm{\mathcal{H}}}}{}{#2}{}{#1}}}
\newrobustcmd{\mathcalIz}[2][]{\ensuremath{\subp{\mathcal{I}}{}{#2}{}{#1}}}
\newrobustcmd{\hatmathcalIz}[2][]{\ensuremath{\subp{\hat{\mathcal{I}}}{}{#2}{}{#1}}}
\newrobustcmd{\widehatmathcalIz}[2][]{\ensuremath{\subp{\widehat{\mathcal{I}}}{}{#2}{}{#1}}}
\newrobustcmd{\checkmathcalIz}[2][]{\ensuremath{\subp{\check{\mathcal{I}}}{}{#2}{}{#1}}}
\newrobustcmd{\tildemathcalIz}[2][]{\ensuremath{\subp{\tilde{\mathcal{I}}}{}{#2}{}{#1}}}
\newrobustcmd{\widetildemathcalIz}[2][]{\ensuremath{\subp{\widetilde{\mathcal{I}}}{}{#2}{}{#1}}}
\newrobustcmd{\acutemathcalIz}[2][]{\ensuremath{\subp{\acute{\mathcal{I}}}{}{#2}{}{#1}}}
\newrobustcmd{\gravemathcalIz}[2][]{\ensuremath{\subp{\grave{\mathcal{I}}}{}{#2}{}{#1}}}
\newrobustcmd{\dotmathcalIz}[2][]{\ensuremath{\subp{\dot{\mathcal{I}}}{}{#2}{}{#1}}}
\newrobustcmd{\ddotmathcalIz}[2][]{\ensuremath{\subp{\ddot{\mathcal{I}}}{}{#2}{}{#1}}}
\newrobustcmd{\brevemathcalIz}[2][]{\ensuremath{\subp{\breve{\mathcal{I}}}{}{#2}{}{#1}}}
\newrobustcmd{\barmathcalIz}[2][]{\ensuremath{\subp{\bar{\mathcal{I}}}{}{#2}{}{#1}}}
\newrobustcmd{\vecmathcalIz}[2][]{\ensuremath{\subp{\vec{\mathcal{I}}}{}{#2}{}{#1}}}
\newrobustcmd{\bmmathcalIz}[2][]{\ensuremath{\subp{\bm{\mathcal{I}}}{}{#2}{}{#1}}}
\newrobustcmd{\hatbmmathcalIz}[2][]{\ensuremath{\subp{\hat{\bm{\mathcal{I}}}}{}{#2}{}{#1}}}
\newrobustcmd{\widehatbmmathcalIz}[2][]{\ensuremath{\subp{\widehat{\bm{\mathcal{I}}}}{}{#2}{}{#1}}}
\newrobustcmd{\checkbmmathcalIz}[2][]{\ensuremath{\subp{\check{\bm{\mathcal{I}}}}{}{#2}{}{#1}}}
\newrobustcmd{\tildebmmathcalIz}[2][]{\ensuremath{\subp{\tilde{\bm{\mathcal{I}}}}{}{#2}{}{#1}}}
\newrobustcmd{\widetildebmmathcalIz}[2][]{\ensuremath{\subp{\widetilde{\bm{\mathcal{I}}}}{}{#2}{}{#1}}}
\newrobustcmd{\acutebmmathcalIz}[2][]{\ensuremath{\subp{\acute{\bm{\mathcal{I}}}}{}{#2}{}{#1}}}
\newrobustcmd{\gravebmmathcalIz}[2][]{\ensuremath{\subp{\grave{\bm{\mathcal{I}}}}{}{#2}{}{#1}}}
\newrobustcmd{\dotbmmathcalIz}[2][]{\ensuremath{\subp{\dot{\bm{\mathcal{I}}}}{}{#2}{}{#1}}}
\newrobustcmd{\ddotbmmathcalIz}[2][]{\ensuremath{\subp{\ddot{\bm{\mathcal{I}}}}{}{#2}{}{#1}}}
\newrobustcmd{\brevebmmathcalIz}[2][]{\ensuremath{\subp{\breve{\bm{\mathcal{I}}}}{}{#2}{}{#1}}}
\newrobustcmd{\barbmmathcalIz}[2][]{\ensuremath{\subp{\bar{\bm{\mathcal{I}}}}{}{#2}{}{#1}}}
\newrobustcmd{\vecbmmathcalIz}[2][]{\ensuremath{\subp{\vec{\bm{\mathcal{I}}}}{}{#2}{}{#1}}}
\newrobustcmd{\mathcalJz}[2][]{\ensuremath{\subp{\mathcal{J}}{}{#2}{}{#1}}}
\newrobustcmd{\hatmathcalJz}[2][]{\ensuremath{\subp{\hat{\mathcal{J}}}{}{#2}{}{#1}}}
\newrobustcmd{\widehatmathcalJz}[2][]{\ensuremath{\subp{\widehat{\mathcal{J}}}{}{#2}{}{#1}}}
\newrobustcmd{\checkmathcalJz}[2][]{\ensuremath{\subp{\check{\mathcal{J}}}{}{#2}{}{#1}}}
\newrobustcmd{\tildemathcalJz}[2][]{\ensuremath{\subp{\tilde{\mathcal{J}}}{}{#2}{}{#1}}}
\newrobustcmd{\widetildemathcalJz}[2][]{\ensuremath{\subp{\widetilde{\mathcal{J}}}{}{#2}{}{#1}}}
\newrobustcmd{\acutemathcalJz}[2][]{\ensuremath{\subp{\acute{\mathcal{J}}}{}{#2}{}{#1}}}
\newrobustcmd{\gravemathcalJz}[2][]{\ensuremath{\subp{\grave{\mathcal{J}}}{}{#2}{}{#1}}}
\newrobustcmd{\dotmathcalJz}[2][]{\ensuremath{\subp{\dot{\mathcal{J}}}{}{#2}{}{#1}}}
\newrobustcmd{\ddotmathcalJz}[2][]{\ensuremath{\subp{\ddot{\mathcal{J}}}{}{#2}{}{#1}}}
\newrobustcmd{\brevemathcalJz}[2][]{\ensuremath{\subp{\breve{\mathcal{J}}}{}{#2}{}{#1}}}
\newrobustcmd{\barmathcalJz}[2][]{\ensuremath{\subp{\bar{\mathcal{J}}}{}{#2}{}{#1}}}
\newrobustcmd{\vecmathcalJz}[2][]{\ensuremath{\subp{\vec{\mathcal{J}}}{}{#2}{}{#1}}}
\newrobustcmd{\bmmathcalJz}[2][]{\ensuremath{\subp{\bm{\mathcal{J}}}{}{#2}{}{#1}}}
\newrobustcmd{\hatbmmathcalJz}[2][]{\ensuremath{\subp{\hat{\bm{\mathcal{J}}}}{}{#2}{}{#1}}}
\newrobustcmd{\widehatbmmathcalJz}[2][]{\ensuremath{\subp{\widehat{\bm{\mathcal{J}}}}{}{#2}{}{#1}}}
\newrobustcmd{\checkbmmathcalJz}[2][]{\ensuremath{\subp{\check{\bm{\mathcal{J}}}}{}{#2}{}{#1}}}
\newrobustcmd{\tildebmmathcalJz}[2][]{\ensuremath{\subp{\tilde{\bm{\mathcal{J}}}}{}{#2}{}{#1}}}
\newrobustcmd{\widetildebmmathcalJz}[2][]{\ensuremath{\subp{\widetilde{\bm{\mathcal{J}}}}{}{#2}{}{#1}}}
\newrobustcmd{\acutebmmathcalJz}[2][]{\ensuremath{\subp{\acute{\bm{\mathcal{J}}}}{}{#2}{}{#1}}}
\newrobustcmd{\gravebmmathcalJz}[2][]{\ensuremath{\subp{\grave{\bm{\mathcal{J}}}}{}{#2}{}{#1}}}
\newrobustcmd{\dotbmmathcalJz}[2][]{\ensuremath{\subp{\dot{\bm{\mathcal{J}}}}{}{#2}{}{#1}}}
\newrobustcmd{\ddotbmmathcalJz}[2][]{\ensuremath{\subp{\ddot{\bm{\mathcal{J}}}}{}{#2}{}{#1}}}
\newrobustcmd{\brevebmmathcalJz}[2][]{\ensuremath{\subp{\breve{\bm{\mathcal{J}}}}{}{#2}{}{#1}}}
\newrobustcmd{\barbmmathcalJz}[2][]{\ensuremath{\subp{\bar{\bm{\mathcal{J}}}}{}{#2}{}{#1}}}
\newrobustcmd{\vecbmmathcalJz}[2][]{\ensuremath{\subp{\vec{\bm{\mathcal{J}}}}{}{#2}{}{#1}}}
\newrobustcmd{\mathcalKz}[2][]{\ensuremath{\subp{\mathcal{K}}{}{#2}{}{#1}}}
\newrobustcmd{\hatmathcalKz}[2][]{\ensuremath{\subp{\hat{\mathcal{K}}}{}{#2}{}{#1}}}
\newrobustcmd{\widehatmathcalKz}[2][]{\ensuremath{\subp{\widehat{\mathcal{K}}}{}{#2}{}{#1}}}
\newrobustcmd{\checkmathcalKz}[2][]{\ensuremath{\subp{\check{\mathcal{K}}}{}{#2}{}{#1}}}
\newrobustcmd{\tildemathcalKz}[2][]{\ensuremath{\subp{\tilde{\mathcal{K}}}{}{#2}{}{#1}}}
\newrobustcmd{\widetildemathcalKz}[2][]{\ensuremath{\subp{\widetilde{\mathcal{K}}}{}{#2}{}{#1}}}
\newrobustcmd{\acutemathcalKz}[2][]{\ensuremath{\subp{\acute{\mathcal{K}}}{}{#2}{}{#1}}}
\newrobustcmd{\gravemathcalKz}[2][]{\ensuremath{\subp{\grave{\mathcal{K}}}{}{#2}{}{#1}}}
\newrobustcmd{\dotmathcalKz}[2][]{\ensuremath{\subp{\dot{\mathcal{K}}}{}{#2}{}{#1}}}
\newrobustcmd{\ddotmathcalKz}[2][]{\ensuremath{\subp{\ddot{\mathcal{K}}}{}{#2}{}{#1}}}
\newrobustcmd{\brevemathcalKz}[2][]{\ensuremath{\subp{\breve{\mathcal{K}}}{}{#2}{}{#1}}}
\newrobustcmd{\barmathcalKz}[2][]{\ensuremath{\subp{\bar{\mathcal{K}}}{}{#2}{}{#1}}}
\newrobustcmd{\vecmathcalKz}[2][]{\ensuremath{\subp{\vec{\mathcal{K}}}{}{#2}{}{#1}}}
\newrobustcmd{\bmmathcalKz}[2][]{\ensuremath{\subp{\bm{\mathcal{K}}}{}{#2}{}{#1}}}
\newrobustcmd{\hatbmmathcalKz}[2][]{\ensuremath{\subp{\hat{\bm{\mathcal{K}}}}{}{#2}{}{#1}}}
\newrobustcmd{\widehatbmmathcalKz}[2][]{\ensuremath{\subp{\widehat{\bm{\mathcal{K}}}}{}{#2}{}{#1}}}
\newrobustcmd{\checkbmmathcalKz}[2][]{\ensuremath{\subp{\check{\bm{\mathcal{K}}}}{}{#2}{}{#1}}}
\newrobustcmd{\tildebmmathcalKz}[2][]{\ensuremath{\subp{\tilde{\bm{\mathcal{K}}}}{}{#2}{}{#1}}}
\newrobustcmd{\widetildebmmathcalKz}[2][]{\ensuremath{\subp{\widetilde{\bm{\mathcal{K}}}}{}{#2}{}{#1}}}
\newrobustcmd{\acutebmmathcalKz}[2][]{\ensuremath{\subp{\acute{\bm{\mathcal{K}}}}{}{#2}{}{#1}}}
\newrobustcmd{\gravebmmathcalKz}[2][]{\ensuremath{\subp{\grave{\bm{\mathcal{K}}}}{}{#2}{}{#1}}}
\newrobustcmd{\dotbmmathcalKz}[2][]{\ensuremath{\subp{\dot{\bm{\mathcal{K}}}}{}{#2}{}{#1}}}
\newrobustcmd{\ddotbmmathcalKz}[2][]{\ensuremath{\subp{\ddot{\bm{\mathcal{K}}}}{}{#2}{}{#1}}}
\newrobustcmd{\brevebmmathcalKz}[2][]{\ensuremath{\subp{\breve{\bm{\mathcal{K}}}}{}{#2}{}{#1}}}
\newrobustcmd{\barbmmathcalKz}[2][]{\ensuremath{\subp{\bar{\bm{\mathcal{K}}}}{}{#2}{}{#1}}}
\newrobustcmd{\vecbmmathcalKz}[2][]{\ensuremath{\subp{\vec{\bm{\mathcal{K}}}}{}{#2}{}{#1}}}
\newrobustcmd{\mathcalLz}[2][]{\ensuremath{\subp{\mathcal{L}}{}{#2}{}{#1}}}
\newrobustcmd{\hatmathcalLz}[2][]{\ensuremath{\subp{\hat{\mathcal{L}}}{}{#2}{}{#1}}}
\newrobustcmd{\widehatmathcalLz}[2][]{\ensuremath{\subp{\widehat{\mathcal{L}}}{}{#2}{}{#1}}}
\newrobustcmd{\checkmathcalLz}[2][]{\ensuremath{\subp{\check{\mathcal{L}}}{}{#2}{}{#1}}}
\newrobustcmd{\tildemathcalLz}[2][]{\ensuremath{\subp{\tilde{\mathcal{L}}}{}{#2}{}{#1}}}
\newrobustcmd{\widetildemathcalLz}[2][]{\ensuremath{\subp{\widetilde{\mathcal{L}}}{}{#2}{}{#1}}}
\newrobustcmd{\acutemathcalLz}[2][]{\ensuremath{\subp{\acute{\mathcal{L}}}{}{#2}{}{#1}}}
\newrobustcmd{\gravemathcalLz}[2][]{\ensuremath{\subp{\grave{\mathcal{L}}}{}{#2}{}{#1}}}
\newrobustcmd{\dotmathcalLz}[2][]{\ensuremath{\subp{\dot{\mathcal{L}}}{}{#2}{}{#1}}}
\newrobustcmd{\ddotmathcalLz}[2][]{\ensuremath{\subp{\ddot{\mathcal{L}}}{}{#2}{}{#1}}}
\newrobustcmd{\brevemathcalLz}[2][]{\ensuremath{\subp{\breve{\mathcal{L}}}{}{#2}{}{#1}}}
\newrobustcmd{\barmathcalLz}[2][]{\ensuremath{\subp{\bar{\mathcal{L}}}{}{#2}{}{#1}}}
\newrobustcmd{\vecmathcalLz}[2][]{\ensuremath{\subp{\vec{\mathcal{L}}}{}{#2}{}{#1}}}
\newrobustcmd{\bmmathcalLz}[2][]{\ensuremath{\subp{\bm{\mathcal{L}}}{}{#2}{}{#1}}}
\newrobustcmd{\hatbmmathcalLz}[2][]{\ensuremath{\subp{\hat{\bm{\mathcal{L}}}}{}{#2}{}{#1}}}
\newrobustcmd{\widehatbmmathcalLz}[2][]{\ensuremath{\subp{\widehat{\bm{\mathcal{L}}}}{}{#2}{}{#1}}}
\newrobustcmd{\checkbmmathcalLz}[2][]{\ensuremath{\subp{\check{\bm{\mathcal{L}}}}{}{#2}{}{#1}}}
\newrobustcmd{\tildebmmathcalLz}[2][]{\ensuremath{\subp{\tilde{\bm{\mathcal{L}}}}{}{#2}{}{#1}}}
\newrobustcmd{\widetildebmmathcalLz}[2][]{\ensuremath{\subp{\widetilde{\bm{\mathcal{L}}}}{}{#2}{}{#1}}}
\newrobustcmd{\acutebmmathcalLz}[2][]{\ensuremath{\subp{\acute{\bm{\mathcal{L}}}}{}{#2}{}{#1}}}
\newrobustcmd{\gravebmmathcalLz}[2][]{\ensuremath{\subp{\grave{\bm{\mathcal{L}}}}{}{#2}{}{#1}}}
\newrobustcmd{\dotbmmathcalLz}[2][]{\ensuremath{\subp{\dot{\bm{\mathcal{L}}}}{}{#2}{}{#1}}}
\newrobustcmd{\ddotbmmathcalLz}[2][]{\ensuremath{\subp{\ddot{\bm{\mathcal{L}}}}{}{#2}{}{#1}}}
\newrobustcmd{\brevebmmathcalLz}[2][]{\ensuremath{\subp{\breve{\bm{\mathcal{L}}}}{}{#2}{}{#1}}}
\newrobustcmd{\barbmmathcalLz}[2][]{\ensuremath{\subp{\bar{\bm{\mathcal{L}}}}{}{#2}{}{#1}}}
\newrobustcmd{\vecbmmathcalLz}[2][]{\ensuremath{\subp{\vec{\bm{\mathcal{L}}}}{}{#2}{}{#1}}}
\newrobustcmd{\mathcalMz}[2][]{\ensuremath{\subp{\mathcal{M}}{}{#2}{}{#1}}}
\newrobustcmd{\hatmathcalMz}[2][]{\ensuremath{\subp{\hat{\mathcal{M}}}{}{#2}{}{#1}}}
\newrobustcmd{\widehatmathcalMz}[2][]{\ensuremath{\subp{\widehat{\mathcal{M}}}{}{#2}{}{#1}}}
\newrobustcmd{\checkmathcalMz}[2][]{\ensuremath{\subp{\check{\mathcal{M}}}{}{#2}{}{#1}}}
\newrobustcmd{\tildemathcalMz}[2][]{\ensuremath{\subp{\tilde{\mathcal{M}}}{}{#2}{}{#1}}}
\newrobustcmd{\widetildemathcalMz}[2][]{\ensuremath{\subp{\widetilde{\mathcal{M}}}{}{#2}{}{#1}}}
\newrobustcmd{\acutemathcalMz}[2][]{\ensuremath{\subp{\acute{\mathcal{M}}}{}{#2}{}{#1}}}
\newrobustcmd{\gravemathcalMz}[2][]{\ensuremath{\subp{\grave{\mathcal{M}}}{}{#2}{}{#1}}}
\newrobustcmd{\dotmathcalMz}[2][]{\ensuremath{\subp{\dot{\mathcal{M}}}{}{#2}{}{#1}}}
\newrobustcmd{\ddotmathcalMz}[2][]{\ensuremath{\subp{\ddot{\mathcal{M}}}{}{#2}{}{#1}}}
\newrobustcmd{\brevemathcalMz}[2][]{\ensuremath{\subp{\breve{\mathcal{M}}}{}{#2}{}{#1}}}
\newrobustcmd{\barmathcalMz}[2][]{\ensuremath{\subp{\bar{\mathcal{M}}}{}{#2}{}{#1}}}
\newrobustcmd{\vecmathcalMz}[2][]{\ensuremath{\subp{\vec{\mathcal{M}}}{}{#2}{}{#1}}}
\newrobustcmd{\bmmathcalMz}[2][]{\ensuremath{\subp{\bm{\mathcal{M}}}{}{#2}{}{#1}}}
\newrobustcmd{\hatbmmathcalMz}[2][]{\ensuremath{\subp{\hat{\bm{\mathcal{M}}}}{}{#2}{}{#1}}}
\newrobustcmd{\widehatbmmathcalMz}[2][]{\ensuremath{\subp{\widehat{\bm{\mathcal{M}}}}{}{#2}{}{#1}}}
\newrobustcmd{\checkbmmathcalMz}[2][]{\ensuremath{\subp{\check{\bm{\mathcal{M}}}}{}{#2}{}{#1}}}
\newrobustcmd{\tildebmmathcalMz}[2][]{\ensuremath{\subp{\tilde{\bm{\mathcal{M}}}}{}{#2}{}{#1}}}
\newrobustcmd{\widetildebmmathcalMz}[2][]{\ensuremath{\subp{\widetilde{\bm{\mathcal{M}}}}{}{#2}{}{#1}}}
\newrobustcmd{\acutebmmathcalMz}[2][]{\ensuremath{\subp{\acute{\bm{\mathcal{M}}}}{}{#2}{}{#1}}}
\newrobustcmd{\gravebmmathcalMz}[2][]{\ensuremath{\subp{\grave{\bm{\mathcal{M}}}}{}{#2}{}{#1}}}
\newrobustcmd{\dotbmmathcalMz}[2][]{\ensuremath{\subp{\dot{\bm{\mathcal{M}}}}{}{#2}{}{#1}}}
\newrobustcmd{\ddotbmmathcalMz}[2][]{\ensuremath{\subp{\ddot{\bm{\mathcal{M}}}}{}{#2}{}{#1}}}
\newrobustcmd{\brevebmmathcalMz}[2][]{\ensuremath{\subp{\breve{\bm{\mathcal{M}}}}{}{#2}{}{#1}}}
\newrobustcmd{\barbmmathcalMz}[2][]{\ensuremath{\subp{\bar{\bm{\mathcal{M}}}}{}{#2}{}{#1}}}
\newrobustcmd{\vecbmmathcalMz}[2][]{\ensuremath{\subp{\vec{\bm{\mathcal{M}}}}{}{#2}{}{#1}}}
\newrobustcmd{\mathcalNz}[2][]{\ensuremath{\subp{\mathcal{N}}{}{#2}{}{#1}}}
\newrobustcmd{\hatmathcalNz}[2][]{\ensuremath{\subp{\hat{\mathcal{N}}}{}{#2}{}{#1}}}
\newrobustcmd{\widehatmathcalNz}[2][]{\ensuremath{\subp{\widehat{\mathcal{N}}}{}{#2}{}{#1}}}
\newrobustcmd{\checkmathcalNz}[2][]{\ensuremath{\subp{\check{\mathcal{N}}}{}{#2}{}{#1}}}
\newrobustcmd{\tildemathcalNz}[2][]{\ensuremath{\subp{\tilde{\mathcal{N}}}{}{#2}{}{#1}}}
\newrobustcmd{\widetildemathcalNz}[2][]{\ensuremath{\subp{\widetilde{\mathcal{N}}}{}{#2}{}{#1}}}
\newrobustcmd{\acutemathcalNz}[2][]{\ensuremath{\subp{\acute{\mathcal{N}}}{}{#2}{}{#1}}}
\newrobustcmd{\gravemathcalNz}[2][]{\ensuremath{\subp{\grave{\mathcal{N}}}{}{#2}{}{#1}}}
\newrobustcmd{\dotmathcalNz}[2][]{\ensuremath{\subp{\dot{\mathcal{N}}}{}{#2}{}{#1}}}
\newrobustcmd{\ddotmathcalNz}[2][]{\ensuremath{\subp{\ddot{\mathcal{N}}}{}{#2}{}{#1}}}
\newrobustcmd{\brevemathcalNz}[2][]{\ensuremath{\subp{\breve{\mathcal{N}}}{}{#2}{}{#1}}}
\newrobustcmd{\barmathcalNz}[2][]{\ensuremath{\subp{\bar{\mathcal{N}}}{}{#2}{}{#1}}}
\newrobustcmd{\vecmathcalNz}[2][]{\ensuremath{\subp{\vec{\mathcal{N}}}{}{#2}{}{#1}}}
\newrobustcmd{\bmmathcalNz}[2][]{\ensuremath{\subp{\bm{\mathcal{N}}}{}{#2}{}{#1}}}
\newrobustcmd{\hatbmmathcalNz}[2][]{\ensuremath{\subp{\hat{\bm{\mathcal{N}}}}{}{#2}{}{#1}}}
\newrobustcmd{\widehatbmmathcalNz}[2][]{\ensuremath{\subp{\widehat{\bm{\mathcal{N}}}}{}{#2}{}{#1}}}
\newrobustcmd{\checkbmmathcalNz}[2][]{\ensuremath{\subp{\check{\bm{\mathcal{N}}}}{}{#2}{}{#1}}}
\newrobustcmd{\tildebmmathcalNz}[2][]{\ensuremath{\subp{\tilde{\bm{\mathcal{N}}}}{}{#2}{}{#1}}}
\newrobustcmd{\widetildebmmathcalNz}[2][]{\ensuremath{\subp{\widetilde{\bm{\mathcal{N}}}}{}{#2}{}{#1}}}
\newrobustcmd{\acutebmmathcalNz}[2][]{\ensuremath{\subp{\acute{\bm{\mathcal{N}}}}{}{#2}{}{#1}}}
\newrobustcmd{\gravebmmathcalNz}[2][]{\ensuremath{\subp{\grave{\bm{\mathcal{N}}}}{}{#2}{}{#1}}}
\newrobustcmd{\dotbmmathcalNz}[2][]{\ensuremath{\subp{\dot{\bm{\mathcal{N}}}}{}{#2}{}{#1}}}
\newrobustcmd{\ddotbmmathcalNz}[2][]{\ensuremath{\subp{\ddot{\bm{\mathcal{N}}}}{}{#2}{}{#1}}}
\newrobustcmd{\brevebmmathcalNz}[2][]{\ensuremath{\subp{\breve{\bm{\mathcal{N}}}}{}{#2}{}{#1}}}
\newrobustcmd{\barbmmathcalNz}[2][]{\ensuremath{\subp{\bar{\bm{\mathcal{N}}}}{}{#2}{}{#1}}}
\newrobustcmd{\vecbmmathcalNz}[2][]{\ensuremath{\subp{\vec{\bm{\mathcal{N}}}}{}{#2}{}{#1}}}
\newrobustcmd{\mathcalOz}[2][]{\ensuremath{\subp{\mathcal{O}}{}{#2}{}{#1}}}
\newrobustcmd{\hatmathcalOz}[2][]{\ensuremath{\subp{\hat{\mathcal{O}}}{}{#2}{}{#1}}}
\newrobustcmd{\widehatmathcalOz}[2][]{\ensuremath{\subp{\widehat{\mathcal{O}}}{}{#2}{}{#1}}}
\newrobustcmd{\checkmathcalOz}[2][]{\ensuremath{\subp{\check{\mathcal{O}}}{}{#2}{}{#1}}}
\newrobustcmd{\tildemathcalOz}[2][]{\ensuremath{\subp{\tilde{\mathcal{O}}}{}{#2}{}{#1}}}
\newrobustcmd{\widetildemathcalOz}[2][]{\ensuremath{\subp{\widetilde{\mathcal{O}}}{}{#2}{}{#1}}}
\newrobustcmd{\acutemathcalOz}[2][]{\ensuremath{\subp{\acute{\mathcal{O}}}{}{#2}{}{#1}}}
\newrobustcmd{\gravemathcalOz}[2][]{\ensuremath{\subp{\grave{\mathcal{O}}}{}{#2}{}{#1}}}
\newrobustcmd{\dotmathcalOz}[2][]{\ensuremath{\subp{\dot{\mathcal{O}}}{}{#2}{}{#1}}}
\newrobustcmd{\ddotmathcalOz}[2][]{\ensuremath{\subp{\ddot{\mathcal{O}}}{}{#2}{}{#1}}}
\newrobustcmd{\brevemathcalOz}[2][]{\ensuremath{\subp{\breve{\mathcal{O}}}{}{#2}{}{#1}}}
\newrobustcmd{\barmathcalOz}[2][]{\ensuremath{\subp{\bar{\mathcal{O}}}{}{#2}{}{#1}}}
\newrobustcmd{\vecmathcalOz}[2][]{\ensuremath{\subp{\vec{\mathcal{O}}}{}{#2}{}{#1}}}
\newrobustcmd{\bmmathcalOz}[2][]{\ensuremath{\subp{\bm{\mathcal{O}}}{}{#2}{}{#1}}}
\newrobustcmd{\hatbmmathcalOz}[2][]{\ensuremath{\subp{\hat{\bm{\mathcal{O}}}}{}{#2}{}{#1}}}
\newrobustcmd{\widehatbmmathcalOz}[2][]{\ensuremath{\subp{\widehat{\bm{\mathcal{O}}}}{}{#2}{}{#1}}}
\newrobustcmd{\checkbmmathcalOz}[2][]{\ensuremath{\subp{\check{\bm{\mathcal{O}}}}{}{#2}{}{#1}}}
\newrobustcmd{\tildebmmathcalOz}[2][]{\ensuremath{\subp{\tilde{\bm{\mathcal{O}}}}{}{#2}{}{#1}}}
\newrobustcmd{\widetildebmmathcalOz}[2][]{\ensuremath{\subp{\widetilde{\bm{\mathcal{O}}}}{}{#2}{}{#1}}}
\newrobustcmd{\acutebmmathcalOz}[2][]{\ensuremath{\subp{\acute{\bm{\mathcal{O}}}}{}{#2}{}{#1}}}
\newrobustcmd{\gravebmmathcalOz}[2][]{\ensuremath{\subp{\grave{\bm{\mathcal{O}}}}{}{#2}{}{#1}}}
\newrobustcmd{\dotbmmathcalOz}[2][]{\ensuremath{\subp{\dot{\bm{\mathcal{O}}}}{}{#2}{}{#1}}}
\newrobustcmd{\ddotbmmathcalOz}[2][]{\ensuremath{\subp{\ddot{\bm{\mathcal{O}}}}{}{#2}{}{#1}}}
\newrobustcmd{\brevebmmathcalOz}[2][]{\ensuremath{\subp{\breve{\bm{\mathcal{O}}}}{}{#2}{}{#1}}}
\newrobustcmd{\barbmmathcalOz}[2][]{\ensuremath{\subp{\bar{\bm{\mathcal{O}}}}{}{#2}{}{#1}}}
\newrobustcmd{\vecbmmathcalOz}[2][]{\ensuremath{\subp{\vec{\bm{\mathcal{O}}}}{}{#2}{}{#1}}}
\newrobustcmd{\mathcalPz}[2][]{\ensuremath{\subp{\mathcal{P}}{}{#2}{}{#1}}}
\newrobustcmd{\hatmathcalPz}[2][]{\ensuremath{\subp{\hat{\mathcal{P}}}{}{#2}{}{#1}}}
\newrobustcmd{\widehatmathcalPz}[2][]{\ensuremath{\subp{\widehat{\mathcal{P}}}{}{#2}{}{#1}}}
\newrobustcmd{\checkmathcalPz}[2][]{\ensuremath{\subp{\check{\mathcal{P}}}{}{#2}{}{#1}}}
\newrobustcmd{\tildemathcalPz}[2][]{\ensuremath{\subp{\tilde{\mathcal{P}}}{}{#2}{}{#1}}}
\newrobustcmd{\widetildemathcalPz}[2][]{\ensuremath{\subp{\widetilde{\mathcal{P}}}{}{#2}{}{#1}}}
\newrobustcmd{\acutemathcalPz}[2][]{\ensuremath{\subp{\acute{\mathcal{P}}}{}{#2}{}{#1}}}
\newrobustcmd{\gravemathcalPz}[2][]{\ensuremath{\subp{\grave{\mathcal{P}}}{}{#2}{}{#1}}}
\newrobustcmd{\dotmathcalPz}[2][]{\ensuremath{\subp{\dot{\mathcal{P}}}{}{#2}{}{#1}}}
\newrobustcmd{\ddotmathcalPz}[2][]{\ensuremath{\subp{\ddot{\mathcal{P}}}{}{#2}{}{#1}}}
\newrobustcmd{\brevemathcalPz}[2][]{\ensuremath{\subp{\breve{\mathcal{P}}}{}{#2}{}{#1}}}
\newrobustcmd{\barmathcalPz}[2][]{\ensuremath{\subp{\bar{\mathcal{P}}}{}{#2}{}{#1}}}
\newrobustcmd{\vecmathcalPz}[2][]{\ensuremath{\subp{\vec{\mathcal{P}}}{}{#2}{}{#1}}}
\newrobustcmd{\bmmathcalPz}[2][]{\ensuremath{\subp{\bm{\mathcal{P}}}{}{#2}{}{#1}}}
\newrobustcmd{\hatbmmathcalPz}[2][]{\ensuremath{\subp{\hat{\bm{\mathcal{P}}}}{}{#2}{}{#1}}}
\newrobustcmd{\widehatbmmathcalPz}[2][]{\ensuremath{\subp{\widehat{\bm{\mathcal{P}}}}{}{#2}{}{#1}}}
\newrobustcmd{\checkbmmathcalPz}[2][]{\ensuremath{\subp{\check{\bm{\mathcal{P}}}}{}{#2}{}{#1}}}
\newrobustcmd{\tildebmmathcalPz}[2][]{\ensuremath{\subp{\tilde{\bm{\mathcal{P}}}}{}{#2}{}{#1}}}
\newrobustcmd{\widetildebmmathcalPz}[2][]{\ensuremath{\subp{\widetilde{\bm{\mathcal{P}}}}{}{#2}{}{#1}}}
\newrobustcmd{\acutebmmathcalPz}[2][]{\ensuremath{\subp{\acute{\bm{\mathcal{P}}}}{}{#2}{}{#1}}}
\newrobustcmd{\gravebmmathcalPz}[2][]{\ensuremath{\subp{\grave{\bm{\mathcal{P}}}}{}{#2}{}{#1}}}
\newrobustcmd{\dotbmmathcalPz}[2][]{\ensuremath{\subp{\dot{\bm{\mathcal{P}}}}{}{#2}{}{#1}}}
\newrobustcmd{\ddotbmmathcalPz}[2][]{\ensuremath{\subp{\ddot{\bm{\mathcal{P}}}}{}{#2}{}{#1}}}
\newrobustcmd{\brevebmmathcalPz}[2][]{\ensuremath{\subp{\breve{\bm{\mathcal{P}}}}{}{#2}{}{#1}}}
\newrobustcmd{\barbmmathcalPz}[2][]{\ensuremath{\subp{\bar{\bm{\mathcal{P}}}}{}{#2}{}{#1}}}
\newrobustcmd{\vecbmmathcalPz}[2][]{\ensuremath{\subp{\vec{\bm{\mathcal{P}}}}{}{#2}{}{#1}}}
\newrobustcmd{\mathcalQz}[2][]{\ensuremath{\subp{\mathcal{Q}}{}{#2}{}{#1}}}
\newrobustcmd{\hatmathcalQz}[2][]{\ensuremath{\subp{\hat{\mathcal{Q}}}{}{#2}{}{#1}}}
\newrobustcmd{\widehatmathcalQz}[2][]{\ensuremath{\subp{\widehat{\mathcal{Q}}}{}{#2}{}{#1}}}
\newrobustcmd{\checkmathcalQz}[2][]{\ensuremath{\subp{\check{\mathcal{Q}}}{}{#2}{}{#1}}}
\newrobustcmd{\tildemathcalQz}[2][]{\ensuremath{\subp{\tilde{\mathcal{Q}}}{}{#2}{}{#1}}}
\newrobustcmd{\widetildemathcalQz}[2][]{\ensuremath{\subp{\widetilde{\mathcal{Q}}}{}{#2}{}{#1}}}
\newrobustcmd{\acutemathcalQz}[2][]{\ensuremath{\subp{\acute{\mathcal{Q}}}{}{#2}{}{#1}}}
\newrobustcmd{\gravemathcalQz}[2][]{\ensuremath{\subp{\grave{\mathcal{Q}}}{}{#2}{}{#1}}}
\newrobustcmd{\dotmathcalQz}[2][]{\ensuremath{\subp{\dot{\mathcal{Q}}}{}{#2}{}{#1}}}
\newrobustcmd{\ddotmathcalQz}[2][]{\ensuremath{\subp{\ddot{\mathcal{Q}}}{}{#2}{}{#1}}}
\newrobustcmd{\brevemathcalQz}[2][]{\ensuremath{\subp{\breve{\mathcal{Q}}}{}{#2}{}{#1}}}
\newrobustcmd{\barmathcalQz}[2][]{\ensuremath{\subp{\bar{\mathcal{Q}}}{}{#2}{}{#1}}}
\newrobustcmd{\vecmathcalQz}[2][]{\ensuremath{\subp{\vec{\mathcal{Q}}}{}{#2}{}{#1}}}
\newrobustcmd{\bmmathcalQz}[2][]{\ensuremath{\subp{\bm{\mathcal{Q}}}{}{#2}{}{#1}}}
\newrobustcmd{\hatbmmathcalQz}[2][]{\ensuremath{\subp{\hat{\bm{\mathcal{Q}}}}{}{#2}{}{#1}}}
\newrobustcmd{\widehatbmmathcalQz}[2][]{\ensuremath{\subp{\widehat{\bm{\mathcal{Q}}}}{}{#2}{}{#1}}}
\newrobustcmd{\checkbmmathcalQz}[2][]{\ensuremath{\subp{\check{\bm{\mathcal{Q}}}}{}{#2}{}{#1}}}
\newrobustcmd{\tildebmmathcalQz}[2][]{\ensuremath{\subp{\tilde{\bm{\mathcal{Q}}}}{}{#2}{}{#1}}}
\newrobustcmd{\widetildebmmathcalQz}[2][]{\ensuremath{\subp{\widetilde{\bm{\mathcal{Q}}}}{}{#2}{}{#1}}}
\newrobustcmd{\acutebmmathcalQz}[2][]{\ensuremath{\subp{\acute{\bm{\mathcal{Q}}}}{}{#2}{}{#1}}}
\newrobustcmd{\gravebmmathcalQz}[2][]{\ensuremath{\subp{\grave{\bm{\mathcal{Q}}}}{}{#2}{}{#1}}}
\newrobustcmd{\dotbmmathcalQz}[2][]{\ensuremath{\subp{\dot{\bm{\mathcal{Q}}}}{}{#2}{}{#1}}}
\newrobustcmd{\ddotbmmathcalQz}[2][]{\ensuremath{\subp{\ddot{\bm{\mathcal{Q}}}}{}{#2}{}{#1}}}
\newrobustcmd{\brevebmmathcalQz}[2][]{\ensuremath{\subp{\breve{\bm{\mathcal{Q}}}}{}{#2}{}{#1}}}
\newrobustcmd{\barbmmathcalQz}[2][]{\ensuremath{\subp{\bar{\bm{\mathcal{Q}}}}{}{#2}{}{#1}}}
\newrobustcmd{\vecbmmathcalQz}[2][]{\ensuremath{\subp{\vec{\bm{\mathcal{Q}}}}{}{#2}{}{#1}}}
\newrobustcmd{\mathcalRz}[2][]{\ensuremath{\subp{\mathcal{R}}{}{#2}{}{#1}}}
\newrobustcmd{\hatmathcalRz}[2][]{\ensuremath{\subp{\hat{\mathcal{R}}}{}{#2}{}{#1}}}
\newrobustcmd{\widehatmathcalRz}[2][]{\ensuremath{\subp{\widehat{\mathcal{R}}}{}{#2}{}{#1}}}
\newrobustcmd{\checkmathcalRz}[2][]{\ensuremath{\subp{\check{\mathcal{R}}}{}{#2}{}{#1}}}
\newrobustcmd{\tildemathcalRz}[2][]{\ensuremath{\subp{\tilde{\mathcal{R}}}{}{#2}{}{#1}}}
\newrobustcmd{\widetildemathcalRz}[2][]{\ensuremath{\subp{\widetilde{\mathcal{R}}}{}{#2}{}{#1}}}
\newrobustcmd{\acutemathcalRz}[2][]{\ensuremath{\subp{\acute{\mathcal{R}}}{}{#2}{}{#1}}}
\newrobustcmd{\gravemathcalRz}[2][]{\ensuremath{\subp{\grave{\mathcal{R}}}{}{#2}{}{#1}}}
\newrobustcmd{\dotmathcalRz}[2][]{\ensuremath{\subp{\dot{\mathcal{R}}}{}{#2}{}{#1}}}
\newrobustcmd{\ddotmathcalRz}[2][]{\ensuremath{\subp{\ddot{\mathcal{R}}}{}{#2}{}{#1}}}
\newrobustcmd{\brevemathcalRz}[2][]{\ensuremath{\subp{\breve{\mathcal{R}}}{}{#2}{}{#1}}}
\newrobustcmd{\barmathcalRz}[2][]{\ensuremath{\subp{\bar{\mathcal{R}}}{}{#2}{}{#1}}}
\newrobustcmd{\vecmathcalRz}[2][]{\ensuremath{\subp{\vec{\mathcal{R}}}{}{#2}{}{#1}}}
\newrobustcmd{\bmmathcalRz}[2][]{\ensuremath{\subp{\bm{\mathcal{R}}}{}{#2}{}{#1}}}
\newrobustcmd{\hatbmmathcalRz}[2][]{\ensuremath{\subp{\hat{\bm{\mathcal{R}}}}{}{#2}{}{#1}}}
\newrobustcmd{\widehatbmmathcalRz}[2][]{\ensuremath{\subp{\widehat{\bm{\mathcal{R}}}}{}{#2}{}{#1}}}
\newrobustcmd{\checkbmmathcalRz}[2][]{\ensuremath{\subp{\check{\bm{\mathcal{R}}}}{}{#2}{}{#1}}}
\newrobustcmd{\tildebmmathcalRz}[2][]{\ensuremath{\subp{\tilde{\bm{\mathcal{R}}}}{}{#2}{}{#1}}}
\newrobustcmd{\widetildebmmathcalRz}[2][]{\ensuremath{\subp{\widetilde{\bm{\mathcal{R}}}}{}{#2}{}{#1}}}
\newrobustcmd{\acutebmmathcalRz}[2][]{\ensuremath{\subp{\acute{\bm{\mathcal{R}}}}{}{#2}{}{#1}}}
\newrobustcmd{\gravebmmathcalRz}[2][]{\ensuremath{\subp{\grave{\bm{\mathcal{R}}}}{}{#2}{}{#1}}}
\newrobustcmd{\dotbmmathcalRz}[2][]{\ensuremath{\subp{\dot{\bm{\mathcal{R}}}}{}{#2}{}{#1}}}
\newrobustcmd{\ddotbmmathcalRz}[2][]{\ensuremath{\subp{\ddot{\bm{\mathcal{R}}}}{}{#2}{}{#1}}}
\newrobustcmd{\brevebmmathcalRz}[2][]{\ensuremath{\subp{\breve{\bm{\mathcal{R}}}}{}{#2}{}{#1}}}
\newrobustcmd{\barbmmathcalRz}[2][]{\ensuremath{\subp{\bar{\bm{\mathcal{R}}}}{}{#2}{}{#1}}}
\newrobustcmd{\vecbmmathcalRz}[2][]{\ensuremath{\subp{\vec{\bm{\mathcal{R}}}}{}{#2}{}{#1}}}
\newrobustcmd{\mathcalSz}[2][]{\ensuremath{\subp{\mathcal{S}}{}{#2}{}{#1}}}
\newrobustcmd{\hatmathcalSz}[2][]{\ensuremath{\subp{\hat{\mathcal{S}}}{}{#2}{}{#1}}}
\newrobustcmd{\widehatmathcalSz}[2][]{\ensuremath{\subp{\widehat{\mathcal{S}}}{}{#2}{}{#1}}}
\newrobustcmd{\checkmathcalSz}[2][]{\ensuremath{\subp{\check{\mathcal{S}}}{}{#2}{}{#1}}}
\newrobustcmd{\tildemathcalSz}[2][]{\ensuremath{\subp{\tilde{\mathcal{S}}}{}{#2}{}{#1}}}
\newrobustcmd{\widetildemathcalSz}[2][]{\ensuremath{\subp{\widetilde{\mathcal{S}}}{}{#2}{}{#1}}}
\newrobustcmd{\acutemathcalSz}[2][]{\ensuremath{\subp{\acute{\mathcal{S}}}{}{#2}{}{#1}}}
\newrobustcmd{\gravemathcalSz}[2][]{\ensuremath{\subp{\grave{\mathcal{S}}}{}{#2}{}{#1}}}
\newrobustcmd{\dotmathcalSz}[2][]{\ensuremath{\subp{\dot{\mathcal{S}}}{}{#2}{}{#1}}}
\newrobustcmd{\ddotmathcalSz}[2][]{\ensuremath{\subp{\ddot{\mathcal{S}}}{}{#2}{}{#1}}}
\newrobustcmd{\brevemathcalSz}[2][]{\ensuremath{\subp{\breve{\mathcal{S}}}{}{#2}{}{#1}}}
\newrobustcmd{\barmathcalSz}[2][]{\ensuremath{\subp{\bar{\mathcal{S}}}{}{#2}{}{#1}}}
\newrobustcmd{\vecmathcalSz}[2][]{\ensuremath{\subp{\vec{\mathcal{S}}}{}{#2}{}{#1}}}
\newrobustcmd{\bmmathcalSz}[2][]{\ensuremath{\subp{\bm{\mathcal{S}}}{}{#2}{}{#1}}}
\newrobustcmd{\hatbmmathcalSz}[2][]{\ensuremath{\subp{\hat{\bm{\mathcal{S}}}}{}{#2}{}{#1}}}
\newrobustcmd{\widehatbmmathcalSz}[2][]{\ensuremath{\subp{\widehat{\bm{\mathcal{S}}}}{}{#2}{}{#1}}}
\newrobustcmd{\checkbmmathcalSz}[2][]{\ensuremath{\subp{\check{\bm{\mathcal{S}}}}{}{#2}{}{#1}}}
\newrobustcmd{\tildebmmathcalSz}[2][]{\ensuremath{\subp{\tilde{\bm{\mathcal{S}}}}{}{#2}{}{#1}}}
\newrobustcmd{\widetildebmmathcalSz}[2][]{\ensuremath{\subp{\widetilde{\bm{\mathcal{S}}}}{}{#2}{}{#1}}}
\newrobustcmd{\acutebmmathcalSz}[2][]{\ensuremath{\subp{\acute{\bm{\mathcal{S}}}}{}{#2}{}{#1}}}
\newrobustcmd{\gravebmmathcalSz}[2][]{\ensuremath{\subp{\grave{\bm{\mathcal{S}}}}{}{#2}{}{#1}}}
\newrobustcmd{\dotbmmathcalSz}[2][]{\ensuremath{\subp{\dot{\bm{\mathcal{S}}}}{}{#2}{}{#1}}}
\newrobustcmd{\ddotbmmathcalSz}[2][]{\ensuremath{\subp{\ddot{\bm{\mathcal{S}}}}{}{#2}{}{#1}}}
\newrobustcmd{\brevebmmathcalSz}[2][]{\ensuremath{\subp{\breve{\bm{\mathcal{S}}}}{}{#2}{}{#1}}}
\newrobustcmd{\barbmmathcalSz}[2][]{\ensuremath{\subp{\bar{\bm{\mathcal{S}}}}{}{#2}{}{#1}}}
\newrobustcmd{\vecbmmathcalSz}[2][]{\ensuremath{\subp{\vec{\bm{\mathcal{S}}}}{}{#2}{}{#1}}}
\newrobustcmd{\mathcalTz}[2][]{\ensuremath{\subp{\mathcal{T}}{}{#2}{}{#1}}}
\newrobustcmd{\hatmathcalTz}[2][]{\ensuremath{\subp{\hat{\mathcal{T}}}{}{#2}{}{#1}}}
\newrobustcmd{\widehatmathcalTz}[2][]{\ensuremath{\subp{\widehat{\mathcal{T}}}{}{#2}{}{#1}}}
\newrobustcmd{\checkmathcalTz}[2][]{\ensuremath{\subp{\check{\mathcal{T}}}{}{#2}{}{#1}}}
\newrobustcmd{\tildemathcalTz}[2][]{\ensuremath{\subp{\tilde{\mathcal{T}}}{}{#2}{}{#1}}}
\newrobustcmd{\widetildemathcalTz}[2][]{\ensuremath{\subp{\widetilde{\mathcal{T}}}{}{#2}{}{#1}}}
\newrobustcmd{\acutemathcalTz}[2][]{\ensuremath{\subp{\acute{\mathcal{T}}}{}{#2}{}{#1}}}
\newrobustcmd{\gravemathcalTz}[2][]{\ensuremath{\subp{\grave{\mathcal{T}}}{}{#2}{}{#1}}}
\newrobustcmd{\dotmathcalTz}[2][]{\ensuremath{\subp{\dot{\mathcal{T}}}{}{#2}{}{#1}}}
\newrobustcmd{\ddotmathcalTz}[2][]{\ensuremath{\subp{\ddot{\mathcal{T}}}{}{#2}{}{#1}}}
\newrobustcmd{\brevemathcalTz}[2][]{\ensuremath{\subp{\breve{\mathcal{T}}}{}{#2}{}{#1}}}
\newrobustcmd{\barmathcalTz}[2][]{\ensuremath{\subp{\bar{\mathcal{T}}}{}{#2}{}{#1}}}
\newrobustcmd{\vecmathcalTz}[2][]{\ensuremath{\subp{\vec{\mathcal{T}}}{}{#2}{}{#1}}}
\newrobustcmd{\bmmathcalTz}[2][]{\ensuremath{\subp{\bm{\mathcal{T}}}{}{#2}{}{#1}}}
\newrobustcmd{\hatbmmathcalTz}[2][]{\ensuremath{\subp{\hat{\bm{\mathcal{T}}}}{}{#2}{}{#1}}}
\newrobustcmd{\widehatbmmathcalTz}[2][]{\ensuremath{\subp{\widehat{\bm{\mathcal{T}}}}{}{#2}{}{#1}}}
\newrobustcmd{\checkbmmathcalTz}[2][]{\ensuremath{\subp{\check{\bm{\mathcal{T}}}}{}{#2}{}{#1}}}
\newrobustcmd{\tildebmmathcalTz}[2][]{\ensuremath{\subp{\tilde{\bm{\mathcal{T}}}}{}{#2}{}{#1}}}
\newrobustcmd{\widetildebmmathcalTz}[2][]{\ensuremath{\subp{\widetilde{\bm{\mathcal{T}}}}{}{#2}{}{#1}}}
\newrobustcmd{\acutebmmathcalTz}[2][]{\ensuremath{\subp{\acute{\bm{\mathcal{T}}}}{}{#2}{}{#1}}}
\newrobustcmd{\gravebmmathcalTz}[2][]{\ensuremath{\subp{\grave{\bm{\mathcal{T}}}}{}{#2}{}{#1}}}
\newrobustcmd{\dotbmmathcalTz}[2][]{\ensuremath{\subp{\dot{\bm{\mathcal{T}}}}{}{#2}{}{#1}}}
\newrobustcmd{\ddotbmmathcalTz}[2][]{\ensuremath{\subp{\ddot{\bm{\mathcal{T}}}}{}{#2}{}{#1}}}
\newrobustcmd{\brevebmmathcalTz}[2][]{\ensuremath{\subp{\breve{\bm{\mathcal{T}}}}{}{#2}{}{#1}}}
\newrobustcmd{\barbmmathcalTz}[2][]{\ensuremath{\subp{\bar{\bm{\mathcal{T}}}}{}{#2}{}{#1}}}
\newrobustcmd{\vecbmmathcalTz}[2][]{\ensuremath{\subp{\vec{\bm{\mathcal{T}}}}{}{#2}{}{#1}}}
\newrobustcmd{\mathcalUz}[2][]{\ensuremath{\subp{\mathcal{U}}{}{#2}{}{#1}}}
\newrobustcmd{\hatmathcalUz}[2][]{\ensuremath{\subp{\hat{\mathcal{U}}}{}{#2}{}{#1}}}
\newrobustcmd{\widehatmathcalUz}[2][]{\ensuremath{\subp{\widehat{\mathcal{U}}}{}{#2}{}{#1}}}
\newrobustcmd{\checkmathcalUz}[2][]{\ensuremath{\subp{\check{\mathcal{U}}}{}{#2}{}{#1}}}
\newrobustcmd{\tildemathcalUz}[2][]{\ensuremath{\subp{\tilde{\mathcal{U}}}{}{#2}{}{#1}}}
\newrobustcmd{\widetildemathcalUz}[2][]{\ensuremath{\subp{\widetilde{\mathcal{U}}}{}{#2}{}{#1}}}
\newrobustcmd{\acutemathcalUz}[2][]{\ensuremath{\subp{\acute{\mathcal{U}}}{}{#2}{}{#1}}}
\newrobustcmd{\gravemathcalUz}[2][]{\ensuremath{\subp{\grave{\mathcal{U}}}{}{#2}{}{#1}}}
\newrobustcmd{\dotmathcalUz}[2][]{\ensuremath{\subp{\dot{\mathcal{U}}}{}{#2}{}{#1}}}
\newrobustcmd{\ddotmathcalUz}[2][]{\ensuremath{\subp{\ddot{\mathcal{U}}}{}{#2}{}{#1}}}
\newrobustcmd{\brevemathcalUz}[2][]{\ensuremath{\subp{\breve{\mathcal{U}}}{}{#2}{}{#1}}}
\newrobustcmd{\barmathcalUz}[2][]{\ensuremath{\subp{\bar{\mathcal{U}}}{}{#2}{}{#1}}}
\newrobustcmd{\vecmathcalUz}[2][]{\ensuremath{\subp{\vec{\mathcal{U}}}{}{#2}{}{#1}}}
\newrobustcmd{\bmmathcalUz}[2][]{\ensuremath{\subp{\bm{\mathcal{U}}}{}{#2}{}{#1}}}
\newrobustcmd{\hatbmmathcalUz}[2][]{\ensuremath{\subp{\hat{\bm{\mathcal{U}}}}{}{#2}{}{#1}}}
\newrobustcmd{\widehatbmmathcalUz}[2][]{\ensuremath{\subp{\widehat{\bm{\mathcal{U}}}}{}{#2}{}{#1}}}
\newrobustcmd{\checkbmmathcalUz}[2][]{\ensuremath{\subp{\check{\bm{\mathcal{U}}}}{}{#2}{}{#1}}}
\newrobustcmd{\tildebmmathcalUz}[2][]{\ensuremath{\subp{\tilde{\bm{\mathcal{U}}}}{}{#2}{}{#1}}}
\newrobustcmd{\widetildebmmathcalUz}[2][]{\ensuremath{\subp{\widetilde{\bm{\mathcal{U}}}}{}{#2}{}{#1}}}
\newrobustcmd{\acutebmmathcalUz}[2][]{\ensuremath{\subp{\acute{\bm{\mathcal{U}}}}{}{#2}{}{#1}}}
\newrobustcmd{\gravebmmathcalUz}[2][]{\ensuremath{\subp{\grave{\bm{\mathcal{U}}}}{}{#2}{}{#1}}}
\newrobustcmd{\dotbmmathcalUz}[2][]{\ensuremath{\subp{\dot{\bm{\mathcal{U}}}}{}{#2}{}{#1}}}
\newrobustcmd{\ddotbmmathcalUz}[2][]{\ensuremath{\subp{\ddot{\bm{\mathcal{U}}}}{}{#2}{}{#1}}}
\newrobustcmd{\brevebmmathcalUz}[2][]{\ensuremath{\subp{\breve{\bm{\mathcal{U}}}}{}{#2}{}{#1}}}
\newrobustcmd{\barbmmathcalUz}[2][]{\ensuremath{\subp{\bar{\bm{\mathcal{U}}}}{}{#2}{}{#1}}}
\newrobustcmd{\vecbmmathcalUz}[2][]{\ensuremath{\subp{\vec{\bm{\mathcal{U}}}}{}{#2}{}{#1}}}
\newrobustcmd{\mathcalVz}[2][]{\ensuremath{\subp{\mathcal{V}}{}{#2}{}{#1}}}
\newrobustcmd{\hatmathcalVz}[2][]{\ensuremath{\subp{\hat{\mathcal{V}}}{}{#2}{}{#1}}}
\newrobustcmd{\widehatmathcalVz}[2][]{\ensuremath{\subp{\widehat{\mathcal{V}}}{}{#2}{}{#1}}}
\newrobustcmd{\checkmathcalVz}[2][]{\ensuremath{\subp{\check{\mathcal{V}}}{}{#2}{}{#1}}}
\newrobustcmd{\tildemathcalVz}[2][]{\ensuremath{\subp{\tilde{\mathcal{V}}}{}{#2}{}{#1}}}
\newrobustcmd{\widetildemathcalVz}[2][]{\ensuremath{\subp{\widetilde{\mathcal{V}}}{}{#2}{}{#1}}}
\newrobustcmd{\acutemathcalVz}[2][]{\ensuremath{\subp{\acute{\mathcal{V}}}{}{#2}{}{#1}}}
\newrobustcmd{\gravemathcalVz}[2][]{\ensuremath{\subp{\grave{\mathcal{V}}}{}{#2}{}{#1}}}
\newrobustcmd{\dotmathcalVz}[2][]{\ensuremath{\subp{\dot{\mathcal{V}}}{}{#2}{}{#1}}}
\newrobustcmd{\ddotmathcalVz}[2][]{\ensuremath{\subp{\ddot{\mathcal{V}}}{}{#2}{}{#1}}}
\newrobustcmd{\brevemathcalVz}[2][]{\ensuremath{\subp{\breve{\mathcal{V}}}{}{#2}{}{#1}}}
\newrobustcmd{\barmathcalVz}[2][]{\ensuremath{\subp{\bar{\mathcal{V}}}{}{#2}{}{#1}}}
\newrobustcmd{\vecmathcalVz}[2][]{\ensuremath{\subp{\vec{\mathcal{V}}}{}{#2}{}{#1}}}
\newrobustcmd{\bmmathcalVz}[2][]{\ensuremath{\subp{\bm{\mathcal{V}}}{}{#2}{}{#1}}}
\newrobustcmd{\hatbmmathcalVz}[2][]{\ensuremath{\subp{\hat{\bm{\mathcal{V}}}}{}{#2}{}{#1}}}
\newrobustcmd{\widehatbmmathcalVz}[2][]{\ensuremath{\subp{\widehat{\bm{\mathcal{V}}}}{}{#2}{}{#1}}}
\newrobustcmd{\checkbmmathcalVz}[2][]{\ensuremath{\subp{\check{\bm{\mathcal{V}}}}{}{#2}{}{#1}}}
\newrobustcmd{\tildebmmathcalVz}[2][]{\ensuremath{\subp{\tilde{\bm{\mathcal{V}}}}{}{#2}{}{#1}}}
\newrobustcmd{\widetildebmmathcalVz}[2][]{\ensuremath{\subp{\widetilde{\bm{\mathcal{V}}}}{}{#2}{}{#1}}}
\newrobustcmd{\acutebmmathcalVz}[2][]{\ensuremath{\subp{\acute{\bm{\mathcal{V}}}}{}{#2}{}{#1}}}
\newrobustcmd{\gravebmmathcalVz}[2][]{\ensuremath{\subp{\grave{\bm{\mathcal{V}}}}{}{#2}{}{#1}}}
\newrobustcmd{\dotbmmathcalVz}[2][]{\ensuremath{\subp{\dot{\bm{\mathcal{V}}}}{}{#2}{}{#1}}}
\newrobustcmd{\ddotbmmathcalVz}[2][]{\ensuremath{\subp{\ddot{\bm{\mathcal{V}}}}{}{#2}{}{#1}}}
\newrobustcmd{\brevebmmathcalVz}[2][]{\ensuremath{\subp{\breve{\bm{\mathcal{V}}}}{}{#2}{}{#1}}}
\newrobustcmd{\barbmmathcalVz}[2][]{\ensuremath{\subp{\bar{\bm{\mathcal{V}}}}{}{#2}{}{#1}}}
\newrobustcmd{\vecbmmathcalVz}[2][]{\ensuremath{\subp{\vec{\bm{\mathcal{V}}}}{}{#2}{}{#1}}}
\newrobustcmd{\mathcalWz}[2][]{\ensuremath{\subp{\mathcal{W}}{}{#2}{}{#1}}}
\newrobustcmd{\hatmathcalWz}[2][]{\ensuremath{\subp{\hat{\mathcal{W}}}{}{#2}{}{#1}}}
\newrobustcmd{\widehatmathcalWz}[2][]{\ensuremath{\subp{\widehat{\mathcal{W}}}{}{#2}{}{#1}}}
\newrobustcmd{\checkmathcalWz}[2][]{\ensuremath{\subp{\check{\mathcal{W}}}{}{#2}{}{#1}}}
\newrobustcmd{\tildemathcalWz}[2][]{\ensuremath{\subp{\tilde{\mathcal{W}}}{}{#2}{}{#1}}}
\newrobustcmd{\widetildemathcalWz}[2][]{\ensuremath{\subp{\widetilde{\mathcal{W}}}{}{#2}{}{#1}}}
\newrobustcmd{\acutemathcalWz}[2][]{\ensuremath{\subp{\acute{\mathcal{W}}}{}{#2}{}{#1}}}
\newrobustcmd{\gravemathcalWz}[2][]{\ensuremath{\subp{\grave{\mathcal{W}}}{}{#2}{}{#1}}}
\newrobustcmd{\dotmathcalWz}[2][]{\ensuremath{\subp{\dot{\mathcal{W}}}{}{#2}{}{#1}}}
\newrobustcmd{\ddotmathcalWz}[2][]{\ensuremath{\subp{\ddot{\mathcal{W}}}{}{#2}{}{#1}}}
\newrobustcmd{\brevemathcalWz}[2][]{\ensuremath{\subp{\breve{\mathcal{W}}}{}{#2}{}{#1}}}
\newrobustcmd{\barmathcalWz}[2][]{\ensuremath{\subp{\bar{\mathcal{W}}}{}{#2}{}{#1}}}
\newrobustcmd{\vecmathcalWz}[2][]{\ensuremath{\subp{\vec{\mathcal{W}}}{}{#2}{}{#1}}}
\newrobustcmd{\bmmathcalWz}[2][]{\ensuremath{\subp{\bm{\mathcal{W}}}{}{#2}{}{#1}}}
\newrobustcmd{\hatbmmathcalWz}[2][]{\ensuremath{\subp{\hat{\bm{\mathcal{W}}}}{}{#2}{}{#1}}}
\newrobustcmd{\widehatbmmathcalWz}[2][]{\ensuremath{\subp{\widehat{\bm{\mathcal{W}}}}{}{#2}{}{#1}}}
\newrobustcmd{\checkbmmathcalWz}[2][]{\ensuremath{\subp{\check{\bm{\mathcal{W}}}}{}{#2}{}{#1}}}
\newrobustcmd{\tildebmmathcalWz}[2][]{\ensuremath{\subp{\tilde{\bm{\mathcal{W}}}}{}{#2}{}{#1}}}
\newrobustcmd{\widetildebmmathcalWz}[2][]{\ensuremath{\subp{\widetilde{\bm{\mathcal{W}}}}{}{#2}{}{#1}}}
\newrobustcmd{\acutebmmathcalWz}[2][]{\ensuremath{\subp{\acute{\bm{\mathcal{W}}}}{}{#2}{}{#1}}}
\newrobustcmd{\gravebmmathcalWz}[2][]{\ensuremath{\subp{\grave{\bm{\mathcal{W}}}}{}{#2}{}{#1}}}
\newrobustcmd{\dotbmmathcalWz}[2][]{\ensuremath{\subp{\dot{\bm{\mathcal{W}}}}{}{#2}{}{#1}}}
\newrobustcmd{\ddotbmmathcalWz}[2][]{\ensuremath{\subp{\ddot{\bm{\mathcal{W}}}}{}{#2}{}{#1}}}
\newrobustcmd{\brevebmmathcalWz}[2][]{\ensuremath{\subp{\breve{\bm{\mathcal{W}}}}{}{#2}{}{#1}}}
\newrobustcmd{\barbmmathcalWz}[2][]{\ensuremath{\subp{\bar{\bm{\mathcal{W}}}}{}{#2}{}{#1}}}
\newrobustcmd{\vecbmmathcalWz}[2][]{\ensuremath{\subp{\vec{\bm{\mathcal{W}}}}{}{#2}{}{#1}}}
\newrobustcmd{\mathcalXz}[2][]{\ensuremath{\subp{\mathcal{X}}{}{#2}{}{#1}}}
\newrobustcmd{\hatmathcalXz}[2][]{\ensuremath{\subp{\hat{\mathcal{X}}}{}{#2}{}{#1}}}
\newrobustcmd{\widehatmathcalXz}[2][]{\ensuremath{\subp{\widehat{\mathcal{X}}}{}{#2}{}{#1}}}
\newrobustcmd{\checkmathcalXz}[2][]{\ensuremath{\subp{\check{\mathcal{X}}}{}{#2}{}{#1}}}
\newrobustcmd{\tildemathcalXz}[2][]{\ensuremath{\subp{\tilde{\mathcal{X}}}{}{#2}{}{#1}}}
\newrobustcmd{\widetildemathcalXz}[2][]{\ensuremath{\subp{\widetilde{\mathcal{X}}}{}{#2}{}{#1}}}
\newrobustcmd{\acutemathcalXz}[2][]{\ensuremath{\subp{\acute{\mathcal{X}}}{}{#2}{}{#1}}}
\newrobustcmd{\gravemathcalXz}[2][]{\ensuremath{\subp{\grave{\mathcal{X}}}{}{#2}{}{#1}}}
\newrobustcmd{\dotmathcalXz}[2][]{\ensuremath{\subp{\dot{\mathcal{X}}}{}{#2}{}{#1}}}
\newrobustcmd{\ddotmathcalXz}[2][]{\ensuremath{\subp{\ddot{\mathcal{X}}}{}{#2}{}{#1}}}
\newrobustcmd{\brevemathcalXz}[2][]{\ensuremath{\subp{\breve{\mathcal{X}}}{}{#2}{}{#1}}}
\newrobustcmd{\barmathcalXz}[2][]{\ensuremath{\subp{\bar{\mathcal{X}}}{}{#2}{}{#1}}}
\newrobustcmd{\vecmathcalXz}[2][]{\ensuremath{\subp{\vec{\mathcal{X}}}{}{#2}{}{#1}}}
\newrobustcmd{\bmmathcalXz}[2][]{\ensuremath{\subp{\bm{\mathcal{X}}}{}{#2}{}{#1}}}
\newrobustcmd{\hatbmmathcalXz}[2][]{\ensuremath{\subp{\hat{\bm{\mathcal{X}}}}{}{#2}{}{#1}}}
\newrobustcmd{\widehatbmmathcalXz}[2][]{\ensuremath{\subp{\widehat{\bm{\mathcal{X}}}}{}{#2}{}{#1}}}
\newrobustcmd{\checkbmmathcalXz}[2][]{\ensuremath{\subp{\check{\bm{\mathcal{X}}}}{}{#2}{}{#1}}}
\newrobustcmd{\tildebmmathcalXz}[2][]{\ensuremath{\subp{\tilde{\bm{\mathcal{X}}}}{}{#2}{}{#1}}}
\newrobustcmd{\widetildebmmathcalXz}[2][]{\ensuremath{\subp{\widetilde{\bm{\mathcal{X}}}}{}{#2}{}{#1}}}
\newrobustcmd{\acutebmmathcalXz}[2][]{\ensuremath{\subp{\acute{\bm{\mathcal{X}}}}{}{#2}{}{#1}}}
\newrobustcmd{\gravebmmathcalXz}[2][]{\ensuremath{\subp{\grave{\bm{\mathcal{X}}}}{}{#2}{}{#1}}}
\newrobustcmd{\dotbmmathcalXz}[2][]{\ensuremath{\subp{\dot{\bm{\mathcal{X}}}}{}{#2}{}{#1}}}
\newrobustcmd{\ddotbmmathcalXz}[2][]{\ensuremath{\subp{\ddot{\bm{\mathcal{X}}}}{}{#2}{}{#1}}}
\newrobustcmd{\brevebmmathcalXz}[2][]{\ensuremath{\subp{\breve{\bm{\mathcal{X}}}}{}{#2}{}{#1}}}
\newrobustcmd{\barbmmathcalXz}[2][]{\ensuremath{\subp{\bar{\bm{\mathcal{X}}}}{}{#2}{}{#1}}}
\newrobustcmd{\vecbmmathcalXz}[2][]{\ensuremath{\subp{\vec{\bm{\mathcal{X}}}}{}{#2}{}{#1}}}
\newrobustcmd{\mathcalYz}[2][]{\ensuremath{\subp{\mathcal{Y}}{}{#2}{}{#1}}}
\newrobustcmd{\hatmathcalYz}[2][]{\ensuremath{\subp{\hat{\mathcal{Y}}}{}{#2}{}{#1}}}
\newrobustcmd{\widehatmathcalYz}[2][]{\ensuremath{\subp{\widehat{\mathcal{Y}}}{}{#2}{}{#1}}}
\newrobustcmd{\checkmathcalYz}[2][]{\ensuremath{\subp{\check{\mathcal{Y}}}{}{#2}{}{#1}}}
\newrobustcmd{\tildemathcalYz}[2][]{\ensuremath{\subp{\tilde{\mathcal{Y}}}{}{#2}{}{#1}}}
\newrobustcmd{\widetildemathcalYz}[2][]{\ensuremath{\subp{\widetilde{\mathcal{Y}}}{}{#2}{}{#1}}}
\newrobustcmd{\acutemathcalYz}[2][]{\ensuremath{\subp{\acute{\mathcal{Y}}}{}{#2}{}{#1}}}
\newrobustcmd{\gravemathcalYz}[2][]{\ensuremath{\subp{\grave{\mathcal{Y}}}{}{#2}{}{#1}}}
\newrobustcmd{\dotmathcalYz}[2][]{\ensuremath{\subp{\dot{\mathcal{Y}}}{}{#2}{}{#1}}}
\newrobustcmd{\ddotmathcalYz}[2][]{\ensuremath{\subp{\ddot{\mathcal{Y}}}{}{#2}{}{#1}}}
\newrobustcmd{\brevemathcalYz}[2][]{\ensuremath{\subp{\breve{\mathcal{Y}}}{}{#2}{}{#1}}}
\newrobustcmd{\barmathcalYz}[2][]{\ensuremath{\subp{\bar{\mathcal{Y}}}{}{#2}{}{#1}}}
\newrobustcmd{\vecmathcalYz}[2][]{\ensuremath{\subp{\vec{\mathcal{Y}}}{}{#2}{}{#1}}}
\newrobustcmd{\bmmathcalYz}[2][]{\ensuremath{\subp{\bm{\mathcal{Y}}}{}{#2}{}{#1}}}
\newrobustcmd{\hatbmmathcalYz}[2][]{\ensuremath{\subp{\hat{\bm{\mathcal{Y}}}}{}{#2}{}{#1}}}
\newrobustcmd{\widehatbmmathcalYz}[2][]{\ensuremath{\subp{\widehat{\bm{\mathcal{Y}}}}{}{#2}{}{#1}}}
\newrobustcmd{\checkbmmathcalYz}[2][]{\ensuremath{\subp{\check{\bm{\mathcal{Y}}}}{}{#2}{}{#1}}}
\newrobustcmd{\tildebmmathcalYz}[2][]{\ensuremath{\subp{\tilde{\bm{\mathcal{Y}}}}{}{#2}{}{#1}}}
\newrobustcmd{\widetildebmmathcalYz}[2][]{\ensuremath{\subp{\widetilde{\bm{\mathcal{Y}}}}{}{#2}{}{#1}}}
\newrobustcmd{\acutebmmathcalYz}[2][]{\ensuremath{\subp{\acute{\bm{\mathcal{Y}}}}{}{#2}{}{#1}}}
\newrobustcmd{\gravebmmathcalYz}[2][]{\ensuremath{\subp{\grave{\bm{\mathcal{Y}}}}{}{#2}{}{#1}}}
\newrobustcmd{\dotbmmathcalYz}[2][]{\ensuremath{\subp{\dot{\bm{\mathcal{Y}}}}{}{#2}{}{#1}}}
\newrobustcmd{\ddotbmmathcalYz}[2][]{\ensuremath{\subp{\ddot{\bm{\mathcal{Y}}}}{}{#2}{}{#1}}}
\newrobustcmd{\brevebmmathcalYz}[2][]{\ensuremath{\subp{\breve{\bm{\mathcal{Y}}}}{}{#2}{}{#1}}}
\newrobustcmd{\barbmmathcalYz}[2][]{\ensuremath{\subp{\bar{\bm{\mathcal{Y}}}}{}{#2}{}{#1}}}
\newrobustcmd{\vecbmmathcalYz}[2][]{\ensuremath{\subp{\vec{\bm{\mathcal{Y}}}}{}{#2}{}{#1}}}
\newrobustcmd{\mathcalZz}[2][]{\ensuremath{\subp{\mathcal{Z}}{}{#2}{}{#1}}}
\newrobustcmd{\hatmathcalZz}[2][]{\ensuremath{\subp{\hat{\mathcal{Z}}}{}{#2}{}{#1}}}
\newrobustcmd{\widehatmathcalZz}[2][]{\ensuremath{\subp{\widehat{\mathcal{Z}}}{}{#2}{}{#1}}}
\newrobustcmd{\checkmathcalZz}[2][]{\ensuremath{\subp{\check{\mathcal{Z}}}{}{#2}{}{#1}}}
\newrobustcmd{\tildemathcalZz}[2][]{\ensuremath{\subp{\tilde{\mathcal{Z}}}{}{#2}{}{#1}}}
\newrobustcmd{\widetildemathcalZz}[2][]{\ensuremath{\subp{\widetilde{\mathcal{Z}}}{}{#2}{}{#1}}}
\newrobustcmd{\acutemathcalZz}[2][]{\ensuremath{\subp{\acute{\mathcal{Z}}}{}{#2}{}{#1}}}
\newrobustcmd{\gravemathcalZz}[2][]{\ensuremath{\subp{\grave{\mathcal{Z}}}{}{#2}{}{#1}}}
\newrobustcmd{\dotmathcalZz}[2][]{\ensuremath{\subp{\dot{\mathcal{Z}}}{}{#2}{}{#1}}}
\newrobustcmd{\ddotmathcalZz}[2][]{\ensuremath{\subp{\ddot{\mathcal{Z}}}{}{#2}{}{#1}}}
\newrobustcmd{\brevemathcalZz}[2][]{\ensuremath{\subp{\breve{\mathcal{Z}}}{}{#2}{}{#1}}}
\newrobustcmd{\barmathcalZz}[2][]{\ensuremath{\subp{\bar{\mathcal{Z}}}{}{#2}{}{#1}}}
\newrobustcmd{\vecmathcalZz}[2][]{\ensuremath{\subp{\vec{\mathcal{Z}}}{}{#2}{}{#1}}}
\newrobustcmd{\bmmathcalZz}[2][]{\ensuremath{\subp{\bm{\mathcal{Z}}}{}{#2}{}{#1}}}
\newrobustcmd{\hatbmmathcalZz}[2][]{\ensuremath{\subp{\hat{\bm{\mathcal{Z}}}}{}{#2}{}{#1}}}
\newrobustcmd{\widehatbmmathcalZz}[2][]{\ensuremath{\subp{\widehat{\bm{\mathcal{Z}}}}{}{#2}{}{#1}}}
\newrobustcmd{\checkbmmathcalZz}[2][]{\ensuremath{\subp{\check{\bm{\mathcal{Z}}}}{}{#2}{}{#1}}}
\newrobustcmd{\tildebmmathcalZz}[2][]{\ensuremath{\subp{\tilde{\bm{\mathcal{Z}}}}{}{#2}{}{#1}}}
\newrobustcmd{\widetildebmmathcalZz}[2][]{\ensuremath{\subp{\widetilde{\bm{\mathcal{Z}}}}{}{#2}{}{#1}}}
\newrobustcmd{\acutebmmathcalZz}[2][]{\ensuremath{\subp{\acute{\bm{\mathcal{Z}}}}{}{#2}{}{#1}}}
\newrobustcmd{\gravebmmathcalZz}[2][]{\ensuremath{\subp{\grave{\bm{\mathcal{Z}}}}{}{#2}{}{#1}}}
\newrobustcmd{\dotbmmathcalZz}[2][]{\ensuremath{\subp{\dot{\bm{\mathcal{Z}}}}{}{#2}{}{#1}}}
\newrobustcmd{\ddotbmmathcalZz}[2][]{\ensuremath{\subp{\ddot{\bm{\mathcal{Z}}}}{}{#2}{}{#1}}}
\newrobustcmd{\brevebmmathcalZz}[2][]{\ensuremath{\subp{\breve{\bm{\mathcal{Z}}}}{}{#2}{}{#1}}}
\newrobustcmd{\barbmmathcalZz}[2][]{\ensuremath{\subp{\bar{\bm{\mathcal{Z}}}}{}{#2}{}{#1}}}
\newrobustcmd{\vecbmmathcalZz}[2][]{\ensuremath{\subp{\vec{\bm{\mathcal{Z}}}}{}{#2}{}{#1}}}
\newrobustcmd{\ellz}[2][]{\ensuremath{\subp{\ell}{}{#2}{}{#1}}}
\newcommand{\TSR}[3]{   \mbox{\ensuremath{\subp{   \left\{ #1 \right\}}{}{#2}{}{#3}}}
}
\newcommand{\lgcor}[2][]{
  \rhoz{#2
    \IfStrEq{#1}{\empty}{\!}{|#1}}}
\newcommand{\hatlgcor}[2][]{
  \hatrhoz{#2
    \IfStrEq{#1}{\empty}{\!}{|#1}}}
\newcommand{\lgacr}[3][]{
  \rhoz{#2
    \IfStrEq{#1}{\empty}{\!}{|#1}}(#3)}
\newcommand{\lgccr}[4][]{
  \rhoz{#2:#3
    \IfStrEq{#1}{\empty}{\!}{|#1}}(#4)}
\newcommand{\hatlgacr}[3][]{
  \hatrhoz{#2
    \IfStrEq{#1}{\empty}{\!}{|#1}}(#3)}
\newcommand{\hatlgccr}[4][]{
  \hatrhoz{#2:#3
    \IfStrEq{#1}{\empty}{\!}{|#1}}(#4)}
\newcommand{\hatlgacrb}[4][]{
  \hatrhoz{\!#2
    \IfStrEq{#1}{\empty}{\!}{|#1}
  }(#3|\scalebox{.7}{$#4$})}
\newcommand{\hatlgccrb}[5][]{
  \hatrhoz{#2:#3
    \IfStrEq{#1}{\empty}{\!}{|#1}
  }(#4|\scalebox{.7}{$#5$})}
\newcommand{\hatlgthetab}[4][]{
  \hatbmthetaz{\!#2
    \IfStrEq{#1}{\empty}{\!}{|#1}
  }(#3|\scalebox{.7}{$#4$})}
\newcommand{\lgsd}[3][]{
  \fz{\!#2
    \IfStrEq{#1}{\empty}{\!}{|#1}}(#3)}
\newcommand{\lgcsd}[4][]{
  \fz{#2:#3
    \IfStrEq{#1}{\empty}{\!}{|#1}}(#4)}
\newcommand{\lgcsdCo}[4][]{
  \cz{#2:#3
    \IfStrEq{#1}{\empty}{\!}{|#1}}(#4)}
\newcommand{\lgcsdQuad}[4][]{
  \qz{#2:#3
    \IfStrEq{#1}{\empty}{\!}{|#1}}(#4)}
\newcommand{\lgcsdAmplitude}[4][]{
  \alphaz{#2:#3
    \IfStrEq{#1}{\empty}{\!}{|#1}}(#4)}
\newcommand{\lgcsdPhase}[4][]{
  \phiz{#2:#3
    \IfStrEq{#1}{\empty}{\!}{|#1}}(#4)}
\newcommand{\lgcsdCoM}[5][]{
  \cz[#5]{#2:#3
    \IfStrEq{#1}{\empty}{\!}{|#1}}(#4)}
\newcommand{\lgcsdQuadM}[5][]{
  \qz[#5]{#2:#3
    \IfStrEq{#1}{\empty}{\!}{|#1}}(#4)}
\newcommand{\lgcsdAmplitudeM}[5][]{
  \alphaz[#5]{#2:#3
    \IfStrEq{#1}{\empty}{\!}{|#1}}(#4)}
\newcommand{\lgcsdPhaseM}[5][]{
  \phiz[#5]{#2:#3
    \IfStrEq{#1}{\empty}{\!}{|#1}}(#4)}
\newcommand{\hatlgcsdCoM}[5][]{
  \hatcz[\ #5]{#2:#3
    \IfStrEq{#1}{\empty}{\!}{|#1}}(#4)}
\newcommand{\hatlgcsdQuadM}[5][]{
  \hatqz[\ #5]{#2:#3
    \IfStrEq{#1}{\empty}{\!}{|#1}}(#4)}
\newcommand{\hatlgcsdAmplitudeM}[5][]{
  \hatalphaz[\ #5]{#2:#3
    \IfStrEq{#1}{\empty}{\!}{|#1}}(#4)}
\newcommand{\hatlgcsdPhaseM}[5][]{
  \hatphiz[\ #5]{#2:#3
    \IfStrEq{#1}{\empty}{\!}{|#1}}(#4)}
\newcommand{\lgcsdSQ}[4][]{
  \mathcalKz[asc]{#2:#3
    \IfStrEq{#1}{\empty}{\!}{|#1}}(#4)}
\newcommand{\lgsdM}[4][]{
  \fz[#4]{\!#2
    \IfStrEq{#1}{\empty}{\!}{|#1}}(#3)}
\newcommand{\lgcsdM}[5][]{
  \fz[#5]{\!#2:#3
    \IfStrEq{#1}{\empty}{\!}{|#1}}(#4)}
\newcommand{\hatlgsd}[3][]{
  \hatfz{\!#2
    \IfStrEq{#1}{\empty}{\!}{|#1}}(#3)}
\newcommand{\hatlgcsd}[4][]{
  \hatfz{\!#2:#3
    \IfStrEq{#1}{\empty}{\!}{|#1}}(#4)}
\newcommand{\hatlgsdM}[4][]{
  \hatfz[#4]{\!#2
    \IfStrEq{#1}{\empty}{\!}{|#1}}(#3)}
\newcommand{\hatlgcsdM}[5][]{
  \hatfz[#5]{\!#2:#3
    \IfStrEq{#1}{\empty}{\!}{|#1}}(#4)}
\newcommand{\lgsdRE}[3][]{
  \cz{#2
    \IfStrEq{#1}{\empty}{\!}{|#1}}(#3)}
\newcommand{\hatlgsdREM}[4][]{
  \hatcz[\ #4]{#2
    \IfStrEq{#1}{\empty}{\!}{|#1}}(#3)}
\newcommand{\lgsdIM}[3][]{
  \qz{#2
    \IfStrEq{#1}{\empty}{\!}{|#1}}(#3)}
\newcommand{\hatlgsdIMM}[4][]{
  \hatqz[\ #4]{#2
    \IfStrEq{#1}{\empty}{\!}{|#1}}(#3)}
\newcommand{\overbar}[1]{\mkern 2.5mu\overline{\mkern-2.5mu#1\mkern-2.5mu}\mkern 2.5mu}
\newcommand{\Ubar}[1]{\mkern 2.5mu\underline{\mkern-2.5mu#1\mkern-2.5mu}\mkern 2.5mu}
\newcommand{\OUbar}[1]{\ensuremath{\overbar{\Ubar{#1}}}}
\newcommand{\Mblock}[1]{
  \overbar{#1}}
\newcommand{\MMAS}[8]{
  \ensuremath{\subp{#1}{}{                 #2
                \ifstrempty{#4}{#3}{\Mblock{#3}}
                        #5#6}{}{}{}
            \IfStrEq{#8}{}{}{#7\left(#8\right)} 
          }}
\newcommand{\LGpsymbol}{v}
\newcommand{\LGp}{\ensuremath{\bm{\LGpsymbol}}}
\newcommand{\LGpd}{\ensuremath{\breve{\LGp}}}
\newcommand{\LGpi}[2][]{
  \subp{\LGpsymbol}{}{#2}{}{#1}
}
\newcommand{\LGpoint}{
  \ensuremath{\parenR{\LGpi{1},\LGpi{2}}}}
\newcommand{\LGpointd}{
  \ensuremath{\parenR{\LGpi{2},\LGpi{1}}}}
\newrobustcmd{\Yht}[2]{
  \MMAS{\bm{Y}}{}{#1}{}{:}{#2}{}{} }
\newrobustcmd{\Yhtc}[3]{
  \MMAS{\bm{Y}}{}{#1:#2}{}{:}{#3}{}{} }
\newrobustcmd{\YMt}[2]{
  \MMAS{\bm{Y}}{}{#1}{-}{:}{#2}{}{} }
\newrobustcmd{\yh}[1]{
  \MMAS{\bm{y}}{}{#1}{}{}{}{}{} }
\newrobustcmd{\yht}[2]{
  \MMAS{\bm{y}}{}{#1}{}{:}{#2}{}{} }
\newrobustcmd{\yM}[1]{
  \MMAS{\bm{y}}{}{#1}{-}{}{}{}{} }
\newrobustcmd{\yMt}[2]{
  \MMAS{\bm{y}}{}{#1}{-}{:}{#2}{}{} }
\newrobustcmd{\yhc}[2]{
  \MMAS{\bm{y}}{}{#1:#2}{}{}{}{}{} }
\newrobustcmd{\dyh}[1]{
  \MMAS{\d{\bm{y}}}{}{#1}{}{}{}{}{} }
\newrobustcmd{\dyhc}[2]{
  \MMAS{\d{\bm{y}}}{}{#1:#2}{}{}{}{}{} }
\newrobustcmd{\dyM}[1]{
  \MMAS{\d{\bm{y}}}{}{#1}{-}{}{}{}{} }
\newrobustcmd{\dyMh}[2]{
  \MMAS{\d{\bm{y}}}{}{#1}{-}{/}{#2}{}{} }
\newrobustcmd{\gh}[2][]{
  \MMAS{g}{}{#2}{}{}{}{\!}{#1} }
\newrobustcmd{\ght}[3][]{
  \MMAS{g}{}{#2}{}{:}{#3}{\!}{#1} }
\newrobustcmd{\gM}[2][]{
  \MMAS{g}{}{#2}{-}{}{}{\!}{#1} }
\newrobustcmd{\gMt}[3][]{
  \MMAS{g}{}{#2}{-}{:}{#3}{\!}{#1} }
\newrobustcmd{\ghc}[3][]{
  \MMAS{g}{}{#2:#3}{}{}{}{\!}{#1} }
\newrobustcmd{\Xhit}[3]{
  \MMAS{X}{}{#1#2}{}{:}{#3}{}{} }
\newrobustcmd{\Xat}[2][]{
  \MMAS{X}{}{#2}{}{}{}{\!}{\bm{#1}} }
\newrobustcmd{\Xt}[1]{
  \MMAS{\bm{X}}{}{#1}{}{}{}{\!}{} }
\newrobustcmd{\Xht}[2]{
  \MMAS{\bm{X}}{}{#1}{}{:}{#2}{\!}{} }
\newcommand{\absp}[2][]{  \ensuremath{\subp{\left|#2\right|}{}{}{}{#1}} }
\newcommand{\thetahcomp}[3][]{
  \thetaz{
    \IfStrEq{#1}{\empty}{}{\!#1|}
    #3}}
\newcommand{\thetah}[2][]{
  {\bmthetaz{
    \IfStrEq{#1}{\empty}{}{\!#1|}
    #2}}}
\newcommand{\thetahc}[4]{
  {\bmthetaz{
      \IfStrEq{#2}{\empty}{}{\!#2|}
      #3:#4
      \IfStrEq{#1}{\empty}{}{|#1}
}}}
\newcommand{\hatthetahc}[4]{
  {\hatbmthetaz{
      \IfStrEq{#2}{\empty}{}{\!#2|}
      #3:#4
      \IfStrEq{#1}{\empty}{}{|#1}
}}
}
\newcommand{\thetahbc}[5]{
  \MMAS{\bm{\theta}}{}{#2|#3:#4}{}{:}{\bm{#5}
    \IfStrEq{#1}{\empty}{}{|#1}
  }{}{} }
\newcommand{\hatthetahNc}[6][]{
    \MMAS{\hat{\bm{\theta}}}{}{#3|#4:#5}{}{:}{#6
      \IfStrEq{#2}{\empty}{}{|#2}
    }{\!}{#1} }
\newcommand{\hatthetaMN}[3][]{
  \MMAS{\hat{\bm{\theta}}}{}{#2}{-}
  {\IfStrEq{#3}{}{}{:}}
  {\IfStrEq{#3}{}{}{#3}}
  {\!}{#1} }
\newcommand{\nablahc}[4][]{
  \ensuremath{\subp{\bm{\nabla}}{\!\!}{#3:#4
      \IfStrEq{#2}{\empty}{}{|#2}}{}{#1}}}
\newcommand{\nablaMc}[4][]{
  \ensuremath{\subp{\bm{\nabla}}{\!\!}{#3:\OUbar{#4}
      \IfStrEq{#2}{\empty}{}{|#2}}{}{#1}}}
\newcommand{\Uh}[2][]{
  \MMAS{u}{}{#2}{}{}{}{\!}{#1} }
\newrobustcmd{\UhL}[3][]{
  \MMAS{u}{}{#2}{}{}{}{\!}{#1}^{{}_{#3}} }
\newcommand{\Kh}[2][]{\ensuremath{\Kz{\!#2}
    \IfStrEq{#1}{}{}{\!\left(#1\right)} }}
\newcommand{\Khb}[3][]{\ensuremath{K_{\overset{#2:#3}{}}
    \IfStrEq{#1}{}{}{\!\left(#1\right)} }}
\newcommand{\Khbc}[4][]{\ensuremath{K_{\overset{#2:#3:#4}{}}
    \IfStrEq{#1}{}{}{\!\left(#1\right)} }}
\newcommand{\KhN}[3][]{\ensuremath{K_{\overset{#2:#3}{}}
    \IfStrEq{#1}{}{}{\!\left(#1\right)} }}
\newcommand{\Khbdefc}{\ensuremath{\Khbc[\yhc{k\ell}{h}-\LGp]{k\ell}{h}{\bm{b}}}}
\newcommand{\KhbDEFc}{\ensuremath{\Khbc[\Yhtc{k\ell}{h}{t}-\LGp]{k\ell}{h}{\bm{b}}}}
\newcommand{\qhc}[5]{
  \MMAS{q}{}{#2|#3:#4}{}{:}{\bm{#5}
    \IfStrEq{#1}{}{}{|#1}}{}{} }
\newcommand{\LhNc}[6][]{
  \MMAS{L}{}{#3|#4:#5}{}{:}{#6
    \IfStrEq{#2}{\empty}{}{|#2}
  }{\!}{#1} }
\newcommand{\QhNc}[6][]{
  \MMAS{Q}{}{#3|#4:#5}{}{:}{#6
    \IfStrEq{#2}{\empty}{}{|#2}
  }{\!}{#1} }
\newcommand{\tildeQhNc}[6][]{
  \MMAS{\tilde{Q}}{}{#3|#4:#5}{}{:}{#6
    \IfStrEq{#2}{\empty}{}{|#2}
  }{\!}{#1} }
\newcommand{\QMNc}[7][]{
  \MMAS{Q}{}{#3:#4|#5:\OUbar{#6}}{}{:}{#7
    \IfStrEq{#2}{\empty}{}{|#2}
  }{\!}{#1} }
\newcommand{\ThN}[3][]{
    \ensuremath{\Tz{#2
      \IfStrEq{#3}{}{}{:#3}}
    \IfStrEq{#1}{}{}{\!\left(#1\right)}}
}
\newcommand{\ThNc}[6][]{
  \ensuremath{\Tz{#3|#4:#5
      \IfStrEq{#6}{}{}{:#6
        \IfStrEq{#2}{\empty}{}{|#2}
    }}
    \IfStrEq{#1}{}{}{\!\left(#1\right)}}
}
\newcommand{\TMbN}[4][]{
  \ensuremath{\Tz{\overbar{#2}
      \IfStrEq{#3}{}{}{|#3}
      \IfStrEq{#4}{}{}{:#4}}
    \IfStrEq{#1}{}{}{\!\left(#1\right)}}
}
\newcommand{\WhN}[4][]{
  \ensuremath{\Wz[
    \IfStrEq{#2}{}{}{#2}]{\!#3
            \IfStrEq{#4}{}{}{:#4}}
    \IfStrEq{#1}{}{}{\!\left(#1\right)}}
}
\newcommand{\WhNc}[7][]{
  \ensuremath{\Wz[
      \IfStrEq{#5}{}{}{#5}]{\!#3|#4:#6
      \IfStrEq{#7}{}{}{:#7}
      \IfStrEq{#2}{\empty}{}{|#2}
    }
    \IfStrEq{#1}{}{}{\!\left(#1\right)}}
}
\newcommand{\WMbN}[4][]{
  \ensuremath{\Wz{\!\overbar{#2}
      \IfStrEq{#3}{}{}{|#3}
      \IfStrEq{#4}{}{}{:#4}}
    \IfStrEq{#1}{}{}{\!\left(#1\right)}}
}
\newcommand{\VhN}[4][]{
  \ensuremath{\Vz[
    \IfStrEq{#2}{}{}{#2}]{\!#3
            \IfStrEq{#4}{}{}{:#4}}
    \IfStrEq{#1}{}{}{\!\left(#1\right)}}
}
\newcommand{\VhNc}[7][]{
  \ensuremath{\Vz[
      \IfStrEq{#5}{}{}{#5}]{\!#3|#4:#6
      \IfStrEq{#7}{}{}{:#7}
      \IfStrEq{#2}{\empty}{}{|#2}
    }
    \IfStrEq{#1}{}{}{\!\left(#1\right)}}
}
\newcommand{\VMbN}[4][]{
  \ensuremath{\Vz{\overbar{#2}
      \IfStrEq{#3}{}{}{|#3}
      \IfStrEq{#4}{}{}{:#4}}
    \IfStrEq{#1}{}{}{\!\left(#1\right)}}
}
\newcommand{\RRn}[1]{\ensuremath{\subp{\RR}{}{}{}{#1}}}
\crefname{algorithm}{Algorithm}{Algorithms}
\crefname{appendix}{Appendix}{Appendices}
\crefname{assumption}{Assumption}{Assumptions}
\crefname{corollary}{Corollary}{Corollaries}
\crefname{definition}{Definition}{Definitions}
\crefname{enumi}{Item}{Items}
\crefname{equation}{Equation}{Equations}
\crefname{figure}{Figure}{Figures}
\crefname{lemma}{Lemma}{Lemmas}
\crefname{remark}{Remark}{Remarks}
\crefname{table}{Table}{Tables}
\crefname{theorem}{Theorem}{Theorems}
\newrobustcmd{\uhc}[4][]{
  \MMAS{\bm{u}}{}{#3:#4
    \IfStrEq{#2}{\empty}{}{|#2}
  }{}{}{}{\!}{#1} }
\newrobustcmd{\bNi}[2][N]{
  \MMAS{b}{}{#1}{}{}{#2}{}{} }
\newrobustcmd{\aNi}[2][]{
  \MMAS{a}{}{#1}{}{}{#2}{}{} }
\newrobustcmd{\cNi}[2][]{
  \MMAS{c}{}{#1}{}{}{#2}{}{} }
\newrobustcmd{\rNi}[2][]{
  \MMAS{r}{}{#1}{}{}{#2}{}{} }
\newrobustcmd{\sNi}[2][]{
  \MMAS{s}{}{#1}{}{}{#2}{}{} }
\newrobustcmd{\ellNi}[2][]{
  \MMAS{\ell}{}{#1}{}{}{#2}{}{} }
\newrobustcmd{\XNht}[4][]{\Xz[#2\IfStrEq{#1}{\empty}{}{|#1}]{#3:#4}} 
\newrobustcmd{\tildeXNht}[4][]{\tildeXz[#2\IfStrEq{#1}{\empty}{}{|#1}]{#3:#4}} 
\newrobustcmd{\XNMt}[4][]{\Xz[#2\IfStrEq{#1}{\empty}{}{|#1}]{\overbar{#3}:#4}}
\newrobustcmd{\ZNht}[4][]{\Zz[#2\IfStrEq{#1}{\empty}{}{|#1}]{#3:#4}} 
\newrobustcmd{\ZNhtvec}[3][]{\bmZz[#2\IfStrEq{#1}{\empty}{}{|#1}]{#3}} 
\newrobustcmd{\ZNMt}[4][]{\Zz[#2\IfStrEq{#1}{\empty}{}{|#1}]{\overbar{#3}:#4}}
\newrobustcmd{\QNh}[3][]{\subp{\reflectbox{$Q$}}{}{#3}{}{#2\IfStrEq{#1}{\empty}{}{|#1}}} 
\newrobustcmd{\QNhvec}[3][]{\subp{\reflectbox{$\bm{Q}$}}{}{#3}{}{#2\IfStrEq{#1}{\empty}{}{|#1}}} 
\newrobustcmd{\QNM}[3][]{\subp{\reflectbox{$Q$}}{}{\overbar{#3}}{}{#2\IfStrEq{#1}{\empty}{}{|#1}}}
\newrobustcmd{\muNh}[3][]{
  \MMAS{\mu}{}{#3}{}{}{}{}{}^{{}_{#2\IfStrEq{#1}{\empty}{}{|#1}}} }
\newrobustcmd{\muNM}[3][]{
  \MMAS{\mu}{}{#3}{-}{}{}{}{}^{{}_{#2\IfStrEq{#1}{\empty}{}{|#1}}} }
\newrobustcmd{\zetaNj}[3][*]{
  \MMAS{\zeta}{}{}{}{}{#3}{}{}^{{}_{#2|#1}} }
\newrobustcmd{\pnt}[2]{
  \ensuremath{p\!\left(
      \IfStrEq{#1}{#2}{#2}{#2|#1}\right)}}
\newrobustcmd{\qnt}[2]{
  \ensuremath{q\!\left(
      \IfStrEq{#1}{#2}{#2}{#2|#1}\right)}}
\newrobustcmd{\tildeqnt}[2]{
  \ensuremath{\tilde{q}\!\left(
      \IfStrEq{#1}{#2}{#2}{#2|#1}\right)}}
\newrobustcmd{\innt}[2]{
  \ensuremath{i\!\left(
      \IfStrEq{#1}{#2}{#2}{#2|#1}\right)}}
\newrobustcmd{\tildeinnt}[2]{
  \ensuremath{\tilde{\imath}\!\left(
      \IfStrEq{#1}{#2}{#2}{#2|#1}\right)}}
\newrobustcmd{\anp}[3][]{
  \ensuremath{\subp{a}{
      \IfStrEq{#2}{}{}{#2:}
      }{#3}{}{#1}}}
\newrobustcmd{\vnp}[3][]{
  \ensuremath{\subp{v}{#2:}{#3}{}{#1}}}
\newrobustcmd{\tp}[1]{
  \ensuremath{\subp{t}{}{}{}{#1}}}
\newrobustcmd{\lambdaiM}[2]{
  \ensuremath{\subp{\lambda}{}{#1}{}{}\!\left(#2 \right)}}
\newrobustcmd{\lambdazM}[3][]{
  \lambdaz[#1]{#3\!}(#2)}
\newrobustcmd{\bmbzh}[1]{\bmbz{#1}}
\newrobustcmd{\rni}[3][]{
  \ensuremath{\subp{r}{
      \IfStrEq{#2}{}{}{#2:}}{#3}{}{}        
    \IfStrEq{#1}{}
    {}{\!\left(#1\right) }}}
\newrobustcmd{\unit}[2][]{
  \ensuremath{\subp{\bm{e}}{}{#2}{}{#1}}}
\newrobustcmd{\baM}[2][]{
  \ensuremath{\subp{\bm{a}}{}{\overbar{#2}}{}{#1}}}
\newrobustcmd{\mathfrakUhb}[3][]{
  \ensuremath{\subp{\mathfrak{U}}{#2}{
      \IfStrEq{#3}{}{}{:\bm{#3}}
    }{}{#1}}}
\newrobustcmd{\mathcalCnki}[3]{
  \ensuremath{\subp{\mathcal{C}}{
      \IfStrEq{#1}{}{}{#1:}
    }{
      \IfStrEq{#2}{}{}{#2|}
      #3}{}{}}}
\newrobustcmd{\mathcalVnki}[3]{
  \ensuremath{\subp{\mathcal{V}}{#1:}{#2|#3}{}{}}}
\newrobustcmd{\mathcalWnki}[3]{
  \ensuremath{\subp{\mathcal{W}}{#1:}{#2|#3}{}{}}}
\newrobustcmd{\QMx}[2]{
  \ensuremath{\subp{Q}{}{#1}{}{}\!\!\left(#2\right)}}
\newrobustcmd{\sigmah}[2]{
  \MMAS{\sigma}{}{#1}{}{|}{#2}{}{}^{\,{}_{2}} }
\newrobustcmd{\sigmaM}[3][]{
  \MMAS{\sigma}{}{#2}{-}{|}{#3}{}{}^{\,{}_{2}}\!\!\left(#1\right) }
\newrobustcmd{\sigmaMlimit}[2][]{
  \MMAS{\sigma}{}{\bullet}{}{|}{#2}{}{}^{\,{}_{2}}\!\!\left(#1\right) }
\newrobustcmd{\SigmaM}[2]{
  \MMAS{\bm{\Sigma}}{}{#1}{-}{|}{#2}{}{} }
\newrobustcmd{\INhjl}[5][]{\Iz[#2\IfStrEq{#1}{\empty}{}{|#1}]{#3#4:#5}}
\newrobustcmd{\INMl}[4][]{\Iz[#2\IfStrEq{#1}{\empty}{}{|#1}]{\overbar{#3}:#4}}
\newcommand{\oh}[2][]{   \ensuremath{\subp{o}{}{#1}{}{}\!\left(#2\right)} }
\newcommand{\blimit}{
  \ensuremath{\bm{b}\rightarrow\subp{\bm{0}}{}{}{}{+}}
}
\newcommand{\hlimit}{
  \ensuremath{h\rightarrow\infty}
}
\newcommand{\mlimit}{
  \ensuremath{m\rightarrow\infty}
}
\newcommand{\nlimit}{
  \ensuremath{n\rightarrow\infty}
}
\newcommand{\operatorss}[3]{  \relax   \ifmmode   #1_{#2}^{#3}   \else   \ensuremath{\subp{#1}{}{#2}{}{#3}}   \fi }
\newcommand{\intss}[2]{
  \operatorss{\int}{#1}{#2}
}
\newcommand{\oplusss}[2]{
  \operatorss{\bigoplus}{#1}{#2}
}
\newcommand{\sumss}[2]{
  \operatorss{\sum}{#1}{#2}
}
\newrobustcmd{\parenType}[2][L]{
    \StrLeft{#2}{1}[\parSelect]    \IfStrEq{#1}{L}{
    \def\parDirection{\left}
         \IfStrEq{\parSelect}{.}{\def\parType{.}}{}
    \IfStrEq{\parSelect}{|}{\def\parType{|}}{}
    \IfStrEq{\parSelect}{1}{\def\parType{\|}}{}
    \IfStrEq{\parSelect}{u}{\def\parType{\uparrow}}{}
    \IfStrEq{\parSelect}{d}{\def\parType{\downarrow}}{}
    \IfStrEq{\parSelect}{U}{\def\parType{\Uparrow}}{}
    \IfStrEq{\parSelect}{D}{\def\parType{\Downarrow}}{}
        \IfStrEq{\parSelect}{2}{\def\parType{(}}{}
    \IfStrEq{\parSelect}{3}{\def\parType{)}}{}
        \IfStrEq{\parSelect}{4}{\def\parType{[}}{}
    \IfStrEq{\parSelect}{5}{\def\parType{]}}{}
        \IfStrEq{\parSelect}{6}{\def\parType{\{}}{}
    \IfStrEq{\parSelect}{7}{\def\parType{\}}}{}
        \IfStrEq{\parSelect}{8}{\def\parType{<}}{}
    \IfStrEq{\parSelect}{9}{\def\parType{>}}{}
                    \IfStrEq{\parSelect}{b}{\def\parType{\lfloor}}{}
    \IfStrEq{\parSelect}{B}{\def\parType{\rfloor}}{}
        \IfStrEq{\parSelect}{b}{\def\parType{\lfloor}}{}
    \IfStrEq{\parSelect}{B}{\def\parType{\rfloor}}{}
        \IfStrEq{\parSelect}{t}{\def\parType{\lceil}}{}
    \IfStrEq{\parSelect}{T}{\def\parType{\rceil}}{}
  }{
    \def\parDirection{\right}
         \IfStrEq{\parSelect}{.}{\def\parType{.}}{}
    \IfStrEq{\parSelect}{|}{\def\parType{|}}{}
    \IfStrEq{\parSelect}{1}{\def\parType{\|}}{}
    \IfStrEq{\parSelect}{u}{\def\parType{\uparrow}}{}
    \IfStrEq{\parSelect}{d}{\def\parType{\downarrow}}{}
    \IfStrEq{\parSelect}{U}{\def\parType{\Uparrow}}{}
    \IfStrEq{\parSelect}{D}{\def\parType{\Downarrow}}{}
        \IfStrEq{\parSelect}{2}{\def\parType{)}}{}
    \IfStrEq{\parSelect}{3}{\def\parType{(}}{}
        \IfStrEq{\parSelect}{4}{\def\parType{]}}{}
    \IfStrEq{\parSelect}{5}{\def\parType{[}}{}
        \IfStrEq{\parSelect}{6}{\def\parType{\}}}{}
    \IfStrEq{\parSelect}{7}{\def\parType{\{}}{}
        \IfStrEq{\parSelect}{8}{\def\parType{>}}{}
    \IfStrEq{\parSelect}{9}{\def\parType{<}}{}
                    \IfStrEq{\parSelect}{b}{\def\parType{\rfloor}}{}
    \IfStrEq{\parSelect}{B}{\def\parType{\lfloor}}{}
        \IfStrEq{\parSelect}{T}{\def\parType{\rceil}}{}
    \IfStrEq{\parSelect}{t}{\Def\parType{\lceil}}{}
  }
    \IfStrEq{#2}{}{}{\parDirection\parType}
}
\newrobustcmd{\parenthesisGenerator}[5][]{
  \begingroup
        \IfStrEq{#1}{}{\def\firstArg{#2}}{\def\firstArg{#1}}
    \IfStrEq{#2}{}{\def\firstArg{#2}}{}
    \subp{
    \IfStrEq{#2}{}{}{\parenType[L]{#2}}
    \IfStrEq{#2}{}{
      \mbox{\ensuremath{#3}}}{#3}
    \IfStrEq{\firstArg}{}{}{\parenType[R]{\firstArg}}}{}{#4}{}{#5}
  \endgroup
}
\newrobustcmd{\parenR}[1]{
  \parenthesisGenerator{2}{#1}{}{}
}
\newrobustcmd{\parenRz}[3][]{
  \parenthesisGenerator{2}{#3}{\!#2}{\!#1}
}
\newrobustcmd{\parenA}[1]{
  \parenthesisGenerator{8}{#1}{}{}
}
\newrobustcmd{\parenAz}[3][]{
  \parenthesisGenerator{8}{#3}{\!#2}{\!#1}
}
\newrobustcmd{\parenC}[1]{
  \parenthesisGenerator{6}{#1}{}{}
} 
\newrobustcmd{\parenCz}[3][]{
  \parenthesisGenerator{6}{#3}{#2}{#1}
}
\newrobustcmd{\parenS}[1]{
  \parenthesisGenerator{4}{#1}{}{}
} 
\newrobustcmd{\parenSz}[3][]{
  \parenthesisGenerator{4}{#3}{\!#2}{\!#1}
}
\newrobustcmd{\parenAbs}[1]{
  \parenthesisGenerator{|}{#1}{}{}
} 
\newrobustcmd{\parenAbsz}[3][]{
  \parenthesisGenerator{|}{#3}{#2}{#1}
}
\newrobustcmd{\parenABS}[1]{
  \parenthesisGenerator{1}{#1}{}{}
} 
\newrobustcmd{\parenABSz}[3][]{
  \parenthesisGenerator{1}{#3}{#2}{#1}
}
\newrobustcmd{\power}[3][]{
  \parenthesisGenerator{}{#3}{#2}{#1}
}
\newrobustcmd{\Vector}[2][]{
  \parenSz[\IfStrEq{#1}{}{}{\,#1}]{}{#2}
}
\newrobustcmd{\Co}{\texttt{Co}\xspace}
\newrobustcmd{\Quad}{\texttt{Quad}\xspace}
\newrobustcmd{\Phase}{\texttt{Phase}\xspace}
\newrobustcmd{\Amplitude}{\texttt{Amplitude}\xspace}
\def\EuStockMarkets{\texttt{EuStockMarkets}\xspace}
\newrobustcmd{\eval}[3][]{
  \subp{\left.#2\right|}{}{#3}{}{}
}
\renewcommand{\lgcor}[2][]{
  \rhoz{#2
    \IfStrEq{#1}{\empty}{\!}{}}}
\renewcommand{\hatlgcor}[2][]{
  \hatrhoz{#2
    \IfStrEq{#1}{\empty}{\!}{}}}
\renewcommand{\lgacr}[3][]{
  \rhoz{#2
    \IfStrEq{#1}{\empty}{\!}{}}(#3)}
\renewcommand{\lgccr}[4][]{
  \rhoz{#2:#3
    \IfStrEq{#1}{\empty}{\!}{}}(#4)}
\renewcommand{\hatlgacr}[3][]{
  \hatrhoz{#2
    \IfStrEq{#1}{\empty}{\!}{}}(#3)}
\renewcommand{\hatlgccr}[4][]{
  \hatrhoz{#2:#3
    \IfStrEq{#1}{\empty}{\!}{}}(#4)}
\renewcommand{\hatlgacrb}[4][]{
  \hatrhoz{\!#2
    \IfStrEq{#1}{\empty}{\!}{}
  }(#3|\scalebox{.7}{$#4$})}
\renewcommand{\hatlgccrb}[5][]{
  \hatrhoz{#2:#3
    \IfStrEq{#1}{\empty}{\!}{}
  }(#4|\scalebox{.7}{$#5$})}
\renewcommand{\hatlgthetab}[4][]{
  \hatbmthetaz{\!#2
    \IfStrEq{#1}{\empty}{\!}{}
  }(#3|\scalebox{.7}{$#4$})}
\renewcommand{\lgsd}[3][]{
  \fz{\!#2
    \IfStrEq{#1}{\empty}{\!}{}}(#3)}
\renewcommand{\lgcsd}[4][]{
  \fz{#2:#3
    \IfStrEq{#1}{\empty}{\!}{}}(#4)}
\renewcommand{\lgcsdCo}[4][]{
  \cz{#2:#3
    \IfStrEq{#1}{\empty}{\!}{}}(#4)}
\renewcommand{\lgcsdQuad}[4][]{
  \qz{#2:#3
    \IfStrEq{#1}{\empty}{\!}{}}(#4)}
\renewcommand{\lgcsdAmplitude}[4][]{
  \alphaz{#2:#3
    \IfStrEq{#1}{\empty}{\!}{}}(#4)}
\renewcommand{\lgcsdPhase}[4][]{
  \phiz{#2:#3
    \IfStrEq{#1}{\empty}{\!}{}}(#4)}
\renewcommand{\lgcsdCoM}[5][]{
  \cz[#5]{#2:#3
    \IfStrEq{#1}{\empty}{\!}{}}(#4)}
\renewcommand{\lgcsdQuadM}[5][]{
  \qz[#5]{#2:#3
    \IfStrEq{#1}{\empty}{\!}{}}(#4)}
\renewcommand{\lgcsdAmplitudeM}[5][]{
  \alphaz[#5]{#2:#3
    \IfStrEq{#1}{\empty}{\!}{}}(#4)}
\renewcommand{\lgcsdPhaseM}[5][]{
  \phiz[#5]{#2:#3
    \IfStrEq{#1}{\empty}{\!}{}}(#4)}
\renewcommand{\hatlgcsdCoM}[5][]{
  \hatcz[\ #5]{#2:#3
    \IfStrEq{#1}{\empty}{\!}{}}(#4)}
\renewcommand{\hatlgcsdQuadM}[5][]{
  \hatqz[\ #5]{#2:#3
    \IfStrEq{#1}{\empty}{\!}{}}(#4)}
\renewcommand{\hatlgcsdAmplitudeM}[5][]{
  \hatalphaz[\ #5]{#2:#3
    \IfStrEq{#1}{\empty}{\!}{}}(#4)}
\renewcommand{\hatlgcsdPhaseM}[5][]{
  \hatphiz[\ #5]{#2:#3
    \IfStrEq{#1}{\empty}{\!}{}}(#4)}
\renewcommand{\lgcsdSQ}[4][]{
  \mathcalKz[asc]{#2:#3
    \IfStrEq{#1}{\empty}{\!}{}}(#4)}
\renewcommand{\lgsdM}[4][]{
  \fz[#4]{\!#2
    \IfStrEq{#1}{\empty}{\!}{}}(#3)}
\renewcommand{\lgcsdM}[5][]{
  \fz[#5]{\!#2:#3
    \IfStrEq{#1}{\empty}{\!}{}}(#4)}
\renewcommand{\hatlgsd}[3][]{
  \hatfz{\!#2
    \IfStrEq{#1}{\empty}{\!}{}}(#3)}
\renewcommand{\hatlgcsd}[4][]{
  \hatfz{\!#2:#3
    \IfStrEq{#1}{\empty}{\!}{}}(#4)}
\renewcommand{\hatlgsdM}[4][]{
  \hatfz[#4]{\!#2
    \IfStrEq{#1}{\empty}{\!}{}}(#3)}
\renewcommand{\hatlgcsdM}[5][]{
  \hatfz[#5]{\!#2:#3
    \IfStrEq{#1}{\empty}{\!}{}}(#4)}
\renewcommand{\lgsdRE}[3][]{
  \cz{#2
    \IfStrEq{#1}{\empty}{\!}{}}(#3)}
\renewcommand{\hatlgsdREM}[4][]{
  \hatcz[\ #4]{#2
    \IfStrEq{#1}{\empty}{\!}{}}(#3)}
\renewcommand{\lgsdIM}[3][]{
  \qz{#2
    \IfStrEq{#1}{\empty}{\!}{}}(#3)}
\renewcommand{\hatlgsdIMM}[4][]{
  \hatqz[\ #4]{#2
    \IfStrEq{#1}{\empty}{\!}{}}(#3)}
\newcommand{\JT}[1][]{%
  JT22\IfStrEq{#1}{\empty}{}{~[#1]}\xspace}
\newrobustcmd{\lgsdRef}[2][]{\JT[\IfStrEq{#1}{\empty}{\cref{#2}}{\cref{#2}, p.~\pageref{#2}}]}
\begin{document}

\def\spacingset#1{\renewcommand{\baselinestretch}
  {#1}\small\normalsize} \spacingset{1}

\begin{bibunit}

  \if1\blind
  {
  \title{\bf   {Local Gaussian cross-spectrum analysis}}
  
  \author{      Lars Arne Jordanger\thanks{     Western Norway University of Applied Sciences, Faculty of
      Engineering and Science, P.B 7030, 5020 Bergen, Norway
      E-mail: \textrm{Lars.Arne.Jordanger@hvl.no}}  \and       Dag Tj{\o}stheim\thanks{     University of Bergen, Department of Mathematics, P.B. 7803, 5020
      Bergen, Norway} }
  \date{\vspace{-5ex}}
  
  \maketitle
  } \fi

  \if0\blind
  {
    \bigskip
    \bigskip
    \bigskip
    \begin{center}
      {\LARGE\bf {Local Gaussian cross-spectrum analysis}}
    \end{center}
    \medskip
  } \fi

  \bigskip
  \begin{abstract}

  The ordinary spectrum is restricted in its applications, since it is
  based on the second order moments (auto and cross-covariances).
  Alternative approaches to spectrum analysis have been investigated
  based on other measures of dependence.  One such approach was
  developed for univariate time series by the authors of this paper
  using the \textit{local Gaussian auto-spectrum} based on the
  \textit{local Gaussian auto-correlations}.  This makes it possible
  to detect local structures in univariate time series that looks like
  white noise when investigated by the ordinary auto-spectrum.  In
  this paper the \textit{local Gaussian approach} is extended to a
  \textit{local Gaussian cross-spectrum} for multivariate time series.
  The local Gaussian cross-spectrum has the desirable property that it
  coincides with the ordinary cross-spectrum for Gaussian time series,
  which implies that it can be used to detect non-Gaussian traits in
  the time series under investigation.  In particular: If the ordinary
  spectrum is flat, then peaks and troughs of \textit{the local
    Gaussian spectrum} can indicate nonlinear traits, which
  potentially might reveal \textit{local periodic phenomena} that goes
  undetected in an ordinary spectral analysis.

  \end{abstract}

  \noindent
  {\it Keywords:}  
  Local periodicities, local cospectrum, local
  quadrature spectrum, local amplitude spectrum, local phase spectrum,
  heatmap, distance plot, spectral plots,

\section{Introduction}
\label{sec:lgch_Introduction}

The auto- and cross-spectra are tools that can extract temporal
information from a time series, and they can be used to detect
interesting periodic phenomena from the data under investigation.  A
disadvantage of these tools is that they are based on the auto- and
cross-correlations.  This implies that non-linear temporal
dependencies could be invisible from the point of view of the
classical spectral theory, and this has, as discussed below, motivated
quite a few modifications of the classical approach.

One such modification, the local Gaussian spectral approach, was
introduced for the univariate case in \citet{jordanger17:_lgsd}
(hereafter referred to as \JT).  A key feature of the local Gaussian
approach is that the resulting local Gaussian spectra coincide
completely with the classical spectra for Gaussian time series.  The
local Gaussian auto-spectra can be used to investigate how the
temporal dependency structure for a given univariate time series
deviates from the linear Gaussian dependency structure, and a local
Gaussian analysis can thus detect nonlinear temporal dependencies,
including local periodic phenomena, that are hidden from the classical
approach.
 
This paper presents the extension of the local Gaussian approach to
the multivariate case, introducing local Gaussian variants of the
classical (global) co-spectrum, the quadrature spectrum, the amplitude
spectrum and the phase spectrum.  Using these new local Gaussian
tools, it is possible to investigate on a deeper than linear
dependency level how different time series are interconnected.  The
local Gaussian approach can detect cross-temporal interdependency
relationships beyond what the classical (global) cross-spectrum can
do, and this can, e.g., be used to examine whether one series is
leading-lagging another one at extreme amplitudes, but not at moderate
amplitudes, say.

The local Gaussian spectral-estimation algorithm has in this paper
been sanity-tested on simulated examples (a nonlinearly connected
sinusoidal example, as explained later on), and it has also been used
on the French CAC index versus the German DAX index in order to show
how it can be applied to investigate nonlinear cross-temporal
interdependency structures from the \EuStockMarkets-data.

In addition to this, a GARCH-type model has been fitted to the
\EuStockMarkets-data, and samples from the fitted model have been used
to show how the local Gaussian sanity testing of parametric models
from \JT can be extended to the multivariate case.  This local
Gaussian sanity testing of the fitted models could supplement other
model selection techniques, and it could help detect models whose
local Gaussian properties clearly deviates from the properties seen in
the sample.  Note that the focus for this particular part, as in \JT,
is on the concepts themselves. A precise analysis of a test of fit is
left to future work.

\textbf{The classical (global) approach:}
The auto and cross-covariances
$\TSR{\TSR{\gammaz{k\ell}(h)}{h\in\ZZ}{}}{k,\ell=1}{d}$ from a time
series $\TSR{\bmYz{t}=\parenR{\Yz{1,t},\dotsc,\Yz{d,t}}}{t\in\ZZ}{}$,
can range from determining it completely (Gaussian time series) to
containing no information at all (GARCH-type models).  The auto- and
cross-spectral densities $\TSR{\fz{\!k\ell}(\omega)}{k,\ell=1}{d}$ are
(assuming they exist) the Fourier transformations of the auto and
cross-covariances, and these tools thus share the same limitations.
This implies that the auto- and cross-spectra might be inadequate
tools when the task of interest is to investigate non-Gaussian time
series containing asymmetries or other nonlinear structures --- like
those observed in stock returns, cf.\ e.g.\
\citet{hong07:_asymm_stock_retur}.

In ordinary spectral analysis, if $\TSR{\Yz{k,t}}{t\in\ZZ}{}$ and
$\TSR{\Yz{\ell,t}}{t\in\ZZ}{}$ are jointly weakly stationary, and if
the cross-covariances
$\gammaz{k\ell}(h)\defeq \Cov{\Yz{k,t+h}}{\Yz{\ell,t}}$ are absolutely
summable, then the cross-spectrum $\fz{\!k\ell}(\omega)$ is defined as
the Fourier transform of the cross-covariances, i.e.,
\begin{align}
  \label{eq:global_cross_spectrum}
  \fz{\!k\ell}(\omega)
  &= \sumss{h\in\ZZ}{} \gammaz{k\ell}(h)\cdot \ez[-2\pi i \omega h]{}.
\end{align}
A simple scaling connects the auto- and cross-covariances $\gammaz{k\ell}(h)$
and the corresponding auto- and cross-correlations $\rhoz{k\ell}(h)$,
in particular
$\gammaz{k\ell}(h)=\sqrt{\gammaz{kk}(0)\cdot\gammaz{\ell\ell}(0)}\cdot\rhoz{k\ell}(h)$,
and this implies that the cross-spectrum $\fz{k\ell}(\omega)$ given in
\cref{eq:global_cross_spectrum} also can be written as
\begin{align}
  \label{eq:global_cross_spectrum_II}
  \fz{\!k\ell}(\omega)
  &= \sqrt{\gammaz{kk}(0)\cdot\gammaz{\ell\ell}(0)}\cdot \sumss{h\in\ZZ}{} \rhoz{k\ell}(h)\cdot \ez[-2\pi i \omega h]{}.
\end{align}
It is the sum of the cross-correlations that are generalised in the
local Gaussian approach.

The expression for the inverse Fourier transform reveals, when $h=0$,
that the covariance $\Cov{\Yz{k,t}}{\Yz{\ell,t}}=\gammaz{k\ell}(0)$
can be expressed as the integral
$\intss{-1/2}{1/2} \fz{\!k\ell}(\omega) \d{\omega}$.  This makes it
possible to inspect how the (linear) interaction between the marginal
time series varies with the frequency $\omega$.  An inspection of the
cross-spectrum $\fz{\!k\ell}(\omega)$ is a bit more complicated than
that of the auto-spectrum, since $\fz{\!k\ell}(\omega)$ in general
will be a complex-valued function.  It is thus usually the following
real valued functions that are investigated,
\begin{subequations}
  \label{eq:global_co_quadrature_amplitude_phase}
  \begin{align}
    \label{eq:global_co_quadrature}
    \cz{k\ell}(\omega) 
    &= \operatorname{Re}\parenR{\fz{\!k\ell}(\omega)},
    &\qz{k\ell}(\omega) 
    &= -
      \operatorname{Im}\parenR{\fz{\!k\ell}(\omega)}, \\
    \label{eq:global_amplitude_phase}
    \alphaz{k\ell}(\omega) 
    &= \operatorname{Mod}\parenR{\fz{\!k\ell}(\omega)},
    &\phiz{k\ell}(\omega) 
     &= \phantom{-} \operatorname{Arg}\parenR{\fz{\!k\ell}(\omega)},
  \end{align}
\end{subequations}
where $\cz{k\ell}(\omega)$, $\qz{k\ell}(\omega)$,
$\alphaz{k\ell}(\omega)$ and $\phiz{k\ell}(\omega)$, respectively, are
referred to as the cospectrum, quadrature spectrum, amplitude spectrum
and phase spectrum.  Note that $\cz{k\ell}(\omega)$ always integrates
to $\gammaz{k\ell}(0)$ over one period, whereas $\qz{k\ell}(\omega)$
always integrates to zero.

The \textit{coherence}
$\mathcalKz{k\ell}(\omega) \defeq {\fz{\!k\ell}(\omega)}/{
  \sqrt{\fz{kk}(\omega)\fz{\ell\ell}(\omega)}}$ is an important tool
when a spectral analysis is performed on a multivariate time series,
in particular since $\mathcalKz{k\ell}(\omega)$ can be realised as the
correlation of ${\d{\Zz{k}(\omega)}}$ and ${\d{\Zz{\ell}(\omega)}}$,
where ${{\Zz{k}(\omega)}}$ and ${{\Zz{\ell}(\omega)}}$ are the right
continuous orthogonal-increment processes that by the Spectral
Representation Theorem correspond to the weakly stationary time series
$\TSR{\Yz{k,t}}{t\in\ZZ}{}$ and $\TSR{\Yz{\ell,t}}{t\in\ZZ}{}$, see
e.g.\ \citet[p.~436]{Brockwell:1986:TST:17326} for details.  The
\textit{squared coherence}
$\parenAbsz[2]{}{\mathcalKz{k\ell}(\omega)}$ is of interest since its
value (in the interval $[0,1]$) reveals to what extent the two time
series $\TSR{\Yz{k,t}}{t\in\ZZ}{}$ and $\TSR{\Yz{\ell,t}}{t\in\ZZ}{}$
can be related by a linear~filter.

\textbf{Modifications of the classical approach:} Other spectral
approaches, involving different generalisations of the auto-spectrum
$f(\omega)$ were discussed in
\JT[section~1], and several of the approaches
were based on the following idea: The second order moments captured by
the autocovariances $\TSR{\gamma(h)}{h\in\ZZ}{}$ can be replaced by
alternative dependence measures $\xiz{h}$ computed from the bivariate
random variables $\parenR{\Yz{t+h},\Yz{t}}$, and a spectral density
approach can then (under suitable regularity conditions) be defined as
the Fourier transform of $\TSR{\xiz{h}}{h\in\ZZ}{}$.  For multivariate
time series, the natural extension is then to define similar measures
$\xiz{k\ell:h}$ for the bivariate random variables
$\parenR{\Yz{k,t+h},\Yz{\ell,t}}$, and then use the corresponding
Fourier-transform as an alternative to the cross-spectrum
$\fz{\!k\ell}(\omega)$.

It does not seem to be the case (yet) that multivariate versions have
been investigated for all of the possible generalisations of the
auto-spectrum $f(\omega)$, but some generalisations do exist.  The
first extension of the cross-spectrum $\fz{\!k\ell}(\omega)$ along
these lines is the \textit{polyspectra} introduced in
\citet{brillinger1965introduction}, which is the multivariate version
of the higher order moments/cumulants approach to spectral analysis,
see \citet{Tukey:1959:IMS,Brillinger:1984:CWJa,brillinger1991some}.
Another generalisation of $\fz{\!k\ell}(\omega)$ is given in
\citet{chung07:_model}, where the generalised function approach
introduced in \citet{hong1999hypothesis} is used to set up a
cross-spectrum that can be used for the testing of directional
predictability in foreign exchange markets.  Some recent work applies
copula-techniques for the investigation of multivariate time series,
like \citet{li2023robust} using a \textit{copula spectra density
  kernel} and \citet{zhao22:_model_multiv_time_series_with} using
\textit{Copula-linked Univariate D-vines}.  A quantile based approach
can be found in \citet{barunik19:_quant} where the
\textit{quantile-coherency} concept is introduced, and quantile
approaches based on the \textit{quantile periodogram} can be seen in 
\citet{li2021quantile,li22:_quant_fourier_trans_quant_series}.

In many cases it can be difficult to interpret the various
spectra/visualisations which results from these modifications of the
classical approach, but machine learning methods can be combined with
these techniques and help reduce/remove this problem --- see, e.g.,
\citet{li22:_quant_frequen_analy_deep_learn_signal_class,li/zero_crossing_to_QFA,chen2019semi}
for some examples of this approach.  The review-paper
\citet{ciaburro21:_machin_learn_based_algor_knowl} contains more
information about machine-learning based methods applied to time
series data.

\textbf{The local Gaussian approach:} The local Gaussian spectral
density $\lgsd{\LGp}{\omega}$ for univariate strictly stationary time
series that was defined in \JT, is based on the
local Gaussian auto-correlations $\lgacr{\LGp}{h}$ from
\citet{Tjostheim201333}.  A simple adjustment gives the local Gaussian
cross-correlations $\lgccr{k\ell}{\LGp}{h}$ for multivariate strictly
stationary time series, from which a local Gaussian analogue
$\lgcsd{\!k\ell}{\LGp}{\omega}$ of the cross-spectrum
$\fz{\!k\ell}(\omega)$ can be constructed using the Fourier transform.
The local Gaussian version of the cross-spectrum enables local
Gaussian alternatives to be defined of the cospectrum, quadrature
spectrum, amplitude spectrum, and phase spectrum, by simply copying
the setup used in the ordinary (global) case.  Local Gaussian
analogues of the coherence and squared coherence were investigated in
the preparation for this paper, but then discarded, see the discussion
at the end of \cref{sec:related_LG_entities}
for further details.

It should be mentioned that the local Gaussian approach to statistical
modelling has been applied recently to a number of nonlinear
statistical problems.  See, e.g., \citet{tjostheim2021statistical}.

\textbf{An overview of the paper:} \Cref{seq:lgch_definitions} defines
the \textit{local Gaussian cross-spectrum}
$\lgcsd{\!k\ell}{\LGp}{\omega}$, which immediately gives the related
local Gaussian variants of the cospectrum, quadrature spectrum,
amplitude spectrum and phase spectrum from
\cref{eq:global_co_quadrature_amplitude_phase}.  The asymptotic theory
for the estimators are then presented (some technical details and
proofs are relegated to the appendices).  The real and simulated
examples in \cref{sec:lgch_examples} shows that estimates of
$\lgcsd{\!k\ell}{\LGp}{\omega}$ can be used to detect and investigate
nonlinear structures in non-Gaussian white noise, and in particular
that $\lgcsd{\!k\ell}{\LGp}{\omega}$ can detect local periodic
phenomena that goes undetected in an ordinary spectral
analysis.
Note that the scripts needed for the reproduction of these
examples are contained in the \Rpackage
\lgsdRpackage,\footnote{ \label{lgch_footnote:install_package} Use
  \Rref{remotes:\!:install\_github("LAJordanger/localgaussSpec")} to
  install the package.
  See \cref{P2.app:The.scripts.in.lgsdRpackage} in the online
  Supplementary Material for further details.}  where it in addition
is possible to use an interactive solution to see how adjustments of
the tuning parameters (used in the estimation algorithms) influence
the estimates of~$\lgcsd{\!k\ell}{\LGp}{\omega}$.  %
Some concluding remarks are then gathered in
\cref{sec:lgch_conclusion}.

\textbf{Supplementary Material:} The appendices are available in the
online Supplementary Material, and these contain the proofs of the
theoretical results, discussions related to the sensitivity of the
different input parameters in the estimation algorithm, and also a
discussion of the sensitivity of the block length in the bootstrapped
based resampling algorithm that was developed in \JT for the local
Gaussian spectral investigation.  The Supplementary Material also
contains a discussion related to how the local Gaussian spectra (based
on samples) of a parametric model can be compared to the local
Gaussian spectra from the sample the model was fitted to.  The
reproduction of all the results in this paper can be done by the help
of the scripts contained in the \Rpackage \lgsdRpackage, and details
about this is also available in the appendices.  The Supplementary
Material ends with a discussion related to the construction of the
test-data used for the sanity testing of the estimation algorithm, and
this part also highlights some limitations that a practitioner should
keep in mind when the local Gaussian approach is used in the extreme
tails of the sample at hand.

\section{Definitions}
\label{seq:lgch_definitions}
This section will present the formal definitions of the local Gaussian
versions of the cross-correlation $\fz{k\ell}(\omega)$ and its derived
entities.  The details are almost identical to those encountered when
the local Gaussian spectral density was introdued in
\JT, and the present discussion will thus only
give short summaries of descriptions and arguments already undertaken
in~\JT.

\subsection{The local Gaussian correlations}
\label{sec:LGC_lgch}

At the core of the generalisation of \cref{eq:global_cross_spectrum}
lies the local Gaussian correlation $\rhoz{\LGp}$ from
\citet{Tjostheim201333}.  A construction that originally was used for
density estimation, see \citet{hjort96:_local}, was in
\citet{Tjostheim201333} adjusted a bit in order to define
$\rhoz{\LGp}$ as a new local measure of dependence for a bivariate
random variable $\bm{W}$.  With $g(\bm{w})$ being the density function
of $\bm{W}$, the idea is to find a local Gaussian approximation
$\psi(\bm{w};\bmthetaz{})$ in the neighbourhood of $\LGp$~--- i.e.,
start with
\begin{align}
  \label{eq:psi_density_five_or_one}
  \psiz{}(\bm{w}; \bmthetaz{}) 
  &\defeq \frac{1}{2\pi \cdot     \sigmaz{1} \sigmaz{2} \sqrt{1 -
    \rhoz[2]{}}}   \exp \parenC{- \tfrac{      \sigmaz[2]{2} \parenRz[2]{}{\wz{1} - \muz{1}}      - 2\sigmaz{1}\sigmaz{2}\rho       \left(\wz{1} - \muz{1} \right)      \left(\wz{2} - \muz{2} \right)      + \sigmaz[2]{1} \parenRz[2]{}{\wz{2} - \muz{2}}    }{      \sigmaz[2]{1} \sigmaz[2]{2} \left( 1 -
    \rhoz[2]{} \right) } },
\end{align}

then find a parameter-vector
$\bmthetaz{\LGp}=\parenSz[\,\prime]{}{\muz{1\LGp},\muz{2\LGp},\sigmaz{1\LGp},\sigmaz{2\LGp},\rhoz{\LGp}}$
that gives the best match to $g(\bm{w})$ in a neighbourhood of $\LGp$,
and then extract the correlation component $\rhoz{\LGp}$ from the
parameter-vector $\bmthetaz{\LGp}$.

As discussed in \lgsdRef{sec:LGC_revised}, the theoretical treatment
can be based directly on the construction from
\citet{Tjostheim201333}, but the numerical convergence of the
estimation-algorithm might then sometimes fail.  This
estimation-problem can be countered if a \textit{normalisation of the
  marginals} (see \cref{def:LG_cross-and-auto-spectra_def_main_part}
below) are performed before the estimation-algorithm is used.

\subsection{The local Gaussian cross-spectrum}
\label{sec:lgch_and_friends_definition}
The definition of the local Gaussian cross-spectrum density is almost
identical to the definition of the local Gaussian spectral density
from \lgsdRef{sec:LGSD_definition}, which in this paper henceforth
will be referred to as the local Gaussian auto-spectrum.

\begin{definition}
  \label{def:LG_cross-and-auto-spectra_def_main_part}
  For a strictly stationary multivariate time series
  $\TSR{\bmYz{t}}{t\in\ZZ}{}$, where
  
  {$\bmYz{t}=\parenR{\Yz{1,t},\dotsc,\Yz{d,t}}$}, the local
  Gaussian cross-spectrum of the marginal time series
  $\TSR{\Yz{k,t}}{t\in\ZZ}{}$ and $\TSR{\Yz{\ell,t}}{t\in\ZZ}{}$ is
  constructed in the following manner.
  \begin{enumerate}
  \item
    With $\Gz{k}$ and $\Gz{\ell}$ the univariate \textit{marginal}
    cumulative distributions of respectively
    $\TSR{\Yz{k,t}}{t\in\ZZ}{}$ and $\TSR{\Yz{\ell,t}}{t\in\ZZ}{}$,
    and $\Phi$ the cumulative distribution of the univariate standard
    normal distribution, define normalised versions
    $\TSR{\Zz{k,t}}{t\in\ZZ}{}$ and $\TSR{\Zz{\ell,t}}{t\in\ZZ}{}$ by
    \begin{align}
      \TSR{\Zz{k,t} \defeq
      \Phiz[-1]{}\!\parenR{\Gz{k}\!\parenR{\Yz{k,t}}}}{t\in\ZZ}{}, \qquad
      \TSR{\Zz{\ell,t} \defeq
      \Phiz[-1]{}\!\parenR{\Gz{\ell}\!\parenR{\Yz{\ell,t}}}}{t\in\ZZ}{}.
    \end{align}
  \item For a given point \mbox{$\LGp=\LGpoint$} and for each
    \textit{bivariate lag~$h$ pair}
    \mbox{$\bmZz{k\ell:h:t}\defeq \parenR{\Zz{k:t+h},\Zz{\ell:t}}$}, a
    \textit{local Gaussian cross-correlation}
    $\lgccr{k\ell}{\LGp}{h}$ can be computed based on a five
    parameter local Gaussian approximation of the bivariate density of
    $\bmZz{k\ell:h:t}$ at~$\LGpoint$.
  \item
    \label{def:lgcs_requirement}
    When \mbox{$\sumss{h\in\ZZ}{} \parenAbs{\lgccr{k\ell}{\LGp}{h}}
      < \infty$}, the \textit{local Gaussian cross-spectrum} at the
    point~$\LGp$ is defined~as
    \begin{align}
      \label{eq:lgcs_definition_main_document}
      \lgcsd{k\ell}{\LGp}{\omega} \defeq \sumss{h=-\infty}{\infty}
      \lgccr{k\ell}{\LGp}{h} \cdot \ez[-2\pi i\omega h]{}.
    \end{align}
  \end{enumerate}
\end{definition}

The definition of the local Gaussian auto-spectrum is in essence the
same as the one given here for the local Gaussian cross-spectrum, with
the minor adjustment that \mbox{$k=\ell$} in the auto-spectrum
case~--- which requires the added convention that
\mbox{$\lgccr{kk}{\LGp}{0}\equiv 1$} for all points~$\LGp$.

The normalisation of the marginals in
\cref{def:LG_cross-and-auto-spectra_def_main_part} reduces the amount
of numerical convergence issues in the estimation algorithm, and the
resulting local Gaussian cross-spectrum $\lgcsd{k\ell}{\LGp}{\omega}$
is thus a feature of the collection of bivariate copula-structures
that are present inside the overall cross-temporal dependency
structure of the strictly stationary time series
$\TSR{\bmYz{t}}{t\in\ZZ}{}$.  This implies that investigations based
on the local Gaussian cross-spectrum $\lgcsd{k\ell}{\LGp}{\omega}$
should be complemented with methods that can extract relevant features
from the probability density functions of the $d$ marginal
distributions.

The basic properties of the local Gaussian cross-spectrum are quite
similar to those encountered for the local Gaussian auto-spectrum in
\lgsdRef{th:lgsd_properties}.
\begin{lemma}
  \label{th:lgcs_properties}
  The following properties holds for $\lgcsd{k\ell}{\LGp}{\omega}$.
  \begin{enumerate}
  \item
    \label{th:lgcs_equal_to_osd_when_Gaussian}
        $\lgcsd{k\ell}{\LGp}{\omega}$ coincides with
    $\fz{k\ell}(\omega)$ for all \mbox{$\LGp\in\RRn{2}$} when
    $\TSR{\bmYz{t}}{t\in\ZZ}{}$ is a multivariate Gaussian time
    series.
  \item
    \label{th:lgcs_reflection_property}
        The following holds when \mbox{$\LGpd\defeq\LGpointd$} is the
    diagonal reflection of \mbox{$\LGp=\LGpoint$};
    \begin{subequations}
      \label{eq:th:lgcs_properties}
      \begin{align}
        \label{eq:th:lgcs_conjugate_property}
        \lgcsd{k\ell}{\LGp}{\omega} &=
        \overline{\lgcsd{\ell k}{\LGpd}{\omega}}, \\
        \label{eq:th:lgcs_folding_property}
        \lgcsd{k\ell}{\LGp}{\omega} &= \lgccr{k\ell}{\LGp}{0} +
        \sumss{h=1}{\infty} \lgccr{\ell k}{\LGpd}{h} \cdot \ez[+2\pi
          i\omega h]{} + \sumss{h=1}{\infty} \lgccr{k\ell}{\LGp}{h}
        \cdot \ez[-2\pi i\omega h]{}.
      \end{align}
    \end{subequations}
  \end{enumerate}
\end{lemma}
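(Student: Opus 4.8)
The plan is to reduce both parts to statements about the local Gaussian cross-correlations $\lgccr[p]{k\ell}{\LGp}{h}$ and then transfer them to $\lgcsd[p]{k\ell}{\LGp}{\omega}$ through the Fourier-series definition \cref{eq:lgcs_definition_main_document}, following the template already used for the local Gaussian auto-spectrum in \lgsdRef{th:lgsd_properties}. Throughout, the summability condition \ref{def:lgcs_requirement} is part of the standing hypothesis, so all the Fourier series occurring below converge absolutely and the rearrangements are legitimate.

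For part \ref{th:lgcs_equal_to_osd_when_Gaussian}, I would first observe that when $\TSR{\bmYz{t}}{t\in\ZZ}{}$ is multivariate Gaussian, each marginal cumulative distribution $\Gz{k}$ is a Gaussian cdf, so the normalising transformation $\Phiz[-1]{}\circ\Gz{k}$ is affine; hence $\TSR{\bmZz{t}}{t\in\ZZ}{}$ is again a multivariate Gaussian time series with standard normal marginals, and for every lag $h$ the pair $\bmZz{k\ell:h:t}=\parenR{\Zz{k:t+h},\Zz{\ell:t}}$ is bivariate normal with zero means, unit variances, and correlation equal to the cross-correlation of the normalised series at lag $h$. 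The key input is then that the local Gaussian correlation (in both the one- and the five-parameter versions, $\psiz{1}$ and $\psiz{5}$ of \cref{eq:psi_density_five_or_one}) of a bivariate normal density is constant in the point of evaluation and equals the global correlation — this is established in \citet{Tjostheim201333} and was already exploited in the auto-spectrum setting. Consequently $\lgccr[p]{k\ell}{\LGp}{h}$ does not depend on $\LGp\in\RRn{2}$ and coincides with the (normalised) cross-correlation at lag $h$, so plugging into \cref{eq:lgcs_definition_main_document} recovers exactly the Fourier series \cref{eq:global_cross_spectrum} defining $\fz{k\ell}(\omega)$ (with the convention, inherited from \lgsdRef{th:lgsd_properties}, that the global cross-spectrum is taken with respect to the normalised marginals).

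For part \ref{th:lgcs_reflection_property}, the crucial preliminary is the pointwise symmetry
\[
  \lgccr[p]{k\ell}{\LGp}{h} = \lgccr[p]{\ell k}{\LGpd}{-h},
  \qquad h\in\ZZ,\; \LGp\in\RRn{2}.
\]
This holds because, by strict stationarity, $\bmZz{\ell k:-h:t}=\parenR{\Zz{\ell:t-h},\Zz{k:t}}$ has the same distribution as $\parenR{\Zz{\ell:t},\Zz{k:t+h}}$, i.e.\ the coordinate-swap of $\bmZz{k\ell:h:t}$; swapping the two coordinates of a bivariate vector maps its density $g\parenR{\wz{1},\wz{2}}$ to $g\parenR{\wz{2},\wz{1}}$, so its local Gaussian fit at $\LGpoint$ equals the local Gaussian fit of the original density at $\LGpointd=\LGpd$, and since the correlation parameter in $\psiz{p}\!\parenR{\bm{w};\bmthetaz{p}}$ is invariant under this swap, one gets $\lgccr[p]{\ell k}{\LGp}{-h}=\lgccr[p]{k\ell}{\LGpd}{h}$, which is the displayed identity after relabelling. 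Given this, \cref{eq:th:lgcs_conjugate_property} follows by conjugating the defining series for $\lgcsd[p]{\ell k}{\LGpd}{\omega}$ (the local Gaussian correlations are real-valued), reindexing $h\mapsto -h$, and applying the symmetry; \cref{eq:th:lgcs_folding_property} follows by splitting the summation in \cref{eq:lgcs_definition_main_document} into the pieces $h<0$, $h=0$, $h>0$, substituting $h\mapsto -h$ in the negative part, and once more invoking the symmetry.

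The main obstacle I anticipate is the careful justification of the coordinate-swap behaviour of the local Gaussian correlation: both that swapping the two components genuinely sends the local fit at $\LGp$ to the local fit at $\LGpd$, and that this holds simultaneously for the one-parameter family $\psiz{1}$ and the five-parameter family $\psiz{5}$. This rests on the invariance of the parametric families $\psiz{p}\!\parenR{\bm{w};\bmthetaz{p}}$ under $\parenR{\wz{1},\wz{2}}\leftrightarrow\parenR{\wz{2},\wz{1}}$ together with uniqueness of the minimiser of the defining population penalty/estimating equations, and it is essentially the multivariate counterpart of the folding argument already carried out for the auto-spectrum; the remaining steps are routine manipulations of absolutely convergent Fourier series.
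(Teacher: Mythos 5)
Your proposal is correct and follows essentially the same route as the paper's own proof: part (a) is reduced to the fact that the local Gaussian cross-correlations coincide with the global ones in the Gaussian case, and part (b) rests on the diagonal reflection symmetry $\lgccr[p]{k\ell}{\LGp}{-h}=\lgccr[p]{\ell k}{\LGpd}{h}$ followed by routine manipulations of the absolutely convergent Fourier series. The only difference is one of detail: you prove the swap/stationarity symmetry from first principles, whereas the paper simply notes that the folding property from the auto-spectrum paper extends directly to the cross-correlation case.
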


\begin{proof}
  \Cref{th:lgcs_equal_to_osd_when_Gaussian} follows since the local
  Gaussian cross-correlations $\lgccr{k\ell}{\LGp}{h}$ by
  construction coincides with the ordinary (global) cross-correlations
  $\rho(h)$ in the Gaussian case.  For the proof of
  \cref{th:lgcs_reflection_property}, the key observation is that the
  \textit{diagonal folding property} that was observed for the local
  Gaussian auto-spectrum, see
  \lgsdRef{th:diagonal_folding_property}, extends directly to the
  present case, i.e.,
  \mbox{$\rhoz{k\ell|\LGp}(-h) = \rhoz{\ell k|\LGpd}(h)$}, where
  \mbox{$\LGpd = \parenR{\LGpi{2},\LGpi{1}}$} is the
  \textit{diagonally reflected point corresponding to $\LGp$}.  This
  implies that
  $\lgcsd{k\ell}{\LGp}{\omega} =
  \overline{\lgcsd{k\ell}{\LGp}{-\omega}} =
  \overline{\lgcsd{\ell k}{\LGpd}{\omega}}$, and it also follows
  that \cref{eq:lgcs_definition_main_document} can be reexpressed as
  \cref{eq:th:lgcs_folding_property}.
\end{proof}

\subsection{Related local Gaussian entities}
\label{sec:related_LG_entities}

From the definition of the local Gaussian cross-spectrum, it is
possible to define related spectra in the same manner as those
mentioned for the ordinary spectrum in
\cref{eq:global_co_quadrature_amplitude_phase}.

\begin{definition}
  \label{def:local_co_quad_amplitude_phase}
  The local Gaussian versions of the cospectrum~$\cz{k\ell}(\omega)$,
  the quadrature spectrum~$\qz{k\ell}(\omega)$, the amplitude
  spectrum~$\alphaz{k\ell}(\omega)$ and the phase
  spectrum~$\phiz{k\ell}(\omega)$, are given by
  \begin{subequations}
    \label{eq:local_co_quad_amplitude_phase}
    \begin{align}
      \label{eq:local_co-spectrum}
      \lgcsdCo{k\ell}{\LGp}{\omega} &\defeq
      \operatorname{Re}\parenR{\lgcsd{k\ell}{\LGp}{\omega}} =
      \lgccr{k\ell}{\LGp}{0} + \sumss{h=1}{\infty}
      \cos\parenR{2\pi\omega h} \parenS{\lgccr{k\ell}{\LGp}{h} +
        \lgccr{k\ell}{\LGpd}{h}}, \\
      \label{eq:local_quad-spectrum} 
      \lgcsdQuad{k\ell}{\LGp}{\omega} &\defeq -
      \operatorname{Im}\parenR{\lgcsd{k\ell}{\LGp}{\omega}} =
      \sumss{h=1}{\infty} \sin\parenR{2\pi\omega h}
      \parenS{\lgccr{k\ell}{\LGp}{h} -
        \lgccr{k\ell}{\LGpd}{h}}, \\
      \label{eq:local_amplitude-spectrum}
      \lgcsdAmplitude{k\ell}{\LGp}{\omega} &\defeq
      \operatorname{Mod}\parenR{\lgcsd{k\ell}{\LGp}{\omega}} =
      \sqrt{\lgcsdCoM{k\ell}{\LGp}{\omega}{\,2} +
        \lgcsdQuadM{k\ell}{\LGp}{\omega}{\,2}}, \\
      \label{eq:local_phase-spectrum}
      \lgcsdPhase{k\ell}{\LGp}{\omega} &\defeq
      \operatorname{Arg}\parenR{\lgcsd{k\ell}{\LGp}{\omega}} \in (-\pi,
      \pi].
    \end{align}
  \end{subequations}
\end{definition}

The sums occurring in
\cref{eq:local_co-spectrum,eq:local_quad-spectrum} follows from
\cref{eq:th:lgcs_folding_property}.
\Cref{eq:th:lgcs_conjugate_property} gives
\mbox{$\lgcsdCo{k\ell}{\LGp}{\omega} = \cz{\ell
    k|\LGpd}(\omega)$},
\mbox{$\lgcsdQuad{k\ell}{\LGp}{\omega} = - \qz{\ell
    k|\LGpd}(\omega)$},
\mbox{$\lgcsdAmplitude{k\ell}{\LGp}{\omega} = \alphaz{\ell
    k|\LGpd}(\omega)$} and
{$\lgcsdPhase{k\ell}{\LGp}{\omega} = - \phiz{\ell
    k|\LGpd}(\omega)$}.

For Gaussian distributions, the local Gaussian correlations will
always be equal to the ordinary (global) correlations,\footnote{
  This is due to the way the local Gaussian correlation is defined,
  see \citet{Tjostheim201333} for details.}  and the local Gaussian
constructions in
\cref{def:LG_cross-and-auto-spectra_def_main_part,def:local_co_quad_amplitude_phase}
will thus coincide with the ordinary (global) versions for
multivariate Gaussian time series.  A comparison of the local and
global estimates in the same plot is thus of interest when a given
sample is considered, since this could detect nonlinear dependencies
of the time series under investigation.

It is possible to define a local Gaussian anaologue of the squared
coherence mentioned in \cref{sec:lgch_Introduction} by replacing the
ordinary cross- and auto-spectra with the corresponding local Gaussian
versions, i.e., the object of interest would be
$\mathcalQz{k\ell:\LGp}(\omega) \defeq \lgcsd{k\ell}{\LGp}{\omega}
\lgcsd{\ell k}{\LGp}{\omega}/
\lgcsd{kk}{\LGp}{\omega}\lgcsd{\ell\ell}{\LGp}{\omega}$.  This
approach was investigated in the preparation of this paper, but it has
not been included here since $\mathcalQz{k\ell:\LGp}(\omega)$ in
general lacked the nice properties known from the ordinary global
case.  In particular, the local Gaussian auto-spectra
$\lgcsd{kk}{\LGp}{\omega}$ and $\lgcsd{\ell\ell}{\LGp}{\omega}$ will
in general be complex valued functions, so an inspection of
$\mathcalQz{k\ell:\LGp}(\omega)$ must thus be based on plots of its
real and imaginary parts (or its amplitude and phase).  Moreover,
these plots did more often than not turn out to be rather hard to
investigate, since the estimates of $\lgcsd{kk}{\LGp}{\omega}$ and
$\lgcsd{\ell\ell}{\LGp}{\omega}$ (for some distributions and some
frequencies~$\omega$) gave values very close to zero in the
denominator.

\makeatletter{}

\subsection{Estimation}
\label{sec2:Estimation}

The estimation of the local Gaussian cross-spectrum
$\lgcsd{k\ell}{\LGp}{\omega}$ from
\cref{sec:lgch_and_friends_definition} follows the same setup that was
used in \lgsdRef{sec:Estimation} for the estimation of the
local Gaussian auto-spectrum, with the obvious difference that some
extra indices are needed in the present case.  The estimation of the
related spectra $\lgcsdCo{k\ell}{\LGp}{\omega}$,
$\lgcsdQuad{k\ell}{\LGp}{\omega}$, $\lgcsdAmplitude{k\ell}{\LGp}{\omega}$
and $\lgcsdPhase{k\ell}{\LGp}{\omega}$ from
\cref{sec:related_LG_entities} is then obtained from the estimate of
$\lgcsd{k\ell}{\LGp}{\omega}$ in an obvious manner.

\begin{algorithm}
  \label{def:lgsd_estimator}
        For a sample
  $\TSR{\bmyz{t}=\parenR{\yz{1,t},\dotsc,\yz{d,t}}}{t=1}{n}$ of
  size~$n$ from a multivariate time series, an
    \mbox{$m$-truncated} estimate
  $\hatlgcsdM{k\ell}{\LGp}{\omega}{m}$ of
  $\lgcsd{k\ell}{\LGp}{\omega}$ is constructed by means of the
  following procedure.
  \begin{enumerate}
  \item
    \label{lgcsd_estimator_ecdf}
    Use the univariate marginals $\TSR{\yz{k,t}}{t=1}{n}$ and
    $\TSR{\yz{\ell,t}}{t=1}{n}$ to find estimates $\widehatGz{\!k:n}$
    and $\widehatGz{\!\ell:n}$ of the corresponding marginal
    cumulative distribution functions, and compute from this the
    \textit{pseudo-normalised observations}
    $\TSR{\widehatzz{k,t} \defeq
      \Phiz[-1]{}\!\parenR{\widehatGz{\!k:n}\!\parenR{\yz{k,t}}}}{t=1}{n}$
    and
    $\TSR{\widehatzz{\ell,t} \defeq
      \Phiz[-1]{}\!\parenR{\widehatGz{\!\ell:n}\!\parenR{\yz{\ell,t}}}}{t=1}{n}$.
  \item
    \label{lgcsd_estimator_pseudo_normalisation_I}
        Create the lag~$h$ pseudo-normalised pairs
    $\TSR{\parenR{\widehatzz{k,t+h},\widehatzz{\ell,t}}}{t=1}{n-h}$
    for \mbox{$h=0,\dotsc,m$}, and estimate for the point
    \mbox{$\LGp=\LGpoint$} the local Gaussian cross-correlations
    $\TSR{\hatlgccrb{k\ell}{\LGp}{h}{\bmbzh{h}}}{h=0}{m}$, where
    the $\TSR{\bmbzh{h}}{h=0}{m}$ is the bandwidths that are used for
    the different lags.
  \item
    \label{lgcsd_estimator_pseudo_normalisation_II}
        Create the lag~$h$ pseudo-normalised pairs
    $\TSR{\parenR{\widehatzz{\ell,t+h},\widehatzz{k,t}}}{t=1}{n-h}$
    for \mbox{$h=1,\dotsc,m$}, and estimate for the diagonally
    reflected point \mbox{$\LGpd=\LGpointd$} the local Gaussian
    cross-correlations
    $\TSR{\hatlgccrb{\ell k}{\LGpd}{h}{\bmbzh{h}}}{h=0}{m}$.
  \item
    \label{lgcsd_estimator_folded}
        Adjust \cref{eq:th:lgcs_folding_property} from
    \myref{th:lgcs_properties}{th:lgcs_reflection_property} with some
    lag-window function $\lambdazM{h}{m}$ to get the estimate
    \begin{align}
      \label{eq:App:hatlgsd_definition_main_document}
                                                \hatlgcsdM{k\ell}{\LGp}{\omega}{m} \defeq
      \hatlgccr{k\ell}{\LGp}{0} + \sumss{h=1}{m}
      \lambdazM{h}{m}\cdot \hatlgccr{\ell k}{\LGpd}{h}
      \cdot \ez[+2\pi i\omega h]{} + \sumss{h=1}{m}
      \lambdazM{h}{m}\cdot \hatlgccr{k\ell}{\LGp}{h}
      \cdot \ez[-2\pi i\omega h]{},
    \end{align}
    where the $\TSR{\bmbzh{h}}{h=0}{m}$ has been suppressed from the
    notation in order to get a more compact formula.
  \end{enumerate}
\end{algorithm}

\begin{definition}
  \label{lgsd_related_estimators}
  For a multivariate sample $\TSR{\bmyz{t}}{t=1}{n}$ of size~$n$, as
  described in \cref{def:lgsd_estimator}, the \mbox{$m$-truncated}
  estimates of the local Gaussian versions of the cospectrum,
  quadrature spectrum, amplitude spectrum and phase spectrum is given
  by
  \begin{subequations}
    \label{eq:estimate_local_co_quad_amplitude_phase}
    \begin{align}
      \label{eq:estimate_local_co-spectrum}
      \hatlgcsdCoM{k\ell}{\LGp}{\omega}{m} &\defeq
      \operatorname{Re}\parenR{\hatlgcsdM{k\ell}{\LGp}{\omega}{m}} =
      \hatlgccr{k\ell}{\LGp}{0} + \sumss{h=1}{m}
      \cos\parenR{2\pi\omega h}\lambdazM{h}{m} \parenS{\hatlgccr{k\ell}{\LGp}{h} +
        \hatlgccr{k\ell}{\LGpd}{h}}, \\
      \label{eq:estimate_local_quad-spectrum} 
      \hatlgcsdQuadM{k\ell}{\LGp}{\omega}{m} &\defeq -
      \operatorname{Im}\parenR{\hatlgcsdM{k\ell}{\LGp}{\omega}{m}} =
      \sumss{h=1}{m} \sin\parenR{2\pi\omega h}\lambdazM{h}{m}
      \parenS{\hatlgccr{k\ell}{\LGp}{h} -
        \hatlgccr{k\ell}{\LGpd}{h}}, \\
      \label{eq:estimate_local_amplitude-spectrum}
      \hatlgcsdAmplitudeM{k\ell}{\LGp}{\omega}{m} &\defeq
      \operatorname{Mod}\parenR{\hatlgcsdM{k\ell}{\LGp}{\omega}{m}} =
      \sqrt{\subp{\parenR{\hatlgcsdCoM{k\ell}{\LGp}{\omega}{m}}}{}{}{}{2}
        + \subp{\parenR{\hatlgcsdQuadM{k\ell}{\LGp}{\omega}{m}}}{}{}{}{2}}, \\
      \label{eq:estimate_local_phase-spectrum}
      \hatlgcsdPhaseM{k\ell}{\LGp}{\omega}{m} &\defeq
      \operatorname{Arg}\parenR{\hatlgcsdM{k\ell}{\LGp}{\omega}{m}} \in (-\pi,
      \pi].
    \end{align} 
  \end{subequations}
\end{definition}

The comments in \lgsdRef{sec:Estimation} holds for the present case
too.  In particular, the estimated marginal cumulative distributions
$\widehatGz{\!k:n}$ and $\widehatGz{\!\ell:n}$ from
\myref{def:lgsd_estimator}{lgcsd_estimator_ecdf} can either be based
on the (rescaled) empirical cumulative distribution functions or they
could be built upon a logspline technique like the one implemented in
\citet{otneim2017locally}.  Furthermore, for the asymptotic
investigation, the arguments in \citet[Section~3]{otneim2017locally}
reveals that the pseudo-normalisation of the marginals does not affect
the final convergence rates, which (as was done in
\JT) implies that the present theoretical
analysis can ignore the distinction between the original observations
and the pseudo-normalised observations.

\makeatletter{}
\subsection{Asymptotic theory for
  $\hatlgcsdM{k\ell}{\LGp}{\omega}{m}$}
\label{sec:lgcsd_Asymptotic_theory}

The asymptotic theory for the local Gaussian cross-spectrum
$\hatlgcsdM{k\ell}{\LGp}{\omega}{m}$ follows from a few minor
adjustments of the asymptotic theory that was developed for the local
Gaussian auto-spectra.  As in \lgsdRef{sec:Asymptotic_theory}, the
assumptions and results will be stated for the original observations
instead of the pseudo-normalised observations, since this makes the
analysis easier and since the final convergence rates are unaffected
by this distinction, see
the discussion after \cref{lgsd_related_estimators} for details.

\subsubsection{Some definitions and an assumption for
  $\bmYz{t}=\parenR{\Yz{1,t},\dotsc,\Yz{d,t}}$}
\label{app:assumptions_upon_Yt}

As for the univariate case in \JT, the
assumptions to be imposed on the $k$ and $\ell$ components of the
multivariate times series $\TSR{\bmYz{t}}{t\in\ZZ}{}$ need to be
phrased relative to the bivariate pairs that can be created as
different combinations of elements from the univariate marginals
$\TSR{\Yz{k,t}}{t\in\ZZ}{}$ and $\TSR{\Yz{\ell,t}}{t\in\ZZ}{}$.  Note
that the \textit{folding property} from \cref{lgcsd_estimator_folded}
of \cref{def:lgsd_estimator} implies that it is sufficient to
formulate the assumption based on non-negative values of the lag~$h$.

\begin{definition}
  \label{def:lgcsd_Y_kl_ht_and_Y_lk_ht}
  For a strictly stationary multivariate time series
  $\TSR{\bmYz{t}}{t\in\ZZ}{}$, with
  {$\bmYz{t}=\parenR{\Yz{1,t},\dotsc,\Yz{d,t}}$}, and for a
  selected pair of indices $k$ and $\ell$, define the following
  bivariate pairs from the univariate marginals
  $\TSR{\Yz{k,t}}{t\in\ZZ}{}$ and $\TSR{\Yz{\ell,t}}{t\in\ZZ}{}$.
  \begin{subequations}
    \label{eq:lgcsd_bivariate_pairs_Y_kl_ht_and_Y_lk_ht}
    \begin{align}
      \label{eq:lgcsd_bivariate_pairs_Y_kl_ht}
      \Yht{\!k\ell:h}{t} \defeq \Vector[\prime]{\Yz{k,t+h}, \Yz{\ell,t}},  \qquad
      h \geq 0, \\
      \label{eq:lgcsd_bivariate_pairs_Y_lk_ht}
      \Yht{\!\ell k:h}{t} \defeq \Vector[\prime]{\Yz{\ell,t+h},
        \Yz{k,t}}, \qquad h \geq 1,
    \end{align}
  \end{subequations}
  and let $\ghc[\yhc{k\ell}{h}]{k\ell}{h}$ and $\ghc[\yhc{\ell
      k}{h}]{\ell k}{h}$ denote the respective probability density
  functions.
\end{definition}

The basic idea for the construction of
$\lgcsd{k\ell}{\LGp}{\omega}$ is that a point
\mbox{$\LGp=\LGpoint$} should be selected at which for all~$h$ the
density functions $\ghc[\yhc{k\ell}{h}]{k\ell}{h}$ of
$\Yht{\!k\ell:h}{t}$ will be approximated by
$\psi\!\left(\yhc{k\ell}{h}; \thetahc{}{\LGp}{k\ell}{h}\right)$,
where $\psi$ is the bivariate Gaussian density function from
\cref{eq:psi_density_five_or_one}.

This local investigation requires a bandwidth vector
\mbox{$\bm{b}=\left(\bz{1},\bz{2}\right)$} and a kernel
function~$K(\bm{w})$, which is used to define
$\Khbdefc\defeq \tfrac{1}{\bz{1}\bz{2}} K\!\left( \tfrac{\yz{k,h} -
    \LGpi{1}}{\bz{1}}, \tfrac{\yz{\ell,0} - \LGpi{2}}{\bz{2}}\right)$,
which in turn is used in
\begin{align}
  \label{eq:lgcsd:penalty_qh}
  \qhc{}{\LGp}{k\ell}{h}{b} &\defeq \intss{\RRn{2}}{} \Khbdefc \left[
    \psi\!\left(\yhc{k\ell}{h}; \thetahc{}{\LGp}{k\ell}{h}\right)
    - \log\psi\!\left(\yhc{k\ell}{h};
    \thetahc{}{\LGp}{k\ell}{h}\right) \ghc[\yhc{k\ell}{h}]{k\ell}{h}
    \right] \dyhc{k\ell}{h},
        \end{align}
a minimiser of which should satisfy the vector equation
\begin{align}
  \label{eq:lgcsd:def_of_theta_hb} 
  \intss{\RRn{2}}{} \Khbdefc
  \uhc[\yhc{k\ell}{h};\thetahc{}{\LGp}{k\ell}{h}]{}{k\ell}{h}
  \left[
    \ghc{k\ell}{h}\!\left(\yhc{k\ell}{h}\right) - \psi\!\left(\yhc{k\ell}{h};
    \thetahc{}{\LGp}{k\ell}{h}\right)\right] \dyhc{k\ell}{h} &= \bm{0},
\end{align}
where
\mbox{$\uhc[\yhc{k\ell}{h};\thetahc{}{\LGp}{k\ell}{h}]{}{k\ell}{h}
  \defeq \nablahc{}{k\ell}{h} \log \psi\!\left(\yhc{k\ell}{h};
  \thetahc{}{\LGp}{k\ell}{h}\right)$} is the score function of
$\psi\!\left(\yhc{k\ell}{h}; \thetahc{}{\LGp}{k\ell}{h}\right)$
(with {$\nablahc{}{k\ell}{h} \defeq \partial/\partial
  \thetahc{}{\LGp}{k\ell}{h}$}).  Under the assumption that there is
a bandwidth $\bmbz{k\ell:h:0}$ such that there exists a minimiser
$\thetahbc{}{\LGp}{k\ell}{h}{b}$ of \cref{eq:lgcsd:penalty_qh} which
satisfies \cref{eq:lgcsd:def_of_theta_hb} for any $\bm{b}$ with
\mbox{$\bm{0} < \bm{b} < \bmbz{k\ell:h:0}$}, this
$\thetahbc{}{\LGp}{k\ell}{h}{b}$ will be referred to as the
population~value for the given bandwidth~$\bm{b}$. 

This approach was introduced in a more general context in
\citet{hjort96:_local}, where it was used to define a local approach
to density estimation, and the new idea in \citet{Tjostheim201333} was
to focus upon the estimated local Gaussian parameters
$\hatthetahc{}{\LGp}{k\ell}{h}$ (instead of the estimated densities).
The asymptotic properties of the estimated parameters was investigated
in \citet{Tjostheim201333} by the help of the Klimko-Nelson
approach\footnote{The Klimko-Nelson approach (see \citet{klimko1978})
  shows how the asymptotic properties of \textit{an estimate of the
    parameters of a penalty function $Q$} can be expressed relative to
  the asymptotic properties of (entities related to) the penalty
  function itself.  The interested reader can consult
  \lgsdRef{App:local_penalty_function_Klimko_Nelson_approach} for a
  more detailed presentation of the Klimko-Nelson approach when a
  local penalty-function is used.} and a suitably defined local
penalty function
$\QhNc[\thetahc{}{\LGp}{k\ell}{h}]{}{\LGp}{k\ell}{h}{n}$ (see
\cref{eq:lgcsd:QhN} in \cref{sec:the_bivariate_case}).

The assumptions to be imposed on $\bmYz{t}$ is related to the
estimation of \cref{eq:lgcsd:def_of_theta_hb}, and thus requires a few
additional definitions.

\begin{definition}
  \label{def:Uh}
  For \mbox{$\psi\!\left(\yhc{k\ell}{h}; \bmthetaz{}\right)$} the
  local Gaussian density used when
  approximating~$\ghc[\yhc{k\ell}{h}]{k\ell}{h}$ at the point
  \mbox{$\LGp=\LGpoint$}, and for $\thetahbc{}{\LGp}{k\ell}{h}{b}$
  the population value that minimises the penalty
  function~$\qhc{}{\LGp}{k\ell}{h}{b}$ from
  \cref{eq:lgcsd:penalty_qh}, define for all \mbox{$h \in \NN$} and
  all \mbox{$q \in \parenC{1,\dotsc,5}$}
  \begin{align}
    \label{eq_def:Uh}
    \Uh[\bm{w}]{k\ell:h:q:\bm{b}} &\defeq \left.
    \frac{\partial}{\partial \thetahcomp{}{q}} \log
    \left(\psi\!\left(\yhc{k\ell}{h}; \bmthetaz{}\right)\right)
    \right|_{\left(\yhc{k\ell}{h};\, \bmthetaz{}\right) =
      \left(\bm{w};\, \thetahbc{}{\LGp}{k\ell}{h}{b}\right) },
  \end{align}
  where $\partial/\partial\thetahcomp{}{q}$ is the
  $q^{\operatorname{th}}$ partial derivative (with respect to
  $\bmthetaz{}$).
\end{definition}

The following requirements on the kernel function are identical to
those given in \lgsdRef{def:kernel}.

\begin{definition}
  \label{def:lgcsd:kernel}
  From a bivariate, non-negative, and bounded kernel function
  $K(\bm{w})$, that satisfies
  \begin{subequations}
    \label{eq:kernel_integrals}
    \begin{align}
      \label{eq:kernel_integral_one}
      &\intss{\RRn{2}}{} \!\! K\!\left(\wz{1},\wz{2}\right)
      \d{\wz{1}}\!\d{\wz{2}} = 1,\\
      \label{eq:kernel_integrals_left_wellbehaved}
      &\mathcalKz{1:k} \! \left(\wz{2}\right) \defeq \intss{\RRn{1}}{}
      \!\!  K\!\left(\wz{1},\wz{2}\right) \wz[k]{1} \d{\wz{1}} \qquad
      \text{is bounded for } k \in \TSR{0,1,2}{}{}, \\
      \label{eq:kernel_integrals_right_wellbehaved}
      &\mathcalKz{2:\ell} \! \left(\wz{1}\right) \defeq
      \intss{\RRn{1}}{} \!\!  K\!\left(\wz{1},\wz{2}\right)
      \wz[\ell]{2} \d{\wz{2}} \qquad \text{is bounded for } \ell \in
      \TSR{0,1,2}{}{}, \\
      \label{eq:kernel_integrals_finite}
      &\intss{\RRn{2}}{} \!\! K\!\left(\wz{1},\wz{2}\right)
      \absp{\wz[k]{1} \wz[\ell]{2}} \d{\wz{1}}\!\d{\wz{2}} < \infty,
      \qquad k, \ell \geq 0 \text{ and } k + \ell \leq
      2\cdot\ceil{\nu},
    \end{align}
  \end{subequations}
  where \mbox{$\nu>2$} is from
  \myref{assumption_lgcsd_Yt}{assumption_lgcsd_Yt_strong_mixing} (and
  $\ceil{\cdot}$ is the ceiling function), define
  \begin{align}
    \label{eq:definition_of_K}
    \Khb[\yhc{k\ell}{h}-\LGp]{h}{\bm{b}} &\defeq \frac{1}{\bz{1}\bz{2}}
    K\!\left( \frac{\yz{h}-\vz{1}}{\bz{1}}, \frac{\yz{0}-\vz{2}}{\bz{2}}
    \right).
  \end{align}
\end{definition}

\begin{definition}
  \label{def:Xklht}
  Based on $\Yhtc{k\ell}{h}{t}$, $\Uh[\bm{w}]{k\ell:h:q:\bm{b}}$ and
  $\Khb[\yhc{k\ell}{h}-\LGp]{h}{\bm{b}}$, define the new bivariate
  random variables $\XNht[\LGp]{n}{k\ell:h:q}{t}$ as follows,
  \begin{align}
    \label{eq:definition_of_XklNht}
    \XNht[\LGp]{n}{k\ell:h:q}{t} &\defeq \sqrt{\bz{1}\bz{2}}
    \Khb[\Yhtc{k\ell}{h}{t}-\LGp]{h}{\bm{b}}
    \Uh[\Yhtc{k\ell}{h}{t}]{k\ell:h:q:\bm{b}}.
  \end{align}
\end{definition}

Note that a product of the random variables
$\XNht[\LGp]{n}{k\ell:h:q}{t}$ and $\XNht[\LGp]{n}{k\ell:i:r}{s}$ will
be a function of $\Yz{k,t+h}$, $\Yz{\ell,t}$, $\Yz{k,s+i}$ and
$\Yz{\ell,s}$, which depending on the configuration of the indices
$h, i, s, t$ will be either a bivariate, trivariate or tetravariate
function.  The expectation of the product
$\XNht[\LGp]{n}{k\ell:h:q}{t}\cdot\XNht[\LGp]{n}{k\ell:i:r}{s}$ will
thus (depending on these indices) either require a bivariate,
trivariate or tetravariate density function.

\begin{assumption}
  \label{assumption_lgcsd_Yt}
  The multivariate process $\TSR{\bmYz{t}}{t\in\ZZ}{}$ will be assumed
  to satisfy the following properties, with \mbox{$\LGp=\LGpoint$} (in
  \cref{assumption:lgcsd:gh_differentiable_at_LGp} below) the point at which
  $\hatlgcsdM{k\ell}{\LGp}{\omega}{m}$, the estimate of
  $\lgcsd{k\ell}{\LGp}{\omega}$, is to be computed.
  \begin{enumerate}
  \item
    \label{assumption_lgcsd_Yt_strictly_stationary}
        $\TSR{\bmYz{t}}{t\in\ZZ}{}$ is strictly stationary.
  \item
    \label{assumption_lgcsd_Yt_strong_mixing}
        $\TSR{\bmYz{t}}{t\in\ZZ}{}$ is strongly mixing, with mixing
    coefficient $\alpha(j)$ satisfying
    \begin{align}
      \label{eq:alpha_requirement}
      \sumss{j=1}{\infty} \jz[a]{} \parenSz[1-2/\nu]{}{\alpha(j)} <
      \infty \qquad \text{for some $\nu>2$ and $a > 1 - 2/\nu$}.
    \end{align}
  \item
    \label{assumption_lgcsd_Yt_finite_variance}
        $\E{\subp{||\bmYz{t}||}{}{}{}{2}} < \infty$, where $||\cdot||$
    is the Euclidean norm.
  \end{enumerate}
  The bivariate density functions $\ghc[\yhc{k\ell}{h}]{k\ell}{h}$ and
  $\ghc[\yhc{\ell k}{h}]{\ell k}{h}$, corresponding to the lag~$h$
  pairs introduced in
  \cref{eq:lgcsd_bivariate_pairs_Y_kl_ht_and_Y_lk_ht}, must satisfy
  the following requirements for a given point \mbox{$\LGp=\LGpoint$}.
  \begin{enumerate}[resume]
  \item
    \label{assumption:lgcsd:gh_differentiable_at_LGp}
        $\ghc[\yhc{k\ell}{h}]{k\ell}{h}$ is differentiable at $\LGp$, such that 
    Taylor's theorem can be used to write $\ghc[\yhc{k\ell}{h}]{k\ell}{h}$~as
    \begin{align}
      \label{eq_assumption:lgcsd:gh_differentiable_at_LGp}
      \ghc[\yhc{k\ell}{h}]{k\ell}{h}
      & = \gh[\LGp]{h} +
        \power[\prime]{}{\bmmathfrakgz{h}(\LGp)} \left[\yhc{k\ell}{h}-\LGp
        \right] + \power[\prime]{}{\bmmathfrakRz{h}(\yhc{k\ell}{h})}
        \left[\yhc{k\ell}{h}-\LGp \right], \\
      \nonumber
      &\text{where }         \bmmathfrakgz{h}(\LGp)
        = \Vector[\prime]{
        \left.\tfrac{\partial}{\partial \yz{h}}
        \ghc[\yhc{k\ell}{h}]{k\ell}{h}\right|_{\yhc{k\ell}{h} =\, \LGp},
        \left.\tfrac{\partial}{\partial \yz{0}}
        \ghc[\yhc{k\ell}{h}]{k\ell}{h}\right|_{\yhc{k\ell}{h} =\,
        \LGp}} \\
      \nonumber
      &\text{and } \lim_{\yhc{k\ell}{h}\longrightarrow\, \LGp}
        \bmmathfrakRz{h}(\yhc{k\ell}{h}) = 0,
    \end{align}
    with the same requirement for $\ghc[\yhc{\ell k}{h}]{\ell k}{h}$
    at the diagonally reflected point
    \mbox{$\LGpd=\parenR{\LGpi{2},\LGpi{1}}$}.
  \item 
    \label{assumption:lgcsd:gh_bh0}
        There exists a bandwidth $\bmbz{k\ell:h:0}$ such that there for
    every \mbox{$\bm{0} < \bm{b} < \bmbz{k\ell:h:0}$} is a unique
    minimiser $\thetahbc{}{\LGp}{k\ell}{h}{b}$ of the penalty
    function~$\qhc{}{\LGp}{k\ell}{h}{b}$ from
    \cref{eq:lgcsd:penalty_qh}.
  \item
    \label{assumption:lgcsd:gh_b0_infimum_of_bh0}
        The collection of bandwidths $\TSR{\bmbz{k\ell:h:0}}{h\in\ZZ}{}$
    has a positive infimum, i.e., there exists a $\bmbz{k\ell:0}$ such
    that\footnote{Inequalities involving vectors are to be interpreted
      in a component-wise manner.}
    \begin{align}
      \bm{0} < \bmbz{k\ell:0} \defeq \inf_{h\in\ZZ} \bmbz{k\ell:h:0},
    \end{align}
    which implies that this $\bmbz{k\ell:0}$ can be used
    simultaneously for all the lags.
  \item
    \label{assumption:lgcsd:h_x:y_x:y:z_finite_expectations}
                                                        For $\XNht[\LGp]{n}{k\ell:h:q}{t}$ from \cref{def:Xklht}, the
    related bivariate, trivariate and tetravariate density functions
    must be such that the expectations
    $\E{\XNht[\LGp]{n}{k\ell:h:q}{t}}$,
    $\E{\absp[\nu]{\XNht[\LGp]{n}{k\ell:h:q}{t}}}$ and
    $\E{\XNht[\LGp]{n}{k\ell:h:q}{t}\cdot\XNht[\LGp]{n}{k\ell:i:r}{s}}$
    all are finite.
  \end{enumerate}
\end{assumption}

The present \cref{assumption_lgcsd_Yt} is in essence identical to
\lgsdRef{assumption_Yt} with some extra indices, so the remarks from
\JT is of intereset here too.  In particular, the
\mbox{$\alpha$-mixing} requirement in
\cref{assumption_lgcsd_Yt_strong_mixing} implies that $\Yz{k,t+h}$
and~$\Yz{\ell,t}$ will be asymptotically independent as $\hlimit$,
i.e., the bivariate density functions $\ghc[\yhc{k\ell}{h}]{k\ell}{h}$
will for large lags~$h$ approach the product of the marginal
densities, and the situation will thus stabilise when $h$ is large
enough.  This is in particular of importance for
\cref{assumption:lgcsd:gh_b0_infimum_of_bh0}, since it implies that it
will be possible to find a nonzero $\bmbz{k\ell:0}$ that works for
all~$h$.  Moreover, the finiteness assumptions in
\cref{assumption:lgcsd:h_x:y_x:y:z_finite_expectations} are always
trivially satisfied if the required density-functions are~finite.

\subsubsection{An assumption for $\parenR{\Yz{k,t},\Yz{\ell,t}}$ and
  the score function $\bmuz{}(\bm{w};\bmthetaz{})$ of
  $\psi(\bm{w};\bmthetaz{})$}
\label{sec:lgcsd:assumption_score_function}

The following assumption is in essence identical to
\lgsdRef{assumption_score_function}, which was included due to the
need for the asymptotic results from \citet{Tjostheim201333} to be
applied for all the different lags~$h$.

\begin{assumption}
  \label{assumption:lgcsd:score_function}
  The collection of local Gaussian parameters
  $\TSR{\thetahc{}{\LGp}{k\ell}{h}}{}{}$ at the point $\LGp$ for the
  bivariate probability density functions $\gh[\yh{k\ell:h}]{k\ell:h}$, must all
  be such that
  \begin{enumerate}
  \item     \label{assumption:lgcsd:score_function_finite_h}
        $\bmuz{}(\LGp;\thetahc{}{\LGp}{k\ell}{h})\neq\bm{0}$ for
    all finite $h$.
  \item     \label{assumption:lgcsd:score_function_limit_h}
        $\lim_{h\rightarrow\infty} \bmuz{}(\LGp;\thetahc{}{\LGp}{k\ell}{h})\neq\bm{0}$.
  \end{enumerate}
\end{assumption}

Note that an inspection of the $5$ equations in
\mbox{$\bmuz{}(\bm{w};\bmthetaz{})=\bm{0}$} can be used to identify
when
\cref{assumption:lgcsd:score_function_finite_h,assumption:lgcsd:score_function_limit_h}
of \cref{assumption:lgcsd:score_function} might fail, cf.\ the
discussion in \lgsdRef{sec:assumption_score_function} for further
details.

\subsubsection{Assumptions for $n$, $m$ and \mbox{$\bm{b}=\parenR{\bz{1},\bz{2}}$}}
\label{app:lgcsd:assumptions_upon_mNb}

The following assumption is identical to \lgsdRef{assumption_Nmb}.
The internal consistency of it was verified in
\lgsdRef{th:block_sizes_for_main_result}.

\begin{assumption}
  \label{assumption:lgcsd:Nmb}
  Let \mbox{$m \defeq \mz{n} \rightarrow \infty$} be a sequence of
  integers denoting the number of lags to include, and let
  \mbox{$\bm{b} \defeq \bmbz{n} \rightarrow \bmzeroz[+]{}$} be the
  bandwidths used when estimating the local Gaussian correlations for
  the lags \mbox{$h = 1,\dotsc,m$} (based on $n$ observations).  Let
  $\bz{1}$ and $\bz{2}$ refer to the two components of $\bm{b}$, and
  let $\alpha$, $\nu$ and~$a$ be as introduced in
  \myref{assumption_lgcsd_Yt}{assumption_lgcsd_Yt_strong_mixing}.  Let
  \mbox{$s\defeq \sz{n} \rightarrow\infty$} be a sequence of integers
  such that \mbox{$s=\oh{\sqrt{n\bz{1}\bz{2}/m}}$}, and let $\tau$ be
  a positive constant.  The following requirements must be satisfied
  for these entities.\footnote{ Notational convention:
    \enquote{$\vee$} denotes the maximum of two numbers, whereas
    \enquote{$\wedge$} denotes the minimum.}
  \begin{enumerate}
  \item
    \label{eq:lgcsd:assumption_N_and_b1b2}
    $\log n / n\!  \parenRz[5]{}{\bz{1}\bz{2}} \longrightarrow 0$.
  \item
    \label{eq:lgcsd:assumption_Nb1b2/m}
    $n\bz{1}\bz{2}/m \longrightarrow \infty$.
  \item
    \label{eq:lgcsd:assumption_m(b1 join b2)}
    $\mz[\delta]{}\!\left(\bz{1}\vee\bz{2}\right) \longrightarrow 0,
    \text{ where } \delta = 2 \vee \tfrac{\nu(a+1)}{\nu(a-1)-2}$.
  \item
    \label{eq:lgcsd:assumption_alpha(s)_and_(N/b1b2)^1/2}
    $\sqrt{nm/\bz{1}\bz{2}}\cdot \sz[\tau]{}\!\cdot\alpha(s-m+1)
    \longrightarrow \infty$.
  \item
    \label{eq:lgcsd:assumption_m=o((Nb1b2)^tau/(2+5tau)-lambda)}
    $m = \oh{\parenRz[\tau/(2+5\tau)-\lambda]{}{n\bz{1}\bz{2}}},
    \text{ for some } \lambda \in \left(0, \tau/(2+5\tau)\right)$.
  \item
    \label{eq:lgcsd:assumption_m=o(s)}
    $m = \oh{s}$.
  \end{enumerate}
\end{assumption}

\makeatletter{}
\subsubsection{Convergence theorems for
  $\hatlgcsdM{k\ell}{\LGp}{\omega}{m}$,
  $\hatlgcsdAmplitudeM{k\ell}{\LGp}{\omega}{m}$ and
  $\hatlgcsdPhaseM{k\ell}{\LGp}{\omega}{m}$}
\label{seq:lgcsd:convergence_theorems}
See \cref{sec:proofs} in the Supplementary Material for the proofs of
the theorems stated below.

\begin{theorem}
  \label{th:lgcsd:asymptotics_for_hatlgcsd}
  The estimate $\hatlgcsdM{k\ell}{\LGp}{\omega}{m} =
  \hatlgcsdCoM{k\ell}{\LGp}{\omega}{m} -
  i\cdot\hatlgcsdQuadM{k\ell}{\LGp}{\omega}{m}$ of the local
  Gaussian cross-spectrum $\lgcsd{k\ell}{\LGp}{\omega} =
  \lgcsdCo{k\ell}{\LGp}{\omega} -
  i\cdot\lgcsdQuad{k\ell}{\LGp}{\omega}$, will under
  \cref{assumption_lgcsd_Yt,assumption:lgcsd:score_function,assumption:lgcsd:Nmb} satisfy
    \begin{align}
    \label{eq:lgcsd:th:asymptotics_for_hatlgsd_off_diagonal_Yt_not_reversible}
    \sqrt{n \!\parenRz[3]{}{\bz{1}\bz{2}} \!/ m} \cdot     \parenR{
      \begin{bmatrix}
        \hatlgcsdCoM{k\ell}{\LGp}{\omega}{m} \\
        \hatlgcsdQuadM{k\ell}{\LGp}{\omega}{m}
      \end{bmatrix}
      -     \begin{bmatrix}
        \lgcsdCo{k\ell}{\LGp}{\omega} \\
        \lgcsdQuad{k\ell}{\LGp}{\omega}
    \end{bmatrix} }
    \stackrel{\scriptscriptstyle d}{\longrightarrow}
    \UVN{
      \begin{bmatrix}
        0 \\
        0
    \end{bmatrix}}{
      \begin{bmatrix}
        \sigmaz[2]{\!c|k\ell:\LGp}(\omega) & 0 \\
        0 &\sigmaz[2]{\!q|k\ell:\LGp}(\omega) 
    \end{bmatrix}},
  \end{align}
  when \mbox{$\omega \not\in \tfrac{1}{2}\cdot\ZZ = \parenC{\dotsc,-1, -\tfrac{1}{2}, 0,
      \tfrac{1}{2}, 1, \dotsc}$},
  where the variances $\sigmaz[2]{\!c|k\ell:\LGp}(\omega)$ and
  $\sigmaz[2]{\!q|k\ell:\LGp}(\omega)$ are given by
  \begin{subequations}
    \label{eq:lgcsd:th:asymptotics_for_hatlgsd_variance_RE_and_IM}
    \begin{align}
    \label{eq:lgcsd:th:asymptotics_for_hatlgsd_variance_RE}
      \sigmaz[2]{\!c|k\ell:\LGp}(\omega) &= \lim_{\mlimit} \frac{1}{m}
      \parenR{ \tildesigmaz[2]{\!\LGp|k\ell}(0) + \sumss{h=1}{m}
        \lambdazM[2]{h}{m} \cdot \subp{\cos}{}{}{}{2} (2\pi\omega h)
        \cdot \parenC{ \tildesigmaz[2]{\!\LGp|k\ell}(h) +
          \tildesigmaz[2]{\!\LGpd|\ell k}(h) }} \\
    \label{eq:lgcsd:th:asymptotics_for_hatlgsd_variance_IM}
      \sigmaz[2]{\!q|k\ell:\LGp}(\omega) &= \lim_{\mlimit} \frac{1}{m} \parenR{
      \sumss{h=1}{m} \lambdazM[2]{h}{m} \cdot \subp{\sin}{}{}{}{2}
      (2\pi\omega h) \cdot \parenC{ \tildesigmaz[2]{\!\LGp|k\ell}(h) +
        \tildesigmaz[2]{\!\LGpd|\ell k}(h) }},
    \end{align}
  \end{subequations}
  with $\tildesigmaz[2]{\!\LGp|k\ell}(h)$ and
  $\tildesigmaz[2]{\!\LGpd|\ell k}(h)$ the asymptotic variances
  related to the estimates $\hatlgccr{k\ell}{\LGp}{h}$ and
  $\hatlgccr{\ell k}{\LGpd}{h}$, see
  \cref{th:asymptotic_result_cross_correlation} in the Supplementary
  Material for the details.

  The local Gaussian quadrature spectrum is identical to zero when
  \mbox{$\omega \in \tfrac{1}{2}\cdot\ZZ$}, and for those frequencies
  the following asymptotic result holds under the given assumptions
  \begin{align}
    \label{eq:lgcsd:th:asymptotics_for_hatlgsd_off_diagonal_Yt_not_reversible_real_valued_case}
    \sqrt{n \!\parenRz[3]{}{\bz{1}\bz{2}} \!/ m} \cdot     \parenR{
    \hatlgcsdM{k\ell}{\LGp}{\omega}{m} -
    \lgcsd{k\ell}{\LGp}{\omega} }
    \stackrel{\scriptscriptstyle d}{\longrightarrow}
    \UVN{0}{\sigmaz[2]{\!c|k\ell:\LGp}(\omega)}, \qquad \omega \in \tfrac{1}{2}\cdot\ZZ.
  \end{align}
  
\end{theorem}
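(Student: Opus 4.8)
\textbf{Proof plan for Theorem~\ref{th:lgcsd:asymptotics_for_hatlgcsd}.}
The plan is to reduce the statement about $\hatlgcsdM[p]{k\ell}{\LGp}{\omega}{m}$ to the already-established joint asymptotic normality of the estimated local Gaussian cross-correlations $\hatlgccr[p]{k\ell}{\LGp}{h}$ and $\hatlgccr[p]{\ell k}{\LGpd}{h}$ (this is the content of \cref{th:asymptotic_result_cross_correlation}, which I would cite wholesale together with \cref{assumption_lgcsd_Yt,assumption:lgcsd:score_function,assumption:lgcsd:Nmb}). The estimator in \cref{eq:App:hatlgsd_definition_main_document} is, by construction, a finite linear combination with deterministic coefficients $\lambdazM{h}{m}\cos(2\pi\omega h)$ and $\lambdazM{h}{m}\sin(2\pi\omega h)$ of these correlation estimates. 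So the strategy mirrors the one used for the local Gaussian auto-spectrum in \lgsdRef{sec:Asymptotic_theory}: split $\hatlgcsdM[p]{k\ell}{\LGp}{\omega}{m}$ into its real part $\hatlgcsdCoM[p]{k\ell}{\LGp}{\omega}{m}$ and (negative) imaginary part $\hatlgcsdQuadM[p]{k\ell}{\LGp}{\omega}{m}$ via \cref{eq:estimate_local_co-spectrum,eq:estimate_local_quad-spectrum}, center each at the corresponding population quantity, and apply a Cram\'er--Wold / blocking argument to the array of centered correlation estimates to get a bivariate CLT with the scaling $\sqrt{n(\bz{1}\bz{2})^{(p+1)/2}/m}$.

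First I would record the key trigonometric cancellation that produces the two separate variances and the vanishing covariance. Writing $\hatlgccr[p]{k\ell}{\LGp}{h}=\lgccr[p]{k\ell}{\LGp}{h}+\varepsilon_{k\ell,h}$ and similarly $\hatlgccr[p]{\ell k}{\LGpd}{h}=\lgccr[p]{\ell k}{\LGpd}{h}+\varepsilon_{\ell k,h}$, the co- and quadrature estimates become, after centering,
\begin{align*}
  \hatlgcsdCoM[p]{k\ell}{\LGp}{\omega}{m}-\lgcsdCo[p]{k\ell}{\LGp}{\omega}
  &= \varepsilon_{k\ell,0} + \sumss{h=1}{m}\lambdazM{h}{m}\cos(2\pi\omega h)\parenS{\varepsilon_{k\ell,h}+\varepsilon_{\ell k,h}} + (\text{truncation bias}), \\
  \hatlgcsdQuadM[p]{k\ell}{\LGp}{\omega}{m}-\lgcsdQuad[p]{k\ell}{\LGp}{\omega}
  &= \sumss{h=1}{m}\lambdazM{h}{m}\sin(2\pi\omega h)\parenS{\varepsilon_{k\ell,h}-\varepsilon_{\ell k,h}} + (\text{truncation bias}).
\end{align*}
Since the $\varepsilon$'s at distinct lags are asymptotically independent (with the rate in \cref{assumption:lgcsd:Nmb} controlling the accumulated covariances across lags, exactly as in the auto-spectrum proof), the asymptotic cross-term in the covariance of the co- and quadrature parts is $\lim_m \frac1m\sum_h \lambdazM[2]{h}{m}\cos(2\pi\omega h)\sin(2\pi\omega h)\parenS{\tildesigmaz[2]{\!\LGp|k\ell|p}(h)-\tildesigmaz[2]{\!\LGpd|\ell k|p}(h)}$; I would argue this vanishes because, for $\omega\notin\frac12\ZZ$, the averaged products $\frac1m\sum_h \cos(2\pi\omega h)\sin(2\pi\omega h)=\frac1{2m}\sum_h\sin(4\pi\omega h)$ are $o(1)$ (a Dirichlet-kernel bound), while the slowly-varying factor $\lambdazM[2]{h}{m}$ and the smoothly-varying $\tildesigma^2$ terms can be handled by Abel summation. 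The same computation gives $\cos^2$ averaging to the $\sigmaz[2]{\!c}$ formula and $\sin^2$ to the $\sigmaz[2]{\!q}$ formula in \cref{eq:lgcsd:th:asymptotics_for_hatlgsd_variance_RE_and_IM}. The truncation bias is $o(\sqrt{m/(n(\bz{1}\bz{2})^{(p+1)/2})})$ by the summability in \myref{def:LG_cross-and-auto-spectra_def_main_part}{def:lgcs_requirement} together with \cref{assumption:lgcsd:assumption_m=o((Nb1b2)^tau/(2+5tau)-lambda)}, exactly as in \citep{jordanger17:_lgsd}.

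For the degenerate frequencies $\omega\in\frac12\ZZ$ I would observe that $\sin(2\pi\omega h)=0$ for all integer $h$, so $\lgcsdQuad[p]{k\ell}{\LGp}{\omega}\equiv0$ and its estimator is identically zero; hence $\hatlgcsdM[p]{k\ell}{\LGp}{\omega}{m}$ is real-valued and equals $\hatlgcsdCoM[p]{k\ell}{\LGp}{\omega}{m}$, and the one-dimensional CLT with variance $\sigmaz[2]{\!c|k\ell:\LGp|p}(\omega)$ follows from the same argument specialized to the single real coordinate --- giving \cref{eq:lgcsd:th:asymptotics_for_hatlgsd_off_diagonal_Yt_not_reversible_real_valued_case}. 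The main obstacle, and the step I would spend the most care on, is the blocking/mixing argument establishing the joint CLT for the centered correlation array uniformly over the growing number of lags $h=0,\dots,m$ with the correct normalization: one must verify a Lindeberg-type negligibility condition for the big blocks, control the small-block remainder, and check that the approximation of $\varepsilon_{k\ell,h}$ by its leading linear (Bahadur-type) term from the Klimko--Nelson expansion is uniform in $h$. However, since \cref{assumption_lgcsd_Yt,assumption:lgcsd:score_function,assumption:lgcsd:Nmb} are deliberately the multivariate copies of \lgsdRef{assumption_Yt,assumption_score_function,assumption_Nmb}, and the internal consistency of the bandwidth/lag conditions was already verified in \lgsdRef{th:block_sizes_for_main_result}, this step is essentially a transcription of the univariate proof with the extra indices $k,\ell$ carried along --- the only genuinely new ingredient is bookkeeping the pair $(\varepsilon_{k\ell,h},\varepsilon_{\ell k,h})$ and their joint limiting covariance structure, which \cref{th:asymptotic_result_cross_correlation} supplies.
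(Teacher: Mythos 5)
Your proposal is correct and takes essentially the same route as the paper: the paper likewise reduces the claim to the joint asymptotic normality, across the \mbox{$2m+1$} lags, of the local Gaussian cross-correlation estimates at $\LGp$ and $\LGpd$ (obtained by transcribing the univariate Klimko--Nelson/blocking argument, packaged in the appendix as a CLT for the stacked parameter vector with direct-sum, i.e.\ asymptotically independent, covariance structure), and then gets the displayed diagonal covariance by the same trigonometric bookkeeping applied to the linear map with coefficients $\lambdazM{h}{m}\cos(2\pi\omega h)$ and $\lambdazM{h}{m}\sin(2\pi\omega h)$, treating $\omega\in\tfrac{1}{2}\cdot\ZZ$ exactly as you do by dropping the sine terms. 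Your explicit oscillation/Abel-summation argument for the vanishing \Co--\Quad cross-term is, if anything, more detailed than the paper's appeal to \enquote{some linear algebra}.
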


The asymptotic results for the local Gaussian amplitude- and
phase-spectra is a direct consequence of
\cref{th:lgcsd:asymptotics_for_hatlgcsd} and \citet[proposition~6.4.3,
p.~211]{Brockwell:1986:TST:17326}.

\begin{theorem}
  \label{th:lgcsd:asymptotics_for_amplitude_spectrum}
                Under \cref{assumption_lgcsd_Yt,assumption:lgcsd:score_function,assumption:lgcsd:Nmb}, when
  \mbox{$\lgcsdAmplitude{k\ell}{\LGp}{\omega}>0$} and
  \mbox{$\omega \not\in \tfrac{1}{2}\cdot\ZZ$}, the estimate
  $\hatlgcsdAmplitudeM{k\ell}{\LGp}{\omega}{m} =
  \sqrt{\subp{\parenR{\hatlgcsdCoM{k\ell}{\LGp}{\omega}{m}}}{}{}{}{2}
    +
    \subp{\parenR{\hatlgcsdQuadM{k\ell}{\LGp}{\omega}{m}}}{}{}{}{2}}$
  satisfies
    \begin{align}
    \label{eq:th:asymptotics_for_amplitude_spectrum}
    \sqrt{n \!\parenRz[3]{}{\bz{1}\bz{2}} \!/ m} \cdot \parenR{
      \hatlgcsdAmplitudeM{k\ell}{\LGp}{\omega}{m} -
      \lgcsdAmplitude{k\ell}{\LGp}{\omega} }
    \stackrel{\scriptscriptstyle d}{\longrightarrow}
    \UVN{0}{\sigmaz[2]{\!\alpha}(\omega)},
  \end{align}
  where $\sigmaz[2]{\!\alpha}(\omega)$ is given relative to
  $\sigmaz[2]{\!c|k\ell:\LGp}(\omega)$ and
  $\sigmaz[2]{\!q|k\ell:\LGp}(\omega)$ (from
  \cref{eq:lgcsd:th:asymptotics_for_hatlgsd_variance_RE_and_IM},
  \cref{th:lgcsd:asymptotics_for_hatlgcsd}) as
  \begin{align}
    \label{eq:variance_for_amplitude}
    \sigmaz[2]{\!\alpha} &= \parenR{
      \lgcsdCoM{k\ell}{\LGp}{\omega}{2}\cdot\sigmaz[2]{\!c|k\ell:\LGp}(\omega)
      +
      \lgcsdQuadM{k\ell}{\LGp}{\omega}{2}\cdot\sigmaz[2]{\!q|k\ell:\LGp}(\omega)
    } / \lgcsdAmplitudeM{k\ell}{\LGp}{\omega}{2}.
  \end{align}
\end{theorem}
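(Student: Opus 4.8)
\textbf{Proof plan for \cref{th:lgcsd:asymptotics_for_amplitude_spectrum}.}
The plan is to obtain this result as a routine application of the delta method to the joint asymptotic normality already established in \cref{th:lgcsd:asymptotics_for_hatlgcsd}. First I would recall from \cref{th:lgcsd:asymptotics_for_hatlgcsd} that, for \mbox{$\omega\not\in\tfrac12\cdot\ZZ$}, the pair $\parenR{\hatlgcsdCoM[p]{k\ell}{\LGp}{\omega}{m},\hatlgcsdQuadM[p]{k\ell}{\LGp}{\omega}{m}}$ is jointly asymptotically normal after scaling by $\sqrt{n\parenRz[(p+1)/2]{}{\bz{1}\bz{2}}/m}$, with limiting mean $\parenR{\lgcsdCo[p]{k\ell}{\LGp}{\omega},\lgcsdQuad[p]{k\ell}{\LGp}{\omega}}$ and diagonal covariance matrix $\operatorname{diag}\parenR{\sigmaz[2]{\!c|k\ell:\LGp|p}(\omega),\sigmaz[2]{\!q|k\ell:\LGp|p}(\omega)}$. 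The amplitude estimate is the map $g(x,y)=\sqrt{x^2+y^2}$ applied to this pair, so the statement is exactly what the multivariate delta method produces, provided $g$ is differentiable at the limit point — which holds precisely because the hypothesis $\lgcsdAmplitude[p]{k\ell}{\LGp}{\omega}>0$ rules out the one point $(0,0)$ where $g$ fails to be differentiable.

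Concretely, I would compute $\nabla g$ at $\parenR{\lgcsdCo[p]{k\ell}{\LGp}{\omega},\lgcsdQuad[p]{k\ell}{\LGp}{\omega}}$, namely
\begin{align*}
  \nabla g &= \frac{1}{\lgcsdAmplitude[p]{k\ell}{\LGp}{\omega}}\Vector[\prime]{\lgcsdCo[p]{k\ell}{\LGp}{\omega},\lgcsdQuad[p]{k\ell}{\LGp}{\omega}},
\end{align*}
and then the delta method yields asymptotic variance $\nabla g^{\prime}\,\Sigma\,\nabla g$ with $\Sigma=\operatorname{diag}\parenR{\sigmaz[2]{\!c|k\ell:\LGp|p}(\omega),\sigmaz[2]{\!q|k\ell:\LGp|p}(\omega)}$. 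Expanding this quadratic form gives
\begin{align*}
  \sigmaz[2]{\!\alpha}
  &= \frac{\lgcsdCoM[p]{k\ell}{\LGp}{\omega}{2}\cdot\sigmaz[2]{\!c|k\ell:\LGp|p}(\omega)+\lgcsdQuadM[p]{k\ell}{\LGp}{\omega}{2}\cdot\sigmaz[2]{\!q|k\ell:\LGp|p}(\omega)}{\lgcsdAmplitudeM[p]{k\ell}{\LGp}{\omega}{2}},
\end{align*}
which is precisely \cref{eq:variance_for_amplitude}. This is the argument packaged in \citep[proposition~6.4.3, p.~211]{Brockwell:1986:TST:17326}, which treats exactly the amplitude (and phase) of a cross-spectral estimator given joint normality of the cospectrum and quadrature-spectrum estimators; I would simply cite that proposition with the roles of the ordinary cospectrum/quadrature spectrum played by their local Gaussian analogues.

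There is essentially no serious obstacle here: the only points requiring a word of care are (i) verifying that the normalising rate $\sqrt{n\parenRz[(p+1)/2]{}{\bz{1}\bz{2}}/m}$ is carried through unchanged by the delta method (it is, since the delta method does not alter the rate), and (ii) confirming that the restriction $\omega\not\in\tfrac12\cdot\ZZ$ is needed so that the quadrature component is genuinely non-degenerate and the bivariate limit in \cref{eq:lgcsd:th:asymptotics_for_hatlgsd_off_diagonal_Yt_not_reversible} applies rather than the degenerate one-dimensional statement in \cref{eq:lgcsd:th:asymptotics_for_hatlgsd_off_diagonal_Yt_not_reversible_real_valued_case}. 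Since the diagonal structure of $\Sigma$ is inherited directly from \cref{th:lgcsd:asymptotics_for_hatlgcsd}, no cross-covariance term appears, which is why \cref{eq:variance_for_amplitude} contains only the two ``pure'' variance contributions. The proof is therefore short: state the joint CLT, invoke differentiability of $g$ away from the origin, apply \citep[proposition~6.4.3]{Brockwell:1986:TST:17326}, and record the resulting variance.
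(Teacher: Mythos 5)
Your proposal is correct and follows essentially the same route as the paper: the paper's proof likewise applies the delta method to the joint bivariate normality of the \Co- and \Quad-estimates from \cref{th:lgcsd:asymptotics_for_hatlgcsd}, using $h(\xz{1},\xz{2})=\sqrt{\xz[2]{1}+\xz[2]{2}}$ and the cited Brockwell--Davis proposition, and arrives at the same gradient and variance expression \cref{eq:variance_for_amplitude}. Your remarks on the positivity condition, the unchanged rate, and the role of $\omega\not\in\tfrac{1}{2}\cdot\ZZ$ are consistent with the paper's treatment.
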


\begin{theorem}
  \label{th:lgcsd:asymptotics_for_phase_spectrum}
                Under \cref{assumption_lgcsd_Yt,assumption:lgcsd:score_function,assumption:lgcsd:Nmb}, when
  \mbox{$\lgcsdAmplitude{k\ell}{\LGp}{\omega}>0$} and
  \mbox{$\omega \not\in \tfrac{1}{2}\cdot\ZZ$}, the estimate
  $\hatlgcsdPhaseM{k\ell}{\LGp}{\omega}{m} =
  \operatorname{args}\!\parenR{\hatlgcsdCoM{k\ell}{\LGp}{\omega}{m}
    - i\cdot\hatlgcsdQuadM{k\ell}{\LGp}{\omega}{m}}$ satisfies
    \begin{align}
    \label{eq:th:asymptotics_for_phase_spectrum}
    \sqrt{n \!\parenRz[3]{}{\bz{1}\bz{2}} \!/ m} \cdot \parenR{
      \hatlgcsdPhaseM{k\ell}{\LGp}{\omega}{m} -
      \lgcsdPhase{k\ell}{\LGp}{\omega} }
    \stackrel{\scriptscriptstyle d}{\longrightarrow}
    \UVN{0}{\sigmaz[2]{\!\phi}(\omega)},
  \end{align}
  where $\sigmaz[2]{\!\phi}(\omega)$ is given relative to
  $\sigmaz[2]{\!c|k\ell:\LGp}(\omega)$ and
  $\sigmaz[2]{\!q|k\ell:\LGp}(\omega)$ (from
  \cref{eq:lgcsd:th:asymptotics_for_hatlgsd_variance_RE_and_IM},
  \cref{th:lgcsd:asymptotics_for_hatlgcsd}) as
  \begin{align}
    \label{eq:variance_for_phase}
    \sigmaz[2]{\!\phi}(\omega) &= \parenR{
      \lgcsdQuadM{k\ell}{\LGp}{\omega}{2}\cdot\sigmaz[2]{\!c|k\ell:\LGp}(\omega)
      +
      \lgcsdCoM{k\ell}{\LGp}{\omega}{2}\cdot\sigmaz[2]{\!q|k\ell:\LGp}(\omega)
    } / \lgcsdAmplitudeM{k\ell}{\LGp}{\omega}{4}.
  \end{align}
\end{theorem}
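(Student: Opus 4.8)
The plan is to obtain \cref{th:lgcsd:asymptotics_for_phase_spectrum} as a direct application of the delta method to the joint central limit theorem already established in \cref{th:lgcsd:asymptotics_for_hatlgcsd}, exactly in the spirit of \citep[proposition~6.4.3, p.~211]{Brockwell:1986:TST:17326}. First I would record that, for \mbox{$\omega \not\in \tfrac{1}{2}\cdot\ZZ$}, \cref{th:lgcsd:asymptotics_for_hatlgcsd} gives the joint asymptotic normality of the vector \mbox{$\parenR{\hatlgcsdCoM[p]{k\ell}{\LGp}{\omega}{m}, \hatlgcsdQuadM[p]{k\ell}{\LGp}{\omega}{m}}$} around \mbox{$\parenR{\lgcsdCo[p]{k\ell}{\LGp}{\omega}, \lgcsdQuad[p]{k\ell}{\LGp}{\omega}}$} with rate \mbox{$\sqrt{n \parenRz[(p+1)/2]{}{\bz{1}\bz{2}}/m}$} and diagonal limiting covariance matrix \mbox{$\operatorname{diag}\!\parenR{\sigmaz[2]{\!c|k\ell:\LGp|p}(\omega), \sigmaz[2]{\!q|k\ell:\LGp|p}(\omega)}$}. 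Since \mbox{$\lgcsdPhase[p]{k\ell}{\LGp}{\omega} = \operatorname{Arg}\parenR{\lgcsdCo[p]{k\ell}{\LGp}{\omega} - i\cdot\lgcsdQuad[p]{k\ell}{\LGp}{\omega}}$} is, away from the origin, a smooth function \mbox{$g(c,q) = \operatorname{arg}(c - iq) = -\arctan(q/c)$} (with the appropriate branch), the hypothesis \mbox{$\lgcsdAmplitude[p]{k\ell}{\LGp}{\omega}>0$} guarantees that $g$ is continuously differentiable in a neighbourhood of the limit point, so the delta method applies.

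Next I would compute the gradient of $g$ at the limit point: \mbox{$\partial g/\partial c = q/(c^{2}+q^{2}) = \lgcsdQuad[p]{k\ell}{\LGp}{\omega}/\lgcsdAmplitudeM[p]{k\ell}{\LGp}{\omega}{2}$} and \mbox{$\partial g/\partial q = -c/(c^{2}+q^{2}) = -\lgcsdCo[p]{k\ell}{\LGp}{\omega}/\lgcsdAmplitudeM[p]{k\ell}{\LGp}{\omega}{2}$}, where I write $c$, $q$ for $\lgcsdCo[p]{k\ell}{\LGp}{\omega}$, $\lgcsdQuad[p]{k\ell}{\LGp}{\omega}$ and use $c^{2}+q^{2} = \lgcsdAmplitudeM[p]{k\ell}{\LGp}{\omega}{2}$ from \cref{eq:local_amplitude-spectrum}. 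Applying the delta method with the diagonal covariance structure then yields the limiting variance
\begin{align}
  \sigmaz[2]{\!\phi}(\omega) &= \parenR{\partial g/\partial c}^{2}\sigmaz[2]{\!c|k\ell:\LGp|p}(\omega) + \parenR{\partial g/\partial q}^{2}\sigmaz[2]{\!q|k\ell:\LGp|p}(\omega) \nonumber \\
  &= \frac{\lgcsdQuadM[p]{k\ell}{\LGp}{\omega}{2}\cdot\sigmaz[2]{\!c|k\ell:\LGp|p}(\omega) + \lgcsdCoM[p]{k\ell}{\LGp}{\omega}{2}\cdot\sigmaz[2]{\!q|k\ell:\LGp|p}(\omega)}{\lgcsdAmplitudeM[p]{k\ell}{\LGp}{\omega}{4}}, \nonumber
\end{align}
which is exactly \cref{eq:variance_for_phase}. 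The absence of a cross-term in the numerator is precisely what makes the off-diagonal entry of the covariance in \cref{eq:lgcsd:th:asymptotics_for_hatlgsd_off_diagonal_Yt_not_reversible} equal to zero, so no extra covariance between the cospectrum and quadrature-spectrum estimates contributes.

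The step I expect to require the most care is not the computation but the justification that the branch of the argument function is well defined and smooth along the sample path with probability tending to one; this is handled exactly as in \citep[proposition~6.4.3, p.~211]{Brockwell:1986:TST:17326} by invoking the continuity of $\operatorname{Arg}$ off the non-positive real axis together with the assumption \mbox{$\lgcsdAmplitude[p]{k\ell}{\LGp}{\omega}>0$}, which keeps the limit point away from the origin, combined with the consistency of \mbox{$\parenR{\hatlgcsdCoM[p]{k\ell}{\LGp}{\omega}{m}, \hatlgcsdQuadM[p]{k\ell}{\LGp}{\omega}{m}}$} implied by \cref{th:lgcsd:asymptotics_for_hatlgcsd}. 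One remaining subtlety is the exclusion \mbox{$\omega \not\in \tfrac{1}{2}\cdot\ZZ$}: at those frequencies \mbox{$\lgcsdQuad[p]{k\ell}{\LGp}{\omega}$} vanishes identically and the estimate is real-valued, so the phase is degenerate and the delta-method argument for the two-dimensional statistic does not apply; I would simply note this restriction explicitly, mirroring the corresponding restriction in \cref{th:lgcsd:asymptotics_for_amplitude_spectrum}. With these points in place the proof is complete, and I would defer the full write-up to \cref{sec:proofs} as indicated in the excerpt.
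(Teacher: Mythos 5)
Your proposal is correct and follows essentially the same route as the paper's own proof: a delta-method argument (via the multivariate result in \citep[proposition~6.4.2--6.4.3, p.~211]{Brockwell:1986:TST:17326}) applied to the joint asymptotic normality in \cref{th:lgcsd:asymptotics_for_hatlgcsd}, with the gradient of the argument/arctangent function evaluated at \mbox{$\parenR{\lgcsdCo[p]{k\ell}{\LGp}{\omega},\lgcsdQuad[p]{k\ell}{\LGp}{\omega}}$} and the diagonal limiting covariance yielding exactly \cref{eq:variance_for_phase}. The only cosmetic difference is that the paper works with \mbox{$h(\xz{1},\xz{2})=\subp{\tan}{}{}{}{-1}\!\parenR{\xz{2}/\xz{1}}$} rather than your \mbox{$g(c,q)=-\arctan(q/c)$}, a sign that disappears upon squaring, so the variance formula is identical.
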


The asymptotic normality results in
\cref{th:lgcsd:asymptotics_for_hatlgcsd,th:lgcsd:asymptotics_for_amplitude_spectrum,th:lgcsd:asymptotics_for_phase_spectrum}
do not necessarily help much if computations of pointwise confidence
intervals for the estimated local Gaussian estimates are of interest,
since it in practice may be unfeasible to find decent estimates of the
variances $\sigmaz[2]{\!c|k\ell:\LGp}(\omega)$ and
$\sigmaz[2]{\!q|k\ell:\LGp}(\omega)$ that occurs in
\cref{th:lgcsd:asymptotics_for_hatlgcsd}.  The pointwise confidence
intervals will thus later in the paper either be estimated based on
suitable quantiles obtained by repeated sampling from a known
distribution, or they will be based on bootstrapping techniques for
those cases where real data has been investigated.  Confer
\citet[ch.~7.2.5 and 7.2.6]{terasvirta2010modelling} for further
details with regard to the need for bootstrapping in such situations.

\section{Visualisations and interpretations}
\label{sec:lgch_examples}

This section will show how different visualizations of the estimates
$\hatlgcsdM{k\ell}{\LGp}{\omega}{m}$ of the \mbox{$m$-truncated} local
Gaussian cross-spectrum $\lgcsdM{k\ell}{\LGp}{\omega}{m}$ can be used to
detect nonlinear dependency structures in a multivariate time series.
The plots encountered here are natural extensions of those introduced
in \JT, but as $\hatlgcsdM{k\ell}{\LGp}{\omega}{m}$ is complex-valued,
the actual investigation will be based on plots of the corresponding
local Gaussian versions of the cospectrum, quadrature spectrum, phase
spectrum and amplitude spectrum.

Technical details, and the description of the selected tuning
parameters of $\hatlgcsdM{k\ell}{\LGp}{\omega}{m}$, are given in
\cref{sec:lgcsd_Examples_the_parameters}.  This material has also been
covered in \lgsdRef{sec:Examples}, but it is included here for the
convenience of the reader.

A sanity test of the implemented estimation algorithm is presented in
\cref{sec:lgcsd_Some_simulations}, and it is there seen that
$\hatlgcsdM{k\ell}{\LGp}{\omega}{m}$ can detect local periodic
structures in an example where a heuristic argument enables the
prediction of the anticipated result.
\Cref{sec:lgcsd_Some_simulations} also contains a procedure, based on
combined \textit{heatmap and distance}-plots, that can help an
investigator detect local regions of interest.
\Cref{sec:lgch:EuStockMarkets+cGARCH}
applies the local Gaussian machinery to the log-returns of the
\EuStockMarkets-data, and it also contains the results from a
GARCH-type model fitted to these log-returns.

A comparison of the results from the original data and the fitted
model can reveal to what extent the internal dependency structure of
the fitted model actually reflects the dependency structure of the
original sample, and this might be of interest with regard to model
selection.  As in \JT, it will be seen that
plots based on $\hatlgcsdM{k\ell}{\LGp}{\omega}{m}$ can be useful
as an exploratory tool, and this approach might detect nonlinear
dependencies and periodicities between the variables, which can not be
detected by ordinary cross spectral analysis.

\subsection{The input parameters and some other technical details}
\label{sec:lgcsd_Examples_the_parameters}

Several tuning parameters must be selected in order to compute the
$m$-truncated local Gaussian cross-spectrum density estimates
$\hatlgcsdM{k\ell}{\LGp}{\omega}{m}$, and the values used for the
plots in this section are given below.  The parameters are mostly the
same as those used when estimating the local Gaussian auto-spectra in
\JT, with the main exception that the value
$0.6$ is used instead of $0.5$ in the bandwidth vector $\bm{b}$.  This
adjustment is due to the fact that the time series in this paper are
sligthly shorter than those used in \JT, i.e.,
length 1859 versus length~1974.  (All simulated time series have the
same length as the one encountered for the real data example, i.e.,
the log-returns of the \EuStockMarkets-data.)

Note that these tuning parameters have been selected in order to
provide a \textit{proof of concept} for the fact that nonlinear
dependency structures can be detected by this approach, and the quest
for \enquote{optimal parameters} is a topic for further work.  The
interested reader can consult \cref{P2.app:sensitivity_analysis} in
the Supplementary Material for a sensitivity analysis of the different
tuning parameters.

\textbf{The pseudo-normalization:} The initial step of the computation
of $\hatlgcsdM{k\ell}{\LGp}{\omega}{m}$ is to replace the
observations
$\TSR{\parenR{\yz{1,t},\dotsc,\yz{d,t}}}{t=1}{n}$ with the
corresponding pseudo-normalized observations
$\TSR{\parenR{\hatzz{1,t},\dotsc,\hatzz{d,t}}}{t=1}{n}$,
cf.\ \cref{def:lgsd_estimator}, that is, estimates of the $d$ marginal
cumulative density functions $\TSR{\Gz{\ell}}{\ell=1}{d}$ are needed.
The present analysis has used the rescaled empirical cumulative
density functions $\hatGz{\ell:n}$ for this purpose, but the
computations could also have been based on logspline-estimates
of~$\Gz{\ell}$.  A preliminary test revealed that the two
normalization procedures created strikingly similar estimates of
$\hatlgcsdM{k\ell}{\LGp}{\omega}{m}$, so the computationally faster
approach based on the rescaled empirical cumulative density-function
has thus been applied for the present investigation.
 
\textbf{The points $\LGp$ of investigation:} Three diagonal points,
with coordinates corresponding to the 10\%, 50\% and 90\% percentiles
of the standard normal distribution,\footnote{The corresponding
  coordinates are $(-1.28,-1.28)$, $(0,0)$ and $(1.28,1.28)$.} will be
used in the basic plots in this section.
Confer \cref{P2.app:Bandwidth_sensitivity} for further details related
to the selection of $\LGp$, and see
\cref{fig:heatmap_co_quad_dmt_bivariate_constant_phases,fig:heatmap_co_quad_dmt_bivariate_different_phases,fig:heatmap_distance_plot_EuStockMarkets_DAX_CAC}
for some heatmap-based plots.

\textbf{The lag-window function $\lambdazM{h}{m}$:} The smoothing of the estimated local Gaussian autocorrelations, cf.\
\myref{def:lgsd_estimator}{def:lgsd_esitimator_folded}, was done by
the Tukey-Hanning lag-window kernel:
$\lambdazM{h}{m} = \tfrac{1}{2} \cdot \left(1 + \cos\left(\pi\cdot
    \tfrac{h}{m}\right) \right)$ for $|h| \leq m $,
$\lambdazM{h}{m} =0$ for $|h|>m$.

\textbf{The bandwidth $\bm{b}$:} The estimation of the local Gaussian
autocorrelations requires the selection of a bandwidth-vector
$\bm{b} = \parenR{\bz{1},\bz{2}}$, and the plots in this section have
used \mbox{$\bm{b}=(0.6,0.6)$}.  It was noted in
\JT that it was natural to require that the
bandwidth \mbox{$\bm{b}=\parenR{\bz{1},\bz{2}}$} should satisfy
\mbox{$\bz{1}=\bz{2}$} when the local Gaussian autocorrelations
$\lgccr{kk}{\LGp}{h}$ should be estimated, since both of the
components in the lag~$h$ pseudo-normalised pairs originated from the
same univariate time series.  For the estimation of local Gaussian
cross-correlations $\lgccr{k\ell}{\LGp}{h}$, it is the
pseudo-normalisation of the marginals that justifies the assumed
equality of $\bz{1}$ and~$\bz{2}$.  A discussion related to the choice
of bandwidth can be found in
\cref{P2.How.to.select.the.tuning.parameters?}

\textbf{The truncation level $m$:} The value \mbox{$m=10$} was used
for the truncation level, since it was possible to detect nonlinear
dependency structures even for that low truncation level.

\textbf{The number of replicates $R$:} The estimated values (means and
90\% pointwise confidence intervals) have been based on $R=100$
replicates.  Simulations were used for the cases with known parametric
models, whereas the bootstrapped based resampling strategy developed
in \JT was used for the real data example.  The
relevant details about this resampling strategy are for the
convenience of the reader included in
\cref{P2.app:regarding_resampling} in the Supplementary Material.

\textbf{Numerical convergence:} The \Rpackage \Rref{localgauss}, see
\citet{Berentsen:2014:ILR}, estimates the local Gaussian
cross-correlations $\lgccr{k\ell}{\LGp}{h}$ and returns them
together with an attribute that reveals whether or not the estimation
algorithm converged numerically.  The $m$-truncated estimates
$\hatlgcsdM{k\ell}{\LGp}{\omega}{m}$ inherits the
convergence-attributes from the estimates
$\TSR{\hatlgccr{k\ell}{\LGp}{h}}{h=-m}{m}$ , and either
\enquote{\texttt{NC = OK}} or \enquote{\texttt{NC = FAIL}} will be
added to the plot depending on the convergence status.  Note that
convergence-problems hardly occur when the computations are based on
pseudo-normalized observations.

\textbf{Estimation aspects for a given parameter configuration:} The
estimation of $\hatlgcsdM{k\ell}{\LGp}{\omega}{m}$ for a point
\mbox{$\LGp = \LGpoint$} is based on the estimates
$\TSR{\hatlgccr{k\ell}{\LGp}{h}}{h=-m}{m}$, and it can thus be of
interest to see how these estimates depend on the configuration of the
above mentioned tuning parameters.  A detailed discussion can be found
in
\lgsdRef{sec:estimation_aspect_for_the_given_parameter_configuration}.
The key observation is that the amount of pseudo-normalised
observations close to the point $\LGp$ will be much lower when $\LGp$
lies in one of the tails.  This implies a higher variability of the
estimated values $\TSR{\hatlgccr{k\ell}{\LGp}{h}}{h=-m}{m}$ for points
in the tails, and this is the reason that the estimated pointwise
confidence intervals for $\hatlgcsdM{k\ell}{\LGp}{\omega}{m}$ are much
wider for points in the tails.

\textbf{Reproducibility and interactive investigations:} The \Rpackage
\lgsdRpackage\ contains scripts that can be used to reproduce all the
examples and figures encountered in this paper, cf.\
\cref{P2.app:The.scripts.in.lgsdRpackage} in the Supplementary
Material for further details.  This package also allows the user to
interactively investigate the results, which is of interest since
$\hatlgcsdM{k\ell}{\LGp}{\omega}{m}$ can be estimated for a wide
range of tuning parameters.

\subsection{Sanity testing the implemented estimation algorithm}
\label{sec:lgcsd_Some_simulations}

This section will check that the estimates of
$\lgcsdM{k\ell}{\LGp}{\omega}{m}$ behaves as expected for a few simple
simulated bivariate examples.  The approach is similar to the one used
in \JT, as the examples are bivariate extensions of those used in that
paper, but this paper highlights how an initial investigation of
\textit{heatmap and distance}-plots can help identify regions that it
might be of interest to investigate further, cf.\
\cref{fig:heatmap_co_quad_dmt_bivariate_constant_phases,fig:heatmap_co_quad_dmt_bivariate_different_phases,fig:heatmap_distance_plot_EuStockMarkets_DAX_CAC}.
It will be seen that an inspection of the \Co-, \Quad- and
\Phase-plots might be useful exploratory tools for the identification
of nonlinear dependency-structures in multivariate time
series.\footnote{%
  The \Amplitude-plots have not been included here since the
  interesting details (in most cases) already would have been detected
  by the other plots.}

The strategy used to create the plots for the simulated data works as
follows: First draw a given number of independent replicates from the
specified model, and compute $\hatlgcsdM{k\ell}{\LGp}{\omega}{m}$ and
$\hatfz[m]{\!k\ell}(\omega)$ for each of the replicates.  Extract the
relevant \Co-, \Quad- and \Phase-spectra, and use the mean of these
$m$-truncated estimates as estimates of the true (and in general
unknown) $m$-truncated spectra.  Suitable upper and lower percentiles
of the estimates can be used to produce estimates of the pointwise
confidence intervals.

Note that the plots have been annotated with the following
information: A stamp at the center that specifies the type of spectrum
investigated; the numerical convergence status {NC} in the lower left
corner; the truncation level $m$ in the upper left corner; the
percentiles of the point $\LGp$ of investigation, and the bandwidth
$\bm{b}$ in the upper right corner; the length $n$ and the number of
replicates $R$ in the lower right corner.  Later on, for plots based
on resampling from a given sample, the plots will also include the
block length $L$ in the lower right corner.

\subsubsection{Bivariate Gaussian white noise}
\label{sec:lgcsd_Gaussian_white_noise}

For the univariate time series considered in \JT, the sanity testing
of the implemented estimation algorithm started with the trivial
Gaussian case, since it for this case is known that the local Gaussian
auto-spectrum coincides with the ordinary auto-spectrum for Gaussian
time series.  The local Gaussian cross-spectrum also coincide with the
ordinary cross-spectrum for multivariate Gaussian time series, cf.\
\myref{th:lgcs_properties}{th:lgcs_equal_to_osd_when_Gaussian}, and it
is thus natural to test the implemented algorithms on a bivariate
Gaussian example.
 
The explicit details for this bivariate Gaussian test-case are not
that relevant for the overall discussion, and those have thus been
relegated to \cref{sec:lgch-bivariate-Gaussian-white-noise} in the
Supplementary Material.  For the present discussion, it suffices to
say that the resulting plots are as expected, cf.\
\cref{fig:Bivariate_Gaussian_WN}.  In particular; the estimates
$\hatlgcsdM{k\ell}{\LGp}{\omega}{m}$ and $\hatfz[m]{\!k\ell}(\omega)$
of the $m=10$ truncated local Gaussian and ordinary cross-spectra (based
on 100 replicates) are close to the values they are supposed to have,
and the estimated pointwise 90\% confidence intervals for
$\hatlgcsdM{k\ell}{\LGp}{\omega}{m}$ are as expected wider when the
point $\LGp$ lies in the periphery of the observations.

\subsubsection{Bivariate local trigonometric examples}
\label{sec:lgch-some-simulations}

With the exception of the Gaussian case, it is not known what the true
local Gaussian cross-spectrum should look like.  This implies that is
hard to know whether or not the $m$-truncated \Co-, \Quad- and
\Phase-plots looks like they are supposed to do, or if what is seen
might be the result of an erroneous implementation of the estimation
algorithm.

The sanity testing of the implemented estimation algorithm was in
\JT done by the means of an artificially
constructed \textit{local trigonometric} time series, for which it at
least could be reasonably argued what the expected outcome should be
for some specially designated points~$\LGp$ (given a suitable
bandwidth~$\bm{b}$).  This approach will here be extended to the
bivariate case, that is, \textit{bivariate local trigonometric} time
series will be constructed for which it, at some designated
points~$\LGp$, can be given a heuristic argument for the expected
shape of the estimated local Gaussian \Co-, \Quad- and \Phase-spectra.

These artificial time series will not satisfy the requirements needed
for the asymptotic theory to hold true (as is also the case for
standard global spectral analysis), but they can still be used to show
how an exploratory tool based on the local Gaussian spectral density
can detect local structures that the ordinary spectral density fails
to detect.  Details related to the two cases investigated in this
section are given below, whereas \cref{P2.app:fig:trigonometric} in
the Supplementary Material presents an in depth explanation of the
artificial time series construction and the motivation for the
heuristic arguments given in this section.

\textbf{The reference case:} The heuristic argument needed for the
bivariate case is identical in structure to the one used in the
univariate case, and for the present investigation the reference for
the plots later on is based on the following simple bivariate model,
\begin{align}
  \label{eq:bivariate_cosine_example}
  \Yz{1,t}=\cos\parenR{2\pi\alpha t + \phi} + \wz{1,t} \text{ and }
  \Yz{2,t}=\cos\parenR{2\pi\alpha t + \phi + \theta} + \wz{2,t},
\end{align}
where $\wz{i,t}$ is Gaussian white noise with mean zero and standard
deviation~$\sigma$, with $\wz{1,t}$ and $\wz{2,t}$ independent, and
where it in addition is such that~$\alpha$ and~$\theta$ are fixed for
all the replicates whereas $\phi$ is drawn uniformly from
\mbox{$[0,2\pi)$} for each individual replicate.  A realisation with
\mbox{$\sigma = 0.75$}, \mbox{$\alpha=0.302$} and
\mbox{$\mbox{$\theta = \pi/3$}$} has been used for the \Co-, \Quad-,
and \Phase-plots shown in \cref{fig:Bivariate_global_cosine}, where
100 independent samples of length 1859 were used to get the estimates
of the $m$-truncated spectra and their corresponding 90\% pointwise
confidence~intervals (based on the bandwidth
\mbox{$\bm{b}=\parenR{0.6 ,0.6}$}).  Some useful remarks can be based
on this plot, before \textit{the bivariate local trigonometric case}
is defined and investigated.

\begin{figure}[h]

{\centering \includegraphics[width=1\linewidth]{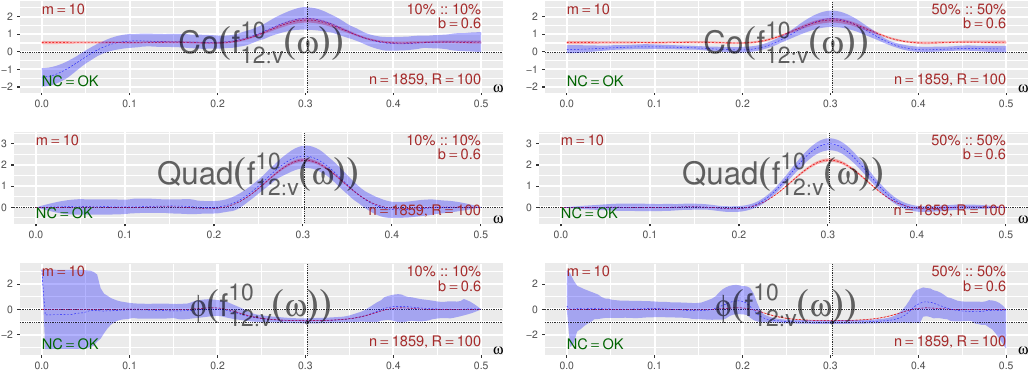} 

}

\caption{Realisation of \cref{eq:bivariate_cosine_example},
  \enquote{bivariate global cosine}, with $\sigma=0.75$,
  $\alpha = 0.302$ and \mbox{$\theta = \pi/3$}.  Similar behaviour in
  lower tail and center, both the global and local Gaussian
  cross-spectra (truncated at $m=10$) detects the frequency $\alpha$
  and the phase-difference $\theta$.}
\label{fig:Bivariate_global_cosine}
\end{figure}

In this particular case, the local Gaussian spectra (in
blue)\footnote{If you have a black and white copy of this paper, then
  read \enquote{blue} as \enquote{light} and \enquote{red} as
  \enquote{dark}} in \cref{fig:Bivariate_global_cosine} shares many
similarities with the corresponding global spectra (in red).\footnote{
  The dotted lines represents the means of the estimated values,
  whereas the 90\% pointwise confidence intervals are based on the 5\%
  and 95\% quantiles of these samples.} %
In particular, the peak of the \Co- and \Quad-plots lies both for the
local and global spectra at the frequency~\mbox{$\omega=\alpha$}
(shown in the plots as a vertical line), and the corresponding
\Phase-plots at this frequency lies quite close to the
phase-adjustment \mbox{$\mbox{$\theta = \pi/3$}$} (shown as a
horizontal line, positioned with an appropriate sign adjustment).
This phenomenon is present both for the point at \texttt{10\%::10\%}
and the point at \texttt{50\%::50\%}, but it should be noted that this
nice match does not hold for all values of~$\sigma$.  In fact,
experiments with different values for~$\sigma$ (plots not included in
this paper) indicates that the difference between the local and global
spectra becomes larger (in particular for the point
\texttt{10\%::10\%}) when~$\sigma$ becomes smaller.  See
\lgsdRef{sec:lgch:beware_of_global_structures} for some further
comments related to this issue.

Note that the wide pointwise confidence band observed for $\omega$
near~0 in the \texttt{10\%::10\%}-\Phase-plot in
\cref{fig:Bivariate_global_cosine}, is related to the branch-cut that
occurs at $-\pi$ in the definition of the phase-spectrum, cf.\
\cref{def:local_co_quad_amplitude_phase}.  The simple algorithm used
for the creation of the pointwise confidence intervals has not been
tweaked to properly cover the case where the majority of the estimates
lies in the second and third quadrants of the complex plane, which
implies that the \Co- and \Quad-plots should be consulted instead when
the \Phase-plot misbehaves in this manner.

The values of the \Co- and \Quad-plots (for a given
frequency~$\omega$) are (for each frequency) related to the
corresponding values of the \Amplitude- and \Phase-plots by the
following simple relations,
\begin{subequations}
  \label{note:lgch:Phase_Co_Quad_connection_equations}
  \begin{align}
    - \text{\Quad-plot}/\text{\Co-plot} 
    &= \tan\!\parenR{\text{\Phase-plot}}\!, \\
    \text{\Co-plot} 
    &= \texttt{\Amplitude-plot} \cdot \cos\!\parenR{\text{\Phase-plot}}\!, \\
    - \text{\Quad-plot} 
    &= \texttt{\Amplitude-plot} \cdot \sin\!\parenR{\text{\Phase-plot}}\!, 
  \end{align}
\end{subequations}
which follows trivially from the way these spectra are defined
relative to Cartesian or polar representations of the complex-valued
cross-spectra, cf.\
\cref{eq:global_co_quadrature_amplitude_phase,def:local_co_quad_amplitude_phase}.
For the example investigated in \cref{fig:Bivariate_global_cosine},
where the \Phase plot is close to $-\pi/3$ at \mbox{$\alpha=0.302$},
it thus follows that the peak for the \Quad-spectrum should be
approximately $\sqrt{3}$ times larger than the peak of the
\Co-spectrum.

The plots encountered later in this paper will be based on the above
mentioned Cartesian or polar representations of
$\hatlgcsdM{k\ell}{\LGp}{\omega}{m}$, but it is in principle also
possible to make plots (and animations) that, for a given value of
$\omega$ shows the estimated values of
$\hatlgcsdM{k\ell}{\LGp}{\omega}{m}$ as point-clouds in the complex
plane.  A discussion related to this complex-valued plot approach is
given in \cref{app:Plots_of_the_complex-valued_spectra} in the
Supplementary Material, see in particular
\cref{fig:Bivariate_global_cosine_lag_frequency}.

\textbf{The bivariate local trigonometric case:} Two bivariate
extensions of the artificial \textit{local trigonometric} time series
from \lgsdRef{sec:Deterministic_trigonometric_models} will now be
considered.  The key idea is that an artificial bivariate time series
$\TSR{\parenR{\Yz{1,t},\Yz{2,t}}}{t\in\ZZ}{}$ can be constructed by
the following scheme:
\begin{enumerate}
\item Select $r\geq 2$ bivariate time series
  $\TSR{\parenR{\Cz{1,i}(t),\Cz{2,i}(t)}}{i=1}{r}$.
\item Select a random variable $I$ with values in the set
  $\parenC{1,\dotsc,r}$, and use this to sample a collection of
  indices $\TSR{\Iz{t}}{t\in\ZZ}{}$ (that is, for each $t$ an
  independent realization of $I$ is taken).  Let
  $\pz{i}\defeq\Prob{\Iz{i}=i}$ denote the probabilities for the
  different outcomes.
\item Define $\Yz{t}$ by
  means of the equation
  \begin{subequations}
    \label{eq:lgch:Y1Y2_local_trigonometric}
    \begin{align}
      \label{eq:lgch:Y1_local_trigonometric}
      \Yz{1,t} &\defeq \sumss{i=1}{r} \Cz{1,i}(t) \cdot\Ind{\Iz{t} = i},\\
      \label{eq:lgch:Y2_local_trigonometric}
      \Yz{2,t} &\defeq \sumss{i=1}{r} \Cz{2,i}(t) \cdot\Ind{\Iz{t} = i}.
    \end{align}
  \end{subequations}
  The indicator function $\Ind{\cdot}$ ensures that only one of the
  bivariate $\parenR{\Cz{1,i}(t),\Cz{2,i}(t)}$-components contributes
  for a given value~$t$, that is, it is also possible to write
  $\parenR{\Yz{1,t},\Yz{2,t}} =
  \parenR{\Cz{1,\Iz{t}}(t),\Cz{2,\Iz{t}}(t)}$.
\end{enumerate}

The \textit{bivariate local trigonometric} time series (needed for the
sanity testing of the implemented estimation algorithm) can now be
constructed by selecting $r$ cosine-functions that oscillate around
different horizontal base-lines $\Lz{i}$, that is,
\begin{subequations}
  \label{eq:lgch:local_cosines}
  \begin{align}
    \label{eq:lgch:local_cosines_first}
    \Cz{1,i}(t)
    &= \Lz{i} + \Az{i}(t) \cdot \cos
      \left(2\pi\alphaz{i} t + \phiz{i} \right),
    &i = 1,\dotsc,r, \\
    \label{eq:lgch:local_cosines_second}
    \Cz{2,i}(t)
    &= \Lz{i} + \Az{i}(t) \cdot \cos
      \left(2\pi\alphaz{i} t + \phiz{i} + \thetaz{i}
      \right),
    &i = 1,\dotsc,r,
  \end{align}
\end{subequations}
where $\alphaz{i}$, $\phiz{i}$ and $\thetaz{i}$ respectively represent
frequency and phase-adjustments occurring in the cosine-function, and
where the amplitudes $\Az{i}(t)$ are uniformly distributed in some
interval $\parenS{\az{i},\bz{i}}$.  Note that it is assumed that the
phases $\phiz{i}$ are uniformly drawn (one time for each realisation)
from the interval between~$0$ and~$2\pi$, whereas the phases
$\thetaz{i}$ are constants.  It is also assumed that the stochastic
processes $\phiz{i}$, $\Az{i}(t)$ and $\Iz{t}$ are independent of
each other.

The auto- and cross-correlations $\rhoz{k\ell}(h)$ of
$\TSR{\parenR{\Yz{1,t},\Yz{2,t}}}{t\in\ZZ}{}$, as given by
\cref{eq:lgch:Y1Y2_local_trigonometric,eq:lgch:local_cosines}, are
functions of $\Lz{i}$ and $\pz{i}$.  The autocorrelations
$\rhoz{11}(h)$ of the marginal time series $\TSR{\Yz{1,t}}{t\in\ZZ}{}$
was computed in \lgsdRef{app:fig:trigonometric}, and the corresponding
result for the cross-correlation $\rhoz{12}(h)$ is given in
\cref{P2.app:fig:trigonometric} in the Supplementary Material.  For
the purpose of the present section, it is sufficient to know that it
is possible to find parameter-configurations for which the global
spectrum (based on the pseudo-normalised observations) is rather flat
(when truncated at $m=10$), which implies that it cannot detect the
frequencies $\alphaz{i}$ of the underlying structure.

Note that neither $\fz{12}(\omega)$ nor $\lgcsd{12}{\LGp}{\omega}$ are
well defined for the \textit{bivariate local trigonometric} times
series, but this is not important since it still is possible to
{predict} (cf.\ \cref{P2.app:fig:trigonometric} for details) that the
$m$-truncated estimates $\hatlgcsdM{12}{\LGp}{\omega}{m}$ for some
points $\LGp$ (and a given bandwidth $\bm{b}$) should resemble
\cref{fig:Bivariate_global_cosine} --- and this can be used, cf.\
\cref{fig:Bivariate_local_trigonometric_A,fig:Bivariate_local_trigonometric_C},
to test the sanity of the implemented estimation algorithm.  These
plots also reveal that the local Gaussian versions of the \Co-, \Quad-
and \Phase-plots can detect local properties that goes undetected by
the ordinary version of these spectra.

\textbf{Parameter setup:} The models for the two time series presented
in this section are based on an extension of the univariate case seen
in \JT, that is, both have \mbox{$r=4$}
components, the probabilities $\pz{i}$ are given by
$(0.05, 1/3-0.05, 1/3, 1/3)$, the frequencies $\alphaz{i}$ are given
by $(0.267, 0.091, 0.431, 0.270)$, the base-lines $\Lz{i}$ are given
by the values $(-2, -1, 0, 1)$, and the lower and upper ranges for the
uniforms sampling of the amplitudes $\Az{i}(t)$ are respectively given
by {$(0.5, 0.2, 0.2, 0.5)$} and {$(1.0, 0.5, 0.3, 0.6)$}.  Note that
$\Lz{i}$ and $\Az{i}(t)$ should be selected in order to give a minimal
amount of overlap between the different components, cf.\
\cref{P2.app:fig:trigonometric} for further details.

The distinction between the two models are due to the selection of the
additional phase-adjustments $\thetaz{i}$.  The model investigated in
\cref{fig:heatmap_co_quad_dmt_bivariate_constant_phases,fig:heatmap_co_quad_dmt_bivariate_different_phases,fig:heatmap_distance_plot_EuStockMarkets_DAX_CAC}
have a constant phase adjustment of \mbox{$\theta = \pi/3$}, whereas
the model investigated in
\cref{fig:heatmap_co_quad_dmt_bivariate_different_phases,fig:Bivariate_local_trigonometric_C}
have individual phase-adjustments given as
\mbox{$\parenR{\thetaz{1}, \thetaz{2}, \thetaz{3},
    \thetaz{4}}=\parenR{\pi/3, \pi/4, 0, \pi/2}$}.

To complete the specification of the setup, note that the
90\% pointwise confidence intervals in
\cref{fig:Bivariate_local_trigonometric_A,fig:Bivariate_local_trigonometric_C}
all are based on 100 independent samples
of length 1859 from the above described
models, and that the bandwidth
\mbox{$\bm{b} = \parenR{0.6,0.6}$}
was used in the computation of the local Gaussian cross-correlations.

\textbf{Constant phase adjustment:} The case where the phase
difference \mbox{$\theta = \pi/3$} was used for all the $\thetaz{i}$
is investigated in
\cref{fig:heatmap_co_quad_dmt_bivariate_constant_phases,fig:Bivariate_local_trigonometric_A}.
This example has been created in order to test the sanity of the
estimation algorithm, and it is (by construction) known in advance
that some points $\LGp$ will be of interest to inspect.  In
particular, the heuristic arguments (cf.\
\cref{P2.app:fig:trigonometric.heuristic.argument} in the
Supplementary Material) imply that the resulting local Gaussian
spectra
should look a bit like those encountered in
\cref{fig:Bivariate_global_cosine} for the three designated points
\texttt{10\%::10\%}, \texttt{50\%::50\%} and \texttt{90\%::90\%}
(which turns out to be the case, cf.\
\cref{fig:Bivariate_local_trigonometric_A}).

For a general case it will be necessary with a strategy that can help
identify interesting regions, and for this purpose an adjusted version
of the combined \textit{heatmap and distance plots} introduced in \JT
will be used, as seen in
\cref{fig:heatmap_co_quad_dmt_bivariate_constant_phases} where the
point $\LGp$ varies along the diagonal.  The heatmap-part of the plot
must be adjusted a bit since the estimated $m$-truncated local
Gaussian cross-spectrum $\hatlgcsdM{k\ell}{\LGp}{\omega}{m}$ is a
complex-valued entity, and the solution seen in
\cref{fig:heatmap_co_quad_dmt_bivariate_constant_phases} shows a
decomposition into the \Co- and \Quad-plots.  It is of course possible
to use a polar decomposition into \Amplitude- and \Phase-plots too,
but it seems to be a bit easier to digest the information from the
\Co- and \Quad-plots, cf.\
\cref{P2.app:Extension_to_the_multivariate_case} in the Supplementary
Material for further details.  The distance-part of these plots can,
as explained in \cref{P2.app:method_for_sensitivity_analysis}, help an
investigator see how far the time series of interest is from being
i.i.d.\ observations~--- and it enables a comparison with the global
spectrum that is not possible only based on the heatmap-plots.

Keep in mind that the pseudo-normalisation step in
\cref{def:lgsd_estimator} implies that the plots related to
$\hatlgcsdM{k\ell}{\LGp}{\omega}{m}$ only reveals information about
the cross-temporal interdependency structures of the sample.  An
investigator must thus use some supplementing technique in order to
extract information from the marginal distributions.

\begin{figure}[h]
  {\centering
    \includegraphics[width=1\linewidth]{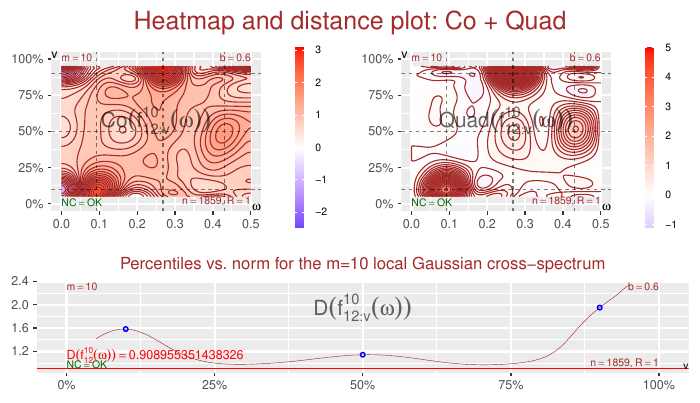}}
  \caption{\Co- and \Quad-heatmaps, distance-plot:  Sample
    from the \textit{bivariate local trigonometric} model in
    \cref{eq:lgch:Y1Y2_local_trigonometric,eq:lgch:local_cosines},
    constant phase-changes $\thetaz{i}=\pi/3, i=1,2,3,4$.  This shows
    how $\hatlgcsdM{k\ell}{\LGp}{\omega}{m}$ varies along the
    diagonal-points $\LGp$. The frequencies $\alphaz{i}$ shown as
    vertical lines.  The points used in
    \cref{fig:Bivariate_local_trigonometric_A} have been indicated
    with lines/points.  }
  \label{fig:heatmap_co_quad_dmt_bivariate_constant_phases}
\end{figure}

The contours in the heatmap plots seen in
\cref{fig:heatmap_co_quad_dmt_bivariate_constant_phases} reveal that
different peaks occur at different combinations of points $\LGp$ and
frequencies $\omega$, in agreement with the heuristic argument that
motivated the construction of this example.  Together with the
distance-plot part, which shows how much the $m$-truncated local
Gaussian cross-spectrum differs from the corresponding $m$-truncated
ordinary cross-spectrum, this shows that it could be of interest to
take a closer look at the three points investigated in
\cref{fig:Bivariate_local_trigonometric_A}.

\begin{figure}[h]
  
  {\centering
    \includegraphics[width=1\linewidth]{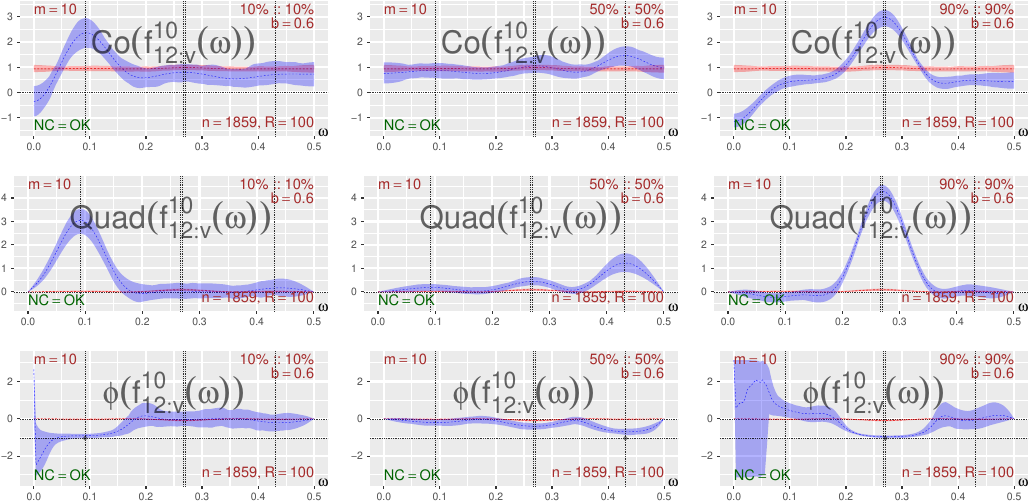}  }
  \caption{\Co-, \Quad- and \Phase-plots for three diagonal points:
    The \textit{bivariate local trigonometric} model in
    \cref{eq:lgch:Y1Y2_local_trigonometric,eq:lgch:local_cosines},
    constant phase-changes $\thetaz{i}=\pi/3, i=1,2,3,4$ shown as
    sign-adjusted horizontal lines.  The frequencies $\alphaz{i}$
    shown as vertical lines.  The local Gaussian spectra
    detect structures that are not detected by the ordinary
    spectrum.}

\label{fig:Bivariate_local_trigonometric_A}
\end{figure}

The three points investigated in
\cref{fig:Bivariate_local_trigonometric_A} correspond to the
function-components in \cref{eq:lgch:local_cosines} with indices
\mbox{$i=2,3,4$}.  The point that corresponds to the \mbox{$i=1$}
component would, due to the combination of the low probability
$\pz{1}$ and the placement of the level $\Lz{1}$, lie too far out in
the tail to be properly investigated by the present sample size.  Note
that the \mbox{$i=1$} component has been included here in order to
show that an exploratory approach based on local Gaussian spectra can
fail to detect local signals that are much weaker than the dominating
ones, cf.\ the discussion in \cref{P2.app:fig:trigonometric}.

For the points investigated in
\cref{fig:Bivariate_local_trigonometric_A}, it seems to be the case
that the local Gaussian part of the \Co-, \Quad- and \Phase-plots
together reveal local properties in accordance with the outcome
expected from the knowledge of the generating model~--- and these
local structures are not detected by the ordinary global spectra,
which in this case (due to the values used for $\Lz{i}$ and $\pz{i}$)
are quite close to being flat (i.e., information about the specified
frequencies can not be extracted from the global spectrum, cf.\
\cref{P2.app:fig:trigonometric.general.properties}).  The left column
investigates a point at the lower tail of the diagonal, and it can
there be observed that both the \Co- and \Quad-plots have a peak close
to the leftmost $\alpha$-value~--- and the value of the corresponding
\Phase-plot for frequencies close to this $\alpha$-value lies quite
close to the phase difference between the first and second component.
A similar situation is present for the three plots shown in the right
column, where a point at the upper tail of the diagonal are
investigated.  Moreover, in accordance with the general discussion
related to \cref{note:lgch:Phase_Co_Quad_connection_equations},
the peaks of the \Quad-plots are higher than those of the \Co-plots in
this case due to the phase-difference $\theta$ that was used in the
input~parameters.

For the center column of \cref{fig:Bivariate_local_trigonometric_A},
which investigates the point at the center of the diagonal, it can be
seen that the \Quad- and \Phase-plots in addition to the expected
$\alpha$-value also detects the presence of the other $\alpha$-values.
The \Phase-plot is for this point not that close to the expected
value, but that situation changes if the truncation is performed at a
higher lag than \mbox{$m=10$}.  The center column thus shows the
importance of considering a range of values for the truncation point
when such plots are investigated.  The additional peaks that are
detected in the center column are due to \textit{contamination} from
the neighbouring regions.

\textbf{Individual phase adjustment:} In this example, the samples
have been generated with the following individual phase-adjustments:
\mbox{$\parenR{\thetaz{1}, \thetaz{2}, \thetaz{3},
    \thetaz{4}}=\parenR{\pi/3, \pi/4, 0, \pi/2}$}.  The
\textit{heatmap and distance}-plot seen in
\cref{fig:heatmap_co_quad_dmt_bivariate_different_phases} reveals once
more, in agreement with the way the example has been constructed, that
it is natural to look at the three points shown in
\cref{fig:Bivariate_local_trigonometric_C}.  The horizontal lines seen
in the \Phase-plots part of \cref{fig:Bivariate_local_trigonometric_C}
show the $\thetaz{i}$-values (adjusted to have the correct sign), and
for each of the designated points the intersection with the relevant
vertical $\alphaz{i}$-line has been highlighted to show the expected
outcome based on the knowledge of the model.

\begin{figure}[h]
  {\centering
    \includegraphics[width=1\linewidth]{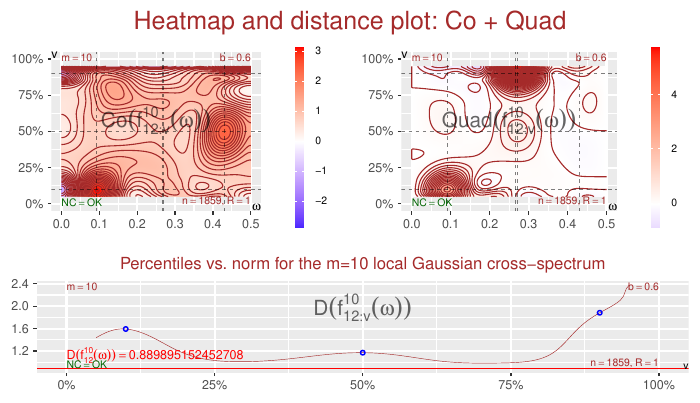}}
  \caption{\Co- and \Quad-heatmaps, distance-plot:  Sample from
    the \textit{bivariate local trigonometric} model in
    \cref{eq:lgch:Y1Y2_local_trigonometric,eq:lgch:local_cosines},
    phase-changes
    $\parenR{\thetaz{1}, \thetaz{2}, \thetaz{3},
      \thetaz{4}}=\parenR{\pi/3, \pi/4, 0, \pi/2}$.  This shows how
    $\hatlgcsdM{k\ell}{\LGp}{\omega}{m}$ varies along the
    diagonal-points $\LGp$. The frequencies $\alphaz{i}$ shown as
    vertical lines.  The points used in
    \cref{fig:Bivariate_local_trigonometric_C} have been indicated
    with lines/points.  }
  \label{fig:heatmap_co_quad_dmt_bivariate_different_phases}
\end{figure}

The \Co-, \Quad- and \Phase-plots in
\cref{fig:Bivariate_local_trigonometric_C} behave in accordance with
what was observed in \cref{fig:Bivariate_local_trigonometric_A}, i.e.,
the \Phase-plots lies close to the expected $\thetaz{i}$-value when
the frequency $\omega$ is near the corresponding $\alphaz{i}$-value,
and the height of the corresponding \Co- and \Quad-peaks are in
accordance with the values of the \Phase-plots.  In particular, the
phase-adjustment is \mbox{$\thetaz{2} = \pi/4$} for the point
\texttt{10\%::10\%}, which implies that the \Co- and \Quad-peaks
should rise approximately to the same height above their respective
baselines, which seems to be fairly close to the observed result.  For
the points \texttt{50\%::50\%} and \texttt{90\%::90\%} the situation
is clearer since the respective local frequencies
\mbox{$\thetaz{3} = 0$} and \mbox{$\thetaz{4} = \pi/2$} then implies
that only the \Co-plot should have a peak for the point
\texttt{50\%::50\%} and only the \Quad-plot should have a peak for the
point \texttt{90\%::90\%}, again in agreement with the impression
based on~\cref{fig:Bivariate_local_trigonometric_C}.  The global
spectrum in \cref{fig:Bivariate_local_trigonometric_C} (red line) does
not reveal any of this local spectral structure.

\begin{figure}[h]

{\centering \includegraphics[width=1\linewidth]{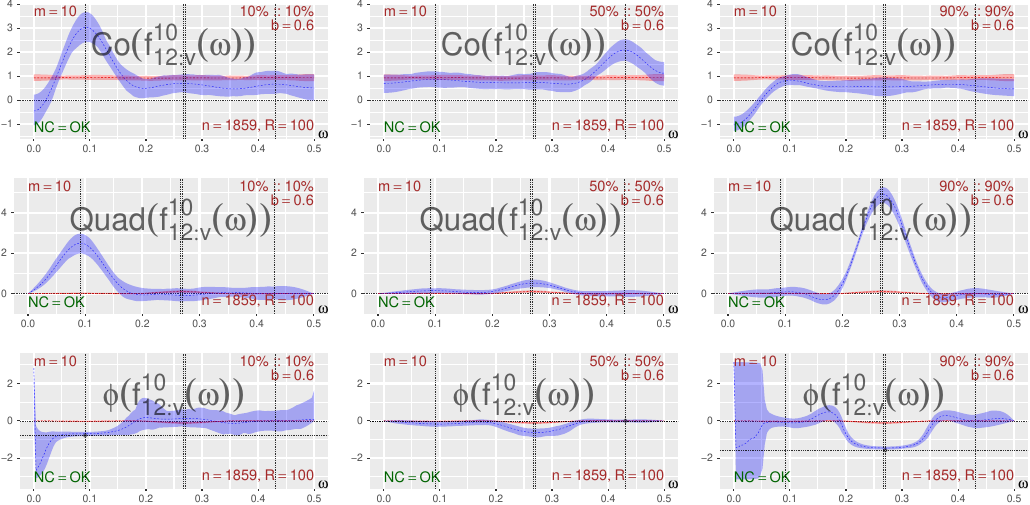} 

}

  \caption{\Co-, \Quad- and \Phase-plots for three diagonal points:
    The \textit{bivariate local trigonometric} model in
    \cref{eq:lgch:Y1Y2_local_trigonometric,eq:lgch:local_cosines},
    phase-changes
    $\parenR{\thetaz{1}, \thetaz{2}, \thetaz{3},
      \thetaz{4}}=\parenR{\pi/3, \pi/4, 0, \pi/2}$, shown as
    sign-adjusted horizontal lines.  The frequencies $\alphaz{i}$
    shown as vertical lines.  The local Gaussian spectra
    detect structures that are not detected by the ordinary
    spectrum.}

\label{fig:Bivariate_local_trigonometric_C}
\end{figure}

The examples investigated in
\cref{fig:Bivariate_global_cosine,fig:heatmap_co_quad_dmt_bivariate_constant_phases,fig:Bivariate_local_trigonometric_A,fig:heatmap_co_quad_dmt_bivariate_different_phases,fig:Bivariate_local_trigonometric_C}
do not satisfy the requirements needed for the asymptotic results
(both for the global and local cases) to hold true, in particular the
local Gaussian cross-correlations will in these cases not be
absolutely summable.\footnote{ In this respect the situation is
  similar to the detection of a pure sinusoidal for the global
  spectrum.}  Despite this, the examples are still of interest since
they show that an exploratory tool based on the local Gaussian spectra
in this case (with a truncation at $m=10$) can detect information that
is in agreement with what could be expected based on the parameters
used in the models for $\parenR{\Yz{1,t},\Yz{2,t}}$.

The underlying model will of course not be known when a real
multivariate time series is encountered, so it is important to
estimate the local Gaussian cross-spectrum at a wide range of points
$\LGp$ in the plane and a wide range of truncation levels~$m$.  This
kind of investigation can be done by the \textit{heatmap and
  distance}-plots seen in
\cref{fig:heatmap_co_quad_dmt_bivariate_constant_phases,fig:heatmap_co_quad_dmt_bivariate_different_phases,fig:heatmap_distance_plot_EuStockMarkets_DAX_CAC}.

Even when it might not be obvious how to interpret the results shown
in the \Co-, \Quad- and \Phase-plots, it should be noted that they can
be used as an exploratory tool that can detect nonlinear traits in the
observations.  Moreover, these plots can also be used to investigate
if a model fitted to the data contains elements that can mimic the
observed features.  The recipe for this approach would then be to
first select a model, then estimate parameters based on the available
sample, and finally use the resulting fitted model to generate
independent samples of the same length as the sample.
\Cref{sec:lgch:EuStockMarkets+cGARCH} will show an example of this
approach, cf.\ \cref{fig:EuStockMarkets,fig:cGARCH}.  Further details
related to this are given in
\cref{P2.app:resampling_parametric_bootstrap}.

\subsection{A real multivariate time series and a poorly fitted
  GARCH-type model}
\label{sec:lgch:EuStockMarkets+cGARCH}

This section will show how the \Co-, \Quad- and \Phase-plots can be
used as an exploratory tool on a financial data set, and then it will
be seen how this approach also can be used to get a visual impression
of the quality of a multivariate GARCH-type model fitted to
these~data.  

The multivariate time series sample to be considered in this section
will be a bivariate subset of the (log-returns of the) tetravariate
\EuStockMarkets-sample from the \texttt{datasets}-package of \Rref{R},
\citet{R_manual}.  This data-set has been selected since it has a
length that should be large enough to justify the assumption that the
observed features in the \Co-, \Quad- and \Phase-plots are not solely
there due to small sample variation.

The \EuStockMarkets contains 1860 daily closing prices collected in
the period 1991-1998, from the following four major European stock
indices: Germany DAX (Ibis), Switzerland SMI, France CAC, and UK FTSE.
The data was sampled in business time, i.e., weekends and holidays
were omitted.

The log-returns of the \EuStockMarkets values gives a tetravariate
data-set that it seems natural to model with some multivariate
GARCH-type model, and the \Rpackage \Rref{rmgarch},
\citet{rmgarch_Ghalanos}, was used for that purpose.  Note that the
present paper only aims at showing how this kind of analysis can be
performed, so only one very simple model was investigated --- which
thus gave a rather poor model for the data at hand.

The local Gaussian approach presented here is based on a
pseudo-normalisation of the marginal distributions.  It is thus in
essence properties of the copula-structures of the time series that
are revealed here, and a practitioner should supplement this approach
by methods that also extracts information from the original marginals.

\subsubsection{The DAX-CAC subset of the \EuStockMarkets-log-returns}
\label{sec:lgch:EuStockMarkets}

The log-returns of the bivariate \EuStockMarkets-subset
\mbox{$\parenR{\Yz{1},\Yz{3}}=\parenR{\text{DAX},\text{CAC}}$}, of
length 1859, will now be investigated.  The
individual pseudo-normalised traces of these observations are shown in
\cref{fig:EuStockMarkets_YiYj}, and it will be from these
pseudo-normalised observations that the local Gaussian
cross-correlations will be computed.

\begin{figure}[h]

{\centering \includegraphics[width=1\linewidth]{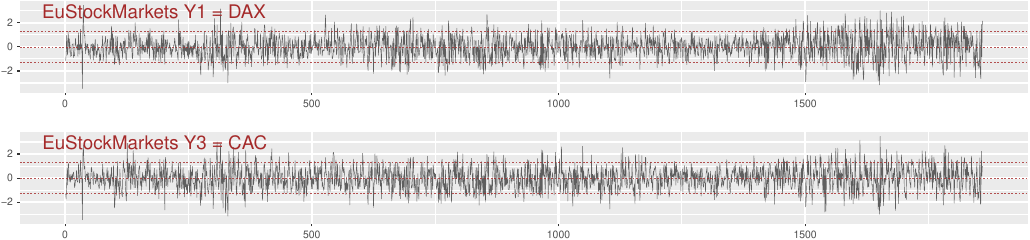} 

}

\caption{The pseudo-normalised log-reurns of the DAX- and
  CAC-components from the \EuStockMarkets-data, which will be
  investigated further in
  \cref{fig:heatmap_distance_plot_EuStockMarkets_DAX_CAC,fig:EuStockMarkets_lags200,fig:EuStockMarkets}. }

\label{fig:EuStockMarkets_YiYj}
\end{figure}

A local Gaussian investigation requires that some points $\LGp$ must
be selected, and thereafter an in depth investigation can be performed
for the selected points.  The \textit{heatmap} and \textit{distance}
plots can be useful for the task of identifying interesting points
along the diagonal, as seen in the
\cref{fig:heatmap_distance_plot_EuStockMarkets_DAX_CAC} where the
$m=10$ truncated \Co- and \Quad-spectra are presented for the DAX- and
CAC-components of the log-returns of the \EuStockMarkets-data.

\begin{figure}[h]

{\centering \includegraphics[width=1\linewidth]{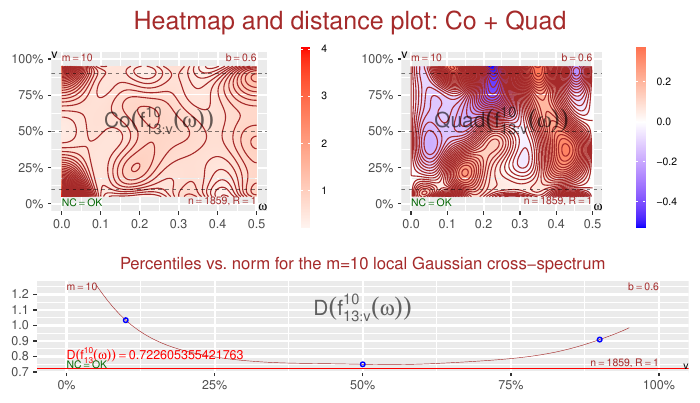} 

}
\caption{\Co- and \Quad-heatmaps, distance-plot: Based on
  \mbox{$\parenR{\Yz{1,t+h},\Yz{3,t}}$} with $\Yz{1,t+h}$ and
  $\Yz{3,t}$ being respectively the DAX- and CAC-components shown in
  \cref{fig:EuStockMarkets_YiYj}. 
  Asymmetric situation, with higher values in the lower tail.  The
  diagonal points used in
  \cref{fig:EuStockMarkets_lags200,fig:EuStockMarkets} have been
  indicated with lines/points.  }

\label{fig:heatmap_distance_plot_EuStockMarkets_DAX_CAC}
\end{figure}

The scale-indicators for the heatmap-plots in
\cref{fig:heatmap_distance_plot_EuStockMarkets_DAX_CAC} reveal that it
is the \Co-spectrum that dominates, and it is clear that the situation
in the tails are different from the one in the center.  The asymmetry
between the upper and lower tail is not immediate to see from the
heatmap-part of
\cref{fig:heatmap_distance_plot_EuStockMarkets_DAX_CAC}, but it is
easy to see from the distance-part of the plot that the local Gaussian
interdependency structure is stronger in the lower tail.

From the information in
\cref{fig:heatmap_distance_plot_EuStockMarkets_DAX_CAC}, it can now be
seen that it could be of interest to
compare the behaviour in the tails with that at the center.  The
selected points $\LGp$ have been highlighted in
\cref{fig:heatmap_distance_plot_EuStockMarkets_DAX_CAC} by the help of
added lines (the heatmap-part) and circles (the distance-part).  These
points $\LGp$ are the same as those used for
\cref{fig:Bivariate_local_trigonometric_A,fig:Bivariate_local_trigonometric_C},
i.e., the diagonal points $\LGp$ whose coordinates correspond to the
10\%, 50\% and 90\% percentiles of the standard normal distribution.

The estimates of the local Gaussian cross-correlations might
degenerate toward $+1$ or $-1$ if the points $\LGp$ are too far out in
the tails.  It is thus of interest to check the behaviour of these
estimates for the given sample before the computationally costly
production of pointwise confidence intervals (based on resampling) is
undertaken.  The result of this investigation of the estimated local
Gaussian cross-correlations $\lgccr{k\ell}{\LGp}{h}$ is seen in
\cref{fig:EuStockMarkets_lags200}, where a wide range of values for
$h$ has been included.  The observed values of
$\hatlgccr{k\ell}{\LGp}{h}$ indicate that degeneration of the
estimates are not a problem for these points.

\begin{figure}[h]

{\centering \includegraphics[width=1\linewidth]{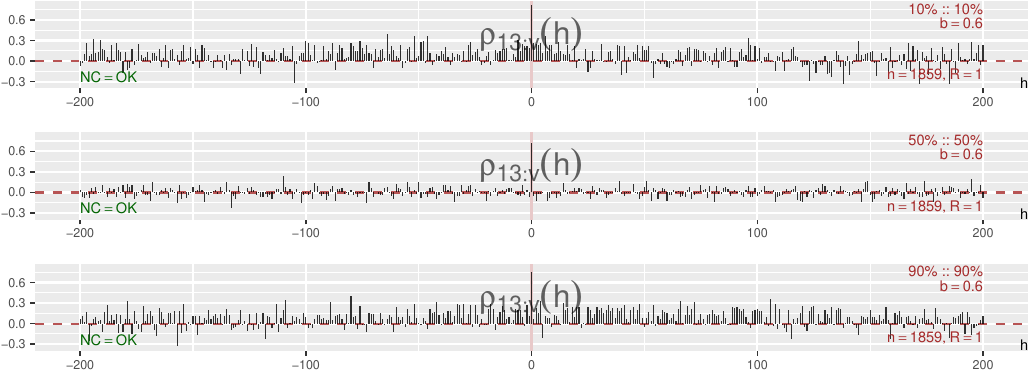} 

}

\caption{Local Gaussian cross-correlations $\hatlgccr{k\ell}{\LGp}{h}$
  for three diagonal points: Between
  \mbox{$\parenR{\Yz{1,t+h},\Yz{3,t}}$}, where $\Yz{1,t+h}$ and
  $\Yz{3,t}$ are the DAX- and CAC-components shown in
  \cref{fig:EuStockMarkets_YiYj}.  }

\label{fig:EuStockMarkets_lags200}
\end{figure}

It seems to be the case that the local Gaussian spectrum can detect
nonlinear structures even with a rather low truncation level $m$, but
the value $m=10$ used in this paper is solely selected in order to
present a proof-of-principle.  For an actual investigation it would be
natural to investigate a wider range of different lags $h$, like seen
in \cref{fig:EuStockMarkets_lags200}, in order to check what kind of
behaviour that is observed.  The interested reader will find a
sensitivity analysis of $m$ in
\cref{P2.app:Truncation_level_sensitivity}, which is based on the
values seen in \cref{fig:EuStockMarkets_lags200}.

\Cref{fig:EuStockMarkets} shows the \Co-, \Quad-, and \Phase-spectra
obtained from the $m=10$ truncated global and local spectra for the
three selected diagonal points.  A solid red line represents the
estimate of the ordinary cross-spectrum $\fz[m]{k\ell}(\omega)$,
whereas a solid blue line represent the estimate of the local Gaussian
cross-spectrum $\lgcsdM{k\ell}{\LGp}{\omega}{m}$.  The 90\% pointwise
confidence intervals have been created based on $R=100$ block-bootstrap
replicates using a block length of $L=25$ for the \textit{circular
  index-based block bootstrap for tuples} resampling strategy
developed in \JT.  Details related to this resampling strategy are
given in \cref{P2.app:regarding_resampling} in the Supplementary
Material, and the sensitivity analysis in
\cref{P2.app:Block_length_sensitivity} reveals that the results in
this case
are stable over a wide range of block lengths.

\begin{figure}[h]

{\centering \includegraphics[width=1\linewidth]{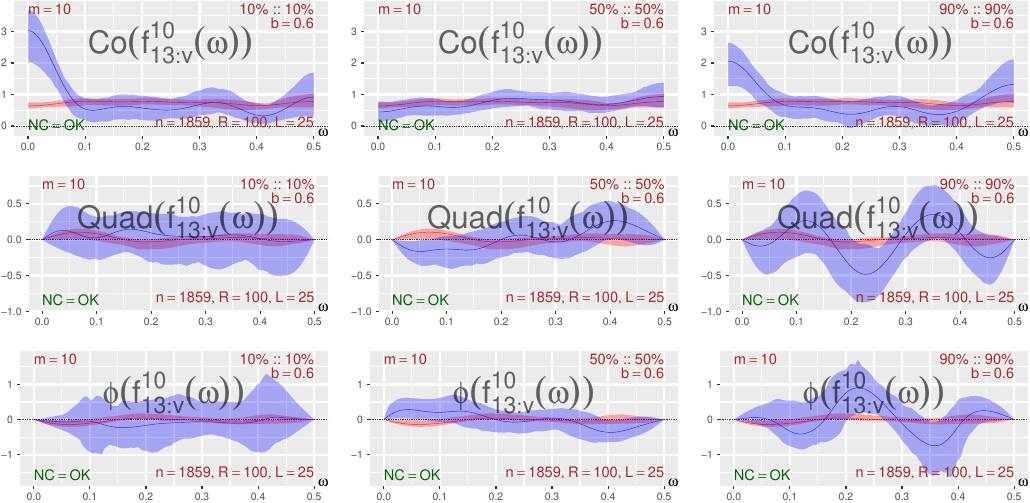} 

}

\caption{\Co-, \Quad- and \Phase-plots for three diagonal points:
  Between \mbox{$\parenR{\Yz{1,t+h},\Yz{3,t}}$}, where $\Yz{1,t+h}$
  and $\Yz{3,t}$ are the DAX- and CAC-components shown in
  \cref{fig:EuStockMarkets_YiYj}.  The local Gaussian spectra detect
  structures that are not detected by the ordinary spectrum.}

\label{fig:EuStockMarkets}

\end{figure}

The three points considered in \cref{fig:EuStockMarkets} all lie on
the diagonal, since it seems easier to give an interpretation for
those points.  In particular, the point \texttt{10\%::10\%} represent
a situation where the market goes down both in Germany and France,
whereas the points \texttt{50\%::50\%} and \texttt{90\%::90\%}
similarly represent cases where the market either is stable or goes~up
in both countries.
For the purpose of the present paper, it suffices to point out that
the \Co-, \Quad- and \Phase-plots of \cref{fig:EuStockMarkets}
indicates that the data contains nonlinear traits, which in particular
is visible in the \Co-plot for the point \texttt{10\%::10\%} and for
all the plots related to the point \texttt{90\%::90\%}.  It should be
noted that the \Co-plots at the frequency \mbox{$\omega=0$} simply
gives a weighted sum of the local Gaussian cross-correlations (between
\mbox{$\parenR{\Yz{1,t+h},\Yz{3,t}}$}) seen in
\cref{fig:EuStockMarkets_lags200}, so the \Co-plot peaks at
\mbox{$\omega=0$} for the points \texttt{10\%::10\%} and
\texttt{90\%::90\%} are thus as expected, and the lack of a \Co-plot
peak at \mbox{$\omega=0$} for the point \texttt{50\%::50\%} also seems
natural in view of \cref{fig:EuStockMarkets_lags200}.  It should also
be noted that the \mbox{$\omega=0$} peak for the \texttt{90\%::90\%}
\Co-plot is lower than the corresponding peak for \texttt{10\%::10\%},
but this seems in this case to be due to the low truncation level used
for the plots, i.e., these two peaks attain approximately the same
height when a higher truncation level is applied.

It seems, for the particular parameter-configuration that generated
the plots in \cref{fig:EuStockMarkets}, to be the case that the point
\texttt{90\%::90\%} has the most interesting \Quad- and \Phase-plots,
but again, as noted above, it may be premature to put too much
emphasis on this particular plot given the uncertainties involved in
the selection of the bandwidth~$\bm{b}$ and the truncation level~$m$.
As mentioned before, the effect of changes to the block length $L$,
are quite minimal, cf.\ the discussion in
\cref{P2.app:Block_length_sensitivity} in the Supplementary Material.

It should also be noted that a low number of bootstrapped replicates
can be a source of small sample variation for the width of the
estimated pointwise confidence intervals, and this is important to
keep in mind if a minor gap is observed between the pointwise
confidence intervals for the local and global spectra.  Such gaps
could appear or disappear when the algorithm is used to generate new
computations based on the same number of bootstrapped replicates,
a behaviour that in particular has been observed for the rightmost
peak/trough of the \Quad- and \Phase-plots at the point
\texttt{90\%::90\%} in \cref{fig:EuStockMarkets}. %
This kind of ambiguity can be countered by increasing the number of
bootstrapped replicates, but that has not been done for the present
example due to the increased computational cost.

The peak at the center of the \texttt{90\%::90\%} \Phase-plot in
\cref{fig:EuStockMarkets} seems to be significantly different from the
global spectrum, which strengthens the impression that something of
interest might be present at that frequency.  However, it is important
to keep in mind that this impression is based on the present
combination of bandwidth and truncation level~--- and there are at the
moment no data-driven method for the selection of these parameters.
(A positive phase difference is consistent with the
\texttt{90\%::90\%} cross-correlation plot in
\cref{fig:EuStockMarkets_lags200}, which might indicate that the DAX
is leading over CAC when the market is going~up.)

Note that the \Rref{shiny}-interface in the \Rpackage \lgsdRpackage\
should be used if it is of interest to pursue a further analysis of
the local Gaussian spectra of the log-returns of the
\EuStockMarkets-data, since that enables an interactive investigation
that shows how the estimates vary based on different
bandwidths~$\bm{b}$ and truncation levels~$m$.  Moreover, as discussed
in \cref{P2.app:Block_length_sensitivity}, it is also possible to
check how the selection of the block length $L$ for the
bootstrap-procedure developed in \JT influences
the widths of the pointwise confidence intervals in the \Co-, \Quad-
and \Phase-plots.

\subsubsection{A simple copula GARCH-model fitted to the
  \EuStockMarkets log-returns}
\label{sec:lgch:cGARCH}

It might not be obvious how to interpret the \Co-, \Quad- and \Phase
spectra based on the log-returns of the \EuStockMarkets-data, but they
do at least provide an approach where nonlinear dependencies might be
detected from a visual inspection of the plots.

Furthermore, it is possible to use this as an exploratory tool in
order to investigate whether a model fitted to the original data is
capable of reproducing nonlinear traits that match those observed for
the data.  The procedure is straightforward: 

\begin{enumerate}
\item   \label{algorithm:lgcd:fit_model}
  Fit the selected model to the data.
\item   \label{algorithm:lgcd:select_parameters} 
  Perform a local Gaussian spectrum investigation based on simulated
  samples from the fitted model.  The parameters should match those
  used in the investigation of the original data.
\item   Compare the plots based on the original data with corresponding
  \label{algorithm:lgcd:show_plots} 
  plots based on the simulated data from the model.  It can be of
  interest to not only compare the \Co-, \Quad- and \Phase-plots, but
  also include plots that show the traces and the estimated local
  Gaussian auto- and cross-spectra.
\end{enumerate}

For the present case of interest,
\cref{algorithm:lgcd:select_parameters} of the list above implies that
100 independent samples of length 1859 will be used as the basis for
the construction of the \Co-, \Quad- and \Phase-plots of the fitted
model, and the bandwidth \mbox{$\bm{b} = \parenR{0.6,0.6}$} will be
used for the estimation of the local Gaussian cross-correlations at
the three diagonal points \texttt{10\%::10\%}, \texttt{50\%::50\%}
and~\texttt{90\%::90\%}.

\textbf{The model:} The \Rpackage \Rref{rmgarch} was used to fit a
simple multivariate GARCH-type model to the log-returns of the
\EuStockMarkets-data, in order to exemplify the procedure outlined
above, i.e., a copula GARCH-model (cGARCH) with the simplest available
univariate models for the marginals\footnote{ See
  \citet{ghalanos15:_rmgarch} for details about the cGARCH-model and
  other options available in the \Rref{rmgarch}-package.} was fitted
to the data, and the resulting model was then used to produce
\cref{fig:cGARCH_YiYj,fig:cGARCH_lags20,fig:cGARCH}.  See
\cref{P2.app:fig:GARCH} for further details.  Note that this simple
model was selected simply in order to provide a proof-of-principle
example for the investigations encountered later on.  It was in
particular of interest, as discussed in
\cref{P2.app:resampling_parametric_bootstrap}, to use a \enquote{too
  simple model} in order to highlight the ideas related to the local
Gaussian sanity testing of parametric models fitted to a sample.

\textbf{The traces:} \Cref{fig:cGARCH_YiYj} shows the pseudo-normalised trace of the
$\Yz{1}$- and $\Yz{3}$-variables for one sample from the tetravariate
cGARCH-model, and this can be compared with the corresponding
pseudo-normalised trace of the DAX and CAC plot for the
pseudo-normalised log-returns of the \EuStockMarkets-data, see
\cref{fig:EuStockMarkets_YiYj}.  Obviously, a comparison of
one single simulated trace with the trace of the original data might
not reveal much, and it should also be noted that it in general might
be preferable to compare the traces before they are subjected to the
pseudo-normalisation, since that could detect if the model might fail
to produce sufficiently extreme outliers.

\begin{figure}[h]

{\centering \includegraphics[width=1\linewidth]{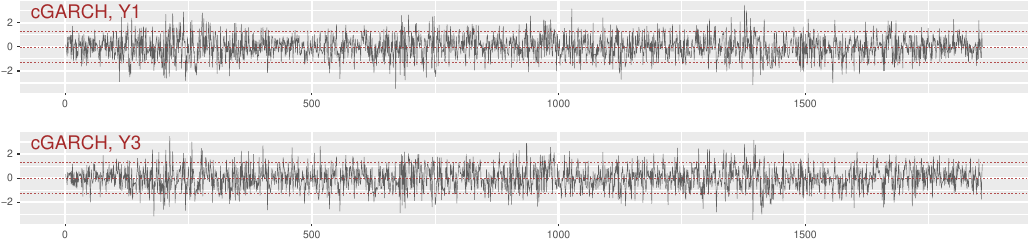} 

}

\caption{The pseudo-normalised pairs
  \mbox{$\parenR{\Yz{1,t+h},\Yz{3,t}}$} from a sample from the cGARCH
  model fitted to the \EuStockMarkets-data, where $\Yz{1,t+h}$ and
  $\Yz{3,t}$ correspond to the DAX- and CAC-components shown in
  \cref{fig:EuStockMarkets_YiYj}.  }

\label{fig:cGARCH_YiYj}
\end{figure}

\textbf{The local Gaussian correlations:} %
Box-plots, based on the 100 independent estimates of the local
Gaussian cross-correlations from the cGARCH-model, are shown in
\cref{fig:cGARCH_lags20}.  These can be compared with the local
Gaussian cross-correlations estimated from the original sample, shown
in \cref{fig:EuStockMarkets_lags200}.  It should be noted that the
computational cost for the production of the box-plots in
\cref{fig:cGARCH_lags20} is substantially larger than the cost for the
production of the simpler plots shown in
\cref{fig:EuStockMarkets_lags200}, so it is preferable to restrict the
attention to a shorter range of lags in \cref{fig:cGARCH_lags20}.
Note also that the wide range of lags included in
\cref{fig:EuStockMarkets_lags200} is related to the desire for an
initial investigation of the sensitivity of the truncation length $m$,
and it is as seen here possible to judge the suitability of the fitted
model from the shorter range of lags included
in~\cref{fig:cGARCH_lags20}.

\begin{figure}[h]

{\centering \includegraphics[width=1\linewidth]{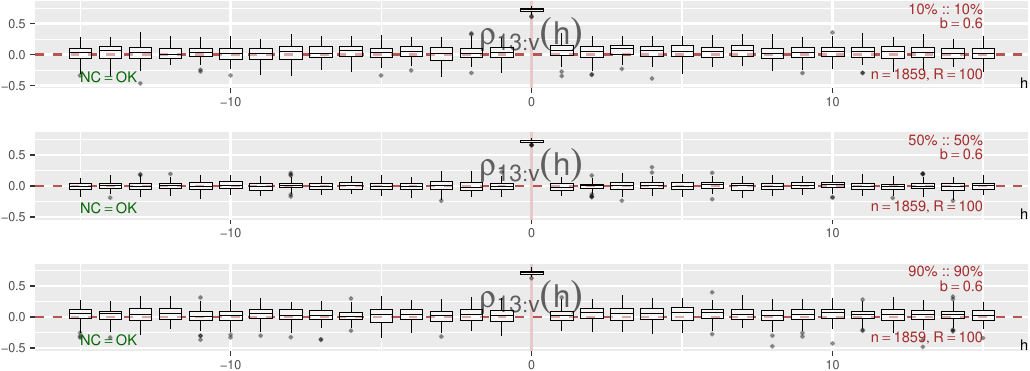} 

}

\caption{Local Gaussian cross-correlations $\hatlgccr{k\ell}{\LGp}{h}$
  for three diagonal points, based on samples of the pseudo-normalised
  pairs \mbox{$\parenR{\Yz{1,t+h},\Yz{3,t}}$} from the cGARCH model
  fitted to the \EuStockMarkets-data, where $\Yz{1,t+h}$ and
  $\Yz{3,t}$ correspond to the DAX- and CAC-components shown in
  \cref{fig:EuStockMarkets_YiYj}.  These values are used for the
  construction of \cref{fig:cGARCH}. }

\label{fig:cGARCH_lags20}
\end{figure}

The impression from the lags included in \cref{fig:cGARCH_lags20} is
that the medians of the estimated local Gaussian cross-correlations
for the point \texttt{50\%::50\%} are quite close to zero, whereas the
medians for the points \texttt{10\%::10\%} and \texttt{90\%::90\%}
mostly are slightly above zero.  Almost none of the boxes for the two
latter points appear to be positioned in a manner consistent with the
desired outcome for a good match with the corresponding estimated
values in \cref{fig:EuStockMarkets_lags200}, and it might thus be
ample reason to suspect that this cGARCH-model might better be
replaced with another model instead.

\textbf{The \Co- \Quad- and \Phase-plots:} \Cref{fig:cGARCH} shows the
local Gaussian spectra for the same points~$\LGp$ and the same
configuration of parameters as those used in
\cref{fig:EuStockMarkets}.

\begin{figure}[h]

{\centering \includegraphics[width=1\linewidth]{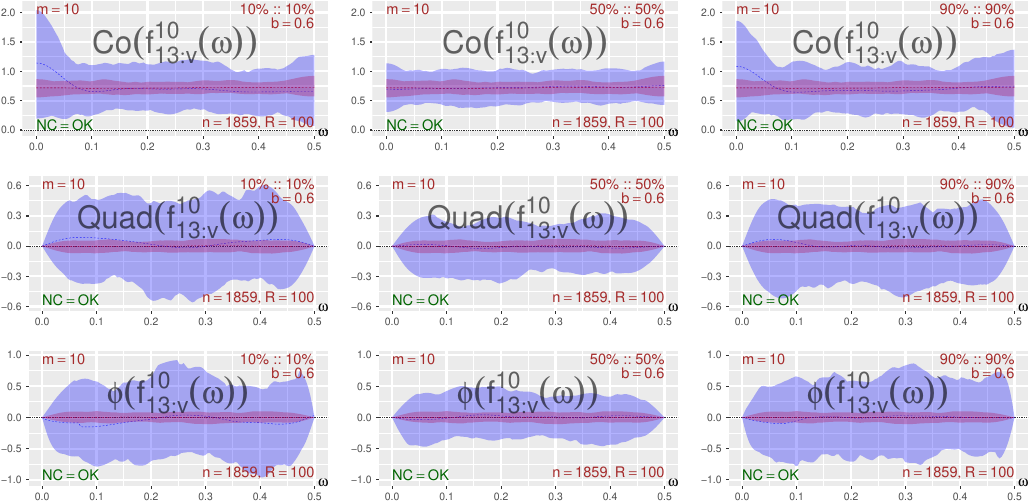} 

}

\caption{\Co-, \Quad- and \Phase-plots for three diagonal points,
  based on samples of the pseudo-normalised pairs
  \mbox{$\parenR{\Yz{1,t+h},\Yz{3,t}}$} from the cGARCH model fitted
  to the \EuStockMarkets-data, where $\Yz{1,t+h}$ and $\Yz{3,t}$
  correspond to the DAX- and CAC-components shown in
  \cref{fig:EuStockMarkets_YiYj}.  A comparison with
  \cref{fig:EuStockMarkets} reveals that this model failed to pick up
  some of the local interdependency-structures seen in the
  \EuStockMarkets-data.}

\label{fig:cGARCH}
\end{figure}

The \Co-, \Quad- and \Phase-plots of
\cref{fig:cGARCH} look like they could originate from i.i.d.\ white
noise~--- which comes as no surprise in view of the information about
the local Gaussian cross-correlations in \cref{fig:cGARCH_lags20}.
The \Co-plots for the two points \texttt{10\%::10\%} and
\texttt{90\%::90\%} show that the estimates of the $m$-truncated local
Gaussian cospectra, i.e., the blue dashed lines, might have a peak at
\mbox{$\omega=0$}~--- but the 90\% pointwise confidence intervals are
too wide to support a claim that these peaks are significant.  A
further comparison of these two \Co-plots with the corresponding
\Co-plots in \cref{fig:EuStockMarkets} (beware of different scales for
the axes), shows that the confidence intervals from \cref{fig:cGARCH}
are too narrow (at \mbox{$\omega=0$}) to encompass the peaks observed
in \cref{fig:EuStockMarkets}~--- which indicates that the selected
model might be a rather bad approximation to the log-returns of the
\EuStockMarkets-data.  It thus seems advisable to look for some other
model instead, a natural conclusion given that no effort whatsoever
was made with regard to finding reasonable marginal distributions for
the copula GARCH-model used in this discussion.

It should be noted that a \textit{local Gaussian spectra comparison}
of the original data and the fitted model in practise also should
include a comparison of the local Gaussian auto-spectra of the
marginals, as was done in \JT.  These
auto-spectra plots (not included in this paper, but see
\cref{P2.app:Point_sensitivity.Revisiting.the.univariate.case} for
some related \textit{heatmap and distance}-plots) can provide some
additional information useful for the model-selection process.  In
particular, if a model-selection algorithm for GARCH-type models has
been used to pick one marginal model from a given collection of
marginal models, then an investigation based on the \textit{local
  Gaussian auto-spectrum} might reveal if the selected marginal model
captures the local traits of the corresponding marginal observations
in a reasonable manner.

The preceding discussion was restricted to three diagonal points, but
the local Gaussian sanity testing of a fitted parametric model can
also be done for points outside of the diagonal.  Such an approach was
discussed for the univariate case in
\lgsdRef{app:resampling_parametric_bootstrap}, and a natural extension
to the multivariate case is presented in
\cref{P2.app:resampling_parametric_bootstrap} in the online
Supplementary Material.

\section{Conclusion}
\label{sec:lgch_conclusion}

The \textit{local Gaussian auto-spectrum} from \JT can, as seen in
this paper, be extended to cover the multivariate case too --- and
estimates of the $m$-truncated \textit{local Gaussian cross-spectrum}
$\lgcsdM{k\ell}{\LGp}{\omega}{m}$ can be used to detect nonlinear
cross-temporal dependencies between the marginals $\Yz{k,t}$ and
$\Yz{\ell,t}$ of a multivariate time series
$\TSR{\bmYz{t}=\parenR{\Yz{1,t},\cdots,\Yz{d,t}}}{t\in\ZZ}{}$.

The resulting \textit{local Gaussian approach to spectral analysis}
can in particular be used to detect if the interdependency structure
of the time series under investigation deviates from being Gaussian.
For time series whose ordinary auto- and cross-spectra are flat, any
peaks and troughs from the \textit{local Gaussian approach} can then
be considered indicators of \textit{local nonlinear traits} and
\textit{local periodicities}.

The $m$-truncated estimates $\hatlgcsdM{k\ell}{\LGp}{\omega}{m}$ can,
as discussed for GARCH-models in
\cref{sec:lgch:EuStockMarkets+cGARCH}, also be of interest with regard
to \textit{local comparisons} of models fitted to a given sample.  In
particular, it is possible to use this approach to check if a model
fitted to the data can reproduce local traits detected in the original
sample, and an investigator can thus use a \textit{local Gaussian
  sanity-test} of the fitted models as a supplement to other model
selection methods.

All the examples in this paper can be reproduced by the help of the
scripts that are included as a part of the \Rpackage
\lgsdRpackage. The interested reader can run these scripts, and then
use the integrated \Rref{shiny}-application from this \Rpackage to
investigate how the resulting plots change when the tuning parameters
in the estimation algorithm are modified.  This enables a much deeper
inspection of the results than the static plots contained in this
paper.  These scripts can also be used as templates for new
investigations, and they can hopefully help any interested readers to
test out the local Gaussian approach to spectral analysis on their own
data.

\section*{Supplementary Material}
The online Supplementary Material contains the appendices.  The
scripts needed for the reproduction of the examples in this paper is
contained in the \Rpackage \lgsdRpackage, cf.\
\cref{app:data_details} for further details.

\clearpage{}

  \putbib     
\end{bibunit}

\begin{bibunit}[elsarticle-harv]
  \spacingset{1.5}

\newpage

\appendix

\fancyhead[C]{Supplementary Material --- {Local Gaussian
    cross-spectrum analysis}} %
\thispagestyle{empty}

\renewcommand\thefigure{\thesection.\arabic{figure}}
\setcounter{figure}{0} 

\begin{center}
  {\Large SUPPLEMENTARY MATERIAL}
  \setcounter{page}{1}
\end{center}

This is the supplementary material to the paper \textit{Local Gaussian
  cross-spectrum analysis}, which extends the univariate theory
developed in \citet{jordanger17:_lgsd} (hereafter referred to as \JT)
to the multivariate case.

\Cref{sec:proofs} contains the proofs of
\cref{th:lgcsd:asymptotics_for_hatlgcsd,th:lgcsd:asymptotics_for_amplitude_spectrum,th:lgcsd:asymptotics_for_phase_spectrum}
(from the main part), whereas \cref{sec:underlying_asymptotic_theory}
gathers the theoretical framework required for these proofs.  Note
that the multivariate local Gaussian spectral theory is quite similar
to the theory developed for the univariate case, and the discussion in
\cref{sec:underlying_asymptotic_theory} thus frequently refers to the
theory developed in the Supplementary Material of \JT and it primarily
focuses on the adjustments needed to cope with the present
multivariate case.

The sensitivity of the tuning parameters (and the effect of varying
the point $\LGp$) are discussed in \cref{P2.app:sensitivity_analysis},
whereas \cref{P2.How.to.select.the.tuning.parameters?} briefly
comments upon the task of selecting these parameters.  Note that this
discussion in essence is the same as the one given for the univariate
case, see
\lgsdRef{app:sensitivity_analysis,How.to.select.the.tuning.parameters?},
but it has been included here for the convenience of the reader.  The
scrips needed for the reproduction of these multivariate examples are
included in the \Rpackage \lgsdRpackage, and an investigator can use
these scrips as templates for similar investigations.

\Cref{P2.app:regarding_resampling} is devoted to sampling and
resampling strategies.  The first part of
\cref{P2.app:regarding_resampling} applies the univariate local
Gaussian sanity testing of parametric models from \JT to the DAX- and
CAC-margins of the \EuStockMarkets example, and after this new plots
are presented that can be used for the sanity testing of the
cross-interaction of the multivariate parametric models fitted to the
\EuStockMarkets data.  The second part of
\cref{P2.app:regarding_resampling} explains the motivation for the
\textit{Circular index-based block bootstrap for tuples} resampling
strategy that was developed in \JT,
and this part also contains a sensitivity analysis of the block length
$L$, which reveals that this tuning parameter plays a minor role for
the time series that are long enough for a local Gaussian spectral
investigation to be of interest.

All the examples/investigations in this paper can be reproduced by the
scripts contained in the \Rpackage \lgsdRpackage, and the interested
reader will find some practical information related to that in
\cref{P2:app:scripts_and_details_related_to_the_examples}.  Some
further discussions of the examples in \cref{sec:lgch_examples} have
also been collected in
\cref{P2:app:scripts_and_details_related_to_the_examples} in order to
improve the flow of the main part.  The interested reader will here
find the details related to the bivariate Gaussian test-example, a
short explanation of how animations of complex-valued plots can be
used to investigate the estimated local Gaussian cross-spectra, and
some additional details related to the construction of the local
trigonometric examples used for the sanity-testing of the estimation
algorithm.  Some details are also included at the end related to
limitations of the local Gaussian approach for points $\LGp$ that lies
in the extreme tails of the observations.

\section{The proofs of \cref{th:lgcsd:asymptotics_for_hatlgcsd,th:lgcsd:asymptotics_for_amplitude_spectrum,th:lgcsd:asymptotics_for_phase_spectrum}}
\label{sec:proofs}
\setcounter{figure}{0} 

This appendix presents the proofs of the asymptotic results stated in
the main part of the paper.  The proof of the result for the
\mbox{$m$-truncated} estimate of the local Gaussian cross-spectrum
$\lgcsd[p]{k\ell}{\LGp}{\omega}$ is in essence identical to the one
encountered in \JT for the local Gaussian
auto-spectrum $\lgcsd[p]{kk}{\LGp}{\omega}$, whereas the proofs for
the estimates of the local Gaussian amplitude- and phase-spectra are
identical in structure to those encountered in the ordinary global
case.  Some technical details needed for the proof of
\cref{th:lgcsd:asymptotics_for_hatlgcsd} are covered in 
\cref{sec:underlying_asymptotic_theory}.

\begin{proof}[Proof of \cref{th:lgcsd:asymptotics_for_hatlgcsd},
  page~\pageref{th:lgcsd:asymptotics_for_hatlgcsd}] \ \\
          The case $\omega\not\in\tfrac{1}{2}\cdot\ZZ$ will be treated first,
  since the other case follows from a trivial adjustment of the setup.
    The key observation for this case is that the sum that defines
  $\hatlgcsdM[p]{k\ell}{\LGp}{\omega}{m}$, see
  \cref{eq:App:hatlgsd_definition_main_document} in
  \myref{def:lgsd_estimator}{lgcsd_estimator_folded}, implies that
  $\hatlgcsdCoM[p]{k\ell}{\LGp}{\omega}{m}$ and
  $\hatlgcsdQuadM[p]{k\ell}{\LGp}{\omega}{m}$ can be realised as the
  following inner products,
  \begin{subequations}
    \label{eq:M_truncated_lgsdCo_and_lgsdQuad_vectors}
    \begin{align}
      \label{eq:M_truncated_lgsdCo_vector}
      \hatlgcsdCoM[p]{k\ell}{\LGp}{\omega}{m}
      &= \bmLambdaz[\!\prime]{c|\OUbar{m}}(\omega)\cdot
        \widehatbmPz{k\ell:\OUbar{m}|\bm{b}}\!\parenR{\LGp,\LGpd} \\
      \label{eq:M_truncated_lgsdQuad_vector}
      \hatlgcsdQuadM[p]{k\ell}{\LGp}{\omega}{m}
      &= \bmLambdaz[\!\prime]{q|\OUbar{m}}(\omega)\cdot
        \widehatbmPz{k\ell:\OUbar{m}|\bm{b}}\!\parenR{\LGp,\LGpd},
    \end{align}
  \end{subequations}
  where $\bmLambdaz[\!\prime]{c|\OUbar{m}}(\omega)$ and
  $\bmLambdaz[\!\prime]{q|\OUbar{m}}(\omega)$ respectively contains the
  coefficients $\lambdazM{h}{m}\cos(2\pi\omega h)$ and
  $\lambdazM{h}{m}\sin(2\pi\omega h)$, and where
  $\widehatbmPz{k\ell:\OUbar{m}|\bm{b}}\!\parenR{\LGp,\LGpd} =
  \Vector[\prime]{\hatlgccr[p]{\ell k}{\LGpd}{m},\dotsc,\hatlgccr[p]{\ell
      k}{\LGpd}{1},
    \hatlgccr[p]{k\ell}{\LGp}{0},\dotsc,\hatlgccr[p]{k\ell}{\LGp}{m}}$
  contains the \mbox{$2m+1$} estimates of the local Gaussian
  cross-correlations (based on the bandwidth~$\bm{b}$).
  The vector
  $\widehatbmPz{k\ell:\OUbar{m}|\bm{b}}\!\parenR{\LGp,\LGpd}$ can by
  the help of a suitable \mbox{$(2m+1)\times5(2m+1)$} matrix
  $\bmEz[\prime]{\OUbar{m}}$ (based on the vectors
  $\bmez[\prime]{5}$ that gives
  $\lgccr[p]{k\ell}{\LGp}{h} =
  \bmez[\prime]{5}\cdot\thetahc{}{\LGp}{k\ell}{h}$) be expressed as
  \begin{align}
    \label{eq:lgcsd:rho_from_theta}
    \widehatbmPz{k\ell:\OUbar{m}|\bm{b}}\!\parenR{\LGp,\LGpd} =
    \bmEz[\prime]{\OUbar{m}} \cdot
    \widehatbmThetaz{k\ell:\OUbar{m}|\bm{b}}\!\parenR{\LGp,\LGpd},
  \end{align}
  where
  $\widehatbmThetaz{k\ell:\OUbar{m}|\bm{b}}\!\parenR{\LGp,\LGpd} =
  \Vector[\prime]{\hatthetahNc{}{\LGpd}{\ell
      k}{m}{n},\dotsc,\hatthetahNc{}{\LGpd}{\ell
      k}{1}{n},\hatthetahNc{}{\LGp}{k\ell}{0}{n},\dotsc,\hatthetahNc{}{\LGp}{k\ell}{m}{n}}$
  is the length \mbox{$5(2m+1)$} vector created by stacking into one
  vector all the estimated parameters from the local Gaussian
  approximations.  It follows from this that the target of interest
  can be written as
  \begin{align}
    \begin{bmatrix}
      \hatlgcsdCoM[p]{k\ell}{\LGp}{\omega}{m} \\
      \hatlgcsdQuadM[p]{k\ell}{\LGp}{\omega}{m}
    \end{bmatrix} =
    \begin{bmatrix}
      \bmLambdaz[\!\prime]{c|\OUbar{m}}(\omega)
      \\ \bmLambdaz[\!\prime]{q|\OUbar{m}}(\omega)
    \end{bmatrix}
    \cdot
    \bmEz[\prime]{\OUbar{m}} \cdot
    \widehatbmThetaz{k\ell:\OUbar{m}|\bm{b}}\!\parenR{\LGp,\LGpd},
  \end{align}
  which together with the asymptotic normality result from
  \cref{th:lgcsd:asymptotics_for_hatlgtheta_LGp_and_LGpd}
  (page~\pageref{th:lgcsd:asymptotics_for_hatlgtheta_LGp_and_LGpd}),
  i.e.,\
  \begin{align}
    \sqrt{n \!\parenRz[3]{}{\bz{1}\bz{2}}} \cdot
    \left(\widehatbmThetaz{k\ell:\OUbar{m}|\bm{b}}\!\parenR{\LGp,\LGpd}
    - \bmThetaz{k\ell:\OUbar{m}|\bm{b}}\!\parenR{\LGp,\LGpd} \right)
    \stackrel{\scriptscriptstyle d}{\longrightarrow}
    \UVN{\bm{0}}{\Sigmaz{\LGp|k\ell:\OUbar{m}}},
  \end{align}
  and \citet[proposition~6.4.2, p.~211]{Brockwell:1986:TST:17326}
  gives that
  \begin{align}
    \sqrt{n \!\parenRz[3]{}{\bz{1}\bz{2}} \!/ m} \cdot     \parenR{
      \begin{bmatrix}
        \hatlgcsdCoM[p]{k\ell}{\LGp}{\omega}{m} \\
        \hatlgcsdQuadM[p]{k\ell}{\LGp}{\omega}{m}
      \end{bmatrix}
      -     \begin{bmatrix}
        \lgcsdCo[p]{k\ell}{\LGp}{\omega} \\
        \lgcsdQuad[p]{k\ell}{\LGp}{\omega}
    \end{bmatrix} }
  \end{align}
  is asymptotically bivariate normally distributed with mean $\bm{0}$
  and covariance matrix
  \begin{align}
    \frac{1}{m} \cdot \parenR{
    \begin{bmatrix}
      \bmLambdaz[\!\prime]{c|\OUbar{m}}(\omega)
      \\ \bmLambdaz[\!\prime]{q|\OUbar{m}}(\omega)
    \end{bmatrix}
    \cdot
    \bmEz[\prime]{\OUbar{m}} \cdot
    \Sigmaz{\LGp|k\ell:\OUbar{m}} \cdot
    \bmEz{\OUbar{m}} \cdot
    \begin{bmatrix}
      \bmLambdaz{c|\OUbar{m}},
      \bmLambdaz{q|\OUbar{m}}
    \end{bmatrix} }.
  \end{align}
                      The specified form of the covariance matrix given in
  \cref{th:lgcsd:asymptotics_for_hatlgcsd} now follows by the help of
  some linear algebra, the observation in
  \cref{th:lgcsd:asymptotics_for_hatlgtheta_LGp_and_LGpd} that
  \begin{align}
    \label{eq:covariance_matrix_general_case}
    \Sigmaz{\LGp|k\ell:\OUbar{m}}
    &\defeq
      \parenR{\oplusss{h=m}{1}\Sigmaz{\LGpd|\ell k:h}} \bigoplus
      \parenR{\oplusss{h=0}{m}\Sigmaz{\LGp|k\ell:h}},
  \end{align}
  and the definition
  \mbox{$\tildesigmaz[2]{\!\LGp|k\ell}(h) \defeq
    \bmez[\prime]{5}\cdot\Sigmaz{\LGp|k\ell:h}\cdot\bmez{5}$} from
  \cref{th:asymptotic_result_cross_correlation}.

  It is easy to see that both
  $\sigmaz[2]{\!c|k\ell:\LGp}(\omega)$ and
  $\sigmaz[2]{\!q|k\ell:\LGp}(\omega)$ from
  \cref{eq:lgcsd:th:asymptotics_for_hatlgsd_variance_RE_and_IM} are
  nonzero when $\omega\not\in\tfrac{1}{2}\cdot\ZZ$, as required for
  the validity of \citet[proposition~6.4.2,
  p.~211]{Brockwell:1986:TST:17326}.  The proof for the case
  $\omega\in\tfrac{1}{2}\cdot\ZZ$ can be constructed in the same
  manner, simply ignoring the components having sine-terms.
\end{proof}

The key observation for the proof of
\cref{th:lgcsd:asymptotics_for_amplitude_spectrum,th:lgcsd:asymptotics_for_phase_spectrum}
is that they both follow as a consequence of
\cref{th:lgcsd:asymptotics_for_hatlgcsd} and \citet[proposition~6.4.3,
  p.~211]{Brockwell:1986:TST:17326}.  Note that these arguments are
quite similar to those used for the investigation of the estimates of
the ordinary amplitude and phase spectra in
\citet[p.448-449]{Brockwell:1986:TST:17326}.

\begin{proof}[Proof of
    \cref{th:lgcsd:asymptotics_for_amplitude_spectrum},
    page~\pageref{th:lgcsd:asymptotics_for_amplitude_spectrum}] \ \\
    First observe that the function $h(\xz{1},\xz{2}) = \sqrt{\xz[2]{1}
    + \xz[2]{2}}$ implies that
  \begin{align}
    \hatlgcsdAmplitudeM[p]{k\ell}{\LGp}{\omega}{m} -
    \lgcsdAmplitude[p]{k\ell}{\LGp}{\omega} =
    h\!\parenR{\hatlgcsdCoM[p]{k\ell}{\LGp}{\omega}{m},
      \hatlgcsdQuadM[p]{k\ell}{\LGp}{\omega}{m}} -
    h\!\parenR{\lgcsdCo[p]{k\ell}{\LGp}{\omega},
      \lgcsdQuad[p]{k\ell}{\LGp}{\omega}},
  \end{align}
  and then observe that the asymptotic covariance matrix in
  \cref{th:lgcsd:asymptotics_for_hatlgcsd}
  \begin{align}
    \Sigmaz{k\ell:\LGp}(\omega) \defeq
    \begin{pmatrix}
      \sigmaz[2]{\!c|k\ell:\LGp}(\omega) & 0 \\
      0 &\sigmaz[2]{\!q|k\ell:\LGp}(\omega) 
    \end{pmatrix},
  \end{align}
  obviously is a symmetric non-negative definite matrix.

  It now follows from \citet[proposition~6.4.2,
    p.~211]{Brockwell:1986:TST:17326} that
  \begin{align}
    \sqrt{n \!\parenRz[3]{}{\bz{1}\bz{2}} \!/ m} \cdot \parenC{
      h\!\parenR{\hatlgcsdCoM[p]{k\ell}{\LGp}{\omega}{m},
        \hatlgcsdQuadM[p]{k\ell}{\LGp}{\omega}{m}} -
      h\!\parenR{\lgcsdCo[p]{k\ell}{\LGp}{\omega},
        \lgcsdQuad[p]{k\ell}{\LGp}{\omega}} }
    \stackrel{\scriptscriptstyle d}{\longrightarrow}
    \UVN{0}{\sigmaz[2]{\!\alpha}(\omega)},
  \end{align}
  where $\sigmaz[2]{\!\alpha}(\omega) =
  \bmDz{}\cdot\Sigmaz{k\ell:\LGp}(\omega)\cdot\bmDz[\prime]{}$, with
  \begin{align}
    \bmDz{} = \Vector{ \frac{\partial}{\partial\xz{1}}h(\xz{1},\xz{2}),
    \frac{\partial}{\partial\xz{2}}h(\xz{1},\xz{2})} =
  \Vector{\xz{1}/\sqrt{\xz[2]{1} + \xz[2]{2}},\xz{2}/\sqrt{\xz[2]{1} +
      \xz[2]{2}}}
  \end{align}
          evaluated in
  $(\xz{1},\xz{2})=\parenR{\lgcsdCo[p]{k\ell}{\LGp}{\omega},
    \lgcsdQuad[p]{k\ell}{\LGp}{\omega}}$.

  A simple calculation gives $\bmDz{}=
  \Vector{\lgcsdCo[p]{k\ell}{\LGp}{\omega}/\lgcsdAmplitude[p]{k\ell}{\LGp}{\omega},
    \lgcsdQuad[p]{k\ell}{\LGp}{\omega}/\lgcsdAmplitude[p]{k\ell}{\LGp}{\omega}}$,
  from which it follows that $\sigmaz[2]{\!\alpha}(\omega) = \parenR{
    \lgcsdCoM[p]{k\ell}{\LGp}{\omega}{2}\cdot\sigmaz[2]{\!c|k\ell:\LGp}(\omega)
    +
    \lgcsdQuadM[p]{k\ell}{\LGp}{\omega}{2}\cdot\sigmaz[2]{\!q|k\ell:\LGp}(\omega)
  } / \lgcsdAmplitudeM[p]{k\ell}{\LGp}{\omega}{2}$.
\end{proof}

\begin{proof}[Proof of
    \cref{th:lgcsd:asymptotics_for_phase_spectrum},
    page~\pageref{th:lgcsd:asymptotics_for_phase_spectrum}] \ \\
    This argument is quite similar to the proof of
  \cref{th:lgcsd:asymptotics_for_amplitude_spectrum}, and only the
  details that are different will thus be included.  In this case the
  function of interest is 
          $h(\xz{1},\xz{2}) = \subp{\tan}{}{}{}{-1}\!\parenR{\xz{2}/\xz{1}}$,
  from which it follows that 
  \begin{align}
    \Vector{
    \frac{\partial}{\partial\xz{1}}h(\xz{1},\xz{2}),
    \frac{\partial}{\partial\xz{2}}h(\xz{1},\xz{2})} =
    \Vector{-\xz{2}/\parenR{\xz[2]{1} +
    \xz[2]{2}},\xz{1}/\parenR{\xz[2]{1} + \xz[2]{2}}}.  
  \end{align}
            This
  implies that $\bmDz{}=
  \Vector{-\lgcsdQuad[p]{k\ell}{\LGp}{\omega}/\lgcsdAmplitudeM[p]{k\ell}{\LGp}{\omega}{2},
    \lgcsdCo[p]{k\ell}{\LGp}{\omega}/\lgcsdAmplitudeM[p]{k\ell}{\LGp}{\omega}{2}}$
  and a simple calculation now gives
  \begin{align}
    \sigmaz[2]{\!\phi}(\omega) = \parenR{
    \lgcsdQuadM[p]{k\ell}{\LGp}{\omega}{2}\cdot\sigmaz[2]{\!c|k\ell:\LGp}(\omega)
    +
    \lgcsdCoM[p]{k\ell}{\LGp}{\omega}{2}\cdot\sigmaz[2]{\!q|k\ell:\LGp}(\omega)
    } / \lgcsdAmplitudeM[p]{k\ell}{\LGp}{\omega}{4},
  \end{align}
  which completes the proof.
            \end{proof}

\section{The underlying asymptotic results}
\label{sec:underlying_asymptotic_theory}
\setcounter{figure}{0}

\makeatletter{}
\subsection{The bivariate case, a brief overview and the $\hatlgccr[p]{k\ell}{\LGp}{h}$-case}
\label{sec:the_bivariate_case}

The main ingredient for the theoretical setup is a translation of the
bivariate results from \citet{Tjostheim201333} into the multivariate
framework, and this is almost identical to the discussion that was
given in \lgsdRef{app:bivariate_penalty_functions}.  The main
difference is that two extra indices ($k$ and~$\ell$) now are needed
in order to specify which components from
\mbox{$\bmYz{t}=\parenR{\Yz{1,t},\dotsc,\Yz{d,t}}$} that are
investigated.

The basic building-blocks was given in \cref{app:assumptions_upon_Yt},
see
\cref{def:lgcsd_Y_kl_ht_and_Y_lk_ht,def:Uh,def:lgcsd:kernel,def:Xklht},
and the first target of interest is to define a suitable bivariate
penalty function relative to the requirement of
\cref{eq:lgcsd:def_of_theta_hb}.  For a sample of size~$n$ from
$\TSR{\Yht{k\ell:h}{t}}{t\in\ZZ}{}$, and with the present notation,
the local penalty function from \citet{Tjostheim201333} can be
described as
\begin{align}
  \nonumber
  \QhNc[\thetahc{}{\LGp}{k\ell}{h}]{}{\LGp}{k\ell}{h}{n}
                &\defeq - \sumss{t=1}{n} \KhbDEFc \log \psi\!\left( \Yht{k\ell:h}{t};
  \thetahc{}{\LGp}{k\ell}{h} \right) \\
  \label{eq:lgcsd:QhN}
  &\phantom{= -} + n \intss{\RRn{2}}{} \Khbdefc \psi\!\left(\yhc{k\ell}{h};
  \thetahc{}{\LGp}{k\ell}{h}\right) \dyhc{k\ell}{h},
\end{align}
and from this, under suitable regularity conditions and by the help of
the Klimko-Nelson approach, the following asymptotic normality result
can be obtained for the estimated parameters,
\begin{align}
  \label{eq:lgcsd:CLT_bivariate_general_case}
  \sqrt{n \!\parenRz[3]{}{\bz{1}\bz{2}}} \cdot
  \left(\hatthetahNc{}{\LGp}{k\ell}{h}{n} -
  \thetahc{}{\LGp}{k\ell}{h}\right) \stackrel{\scriptscriptstyle
  d}{\longrightarrow} \UVN{\bm{0}}{\Sigmaz{\LGp|k\ell:h}}.
\end{align}
See \citet[Th.~3]{Tjostheim201333} for the details.

Notice that it only is the correlation component of $\hatthetahNc{}{\LGp}{k\ell}{h}{n} -
\thetahc{}{\LGp}{k\ell}{h}$ that is of interest for the present
paper.   The relevant result for that part is stated below in order to
give a reference for the statements in 
\cref{th:lgcsd:asymptotics_for_hatlgcsd}.

\begin{theorem}
  \label{th:asymptotic_result_cross_correlation}
  Under \cref{assumption_lgcsd_Yt,assumption:lgcsd:score_function,assumption:lgcsd:Nmb}, the following
  univariate asymptotic result holds for the estimates
  $\hatlgccr[p]{k\ell}{\LGp}{h}$ of the local Gaussian
  cross-correlations $\lgccr[p]{k\ell}{\LGp}{h}$.
  \begin{align}
    \label{eq:lgcsd:CLT_bivariate_general_case_rho}
    \sqrt{n \!\parenRz[3]{}{\bz{1}\bz{2}}} \cdot
    \left(\hatlgccr[p]{k\ell}{\LGp}{h} - \lgccr[p]{k\ell}{\LGp}{h}
    \right) \stackrel{\scriptscriptstyle d}{\longrightarrow}
    \UVN{0}{\tildesigmaz[2]{\!\LGp|k\ell}(h)},
  \end{align}
  where \mbox{$\tildesigmaz[2]{\!\LGp|k\ell}(h) \defeq
    \bmez[\prime]{5}\cdot\Sigmaz{\LGp|k\ell:h}\cdot\bmez{5}$}.
\end{theorem}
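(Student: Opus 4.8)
The plan is to read off \cref{eq:lgcsd:CLT_bivariate_general_case_rho} as a one-dimensional corollary of the vector-valued central limit theorem \cref{eq:lgcsd:CLT_bivariate_general_case} from \cref{sec:the_bivariate_case}, which is the translation of \citet[Th.~3]{Tjostheim201333} to the present, lag- and component-indexed, setting. First I would check that \cref{assumption_lgcsd_Yt,assumption:lgcsd:score_function,assumption:lgcsd:Nmb}, restricted to a single fixed lag~$h$, supply exactly the hypotheses that \citet[Th.~3]{Tjostheim201333} needs for the local-likelihood estimator built from the penalty function \cref{eq:lgcsd:QhN}: strict stationarity and the strong-mixing rate of \myref{assumption_lgcsd_Yt}{assumption_lgcsd_Yt_strong_mixing}, the finite-variance condition \myref{assumption_lgcsd_Yt}{assumption_lgcsd_Yt_finite_variance}, the differentiability of the bivariate density $\ghc[\yhc{k\ell}{h}]{k\ell}{h}$ at the point~$\LGp$ from \myref{assumption_lgcsd_Yt}{assumption:lgcsd:gh_differentiable_at_LGp}, the existence for all small enough bandwidths of a unique minimiser $\thetahbc{p}{\LGp}{k\ell}{h}{b}$ of \cref{eq:lgcsd:penalty_qh} solving \cref{eq:lgcsd:def_of_theta_hb} (\myref{assumption_lgcsd_Yt}{assumption:lgcsd:gh_bh0}), the finite bivariate/trivariate/tetravariate moments of \myref{assumption_lgcsd_Yt}{assumption:lgcsd:h_x:y_x:y:z_finite_expectations}, the non-degeneracy of the score function in \cref{assumption:lgcsd:score_function}, and the bandwidth-rate items of \cref{assumption:lgcsd:Nmb} (at the level of a single lag the items involving $m\to\infty$ are not used, and what remains reduces to the standard local-likelihood rate requirement, e.g.\ \myref{assumption:lgcsd:Nmb}{eq:lgcsd:assumption_N_and_b1b2} when \mbox{$p=5$}). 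Since \cref{assumption_lgcsd_Yt,assumption:lgcsd:score_function} are by construction the multivariate-indexed copies of the assumptions in \citet{jordanger17:_lgsd}, this step is really just the observation that the extra indices $k$ and~$\ell$ are cosmetic: for a fixed $h$ the vector $\Yht{\!k\ell:h}{t}$ is merely one particular bivariate, strictly stationary, strongly mixing process, precisely as in the univariate (auto-correlation) analysis, so nothing in the argument changes.

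Second, I would note that the local Gaussian cross-correlation estimate sits inside the estimated parameter vector as a fixed linear coordinate, $\hatlgccr[p]{k\ell}{\LGp}{h} = \bmez[\prime]{p}\cdot\hatthetahNc{p}{\LGp}{k\ell}{h}{n}$, with $\bmez[\prime]{p}$ the unit vector extracting the correlation component (and, consequently, $\lgccr[p]{k\ell}{\LGp}{h} = \bmez[\prime]{p}\cdot\thetahc{p}{\LGp}{k\ell}{h}$). Feeding \cref{eq:lgcsd:CLT_bivariate_general_case} through the linear-transformation rule for asymptotically normal vectors, e.g.\ \citet[proposition~6.4.2, p.~211]{Brockwell:1986:TST:17326}, then yields \cref{eq:lgcsd:CLT_bivariate_general_case_rho} with scalar limiting variance $\bmez[\prime]{p}\cdot\Sigmaz{\LGp|k\ell:h|p}\cdot\bmez{p}$, which is exactly the definition of $\tildesigmaz[2]{\!\LGp|k\ell|p}(h)$. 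When \mbox{$p=1$} this step is empty: under the convention $\bmez{1}\defeq1$ the parameter vector is already the scalar $\hatlgccr[1]{k\ell}{\LGp}{h}$, so \cref{eq:lgcsd:CLT_bivariate_general_case} is itself the conclusion. This reproduces, in a uniform notation, the \mbox{$p=5$} special case recorded in \cref{sec:the_bivariate_case}.

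The part needing the most care lies entirely in the first step, not in the linear algebra. Specifically, one must make sure that the centering in \cref{eq:lgcsd:CLT_bivariate_general_case} is at the bandwidth-free local Gaussian parameter $\thetahc{p}{\LGp}{k\ell}{h}$ rather than merely at the bandwidth-dependent population value $\thetahbc{p}{\LGp}{k\ell}{h}{b}$ from \cref{eq:lgcsd:def_of_theta_hb}; this is admissible because, under \cref{assumption:lgcsd:Nmb}, the approximation bias $\thetahbc{p}{\LGp}{k\ell}{h}{b} - \thetahc{p}{\LGp}{k\ell}{h}$ vanishes faster than the stochastic rate appearing in \cref{eq:lgcsd:CLT_bivariate_general_case}, by the same bias estimate used in \citet{Tjostheim201333} and \citet{jordanger17:_lgsd}. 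Because that estimate is insensitive to the $k,\ell$ labelling, I would simply invoke it rather than reprove it; with this one point settled, the theorem follows as described above.
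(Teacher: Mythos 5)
Your proposal is correct and takes essentially the same route as the paper: the theorem is obtained there by invoking the Klimko--Nelson-based asymptotic normality of the full lag-$h$ parameter vector from \citet[Th.~3]{Tjostheim201333} (applied to the penalty function in \cref{eq:lgcsd:QhN}, with the $k,\ell$ indices playing no role), and then extracting the correlation coordinate via $\bmez[\prime]{p}$ and \citet[proposition~6.4.2, p.~211]{Brockwell:1986:TST:17326}, with the convention $\bmez{1}\defeq 1$ making the \mbox{$p=1$} case immediate. Your extra care about verifying the assumptions lag-by-lag and about the bandwidth-dependent centering is consistent with the cited results and does not change the argument.
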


\begin{proof}
  With $\bmez[\prime]{5}$ the unit vector that picks out the
  correlation part from $\hatthetahNc{}{\LGp}{k\ell}{h}{n} -
  \thetahc{}{\LGp}{k\ell}{h}$, it follows from
  \citet[proposition~6.4.2, p.~211]{Brockwell:1986:TST:17326} that
  \begin{align}
    \label{eq:lgcsd:CLT_bivariate_general_case_p=5}
    \sqrt{n \!\parenRz[\,3]{}{\bz{1}\bz{2}}} \cdot
    \left(\hatlgccr[5]{k\ell}{\LGp}{h} - \lgccr[5]{k\ell}{\LGp}{h}
    \right) \stackrel{\scriptscriptstyle d}{\longrightarrow}
    \UVN{0}{\bmez[\prime]{5}\cdot\Sigmaz{\LGp|k\ell:h|5}\cdot\bmez{5}}.
  \end{align}
\end{proof}

\makeatletter{}

\subsection{The asymptotic result for $\widehatbmThetaz{k\ell:\OUbar{m}|\bm{b}}\!\parenR{\LGp,\LGpd}$}
\label{sec:asymptotic_theory}

The Klimko-Nelson approach from the bivariate case can be extended to
the present case of interest in the same manner as it was done for the
local Gaussian auto-spectrum in \lgsdRef{sec:Technical_Results}.

\begin{definition}
  \label{def:lgcsd:QMN}
  For each bivariate penalty function
  $\QhNc[\thetahc{}{\LGp}{k\ell}{h}]{}{\LGp}{k\ell}{h}{n}$ (as given
  in \cref{eq:lgcsd:QhN}), denote by
  $\tildeQhNc[\thetahc{}{\LGp}{k\ell}{h}]{}{\LGp}{k\ell}{h}{n}$ the
  extension of it from a function of
  \mbox{$\Yht{\!k\ell:h}{t}\defeq\Vector[\prime]{\Yz{k,t+h},
      \Yz{\ell,t}}$} to a function of
  $\Vector[\prime]{\Yz{k,t+m},\dotsc,\Yz{k,t},\Yz{\ell,t+m},\dotsc,
    \Yz{\ell,t}}$.  Use these extensions do define the new penalty
  function
    \begin{align}
    \label{eq:lgcsd:QMN_definition}
      \QMNc[\bmThetaz{k\ell:\OUbar{m}|\bm{b}}\!\parenR{\LGp,\LGpd}]{}{\LGp}{\LGpd}{k\ell}{m}{n}
    &\defeq \sumss{h=m}{1} \tildeQhNc[\thetahc{}{\LGpd}{\ell
        k}{h}]{}{\LGpd}{\ell k}{h}{n} + \sumss{h=0}{m}
    \tildeQhNc[\thetahc{}{\LGp}{k\ell}{h}]{}{\LGp}{k\ell}{h}{n} .
                          \end{align}
\end{definition}

The \mbox{$2m+1$} bivariate components in the sum that defines
$\QMNc[\widehatbmThetaz{k\ell:\OUbar{m}|\bm{b}}\!\parenR{\LGp,\LGpd}]{}{\LGp}{\LGpd}{k\ell}{m}{n}$
have no common parameters, so the optimisation of the parameters for
the different summands can be performed~independently.  The optimal
parameter
vector~$\widehatbmThetaz{k\ell:\OUbar{m}|\bm{b}}\!\parenR{\LGp,\LGpd}$
for the penalty function $\QMNc{}{\LGp}{\LGpd}{k\ell}{m}{n}$ (for a
given sample) can thus be constructed by stacking on top of each other
the parameter vectors $\hatthetahc{}{\LGp}{k\ell}{h}$ and
$\hatthetahc{}{\LGpd}{\ell k}{h}$ that optimise the individual
summands in~\cref{eq:lgcsd:QMN_definition}.

The Klimko-Nelson approach can now be used on the penalty function
from \cref{eq:lgcsd:QMN_definition}
i.e.,\ four requirements related to the penalty function must be
verified before the desired asymptotic result for the paramater-vector
$\widehatbmThetaz{k\ell:\OUbar{m}|\bm{b}}\!\parenR{\LGp,\LGpd}$ is
obtained.  The following cross-spectrum analogue of
\lgsdRef{th:asymptotics_for_hatlgtheta_LGp_and_LGpd} can now be stated
for the present case of interest.

\begin{theorem}
  \label{th:lgcsd:asymptotics_for_hatlgtheta_LGp_and_LGpd}
  Under \cref{assumption_lgcsd_Yt,assumption:lgcsd:score_function,assumption:lgcsd:Nmb}, the following
  asymptotic behaviour holds for the estimated parameters
  $\widehatbmThetaz{k\ell:\OUbar{m}|\bm{b}}\!\parenR{\LGp,\LGpd} =
  \Vector[\prime]{\hatthetahNc{}{\LGpd}{\ell
      k}{m}{n},\dotsc,\hatthetahNc{}{\LGpd}{\ell
      k}{1}{n},\hatthetahNc{}{\LGp}{k\ell}{0}{n},\dotsc,\hatthetahNc{}{\LGp}{k\ell}{m}{n}}$,
    \begin{align}
    \label{eq:th:lgcsd:asymptotics_for_hatlgtheta_LGp_and_LGpd}
    \sqrt{n \!\parenRz[3]{}{\bz{1}\bz{2}}} \cdot
    \left(\widehatbmThetaz{k\ell:\OUbar{m}|\bm{b}}\!\parenR{\LGp,\LGpd}
    - \bmThetaz{k\ell:\OUbar{m}|\bm{b}}\!\parenR{\LGp,\LGpd} \right)
    \stackrel{\scriptscriptstyle d}{\longrightarrow}
    \UVN{\bm{0}}{\Sigmaz{\LGp|k\ell:\OUbar{m}}},
  \end{align}
  where the matrix $\Sigmaz{\LGp|k\ell:\OUbar{m}}$ is the direct sum
  of the matrices from \cref{eq:lgcsd:CLT_bivariate_general_case} that
  occurs when the individual bivariate components of the penalty
  function is investigated, i.e.,\
  \begin{align}
    \Sigmaz{\LGp|k\ell:\OUbar{m}} \defeq
    \parenR{\oplusss{h=m}{1}\Sigmaz{\LGpd|\ell k:h}} \bigoplus
    \parenR{\oplusss{h=0}{m}\Sigmaz{\LGp|k\ell:h}}.
  \end{align}
\end{theorem}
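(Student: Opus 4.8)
The plan is to reduce \cref{th:lgcsd:asymptotics_for_hatlgtheta_LGp_and_LGpd} to the Klimko--Nelson machinery applied to the penalty function $\QMNc[\bmThetaz{k\ell:\OUbar{m}|\bm{b}|p}\!\parenR{\LGp,\LGpd}]{p}{\LGp}{\LGpd}{k\ell}{m}{n}$ from \cref{def:lgcsd:QMN}, exactly as was done for the auto-spectrum case in \lgsdRef{th:asymptotics_for_hatlgtheta_LGp_and_LGpd}. First I would record the structural observation already noted in the excerpt: the $2m+1$ summands in \cref{eq:lgcsd:QMN_definition} share no common parameters, so the minimiser $\widehatbmThetaz{k\ell:\OUbar{m}|\bm{b}|p}\!\parenR{\LGp,\LGpd}$ decomposes coordinate-blockwise into the individual minimisers $\hatthetahNc{p}{\LGp}{k\ell}{h}{n}$ and $\hatthetahNc{p}{\LGpd}{\ell k}{h}{n}$. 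This immediately explains why the limiting covariance is the direct sum $\parenR{\oplusss{h=m}{1}\Sigmaz{\LGpd|\ell k:h|p}} \bigoplus \parenR{\oplusss{h=0}{m}\Sigmaz{\LGp|k\ell:h|p}}$: it is the joint normality across blocks, not the marginal normality of each block (which is just \cref{eq:lgcsd:CLT_bivariate_general_case}), that carries the content.

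Next I would verify the four Klimko--Nelson conditions for the aggregated penalty function, following the template of \lgsdRef{sec:Technical_Results}. The first two (the gradient has expectation converging to zero at the right rate, and the Hessian converges in probability to an invertible block-diagonal matrix) follow summand-by-summand from the bivariate analysis underlying \cref{eq:lgcsd:CLT_bivariate_general_case} together with the finiteness and smoothness clauses of \cref{assumption_lgcsd_Yt} (parts~\ref{assumption:lgcsd:gh_differentiable_at_LGp}--\ref{assumption:lgcsd:h_x:y_x:y:z_finite_expectations}) and \cref{assumption:lgcsd:score_function}; here \cref{assumption:lgcsd:gh_b0_infimum_of_bh0} is what lets a single bandwidth serve all lags simultaneously, and the score-function nonvanishing in \cref{assumption:lgcsd:score_function} is what keeps each block's limiting Hessian nonsingular. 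The third condition is a uniform-in-$m$ remainder bound, which is where the growth constraints on $(n,m,\bm{b})$ in \cref{assumption:lgcsd:Nmb} enter, exactly as in the univariate proof; since \cref{assumption:lgcsd:Nmb} is stated to be internally consistent (verified in \lgsdRef{th:block_sizes_for_main_result}) this is a routine transcription.

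The fourth condition — the central limit theorem for the normalised score vector $\sqrt{n\parenRz[(p+1)/2]{}{\bz{1}\bz{2}}}$ times the gradient of $\QMNc{p}{\LGp}{\LGpd}{k\ell}{m}{n}$ evaluated at the population value — is the main obstacle, and is where the hard work lies. It requires a CLT for a triangular array of dependent, $m$-growing-dimensional sums built from the variables $\XNht[\LGp]{n}{k\ell:h:q}{t}$ of \cref{def:Xklht}; one establishes it by a Bernstein-type big-block/small-block decomposition, using the strong-mixing rate of \cref{assumption_lgcsd_Yt_strong_mixing}, the moment control of clause~\ref{assumption:lgcsd:h_x:y_x:y:z_finite_expectations}, and the kernel-integrability conditions of \cref{def:lgcsd:kernel}, checking via a Cramér--Wold device that every linear combination of the blocks is asymptotically normal with variance read off from the direct-sum structure. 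The key point to get right is that, because the cross-correlations at distinct lags involve overlapping coordinates of $\bmYz{t}$ (products $\XNht[\LGp]{n}{k\ell:h:q}{t}\cdot\XNht[\LGp]{n}{k\ell:i:r}{s}$ being bi-, tri- or tetravariate functions), one must confirm the cross-block covariances still collapse so that $\Sigmaz{\LGp|k\ell:\OUbar{m}|p}$ is genuinely block-diagonal asymptotically — this is the $m$-dependence/mixing argument that drives off the off-diagonal terms, and is essentially the same computation carried out in the auto-spectrum case. Once the four conditions hold, the Klimko--Nelson theorem delivers \cref{eq:th:lgcsd:asymptotics_for_hatlgtheta_LGp_and_LGpd} with the stated covariance, completing the proof.
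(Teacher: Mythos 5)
Your proposal follows essentially the same route as the paper's proof: apply the Klimko--Nelson machinery to the aggregated penalty function of \cref{def:lgcsd:QMN}, dispose of the first conditions via the bivariate analysis behind \cref{eq:lgcsd:CLT_bivariate_general_case}, and concentrate the work in the CLT for the normalised score vector, handled by a big-block/small-block construction with Cramér--Wold and the mixing conditions driving the cross-block (off-diagonal) terms to zero so that the limiting covariance is the stated direct sum. This matches the paper's argument, which likewise refers the fourth Klimko--Nelson requirement to the construction used in the univariate auto-spectrum case.
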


\begin{proof}
  This result follows when the Klimko-Nelson approach is used with the
  local penalty-function
  $\QMNc[\bmThetaz{k\ell:\OUbar{m}|\bm{b}}\!\parenR{\LGp,\LGpd}]{}{\LGp}{\LGpd}{k\ell}{m}{n}$
  from \cref{eq:lgcsd:QMN_definition}, and the proof is in essence
  identical to the proof of
  \lgsdRef{th:asymptotics_for_hatlgtheta_LGp_and_LGpd}.  The three
  first requirements of the Klimko-Nelson approach follows trivially
  from the corresponding investigation for the bivariate case, whereas
  the proof of the fourth requirement must take into account how
  $\mlimit$ and $\blimit$ as $\nlimit$.
 
  The investigation of the fourth requirement of the Klimko-Nelson
  approach can be done in the exact same manner that was employed in
  \JT, i.e.,\ first construct a collection of
  simple random variables whose interaction and asymptotic properties
  are easy to investigate, then use these basic building blocks to
  construct a more complicated random variable
  $\QNhvec{n}{\LGp|\OUbar{m}}$ that has the same limiting distribution
  as the estimator of \mbox{$\sqrt{\bz{1}\bz{2}}\nablaMc{}{k\ell}{m}
    \QMNc[\bmThetaz{k\ell:\OUbar{m}|\bm{b}}\!\parenR{\LGp,\LGpd}]{}{\LGp}{\LGpd}{k\ell}{m}{n}$}
  (where $\nablaMc{}{k\ell}{m}$ is obtained by stacking
  together~$\nablahc{}{k\ell}{h}$).  After this, it is sufficient to
  use standard methods to prove that the limiting distribution of
  $\QNhvec{n}{\LGp|\OUbar{m}}$ is the desired multivariate normal
  distribution, and the statement for the parameter vectors then
  follows from the Klimko-Nelson theorem and some linear algebra.
\end{proof}

\section{Sensitivity analysis of the tuning parameters}
\label{P2.app:sensitivity_analysis}
\setcounter{figure}{0} 

An investigation of how sensitive the estimates of the local Gaussian
auto-spectra $\hatlgcsdM{\ell\ell}{\LGp}{\omega}{m}$ are to changes in
the tuning parameters (and the point $\LGp$) can be found in
\lgsdRef{app:sensitivity_analysis}.  This section presents a similar
investigation for the estimates of the local Gaussian cross-spectra
$\hatlgcsdM{k\ell}{\LGp}{\omega}{m}$, and it is seen here that the
features observed for the univariate case are also present in the
multivariate case.

Different plots based on estimates of the local Gaussian
cross-correlations and the corresponding local Gaussian cross-spectra,
i.e.,\ $\hatlgccr{k\ell}{\LGp}{h}$ and
$\hatlgcsdM{k\ell}{\LGp}{\omega}{m}$, will be encountered, and some of
these plots are based on the distance function $D$ from
\lgsdRef{app:method_for_sensitivity_analysis}.  The definition of $D$,
and a short discussion of it, is for the convenience of the reader
repeated in \cref{P2.app:method_for_sensitivity_analysis} in the
present Supplementary Material.

\cref{P2.app:Point_sensitivity,P2.app:Bandwidth_sensitivity}
respectively consider the sensitivity of the point $\LGp$ and the
bandwidth $\bm{b}$, whereas the sensitivity of the truncation level
$m$ is discussed in \cref{P2.app:Truncation_level_sensitivity}.  The
effect the value of the block length $L$ has upon the bootstrap-based
pointwise confidence intervals is discussed in
\cref{P2.app:Block_length_sensitivity}, since that gives the most
natural flow.

The scripts required for the replication of the results in this
section are contained in the \Rpackage \lgsdRpackage, and these
scripts can be used as templates for those that would like to
investigate other time series in a similar manner.  See
\cref{P2.app:The.scripts.in.lgsdRpackage} for details.

\subsection{Sensitivity analysis - the distance function}
\label{P2.app:method_for_sensitivity_analysis}

The purpose of this section is to repeat \textit{the motivation for}
and \textit{the definition of} the distance function $D$ given in
\lgsdRef{app:method_for_sensitivity_analysis}.  The discussion of
possible alternatives will not be included here.

\textbf{Motivation:} An investigation of the sensitivity of the
different tuning parameters requires a tool that can measure the
differences that occur when these tuning parameters are adjusted.  The
distance function $D$ used in this paper is the one inherited from the
complex Hilbert space of Fourier series on the interval
$\parenS{-\tfrac{1}{2},\tfrac{1}{2}}$, cf.\ e.g.\
\citet[Ch.~2.8]{Brockwell:1986:TST:17326}, i.e.,\ for
$f(\omega)=\sumss{h=-\infty}{\infty}\rho(h)\ez[-2\pi i h]{}$ the norm
is defined by
$\subp{||f(\omega)||}{}{}{}{2} = \intss{-1/2}{1/2}
f(\omega)\overbar{f(\omega)}\d{\omega} = \sumss{h=-\infty}{\infty}
\subp{\rho(h)}{}{}{}{2}$.  This motivates the following definition.

\begin{definition}
  \label{P2.def:sensitivity_analysis_distance_function}
  Given two spectra
  $\fz{1}(\omega)=\sumss{h=-\infty}{\infty}\rhoz{1}(h)\ez[-2\pi i
  h]{}$ and
  $\fz{2}(\omega)=\sumss{h=-\infty}{\infty}\rhoz{2}(h)\ez[-2\pi i
  h]{}$, the distance between them is denoted by
  \begin{equation}
    \label{P2.eq:def:sensitivity_analysis_distance_function}
    D(\fz{1}(\omega),\fz{2}(\omega)) \defeq \sqrt{\sumss{h=-\infty}{\infty}
      \subp{\parenR{\rhoz{1}(h)-\rhoz{2}(h)}}{}{}{}{2} }.
  \end{equation}
  Furthermore: The notation $D(\fz{1}(\omega))$ will be interpreted as
  $D(\fz{1}(\omega),0)$, which implies that
  $D(\fz{1}(\omega),\fz{2}(\omega))$ also can be written as
  $D(\fz{1}(\omega) -\fz{2}(\omega))$ (which is used in
  \cref{fig:m_sensitivity_bivariate}).
\end{definition}

The obvious adjustment must be applied when the distance function $D$
is used on $m$-truncated estimates
$\hatlgcsdM{k\ell}{\LGp}{\omega}{m}$, i.e.,\ $\rho(h)$ should be
replaced with $\lambdazM{h}{m}\cdot\hatlgccr{k\ell}{\LGp}{h}$ when
$|h|\leq m$, and with 0 when $|h|>m$.

\textbf{Regarding the frequency-dimension:} The distance measure $D$
given in \cref{P2.def:sensitivity_analysis_distance_function} does not
contain any information about the frequencies, and completely
different spectral densities can have the same distance-value.  It is
thus, for the purpose of sensitivity analysis, important to combine
distance-based plots with plots that reveal something about the
frequency-component too.

\subsection{Sensitivity analysis: The point $\LGp$}
\label{P2.app:Point_sensitivity}

The point $\LGp$, contrary to the bandwidth $\bm{b}$ and the
truncation level $m$, is not a tuning parameter of the estimation
algorithm --- but it is natural to investigate how the local Gaussian
spectra $\hatlgcsdM[p]{k\ell}{\LGp}{\omega}{m}$ varies with $\LGp$.
Due to this it is also of interest to consider some plots that can
reveal how $\hatlgcsdM[p]{k\ell}{\LGp}{\omega}{m}$ behaves as a
function of the selected point $\LGp$.

\Cref{fig:heatmap_co_quad_dmt_bivariate_constant_phases,fig:heatmap_co_quad_dmt_bivariate_different_phases,fig:heatmap_distance_plot_EuStockMarkets_DAX_CAC}
of the main part used heatmap- and distance-plots to show how
$\hatlgcsdM[p]{k\ell}{\LGp}{\omega}{m}$ varied with the selected point
$\LGp$ (restricted to the diagonal), in particular
\cref{fig:heatmap_co_quad_dmt_bivariate_constant_phases,fig:heatmap_co_quad_dmt_bivariate_different_phases}
considered the \textit{bivariate local trigonometrical} examples used
for the sanity testing of the implemented estimation algorithm,
whereas \cref{fig:heatmap_distance_plot_EuStockMarkets_DAX_CAC} looked
at the DAX- and CAC-components of the \EuStockMarkets-data.
These plots are extensions to the multivariate case of the
corresponding heatmap- and distance-plots that was introduced for the
univariate case in \JT.

\subsubsection{Revisiting the univariate case}
\label{P2.app:Point_sensitivity.Revisiting.the.univariate.case}

For a local Gaussian investigation of a multivariate sample, like the
\EuStockMarkets-data, it is of course important to investigate both
the local Gaussian cross-spectra
$\hatlgcsdM[p]{k\ell}{\LGp}{\omega}{m}$ and the marginal local
Gaussian auto-spectra $\hatlgcsdM[p]{kk}{\LGp}{\omega}{m}$.
\Cref{fig:heatmap_distance_plot_EuStockMarkets_DAX_CAC} from the
main-part should thus be accompanied with the marginal heatmap- and
distance-plots of the DAX- and CAC-components of \EuStockMarkets, as
seen in \cref{fig:EuStockMarkets_marginals_DAX_CAC}.

\begin{figure}
  {\centering
    \includegraphics[width=1\linewidth]{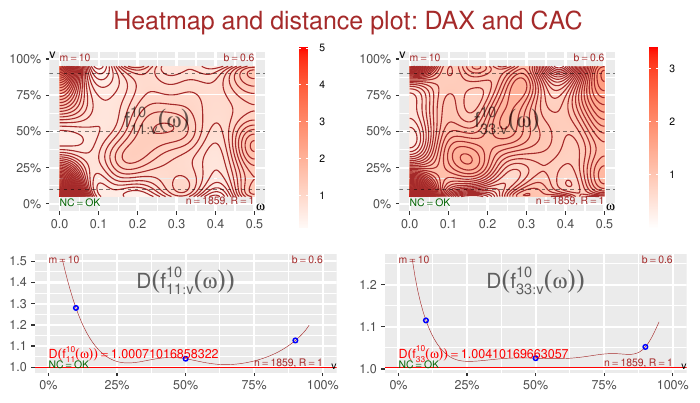}}
  \caption{Heatmap- and distance-plots for the marginal DAX- and
    CAC-components of the \EuStockMarkets-data, the
    DAX-component to the left.  Investigation with focus on how the
    local Gaussian auto-spectra $\hatlgcsdM{kk}{\LGp}{\omega}{10}$
    changes when the point $\LGp$ varies along the diagonal. The three
    points used in \cref{fig:EuStockMarkets} have been highlighted
    with lines/points.}
  \label{fig:EuStockMarkets_marginals_DAX_CAC}
\end{figure}

Note that the points $\LGp=\LGpoint$ seen in
\cref{fig:EuStockMarkets_marginals_DAX_CAC} only varies along the
diagonal, i.e.,\ $\LGpi{1} = \LGpi{2}$.  This was used in
\JT since it simplified the graphical inspection
(the local Gaussian auto-spectra are always real-valued on the
diagonal).  The points $\LGp=\LGpoint$ can of course also vary along
non-diagonal lines, but the resulting visualisation for the local
Gaussian auto-spectra must then deal with complex-valued results
instead of real-valued results.

\Cref{fig:EuStockMarkets_marginals_DAX_CAC} shows how heatmap-plots
can be used to see how the local Gaussian auto-spectra
$\hatlgcsdM[p]{kk}{\LGp}{\omega}{m}$ varies with diagonal points
$\LGp$ (for a fixed bandwidth $\bm{b}$ and a fixed truncation
level~$m$).  The distance function $D$ from
\cref{P2.def:sensitivity_analysis_distance_function} has been used to
create the distance-based plots, and these show how the norms
$D\!\parenR{\hatlgcsdM[p]{kk}{\LGp}{\omega}{m}}$ varies with $\LGp$.
Note that the scales are different for the DAX- and CAC-components in
\cref{fig:EuStockMarkets_marginals_DAX_CAC}, with the DAX-component
(left side) having the highest values.

The distance-parts of \cref{fig:EuStockMarkets_marginals_DAX_CAC} also
includes the distance-based value computed from the ordinary
($m$-truncated) autospectra (red horizontal line).  A comparison of
the distances from the ordinary spectra and the local Gaussian spectra
can provide an indication of the presence of non-Gaussian dependency
structures in the time series under investigation.
Note that a comparison of the global spectra and the local Gaussian
spectra is not possible based on the heatmaps seen in
\cref{fig:EuStockMarkets_marginals_DAX_CAC} (since they only reveal
information based on the local Gaussian spectra), and it is thus also
necessary to consider plots like those seen in
\cref{fig:Bivariate_global_cosine,fig:Bivariate_local_trigonometric_A,fig:Bivariate_local_trigonometric_C,fig:EuStockMarkets,fig:cGARCH}
in the main part.

The points $\LGp$ in \cref{fig:EuStockMarkets_marginals_DAX_CAC}
ranges from the 5\% percentile to the 95\% percentile of the standard
normal distribution, increasing in steps of 0.5\% (altogether 91
different points).  This percentile based selection implies that the
corresponding points are not equally spaced along the actual diagonal,
and the plots in \cref{fig:EuStockMarkets_marginals_DAX_CAC} (and all
similar $\LGp$-investigating plots) have thus used the option that the
points $\LGp$ have been presented according to their underlying
percentile-values --- which implies that these plots primarily reveals
information about the copula-structure of the time series under
investigation.

The features seen for the DAX- and CAC-parts of
\cref{fig:EuStockMarkets_marginals_DAX_CAC} are similar to those
observed for the
\texttt{dmbp}-example\footnote{\label{P2.footnote:dmbp} The
  Deutschemark/British pound Exchange Rate (\texttt{dmbp}) data from
  \citet{bollerslev96:_period}, which is a common benchmark data set
  for GARCH-type models.  The data used in \JT
  was found in the \Rpackage \Rref{rugarch}, see
  \citet{ghalanos15:_rugarch}, where the following description was
  given: \enquote{The daily percentage nominal returns computed as
    \mbox{$100\left[\ln\left(P_t\right) -
        \ln\left(P_t-1\right)\right]$}, where $P_t$ is the bilateral
    Deutschemark/British pound rate constructed from the corresponding
    U.S.\ dollar rates.}  } investigated in \JT.
In particular, the local Gaussian auto-spectra
$\hatlgcsdM[p]{kk}{\LGp}{\omega}{m}$ are rather flat near near the
50\% percentile, and in this region the spectra are quite similar to
those encountered from a sample from an i.i.d.\ white noise situation.
Moreover, it is also here seen (as it was for the \texttt{dmbp}-case)
that there is a clear asymmetric behaviour between the lower tail and
upper tail, and this asymmetry is easier to see from the
distance-parts of
\cref{fig:EuStockMarkets_marginals_DAX_CAC}.\footnote{%
  It can also be seen that the DAX-component (left side of
  \cref{fig:EuStockMarkets_marginals_DAX_CAC}) have a slightly more
  extreme behaviour in the lower tail than the one observed for the
  CAC-component (right side).}

The observed asymmetry, with a higher peak at the lower tail, are in
agreement with the asymmetry between a \textit{bear market} (going
down) and a \textit{bull market} (going up).

It must be added that the 5\% and 95\% percentiles are quite far out
in the tails of the distribution, and it is thus natural to assume
that the selected bandwidth in those cases might fail to work properly
--- and the small sample variation of the points closest to the point
$\LGp$ might then render the estimated local Gaussian autocorrelations
rather dubious.  It is possible to counter this problem by selecting a
larger bandwidth for percentiles in the tails, but it is then
important to keep in mind that a too large bandwidth might completely
miss the desired local structure at the point of investigation.

\subsubsection{Extension to the multivariate case}
\label{P2.app:Extension_to_the_multivariate_case}

The restriction to diagonal points, as seen in
\cref{fig:EuStockMarkets_marginals_DAX_CAC}, ensures that the
resulting local Gaussian auto-spectra
$\hatlgcsdM[p]{kk}{\LGp}{\omega}{m}$ always are real-valued.  The
local Gaussian cross-spectra $\hatlgcsdM[p]{k\ell}{\LGp}{\omega}{m}$
will however be complex-valued, and some modifications of the plotting
procedure are thus needed when these plots are extended to the
multivariate case.

The distance function $D$ from
\cref{P2.def:sensitivity_analysis_distance_function} works equally
well for real-valued and complex-valued spectra, so the distance-part
of the plots remains exactly the same.\footnote{%
  It is worth noticing that the lowest value that can occur for the
  distance is 1 when an auto-spectrum is investigated, whereas it is 0
  when a cross-spectrum is investigated, i.e.,\
  $D\left(\hatlgcsdM[p]{kk}{\LGp}{\omega}{m}\right)\geq 1$ and
  $D\left(\hatlgcsdM[p]{k\ell}{\LGp}{\omega}{m}\right)\geq 0$ when
  $k\neq \ell$.  The values seen in the distance-plot thus gives an
  investigator an idea with regard to how far away from an i.i.d.\
  white noise situation the resulting auto- and cross-spectra are.} %
The heatmap-part of the plots must however be updated, and a natural
strategy for this endeavour is to split the complex-valued local
Gaussian cross-spectra $\hatlgcsdM[p]{k\ell}{\LGp}{\omega}{m}$ into a
combination of the related real-valued spectra given in
\cref{def:local_co_quad_amplitude_phase} in the main part

A complex number $z$ can be represented in a Cartesian form, i.e.,\
$z=x+i\cdot y$, where $x$ and $y$ respectively are the real and
imaginary parts of the number. From the definition of the \Co-spectrum
$\lgcsdCo[p]{k\ell}{\LGp}{\omega}$ and \Quad-spectrum
$\lgcsdQuad[p]{k\ell}{\LGp}{\omega}$, it is clear that
$\hatlgcsdM[p]{k\ell}{\LGp}{\omega}{m} =
\lgcsdCo[p]{k\ell}{\LGp}{\omega} -
i\cdot\lgcsdQuad[p]{k\ell}{\LGp}{\omega}$.  This implies that one way
to adjust the heatmap-plot to the bivariate case is to present a pair
of plots, where one plot represent the \Co-spectrum and the other
represent the \Quad-spectrum.  This approach was used in
\cref{fig:heatmap_co_quad_dmt_bivariate_constant_phases,fig:heatmap_co_quad_dmt_bivariate_different_phases,fig:heatmap_distance_plot_EuStockMarkets_DAX_CAC}
in the main part.

The complex number $z$ can also be represented in a polar form, i.e.,\
$z = \alpha \cdot e^{i\cdot \phi}$, where $\alpha$ and $\phi$
respectively are the modulus and the phase of the complex number.
From the definition of the \Amplitude-spectrum
$\lgcsdAmplitude[p]{k\ell}{\LGp}{\omega}$ and \Phase-spectrum
$\lgcsdPhase[p]{k\ell}{\LGp}{\omega}$, it is clear that
$\hatlgcsdM[p]{k\ell}{\LGp}{\omega}{m} =
\lgcsdAmplitude[p]{k\ell}{\LGp}{\omega} \cdot e^{
  i\cdot\lgcsdPhase[p]{k\ell}{\LGp}{\omega}}$.  This implies that
another possible extension of the heatmap-plot to the bivariate case
is to use a pair of plots, where one plot represent the
\Amplitude-spectrum and the other plot represent the \Phase-spectrum.

\Cref{fig:EuStockMarkets_diagonal_points_LS_c_amplitude} shows a
heatmap- and distance plot for the local Gaussian cross-spectrum for
the DAX- and CAC-components of the \EuStockMarkets-data, where
the heatmap-part use the \Amplitude + \Phase decomposition.  A
comparison with
\cref{fig:heatmap_distance_plot_EuStockMarkets_DAX_CAC} from the main
part reveals that the information gained from the \Amplitude-spectrum
part of \cref{fig:EuStockMarkets_diagonal_points_LS_c_amplitude} in
this case is similar to the one gained from the \Co-spectrum part of
\cref{fig:heatmap_distance_plot_EuStockMarkets_DAX_CAC}.  A similar
comparison of the \Quad-spectrum and the \Phase-spectrum from these
reveals that these also are quite similar (at least after taking into
account that the definition of the \Quad-spectrum is minus one times
the imaginary part of the cross-spectrum explains why positive and
negative regions have been inverted).

\begin{figure}
  {\centering
    \includegraphics[width=1\linewidth]{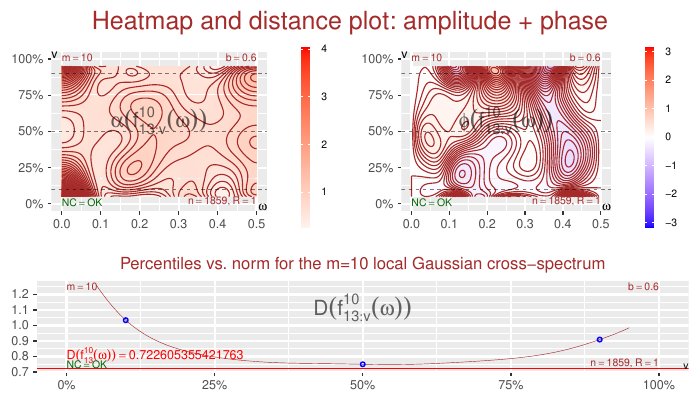}}
  \caption{Heatmap and corresponding distance-based plots based on the
    DAX- and CAC-components of the \EuStockMarkets-data,
    showing how the complex-valued local Gaussian cross-spectrum
    $\hatlgcsdM{k\ell}{\LGp}{\omega}{10}$ changes when when the point
    $\LGp$ varies along the diagonal.  Investigation based on the
    \Amplitude- and \Phase-spectra, cf.\
    \cref{fig:heatmap_distance_plot_EuStockMarkets_DAX_CAC} in the
    main part for the \Co- and \Quad-spectra version.  The three
    points used in \cref{fig:EuStockMarkets} have been highlighted
    with lines/points.}
  \label{fig:EuStockMarkets_diagonal_points_LS_c_amplitude}
\end{figure}

Based on the details seen in
\Cref{fig:EuStockMarkets_diagonal_points_LS_c_amplitude,fig:heatmap_distance_plot_EuStockMarkets_DAX_CAC},
it might be tempting to conclude that it hardly matters what kind of
decomposition (Cartesian or polar) that is used for the inspection of
the complex-valued local Gaussian spectra, but it should be noted that
the \Quad-spectrum in this particular case have much smaller values
than the \Co-spectrum, and that could be the source of the similarity.

The two cases investigated in
\cref{fig:heatmap_co_quad_dmt_bivariate_constant_phases,fig:heatmap_co_quad_dmt_bivariate_different_phases},
i.e.,\ the \textit{bivariate local trigonometrical} examples used for
the sanity testing of the implemented estimation algorithm, can also
be investigated by the help of a polar composition of the
complex-valued local Gaussian cross-spectrum, and the resulting plots
are presented in
\cref{fig:bivariate_constant_phases_Heatmap1,fig:bivariate_constant_phases_Heatmap2}.

\begin{figure}
  {\centering
    \includegraphics[width=1\linewidth]{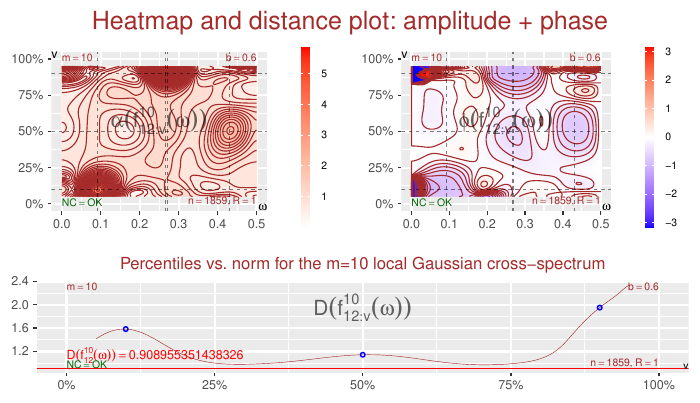}}
  \caption{Heatmap and corresponding distance-based plots based on the
    \textit{local trigonometric constant phase-adjustment} example,
    showing how the complex-valued local Gaussian cross-spectrum
    $\hatlgcsdM{k\ell}{\LGp}{\omega}{10}$ changes when when the point
    $\LGp$ varies along the diagonal.  Investigation based on the
    \Amplitude- and \Phase-spectra, cf.\
    \cref{fig:heatmap_co_quad_dmt_bivariate_constant_phases} in the
    main part for the \Co- and \Quad-spectra version.  The three
    points used in \cref{fig:Bivariate_local_trigonometric_A} have
    been highlighted with lines/points.    Vertical lines have been
    added to the heatmap-plots to highlight the frequencies used to
    generate the \textit{local trigonometric example}. }
  \label{fig:bivariate_constant_phases_Heatmap1}
\end{figure}

An inspection of the \Co- and \Quad- spectra in
\cref{fig:heatmap_co_quad_dmt_bivariate_constant_phases,fig:heatmap_co_quad_dmt_bivariate_different_phases}
clearly reveals that the two \textit{bivariate local trigonometrical}
examples have different local properties, i.e.,\ the peaks and troughs
occur at different frequencies for different points $\LGp$.  The
distance-parts of
\cref{fig:heatmap_co_quad_dmt_bivariate_constant_phases,fig:heatmap_co_quad_dmt_bivariate_different_phases}
did, however, not show this difference --- and this highlights why it
is important to include a visualisation of the frequency component.

As seen in the previous paragraph, the Cartesian-type decomposition of
the local Gaussian cross-spectrum, i.e.,\
$\hatlgcsdM[p]{k\ell}{\LGp}{\omega}{m} =
\lgcsdCo[p]{k\ell}{\LGp}{\omega} -
i\cdot\lgcsdQuad[p]{k\ell}{\LGp}{\omega}$, made it easy to spot the
differences between the two \textit{bivariate local trigonometrical}
examples.  What about the strategy based on the polar-type
decomposition, i.e.,\
$\hatlgcsdM[p]{k\ell}{\LGp}{\omega}{m} =
\lgcsdAmplitude[p]{k\ell}{\LGp}{\omega} \cdot e^{
  i\cdot\lgcsdPhase[p]{k\ell}{\LGp}{\omega}}$?

\begin{figure}
  {\centering
    \includegraphics[width=1\linewidth]{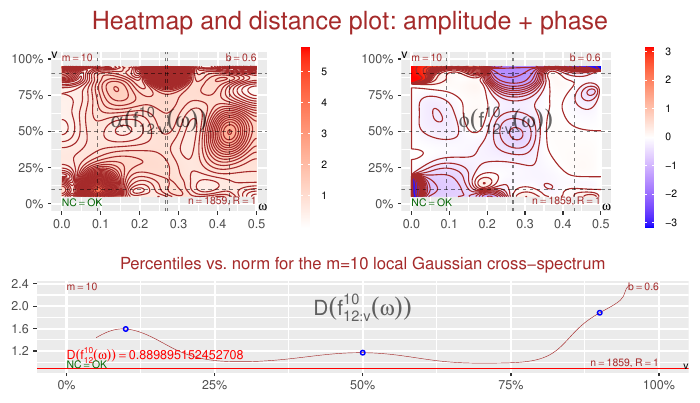}}
  \caption{Heatmap and corresponding distance-based plots based on the
    \textit{local trigonometric different phases-adjustment} example,
    showing how the complex-valued local Gaussian cross-spectrum
    $\hatlgcsdM{k\ell}{\LGp}{\omega}{10}$ changes when when the point
    $\LGp$ varies along the diagonal.  Investigation based on the
    \Amplitude- and \Phase-spectra, cf.\
    \cref{fig:heatmap_co_quad_dmt_bivariate_different_phases} in the
    main part for the \Co- and \Quad-spectra version.  The three
    points used in \cref{fig:Bivariate_local_trigonometric_C} have
    been highlighted with lines/points.  Vertical lines have been
    added to the heatmap-plots to highlight the frequencies used to
    generate the \textit{local trigonometric example}. }
  \label{fig:bivariate_constant_phases_Heatmap2}
\end{figure}

The complex-valued data used for the construction of
\cref{fig:heatmap_co_quad_dmt_bivariate_constant_phases,fig:heatmap_co_quad_dmt_bivariate_different_phases}
was also used for the construction of
\cref{fig:bivariate_constant_phases_Heatmap1,fig:bivariate_constant_phases_Heatmap2},
but it is quite a bit harder to use the latter pair of plots to spot
the differences between the two \textit{bivariate local
  trigonometrical} examples.  In fact the \Amplitude-spectra parts of
\cref{fig:bivariate_constant_phases_Heatmap1,fig:bivariate_constant_phases_Heatmap2}
are so similar that it is hard to tell them apart.  The \Phase-spectra
parts are similarly quite hard to investigate, in particular since the
values in this case are restricted to the range $(-\pi,\pi]$. Unless
the \Phase-spectra values have opposite signs, it is very hard to
compare the colour-gradients in the \Phase-spectra parts of
\cref{fig:bivariate_constant_phases_Heatmap1,fig:bivariate_constant_phases_Heatmap2}
and conclude that there are significant differences between the two
\textit{bivariate local trigonometrical} examples that are
investigated in these plots.

The preceding discussion shows that it is easier to digest the
information stored in the complex-valued local Gaussian cross-spectrum
when the heatmap-parts are decomposed into \Co- and \Quad-spectra ---
and this motivated the use of this version for the heatmap- and
distance-plots included in the main part, i.e.,\
\cref{fig:heatmap_co_quad_dmt_bivariate_constant_phases,fig:heatmap_co_quad_dmt_bivariate_different_phases,fig:heatmap_distance_plot_EuStockMarkets_DAX_CAC}.

\subsubsection{Heatmap-plots for the estimates $\hatlgccr{k\ell}{\LGp}{h}$}

The computation of the complex-valued local Gaussian cross-spectra
$\hatlgcsdM[p]{k\ell}{\LGp}{\omega}{m}$ requires the computation of
the $2m+1$ local Gaussian cross-correlations
$\TSR{\hatlgccr{k\ell}{\LGp}{h}}{h=-m}{m}$, and it is thus (as a
supplement to the plots seen in
\cref{fig:heatmap_co_quad_dmt_bivariate_constant_phases,fig:heatmap_co_quad_dmt_bivariate_different_phases,fig:heatmap_distance_plot_EuStockMarkets_DAX_CAC}
in the main part and
\cref{fig:EuStockMarkets_diagonal_points_LS_c_amplitude,fig:bivariate_constant_phases_Heatmap1,fig:bivariate_constant_phases_Heatmap2}
in the preceding discussion) also possible to create heatmap-based
plots that can visualise how these estimates change as the point
$\LGp$ moves along the diagonal from the 5\% to the 95\% percentile.
\Cref{fig:EuStockMarkets_diagonal_points_P1_fig_D3} presents a
heatmap-plot for the local Gaussian cross-correlation between the DAX-
and CAC- components of the \EuStockMarkets-data.

\begin{figure}
  {\centering
    \includegraphics[width=1\linewidth]{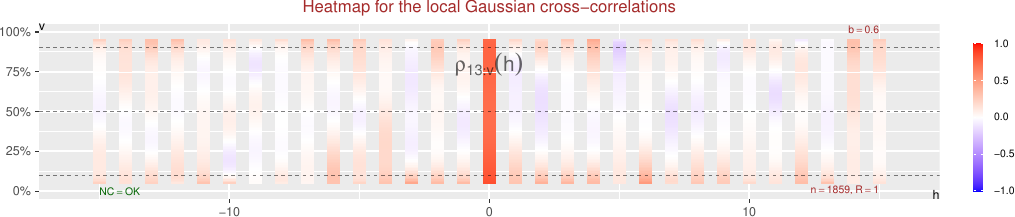}}
  \caption{Heatmap of the local Gaussian cross-correlations
    $\hatlgccr[p]{k\ell}{\LGp}{h}$ based on the DAX- and
    CAC-components of the \EuStockMarkets-data, showing how
    these estimates change when the point $\LGp$ varies along the
    diagonal.  The three points used in \cref{fig:EuStockMarkets} have
    been highlighted with lines.}
  \label{fig:EuStockMarkets_diagonal_points_P1_fig_D3}
\end{figure}

Note that
the values $\hatlgccr{k\ell}{\LGp}{h}$ by construction lies in the
interval $[-1,1]$, and the colour-gradient has thus been predefined to
use this range.  It is quite easy to see from
\cref{fig:EuStockMarkets_diagonal_points_P1_fig_D3} if the
$\hatlgccr{k\ell}{\LGp}{h}$-values are close to zero (white) or if
they are closer to the endpoints~$\pm 1$.  It can from such an
investigation also be seen if some lags behave differently from the
others, and it is also possible to see if it might be an asymmetric
situation between the behaviour in lower and upper tails.  Finally, it
can also help reveal if the estimation of the values in the tails
might be somewhat dubious, since that often will reveal itself by
values that becomes quite close to~$\pm 1$, or suddenly changes from
one extreme to the other.

\subsection{Sensitivity analysis: The bandwidth $\bm{b}$}
\label{P2.app:Bandwidth_sensitivity}

The bandwidth $\bm{b}=\parenR{\bz{1},\bz{2}}$ is a bivariate
parameter, but it was noted in \JT that it is
quite natural to assume $\bz{1}=\bz{2}$ for univariate time series
$\TSR{\Yz{t}}{t=1}{n}$.  This is due to the local Gaussian
autocorrelations $\lgccr[p]{kk}{\LGp}{h}$ being computed based on
(pseudo-normalised versions of the) lag $h$ pairs
$\parenR{\Yz{t+h},\Yz{t}}$, and the marginal time series in this case
thus coincide with the one we started with.

The justification from the univariate case for $\bz{1}=\bz{2}$ does
not work when the local Gaussian cross-correlations
$\lgccr[p]{k\ell}{\LGp}{h}$ are to be estimated for a multivariate
time series, but (as mentioned in
\cref{sec:lgcsd_Examples_the_parameters} in the main part) in this
case it is possible to use the pseudo-normalisation of the marginals
as a justification for the assumed equality of $\bz{1}$ and~$\bz{2}$.

With this restriction it follows that the sensitivity of the
complex-valued local Gaussian cross-spectrum
$\hatlgcsdM{k\ell}{\LGp}{\omega}{m}$, due to changes in the bandwidth
$\bm{b}$, can be investigated in a similar manner to the one used in
the preceding section for the diagonal points $\LGp$.
\Cref{fig:EuStockMarkets_bandwidth_P1_fig_D5__LS_c_Co_II,fig:EuStockMarkets_bandwidth_HCD_Amplitude+Phase}
show the results of such an investigation (at the point
\texttt{10\%::10\%}) for the complex-valued local Gaussian
cross-spectrum $\hatlgcsdM{k\ell}{\LGp}{\omega}{m}$ based on the DAX-
and CAC-components of the \EuStockMarkets-data.
\Cref{fig:EuStockMarkets_bandwidth_P1_fig_D5__LS_c_Co_II} gives a
Cartesian decomposition into the \Co- and \Quad-spectra, whereas
\cref{fig:EuStockMarkets_bandwidth_HCD_Amplitude+Phase} presents the
polar decomposition into the \Amplitude- and \Phase-spectra.  As
mentioned before: Since the scale of the imaginary part of
$\hatlgcsdM{k\ell}{\LGp}{\omega}{m}$ is much smaller than the real
part in this particular case, the \Amplitude-spectrum is in essence
the absolute value of the \Co-spectrum --- and since the \Co-spectrum
is positive for this example it follows that the plots of the
\Amplitude-spectrum is very similar to the plot of the \Co-spectrum in
this particular case.

\begin{figure}
  {\centering
    \includegraphics[width=1\linewidth]{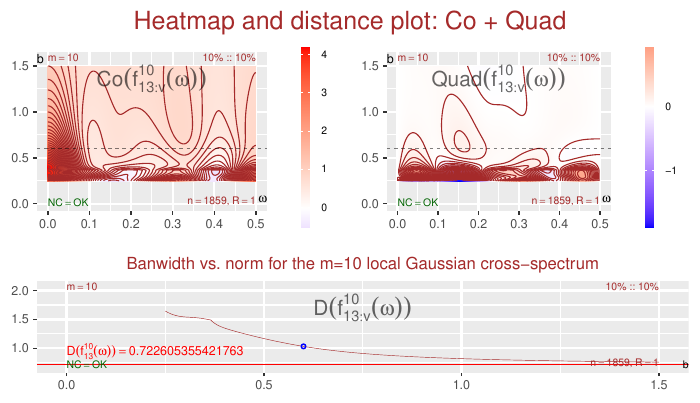}}
  \caption{Heatmap and corresponding distance-based plots based on the
    DAX- and CAC-components of the \EuStockMarkets-data,
    showing how the complex-valued local Gaussian cross-spectrum
    $\hatlgcsdM{k\ell}{\LGp}{\omega}{10}$ changes when when the
    bandwidth $\bm{b}$ varies.  Investigation based on \Co- and
    \Quad-spectra, cf.\
    \cref{fig:EuStockMarkets_bandwidth_HCD_Amplitude+Phase} for the
    the \Amplitude- and \Phase-spectra version.  The bandwidth $b=0.6$
    used in \cref{fig:EuStockMarkets} has been highlighted with a
    line/point.}
  \label{fig:EuStockMarkets_bandwidth_P1_fig_D5__LS_c_Co_II}
\end{figure}

\begin{figure}
  {\centering
    \includegraphics[width=1\linewidth]{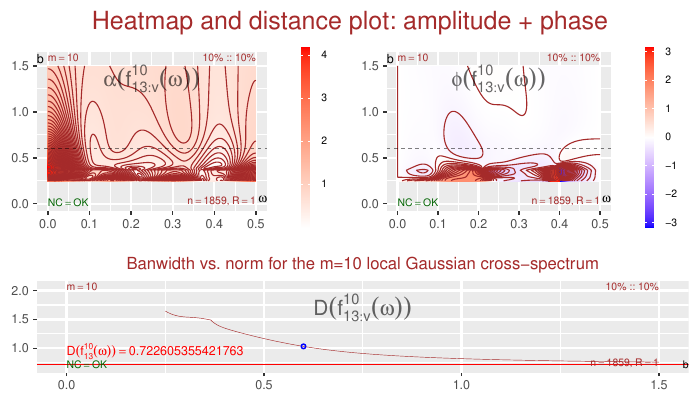}}
  \caption{Heatmap and corresponding distance-based plots based on the
    DAX- and CAC-components of the \EuStockMarkets-data,
    showing how the complex-valued local Gaussian cross-spectrum
    $\hatlgcsdM{k\ell}{\LGp}{\omega}{10}$ changes when when the
    bandwidth $\bm{b}$ varies.  Investigation based on the \Amplitude-
    and \Phase-spectra, cf.\
    \cref{fig:EuStockMarkets_bandwidth_P1_fig_D5__LS_c_Co_II} for the
    \Co- and \Quad-spectra version.  The bandwidth $b=0.6$ used in
    \cref{fig:EuStockMarkets} has been highlighted with a line/point.}
  \label{fig:EuStockMarkets_bandwidth_HCD_Amplitude+Phase}
\end{figure}

As mentioned in \lgsdRef{app:Bandwidth_sensitivity}: \enquote{The
  bandwidth $\bm{b}$ should be selected according to the Goldilocks
  principle, i.e.\ it should neither be \enquote{too low} nor
  \enquote{too high}, it must be \enquote{just right}.}  %
An inspection of the heatmap and distance-based plots in
\cref{fig:EuStockMarkets_bandwidth_P1_fig_D5__LS_c_Co_II,fig:EuStockMarkets_bandwidth_HCD_Amplitude+Phase}
can provide some information with regard to what kind of problems that
can occur when the bandwidth does not belong to the \enquote{just
  right} region.  The bandwidths used in
\cref{fig:EuStockMarkets_bandwidth_P1_fig_D5__LS_c_Co_II,fig:EuStockMarkets_bandwidth_HCD_Amplitude+Phase}
starts at 0.25 and ends at 1.5 (increasing in steps of 0.005,
altogether 251 different bandwidths), and the bandwidth $b=0.6$ has
been highlighted since it was that value that was used in
\cref{fig:EuStockMarkets_lags200,fig:heatmap_distance_plot_EuStockMarkets_DAX_CAC,fig:EuStockMarkets}
in the main part.

The problems occurring for the multivariate case are the same as those
discussed in \lgsdRef{app:Bandwidth_sensitivity} for the univariate
case.  If $\bm{b}$ becomes too large, then the estimated local
Gaussian cross-correlations $\hatlgccr{k\ell}{\LGp}{h}$ will no longer
capture the local structure of interest, and the corresponding
estimated local Gaussian spectral density
$\hatlgcsdM{k\ell}{\LGp}{\omega}{m}$ (which no longer deserves to be
referred to as \enquote{local}) will then be indistinguishable from
the ordinary spectral density.  It is clear from
\cref{fig:EuStockMarkets_bandwidth_P1_fig_D5__LS_c_Co_II,fig:EuStockMarkets_bandwidth_HCD_Amplitude+Phase}
that a bandwidth greater than $b=1.0$ will be too large for the
investigation of the \EuStockMarkets-data.

On the other side, it is expected that a too low bandwidth will
trigger a degeneration of the estimated local Gaussian
cross-correlations, i.e.,\ $\hatlgccr[p]{k\ell}{\LGp}{h}$ will tend
towards either $+1$ or $-1$ regardless of the actual structure of the
underlying density distributions.  The reason for this is (as
previously explained in \JT for the local
Gaussian auto-correlation case) that $\hatlgccr[p]{k\ell}{\LGp}{h}$
will, due to the kernel function from the density estimation
algorithm, become increasingly sensitive to the position of the
(pseudo-normalised) $h$-lagged pairs $\parenR{\Yz{k,t+h},\Yz{\ell,t}}$
that lies nearest to the point $\LGp=\LGpoint$.  To clarify, for a
given point $\LGp$ there will be a collection of Euclidean distances
to the (pseudo-normalised versions of the) $h$-lagged pairs
$\parenR{\Yz{k,t+h},\Yz{\ell,t}}$ in the sample, and these distances
could (after a re-indexing) be sorted in ascending order
$\TSR{\dz{i}}{i=1}{n-h}$.

Under the assumption that it is the product normal kernel that is
used, the contribution from a lag-$h$ pair
$\parenR{\Yz{k,t+h},\Yz{\ell,t}}$ that lies a distance of $\dz{i}$
from $\LGp$ will be weighted by
$\wz{i:\bm{b}} \defeq \tfrac{1}{2\pi \bz[2]{}}\ez[-d_i^2/2b^2]{}$ ---
and it is now natural to consider the set of all the weights
$\mathcalWz{\LGp:\bm{b}}\defeq\TSR{\wz{i:\bm{b}}}{i=1}{n-h}$.

The primary detail of interest is how much larger the weights are for
the pairs that are closest to $\LGp$, and it is thus necessary to
consider the fraction
$\rz{ij:\bm{b}}\defeq\wz{i:\bm{b}}/\wz{j:\bm{b}} =
\parenRz[1/b^2]{}{\ez[d_j^2-d_i^2]{}}$.  The number $\rz{ij:\bm{b}}$
will, when $\dz{i}<\dz{j}$, grow to $\infty$ when $\blimit$, and this
implies that the estimation algorithm for small $b$-values will become
increasingly sensitive to the $h$-lagged pairs that are  closest to
the point $\LGp$ when the bandwidth shrinks --- and in the end it
would thus be natural to have a degeneration of the estimated value
$\lgacr{\LGp}{h}$ to either $+1$ or $-1$.

Note that the corresponding
$D\!\parenR{\hatlgcsdM[p]{k\ell}{\LGp}{\omega}{m}}$ will grow when
this degeneration happens, as can be seen for $b=0.25$ in the
distance-part plots in
\cref{fig:EuStockMarkets_bandwidth_P1_fig_D5__LS_c_Co_II,fig:EuStockMarkets_bandwidth_HCD_Amplitude+Phase}.

\textbf{Heatmap-plots for the estimates $\hatlgccr{k\ell}{\LGp}{h}$:}
It is here, as it was for the investigation of the diagonal points
$\LGp$, possible to also consider a heatmap-based investigation of the
underlying estimates $\TSR{\hatlgccr{k\ell}{\LGp}{h}}{h=-m}{m}$.  Such
a plot is given in \cref{fig:EuStockMarkets_bandwidth_P1_fig_D6}, and
it can there be observed how some of the
$\hatlgccr{k\ell}{\LGp}{h}$-estimates changes from having a value
close to zero to a value that must be close to $+$ or $-1$ as the
bandwidth $\bm{b}$ decreases.  This is, as mentioned above, not a
surprising observation --- since this behaviour is expected when the
bandwidth $\bm{b}$ has shrunk to a level where the kernel function in
the local penalty function gives quite high weights to the few
observations $\parenR{\Yz{k,t+h},\Yz{\ell,t}}$ nearest $\LGp$, and
very low weights elsewhere.

\begin{figure}
  {\centering
    \includegraphics[width=1\linewidth]{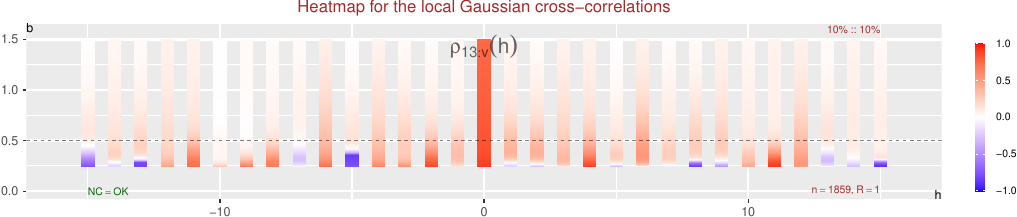}}
  \caption{Heatmap of the local Gaussian cross-correlations
    $\hatlgccr[p]{k\ell}{\LGp}{h}$ based on the DAX- and
    CAC-components of the \EuStockMarkets-data, showing how
    these estimates change when the bandwidth $\bm{b}$ varies.  The
    bandwidth $b=0.6$ used in \cref{fig:EuStockMarkets} has been
    highlighted with a line.}
  \label{fig:EuStockMarkets_bandwidth_P1_fig_D6}
\end{figure}

Note that
\cref{fig:EuStockMarkets_bandwidth_P1_fig_D5__LS_c_Co_II,fig:EuStockMarkets_bandwidth_HCD_Amplitude+Phase,fig:EuStockMarkets_bandwidth_P1_fig_D6}
consider the situation where $\LGp$ is the \enquote{lower tail}
diagonal point at \texttt{10\%::10\%}.  Similar heatmap- and
distance-plots could have been included for the cases where $\LGp$
corresponds to either the center or the upper tail.\footnote{The
  interested reader can use the scripts in the \Rpackage
  \lgsdRpackage\ to get access to these plots for the center and upper
  tail, cf.\ \cref{P2.app:The.scripts.in.lgsdRpackage} for details.}
A comparison of the distance-based plots for these three points is
presented in \cref{fig:EuStockMarkets_bandwidth_P1_fig_D7}, in order
to show how the bandwidth-sensitivity of
$\hatlgsdM[p]{\LGp}{\omega}{m}$ also depends on the selected point
$\LGp$.  A common scale has been used for the three subplots in order
to emphasise the asymmetry between the lower and upper tail.

\begin{figure}
  {\centering
    \includegraphics[width=1\linewidth]{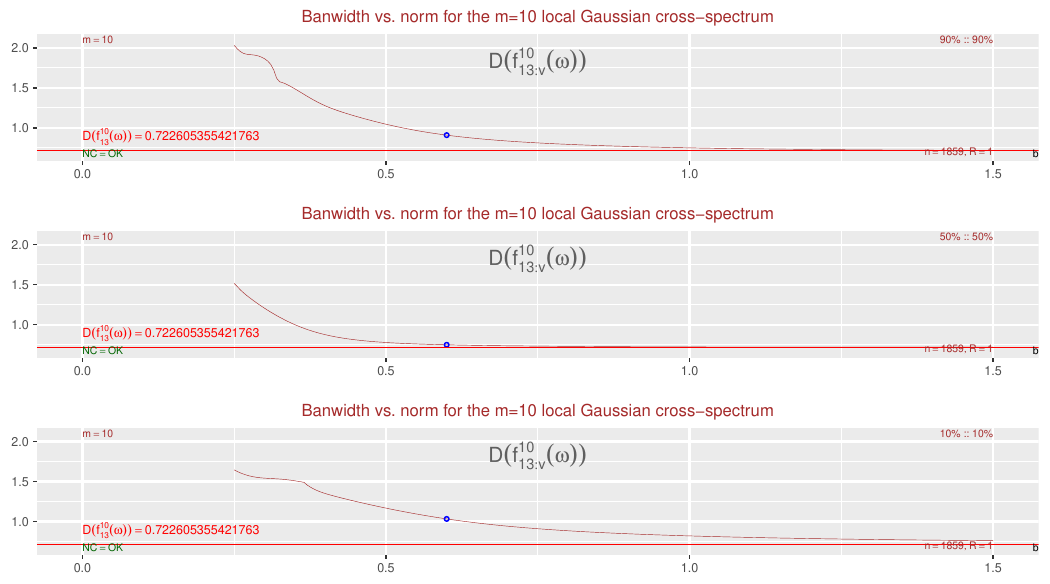}}
  \caption{Distance-based plots based on the DAX- and CAC-components
    of the \EuStockMarkets-data, that shows how the distance
    $D\parenR{\hatlgcsdM{k\ell}{\LGp}{\omega}{10}}$ varies with the
    bandwidth $\bm{b}$ for the three percentiles used in
    \cref{fig:EuStockMarkets}, i.e.,\ 10\%, 50\% and 90\%.  The
    bandwidth $b=0.6$ used in \cref{fig:EuStockMarkets} has been
    highlighted with a point.}
  \label{fig:EuStockMarkets_bandwidth_P1_fig_D7}
\end{figure}

The center plot of \cref{fig:EuStockMarkets_bandwidth_P1_fig_D7}
reveals that the \enquote{too low bandwidth problem} occurs a bit
slower in a high density region, but it will even there eventually
create a situation where the estimated local Gaussian autocorrelations
degenerate towards either $+1$ or $-1$.  This is the exact same
behaviour that was observed in \JT for the local
Gaussian auto-spectra $\lgcsdM[p]{kk}{\LGp}{\omega}{m}$, and this is
as expected since the higher concentration of (pseudo-normalised)
points at the center implies that the degeneration of the estimates
first will occur at shorter bandwidths.

The heatmap and distance-based plots in
\cref{fig:EuStockMarkets_bandwidth_P1_fig_D5__LS_c_Co_II,fig:EuStockMarkets_bandwidth_HCD_Amplitude+Phase,fig:EuStockMarkets_bandwidth_P1_fig_D6,fig:EuStockMarkets_bandwidth_P1_fig_D7}
can detect the clearly undesirable regions for the bandwidth $\bm{b}$,
but they do not reveal what the \enquote{just right} value for the
bandwidth should be.  Nevertheless, it is still possible to gain some
insight into how sensitive the estimate of
$\lgcsdM[p]{k\ell}{\LGp}{\omega}{m}$ will be for minor variations of the
bandwidth $\bm{b}$, and that can be useful with regard to the
selection of a few bandwidths that can be used when e.g.\ a
bootstrap-investigation is to be performed.

The framework used in the \Rpackage \lgsdRpackage\ ensures that it is
trivial to compute and investigate/compare a wide range of bandwidths
simultaneously, and the key idea is that knowledge of the local
dependency structure can still be obtained even if the selected
bandwidths are not spot on the \enquote{just right} value for the
bandwidth.

\subsection{Sensitivity analysis: The truncation level $m$}
\label{P2.app:Truncation_level_sensitivity}

This section is in essence the same as
\lgsdRef{app:Truncation_level_sensitivity}.  The main differences are
primarily (1) that the notation have been updated in order to reflect
that the target of the investigation now is the local Gaussian
cross-spectrum $\lgcsdM{k\ell}{\LGp}{\omega}{m}$ (instead of the local
Gaussian auto-spectrum), and (2) that the sample under investigation
in this case is the multivariate \EuStockMarkets-data instead of the
univariate \texttt{dmbp}-data.

The shape of $\lgcsdM{k\ell}{\LGp}{\omega}{m}$ for a low truncation level
can be different from the shape seen when a higher truncation level
is used.  It is thus of interest to investigate how sensitive the
estimates $\hatlgcsdM{k\ell}{\LGp}{\omega}{m}$ are to changes in the
truncation level $m$.

This issue can easily be probed by performing an initial investigation
with a high value for the maximum lag to be computed, since the
computational cost is not too large when only a single sample (like
the \EuStockMarkets-data) is investigated. It did e.g.\ not
take a long time to estimate $\lgccr{k\ell}{\LGp}{h}$ for
$h=-200,\dotsc,-1,0.1,\dotsc,200$, which was needed for the
construction of \cref{fig:EuStockMarkets_lags200} in the main document
--- and with these estimates it is trivial to compare
$\hatfz[m]{k\ell}(\omega)$ and $\hatlgcsdM{k\ell}{\LGp}{\omega}{m}$
for $m$ up to 200, since the integrated \Rref{shiny}-application of
the \Rpackage \lgsdRpackage\ can animate the changes that occur in the
spectra when $m$ grows from 0 to 200.

The computational costs can become rather large when it is necessary
to find pointwise confidence intervals, since a high number of
replicates then must be investigated with the same configuration of
tuning parameters.  It is then important to figure out a sufficient
truncation level $m$, and restrict the attention to the estimates of
$\lgccr{k\ell}{\LGp}{h}$ for $h=-m,\dotsc,-1,0,1,\dotsc,m$.

A drawback with the \Rref{shiny}-based approach in \lgsdRpackage\ is
that it requires an inspection of many different plots.  It could thus
be of interest to also consider summary-plots that either use the
distance function $D$ from
\cref{P2.def:sensitivity_analysis_distance_function}, or some
heatmap-based alternative visualisation of
$\hatlgcsdM{k\ell}{\LGp}{\omega}{m}$, similar to those used for
$\hatlgccr{k\ell}{\LGp}{h}$ in
\cref{fig:EuStockMarkets_diagonal_points_P1_fig_D3,fig:EuStockMarkets_bandwidth_P1_fig_D6}.

\textbf{Distance plots:} It is possible to investigate the
$m$-sensitivity by distance-based plots, but those plots are less
useful in this case.  One reason for this is that the norms
$D\!\parenR{\hatlgcsdM{k\ell}{\LGp}{\omega}{m}}$ are monotonically
increasing as functions of $m$. This can easily be seen by first
recalling (cf.\ \myref{def:lgsd_estimator}{lgcsd_estimator_folded})
that the estimates $\hatlgcsdM{k\ell}{\LGp}{\omega}{m}$ are given by
$$\hatlgcsdM[p]{k\ell}{\LGp}{\omega}{m} \defeq
\hatlgccr[p]{k\ell}{\LGp}{0} + \sumss{h=1}{m} \lambdazM{h}{m}\cdot
\hatlgccr[p]{\ell k}{\LGpd}{h} \cdot \ez[+2\pi i\omega h]{} +
\sumss{h=1}{m} \lambdazM{h}{m}\cdot \hatlgccr[p]{k\ell}{\LGp}{h} \cdot
\ez[-2\pi i\omega h]{},$$ and then keeping in mind that the lag-window
function $\lambdazM{h}{m}$ satisfies
$\lambdazM{h}{m+1}\geq\lambdazM{h}{m}$.  It follows that
$D\!\parenR{\hatlgcsdM{k\ell}{\LGp}{\omega}{m+1}} \geq
D\!\parenR{\hatlgcsdM{k\ell}{\LGp}{\omega}{m}}$, which does not
provide any useful new information.

Instead of a plot of the norms
$D\!\parenR{\hatlgcsdM{k\ell}{\LGp}{\omega}{m}}$, it is slightly more
interesting to consider a plot that shows
$D\!\parenR{\hatlgcsdM{k\ell}{\LGp}{\omega}{m+1}-\hatlgcsdM{k\ell}{\LGp}{\omega}{m}}$,
i.e.,\ the distances between $\hatlgcsdM{k\ell}{\LGp}{\omega}{m+1}$ and
$\hatlgcsdM{k\ell}{\LGp}{\omega}{m}$ in the Hilbert space of Fourier
series.  This idea is shown in \cref{fig:m_sensitivity_bivariate} for
the three diagonal points and 200 lags that was included in
\cref{fig:EuStockMarkets_lags200}.  Note that
\cref{fig:m_sensitivity_bivariate} takes into account the scaling due
to the lag-window function $\lambdazM{h}{m}$, and as such it does
provide some new information compared to that contained in the plot
showing the estimated local Gaussian cross-correlations.

\begin{figure}
  {\centering
    \includegraphics[width=1\linewidth]{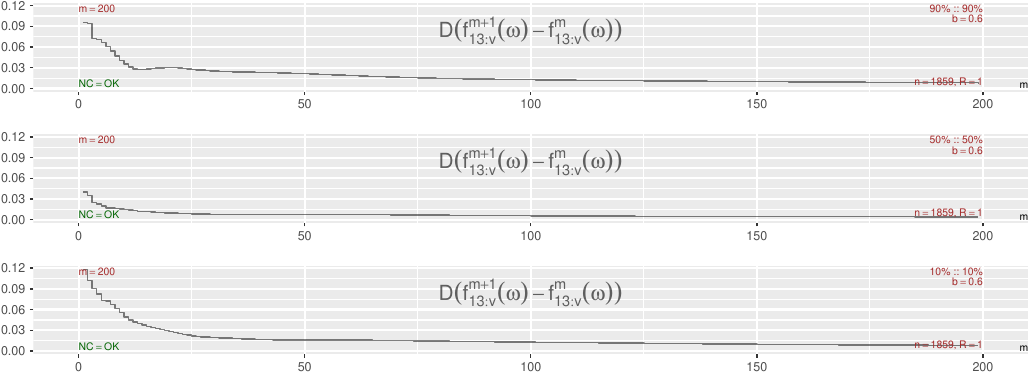}}
  \caption{Distances between successive $m$-truncations of the local
    cross-spectra $\hatlgcsdM{k\ell}{\LGp}{\omega}{m}$ between the
    DAX- and CAC-components of the \EuStockMarkets-data.}
  \label{fig:m_sensitivity_bivariate}
\end{figure}

The three subplots of \cref{fig:m_sensitivity_dmbp_percentages} shows
that
$D\!\parenR{\hatlgcsdM{k\ell}{\LGp}{\omega}{m+1}-\hatlgcsdM{k\ell}{\LGp}{\omega}{m}}$
rather quickly starts to decrease monotonically, which is as expected
given the presence of the lag-window function $\lambdazM{h}{m}$.  This
decrease implies that the effect of a change in the truncation level
from $m$ to $m+1$ becomes smaller as $m$ grows, and the sensitivity is
thus largest when $m$ is small.

\cref{fig:m_sensitivity_dmbp_percentages} might indicate that the
$m=10$ used in the main part is a bit too small.  However, the purpose
of that particular truncation level was simply to show that even a low
truncation level could be used to detect the presence of nonlinear
dependency structures in the time series under investigation, i.e.,\
structures not detected by the ordinary spectrum.

It is natural to assume that two \textit{successive} local Gaussian
spectra $\hatlgcsdM{k\ell}{\LGp}{\omega}{m}$ and
$\hatlgcsdM{k\ell}{\LGp}{\omega}{m+1}$ should be similar in shape when $m$
has grown a bit, but this does not imply that the
\textit{accumulated} changes to $\hatlgcsdM{k\ell}{\LGp}{\omega}{m}$ are
negligible.  It is thus important to also inspect the
frequency-dimension, and this can as mentioned above easily be done
by the interactive \Rref{shiny}-application in the
\lgsdRpackage-package.

\textbf{Heatmap plots:} The truncation level $m$ is a discrete tuning
parameter, and an inspection based on a heatmap-based approach could
thus follow the setup used for the estimated
$\hatlgccr{k\ell}{\LGp}{h}$-values seen in
\cref{fig:EuStockMarkets_diagonal_points_P1_fig_D3,fig:EuStockMarkets_bandwidth_P1_fig_D6}.
The \Rpackage \lgsdRpackage\ contains a script that can be used to
create such a heatmap-based plot for
$\hatlgcsdM{k\ell}{\LGp}{\omega}{m}$, with the frequencies $\omega$
along one axis and the truncation levels $m$ along the other.

The resulting heatmap-based plot clearly showed that the peak seen in
the \Co-spectra part of \cref{fig:EuStockMarkets} at $\omega=0$ (for
$m=10$ and a point either in the lower or upper tail) became even more
dominating as $m$ increased, and the peak dominated to such an extent
that the heatmap-based plot did not reveal anything about the other
frequencies.  This plot has thus not been included here, but the
script is available in \lgsdRpackage, cf.\
\cref{P2.app:The.scripts.in.lgsdRpackage} for details.

\section{How to select the tuning parameters?}
\label{P2.How.to.select.the.tuning.parameters?}
\setcounter{figure}{0} 

The sensitivity analysis in \cref{P2.app:sensitivity_analysis}
considered the effect of minor changes to the tuning parameters
$\bm{b}$ and $m$, and it did also discuss the sensitivity of
$\hatlgcsdM{k\ell}{\LGp}{\omega}{m}$ that is due to the position of
the point $\LGp$ --- which is of interest to know when a given
sample/model is to be investigated.  But what should the tuning
parameters be for an actual investigation?

This topic was discussed in
\lgsdRef{How.to.select.the.tuning.parameters?} for the local Gaussian
auto-spectra $\hatlgcsdM{kk}{\LGp}{\omega}{m}$, and this discussion
extends without any changes to the present multivariate case.  The
proposed strategy has been summarised below.

First of all, the \Rpackage \lgsdRpackage\ can compute
$\hatlgcsdM{k\ell}{\LGp}{\omega}{m}$ for a wide range of tuning
parameters, and for a large selection of different points $\LGp$.  The
computation will then also provide estimates of the univariate
components, and all the relevant plots (of local Gaussian auto- and
cross-correlations, and local Gaussian auto- and cross-spectra) can be
investigated interactively by the \Rref{shiny}-application that is a
part of \lgsdRpackage.

The computational cost for a single estimate
$\hatlgcsdM{k\ell}{\LGp}{\omega}{m}$ is normally not too large
(assuming the truncation level $m$ is not very high), but the total
time can become quite large when a wide range of points $\LGp$ and
bandwidths $\bm{b}$ are to be investigated at the same time.  The
creation of pointwise confidence intervals might on top of this
require a large number of replicates $R$ (or bootstrapped samples).

An investigation can start by looking at a single sample (real data,
or a sample generated by a parametric model).  It is reasonable to
first look at a coarse grid based on a low number of points $\LGp$.
The bandwidths $\bm{b}$ should also be restricted to only a few values
in this part.  The truncation level $m$ can be rather low (range
10-30), since it seems to be the case that interesting features can be
detected even for this range of truncation levels.

If the initial investigation detects the presence of non-Gaussian
dependency structures in the sample, then the next step is to select a
(reasonable) bandwidth $\bm{b}$ and a (not too high) truncation level
$m$, and then create heatmap- and distance-based plots of the
estimates $\hatlgcsdM{k\ell}{\LGp}{\omega}{m}$ for a wide range of
points $\LGp$.

Select a collection of points $\LGp$ that it could be of particular
interest to investigate further, and then investigate e.g.\ $R=100$
replicates (or bootstrap samples) in order to produce pointwise
confidence intervals for the resulting estimates of the local Gaussian
auto- and cross-spectra.  (See the discussion in
\cref{P2.app:regarding_resampling} for details related to the
bootstrapping.)

\textbf{Some comments regarding the bandwidth $\bm{b}$:} %
The discussion about the $\bm{b}$-sensitivity of the estimated local
Gaussian cross-spectra $\hatlgcsdM{k\ell}{\LGp}{\omega}{m}$ in
\cref{P2.app:Bandwidth_sensitivity}, see
\cref{fig:EuStockMarkets_bandwidth_P1_fig_D5__LS_c_Co_II,fig:EuStockMarkets_bandwidth_HCD_Amplitude+Phase,fig:EuStockMarkets_bandwidth_P1_fig_D6,fig:EuStockMarkets_bandwidth_P1_fig_D7},
indicate that the value used for the bandwidth $\bm{b}$ might not be
that critical, unless it is \enquote{too short} or \enquote{too long}.
Insofar the purpose of the investigation is the detection of
non-Gaussian dependency structures, it seems to be the case that a
reasonable range of bandwidths will return the same conclusion.  This
implies that it instead of a \enquote{select the optimal
  bandwidth}-algorithm, should be sufficient to perform an
investigation for a few bandwidths in order to see if any findings are
consistent across those.

The interested reader can find a more in detail discussion related to
the selection of the bandwidth $\bm{b}$ in
\lgsdRef{sec:Some.comments.regarding.the.bandwidth}, including a short
discussion of some papers that have discussed the bandwidth selection
for the estimation of the local Gaussian correlation~$\lgcor{\LGp}$.

\section{Regarding sampling and resampling}
\label{P2.app:regarding_resampling}
\setcounter{figure}{0} 

The section corresponds to \lgsdRef{app:regarding_resampling}, and the
topic of interest is how sampling and resampling strategies can be
used for different local Gaussian spectral investigations of a time
series.

\Cref{P2.app:resampling_parametric_bootstrap} focuses on the
parametric bootstrap approach, where repeated independent simulations
are done from given parametric models fitted to a given sample. It is
here seen how the univariate local Gaussian sanity testing of
parametric models from \lgsdRef{app:resampling_parametric_bootstrap}
can be extended to cover the multivariate case.  This part includes an
investigation of the DAX- and CAC-margins of the \EuStockMarkets
example, and then new plots are presented for the sanity testing of
the cross-interaction of the multivariate parametric models fitted to
the \EuStockMarkets data.

The discussion in
\cref{P2.app:nonparametric_resampling_strategies,P2.app:Block_length_sensitivity}
focuses on the resampling problem, which is critical for the
production of the pointwise confidence intervals based on a given
sample.  \Cref{P2.app:nonparametric_resampling_strategies} gives a
summary of the discussion in \lgsdRef{app:regarding_resampling} that
motivated the introduction of the \textit{Circular index-based block
  bootstrap for tuples}, with some additional comments about the role
of the block length $L$.  Finally,
\cref{P2.app:Block_length_sensitivity} presents a sensitivity analysis
of the block length $L$ for a multivariate example, which shows that
this parameter have a minor effect on the resulting estimated
pointwise confidence intervals.

\subsection{Parametric bootstrap and local sanity-testing of models}
\label{P2.app:resampling_parametric_bootstrap}

The discussion in \lgsdRef{app:resampling_parametric_bootstrap}
(motivated by a similar approach in \citet{BIRR2019122}) explained how
a parametric bootstrap approach could be used to investigate models
fitted to real data.  The key idea is that a parametric model first is
fitted to the original sample, and samples from that particular fitted
model is thereafter used to produce the local Gaussian estimates
needed for the comparison with the actual sample under investigation.
This enables a local sanity-test (based on the $m$-truncated local
Gaussian spectra) of the fitted model, since it can be used to
identify if there are any points/frequencies with a clear mismatch
between the local structures detected in the original sample and those
seen in samples from the fitted model.

A key restriction in the above mentioned comparison is that the number
of observations simulated from the fitted model should equal the
number of observations in the sample. It is of course also possible to
compare different fitted models against each other directly, in which
case the length of the simulated samples can be selected freely.
Such an approach might be useful in order to check if different
model-fitting approaches might have deviating local Gaussian
dependency structures.  The plots from the main part (and
\cref{P2.app:sensitivity_analysis}) can be used when comparing
different fitted models.

It might also be of interest to simulate samples of different lengths
from one particular known parametric model, and then fit different
parametric models to those samples.  The plots discussed in
\cref{P2.app:the_univariate_marginals,P2.app:the_multivariate_marginals}
might be helpful for such an investigation.

\subsubsection{The example under consideration}
\label{P2.app:parametric_bootstrap_the_example_under_consideration}

The univariate example in
\lgsdRef{app:resampling_parametric_bootstrap} used the
\texttt{dmbp}-data as the known sample.  The fitted model was a
GARCH-type model, more precisely an \textit{asymmetric power
  ARCH-model} (apARCH) of order $(2,3)$, with parameters based on a
fitting to the \texttt{dmbp}-data.\footnote{%
  The \Rpackage \Rref{rugarch}, \citet{ghalanos15:_rugarch} was used to
  find the parameters of several GARCH-models, and the asymmetric
  power ARCH model with the best fit was then selected.  See
  \lgsdRef{sec:GARCH_model} for further details.}  The univariate
approach 
will now be extended to the multivariate case, with a repetition of
the setup from the univariate case in
\cref{P2.app:the_univariate_marginals} (since any investigation of a
multivariate parametric model fitted to a multivariate sample should
include a comparison of the univariate components).

The multivariate example considered in this paper uses the
\EuStockMarkets-data as the known sample, and in particular the
restriction to the bivariate subset based on the DAX- and
CAC-components (see
\cref{fig:BIRR_P1_fig_F1_Y1Y1,fig:BIRR_P1_fig_F1_Y3Y3,fig:BIRR_Co_P1_fig_F1_Y1Y3,fig:BIRR_Quad_P1_fig_F1_Y1Y3,fig:BIRR_amplitude_P1_fig_F1_Y1Y3,fig:BIRR_phase_P1_fig_F1_Y1Y3}).
As described in \cref{sec:lgch:cGARCH}, the fitted parametric model in
this case is a basic multivariate copula GARCH-model --- which was
selected only in order to provide a proof of concept.  From this
starting point
it is natural to expect that the present \textit{local Gaussian sanity
  investigation} should reveal that this particular multivariate
copula GARCH-model does not properly capture the local Gaussian
structures seen in the original sample.

\subsubsection{The univariate marginals}
\label{P2.app:the_univariate_marginals}

\Cref{fig:BIRR_P1_fig_F1_Y1Y1,fig:BIRR_P1_fig_F1_Y3Y3} consider
respectively the univariate DAX- and CAC-components of the
\EuStockMarkets-data, and these use the comparison of fitted models
and sample that was introduced in
\lgsdRef{fig:aparch_dmbp_comparison}.  The description below is in
essence the same as the one given in
\lgsdRef{app:resampling_parametric_bootstrap}, and it explains how
these plots are to be interpreted.

\begin{figure}
  {\centering
    \includegraphics[width=1\linewidth]{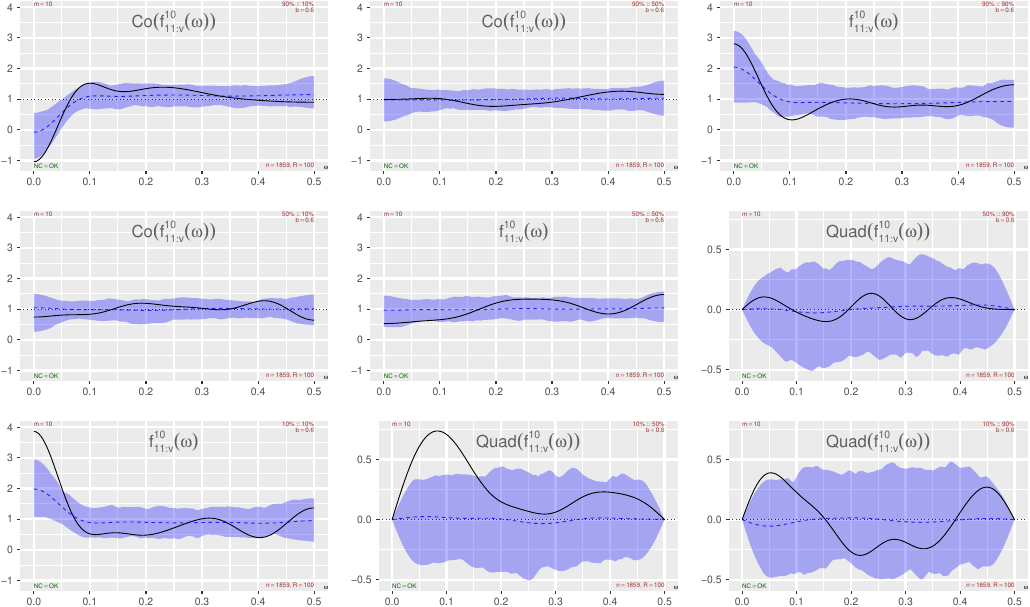}}

  \caption{Marginal investigation: cGARCH-model fitted to
    \EuStockMarkets, the part corresponding to the DAX-component.}

  \label{fig:BIRR_P1_fig_F1_Y1Y1}
\end{figure}

\begin{figure}
  {\centering
    \includegraphics[width=1\linewidth]{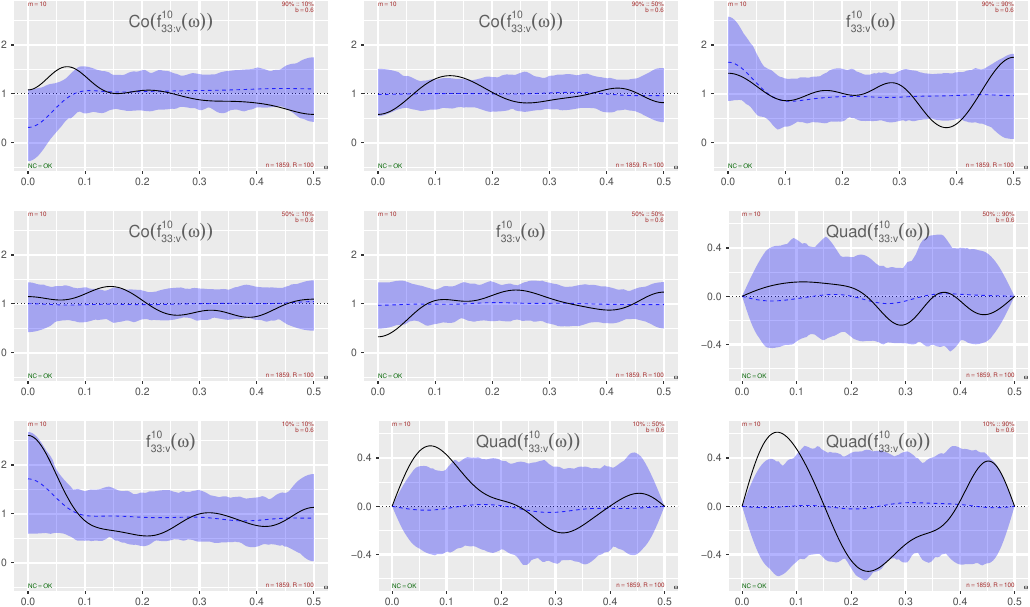}}

    \caption{Marginal investigation: cGARCH-model fitted to
    \EuStockMarkets, the part corresponding to the CAC-component.}

  \label{fig:BIRR_P1_fig_F1_Y3Y3}
\end{figure}

The key idea in \cref{fig:BIRR_P1_fig_F1_Y1Y1,fig:BIRR_P1_fig_F1_Y3Y3}
is that estimates of the $m$-truncated local Gaussian auto-spectra
$\lgcsdM{kk}{\LGp}{\omega}{m}$, based on the original sample, can be
superimposed on the plots based on parametric bootstrapping from the
fitted model, and this makes it easy to compare them.

Nine different points $\LGp=\LGpoint$ are considered in
\cref{fig:BIRR_P1_fig_F1_Y1Y1,fig:BIRR_P1_fig_F1_Y3Y3}, and these are
based on the combinations that can be created when $\LGpi{1}$ and
$\LGpi{2}$ varies over the 10\%, 50\% and 90\% percentiles of the
standard normal distribution.  The corresponding plots are ordered in
a grid in accordance with the position of these nine points in the
plane, as can be seen by the information about $\LGp$ in the upper
right corner of the respective plots.

The estimates $\lgcsdM{kk}{\LGp}{\omega}{m}$ are always real-valued on
the diagonal (since $\lgccr{kk}{\LGp}{-h}=\lgccr{kk}{\LGp}{h}$ when
the point $\LGp$ is on the diagonal), and the three diagonal plots are
thus marked with $\lgcsdM{kk}{\LGp}{\omega}{m}$ in order to signal
that the content is the local Gaussian autospectrum.  The solid lines
represents the estimates based on the original sample, whereas the
dotted lines and the 95~\% pointwise confidence intervals are based on
samples from the fitted model.

The estimates $\lgcsdM{kk}{\LGp}{\omega}{m}$ are complex-valued for the
six off-diagonal points, and in this case the \Rpackage
\lgsdRpackage\ follows the convention used for the complex-valued
cross-spectra, viz.,
$\operatorname{Co}\!\parenR{\lgcsdM{kk}{\LGp}{\omega}{m}} =
\operatorname{Re}\!\parenR{\lgcsdM{kk}{\LGp}{\omega}{m}}$ and
$\operatorname{Quad}\!\parenR{\lgcsdM{kk}{\LGp}{\omega}{m}} =
- \operatorname{Im}\!\parenR{\lgcsdM{kk}{\LGp}{\omega}{m}}$. 

The off-diagonal points are symmetric around the diagonal, i.e.,\ both
$\LGp=\LGpoint$ and its diagonal reflection $\LGpd=\LGpointd$ are
present.  The property
$\lgcsd{kk}{\LGp}{\omega} = \overline{\lgcsd{kk}{\LGpd}{\omega}}$,
implies that it is sufficient to plot
$\operatorname{Co}\!\parenR{\lgcsdM{kk}{\LGp}{\omega}{m}}$ on one side
of the diagonal and
$\operatorname{Quad}\!\parenR{\lgcsdM{kk}{\LGp}{\omega}{m}}$ on the
other side.

Finally, the same scale is used for all plots showing real values,
whereas another scale is used for the plots related to the imaginary
parts.  This distinction is natural since the scale needed for the
imaginary part can be much smaller,
as can be seen in \cref{fig:BIRR_P1_fig_F1_Y1Y1,fig:BIRR_P1_fig_F1_Y3Y3}.

It was in \lgsdRef{app:resampling_parametric_bootstrap} seen that the
local Gaussian dependency structure of the fitted apARCH$(2,3)$-model
agreed quite well with the one observed in the \texttt{dmbp}-data.
This did not come as a surprise, since the selected
apARCH$(2,3)$-model in that case was obtained after a testing
procedure that tried out several thousand different variations of the
GARCH-type models implemented in the \Rref{rugarch}-package.

The situation in this paper is completely contrary to the one in
\JT, since a default model from the
\Rref{rmgarch}-package, see \citet{ghalanos15:_rmgarch}, was fitted to
the \EuStockMarkets-data in order to simply provide a
proof-of-principle parametric model for how this kind of investigation
can be performed.  It is thus natural to expect that an investigation
of \cref{fig:BIRR_P1_fig_F1_Y1Y1,fig:BIRR_P1_fig_F1_Y3Y3} would
indicate that the local Gaussian dependency structures of the fitted
marginal models do not properly match those seen in the DAX- and
CAC-components of the \EuStockMarkets-data --- and this is
indeed what is observed when the dashed and solid lines in
\cref{fig:BIRR_P1_fig_F1_Y1Y1,fig:BIRR_P1_fig_F1_Y3Y3} are compared
against each other.

The plots related to the real parts in
\cref{fig:BIRR_P1_fig_F1_Y1Y1,fig:BIRR_P1_fig_F1_Y3Y3} show that there
is a mismatch for the lower tail for low frequencies, in particular
for the DAX-component as seen in \cref{fig:BIRR_P1_fig_F1_Y1Y1}.  The
plots related to the imaginary parts also indicate that the fitted
models do not properly reflect the dependency structure from the DAX-
and CAC-components --- but it should here be noted that different
scales are used for the two groups of plots (real-valued versus
complex valued-parts), and it is more critical when a deviation is
observed within the plots having the dominating scale.

The observed mismatch between the fitted marginal models and the DAX-
and CAC-components of the \EuStockMarkets-data seen in
\cref{fig:BIRR_P1_fig_F1_Y1Y1,fig:BIRR_P1_fig_F1_Y3Y3} clearly
indicates that the selected marginal models should be replaced.  This
is (as mentioned above) not a surprising revelation, since the present
marginal models were selected simply in order to present a
proof-of-principle for the methods discussed in this paper.

\subsubsection{The multivariate interdependency structure}
\label{P2.app:the_multivariate_marginals}

This section provides an extension to the multivariate case of the
univariate local Gaussian sanity testing seen in
\cref{P2.app:the_univariate_marginals}.  The target of interest in
this case is the cross-temporal interdependency structure of the
parametric model that has been fitted to a given multivariate sample.
The idea is again to use visual inspections to see if clear deviations
can be seen when the local Gaussian cross-spectra based on the fitted
model are compared with the local Gaussian cross-spectra from the
original sample.

The univariate investigation in
\cref{fig:BIRR_P1_fig_F1_Y1Y1,fig:BIRR_P1_fig_F1_Y3Y3} focused on a
grid of nine points $\LGp=\LGpoint$, where $\LGpi{1}$ and $\LGpi{2}$
varied over the 10\%, 50\% and 90\% percentiles of the standard normal
distribution --- and the multivariate investigation will use the exact
same grid (other points can of course be used if it is of interest to
focus on a particular region).

The \enquote{diagonal folding} seen in
\cref{fig:BIRR_P1_fig_F1_Y1Y1,fig:BIRR_P1_fig_F1_Y3Y3} can not be used
for the investigation of the cross-temporal case, since the local
Gaussian cross-spectrum $\lgcsdM{k\ell}{\LGp}{\omega}{m}$ are complex
valued even for the diagonal points.  This implies that it, similarly
to the situation encountered for the heatmap- and distance plots (see
discussion in \cref{P2.app:Point_sensitivity}), can be natural to
include pair of plots that either focus on the \textit{Cartesian
  decomposition} or the \textit{polar decomposition} of the
complex-valued local Gaussian cross-spectrum
$\lgcsdM{k\ell}{\LGp}{\omega}{m}$.

A visual investigation based on the \textit{Cartesian decomposition}
are given in
\cref{fig:BIRR_Co_P1_fig_F1_Y1Y3,fig:BIRR_Quad_P1_fig_F1_Y1Y3}, which
respectively present the \Co- and \Quad-spectra components of
$\lgcsdM{k\ell}{\LGp}{\omega}{m}$.  These figures gathers local
Gaussian spectra from the original data (shown with solid lines) and
local Gaussian spectra based on simulations from the fitted models
(shown with dashed lines and pointwise confidence intervals).  The
same internal scale is used for all the subplots, since that
simplifies the task of detecting different behaviour between the
points.

\begin{figure}
  {\centering
    \includegraphics[width=1\linewidth]{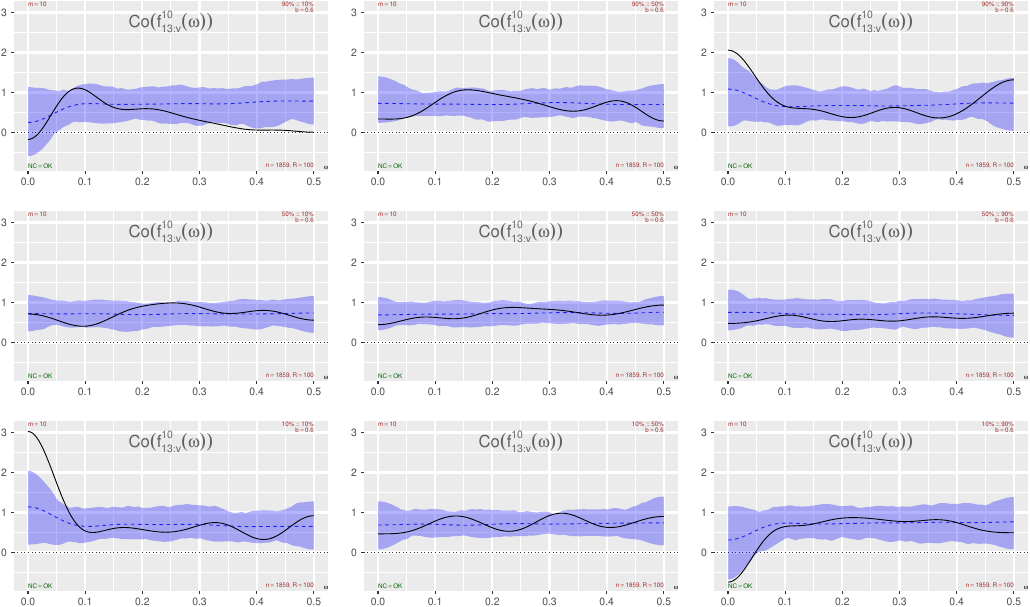}}

  \caption{\Co-spectra based local Gaussian investigation of the
    cGARCH-model fitted to the \EuStockMarkets-data.  Focus on the
    interaction between the DAX- and CAC-components of
    \EuStockMarkets}
  
  \label{fig:BIRR_Co_P1_fig_F1_Y1Y3}
\end{figure}

\begin{figure}
  {\centering
    \includegraphics[width=1\linewidth]{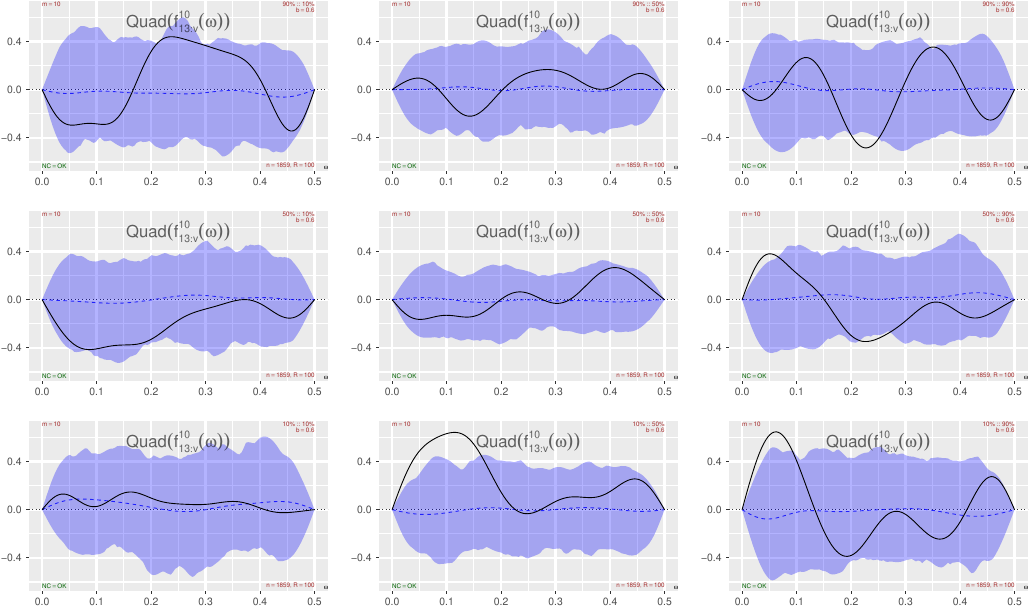}}
  \caption{\Quad-spectra based local Gaussian investigation of the
    cGARCH-model fitted to the \EuStockMarkets-data.  Focus on the
    interaction between the DAX- and CAC-components of
    \EuStockMarkets}
  \label{fig:BIRR_Quad_P1_fig_F1_Y1Y3}
\end{figure}

The scales for the \Co-plots in \cref{fig:BIRR_Co_P1_fig_F1_Y1Y3} are
much larger than those for the \Quad-plots in
\cref{fig:BIRR_Quad_P1_fig_F1_Y1Y3}, and it is thus natural to focus on the
\Quad-plots in this case.  An inspection of
\cref{fig:BIRR_Co_P1_fig_F1_Y1Y3}, in particular the points in the
tails, shows that the local Gaussian cross-spectra based on the fitted
parametric model deviates quite a bit from those computed from the
original \EuStockMarkets-sample.  This is not surprising at all, since
the most basic parametric model was used simply in order to provide a
proof-of-concept for the present investigation.

The \textit{polar decomposition} of the complex-values local Gaussian
cross-spectra $\lgcsdM{k\ell}{\LGp}{\omega}{m}$ into \Amplitude- and
\Phase-spectra are given in
\cref{fig:BIRR_amplitude_P1_fig_F1_Y1Y3,fig:BIRR_phase_P1_fig_F1_Y1Y3}.
The \Amplitude-plots in \cref{fig:BIRR_amplitude_P1_fig_F1_Y1Y3} are
quite similar to the \Co-plots seen in
\cref{fig:BIRR_Co_P1_fig_F1_Y1Y3}, which is as expected since the
scales of the \Co-plots are much larger than those for the
\Quad-plots.  For this particular example it follows that the same
conclusions can be drawn from
\cref{fig:BIRR_Co_P1_fig_F1_Y1Y3,fig:BIRR_amplitude_P1_fig_F1_Y1Y3},
i.e.\ that it would be preferable to look for a better model than the
trivial one used as a proof-of-concept example in this paper.

\begin{figure}
  {\centering
    \includegraphics[width=1\linewidth]{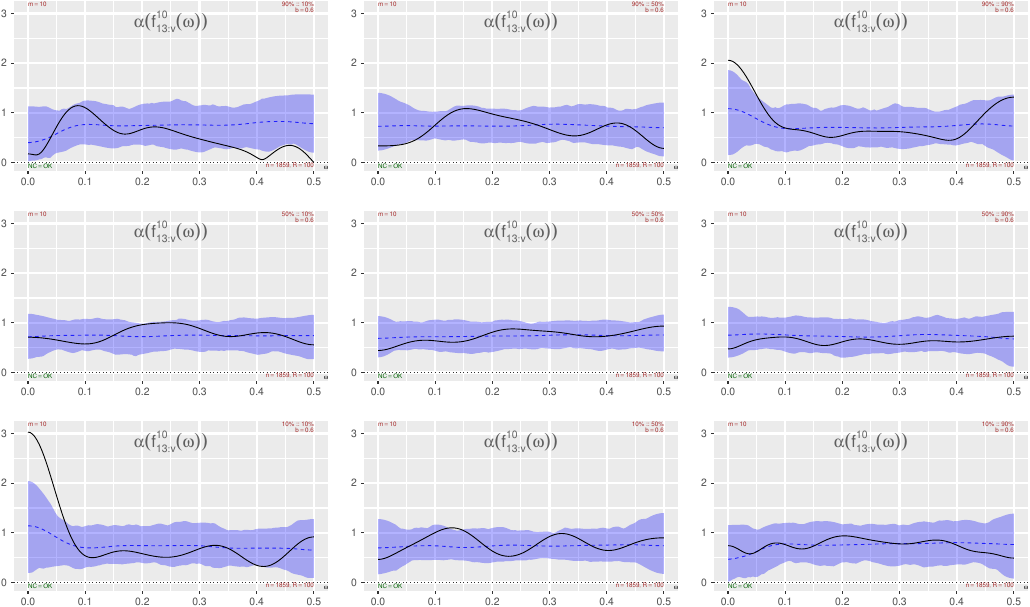}}

  \caption{\Amplitude-spectra based local Gaussian investigation of the
    cGARCH-model fitted to the \EuStockMarkets-data.  Focus on the
    interaction between the DAX- and CAC-components of
    \EuStockMarkets}
  \label{fig:BIRR_amplitude_P1_fig_F1_Y1Y3}
\end{figure}

\begin{figure}
  {\centering
    \includegraphics[width=1\linewidth]{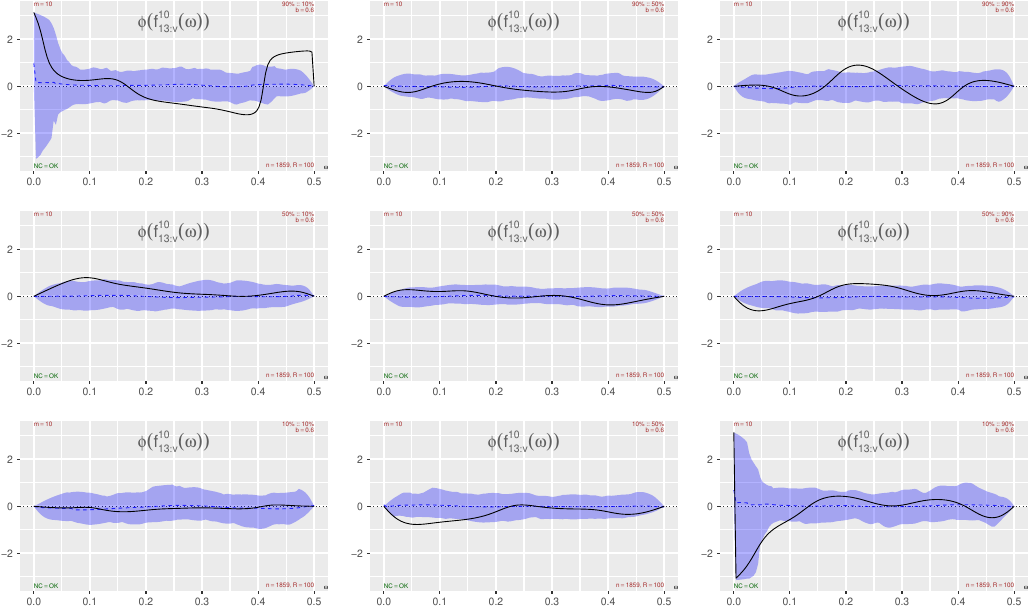}}
  \caption{\Phase-spectra based local Gaussian investigation of the
    cGARCH-model fitted to the \EuStockMarkets-data.  Focus on the
    interaction between the DAX- and CAC-components of
    \EuStockMarkets}
  \label{fig:BIRR_phase_P1_fig_F1_Y1Y3}
\end{figure}

Note that the similarity between
\cref{fig:BIRR_Co_P1_fig_F1_Y1Y3,fig:BIRR_amplitude_P1_fig_F1_Y1Y3} is
due to the real part of $\lgcsdM{k\ell}{\LGp}{\omega}{m}$ are the
dominating component, and it could be that would not always be the
case.  Furthermore, the \Phase-spectra seen in
\cref{fig:BIRR_phase_P1_fig_F1_Y1Y3} can be quite hard to interpret if
the observations of $\lgcsdM{k\ell}{\LGp}{\omega}{m}$ lies in the
third and fourth quadrant of the complex plane.  Due to this it seems
natural to consider a \Co- and \Quad- based investigation of
$\lgcsdM{k\ell}{\LGp}{\omega}{m}$ instead of an \Amplitude- an
\Phase-investigation when a local Gaussian sanity testing is performed
for a parametric model of the the cross-temporal interdependency
structure for a given multivariate sample.

\subsection{Nonparametric bootstrapping techniques}
\label{P2.app:nonparametric_resampling_strategies}

The local Gaussian estimates $\hatlgcsdM{k\ell}{\LGp}{\omega}{m}$,
with pointwise confidence intervals, are straightforward to construct
when a given parametric model is investigated --- simply generate a
sufficiently high number of samples from the model, compute
$\hatlgcsdM{k\ell}{\LGp}{\omega}{m}$ for each sample, and then (for
each frequency $\omega$) use suitable quantiles (like, e.g., the 5\%
and 95\%) to construct the pointwise confidence intervals.

Pointwise confidence intervals are also of interest when the local
Gaussian estimates $\hatlgcsdM{k\ell}{\LGp}{\omega}{m}$ are based on a
given sample (like the \EuStockMarkets-data), and such confidence
intervals can, e.g., be seen in \cref{fig:EuStockMarkets} in the main
part.  These intervals are again (for each frequency $\omega$) based
on quantiles of $\hatlgcsdM{k\ell}{\LGp}{\omega}{m}$ from a collection
of samples, but these samples must now be constructed by a suitable
resampling strategy.

The initial resampling strategy was the block bootstrap from
\citet{kuensch89:_jackk_boots_gener_station_obser}.  The block
bootstrap can (as explained below) be applied for the estimators
underlying the local Gaussian estimates
$\hatlgcsdM{k\ell}{\LGp}{\omega}{m}$, but it was early on observed
that it was necessary with rather heuristic arguments regarding the
selection of the block length argument $L$ for the time series samples
of interest (cases where $Y_{t}$ and $Y_{t+h}$ are dependent but not
correlated).

Comments received during the review-process of \JT initiated an
investigation of how estimates based on the block bootstrap method can
suffer from edge-effect noise when it is used on smaller sample sizes
--- and this resulted in the development of the \textit{circular
  index-based block bootstrap for tuples} resampling strategy in
\lgsdRef{def:index_based_block_bootstrap_for_tuples}

A summary of this story will be given below. The interested reader
should consult
\lgsdRef{app:nonparametric_resampling_strategies,app:adjusted_resampling_algorithm}
for an in depth discussion of different bootstrapping methods, with
emphasis on the problems that occur due to edge-effect noise.  The
revised resampling algorithm is based on a tweaking of the block
bootstrap, and it thus inherits the block length argument $L$.  Note
that the block length $L$ plays a different role in this algorithm,
and that the sensitivity analysis (based on the \EuStockMarkets-data)
in \cref{P2.app:Block_length_sensitivity} shows that the resulting
pointwise confidence intervals for
$\hatlgcsdM{k\ell}{\LGp}{\omega}{m}$ are quite stable.

\subsubsection{Summary of \lgsdRef{app:nonparametric_resampling_strategies}}
\label{P2.summary.P1.App.F3}

The discussion in \lgsdRef{app:nonparametric_resampling_strategies}
started out with an explanation of why resampling based on the block
bootstrap should be an acceptable strategy for the estimators needed
for the computation of $m$-truncated estimates of the local Gaussian
spectra $\lgcsdM{k\ell}{\LGp}{\omega}{m}$, then it addressed the
problem that edge-effects between the blocks could potentially create
serious problems for the estimates
$\hatlgcsdM{k\ell}{\LGp}{\omega}{m}$ when used on the \enquote{short}
\texttt{dmbp}-sample (length 1974) that was investigated in \JT.
Potential solutions to these problems were addressed, including the
block-of-blocks bootstrap (where edge-effects do not occur), but
these solutions did not provide an optimal approach for a statistic
computed by means of an algorithm that contains a kernel function.

\textbf{Justification for the block bootstrap:} The estimates
$\hatlgcsdM{k\ell}{\LGp}{\omega}{m}$ of the local Gaussian spectral
densities are based on linear combinations of the estimates
$\TSR{\hatlgccr{k\ell}{\LGp}{h}}{h=-m}{m}$ of the local Gaussian
cross-correlations. These are all estimated by a local likelihood
approach --- and the asymptotic properties of the estimates
$\hatlgcsdM{k\ell}{\LGp}{\omega}{m}$ were developed in \JT using the
procedure from \citet{klimko1978}.
A statistic obtained from the Klimko-Nelson procedure was explicitly
mentioned by K{\"u}nsch as an example for which the block bootstrap
method would be applicable, cf.\ \citet[Example 2.4, p.\
1219-20]{kuensch89:_jackk_boots_gener_station_obser}, and a resampling based on the block bootstrap was thus initially
used for the construction of the pointwise confidence intervals for
the \texttt{dmbp}-example seen in \lgsdRef{fig:dmbp}.

\textbf{Edge-effect noise:} The blocks in the block bootstrap
algorithm will introduce noise into the estimation procedure, since
sequences will occur in the resampled data that was not present in the
original data.  The following paragraphs from
\lgsdRef{sec:edge_effects_between_blocks} introduce the
\enquote{corrupt tuple}-concept and some notation needed later on.

\begin{blockquote} %
  {For example, if a time series $\TSR{\Yz{t}}{t=1}{n}$ of length $n$
  is given, then an estimate of $\lgacr{\LGp}{h}$ will be based on the
  bivariate set
  $\mathcalYz{h}\defeq\TSR{\parenR{\Yz{t+h},\Yz{t}}}{t=1}{n-h}$ of
  size $n-h$.  When the block bootstrap is used with some block length
  $L$, then there will be a resampled sequence
  $\TSR{\Yz[*]{t}}{t=1}{n}$ and the idea is that an estimate of
  $\lgacr{\LGp}{h}$ now should be computed based on the bivariate set
  $\mathcalYz[*]{h:L}\defeq\TSR{\parenR{\Yz[*]{t+h},\Yz[*]{t}}}{t=1}{n-h}$.

  However, the set $\mathcalYz[*]{h:L}$ will contain \textit{corrupt
    tuples} that do not exist in $\mathcalYz{h}$, i.e.,\ the first and
  second component of $\parenR{\Yz[*]{t+h},\Yz[*]{t}}$ can belong to
  different blocks, and this will add a bit of \textit{edge-effect
    noise} into the estimation process.  The edge-effect noise is
  negligible in the asymptotic situation (very large sample sizes $n$
  and large block lengths $L$), but it can make an impact when smaller
  samples are investigated.}
\end{blockquote}

The example used in \JT was the \texttt{dmbp}-data ($n=1974$ unique
observations), and the plots of the estimated local Gaussian
autospectra $\hatlgsdM{\LGp}{\omega}{m}$ used the truncation level
$m=10$.  It was thus of particular interest to compute (for different
block lengths $L$) the expected amount of \textit{corrupt tuples} when
estimating the local Gaussian autocorrelations $\lgacr{\LGp}{h}$ for
$h=1,\dotsc,10$ --- and a copy of the resulting table from \JT is
given in \cref{P2.app:table:corrupt.tuples.block.bootstrap}.

\begin{table}[ht]
  \centering
  \begingroup\tiny
  \begin{tabular}{r|llllllllll}
    \hline
    $L$ \textbackslash\ $h$ & 1 & 2 & 3 & 4 & 5 & 6 & 7 & 8 & 9 & 10 \\ 
    \hline
    25 & 4.0\% & 7.9\% & 11.9\% & 15.8\% & 19.8\% & 23.8\% & 27.8\% & 31.7\% & 35.7\% & 39.7\% \\ 
    100 & 1.0\% & 1.9\% & 2.9\% & 3.9\% & 4.8\% & 5.8\% & 6.8\% & 7.7\% & 8.7\% & 9.7\% \\
    \hline
  \end{tabular}
  \endgroup
  \caption{Copy of \lgsdRef{app:table:corrupt.tuples.block.bootstrap}.
    The expected fraction of corrupt tuples when $\lgacr{\LGp}{h}$ are
    estimated from block bootstrap replicates of the
    \texttt{dmbp}-data ($n=1974$), when $L\in\parenC{25,100}$ and
    $h\in\parenC{1,\dotsc,10}$.}
  \label{P2.app:table:corrupt.tuples.block.bootstrap}
\end{table}

\Cref{P2.app:table:corrupt.tuples.block.bootstrap} shows that the
amount of noise due to edge-effects can be severe when the block
bootstrap is used with a too short block length, and this was also
seen when attempts were made at making pointwise confidence intervals
for $\lgsdM{\LGp}{\omega}{m}$ based on this.  The resulting
\enquote{confidence intervals} for small block lengths $L$ did in some
cases (for low frequencies $\omega$) not even contain the actual
estimate $\hatlgsdM{\LGp}{\omega}{m}$ based on the sample under
investigation.

The observations in \cref{P2.app:table:corrupt.tuples.block.bootstrap}
indicated that an adjusted resampling technique was needed, preferably
one that completely (or at least partially) removed the corrupt tuples
from the estimation algorithm.  Two different approaches that
completely avoids the corrupt tuples was discussed in \JT, but they
had some issues that made them less interesting to implement --- as
briefly explained below.

Resampling from the tuples in
$\mathcalYz{h}\defeq\TSR{\parenR{\Yz{t+h},\Yz{t}}}{t=1}{n-h}$, for
$h=1$ to $h=m$, was considered in
\lgsdRef{sec:natural_solution_to_edge_effect_issue?} as one option for
finding an estimate $\hatlgcsdM{k\ell}{\LGp}{\omega}{m}$ of the $m$
truncated local Gaussian autospectrum
$\lgcsdM{k\ell}{\LGp}{\omega}{m}$.  The resampling must be done in a
manner that properly takes into account the time-connection between
the bivariate pairs $\parenR{\Yz{t+h},\Yz{t}}$ for different
$h$-values, but it is easy to find an algorithm that does this for a
selected truncation level $m$: Simply consider the bivariate tuples in
$\mathcalYz{h}$ as projections from the $\parenR{m+1}$-variate tuples
$\mathcalYz{m:0} =
\TSR{\parenR{\Yz{t+m},\dotsc,\Yz{t+1},\Yz{t}}}{t=1}{n-m}$, and start
out by sampling the desired number of elements from $\mathcalYz{m:0}$.
A shortcoming with this approach is that the resulting estimates
$\hatlgacr[p]{\LGp}{h}$ will depend on the selected $m$-value --- and
this dependency on the truncation level $m$ implies that all the local
Gaussian autocorrelations must be recomputed if the initial truncation
level $m$ is changed to $m+1$, and this is computational costly and
not desirable to implement.

The block-of-blocks bootstrap introduced in
\citet{politis92:_gener_resam_schem_trian_array} was discussed in
\lgsdRef{sec:block_of_blocks_bootstrap}.  This method completely
avoids the edge-effect issue that occurs for the block bootstrap,
since the statistic of interest are computed on the individual blocks.
However, as mentioned in \JT:

\begin{blockquote} %
  {This restriction to individual blocks can be an excellent idea for
    many statistics of interest, but it is a somewhat questionable
    approach for the estimates $\hatlgacr[p]{\LGp}{h}$ of the local
    Gaussian autocorrelations.  The reason for this is that the
    bandwidth argument $\bm{b}$ in the kernel function
    $\Kh[\bm{w}-\LGp]{\bm{b}}$ must be much larger if the estimation
    algorithm is to be used on only a subset of the observations ---
    and the local structures of interest might then not be detected at
    all.

    It would of course be of interest to implement the block-of-block
    bootstrap for the estimates of the local Gaussian spectra if very
    large samples are encountered, i.e.,\ when the individual blocks
    contains several thousand consecutive observations --- but for
    shorter samples (like the \texttt{dmbp}-example) it seems better to
    use something else.}
\end{blockquote}

\subsubsection{Summary of \lgsdRef{app:adjusted_resampling_algorithm}}
\label{P2.summary.P1.App.F4}

The \textit{circular index-based block bootstrap for tuples} was
introduced in \lgsdRef{app:adjusted_resampling_algorithm}.  This is a
minor adjustment of the ordinary block bootstrap, and it inherits the
block length argument $L$ as a tuning parameter (see
\cref{P2.app:Block_length_sensitivity} for a sensitivity analysis).
The key motivation for the revised approach is that an estimate
$\hatlgcsdM{k\ell}{\LGp}{\omega}{m}$ of the $m$-truncated local
Gaussian autospectrum $\lgcsdM{k\ell}{\LGp}{\omega}{m}$ should be
based on $m$ \enquote{linked samples} from
$\mathcalYz{h}\defeq\TSR{\parenR{\Yz{t+h},\Yz{t}}}{t=1}{n-h}$, for
$h=1$ to $h=m$, and that these \enquote{linked samples} should be used
for the estimation of the local Gaussian autocorrelations
$\lgacr{\LGp}{h}$.

This approach avoids the computational cost problem that made the
method mentioned in
\lgsdRef{sec:natural_solution_to_edge_effect_issue?}) less interesting
to implement, i.e., the estimates $\hatlgacr{\LGp}{h}$ do not depend
on the truncation level $m$ --- but the price to pay for this is that
some \enquote{corrupt tuples}
will still be present, and the issue with noise due to edge-effects
does not completely disappear.  However, the expected amount of this
noise for the revised resampling strategy (see
\cref{P2.app:table:corrupt.tuples.ibb.bootstrap} on page
\pageref{P2.app:table:corrupt.tuples.ibb.bootstrap}) is reduced a lot
from the one seen for the resampling based on the block bootstrap (see
\cref{P2.app:table:corrupt.tuples.block.bootstrap} on page
\pageref{P2.app:table:corrupt.tuples.block.bootstrap}).

The following toy-example are copied from \JT in order to explain the
underlying motivation for the \textit{circular index-based block
  bootstrap for tuples} resampling algorithm, i.e., that it is the
indices of the observations that are the key detail of interest to
keep track of.

\begin{blockquote} %
  {It will be a bit easier to digest the definitions and the algorithm
    that are given later on in this section, if a simple toy-example
    is investigated first: Consider a situation with a time series
    having five unique observations
    $\Yz{1},\Yz{2},\Yz{3},\Yz{4},\Yz{5}$ and assume that there is an
    interest for an estimate based on the four lag-1 tuples in
    $\mathcalYz{1} = \TSR{\parenR{\Yz{t+1},\Yz{t}}}{t=1}{4}$.  If a
    block bootstrap with block length $L=2$ is used, the resampled
    time series might e.g.\ look like
    $\Yz[*]{1}=\Yz{4},\Yz[*]{2}=\Yz{5},\Yz[*]{3}=\Yz{3},\Yz[*]{4}=\Yz{4},\Yz[*]{5}=\Yz{2}$,
    and the corresponding set of lag-1 tuples would be
    $\mathcalYz[*]{1:2} =
    \TSR{\parenR{\Yz[*]{t+1},\Yz[*]{t}}}{t=1}{4}$.  It is easy to see
    that $\mathcalYz[*]{1:2}$ in this case will contain the two
    corrupt tuples $\parenR{\Yz{3},\Yz{5}}$ and
    $\parenR{\Yz{2},\Yz{4}}$, i.e.,\ tuples that are not present in
    $\mathcalYz{1}$.

    The key idea in the adjusted algorithm is to move the focus to the
    indices of the original sample, i.e.,\ $1,2,3,4,5$, and then use
    the block bootstrap to sample from these.  The resampled set of
    indices for the example above would be $4,5,3,4,2$, and from these
    it is possible to construct the \textit{cyclically $h=1$ shifted}
    set of indices $5, 1, 4, 5, 3$.  The method is simply to add the
    lag $h=1$ to all the resampled indices --- and to start back on 1
    if a value exceeds $n=5$.  The four desired lag-1 tuples
    $\mathcalYz[\sharp]{1:2} =
    \TSR{\parenR{\Yz[\sharp]{t+1},\Yz[\sharp]{t}}}{t=1}{4}$ are now
    created by using the resampled set of indices in the
    $\Yz[\sharp]{t}$-component, whereas the cyclically $h=1$ shifted
    indices are used for the $\Yz[\sharp]{t+h}$-component.  This
    results in the following four tuples,
    $\mathcalYz[\sharp]{1:2}=\parenC{\parenR{\Yz{5},\Yz{4}},
      \parenR{\Yz{1},\Yz{5}}, \parenR{\Yz{4},\Yz{3}},
      \parenR{\Yz{5},\Yz{4}}}$, and it is easy to see that the only
    corrupt tuple in $\mathcalYz[\sharp]{1:2}$ is
    $\parenR{\Yz{1},\Yz{5}}$.  Note: It could in principle now also be
    added a fifth tuple $\parenR{\Yz{3},\Yz{2}}$ to
    $\mathcalYz[\sharp]{1:2}$, but that is not of interest since there
    are only four tuples in $\mathcalYz{1}$.
  }
\end{blockquote}

The adjusted resampling algorithm is, as seen above, based on a simple
tweaking of the ordinary block bootstrap.  The formal definition of
the algorithm requires some rather technical steps, and these will not
be repeated in this paper.  The interested reader can consult
\lgsdRef{app:adjusted_resampling_algorithm} for further details, see
in particular
\lgsdRef{def:slightly_tweaked_modulo_function,def:indices_of_tuple_given_starting_index,def:actual_tuple_given_starting_index,def:index_based_block_bootstrap_for_tuples}.

Note that the resampling algorithm from
\lgsdRef{def:index_based_block_bootstrap_for_tuples} by construction
will return the same results as those obtained from the ordinary block
bootstrap when the sample size $n$ and the block length $L$ are large.
The situation is different for smaller sample sizes, since the
adjusted approach then will remove the majority of the corrupt tuples
that adds edge-effect noise into the estimation of the local Gaussian
autocorrelations $\lgacr{\LGp}{h}$.  This can be seen clearly by
comparing the expected amount of corrupt tuples when the block
bootstrap (see \cref{P2.app:table:corrupt.tuples.block.bootstrap} on
page \pageref{P2.app:table:corrupt.tuples.block.bootstrap}) with the
corresponding numbers for the revised resampling algorithm, as given
in \cref{P2.app:table:corrupt.tuples.ibb.bootstrap}.

\begin{table}[ht]
  \centering
  \begingroup\tiny
  \begin{tabular}{r|llllllllll}
    \hline
    $L$ \textbackslash\ $h$ & 1 & 2 & 3 & 4 & 5 & 6 & 7 & 8 & 9 & 10 \\ 
    \hline
    25 & 0.002\% & 0.006\% & 0.012\% & 0.020\% & 0.030\% & 0.043\% & 0.057\% & 0.073\% & 0.092\% & 0.112\% \\ 
    100 & 0.001\% & 0.002\% & 0.003\% & 0.005\% & 0.008\% & 0.011\% & 0.014\% & 0.019\% & 0.023\% & 0.028\% \\ 
    \hline
  \end{tabular}
  \endgroup
  \caption{Copy of
    \lgsdRef{app:table:corrupt.tuples.ibb.bootstrap}. The expected
    amount of corrupt tuples when $\lgacr{\LGp}{h}$ are estimated for
    the \texttt{dmbp}-data by the \textit{circular index-based block
      bootstrap for tuples}, cf.\
    \lgsdRef{def:index_based_block_bootstrap_for_tuples}.}
  \label{P2.app:table:corrupt.tuples.ibb.bootstrap}
\end{table}

A comparison of the entries in
\cref{P2.app:table:corrupt.tuples.block.bootstrap,P2.app:table:corrupt.tuples.ibb.bootstrap}
reveals that it, for a time series of length $n=1974$, is quite a
drastic reduction in the number of corrupt tuples when the block
bootstrap is replaced with the \textit{circular index-based block
  bootstrap for tuples} resampling strategy in
\lgsdRef{def:index_based_block_bootstrap_for_tuples}.  From the $h=10$
column it can be seen that the expected amount have been reduced from
39.7\% to 0.112\% when $L=25$, and it has been a reduction from 9.7\%
to 0.028\% when $L=100$.  The edge-effect noise are thus in this case
rather negligible when the adjusted resampling strategy is used.

\Cref{P2.app:table:corrupt.tuples.ibb.bootstrap} indicates that the
expected amount of corrupted tuples will be low even for short block
lengths $L$.  This implies that the pointwise confidence intervals for
the $m$-truncated local Gaussian autocorrelations
$\lgsdM{\LGp}{\omega}{m}$ should not be too sensitive to the block
length $L$ when the \textit{circular index-based block bootstrap for
  tuples} resampling strategy is used.

\subsection{Sensitivity analysis: The block length $L$}
\label{P2.app:Block_length_sensitivity}

The block length sensitivity for the adjusted resampling strategy from
\lgsdRef{def:index_based_block_bootstrap_for_tuples} was investigated
for two different cases in \lgsdRef{app:Block_length_sensitivity}.
The first case was based on the \texttt{dmbp}-data, whereas the second
case was based on resampling of a single sample generated by the local
trigonometric model initially used for the sanity testing of the
estimation algorithm.  For each case the $m=10$ truncated local
Gaussian spectral density was estimated, and then pointwise confidence
intervals were generated for each case based on resampling using the
\textit{circular index-based block bootstrap for tuples} strategy. 
The exact same results for both of the cases, i.e., the resulting
pointwise confidence intervals hardly changed when the block length
$L$ increased from 10 to 69.

The topic of interest for the present paper is the extension of the
local Gaussian spectral theory to the multivariate case, and the
script used for the univariate \texttt{dmbp}-case in
\lgsdRef{app:Block_length_sensitivity} was thus modified to cope with
the multivariate \EuStockMarkets-case --- and the results are shown in 
\cref{fig:EuStockMarkets_blocklength_P1_fig_F2,fig:EuStockMarkets_blocklength_P1_fig_F3}.
Note that
these figures are strikingly similar to those encountered in
\lgsdRef{app:Block_length_sensitivity}, and the discussion below is
with some minor adjustments the same as the one found there.

\Cref{fig:EuStockMarkets_blocklength_P1_fig_F2} contains two sequences
of boxplots, indexed by the block length $L$ which increases in steps
of 1 from $L=10$ to $L=69$.  The distance function $D$ mentioned in
\cref{P2.app:method_for_sensitivity_analysis} is used to generate the
values the boxplots are based on.  Keep in mind that the distance
function does not reveal anything about the frequency-component of the
cases under investigation, so it is also necessary to include a plot
that focus on that aspect for a few of the block lengths $L$, as seen
in \cref{fig:EuStockMarkets_blocklength_P1_fig_F3}.

\begin{figure}
  {\centering
    \includegraphics[width=1\linewidth]{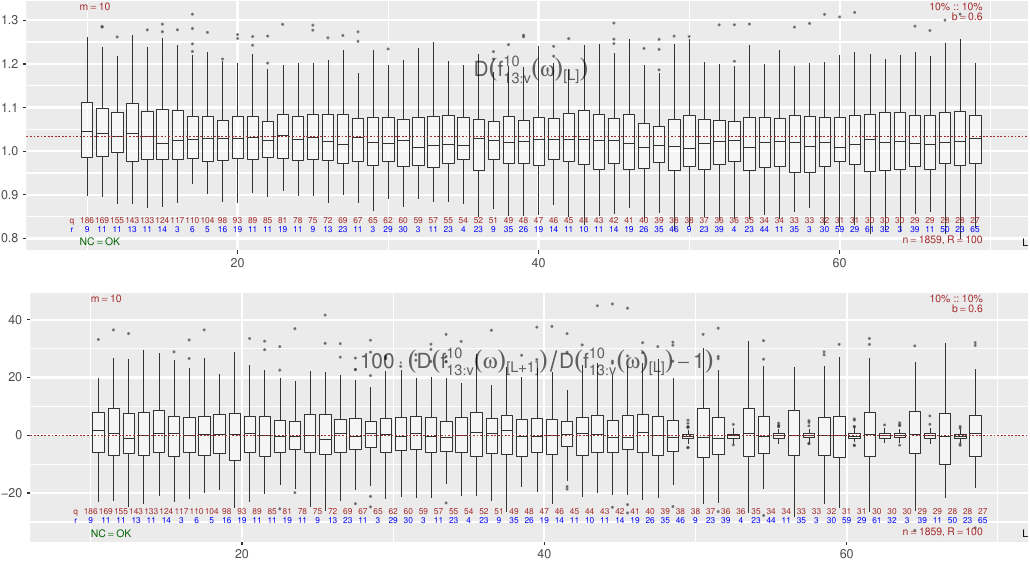}}
  \caption{Distance based box-plots for the investigation of the
    sensitivity of the block length $L$ for the adjusted resampling
    strategy from \lgsdRef{def:index_based_block_bootstrap_for_tuples}.
    The numbers in the two bottom rows show $q=\ceil{n/L}$ and
    $r=n-(q-1)\cdot L$, i.e.,\ the number of blocks and the length of
    the last block. }
  \label{fig:EuStockMarkets_blocklength_P1_fig_F2}
\end{figure}

\textbf{The panel at the top of
  \cref{fig:EuStockMarkets_blocklength_P1_fig_F2}:} A box-plot for the
$D\!\parenR{\lgcsdM{k\ell}{\LGp}{\omega}{10}_ {[L]}}$-values (based on
$R=100$ replicates) is given for each block length $L$.  A horizontal
red dashed line has been added that shows the
$D\!\parenR{\lgcsdM{k\ell}{\LGp}{\omega}{10}}$-value for the original
sample.  It can be seen that the medians of the box-plots tend to be
slightly smaller than the horizontal line that corresponds to the
value based on the original sample, but these medians are based on
$R=100$ replicates --- and another realisation might thus look
slightly different.  It does not seem to be any pattern here with
regard to how these box-plots changes when $L$ increases.

\textbf{The panel at the bottom of
  \cref{fig:EuStockMarkets_blocklength_P1_fig_F2}:} These box-plots
shows the percent-wise changes in the distances when the block length
goes from $L$ to $L+1$, and everything else is kept identical, i.e.,\
$100 \cdot \parenR{D\!\parenR{\lgcsdM{k\ell}{\LGp}{\omega}{10}_
    {[L+1]}}/D\!\parenR{\lgcsdM{k\ell}{\LGp}{\omega}{10}_ {[L]}}-1}$.  This is
possible to do since the reproducibility setup enables a tracking for
each individual realisation.

A horizontal red dashed line has been added at 0, and it is clear
that the median-part of these box-plots are quite close to this
horizontal line.  It can also be observed that some of these
box-plots are more compact than the other ones, and a simple
investigation of the numbers given at the bottom of the plots
reveals that this phenomenon occurs when an increase from $L$ to
$L+1$ does not reduce the number of blocks that are needed, i.e.,\
they occur when $\ceil{n/L} = \ceil{n/(L+1)}$.

For the individual bootstrapped time series, this indicates that the
changes are minimal when the number of blocks remains the same ---
whereas the changes are much larger when the increase of $L$ triggers
a reduction in the number of blocks.  However, as is evident from an
inspection of the panel at the top of
\cref{fig:EuStockMarkets_blocklength_P1_fig_F2}, this effect is only
on the level of the individual replicates, and it is averaged away
when a collection of replicates is considered.

Note that the effect noticed in the bottom panel of
\cref{fig:EuStockMarkets_blocklength_P1_fig_F2} also is present for
the global spectral densities (based on these bootstrapped samples),
so this phenomenon is thus not an artefact of the way the local
Gaussian spectral densities are estimated.

\textbf{The frequency-component:}
\cref{fig:EuStockMarkets_blocklength_P1_fig_F2} indicates that the
block length sensitivity, as measured by
$D\!\parenR{\lgcsdM{k\ell}{\LGp}{\omega}{m}}$, for the
\textit{circular index-based block bootstrap for tuples} resampling
strategy from \lgsdRef{def:index_based_block_bootstrap_for_tuples} is
rather small.  But does this imply that these block lengths should be
considered \textit{equally good} or \textit{equally bad}?  That can
not be concluded from \cref{fig:EuStockMarkets_blocklength_P1_fig_F2}
alone, and it is thus necessary to also consider a plot that takes the
frequency-dimension into account.
\Cref{fig:EuStockMarkets_blocklength_P1_fig_F3} does this, using the
\Co-spectra component, for the four block lengths
$L\in\parenC{10,25,50,69}$.

\begin{figure}
  {\centering
    \includegraphics[width=1\linewidth]{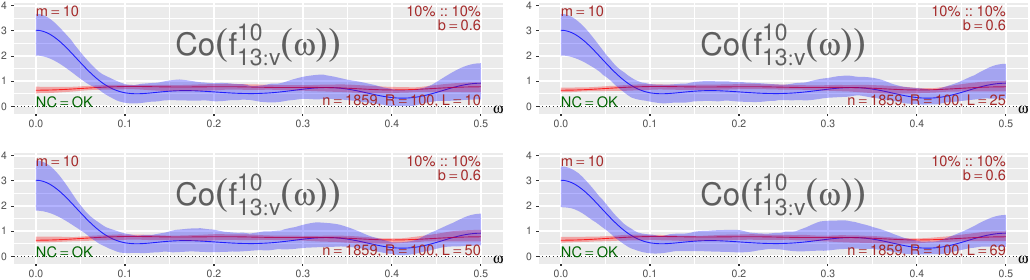}}

  \caption{Four different block lengths $L$ (from those investigated
    in \cref{fig:EuStockMarkets_blocklength_P1_fig_F2}) have here been
    used in the \textit{circular index-based block bootstrap for
      tuples} resampling strategy given in second part of
    \cref{P2.app:regarding_resampling}.  The values of $L$ are 10, 25,
    50 and 69, and this information is plotted at the lower right
    corner of the plots.  The observed differences are tiny.  }

  \label{fig:EuStockMarkets_blocklength_P1_fig_F3}
\end{figure}

It is clear from \cref{fig:EuStockMarkets_blocklength_P1_fig_F3} that
the differences between these estimates are rather small, and it is
necessary to look closely in order to see that the pointwise
confidence intervals are slightly wider near $\omega=0$ for the case
$L=50$.

\textbf{Conclusion:} These results, are as previously mentioned, in
complete agreement with the corresponding results from the univariate
case in \lgsdRef{app:Block_length_sensitivity}, i.e., the effect of
changing the block length $L$ is minuscule.  The interested reader can
in \lgsdRef{app:block.length.and.expected.content.of.resampled.data}
find an in depth discussion of why the block length
only plays a role as a tuning parameter in the \textit{circular
  index-based block bootstrap for tuples} resampling strategy.

\textbf{Reproducibility:} The scripts needed for the reproduction of
the plots in this section
are included in the \Rpackage \lgsdRpackage\ (see \cref{P2.app:The.scripts.in.lgsdRpackage}
for further details), and the interested reader can there easily
adjust the range of the block lengths to be used.  It is also
possible to adjust all the other tuning parameters needed for the
estimation of $\lgcsdM{k\ell}{\LGp}{\omega}{m}$, and it is even possible to
perform the computations with the ordinary block bootstrap if so
should be desired.

\section{Scripts and details related to the examples}
\label{P2:app:scripts_and_details_related_to_the_examples}
\setcounter{figure}{0} 

The reproducibility of all the examples in this paper can be done by
the scripts contained in the \Rpackage \lgsdRpackage, and
\cref{P2.app:The.scripts.in.lgsdRpackage} explains how the interested
reader can extract these scripts. 
This section also contains some further details/discussions related to
the examples in \cref{sec:lgch_examples} in the main paper.  Details
related to the bivariate Gaussian test-example are given in
\cref{sec:lgch-bivariate-Gaussian-white-noise}, whereas
\cref{P2.app:fig:GARCH} presents the R-code used to estimate the
parameters used for the cGARCH-example seen in \cref{fig:cGARCH}.

\Cref{app:Plots_of_the_complex-valued_spectra} contains a short
explanation of how animations of complex-valued plots can be used to
investigate the estimated local Gaussian cross-spectra, a part that
was moved here in order to improve the flow of the main part.

\cref{P2.app:fig:trigonometric} investigates the \textit{local
  bivariate trigonometric} example seen in
\cref{fig:heatmap_co_quad_dmt_bivariate_constant_phases,fig:Bivariate_local_trigonometric_A,fig:heatmap_co_quad_dmt_bivariate_different_phases,fig:Bivariate_local_trigonometric_C}
of \cref{sec:lgcsd_Some_simulations} in the main part.
This is quite similar to the corresponding discussion for the
univariate case in \lgsdRef{app:fig:trigonometric}, it adjusts to the
bivariate case the theoretical investigation of the general
construction of which the \textit{local trigonometric} examples are
particular realisations, and it also briefly summarises the heuristic
arguments from \JT that was used to sanity test the implemented
estimation algorithm.

Finally, once more mirroring the discussion in
\lgsdRef{app:data_details}, the last part of
\cref{P2.app:fig:trigonometric} verifies that it for a large sample is
possible to detect an elusive component that only occurs with
probability 0.05 in the \textit{local bivariate trigonometric}
example, and it ends with some comments related to issues that can
occur (under specific circumstances) when the local Gaussian machinery
is used on a time series whose global spectrum does not look like
white noise.

\subsection{The scripts in the \Rpackage \lgsdRpackage}
\label{P2.app:The.scripts.in.lgsdRpackage}

All the examples in this paper (and all the examples in \JT) can be
reproduced by the scripts in the \Rpackage \lgsdRpackage.  This
\Rpackage can be installed by using
\enquote{\devtoolgithublgsdRpackage}.  The simplest way to extract the
scripts from the internal storage of this \Rpackage is to use the
\Rfunction \enquote{\Rref{LG\_extract\_scripts()}} after the package
has been installed.

These scripts can either be used as they are (reproduction of the
examples in this paper), or they can be used as templates for similar
investigations of other samples/models that the user would like to
investigate.

The reproduction of the figures requires two different scripts.  The
first scripts contain the code needed for the estimation of the local
Gaussian auto-correlations $\lgccr{kk}{\LGp}{h}$ and
cross-correlations $\lgccr{k\ell}{\LGp}{h}$ for all the specified
combinations of the tuning parameters, whereas the second scripts
contain the code that creates the particular visualisations seen in
the figures in this paper. Note that it is sufficient to use the first
type of scripts in order to use the integrated
\Rref{shiny}-application that enables an easy interactive
investigation of the resulting estimates.  The second type of scripts
is first needed when one wants to put many figures into one larger
grid.

\subsection{The bivariate Gaussian white noise example}
\label{sec:lgch-bivariate-Gaussian-white-noise}

This section contains the details related to the bivariate Gaussian
white noise example, used for the sanity testing of the estimation
algorithm for the local Gaussian cross-spectra
$\lgcsd{k\ell}{\LGp}{\omega}$, that was referred to in the initial
discussion of \cref{sec:lgch_examples} in the main part.

It is natural to start with a Gaussian example, since it is the one
case where it is explicitly known what the true value of the the local
Gaussian cross-spectra $\lgcsd{k\ell}{\LGp}{\omega}$ is, in
particular, it is known from
\myref{th:lgcs_properties}{th:lgcs_equal_to_osd_when_Gaussian} that
the local Gaussian cross-spectrum coincide with the ordinary
cross-spectrum when the time series under investigation is Gaussian.

The plots in \cref{fig:Bivariate_Gaussian_WN} shows the \Co-, \Quad-
and \Phase-plots based on 100 independent samples of length 1859 from
a bivariate Gaussian distribution with standard normal marginals and
correlation~0.35.  The left column of \cref{fig:Bivariate_Gaussian_WN}
shows the situation for a point off the diagonal, whereas the right
column shows the situation for a point at the center of the~diagonal,
i.e., \mbox{$\LGpi{1}=\LGpi{2}=0$}.  Note that the global spectra are
identical for all the points, i.e.,\ the red components\footnote{%
  If you have a black and white copy of this paper, then read
  \enquote{red} as \enquote{dark} and \enquote{blue} as
  \enquote{light}.} are the same for each row
of~\cref{fig:Bivariate_Gaussian_WN}.

In this simple case, where the true values of the local Gaussian
versions of the spectra coincides with the ordinary global spectra, it
follows that the \Co-, \Quad- and \Phase-spectra (for any truncation
level~$m$) respectively should be the constants 0.35, 0 and~0.
\Cref{fig:Bivariate_Gaussian_WN} shows that the red and blue dotted
lines, that respectively represents the estimates of the global and
local $m$-truncated spectra,\footnote{%
  The dotted lines represents the means of the estimated values,
  whereas the 90\% pointwise confidence intervals are based on the 5\%
  and 95\% quantiles of these samples.} seems to match these true
values quite reasonably --- and this provides a sanity check of the
code that generated these plots.  Note that the 90\% pointwise
confidence interval for the local Gaussian versions (blue ribbons) are
wider than those for the ordinary spectra (red ribbons) since the
bandwidth used for the estimation of the local Gaussian
cross-correlations, in this case \mbox{$\bm{b} = \parenR{0.6,0.6}$},
reduces the number of observations that effectively contributes to the
computation of the local Gaussian~spectra.

\begin{figure}

{\centering \includegraphics[width=1\linewidth]{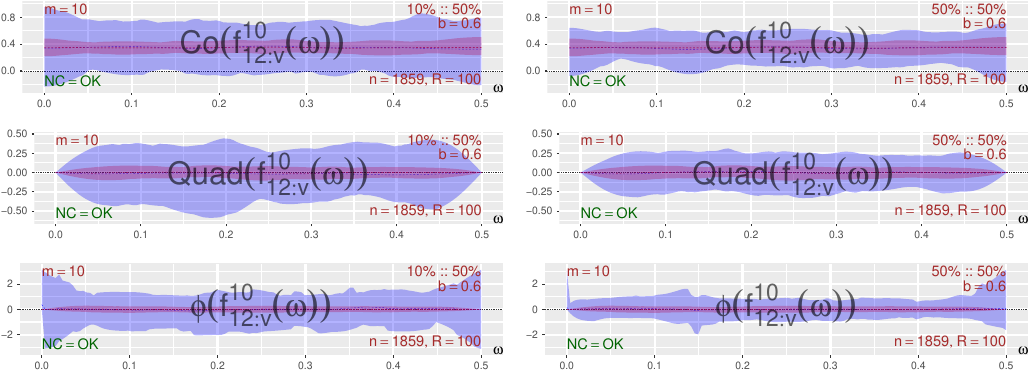} 

}

\caption[i.i.d]{i.i.d.\ bivariate Gaussian white noise.}

\caption{\Co-, \Quad- and \Phase-plots for two points: Samples from
  the bivariate Gaussian white noise model, for which it is known that
  the local Gaussian spectra are equal to the ordinary spectra.  The
  equality of the estimates are clear from this figure, the wider
  pointwise confidence intervals for the local Gaussian spectra is a
  consequence of the kernel-function in the estimation algorithm.  }

\label{fig:Bivariate_Gaussian_WN}
\end{figure}

\subsection{The cGARCH-example in \cref{fig:cGARCH}}
\label{P2.app:fig:GARCH}

The cGARCH-example seen in \cref{fig:cGARCH} (see also
\cref{fig:cGARCH_YiYj,fig:cGARCH_lags20}) had coefficients that were
fitted to the pseudo-normalised log-returns of the
\EuStockMarkets-data by the help of the \Rref{rmgarch}-package
\citet{ghalanos15:_rmgarch}.  The \enquote{c} in cGARCH implies that a
Copula-based approach is used for the modelling of the
interdependency-structure, and the following steps can be used to fit
a Copula-GARCH model to a multivariate sample:
\begin{enumerate}
\item %
  For each univariate subset of the sample, create a
  \Rref{uGARCHspec}-object (univariate GARCH specification) by using
  the \Rref{ugarchspec}-function from the \Rref{rugarch}-package.
\item %
  Create a \Rref{uGARCHmultispec}-object (univariate multiple GARCH
  specification) by using the \Rref{multispec}-function from the
  \Rref{rugarch}-package on the list of univariate GARCH
  specifications obtained in the first step.
\item %
  Create a \Rref{cGARCHspec}-object (Copula-GARCH specification) by
  using the \Rref{cgarchspec}-function from the \Rref{rmgarch}-package
  on the \Rref{uGARCHmultispec}-object obtained in the second step.
\item %
  Create a \Rref{cGARCHfit}-object (Copula-GARCH fit object) by using
  the \Rref{cgarchfit}-function from the \Rref{rmgarch}-package.  The
  \Rref{spec}-argument should be the \Rref{cGARCHspec}-object from the
  third step, and the \Rref{data}-argument should be the multivariate
  sample under investigation.
\end{enumerate}

The model and parameters stored in the \Rref{cGARCHfit}-object can now
be used as the \Rref{fit}-argument of the \Rref{cgarchsim}-function
from the \Rref{rmgarch}-package.  This implies that samples from the
fitted model can be generated, from which it is possible to find
estimates and pointwise confidence intervals for the $m$-truncated
local Gaussian auto- and cross-spectra for this particular model.

Note that the \Rref{ugarchspec}, \Rref{uGARCHmultispec} and
\Rref{cgarchfit} all have many arguments, and tweaking of these
arguments enables a plethora of different GARCH-type models to be
investigated.  The cGARCH-model used for the creation of
\cref{fig:cGARCH} used the default arguments in the previously
mentioned functions, and this was done since the desired target was to
produce a simple proof-of-concept model for the local Gaussian sanity
testing presented in \cref{P2.app:resampling_parametric_bootstrap}.
It was no surprise that
\cref{fig:BIRR_P1_fig_F1_Y1Y1,fig:BIRR_P1_fig_F1_Y3Y3,fig:BIRR_Co_P1_fig_F1_Y1Y3,fig:BIRR_Quad_P1_fig_F1_Y1Y3,fig:BIRR_amplitude_P1_fig_F1_Y1Y3,fig:BIRR_phase_P1_fig_F1_Y1Y3}
revealed that the local Gaussian spectra based on this
\enquote{trivial} cGARCH-model had some clear deviations from the
local Gaussian spectra based on the \EuStockMarkets data.

The full list of default arguments for the \Rfunction{s}
\Rref{ugarchspec}, \Rref{uGARCHmultispec} and \Rref{cgarchfit} will
not be included here, since that would require several pages of code
and explanations. The interested reader can look these details up in
the documentation of the \Rref{rugarch}- and \Rref{rmgarch}-packages.

A final comment: It is of course possible to follow the strategy used
in \JT, where an apARCH(2,3)-model was fitted to the
\texttt{dmbp}-data after a procedure that tested several thousand
different variations of the GARCH-type models implemented in the
\Rref{rugarch}-package.  This strategy was not adopted in the present
paper since the multivariate nature of the \EuStockMarkets-data, and
the amount of arguments to tweak in the functions \Rref{ugarchspec},
\Rref{uGARCHmultispec} and \Rref{cgarchfit}, would have required a
rather big computational investment for a sample that ended out being
used simply due to it being available in R.

\subsection{Plots of the complex-valued spectra}
\label{app:Plots_of_the_complex-valued_spectra}

This section contains a discussion related to animations of
complex-valued plots, and how these might be used to investigate the
estimated local Gaussian cross-spectra.  This was initially contained
in the main part, but it was then moved here in order to improve the
flow of that part.  Before reading on, recall the following from the
main part:

\begin{blockquote} %
  {\textbf{The reference case:} The heuristic argument needed for the
    bivariate case is identical in structure to the one used in the
    univariate case, and for the present investigation the reference
    for the plots later on is based on the following simple bivariate
    model,
    \begin{align*}
      \Yz{1,t}=\cos\parenR{2\pi\alpha t + \phi} + \wz{1,t} \text{ and }
      \Yz{2,t}=\cos\parenR{2\pi\alpha t + \phi + \theta} + \wz{2,t},
    \end{align*}
    where $\wz{i,t}$ is Gaussian white noise with mean zero and
    standard deviation~$\sigma$, with $\wz{1,t}$ and $\wz{2,t}$
    independent, and where it in addition is such that~$\alpha$
    and~$\theta$ are fixed for all the replicates whereas $\phi$ is
    drawn uniformly from \mbox{$[0,2\pi)$} for each individual
    replicate.  A realisation with \mbox{$\sigma = 0.75$},
    \mbox{$\alpha=0.302$} and \mbox{$\mbox{$\theta = \pi/3$}$} has
    been used for the \Co-, \Quad-, and \Phase-plots shown in
    \cref{fig:Bivariate_global_cosine}, where 100 independent samples
    of length 1859 were used to get the estimates of the $m$-truncated
    spectra and their corresponding 90\% pointwise
    confidence~intervals (based on the bandwidth
    \mbox{$\bm{b}=\parenR{0.6 ,0.6}$}).  Some useful remarks can be
    based on this plot, before \textit{the bivariate local
      trigonometric case} is defined and investigated.  }
\end{blockquote}

It can be enlightening to compare the \Co-, \Quad- and \Phase-plots in
\cref{fig:Bivariate_global_cosine} with a plot that shows the
underlying estimates upon which the pointwise
confidence intervals were based.  Such a plot is shown in
\cref{fig:Bivariate_global_cosine_lag_frequency}, where the left panel
presents the complex-valued estimates of the local Gaussian
cross-spectrum at the frequency~\mbox{$\omega=\alpha$}, and where
means and quantiles relative to a polar representation, i.e.,\
\mbox{$z = re^{i\theta}$}, have been added to the plot.

\begin{figure}

{\centering \includegraphics[width=1\linewidth]{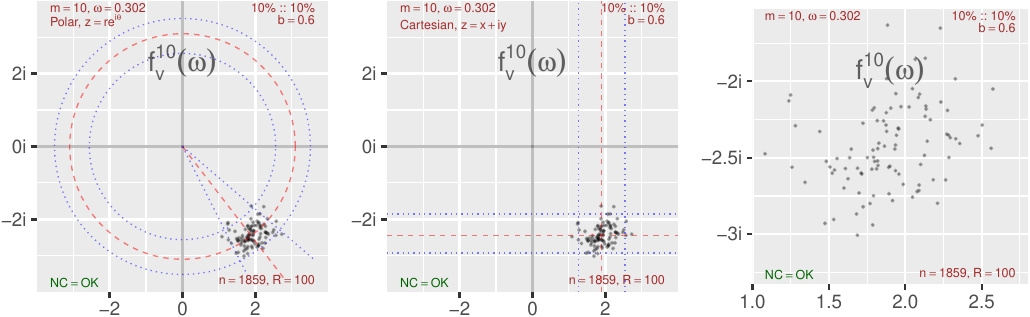} 

}

\caption{Complex-valued representation of 100 samples of
  $\lgcsd{12}{\LGp}{\omega}$ from \cref{eq:bivariate_cosine_example},
  at the peak frequency $\omega = 0.302$. Left panel: Pointwise 90\%
  confidence bands based on polar representation.  Center panel:
  Pointwise 90\% confidence bands based on Cartesian representation.
  Right panel: Zoomed in
  plot.}
\label{fig:Bivariate_global_cosine_lag_frequency}
\end{figure}

The center panel of \cref{fig:Bivariate_global_cosine_lag_frequency}
shows the same estimated values, but this time the means/quantiles are
based on a Cartesian representation \mbox{$z=x+iy$}.  These two panels
gives a geometrical view to the observations presented above.
The third panel of \cref{fig:Bivariate_global_cosine_lag_frequency}
presents a zoomed in version of the estimated values of the local
Gaussian cross-spectrum, and it gives a reminder that it in principle
is possible to extract more information from these estimates than what
has been done so~far.  A closer inspection of these estimates could
e.g.\ be used to see how much they (for the given
\mbox{$m$-truncation}) deviate from the expected asymptotic
distributions that was given in~\cref{seq:lgcsd:convergence_theorems}.

\subsection{The local bivariate trigonometric examples in
  \cref{fig:heatmap_co_quad_dmt_bivariate_constant_phases,fig:Bivariate_local_trigonometric_A,fig:heatmap_co_quad_dmt_bivariate_different_phases,fig:Bivariate_local_trigonometric_C}}
\label{P2.app:fig:trigonometric}

This section will discuss some topics related to the \textit{local
  bivariate trigonometric} examples, whose local Gaussian spectral
density were investigated in
\cref{fig:heatmap_co_quad_dmt_bivariate_constant_phases,fig:Bivariate_local_trigonometric_A,fig:heatmap_co_quad_dmt_bivariate_different_phases,fig:Bivariate_local_trigonometric_C}
of \cref{sec:lgcsd_Some_simulations} in the main part.  %
This discussion is quite similar to the one given for the univariate
\textit{local trigonometric} examples in
\lgsdRef{app:fig:trigonometric}, and the interested reader should
consult this source for some of the in depth discussions.

The \textit{local bivariate trigonometric} examples are a special case
of a more general bivariate construction, which follows from an
extension of the corresponding univariate construction given in
\lgsdRef{app:fig:trigonometric.general.properties}.  The details are
slightly technical, and they have for the benefit of the reader been
included in \cref{P2.app:fig:trigonometric.general.properties}.

A brief discussion of the heuristic arguments that explains why the
\textit{local bivariate trigonometric} examples can be used to sanity
test the implemented estimation algorithm are given in
\cref{P2.app:fig:trigonometric.heuristic.argument}.  An in depth
discussion can be found in
\lgsdRef{app:fig:trigonometric.heuristic.argument}.

The \textit{local bivariate trigonometric} examples that was used as
the basis for the plots in
\cref{fig:heatmap_co_quad_dmt_bivariate_constant_phases,fig:Bivariate_local_trigonometric_A,fig:heatmap_co_quad_dmt_bivariate_different_phases,fig:Bivariate_local_trigonometric_C},
contained four components, of which one occurred with a lower
probability.  This setup was inherited from the \textit{local
  unvariate trigonometric} examples in \JT, and it was there seen
(cf.\ \cref{fig:local.trigonometric.C1-component} in
\cref{app:fig:trigonometric.C1.component}) that the \enquote{lower
  tail}-component could be detected when the sample size was large
enough.  \Cref{P2.app:fig:trigonometric.C1.component} presents a
similar investigation for the \textit{local bivariate trigonometric}
examples.

The discussion in \JT also considered issues that could occur if the
$m$-truncated local Gaussian spectrum was estimated for samples from a
deterministic function perturbed by very low random fluctuations, and
it was there seen (cf.\
\lgsdRef{sec:lgch:beware_of_global_structures}) that the local
Gaussian machinery did not work well for such cases.  This discussion
will not be duplicated in this paper, but the key observations can be
mentioned here: It might be safest to apply the local Gaussian
spectral machinery to time series with rather flat global spectra.  If
a \enquote{clear} global structure is present, then it might be
preferable to use the local Gaussian spectral machinery on the
residuals that are left after some suitable model have been fitted to
the data.

\subsubsection{Some properties of the general construction}
\label{P2.app:fig:trigonometric.general.properties}

The \textit{local bivariate trigonometric} examples considered in
\cref{sec:lgcsd_Some_simulations} are (as mentioned there) particular
cases of a general construction that is based on the key idea that an
artificial bivariate time series
$\TSR{\parenR{\Yz{1,t},\Yz{2,t}}}{t\in\ZZ}{}$ can be constructed by
the following scheme:
\begin{enumerate}
\item Select $r\geq 2$ bivariate time series
  $\TSR{\parenR{\Cz{1,i}(t),\Cz{2,i}(t)}}{i=1}{r}$.
\item Select a random variable $I$ with values in the set
  $\parenC{1,\dotsc,r}$, and use this to sample a collection of
  indices $\TSR{\Iz{t}}{t\in\ZZ}{}$ (that is, for each $t$ an
  independent realization of $I$ is taken).  Let
  $\pz{i}\defeq\Prob{\Iz{i}=i}$ denote the probabilities for the
  different outcomes.
\item Define $\Yz{t}$ by
  means of the equation
  \begin{subequations}
    \label{app.eq:lgch:Y1Y2_local_trigonometric}
    \begin{align}
      \label{app.eq:lgch:Y1_local_trigonometric}
      \Yz{1,t} &\defeq \sumss{i=1}{r} \Cz{1,i}(t) \cdot\Ind{\Iz{t} = i},\\
      \label{app.eq:lgch:Y2_local_trigonometric}
      \Yz{2,t} &\defeq \sumss{i=1}{r} \Cz{2,i}(t) \cdot\Ind{\Iz{t} = i}.
    \end{align}
  \end{subequations}
\end{enumerate}

The basic properties of $\TSR{\parenR{\Yz{1,t},\Yz{2,t}}}{t\in\ZZ}{}$
can be expressed relatively those of
$\TSR{\parenR{\Cz{1,i}(t),\Cz{2,i}(t)}}{i=1}{r}$, as seen in the
following result: 

\begin{lemma} %
  \label{P2.app:lemma:properties.of.the.artificial.time.series}
  With $\TSR{\parenR{\Yz{1,t},\Yz{2,t}}}{t\in\ZZ}{}$ as defined above,
  it follows that:
  \begin{enumerate}
  \item \label{P2.app:lemma:properties.of.the.artificial.time.series..E}
    $\E{\Yz{k,t}} = \sumss{i=1}{r} \pz{i}\cdot\E{\Cz{k,i}(t)}$
  \item \label{P2.app:lemma:properties.of.the.artificial.time.series..E2}
    $\E{\Yz{k,t+h}\cdot\Yz{\ell,t}} =
    \begin{cases}
      \sumss{i=1}{r}\sumss{j=1}{r} \pz{i}\cdot\pz{j}\cdot\E{\Cz{k,i}(t+h)\cdot\Cz{\ell,j}(t)}
      & h\neq 0 \\
      \sumss{i=1}{r} \pz{i}\cdot\E{\Cz{k,i}(t)\cdot\Cz{\ell,j}(t)} & h=0
    \end{cases}$
  \item \label{P2.app:lemma:properties.of.the.artificial.time.series..Cov}
    $\Cov{\Yz{k,t+h}}{\Yz{\ell,t}} =
    \begin{cases}
      \sumss{i=1}{r} \sumss{j=1}{r} \pz{i}\cdot\pz{j}\cdot
      \Cov{\Cz{k,i}(t+h)}{\Cz{\ell,j}(t)} & h \neq 0 \\
      \sumss{i=1}{r} \pz{i}\cdot\E{\Cz{k,i}(t)\cdot\Cz{\ell,i}(t)} - \\
      \qquad \parenR{\sumss{i=1}{r}
        \pz{i}\cdot\E{\Cz{k,i}(t)}}\cdot\parenR{\sumss{j=1}{r}
        \pz{j}\cdot\E{\Cz{\ell,j}(t)}} & h = 0
    \end{cases}$
  \item \label{P2.app:lemma:properties.of.the.artificial.time.series..Cov..independent}
    For $k,\ell\in\parenC{1,2}$, the additional assumption that
    $\parenR{\Cz{1,i}(t),\Cz{2,i}(t)}$ and
    $\parenR{\Cz{1,j}(t),\Cz{2,j}(t)}$ are independent when $i\neq j$,
    simplifies the $h\neq0$ case to:
    $\Cov{\Yz{k,t+h}}{\Yz{\ell,t}}=\sumss{i=1}{r} \pz[2]{i}\cdot
    \Cov{\Cz{k,i}(t+h)}{\Cz{\ell,i}(t)}$.
  \end{enumerate}
\end{lemma}

\begin{proof}
  The results in
  \cref{P2.app:lemma:properties.of.the.artificial.time.series..E,P2.app:lemma:properties.of.the.artificial.time.series..E2,P2.app:lemma:properties.of.the.artificial.time.series..Cov,P2.app:lemma:properties.of.the.artificial.time.series..Cov..independent}
  are in essence the same as those seen in
  \lgsdRef{app:lemma:properties.of.the.artificial.time.series}, with
  some extra indices $k$ and $\ell$ added to it.  The proof is short,
  and it is repeated below.
  
  The random variable $\Iz{t}$ that produces the set of indices is
  independent of $\Cz{k,i}(t)$ and $\Cz{\ell,j}(t)$, and
  \cref{app:lemma:properties.of.the.artificial.time.series..E} thus
  follows without further ado.  For the $h\neq0$ case of
  \cref{app:lemma:properties.of.the.artificial.time.series..E2} it is
  sufficient to note that $\Iz{t+h}$ and $\Iz{t}$ then are
  independent, and it follows that
  $\E{\Ind{\Iz{t+h}=i}\cdot\Ind{\Iz{t}=j}} =
  \E{\Ind{\Iz{t+h}=i}}\cdot\E{\Ind{\Iz{t}=j}} =
  \Prob{\Iz{t+h}=i}\cdot\Prob{\Iz{t}=j} = \pz{i}\cdot\pz{j}$. For the
  $h=0$ case of
  \cref{app:lemma:properties.of.the.artificial.time.series..E2} it is
  enough to note that $\Ind{\Iz{t}=i}\cdot\Ind{\Iz{t}=j}=0$ when
  $i\neq j$, which together with
  $\Ind{\Iz{t}=i}\cdot\Ind{\Iz{t}=i}=\Ind{\Iz{t}=i}$ gives the
  required expression. The statements in
  \cref{app:lemma:properties.of.the.artificial.time.series..Cov,app:lemma:properties.of.the.artificial.time.series..Cov..independent}
  follows trivially from those in
  \cref{app:lemma:properties.of.the.artificial.time.series..E,app:lemma:properties.of.the.artificial.time.series..E2}.
\end{proof}

The key idea in the local bivariate trigonometric example is that the
$r$ bivariate time series $\parenR{\Cz{1,i}(t),\Cz{2,i}(t)}$ all should
be \enquote{connected cosine-pairs with some noise}, since this
implies (given a reasonable parameter configuration) that it should be
possible to present a decent guesstimate with regard to the expected
shape of the $m$-truncated local Gaussian cross-spectrum density (for
some carefully selected tuning parameters of the estimation
algorithm).  The global cross-spectrum in this case will not be flat,
but it will for low truncation levels be \enquote{flat enough} for the
purpose of showing that the global spectrum does not detect the
underlying frequencies whereas the local Gaussian cross-spectrum can
do that task.

The following result reiterates the
$\parenR{\Cz{1,i}(t),\Cz{2,i}(t)}$-definition used in the local
bivariate trigonometric example, and it presents some basic properties
related to this definition.

\begin{lemma}
  \label{P2.app:lemma:properties.of.the.artificial.time.series..cosine.part}
  Let the bivariate random variables
  $\parenR{\Cz{1,i}(t),\Cz{2,i}(t)}$, for $i=1,\dotsc,r$, be defined by
  $\Cz{1,i}(t) = \Lz{i} + \Az{i}(t) \cdot \cos \left(2\pi\alphaz{i} t
    + \phiz{i} \right)$ and
  $\Cz{2,i}(t) = \Lz{i} + \Az{i}(t) \cdot \cos \left(2\pi\alphaz{i} t
    + \phiz{i} + \thetaz{i} \right)$ in the following manner: $\Lz{i}$
  and $\alphaz{i}$ are constants that respectively defines the
  horizontal base-line and the frequency.  The amplitude $\Az{i}(t)$
  are for each $t$ uniformly distributed on an interval
  $\parenS{\az{i},\bz{i}}$, and $\Az{i}(t+h)$ and $\Az{i}(t)$ are
  independent when $h\neq0$. The phase-adjustment $\phiz{i}$ are
  uniformly drawn (one time for each realisation) from the interval
  between~$0$ and~$2\pi$, whereas the phases $\thetaz{i}$ are
  constants.  It is also assumed that the stochastic processes
  $\phiz{i}$ and $\Az{i}(t)$ are independent of each other.
  {\small  
    \begin{enumerate}
    \item \label{P2.app:lemma:properties.of.the.artificial.time.series..cosine.part.E}
      $\E{\Cz{k,i}(t)} = \Lz{i}$, for $k\in\parenC{1,2}$.
    \item \label{P2.app:lemma:properties.of.the.artificial.time.series..cosine.part.E.05}
      $\E{\Cz{k,i}(t+h)\Cz{\ell,j}(t)} = \Lz{i}\Lz{j}$, for
      $k,\ell\in\parenC{1,2}$ when $i\neq j$.
    \item \label{P2.app:lemma:properties.of.the.artificial.time.series..cosine.part.Cov.05}
      $\Cov{\Cz{k,i}(t+h)}{\Cz{\ell,j}(t)} = 0$, for
      $k,\ell\in\parenC{1,2}$ when $i\neq j$.
    \item \label{P2.app:lemma:properties.of.the.artificial.time.series..cosine.part.E2_k=l}
      $\E{\Cz{k,i}(t+h)\cdot\Cz{k,i}(t)} =
      \begin{cases}
        \Lz[2]{i} + \frac{1}{4}\cdot
        \parenR{\az[2]{i}+2\az{i}\bz{i}+\bz[2]{i}}\cdot
        \cos(2\pi\alphaz{i} h) & h \neq 0, k\in\parenC{1,2}\\
        \Lz[2]{i} + \frac{1}{3}\cdot
        \parenR{\az[2]{i}+\az{i}\bz{i}+\bz[2]{i}} & h = 0, k\in\parenC{1,2}
      \end{cases}$
    \item  \label{P2.app:lemma:properties.of.the.artificial.time.series..cosine.part.Cov_k=l}
      $\Cov{\Cz{k,i}(t+h)}{\Cz{k,i}(t)} =
      \begin{cases}
        \frac{1}{4}\cdot
        \parenR{\az[2]{i}+2\az{i}\bz{i}+\bz[2]{i}}\cdot
        \cos(2\pi\alphaz{i} h) &h\neq 0, k\in\parenC{1,2}\\
        \frac{1}{3}\cdot
        \parenR{\az[2]{i}+\az{i}\bz{i}+\bz[2]{i}} &h= 0, k\in\parenC{1,2}
      \end{cases}$
    \item \label{P2.app:lemma:properties.of.the.artificial.time.series..cosine.part.E2_12}
      $\E{\Cz{k,i}(t+h)\cdot\Cz{\ell,i}(t)} =
      \begin{cases}
        \Lz[2]{i} + \frac{1}{4}\cdot
        \parenR{\az[2]{i}+2\az{i}\bz{i}+\bz[2]{i}}\cdot
        \cos(2\pi\alphaz{i} h+\thetaz{i}) & h \neq 0, k=1, \ell=2\\
        \Lz[2]{i} + \frac{1}{4}\cdot
        \parenR{\az[2]{i}+2\az{i}\bz{i}+\bz[2]{i}}\cdot
        \cos(2\pi\alphaz{i} h-\thetaz{i}) & h \neq 0, k=2, \ell=1\\
        \Lz[2]{i} + \frac{1}{3}\cdot
        \parenR{\az[2]{i}+\az{i}\bz{i}+\bz[2]{i}}\cdot
        \cos(\thetaz{i}) & h = 0, k,\ell\in\parenC{1,2}
      \end{cases}$
    \item  \label{P2.app:lemma:properties.of.the.artificial.time.series..cosine.part.Cov_21}
      $\Cov{\Cz{k,i}(t+h)}{\Cz{\ell,i}(t)} =
      \begin{cases}
        \frac{1}{4}\cdot
        \parenR{\az[2]{i}+2\az{i}\bz{i}+\bz[2]{i}}\cdot
        \cos(2\pi\alphaz{i} h+\thetaz{i}) & h \neq 0, k=1, \ell=2\\
        \frac{1}{4}\cdot
        \parenR{\az[2]{i}+2\az{i}\bz{i}+\bz[2]{i}}\cdot
        \cos(2\pi\alphaz{i} h-\thetaz{i}) & h \neq 0, k=2, \ell=1\\
        \frac{1}{3}\cdot
        \parenR{\az[2]{i}+\az{i}\bz{i}+\bz[2]{i}}\cdot
        \cos(\thetaz{i}) & h = 0, k,\ell\in\parenC{1,2}
      \end{cases}$
    \end{enumerate}
  }
\end{lemma}

\begin{proof}
  The independence of $\Az{i}(t)$ and $\phiz{i}$, together with some
  simple results related to the expectations of functions based on the
  uniformly distributed random variables $\Az{i}$ and $\phi$ is needed
  in order to obtain these results.  

  The assumption that $\Az{i}$ is uniformly distributed on the
  interval $\parenR{\az{i},\bz{i}}$ implies that its density function
  is given by $\tfrac{1}{\bz{i}-\az{i}}$ on this interval (and 0
  outside of it).  From this it is easy to find
  $\E{\Az{i}(t)}=\tfrac{1}{2}\cdot\parenR{\az{i}+\bz{i}}$ and
  $\E{\Az[2]{i}(t)}=\tfrac{1}{3} \cdot
  \parenR{\az[2]{i}+\az{i}\bz{i}+\bz[2]{i}}$.
  Moreover, since $\Az{i}(t+h)$ and $\Az{i}(t)$ are independent when
  $h\neq0$, it also follows that
  $\E{\Az{i}(t+h)\Az{i}(t)}=\frac{1}{4}\parenR{\az[2]{i}+2\az{i}\bz{i}+\bz[2]{i}}$
  in this case.
  
  Similarly, the assumption that $\phiz{i}$ is uniformly distributed
  on the interval $\parenR{0,2\pi}$ implies that its density function
  is given by $\tfrac{1}{2\pi}$ on the interval $\parenR{0,2\pi}$ (and
  0 outside of it). From this it follows that
  $\E{\cos\left(\phiz{i}\right)} = \E{\sin\left(\phiz{i}\right)} =
  \E{\cos\left(\phiz{i}\right)\sin\left(\phiz{i}\right)} = 0$ and
  $\E{\cos\left(\phiz{i}\right)^2} = \E{\sin\left(\phiz{i}\right)^2} =
  \tfrac{1}{2}$.  In addition to this, since $\phiz{i}$ and $\phiz{j}$
  are independent when $i\neq j$, it also follows that
  $\E{\cos\left(\phiz{i}\right)\sin\left(\phiz{j}\right)} = 0$.  From
  this it follows that
  $\E{\cos \left(2\pi\alphaz{i} t + \phiz{i} \right)} = \E{\cos
    \left(2\pi\alphaz{i} t + \phiz{i} +\thetaz{i}\right)} = 0$.

  The trigonometric formula
  $\cos(u\pm v)=\cos(u)\cos(v)\mp\sin(u)\sin(v)$ can be used to show
  that the expectation of a product of the form
  $\cos\parenR{\uz{1}+\phiz{i}}\cos\parenR{\uz{2}+\phiz{j}}$ is equal
  to 0 when $i\neq j$, whereas it is equal to
  $\tfrac{1}{2}\cos\parenR{\uz{1}-\uz{2}}$ when $i=j$.  From this last
  observation it is easy to conclude the following:
  \begin{subequations}
    \label{eq:expectations_for_Ci_lemma}
    \begin{align}
      \label{eq:expectations_for_Ci_lemma_h}
      \E{\cos\parenR{2\pi\alphaz{i}(t+h)+\phiz{i}}
      \cos\parenR{2\pi\alphaz{i}t+\phiz{i}}}
      &=\frac{1}{2}\cos\parenR{2\pi\alphaz{i}h}\\
      \label{eq:expectations_for_Ci_lemma_h2}
      \E{\cos\parenR{2\pi\alphaz{i}(t+h)+\phiz{i}+\thetaz{i}}
      \cos\parenR{2\pi\alphaz{i}t+\phiz{i}+\thetaz{i}}}
      &=\frac{1}{2}\cos\parenR{2\pi\alphaz{i}h}\\
      \label{eq:expectations_for_Ci_lemma_h_theta_not_same_part}
      \E{\cos\parenR{2\pi\alphaz{i}(t+h)+\phiz{i}}
      \cos\parenR{2\pi\alphaz{i}+\phiz{i}+\thetaz{i}}}
      &=\frac{1}{2}\cos\parenR{2\pi\alphaz{i}h - \thetaz{i}} \\
      \label{eq:expectations_for_Ci_lemma_h_theta_same_part}
      \E{\cos\parenR{2\pi\alphaz{i}(t+h)+\phiz{i}+\thetaz{i}}
      \cos\parenR{2\pi\alphaz{i}+\phiz{i}}}
      &=\frac{1}{2}\cos\parenR{2\pi\alphaz{i}h + \thetaz{i}} 
    \end{align}
  \end{subequations}
  
  From these observations it is trivial to verify
  \cref{P2.app:lemma:properties.of.the.artificial.time.series..cosine.part.E,P2.app:lemma:properties.of.the.artificial.time.series..cosine.part.E.05},
  and
  \cref{P2.app:lemma:properties.of.the.artificial.time.series..cosine.part.Cov.05}
  follows immediately from these two.  The statements in
  \cref{P2.app:lemma:properties.of.the.artificial.time.series..cosine.part.E2_k=l,P2.app:lemma:properties.of.the.artificial.time.series..cosine.part.E2_12}
  are easily verified by multiplying the relevant expressions for
  $\Cz{1,i}(t) = \Lz{i} + \Az{i}(t) \cdot \cos \left(2\pi\alphaz{i} t
    + \phiz{i} \right)$ and
  $\Cz{2,i}(t) = \Lz{i} + \Az{i}(t) \cdot \cos \left(2\pi\alphaz{i} t
    + \phiz{i} + \thetaz{i} \right)$, and then using the above
  mentioned results related to $\Az{i}(t)$ and $\phiz{i}$ to establish
  the expected values.  The covariance-results in
  \cref{P2.app:lemma:properties.of.the.artificial.time.series..cosine.part.Cov_k=l,P2.app:lemma:properties.of.the.artificial.time.series..cosine.part.Cov_21}
  now follows trivially.
\end{proof}

Finally, the \textit{local bivariate trigonometric} example is
obtained by using $r$ bivariate time series
$\parenR{\Cz{1,i}(t),\Cz{2,i}(t)}$, of the form given in
\cref{P2.app:lemma:properties.of.the.artificial.time.series..cosine.part},
in the construction of the bivariate time series
$\parenR{\Yz{1,t},\Yz{2,t}}$, i.e.,\
\begin{subequations}
  \label{P2.eq:THE.local.trigonometric.example}
  \begin{align}
    \label{P2.eq:THE.local.trigonometric.example_1}
    \Yz{1,t}
    &\defeq \sumss{i=1}{r} \Ind{\Iz{t} = i}
      \cdot\parenS{\Lz{i} + \Az{i}(t) \cdot \cos \left(2\pi\alphaz{i} t
      + \phiz{i} \right)},\\
    \label{P2.eq:THE.local.trigonometric.example_1}
    \Yz{2,t}
    &\defeq \sumss{i=1}{r} \cdot\Ind{\Iz{t} = i}
      \cdot\parenS{\Lz{i} + \Az{i}(t) \cdot \cos \left(2\pi\alphaz{i} t
      + \phiz{i} + \thetaz{i} \right)},
  \end{align}
\end{subequations}
where it furthermore is assumed that the $i$-indexed stochastic
variables $\Az{i}(t)$ and $\varphiz{i}$ are independent of the
$j$-indexed variants when $i\neq j$.  It now follows from
\cref{P2.app:lemma:properties.of.the.artificial.time.series,P2.app:lemma:properties.of.the.artificial.time.series..cosine.part}
that the $h\neq 0$ auto-correlations of the bivariate time series
$\parenR{\Yz{1,t},\Yz{2,t}}$ in
\cref{P2.eq:THE.local.trigonometric.example} is given by
\begin{align}
  \label{P2.app:eq:auto-correlation.of.the.artificial.time.series}
  \rhoz{11}(h) = \rhoz{22}(h)
  &= \frac{\frac{1}{4}\cdot\sumss{i=1}{r}
    \pz[2]{i}\cdot\parenR{\az[2]{i} + 2\az{i}\bz{i} +
    \bz[2]{i}}\cdot\cos(2\pi\alphaz{i} h) }{\sumss{i=1}{r} \pz{i}\cdot \parenS{\Lz[2]{i} + \frac{1}{3}\cdot
    \parenR{\az[2]{i}+\az{i}\bz{i}+\bz[2]{i}}} -
    \parenRz[2]{}{\sumss{i=1}{r} \pz{i}\cdot\Lz{i}}}.
\end{align}
The cross-correlations $\rhoz{\Yz{12}}(h)$ and $\rhoz{\Yz{12}}(h)$ can
for $h\neq 0$ similarly be written out as
\begin{subequations}
  \label{P2.app:eq:correlation.of.the.artificial.time.series}
  \begin{align}
    \rhoz{\Yz{12}}(h)
    &= \frac{\frac{1}{4}\cdot\sumss{i=1}{r}
      \pz[2]{i}\cdot\parenR{\az[2]{i} + 2\az{i}\bz{i} +
      \bz[2]{i}}\cdot\cos(2\pi\alphaz{i} h + \thetaz{i}) }{\sumss{i=1}{r} \pz{i}\cdot \parenS{\Lz[2]{i} + \frac{1}{3}\cdot
      \parenR{\az[2]{i}+\az{i}\bz{i}+\bz[2]{i}}} -
      \parenRz[2]{}{\sumss{i=1}{r} \pz{i}\cdot\Lz{i}}},\\
    \rhoz{\Yz{21}}(h)
    &= \frac{\frac{1}{4}\cdot\sumss{i=1}{r}
      \pz[2]{i}\cdot\parenR{\az[2]{i} + 2\az{i}\bz{i} +
      \bz[2]{i}}\cdot\cos(2\pi\alphaz{i} h - \thetaz{i}) }{\sumss{i=1}{r} \pz{i}\cdot \parenS{\Lz[2]{i} + \frac{1}{3}\cdot
      \parenR{\az[2]{i}+\az{i}\bz{i}+\bz[2]{i}}} -
      \parenRz[2]{}{\sumss{i=1}{r} \pz{i}\cdot\Lz{i}}},
  \end{align}
\end{subequations}
whereas the $h=0$ case is given by:
\begin{align}
  \label{P2.app:eq:correlation.of.the.artificial.time.series_lag_zero}
  \rhoz{\Yz{12}}(0) = \rhoz{21}(0) 
  &= \frac{\sumss{i=1}{r} \pz{i}\cdot \parenS{\Lz[2]{i} + \frac{1}{3}\cdot
    \parenR{\az[2]{i}+\az{i}\bz{i}+\bz[2]{i}}}\cdot\cos(\thetaz{i}) -
    \parenRz[2]{}{\sumss{i=1}{r} \pz{i}\cdot\Lz{i}}}{\sumss{i=1}{r} \pz{i}\cdot \parenS{\Lz[2]{i} + \frac{1}{3}\cdot
    \parenR{\az[2]{i}+\az{i}\bz{i}+\bz[2]{i}}} -
    \parenRz[2]{}{\sumss{i=1}{r} \pz{i}\cdot\Lz{i}}}.
\end{align}

An inspection of
\cref{P2.app:eq:auto-correlation.of.the.artificial.time.series,P2.app:eq:correlation.of.the.artificial.time.series}
reveals that it is fairly easy to find a parameter configuration for
which the numerator is rather small compared to the denominator.  This
is of course not white noise, but the key idea is that it is close
enough to white noise to make it impossible to deduce anything about
the underlying frequencies $\alphaz{i}$ and the phase-differences
$\thetaz{i}$ based on the ordinary cross-spectrum.

\subsubsection{The heuristic argument underlying the local bivariate
  trigonometric example}
\label{P2.app:fig:trigonometric.heuristic.argument}

The purpose of this section is to give a short summary of the
corresponding discussion from
\lgsdRef{app:fig:trigonometric.heuristic.argument}, i.e., to briefly
explain the reasoning that motivated the construction of the
\textit{local bivariate trigonometric} example in
\cref{P2.app:fig:trigonometric.general.properties}.  Only the general
ideas will be described here, and the interested reader should consult
\lgsdRef{app:fig:trigonometric.heuristic.argument} for an in depth
discussion.

The key motivation for the construction of the \textit{local bivariate
  trigonometric} models is the sanity testing of the estimation
algorithm for the $m$-truncated local Gaussian cross-spectrum
$\hatlgcsdM{k\ell}{\LGp}{\omega}{m}$, i.e., to construct models in
such a manner that heuristic arguments can be used to give a
\enquote{decent guesstimate of the result} when the $m$-truncated
estimates $\hatlgcsdM{k\ell}{\LGp}{\omega}{m}$ are created based on
pseudo-normalised samples from these models.

In order for this to work, it is necessary to select several
parameters in a structured way, both on the side of the \textit{local
  bivariate trigonometric} models (probabilities $\pz{i}$, amplitudes
$\Az{i}$, base-lines $\Lz{i}$, frequencies $\phiz{i}$ and phases
$\thetaz{i}$) and on the side of the local Gaussian estimation
algorithm (truncation level $m$, bandwidth $\bm{b}$ and point of
investigation $\LGp$).

The problem is to figure out how the parameters $\pz{i}$, $\Az{i}$ and
$\Lz{i}$ should \enquote{map to} points $\LGp$ and bandwidths $\bm{b}$
when the samples are pseudo-normalised.  This problem was discussed in
detail in \lgsdRef{app:fig:trigonometric.heuristic.argument}, and this
discussion will not be repeated here.

The following observation is the pivotal one for the heuristic
argument that motivates the sanity testing.  For a \textit{local
  bivariate trigonometric} random variable
$\parenR{\Yz{1,t},\Yz{2,t}}$, as given in
\cref{P2.eq:THE.local.trigonometric.example}, the idea goes as
follows: For specified indices $i$ and $j$, it should be possible to
find a \enquote{good combination} of point $\LGp$ and bandwidth
$\bmbz{}$, such that the estimated $m$-truncated local Gaussian
spectrum $\hatlgcsdM{k\ell}{\LGp}{\omega}{m}$ looks similar to the
spectrum that would occur if only the $\Cz{k,i}(t)$ and
$\Cz{\ell,j}(t)$ components had been used to generate the sample under
investigation.

The argument for this part is actually quite simple: For the
\enquote{correct combination} of $\LGp$ and $\bmbz{}$, the local
kernel function in the estimation algorithm of the local Gaussian
cross-correlations $\lgccr{k\ell}{\LGp}{h}$ will in essence
\enquote{ignore} the contributions from other components than
$\Cz{k,i}(t)$ and $\Cz{\ell,j}(t)$ when the lag $h$ pairs from
$\Yz{k,t}$ and $\Yz{\ell,t}$ are considered (since low weights are
assigned to the lag $h$ pairs that lies \enquote{far away} from the
point $\LGp$).  The lag $h$ pairs that contribute most to the
estimated value $\lgccr{k\ell}{\LGp}{h}$, can be seen as a
\enquote{randomly selected subset} of those lag $h$ pairs that would
have occurred if all the observations had been from the $\Cz{k,i}(t)$
and $\Cz{\ell,j}(t)$ components.

If this \enquote{randomly selected subset} from $\Cz{k,i}(t)$ and
$\Cz{\ell,j}(t)$ is sufficiently large, then it can be expected that
the local Gaussian cross-correlation $\hatlgccr{k\ell}{\LGp}{h}$
estimated from this subset should be \enquote{fairly close to} the one
estimated from the full set.  This argument does not depend on the lag
$h$ value, so the expected result is that the estimates
$\hatlgccr{k\ell}{\LGp}{h}$ all should be \enquote{fairly close to} to
those estimated from the full set.

Since the estimate $\hatlgcsdM{k\ell}{\LGp}{\omega}{m}$ of the local
Gaussian cross-spectrum $\lgcsdM{k\ell}{\LGp}{\omega}{m}$ is a linear
combination of the estimates $\hatlgccr{k\ell}{\LGp}{h}$, it now seems
reasonable to expect that the shape of it should share some clear
similarities with the one based on the full set. This is the reasoning
that enabled \textit{local bivariate trigonometric} models to be used
for the sanity testing of the estimation algorithm for the
$m$-truncated local Gaussian cross-spectrum
$\hatlgcsdM{k\ell}{\LGp}{\omega}{m}$, as seen in
\cref{sec:lgch-some-simulations} in the main part.

Strictly speaking, neither $\fz{k\ell}(\omega)$ nor
$\lgcsd{k\ell}{\LGp}{\omega}$ are well defined for the \textit{local
  bivariate trigonometric} times series, but this is not a problem
when the $m$-truncated spectra $\fz[m]{k\ell}(\omega)$ and
$\lgcsdM{k\ell}{\LGp}{\omega}{m}$ are investigated. The important
detail is that it (for suitably selected combinations of point $\LGp$
and bandwidth $\bmbz{}$) is possible to \enquote{predict} that the
$m$-truncated estimates $\lgcsdM{k\ell}{\LGp}{\omega}{m}$ seen in
\cref{fig:Bivariate_local_trigonometric_A} should look a bit like like
the \enquote{global bivariate cosine combination} seen in
\cref{fig:Bivariate_global_cosine}.

\subsubsection{What points to investigate for the bivariate local
  trigonometric examples?}

The \textit{local bivariate trigonometric} models used for the sanity
testing in the main part used points located along the diagonal, like
seen in \cref{fig:Bivariate_local_trigonometric_A}.  The diagonal
points were selected since it for them was \enquote{fairly clear} that
the resulting local Gaussian spectra would share some similarities
with the basic reference example seen in
\cref{fig:Bivariate_global_cosine}, which enabled an easy visual
comparison of the results.

The heatmap+distance plot seen in
\cref{fig:heatmap_co_quad_dmt_bivariate_constant_phases} also focused
on points along the diagonal, and from this it can be seen that the
spectra will (as expected per construction) change a lot as the point
$\LGp$ varies along the diagonal.  It is of course possible to
consider points outside of the diagonal too, and an example of this
can be seen in \cref{fig:Bivariate_local_trigonometric_B}.  The same
sample that was used for \cref{fig:Bivariate_local_trigonometric_A},
has here been investigated for the three points \texttt{10\%::90\%},
\texttt{10\%::50\%} and \texttt{50\%::90\%} (percentages relates to
quantiles of the standard normal distribution).  This plot is, as was
done for \cref{fig:Bivariate_local_trigonometric_A}, based on a
presentation of the \Co-, \Quad- and \Phase-plots of the $m=10$
truncated local Gaussian cross-spectrum.
\begin{figure}
  
  {\centering \includegraphics[width=1\linewidth]{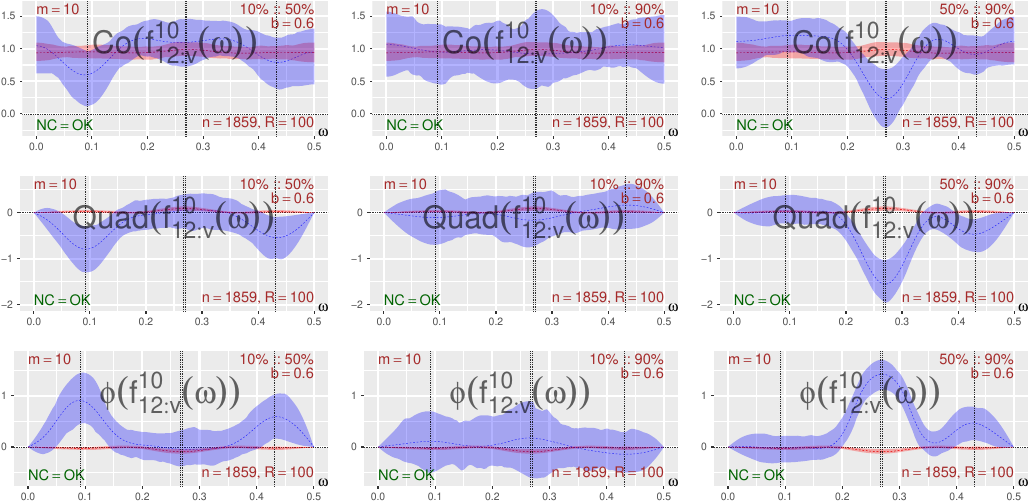} 
  }

  \caption{\Co-, \Quad- and \Phase-plots for points away from the
    diagonal: three diagonal points: The \textit{bivariate local
      trigonometric} model in
    \cref{app.eq:lgch:Y1Y2_local_trigonometric}, constant
    phase-changes $\thetaz{i}=\pi/3, i=1,2,3,4$.
    The frequencies $\alphaz{i}$ shown as vertical lines.  The local
    Gaussian spectra detects structures that are not detected by the
    ordinary spectrum.}

  \label{fig:Bivariate_local_trigonometric_B}
\end{figure}

The plots in \cref{fig:Bivariate_local_trigonometric_B} shows that the
\Co-, \Quad- and \Phase-plots at the point \texttt{10\%::90\%} (middle
column) looks more like the i.i.d.\ white noise that was encountered
in~\cref{fig:Bivariate_Gaussian_WN} (see page
\pageref{fig:Bivariate_Gaussian_WN}), whereas the plots for the two
points \texttt{10\%::50\%} and \texttt{50\%::90\%} do detect the
presence of local phenomena.  It might not be any obvious
interpretation of these plots when seen isolated, but it should at
least be noted that the plots for the two points \texttt{10\%::50\%}
and \texttt{50\%::90\%} have troughs/peaks for the $\alpha$-values
that corresponds to the first and second coordinates of these points
--- and this seen in conjunction with the previously investigated
points in \cref{fig:Bivariate_local_trigonometric_A} supports the idea
that there are local features in the data that depends on these
$\alpha$-values.  These features are not detected by the $m=10$
truncated ordinary cross-spectrum $\fz[m]{k\ell}(\omega)$, but they
are detected by the
local Gaussian spectrum $\lgcsdM{k\ell}{\LGp}{\omega}{m}$.

Note that it is possible to create new \enquote{global trigonometric
  examples}, like the one seen in \cref{fig:Bivariate_global_cosine},
in order to see if the figures in
\cref{fig:Bivariate_local_trigonometric_B} are similar to those ---
but that is only part of the work to do.  The bandwidth $\bmbz{}=0.6$
used for he investigation in
\cref{fig:Bivariate_local_trigonometric_A} might not be the
\enquote{correct one} for the points investigated in
\cref{fig:Bivariate_local_trigonometric_B}.  If the bandwidth
$\bmbz{}$ is too big, then \enquote{contamination} might occur, i.e,
the resulting local Gaussian spectra will pick up features from the
other components used in the construction of the \textit{local
  bivariate trigonometric} models, see the discussion in
\lgsdRef{app:fig:trigonometric.heuristic.argument} for further
details.

\subsubsection{Heatmap+distance plots, bivariate local trigonometric and
  extreme tail}
\label{P2.app:extreme_tail_example}
\label{P2.app:fig:trigonometric.C1.component}

Another aspect that can be discussed based on the \textit{local
  bivariate trigonometric} examples is the capability of the local
Gaussian estimation machinery to pick up local structures that occurs
far out in the tails.  This section will mirror the discussion in
\lgsdRef{app:fig:trigonometric.C1.component}, and it will also here be
investigated if a known local structure in the periphery of the data
might be detected if the sample size is large.

The \textit{local bivariate trigonometric} example in this paper is
based on a simple bivariate extension of the corresponding
\textit{local trigonometric} example from \JT.  It contains 4
different components $\parenR{\Cz{1,i}(t),\Cz{2,i}(t)}$,
$i=1,\dotsc,4$, of which the first component
$\parenR{\Cz{1,1}(t),\Cz{2,1}(t)}$ occurs with probability
$\pz{1}=0.05$.  For a sample of length 1859 (length inherited from the
investigation of the \EuStockMarkets-example) it is then expected that
$0.05\cdot 1859 = 92.95$ observations are from this first component
--- and these observations will (by construction) all be in the lower
tail.

As explained in \lgsdRef{app:fig:trigonometric.C1.component}, the
\enquote{border} between the observations from the $i=1$ and $i=2$
components should occur near the 5\% percentile, but it is necessary
to \enquote{zoom in} on a diagonal point $\LGp$ that lies farther out
in the tail than $\pz{1}/2=0.025$.  This requirement occurs since the
estimate should avoid \enquote{contamination} from the observations
from the $i=2$ component.

The parameters selected for the present investigation are all
inherited from the following discussion in
\lgsdRef{app:fig:trigonometric.C1.component}:

\begin{blockquote} %
  {Based on the idea that it might be necessary to go all
    the way out to the 1\%, it seemed natural to attempt an investigation
    based on $n=25000$ observations.  Since the point $\LGp$ now is far
    out in the lower tail, e.g.\ the 1\% percentile of the standard normal
    distribution is $-2.326$, it seemed reasonable to use the bandwidth
    $\bm{b} = (0.4, 0.4)$.

    The heatmap and distance plots in
    \cref{fig:dmt_for_extreme_tail_heatmap_and__levels_vs_norm} is based
    on an investigating of a single realisation, that included
    percentiles based on values starting from 2 bandwidths below the 5\%
    percentile and ending at 1/2 bandwidth below the 5\% percentile,
    i.e.,\ the diagonal points starts at approximately the 0.72\%
    percentile and ends at the 3.25\% percentile.
  }
\end{blockquote}

Note that the reference to
\cref{fig:dmt_for_extreme_tail_heatmap_and__levels_vs_norm} in this
quoted text is to
\lgsdRef{fig:dmt_for_extreme_tail_heatmap_and__levels_vs_norm}.  The
corresponding heatmap and distance plots for the \textit{local
  bivariate trigonometric} example are given in
\cref{fig:bivariate_extreme_tail_Heatmap1} (\Co- and \Quad-version)
and \cref{fig:bivariate_extreme_tail_Heatmap3} (\Amplitude- and
\Phase-version).

\begin{figure}
  {\centering
    \includegraphics[width=1\linewidth]{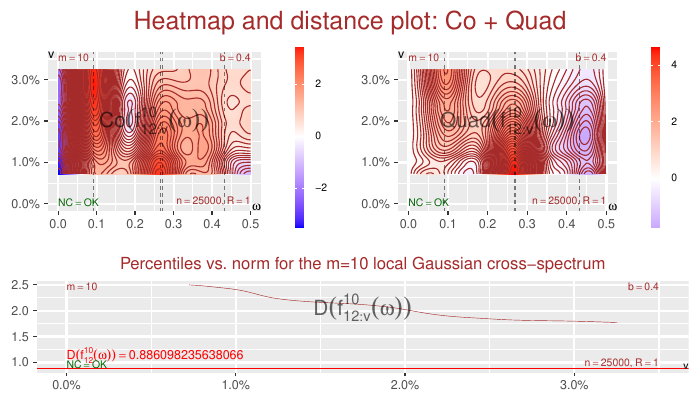}}

  \caption{\Co- and \Quad-heatmaps, distance-plot: Sample from the
    \textit{bivariate local trigonometric} model in
    \cref{eq:lgch:Y1Y2_local_trigonometric,eq:lgch:local_cosines},
    constant phase-changes $\thetaz{i}=\pi/3, i=1,2,3,4$.  The
    frequencies $\alphaz{i}$ shown as vertical lines.  This can be
    used to search for an \enquote{optimal} percentile that can reveal
    the elusive $\alphaz{1}$ frequency in the lower tail.  }
  \label{fig:bivariate_extreme_tail_Heatmap1}
\end{figure}

\begin{figure}
  {\centering
    \includegraphics[width=1\linewidth]{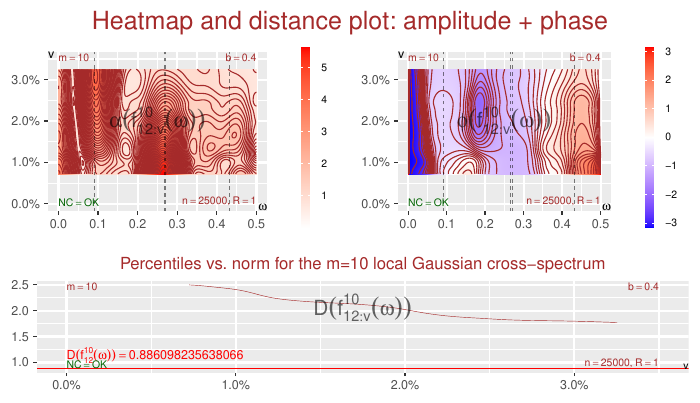}}

  \caption{\Amplitude- and \Phase-heatmaps, distance-plot: Sample from the
    \textit{bivariate local trigonometric} model in
    \cref{eq:lgch:Y1Y2_local_trigonometric,eq:lgch:local_cosines},
    constant phase-changes $\thetaz{i}=\pi/3, i=1,2,3,4$.  The
    frequencies $\alphaz{i}$ shown as vertical lines.  This can be
    used to search for an \enquote{optimal} percentile that can reveal
    the elusive $\alphaz{1}$ frequency in the lower tail.  }

  \label{fig:bivariate_extreme_tail_Heatmap3}
\end{figure}

The heatmap-parts of
\cref{fig:bivariate_extreme_tail_Heatmap1,fig:bivariate_extreme_tail_Heatmap3}
reveal that the contribution from the $i=2$ component completely
dominates at the 3.25\% percentile, and it can also be seen that it is
necessary to go down to at least the 1\% percentile in order to detect
the local structures due to the $i=1$ component (frequency
$\alphaz{1}=0.267$).  Note that
\cref{fig:bivariate_extreme_tail_Heatmap1,fig:bivariate_extreme_tail_Heatmap3}
are based on only 1 single realisation, and other realisations might
look slightly different.

The content of
\cref{fig:bivariate_extreme_tail_Heatmap1,fig:bivariate_extreme_tail_Heatmap3}
are based on estimates $\hatlgccr{k\ell}{\LGp}{h}$ of the local
Gaussian cross-correlations $\lgccr{k\ell}{\LGp}{h}$.  Note that these
local Gaussian estimates focus on points in the periphery of the
observations, and the kernel function in the estimation algorithm
might then, as explained in the sensitivity analysis in
\cref{P2.app:Bandwidth_sensitivity}, give estimates that degenerate
towards $+1$ or $-1$.  The presence/absence of this
degeneration-problem can be investigated using the heatmaps in
\cref{fig:dmt_bivariate_P1_fig_D3}, and the conclusion in this case is
that it seems to be some (but not all) estimates that have degenerated
in this manner.

\begin{figure}
  {\centering
    \includegraphics[width=1\linewidth]{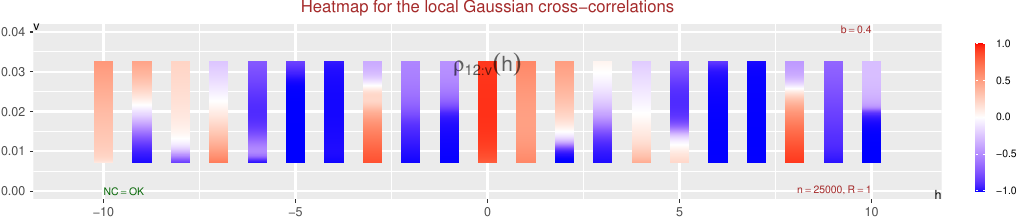}}

  \caption{Heatmap plots for the local Gaussian cross-correlations
    $\lgccr{k\ell}{\LGp}{h}$, for diagonal points $\LGp$ in the lower tail.
    Values close to $+1$ and $-1$ can indicate that the estimates have
    started to degenerate.}
  \label{fig:dmt_bivariate_P1_fig_D3}
\end{figure}

\Cref{fig:dmt_bivariate_extreme_tail_version_of_P2_fig_04} shows the
situation when $R=100$ replicates are used to estimate the local
Gaussian cross-spectrum $\lgcsdM{k\ell}{\LGp}{\omega}{m}$ at the
diagonal point $\LGp$ that corresponds to the 1\% percentile.  The
conclusion in \JT, when a similar investigation was performed for the
univariate \textit{local trigonometric} example, see in particular
\lgsdRef{fig:local.trigonometric.C1-component}, was that the local
Gaussian estimates resulted in a plot with the expected
\enquote{cosine-shape} and a peak near the frequency
$\alphaz{1}=0.267$ of the $i=1$ component.

The situation for the bivariate case investigated in this section, see
\cref{fig:dmt_bivariate_extreme_tail_version_of_P2_fig_04}, diverges
from the one seen for the univariate case.  There is also for this
case a peak near $\alphaz{1}=0.267$ due to the $i=1$ component, but
there is also a similar clear peak near the frequency
$\alphaz{2}=0.091$ due to the $i=2$ component.  This indicates that the
bandwidth $\bmbz{}=0.4$ is too large in this case, since it is a clear
contamination from the $i=2$ component.

\begin{figure}
  {\centering
    \includegraphics[width=1\linewidth]{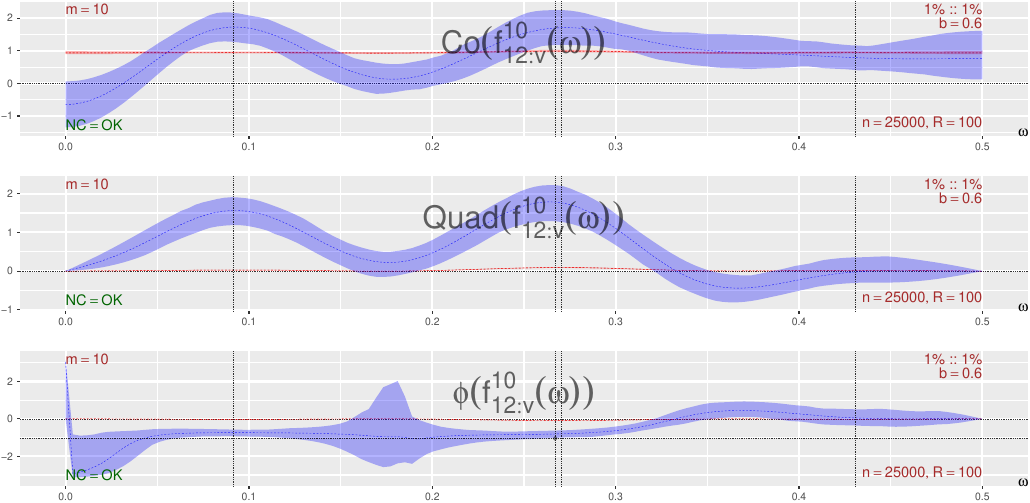}}

  \caption{The detection of the $\alphaz{1}=0.267$ frequency in the
    lower tail of the \textit{local trigonometric} example requires a
    large sample and an investigation far out in the lower tail.  In
    this case the result ended out being contaminated by the
    $\alphaz{2}=0.091$ frequency.}
  \label{fig:dmt_bivariate_extreme_tail_version_of_P2_fig_04}
\end{figure}

The $i=1$ component $\parenR{\Cz{1,1}(t),\Cz{2,1}(t)}$ was included in
the \textit{local bivariate trigonometric} example in order to
emphasise that extra care is needed when investigating points $\LGp$
in the extreme tails of a sample --- and the present discussion
reveals that it even for a known model can be quite tricky to detect
local interdependency-structures, in particular if they do not occur
frequently in the observations.  The size of the sample is pivotal for
the level of details that can be extracted, but the selection of the
bandwidth must also take into account the point $\LGp$ under
investigation.

Keep in mind, as explained in \cref{P2.app:Bandwidth_sensitivity},
that a too small bandwidth $\bmbz{}$ can trigger degenerate estimates
close to $+1$ and $-1$, whereas a too big bandwidth returns estimates
that can be contaminated by neighbouring structures.  A local Gaussian
investigation should thus investigate a range of bandwidths, in order
to see what kind of behaviour that occurs.  The heatmap and distance
plots can be useful for this task.

  \putbib     
\end{bibunit}

\end{document}